\newcommand{\mc}[1]{\ensuremath{\mathcal{#1}}}
\newcommand{\p}[1]{\ensuremath{\left(#1\right)}}
\newcommand{\br}[1]{\ensuremath{\left\{#1\right\}}}
\newcommand{\norm}[1]{\ensuremath{\left|\left|#1\right|\right|}}
\newcommand{\sbr}[1]{\ensuremath{\left[#1\right]}}
\newcommand{\abs}[1]{\ensuremath{\left|#1\right|}}
\def\P{\mathbb{P}}
\def\R{\mathbb{R}}
\def\E{\mathbb{E}}
\def\var{\mbox{var}}
\def\Var{\mbox{Var}}
\def\argmin{\mbox{argmin}}
\newtheorem{theorem}{Theorem}
\newtheorem{assumption}{Assumption}
\newtheorem{proposition}{Proposition}[section]
\newtheorem{lemma}{Lemma}[section]
\theoremstyle{remark}
\newtheorem{remark}{Remark}
\def\T{{\mathrm{\scriptscriptstyle T}}}
\newcommand{\tbl}[2]{\caption{#1}\begin{center}#2\end{center}}
\newenvironment{tabnote}{\begin{flushleft}\footnotesize\textit{Note.}\ }{\end{flushleft}}
\begin{document}

\def\spacingset#1{\renewcommand{\baselinestretch}%
{#1}\small\normalsize} \spacingset{1}


\title{\bf Semi-supervised inference using unlabeled summary statistics}
\author{Facheng Yu\hspace{.2cm}\\
Department of Statistics, University of Washington\\
and \\
Zhen Qi\hspace{.2cm}\\
Institute of Statistics and Big Data, Renmin University of China\\
and \\
Yuqian Zhang\thanks{Corresponding author (email: yuqianzhang@ruc.edu.cn)} \\
Institute of Statistics and Big Data, Renmin University of China}
\date{}
\maketitle

\bigskip
\begin{abstract}
Semi-supervised inference assumes access to a labeled dataset together with a large unlabeled dataset in which the outcome variable is missing, and it is widely used to improve statistical efficiency and support generalizability across populations. In many modern applications, however, individual-level unlabeled data may not be directly accessible due to privacy restrictions, data-sharing limits, or storage constraints, while summary statistics such as sample means and covariances from the unlabeled population are often available. In this work, we study this constrained semi-supervised setting where, in addition to labeled data with individualized observations, auxiliary information from the unlabeled population is available only through summary statistics. We propose new semi-supervised inference methods for mean estimation under both covariate-independent and covariate-dependent labeling and show that unlabeled summaries can still improve efficiency and help correct selection bias. The proposed methods apply in high dimensions and are robust to model misspecification. Valid inference is obtained under sparsity conditions comparable to those required by semi-supervised methods that assume access to individual-level unlabeled samples. Our approach relies on a specialized cross-fitting procedure, where sample splitting is applied only to the labeled data, which removes the need for individualized unlabeled covariates. We further extend this framework to average treatment effect estimation, enabling generalizability and transportability of causal conclusions in this constrained semi-supervised setting.
\end{abstract}

\noindent%
{\it Keywords:} Causal inference; Data integration; Generalizability and transportability; High-dimensional statistics; Semi-supervised learning.
\vfill

\newpage
\spacingset{1.9} 

\section{Introduction}

\subsection{Background and motivation}
Semi-supervised inference studies how to improve efficiency by combining a small labeled sample with a much larger unlabeled sample. This setting arises when collecting covariates is relatively easy but obtaining outcomes is costly or time-consuming. By leveraging additional unlabeled samples, semi-supervised methods can achieve substantial gains in statistical efficiency. Such gains have been established across a range of inference tasks, including mean estimation \citep{zhang2019semi, zhang2022high, zhang2023double}, linear regression \citep{chakrabortty2018efficient, azriel2022semi, deng2024optimal, chen2025semi}, treatment effect estimation \citep{cheng2021robust, chakrabortty2022general, zhang2023semi, kallus2025role}, M-estimation \citep{song2024general}, and prediction-powered inference \citep{angelopoulos2023prediction, zrnic2024cross}, among others.

Most existing approaches assume full access to individual-level covariates from the unlabeled data. In many modern applications, however, auxiliary datasets cannot be shared at the individual level due to privacy, storage, or cost constraints, and only summary statistics such as sample means and covariances are available. This restriction limits the applicability of current semi-supervised procedures. We study estimation and inference in this constrained setting, where auxiliary information from the unlabeled sample is available only through summary statistics. Such situations arise, for example, when privacy regulations prevent hospitals from sharing patient-level data across sites, while aggregate demographic or health summaries remain accessible.

In this work, we study the estimation of the population mean outcome for the entire population by combining individual-level data from the labeled sample with covariate information from the unlabeled sample, a problem known as \emph{generalizability}. We also consider the case where the target population is the unlabeled sample itself, with the goal of estimating its mean outcome, known as \emph{transportability}. In addition, we extend our framework to causal inference problems, aiming to estimate the average treatment effect for both the overall population and the unlabeled population. Our objective is to develop semi-supervised methods that rely only on summarized covariate information from the unlabeled data.

The main difficulty is that standard semi-supervised procedures rely on individual-level unlabeled data to construct regression-based imputations or inverse probability weights. When only summary information is available, these procedures cannot be applied directly. In addition, when the unlabeled data are represented only through summaries, fully nonparametric approaches are not feasible, since such methods require access to individual-level observations or an increasing number of moments. In high-dimensional settings, nonparametric methods also suffer from the ``curse of dimensionality,'' which often leads to inaccurate estimation. Together, these issues push semi-supervised inference toward the use of parametric nuisance models. This reliance increases the risk of model misspecification, which may result in biased estimates and invalid inference. In this work, we develop robust inference methods that remain valid under possible model misspecification, improving reliability even in high-dimensional settings.

\subsection{Related literature}

Semi-supervised inference is closely connected to data integration and to problems of generalizability and transportability in causal inference, where labeled data come from a primary source, such as a randomized controlled trial, and additional covariate information is obtained from external observational data. A large body of work studies how to combine these sources to generalize or transport causal conclusions; see, for example, \citep{buchanan2018generalizing, dahabreh2021study, ung2024combining} and the reviews \citep{degtiar2023review, shi2023data, colnet2024causal}. These methods typically require access to individual-level covariates from the external source.

The use of summary statistics has also been studied in settings such as Mendelian randomization in genome-wide association studies \citep{burgess2013mendelian, morrison2020mendelian, zhao2020bayesian} and data fusion in meta-analysis \citep{lin2010relative, liu2015multivariate, zhu2015meta}. In these works, the summaries typically involve outcome-related quantities, such as regression estimates. In contrast, we study a constrained semi-supervised setting in which only summary statistics of unlabeled covariates are available and the outcomes are entirely missing in the unlabeled population.

Several recent works study data integration using both individualized data and summary information. For example, \citet{chu2023targeted} study the transportation of optimal individualized treatment rules using calibrated estimators. Their estimator is consistent for a target parameter defined by the unlabeled population only when a density ratio model is correctly specified. In contrast, we develop doubly robust inference procedures that do not require correct specification of any single nuisance model. \citet{hu2026semiparametric} develop a semiparametric framework for incorporating auxiliary summaries to improve estimation for parameters defined by the labeled data. Our focus is different: we aim to generalize or transport inference from the labeled sample to a larger or different population. Moreover, while these existing works apply to low-dimensional settings, we consider the more challenging high-dimensional scenario where nuisance functions cannot be estimated at parametric convergence rates.

The data integration problem is also related to the calibration weighting framework \citep{deville1992calibration, hainmueller2012entropy, zubizarreta2015stable, zhao2017entropy}, where covariate summaries are used to construct balancing weights. In low-dimensional settings, these methods can be extended to the scenarios considered here. In high-dimensional settings, however, the use of regularized estimators such as the lasso introduces additional bias in nuisance estimation, leading to slow convergence rates and requiring strong ultra-sparsity conditions for valid inference \citep{athey2018approximate, ning2020robust, tan2020model}. To reduce such bias, cross-fitting is widely used \citep{chernozhukov2018double, zhang2022high, qian2024changepoint}. Conventional cross-fitting, however, cannot be applied directly to the constrained semi-supervised problem studied here, since it requires access to individual-level data. This limitation motivates the development of new cross-fitting methods that rely only on covariate summaries.

\subsection{Our contributions}

We first consider the standard semi-supervised setting in which labeled and unlabeled observations follow the same distribution and label availability is independent of the covariates. We propose an estimator that achieves higher efficiency than the supervised sample mean, even when the nuisance models are misspecified. The proposed estimator attains the same efficiency as the semi-supervised estimators studied by \citet{zhang2019semi} and \citet{zhang2022high}, while not requiring access to individual-level unlabeled covariates and remaining applicable in high dimensions. In addition, when partial covariate information is available from auxiliary unlabeled data, we further develop a new mean estimator based on partially linear regression to improve efficiency. The corresponding methods and theoretical results are presented in Section~\ref{sec: mcar}. 

We also study a more general semi-supervised setting in which the labeling mechanism depends on covariates; see Section~\ref{sec: mar}. Under the missing at random condition, we propose estimators for the mean response in both the overall population and the unlabeled population. The proposed estimators are \emph{model doubly robust} in high-dimensional settings, in the sense that \emph{root-$n$ consistency and asymptotic normality} hold as long as at least one of the two working models is correctly specified. This property provides greater robustness to model misspecification than existing data integration methods \citep{zhao2017entropy,athey2018approximate, chu2023targeted}. When both nuisance models are correctly specified, the proposed estimators attain the same efficiency as existing methods, such as \citet{bang2005doubly}, \citet{chernozhukov2017double}, and \citet{kallus2025role}, while not requiring full access to individual-level unlabeled covariates. We further extend the framework to causal inference by developing estimators of the average treatment effect for both the overall population, which addresses generalizability, and the unlabeled population, which addresses transportability.

The proposed methods are built on a new cross-fitting strategy that performs sample splitting only within the labeled data and avoids splitting the unlabeled data. This design eliminates the need for individual-level covariate information from the unlabeled sample in both nuisance estimation and final averaging, and is applicable to both covariate-independent and covariate-dependent labeling mechanisms. Our theory shows that this strategy plays a role analogous to conventional cross-fitting: it reduces the bias induced by regularized nuisance estimation and enables valid inference under weak sparsity conditions; see Remarks~\ref{remark:sample-splitting} and \ref{remark:challenge of cross-fitting}. The result also provides insight into the role of cross-fitting and may help reduce implementation complexity, storage costs, and communication burden in other problems where cross-fitting is used to control overfitting bias in high-dimensional or nonparametric estimation.

Finally, unlike existing semi-supervised methods that require access to and storage of the full unlabeled matrix of size \(n_U \times d\), our approach substantially reduces this burden. Under covariate-independent labeling, only a vector of size \(O_p(s)\) is required. Under covariate-dependent labeling, point estimation requires a \(d\)-dimensional vector, while statistical inference further involves a Gram matrix with \(O_p(s^2)\) entries. Detailed discussions are provided in Remarks~\ref{remark:data-MCAR} and \ref{remark: mar gen use less info}. Here, \(n_U\) denotes the size of the unlabeled sample, which may be much larger than both the labeled sample size \(n_L\) and the dimension \(d\), and \(s \ll d \wedge n_L\) denotes the sparsity level of the outcome regression model.

\subsection{Notation}

For a random vector $X \in \R^d$, we denote its $\psi_\alpha$-Orlicz norm for any $\alpha>0$ by $\|X\|_{\psi_\alpha}$ and its $\ell_p$-norm for any $p> 0$ by $\|X\|_p := (\sum_{j=1}^d X[j]^p)^{1/p}$, where $X[j]$ denotes the $j$th coordinate of a vector $X$. For simplicity, we define $\|X\|_0 := \max \{1, \sum_{j=1}^d \mathrm{I}_{(X[j] \neq 0)}\}$ to avoid degenerate cases with zero sparsity. For any integer $n>0$, let $[n]:=\{1,\ldots,n\}$, and for any set $S$, denote its cardinality by $|S|$.

\section{\label{sec: mcar} Mean estimation under covariate-independent labeling}

\subsection{Problem setup}

Suppose there exist underlying independent and identically distributed samples \((\Gamma_i, X_i, Y_i)_{i=1}^n\), and let \((\Gamma, X, Y)\) be an independent copy. Here, \(X_i \in \mathbb{R}^d\) denotes a covariate vector with its first coordinate equal to one, corresponding to the intercept in the regression model. Let \(Y_i \in \mathbb{R}\) be the outcome and \(\Gamma_i \in \{0,1\}\) be a binary labeling indicator. For interpretability, we treat \(\Gamma_i\) as random while keeping the total sample size \(n\) fixed. Due to access constraints, individualized data \((X_i, Y_i)\) are observed only when \(\Gamma_i = 1\). For observations with \(\Gamma_i = 0\), we assume access only to summary covariate statistics, namely the sample mean \(\bar X_0\) and the sample Gram matrix \(\bar \Xi_0\):
\begin{equation}\label{def:summary}
\bar X_0 = \frac{\sum_{i=1}^n (1 - \Gamma_i)X_i}{\sum_{i=1}^n (1 - \Gamma_i)}, \quad 
\bar \Xi_0 = \frac{\sum_{i=1}^n (1 - \Gamma_i)X_i X_i^\T}{\sum_{i=1}^n (1 - \Gamma_i)}.
\end{equation}
Given the sample mean \(\bar X_0\), observing the sample Gram matrix is equivalent to observing the sample covariance matrix \(\widehat\Sigma_0 = \sum_{i=1}^n (1 - \Gamma_i)(X_i - \bar X_0)(X_i - \bar X_0)^\T / \sum_{i=1}^n (1 - \Gamma_i)\) since they are in one-to-one correspondence through \(\bar \Xi_0 = \widehat\Sigma_0 + \bar X_0 \bar X_0^\T\). We also assume access to the unlabeled sample size, \(n_U=\sum_{i=1}^n (1 - \Gamma_i)\), so that the total sample size \(n\) is known. Because only summary statistics from the unlabeled data are required, the amount of information needed is much smaller than in standard semi-supervised learning, which assumes access to individual-level covariates.

Our procedure extends naturally to settings with multiple unlabeled datasets by aggregating them. The only requirement is access to the overall sample mean and Gram matrix across all unlabeled data, which can be computed from the sample statistics provided by each dataset. Since unlabeled summaries often come from large samples, especially when combining multiple datasets, we allow a decaying labeling probability \(\gamma_n := \P(\Gamma=1) \to 0\) as $n\to\infty$.

In this section, we focus on the standard semi-supervised setting in which labeling is independent of covariates. This scenario is often referred to as the missing completely at random (MCAR) setting in the missing data literature and is commonly assumed in semi-supervised learning; see, for example, \citep{chakrabortty2018efficient,zhang2019semi,tony2020semisupervised,azriel2022semi,zhang2022high,angelopoulos2023prediction,zrnic2024cross}.

\begin{assumption}[Missing completely at random]\label{assumption mcar} $\Gamma \perp (X, Y)$.
\end{assumption}
Our parameter of interest is the mean outcome, \(\theta = \E(Y)\). Under Assumption~\ref{assumption mcar}, the mean outcome is the same across the overall, labeled, and unlabeled populations: \(\theta = \E(Y) = \E(Y \mid \Gamma = 1) = \E(Y \mid \Gamma = 0)\). A direct approach is to use the supervised sample mean, \(\bar{Y}_1 = \sum_{i=1}^n \Gamma_i Y_i / \sum_{i=1}^n \Gamma_i\). However, this estimator ignores information contained in the unlabeled covariates and can be inefficient. We develop methods that use the available summary information to improve efficiency.

\subsection{\label{sec: mcar lasso method} Methodology}

In the following, we propose an estimator for the mean outcome \(\theta = \E(Y)\) that uses only the individualized information from the labeled data \(\mc{D}_n = \{(\Gamma_i, \Gamma_i X_i, \Gamma_i Y_i)\}_{i=1}^n\) together with the unlabeled sample mean \(\bar{X}_0\).

Define the population slope \(\beta^* = \argmin_{\beta \in \mathbb{R}^{d}}\E \{(Y - X^\T \beta)^2 \}\), without requiring a correctly specified linear model. By construction of \(\beta^*\) and the Karush--Kuhn--Tucker conditions, we have \(\theta=\E(Y)=\E(X)^\T\beta^*\), even when \(\E(Y \mid X) \neq X^\T \beta^*\). Based on this representation, a natural approach is to estimate the linear slope \(\beta^*\) with some estimate \(\widehat{\beta}\) using labeled samples, while estimating the mean covariate \(\E(X)\) using the sample mean over the entire population, which can be represented as a function of \(\mathcal{D}_n\) and \(\bar{X}_0\): \(\bar{X}_{\rm all} = (1 - n^{-1} \sum_{i=1}^n \Gamma_i) \bar{X}_0 + n^{-1} \sum_{i=1}^n \Gamma_i X_i\). A plug-in estimate of \(\theta\) can then be constructed as \(\widehat{\theta}_{\rm PI} = \bar{X}_{\rm all}^\T \widehat{\beta}\). When the population slope is estimated via least squares, this approach aligns with the semi-supervised least squares estimator proposed by \citet{zhang2019semi}. This construction therefore requires only summary statistics from the unlabeled data, although this aspect was not emphasized in the original work. In low-dimensional settings, \citet{zhang2019semi} showed that the resulting estimator is asymptotically normal and at least as efficient as the sample mean \(\bar{Y}_1\), provided that the feature dimension satisfies \(d=o((n_L)^{1/2})\), where \(n_L=\sum_{i=1}^n \Gamma_i\).

As further shown by \citet{zhang2019semi}, the efficiency gain of semi-supervised methods over supervised methods depends on the variance explained by the linear model, \(\var(X_i^\T\beta^*)\). To increase this quantity and improve estimation efficiency, it is desirable to collect more outcome-related covariates, or to include richer basis expansions to increase the variation explained by the linear model. However, when the feature dimension exceeds the number of labeled samples, the least squares solution is ill-defined, making regularized estimators such as the lasso more suitable. A drawback of regularization is the bias it introduces, which often leads to slower convergence rates for plug-in estimators. To address this issue, \citet{zhang2022high} propose a debiasing procedure based on cross-fitting. Their method, however, relies on access to the sample means of covariates within each fold, which requires individualized unlabeled samples.

To overcome this limitation, we propose a refined approach that imputes all unlabeled covariates by their sample mean, \(\bar{X}_0\), so that the method remains feasible without access to individualized covariates. We introduce the proposed procedure below.

Step 1: Divide the index set $[n]$ into $K$ disjoint subsets $\mathcal{I}_1, \ldots, \mathcal{I}_K$ with equal sizes such that $n_k := |\mathcal{I}_k| = n/K$ for $k \in [K]$. Let $\widehat \gamma_{k} = n_k^{-1}\sum_{i\in \mathcal{I}_k} \Gamma_i$ and $\mathcal{I}_{-k} = [n] \setminus \mathcal{I}_k$. 

Step 2: For each $k \in [K]$, compute the lasso estimator $\widehat \beta^{(-k)}$: with some $\lambda_n\geq0$,
\[
\widehat \beta^{(-k)} = \underset{\beta \in \R^{d}}{\argmin}
\left\{
\frac{\sum_{i \in \mathcal{I}_{-k}} \Gamma_i\p{Y_i - X_i^\T\beta}^2}
{\sum_{i \in \mathcal{I}_{-k}} \Gamma_i}
+ \lambda_n \|\beta\|_1
\right\}.
\]

Step 3: Denote $\widehat{\varepsilon}_i^{(-k)}=Y_i - X_i^\T \widehat \beta^{(-k)}$. The mean estimator is proposed as
\begin{align}
\widehat \theta = n^{-1}\sum_{k=1}^K\sum_{i\in \mathcal{I}_k} \{\Gamma_iX_i + (1-\Gamma_i)\bar{X}_0\}^\T \widehat \beta^{(-k)}  + n^{-1}\sum_{k=1}^K\sum_{i\in \mathcal{I}_k}\frac{ \Gamma_i}{\widehat \gamma_{k}}\widehat{\varepsilon}_i^{(-k)}. \label{definition k-th fold ss outcome estimator MCAR}
\end{align}
When the sample Gram matrix $\bar \Xi_0$ is also observable, we define:
\begin{align}
\widehat \sigma^2 =& n^{-1}\sum_{k=1}^K\sum_{i\in \mathcal{I}_k}(1-\Gamma_i)\widehat \beta^{(-k),\T}\bar \Xi_0 \widehat \beta^{(-k)}+n^{-1}\sum_{k=1}^K\sum_{i\in \mathcal{I}_k} \left(\Gamma_iX_i^\T\widehat\beta^{(-k)}  + \frac{\Gamma_i}{\widehat \gamma_k}\widehat{\varepsilon}_i^{(-k)}\right)^2 - \widehat \theta^2. \label{variance estimator under MCAR}
\end{align}

\begin{remark}[Required summary statistics under covariate-independent labeling]\label{remark:data-MCAR}
To construct a point estimator for \(\theta\), only first-moment information from the unlabeled data is required, whereas statistical inference and asymptotic variance estimation additionally require second-moment information. Importantly, the additional covariate summaries from the unlabeled data can be collected after obtaining the lasso estimators in Step 2. Since the estimator \eqref{definition k-th fold ss outcome estimator MCAR} depends only on the inner product \(\bar{X}_0^\T \widehat \beta^{(-k)}\), it is sufficient to collect \(\bar{X}_{0,\widehat S} = n_U^{-1}\sum_{i=1}^n(1 - \Gamma_i)X_{i,\widehat S}\), where \(\widehat S = \cup_{k=1}^K\mathrm{supp}(\widehat \beta^{(-k)})\) satisfies \( |\widehat S| = O_p(\|\beta^*\|_0) \) by arguments as in Lemma~\ref{lemma sparsity betatilde -> beta}. The estimator \eqref{definition k-th fold ss outcome estimator MCAR} remains unchanged if \(\bar{X}_0\) is replaced by \(\tilde{X}_0\), where \(\tilde{X}_{0,\widehat S} = \bar{X}_{0,\widehat S}\) and \(\tilde{X}_{0,\widehat S^c} = 0\).

To construct the asymptotic variance estimator in \eqref{variance estimator under MCAR}, it suffices to compute \(\bar{\Xi}_{0,\widehat S} = n_U^{-1}\sum_{i=1}^n(1 - \Gamma_i)X_{i,\widehat S}X_{i,\widehat S}^\T\). Furthermore, in scenarios where the sub-matrix \(\bar{\Xi}_{0,\widehat S}\) is unavailable, the asymptotic variance can still be estimated using only the labeled data under covariate-independent labeling: $\widehat \sigma_{L}^2 = n^{-1}\sum_{k=1}^K\sum_{i\in \mathcal{I}_k} \{\Gamma_i(X_i^\T\widehat\beta^{(-k)})^2/\widehat \gamma_k  + \Gamma_i(\widehat{\varepsilon}_i^{(-k)})^2/\widehat \gamma_k^2\} - \widehat \theta^2.$ The asymptotic normality results in Theorem~\ref{theorem for the asymptotics under MCAR} remain valid when \(\widehat \sigma^2\) is replaced by \(\widehat \sigma_{L}^2\). However, when second-moment information from unlabeled data is available, inference based on \(\widehat \sigma^2\) is recommended, as incorporating additional information generally leads to improved finite-sample performance.
\end{remark}

\begin{remark}[Comparison with existing cross-fitting schemes]
In the presence of high-dimensional covariates, existing semi-supervised procedures typically apply cross-fitting to the entire dataset and therefore require access to both labeled and unlabeled observations at the individual level \citep{zhang2022high, zhang2023double, zrnic2024cross, kallus2025role}. In contrast, the proposed strategy performs cross-fitting only within the labeled data and incorporates unlabeled information solely through the summary statistic \(\bar X_0\). Figure~\ref{fig:crossfit} illustrates the difference between the proposed cross-fitting strategy and existing approaches. By avoiding cross-fitting on the unlabeled data, our method removes the need for access to individualized unlabeled covariates. At the same time, this simplification does not weaken the role of cross-fitting in reducing overfitting bias; see Theorem~\ref{theorem for the asymptotics under MCAR} and Remark~\ref{remark:sample-splitting}.
\end{remark}

\begin{figure}[tbp]
\centering
\resizebox{0.92\textwidth}{!}{%
\begin{tikzpicture}[
    x=1cm,
    y=1cm,
    every node/.style={font=\small}
]

\def\w{3.2}
\def\hup{0.55}
\def\hdown{1.45}
\def\gap{0.22}
\def\cutw{0.64}

\pgfmathsetmacro{\upMid}{\hup/2}
\pgfmathsetmacro{\lowMid}{-\gap-\hdown/2}

\begin{scope}[shift={(0,0)}]

    \draw (0,0) rectangle (\w,\hup);

    \draw (0,-\gap-\hdown) rectangle (\w,-\gap);

    \foreach \x in {\cutw,1.28,1.92,2.56}
    {
    \draw (\x,0) -- (\x,\hup);
    \draw (\x,-\gap-\hdown) -- (\x,-\gap);
    }

    \node[anchor=east, align=center]
    at (-0.45,\upMid)
    {Labeled\\[-1mm]Data};

    \node[anchor=east, align=center]
    at (-0.45,\lowMid)
    {Unlabeled\\[-1mm]Data};

    \node at (\w/2,-\gap-\hdown-0.42) {(a)};

\end{scope}

\begin{scope}[shift={(5.3,0)}]

    \draw (0,0) rectangle (\w,\hup);

    \foreach \x in {\cutw,1.28,1.92,2.56}
    {
        \draw (\x,0) -- (\x,\hup);
    }

    \draw[dashed]
    (0,-\gap-\hdown) rectangle (\w,-\gap);

    \draw[->, thick]
    (\w+0.12,-\gap-\hdown/2)
    -- (\w+0.75,-\gap-\hdown/2);

    \node[draw, align=center]
    at (\w+1.775,-\gap-0.725)
    {Summary\\[-2pt]Statistics};

    \node at (\w/2,-\gap-\hdown-0.42) {(b)};

\end{scope}
\path (11,0);
\end{tikzpicture}%
}
\caption{
Schematic illustration of cross-fitting strategies.
(a) Existing semi-supervised procedures \citep{zhang2022high, zhang2023double, zrnic2024cross, kallus2025role}, which apply cross-fitting to both labeled and unlabeled samples.
(b) The proposed procedure, in which cross-fitting is applied only to the labeled sample. The unlabeled sample (dashed rectangle) is not available at the individual level; only its summary statistics are used.
}
\label{fig:crossfit}
\end{figure}

\subsection{\label{mcar lasso theory}Asymptotic theory}

We introduce the following regularity assumptions.

\begin{assumption}\label{assumption mcar lasso (a)}
Let the following conditions hold with constants $\kappa_l,\sigma,\sigma_w,\delta_w>0$: (a) $X$ is a sub-Gaussian random vector with $\|X^\T v\|_{\psi_2} \leq \sigma \|v\|_2$ for all $v \in \R^{d}$. In addition, $\|X^\T \beta^*\|_{\psi_2} \leq \sigma$ and
\[
 \inf_{v \in \R^d, \|v\|_2=1}\E\{(X^\T v)^2\} \geq \kappa_l .
\]
(b) $w = Y - X^\T \beta^*$ is a sub-Gaussian random variable with $\|w\|_{\psi_2} \leq \sigma_w$ and $\E(w^2)\geq \delta_w$.
\end{assumption}

The following theorem characterizes asymptotic properties of the mean estimator $\widehat \theta$.

\begin{theorem}\label{theorem for the asymptotics under MCAR}
Suppose that Assumptions~\ref{assumption mcar} and~\ref{assumption mcar lasso (a)} hold. Choose $\lambda_n \asymp \{\log d/(n\gamma_n)\}^{1/2}$. If $n\gamma_n \gg (\log n)^2\log d$ and the sparsity level satisfies $s=\|\beta^*\|_0=o(n\gamma_n/ \log d)$, then as $n, d \rightarrow \infty$, $\widehat \theta -\theta = O_p(n^{-1/2}\gamma_n^{-1/2})$, $\widehat \sigma^2 = \sigma_n^2\{1 + o_p(1)\}$, and $\widehat\sigma^{-1}n^{1/2}(\widehat \theta - \theta)\rightarrow_d \mc{N}(0,1)$, where $\sigma_n^2 = \var(X^\T \beta^* + \Gamma(Y - X^\T \beta^*)/\gamma_n) = \gamma_n^{-1} \var(Y) - (\gamma_n^{-1}-1) \var(X^\T \beta^*)$.
\end{theorem}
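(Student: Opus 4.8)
\emph{Proof proposal.} The plan is to split $\widehat\theta-\theta$ into an ``oracle'' linear statistic (obtained by replacing each $\widehat\beta^{(-k)}$ by the population slope $\beta^*$ and each $\widehat\gamma_k$ by $\gamma_n$) plus remainder terms, to prove a triangular-array CLT for the oracle statistic, to show the remainders are $o_p\p{(n\gamma_n)^{-1/2}}$, and finally to establish $\widehat\sigma^2=\sigma_n^2\br{1+o_p(1)}$; Slutsky then finishes. As a preliminary, I would record the standard high-dimensional Lasso guarantees for $\widehat\beta^{(-k)}$: since the fold-wise effective sample size $\sum_{i\in\mathcal{I}_{-k}}\Gamma_i\asymp n\gamma_n$ with probability tending to one and $n\gamma_n\gg(\log n)^2\log d$ supplies the compatibility/restricted-eigenvalue condition uniformly over the $K$ folds for a sub-Gaussian design, the prescribed $\lambda_n\asymp\sqrt{\log d/(n\gamma_n)}$ yields $\norm{\widehat\beta^{(-k)}-\beta^*}_1=O_p\p{s\sqrt{\log d/(n\gamma_n)}}$, $\norm{\widehat\beta^{(-k)}-\beta^*}_2=O_p\p{\sqrt{s\log d/(n\gamma_n)}}$, and, via Lemma~\ref{lemma sparsity betatilde -> beta}, $\norm{\widehat\beta^{(-k)}}_0=O_p(s)$; moreover $\norm{\beta^*}_2\le\sigma/\sqrt{\kappa_l}$ by Assumption~\ref{assumption mcar lasso (a)}, so $\norm{\beta^*}_1=O(\sqrt s)$.

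Write $w_i=Y_i-X_i^\top\beta^*$, $\Delta_k=\widehat\beta^{(-k)}-\beta^*$, $\Sigma:=\Var(X)$, and let $\bar X_{1,k},\bar w_{1,k}$ denote the primary-group averages of $X_i,w_i$ over $\mathcal{I}_k$. Using the identity $\sum_{i\in[n]}(1-\Gamma_i)\bar X_0=\sum_{i\in[n]}(1-\Gamma_i)X_i$ and the KKT relations $\E[Xw]=0$ (hence $\E[w]=0$, $\E[X]^\top\beta^*=\theta$), a direct rearrangement yields
\[
\widehat\theta-\theta=\widetilde L_n+\sum_{k=1}^K\frac{1-\widehat\gamma_k}{K}\p{\bar X_0-\bar X_{1,k}}^\top\Delta_k-\sum_{k=1}^K\frac{1}{K}\p{\frac{\widehat\gamma_k}{\gamma_n}-1}\bar w_{1,k},\qquad \widetilde L_n:=\frac1n\sum_{i=1}^n\sbr{X_i^\top\beta^*+\frac{\Gamma_i}{\gamma_n}w_i-\theta}.
\]
The last sum is $O_p\p{(n\gamma_n)^{-1}}$ since $\widehat\gamma_k-\gamma_n=O_p\p{\sqrt{\gamma_n/n}}$ and, because $\E[w]=0$, $\bar w_{1,k}=O_p\p{(n\gamma_n)^{-1/2}}$. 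For the middle sum, the key observation is that, conditionally on $(\Gamma_i)_{i\in[n]}$, the estimator $\widehat\beta^{(-k)}$ depends only on the observations in $\mathcal{I}_{-k}$ with $\Gamma_i=1$ and is hence independent of both $\bar X_0$ (built from observations with $\Gamma_i=0$, a disjoint index set) and $\bar X_{1,k}$ (built from $\mathcal{I}_k$); under Assumption~\ref{assumption mcar} both averages are conditionally unbiased for $\E[X]$. Splitting $\bar X_0-\bar X_{1,k}=(\bar X_0-\E[X])-(\bar X_{1,k}-\E[X])$ and bounding the conditional variance of each inner product by $\Delta_k^\top\Sigma\Delta_k\lesssim\norm{\Delta_k}_2^2$ divided by the relevant sample size gives $(\bar X_{1,k}-\E[X])^\top\Delta_k=O_p\p{\sqrt{s\log d}/(n\gamma_n)}$ and $(\bar X_0-\E[X])^\top\Delta_k=O_p\p{\sqrt{s\log d/(n^2\gamma_n)}}$; both are $o_p\p{(n\gamma_n)^{-1/2}}$ precisely because $s\log d=o(n\gamma_n)$. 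Therefore $\widehat\theta-\theta=\widetilde L_n+o_p\p{(n\gamma_n)^{-1/2}}$.

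The statistic $\widetilde L_n$ is an average of i.i.d.\ mean-zero variables with variance $\sigma_n^2=\gamma_n^{-1}\Var(w)+\Var(X^\top\beta^*)$, and Assumption~\ref{assumption mcar lasso (a)} gives $\delta_w\gamma_n^{-1}\le\sigma_n^2\lesssim\gamma_n^{-1}$; since $w$ is sub-Gaussian the Lyapunov ratio (using third absolute moments) is of order $\E|w|^3/\sqrt{n\gamma_n}\to0$, so $\sqrt n\,\widetilde L_n/\sigma_n\xrightarrow{d}\mc{N}(0,1)$ and $\widehat\theta-\theta=O_p\p{(n\gamma_n)^{-1/2}}$. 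For the variance estimator I would compare $\widehat\sigma^2$ with the oracle $\sigma_{\rm or}^2$ obtained by substituting $\beta^*,\gamma_n,\Xi:=\E[XX^\top],\theta$ throughout \eqref{variance estimator under MCAR}; a triangular-array law of large numbers shows $\sigma_{\rm or}^2=\sigma_n^2\br{1+o_p(1)}$ (e.g.\ $\Var(\Gamma w^2)=O(\gamma_n)$, etc.). Expanding $\widehat\sigma^2-\sigma_{\rm or}^2$ into cross terms, each is $o_p(\gamma_n^{-1})$: the Lasso rates handle the $\Delta_k$ contributions, $\widehat\gamma_k-\gamma_n=O_p(\sqrt{\gamma_n/n})$ the weight perturbation, $\norm{\bar\Xi_0-\Xi}_{\max}=O_p(\sqrt{\log d/n})$ the terms linear in $\bar\Xi_0$, the identity $\E[Xw]=0$ again removes several leading terms, and $\widehat\theta^2-\theta^2=O_p\p{(n\gamma_n)^{-1/2}}=o_p(\gamma_n^{-1})$; for the genuinely quadratic piece $\Delta_k^\top(\bar\Xi_0-\Xi)\Delta_k$ one uses that $\Delta_k$ is $O_p(s)$-sparse together with the restricted bound $\sup_{|T|\le Cs}\norm{(\bar\Xi_0-\Xi)_T}_{\mathrm{op}}=O_p(\sqrt{s\log d/n})$ (rather than the lossy $\norm{\bar\Xi_0-\Xi}_{\max}\norm{\Delta_k}_1^2$), which is $o_p(\gamma_n^{-1})$ as soon as $s\log d=o(n)$. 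Hence $\widehat\sigma^2=\sigma_n^2\br{1+o_p(1)}$, and combining with the CLT via Slutsky gives $\widehat\sigma^{-1}\sqrt n(\widehat\theta-\theta)\xrightarrow{d}\mc{N}(0,1)$.

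The step I expect to be the main obstacle is the sharp control of the cross terms $(\bar X_0-\bar X_{1,k})^\top\Delta_k$ and $\Delta_k^\top(\bar\Xi_0-\Xi)\Delta_k$: a crude H\"older bound ($\norm{\cdot}_{\infty}\norm{\Delta_k}_1$, resp.\ $\norm{\cdot}_{\max}\norm{\Delta_k}_1^2$) only permits $s=o(\sqrt{n\gamma_n}/\log d)$, whereas attaining the stated $s=o(n\gamma_n/\log d)$ requires exploiting the conditional independence afforded by cross-fitting and by the disjointness of the primary and external index sets (conditionally on $(\Gamma_i)_{i\in[n]}$) in order to replace $\ell_1$--$\ell_\infty$ bounds by quadratic-form/operator-norm bounds in the $\Sigma$-geometry. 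This is also where the imputation by $\bar X_0$, which weakly couples all folds, has to be disentangled through the splitting into $\bar X_{1,k}$ and the $\Gamma$-conditioning above.
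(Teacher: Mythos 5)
Your proposal is correct and follows essentially the same route as the paper: your identity $\widehat\theta-\theta=\widetilde L_n+\sum_{k}K^{-1}\p{1-\widehat\gamma_k}\p{\bar X_0-\bar X_{1,k}}^\top\Delta_k-\sum_{k}K^{-1}\p{\widehat\gamma_k/\gamma_n-1}\bar w_{1,k}$ is an algebraic regrouping of the terms $I_1+I_2+I_4$ and $I_3$ in Lemma \ref{lemma the mean estimator consistency under MCAR}, and the oracle CLT, the remainder control via conditional second moments given $\Gamma_{1:n}$ and $\mathcal{D}_{\mathcal{I}_{-k}}$ (the cross-fitting/disjoint-index-set argument of Remark \ref{remark:sample-splitting} and Lemma \ref{lemma conditional independence}), and the oracle comparison for $\widehat\sigma^2$ all match Lemmas \ref{lemma the oracle asymptotics under MCAR} and \ref{lemma variance consistency under MCAR}. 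The one minor deviation is your restricted operator-norm bound for $\Delta_k^\top\p{\bar\Xi_0-\E[XX^\top]}\Delta_k$, which needs a sparsity bound on $\widehat\beta^{(-k)}$; the paper instead conditions once more so that $\E\sbr{n^{-1}\sum_i(1-\Gamma_i)(X_i^\top\Delta_k)^2\mid \mathcal{D}_{\mathcal{I}_{-k}},\Gamma_{1:n}}=(1-\widehat\gamma)\,\Delta_k^\top\E[XX^\top]\Delta_k=O_p\p{s\log d/(n\gamma_n)}$, which is simpler and suffices under the same conditions.
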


As observed in prior works \citep{zhang2019semi,zhang2022high,zhang2023double,kallus2025role}, the ``effective sample size'' of this problem is \(n\gamma_n = \E(n_L)\), corresponding to the expected labeled sample size. The convergence rate established above depends on \(n\gamma_n\), rather than the full sample size \(n\). 

In addition, Theorem~\ref{theorem for the asymptotics under MCAR} establishes that \(\widehat{\theta}\) remains consistent and asymptotically normal even when the linear working model is misspecified. Compared with the supervised sample mean \(\bar Y_1\), which has asymptotic variance \(\gamma_n^{-1}\var(Y)\), the asymptotic variance of \(\widehat{\theta}\), namely \(\sigma_n^2\), is smaller by $(\gamma_n^{-1}-1)\var(X^\T\beta^*)\geq0$. The asymptotic variance \(\sigma_n^2\) also aligns with that of the linear regression-based methods from \citet{zhang2019semi, zhang2022high} when \(\Gamma_i\) is treated as random. When individualized unlabeled covariates are observable, \citet{zhang2019semi} show that the oracle lower bound for the asymptotic variance is \(\sigma_{\rm oracle}^2 = \gamma_n^{-1} \var(Y) + (1 - \gamma_n^{-1}) \var(\mu(X))\), where \(\mu(X) = \E(Y \mid X)\) represents the true conditional mean function. If the linear model is correct, the asymptotic variance \(\sigma_n^2\) from Theorem~\ref{theorem for the asymptotics under MCAR} matches this semi-supervised oracle lower bound, but notably, we achieve this without requiring individualized unlabeled covariates. In the presence of model misspecification, efficiency gains are possible when individualized covariates are partially observable; see Section~\ref{sec: mcar plm}. 
\begin{remark}[Cross-fitting without access to individualized unlabeled covariates]\label{remark:sample-splitting}
The cross-fitting technique is well known for reducing bias from nuisance estimation, especially when nuisance estimators do not achieve a parametric convergence rate, as is common in high-dimensional or nonparametric settings. Methods that use cross-fitting often attain faster convergence rates and allow valid inference under weaker sparsity conditions; see, for example, \citep{chernozhukov2017double,tony2020semisupervised,zhang2022high,qian2024changepoint}. In the context of mean estimation under covariate-independent labeling, \citet{zhang2022high} note that statistical inference without cross-fitting typically requires an ultra-sparse condition \(s = o((n_L)^{1/2}/\log d)\), assuming \(\Gamma_i\) and \(n_L\) are non-random. To relax this requirement, they apply cross-fitting across the entire dataset, which improves the sparsity condition to \(s = o(n_L/\log d)\). 

Our findings show that cross-fitting is needed only within the labeled data and does not need to be applied to the unlabeled data. This is because the outcome regression estimator \(\widehat\beta^{(-k)}\) is constructed using only labeled observations, which makes it conditionally independent of the unlabeled summary \(\bar X_0\) given all \(\Gamma_i\), even without cross-fitting on the unlabeled data. This property allows careful control of how nuisance estimation error affects the final mean estimator. In Theorem~\ref{theorem for the asymptotics under MCAR}, we impose a sparsity condition \(s = o(n\gamma_n/\log d)\), which matches the requirement in \citet{zhang2022high} when \(\Gamma_i\) and \(n_L\) are treated as random, since \(n\gamma_n=\E(n_L)\). This refinement removes the need to access individualized unlabeled covariates and makes the method more practical for large-scale semi-supervised problems where such access can be difficult.
\end{remark}

\subsection{Improving efficiency with additional individualized features}\label{sec: mcar plm}

The proposed estimator above uses only unlabeled summary statistics, which is advantageous when individual-level access is difficult or impossible. When additional individualized covariates \(Z_i \in \mathbb{R}^r\) from the unlabeled sample are available, with \(Z\) denoting an independent copy, we can further improve efficiency through a partially linear model. We introduce the procedure below.

Step 1: Divide the index set $[n]$ into $K$ disjoint subsets $\mathcal{I}_1, \ldots, \mathcal{I}_K$ with equal sizes such that $n_k := |\mathcal{I}_k| = n/K$ for $k \in [K]$. Let $\widehat \gamma_{k} = n_k^{-1}\sum_{i\in \mathcal{I}_k} \Gamma_i$ and $\mathcal{I}_{-k} = [n] \setminus \mathcal{I}_k$.

Step 2: For each $k \in [K]$, construct the doubly penalized least squares estimator based on a partially linear model \citep{muller2015partial}:
\[
(\widehat \beta^{(-k)}_{plm}, \widehat f^{(-k)}) =
\underset{(\beta,f) \in \R^{d} \times \mathcal{F}}{\argmin}
\left[
\frac{\sum_{i \in \mathcal{I}_{-k}} \Gamma_i \{Y_i - X_{i}^\T \beta - f(Z_i)\}^2}
{\sum_{i \in \mathcal{I}_{-k}} \Gamma_i}
+ \lambda_n \|\beta\|_1 + \eta_n \|f\|_{\mathcal{F}}
\right],
\]
where $\mathcal F$ is a pre-specified functional class, $\|\cdot\|_{\mathcal{F}}$ is an associated norm, and $\lambda_n, \eta_n\geq0$ are regularization parameters. For each $i \in \mathcal{I}_k$, define $\widehat \mu_i^{(-k)} = X_i^\T \widehat \beta^{(-k)}_{plm} + \widehat f^{(-k)}(Z_i)$, $\widehat \mu_{0,i}^{(-k)} = \bar X_0^\T \widehat \beta^{(-k)}_{plm} + \widehat f^{(-k)}(Z_i)$, $\widehat \varepsilon_{i,plm}^{(-k)} = Y_i - \widehat \mu_i^{(-k)}$, and
\[
\widehat \tau_i^{(-k)} = \widehat \beta^{(-k), \T}_{plm} \bar{\Xi}_0 \widehat \beta^{(-k)}_{plm}
+ 2(\bar X_0^\T \widehat \beta^{(-k)}_{plm})\widehat f^{(-k)}(Z_i)
+ \{\widehat f^{(-k)}(Z_i)\}^2.
\]

Step 3: The mean estimator based on a partially linear model is defined as
\[
\widehat \theta_{plm}
= n^{-1}\sum_{k=1}^K\sum_{i\in \mathcal{I}_k}
\left\{\Gamma_i \widehat \mu_i^{(-k)} + (1-\Gamma_i)\widehat \mu_{0,i}^{(-k)}
+ \frac{\Gamma_i}{\widehat \gamma_{k}}\widehat \varepsilon_{i,plm}^{(-k)}\right\}.
\]
When $\bar \Xi_0$ is also observable, the asymptotic variance can be estimated as
\[
\widehat \sigma_{plm}^2
= n^{-1}\sum_{k=1}^K\sum_{i\in\mathcal{I}_{k}}
\left(\Gamma_i \widehat \mu_i^{(-k)} + \frac{\Gamma_i}{\widehat \gamma_k}\widehat \varepsilon_{i,plm}^{(-k)}\right)^2
+ n^{-1}\sum_{k=1}^K\sum_{i\in\mathcal{I}_{k}}(1-\Gamma_i)\widehat \tau_i^{(-k)}
- \widehat \theta_{plm}^2.
\]

Define
\begin{align*}
(\beta^*_{plm}, f^*) &= \argmin_{(\beta, f) \in \R^d \times \mathcal{F}} \E[\{Y - X^\T\beta-f(Z)\}^2],\\
\mu^*(W)&=X^\T\beta^*_{plm}+f^*(Z),\quad
W=(X^\T,Z^\T)^\T,\quad \epsilon = Y - \mu^*(W).
\end{align*}
Let $\widetilde X  = X - \E(X \mid Z)$. We assume the following conditions.

\begin{assumption}
(a) \label{assumption mcar plm (a)} There exist constants $\kappa_l, \kappa_u, \sigma, \sigma_f, \sigma_\epsilon, \delta_\epsilon>0$ such that the following conditions hold: $\inf_{v \in \R^d,\;\|v\|_2=1}\E\{(\widetilde X^\T v)^2\} \geq \kappa_l$, $\sup_{v \in \R^d,\;\|v\|_2=1}\E\{(X^\T v)^2\} \leq \kappa_u$, $\E\{(X^\T \beta^*_{plm})^{4}\} \leq \sigma^{4}$, $\E[\{f^*(Z)\}^{4}] \leq \sigma_f^{4}$, $\E(\epsilon^{4}) \leq \sigma_\epsilon^{4}$, and $\E(\epsilon^{2})\geq \delta_\epsilon$. (b) For each $k\in[K]$, as $n, d \rightarrow \infty$,
$\E_Z[\{\widehat f^{(-k)}(Z) - f^*(Z)\}^2] = o_p(1)$ and $\|\widehat \beta^{(-k)}_{plm} - \beta^*_{plm}\|_2 = o_p(1)$.
\end{assumption}

Assumption~\ref{assumption mcar plm (a)}(a) imposes standard regularity conditions; see, e.g., \citet{xie2009scad, zhang2016statistical, lv2022debiased}. Part (b) only requires nuisance consistency, which holds, for example, when $f^*$ is a low-dimensional smooth function and the linear slope is sparse with $\|\beta^*_{plm}\|_0=o(n\gamma_n/\log d)$ \citep{muller2015partial,zhu2019high}. Theoretical properties of $\widehat \theta_{plm}$ are summarized below.
\begin{theorem} \label{plm Asymptotics theorem}
Let Assumptions~\ref{assumption mcar} and~\ref{assumption mcar plm (a)} hold. Assume further that $\Gamma \perp Z$ and $n\gamma_n\to\infty$. Then, as $n, d \rightarrow \infty$, $\widehat \theta_{plm} - \theta = O_p(n^{-1/2}\gamma_n^{-1/2})$, $\widehat \sigma_{plm}^2 = \sigma_{plm}^2\{1 + o_p(1)\}$, and $\widehat\sigma_{plm}^{-1}n^{1/2}(\widehat \theta_{plm} - \theta)\rightarrow_d \mc{N}(0,1)$, where $\sigma_{plm}^2 = \var(\mu^*(W) + \Gamma\{Y - \mu^*(W)\}/\gamma_n) = \gamma_n^{-1} \var(Y) + (1 - \gamma_n^{-1}) \var(\mu^*(W))$.
\end{theorem}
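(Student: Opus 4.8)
The plan is to mimic the proof of Theorem \ref{theorem for the asymptotics under MCAR}, decomposing $\sqrt{n}(\widehat\theta_{plm}-\theta)$ into a leading term with the desired limiting distribution plus remainder terms that I show are $o_p(1)$. First I would write $\theta = \E[\mu^*(W)]$ — this is the partially-linear analogue of the KKT identity $\theta=\E[X]^\top\beta^*$: the first-order conditions for $(\beta^*_{plm},f^*)$ give $\E[\epsilon X]=\mathbf{0}$ and, since the constant function lies in (or can be handled within) $\mathcal F$ together with $X[1]\equiv 1$, $\E[\epsilon]=0$, hence $\E[Y]=\E[X^\top\beta^*_{plm}+f^*(Z)]=\E[\mu^*(W)]$. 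Then, writing $m_i := \mu^*(W_i)+\tfrac{\Gamma_i}{\gamma_n}(Y_i-\mu^*(W_i))$, I would show
\begin{align*}
\sqrt n(\widehat\theta_{plm}-\theta) = n^{-1/2}\sum_{i=1}^n (m_i - \theta) + R_n,
\end{align*}
where $n^{-1/2}\sum_i(m_i-\theta)\xrightarrow{d}\mc N(0,\sigma_{plm}^2)$ by the CLT (the $m_i$ are i.i.d.\ with finite variance under Assumption \ref{assumption mcar plm (a)}(a), and $\Var(m_i)$ simplifies to the stated $\sigma_{plm}^2$ because $\E[\Gamma\mid W,Y]=\gamma_n$ under MCAR kills the cross term), and $R_n=o_p(1)$.

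The core of the argument is controlling $R_n$, which collects: (i) the error from replacing $\widehat\gamma_k$ by $\gamma_n$; (ii) the difference between using $\bar X_0$ as an imputation for the external covariates versus the true external covariates (a ``semi-supervised'' term); and (iii) the nuisance-estimation error from plugging in $(\widehat\beta^{(-k)}_{plm},\widehat f^{(-k)})$ instead of $(\beta^*_{plm},f^*)$. For (iii), the key is the double-robustness/Neyman-orthogonality structure of the estimating equation: the derivative of the population criterion with respect to the nuisance, evaluated at the truth, vanishes, so the first-order bias from nuisance error is quadratic. Concretely, on each fold, the relevant term is
\begin{align*}
n^{-1}\sum_{i\in\mathcal I_k}\Bigl[\bigl(\Gamma_i X_i + (1-\Gamma_i)\bar X_0\bigr)^\top(\widehat\beta^{(-k)}_{plm}-\beta^*_{plm}) + \bigl(\Gamma_i - \tfrac{\Gamma_i}{\widehat\gamma_k}\bigr)\bigl(\cdots\bigr) + \widehat f^{(-k)}(Z_i) - f^*(Z_i) - \tfrac{\Gamma_i}{\widehat\gamma_k}(\cdots)\Bigr],
\end{align*}
and I would split this into a mean-zero stochastic part (handled by conditioning on $\mathcal I_{-k}$ and all the $\Gamma_i$, using that $\widehat\beta^{(-k)}_{plm},\widehat f^{(-k)}$ are then fixed, plus $\Gamma\perp Z$ and MCAR to compute conditional means) plus a ``product'' bias part. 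The bias part, after using $\E[\epsilon X]=0$, $\E[\epsilon\mid Z]$-type identities, and $\E[X\mid Z]=X-\widetilde X$, reduces to an expression of the form $\langle \widehat\beta^{(-k)}_{plm}-\beta^*_{plm},\,(\text{something})\rangle$ times a $\widehat f$-error, or a product of two nuisance errors, each of which is $o_p(1)$ by Assumption \ref{assumption mcar plm (a)}(b) — and crucially, because $n\gamma_n\to\infty$ suffices here (no rate on the nuisances beyond consistency), since the prefactor $(n\gamma_n)^{1/2}$ multiplies a term that is itself $o_p((n\gamma_n)^{-1/2})$ only if... — actually this is the subtle point and I address it next.

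The main obstacle, and where I would spend the most care, is showing the nuisance-error remainder is $o_p((n\gamma_n)^{-1/2})$ rather than merely $o_p(1)$, given that Assumption \ref{assumption mcar plm (a)}(b) only assumes $L_2$-consistency of $\widehat f^{(-k)}$ and $\ell_2$-consistency of $\widehat\beta^{(-k)}_{plm}$, with \emph{no} rate. The resolution must be that the remainder is a genuine product of a centered empirical-process term (of size $O_p((n\gamma_n)^{-1/2})$ by a conditional CLT/Chebyshev argument, exploiting independence of $\widehat\beta^{(-k)}_{plm},\widehat f^{(-k)}$ from fold $\mathcal I_k$) and a nuisance-error factor (of size $o_p(1)$), so their product is $o_p((n\gamma_n)^{-1/2})$ — this is exactly the mechanism by which cross-fitting lets us get away with only consistency. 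I would therefore (a) verify that every bias term truly vanishes at the population truth (the orthogonality check), leaving only such products; (b) bound the centered stochastic factor by computing its conditional variance given $\mathcal I_{-k}$ and $\{\Gamma_i\}_{i=1}^n$, using Assumption \ref{assumption mcar plm (a)}(a) for the fourth-moment bounds so the conditional variance is $O_p((n\gamma_n)^{-1})$ uniformly; and (c) for the $\bar X_0$-imputation term, note that $\bar X_0 - \E[X]=O_p(n_E^{-1/2})=O_p(n^{-1/2})$ which is negligible relative to $(n\gamma_n)^{-1/2}$ since $\gamma_n\to0$, and that $\widehat\beta^{(-k)}_{plm}$ multiplying it is $O_p(1)$. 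Finally, the variance-estimator consistency $\widehat\sigma_{plm}^2=\sigma_{plm}^2(1+o_p(1))$ follows by a law-of-large-numbers argument term by term — each sample average converges to its population analogue (using the fourth-moment bounds for the squared terms, $\Gamma\perp Z$ for the cross terms involving $\bar X_0^\top\widehat\beta_{plm}$ and $\widehat f(Z)$, and consistency of the nuisances), and $\widehat\theta_{plm}^2\to\theta^2$; matching the pieces to $\gamma_n^{-1}\Var(Y)+(1-\gamma_n^{-1})\Var(\mu^*(W))$ via $\E[\Gamma\mid W,Y]=\gamma_n$ gives the stated form. Slutsky then yields the asymptotic normality claim.
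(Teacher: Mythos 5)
Your overall strategy is the paper's: center around the oracle average $\widetilde\theta_{plm}=n^{-1}\sum_i\{\mu^*(W_i)+\tfrac{\Gamma_i}{\gamma_n}\epsilon_i\}$, prove a Lindeberg--Feller CLT for it with $\sigma_{plm}^2\asymp\gamma_n^{-1}$, and show the remainder is $o_p((n\gamma_n)^{-1/2})$ by exploiting exactly the product structure you describe — each remainder piece is a conditionally centered empirical average (given $\mathcal I_{-k}$ and the $\Gamma_i$'s, using $\Gamma\perp Z$ and MCAR) whose conditional variance carries a factor $\|\widehat\beta^{(-k)}_{plm}-\beta^*_{plm}\|_2^2$ or $\E_Z[(\widehat f^{(-k)}-f^*)^2]$, so mere consistency of the nuisances suffices. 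This matches the paper's $T_1,T_2,T_3$ decomposition and its variance-estimator argument.

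There is, however, one concrete misstep in your item (c). The $\bar X_0$ term that actually survives the centering is $n^{-1}\sum_{i\in\mathcal I_k}(1-\Gamma_i)(\bar X_0-\E[X])^\top(\widehat\beta^{(-k)}_{plm}-\beta^*_{plm})$: because $\bar X_0$ is exactly the sample mean of the external covariates, the contribution $\sum_i(1-\Gamma_i)\bar X_0^\top\beta^*_{plm}=\sum_i(1-\Gamma_i)X_i^\top\beta^*_{plm}$ cancels identically against the oracle term, so $\bar X_0-\E[X]$ is paired with the nuisance \emph{error}, which is $o_p(1)$, not with $\widehat\beta^{(-k)}_{plm}$, which is $O_p(1)$. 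Your stated bound — ``$O_p(n^{-1/2})$ times $O_p(1)$, negligible relative to $(n\gamma_n)^{-1/2}$ since $\gamma_n\to0$'' — fails in the regime the theorem still covers where $\gamma_n$ is bounded away from zero (only $n\gamma_n\to\infty$ is assumed): there $O_p(n^{-1/2})$ is the same order as the leading term and would not be negligible. The correct argument, which is the one the paper gives and which fits your own product mechanism, conditions on $\mathcal D_{\mathcal I_{-k}}$ and $\Gamma_{1:n}$ (so that $(1-\Gamma_i)X_i$ is conditionally independent of $\widehat\beta^{(-k)}_{plm}$), computes the conditional variance as $O_p\bigl(n^{-1}\|\widehat\beta^{(-k)}_{plm}-\beta^*_{plm}\|_2^2\bigr)=o_p(n^{-1})$, and concludes the term is $o_p(n^{-1/2})=o_p((n\gamma_n)^{-1/2})$ with no condition on $\gamma_n$ beyond $\gamma_n\le 1$. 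With that correction your proof goes through as the paper's does.
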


To compare efficiency, let \(\widehat\theta\) denote the linear-regression-based estimator in \eqref{definition k-th fold ss outcome estimator MCAR}, which uses unlabeled summary statistics of \(W_i=(X_i^\T,Z_i^\T)^\T\), and let \(\widetilde\sigma_n^2\) denote its asymptotic variance given in Theorem~\ref{theorem for the asymptotics under MCAR}. Let \(\widetilde\sigma_{\rm oracle}^2\) denote the oracle efficiency bound established by \citet{zhang2022high} for the ideal semi-supervised setting in which all unlabeled covariates \(W_i\) are individually observed. By construction, $\widetilde\sigma_{\rm oracle}^2 \leq \sigma_{plm}^2 \leq \widetilde\sigma_n^2,$ indicating that additional individualized information can improve efficiency. The partially-linear-model-based estimator \(\widehat\theta_{plm}\) improves upon \(\widehat\theta\) when \(f^*(Z)\) is nonlinear, while access to additional individualized covariates can further improve efficiency when the outcome model is also nonlinear in \(X\). When all individualized covariates are available and \(W_i=Z_i\), our method reduces to the nonparametric-regression-based estimator of \citet{zhang2022high}. In summary, summary statistics are sufficient for covariates that enter the model linearly, whereas individualized covariates are most valuable for capturing substantial nonlinear effects when the additional cost of collecting them is acceptable.

\section{\label{sec: mar} Correcting selection bias under covariate-dependent labeling}

\subsection{Problem setup}

In this section, we consider the mean estimation problem when the labeling mechanism depends on observed covariates. While the methods and results in Section~\ref{sec: mcar} rely on the missing completely at random condition, such an assumption may be restrictive in many practical applications. We therefore allow the labeling mechanism to depend on the covariates and assume the following missing at random condition instead.

\begin{assumption}[Missing at random]
 $\Gamma \perp Y \mid X$ and \(\gamma_n(X) := \P(\Gamma = 1 \mid X)>0\) almost surely.  \label{assumption mar}
\end{assumption}

Assumption~\ref{assumption mar} essentially requires the measurement of all factors that jointly influence the outcome \(Y\) and the labeling indicator \(\Gamma\), commonly referred to as ``no unmeasured confounding'' or ``ignorability'' in the causal inference and missing data literature \citep{rubin1976inference, rosenbaum1983central, imai2015robust}. Under this condition, we allow for potential dependence between \(\Gamma\) and \((X, Y)\). Thus, unlike in Section~\ref{sec: mcar}, the propensity score function \(\gamma_n(X)\) is no longer a constant. Departing from the standard missing data literature, which typically enforces a positivity condition \(\gamma_n(X) > c\) almost surely with a constant \(c > 0\), we adopt a ``decaying missing at random'' framework of \citet{zhang2023semi,zhang2023double,testa2025semiparametric}. Here, we only require \(\gamma_n(X) > 0\) and allow for \(\gamma_n = \P(\Gamma = 1) = \E\{\gamma_n(X)\} \to 0\) as \(n \to \infty\), thereby accommodating settings where the unlabeled data size vastly exceeds that of the labeled data.

When labeling depends on covariates, the mean outcomes differ across labeled and unlabeled samples. We define the generalizability and transportability targets as
\[
\theta_g = \E(Y),\quad \theta_t = \E(Y \mid \Gamma=0),
\]
where \(\theta_g\) represents the mean outcome over the entire population, and \(\theta_t\) denotes the mean outcome for the unlabeled data. Without Assumption~\ref{assumption mcar}, it is likely that \(\theta_g \neq \E(Y \mid \Gamma=1) \neq \theta_t\). Therefore, the sample mean over the labeled sample, $\bar Y_1$, is typically a biased estimate for both $\theta_g$ and $\theta_t$. The goal therefore shifts from improving efficiency of consistent estimators based on labeled data to correcting the bias induced by covariate-dependent labeling. In Section~\ref{sec: mar method}, we introduce estimators for \(\theta_g\) and \(\theta_t\) by leveraging unlabeled summary statistics, with their theoretical properties detailed in Section~\ref{sec: mar theory}. Extensions for estimating causal parameters are further discussed in Section~\ref{sec: causal}.

\subsection{Methodology}\label{sec: mar method}

Since the outcome of interest \(Y_i\) is unavailable among the unlabeled data, directly estimating \(\theta_g\) and \(\theta_t\) via empirical averages is not feasible; this constitutes a typical missing data problem. To address this, we define \(m(X) := \E(Y \mid X)\) as the true outcome regression function. Let \(m^*(\cdot)\) and \(\gamma_n^*(\cdot)\) represent the working models for the outcome regression and propensity score, respectively. These can be viewed as approximations of the nuisance function at the population level. One common approach for robust estimation in this context is to utilize the doubly robust representations \citep{bang2005doubly,funk2011doubly}: as long as either $m^*(\cdot)=m(\cdot)$ or $\gamma_n^*(\cdot)=\gamma_n(\cdot)$,
\begin{align*}
\theta_g&=\E\left[m^*(X)+\frac{\Gamma}{\gamma_n^*(X)}\left\{Y-m^*(X)\right\}\right],\\
\theta_t&=\E\left[\frac{1-\Gamma}{1-\gamma_n}m^*(X)+\frac{\Gamma\left\{1-\gamma_n^*(X)\right\}}{\gamma_n^*(X)(1-\gamma_n)}\left\{Y-m^*(X)\right\}\right].
\end{align*}
Let $\zeta_0:=\E(X\mid\Gamma=0)$. Under a linear outcome regression model, namely $m^*(X)=X^\T\beta_{OR}^*$ for some coefficient vector $\beta_{OR}^*\in\R^d$, the above representations can be rewritten as
\begin{align*}
\theta_g&=(1-\gamma_n)\zeta_0^\T\beta^*_{OR}+\E\left\{\Gamma X^\T\beta_{OR}^*+\frac{\Gamma}{\gamma_n^*(X)}\left(Y-X^\T\beta_{OR}^*\right)\right\},\\
\theta_t&=\E\left[\zeta_0^\T\beta^*_{OR}+\frac{\Gamma\left\{1-\gamma_n^*(X)\right\}}{\gamma_n^*(X)(1-\gamma_n)}\left(Y-X^\T\beta_{OR}^*\right)\right].
\end{align*}
Therefore, given \(m^*(\cdot)\) and \(\gamma_n^*(\cdot)\), we can identify \(\theta_g\) and \(\theta_t\) using the individualized data \((\Gamma X, \Gamma Y)\) from the labeled sample and the mean covariates in the unlabeled data \(\zeta_0\), without requiring access to individual-level unlabeled covariates. The use of a linear outcome working model follows from linearity of expectation: since the expectation operator commutes with linear maps, we have \(\E\{m^*(X)\mid\Gamma=0\}=\zeta_0^\T\beta^*_{\mathrm{OR}}\), enabling identification through the mean \(\zeta_0=\E(X\mid\Gamma=0)\).

Estimation based on doubly robust representations typically yields consistent estimates even when one nuisance model is misspecified. However, valid inference often requires both models to be correctly specified, particularly in high-dimensional settings \citep{farrell2015robust,chernozhukov2017double}. To address the bias induced by model misspecification, \citet{tan2020model,smucler2019unifying,ning2020robust,dukes2021inference} develop tailored loss functions for nuisance estimation, enabling robust inference as long as at least one of the two working models is correctly specified. We extend this idea to the constrained semi-supervised setting and show that constructing such nuisance estimators does not require individualized unlabeled covariates. As in Section~\ref{sec: mcar lasso method}, we implement a cross-fitting procedure after imputing unlabeled covariates by their sample mean \(\bar X_0\). Our proposed estimators for \(\theta_g\) and \(\theta_t\) are presented below.

Step 1: Divide the index set $[n]$ into $K$ disjoint subsets $\mathcal{I}_1, \ldots, \mathcal{I}_K$ with equal sizes such that $n_k := |\mathcal{I}_k| = n/K$ for $k \in [K]$. Let $\widehat \gamma_{k} = n_k^{-1}\sum_{i\in \mathcal{I}_k} \Gamma_i$, $\mathcal{I}_{-k} = [n] \setminus \mathcal{I}_k$, and separate $\mathcal{I}_{-k}$ into two disjoint subsets $\mathcal{I}_{-k,\alpha}, \mathcal{I}_{-k,\beta}$ with equal sizes $M = |\mathcal{I}_{-k,\alpha}| = |\mathcal{I}_{-k,\beta}|$.

Step 2: For each $k \in [K]$, construct the propensity score estimator as
\begin{align}
\widehat \alpha^{(-k)}_{PS} = \argmin_{\alpha \in \R^d} \left[ M^{-1}\sum_{i \in \mathcal{I}_{-k,\alpha}} \{(1-\Gamma_i)\bar{X}_0^\T \alpha + \Gamma_i \exp(-X_i^\T \alpha)\} + \lambda_\alpha \|\alpha\|_1 \right]. \label{mar PS estimator}
\end{align}

Step 3: For each $k \in [K]$, construct the outcome regression estimator as
\begin{align}
\widehat \beta^{(-k)}_{OR} = \argmin_{\beta \in \R^d} \left\{M^{-1}\sum_{i \in \mathcal{I}_{-k,\beta}} \Gamma_i \exp(-X_i^\T \widehat \alpha_{PS}^{(-k)})(Y_i - X_i^\T \beta)^2 + \lambda_\beta \|\beta\|_1\right\}. \label{mar OR estimator}
\end{align}

Step 4: Let $g(t) = e^t/(1+e^t)$ be the logistic function and $\widehat{\varepsilon}_i^{(-k)}=Y_i - X_i^\T\widehat\beta_{OR}^{(-k)}$. We propose
\begin{align}
&\widehat \theta_g = n^{-1}\sum_{k=1}^K\sum_{i\in \mathcal{I}_k} \left[\{\Gamma_iX_i+(1-\Gamma_i)\bar X_0\}^\T\widehat\beta_{OR}^{(-k)}  +  \frac{\Gamma_i\widehat{\varepsilon}_i^{(-k)}}{g\p{X_i^\T \widehat \alpha^{(-k)}_{PS}}}\right],\label{mar generalizability estimator}\\
&\widehat \theta_t = n^{-1}\sum_{k=1}^K\sum_{i\in \mathcal{I}_k}\left\{\frac{1- \Gamma_i}{1-\widehat \gamma_k}\bar X_0^\T\widehat\beta^{(-k)}_{OR}  + \frac{\Gamma_i\exp\p{-X_i^\T \widehat \alpha^{(-k)}_{PS}}\widehat{\varepsilon}_i^{(-k)}}{1-\widehat\gamma_k}\right\}. \label{mar transportability estimator}
\end{align}
When $\bar \Xi_0$ is further observed, denote $b^{(-k)}=\widehat \beta^{(-k),\T}_{OR}\bar \Xi_0 \widehat \beta^{(-k)}_{OR}$ and
\begin{align*}
\widehat \sigma^2_g &= n^{-1}\sum_{k=1}^K \sum_{i \in \mathcal{I}_{k}}\p{1-\Gamma_i}b^{(-k)} - \widehat \theta_g^2 + n^{-1}\sum_{k=1}^K \sum_{i \in \mathcal{I}_{k}} \left\{\Gamma_i X_i^\T\widehat\beta_{OR}^{(-k)} + \frac{ \Gamma_i\widehat{\varepsilon}_i^{(-k)}}{g\p{X_i^\T \widehat \alpha^{(-k)}_{PS}}}\right\}^2,\\
\widehat \sigma_t^2 &= \sum_{k=1}^K\sum_{i\in \mathcal{I}_k}\frac{\p{1-\Gamma_i}\p{b^{(-k)} + \widehat \theta_t^2 - 2\bar X_0^\T\widehat\beta_{OR}^{(-k)}\widehat \theta_t}}{n\p{1-\widehat \gamma_k}^2} + \sum_{k=1}^K\sum_{i\in \mathcal{I}_k}\frac{\Gamma_i\exp\p{-2X_i^\T \widehat \alpha^{(-k)}_{PS}}\p{\widehat{\varepsilon}_i^{(-k)}}^2}{n\p{1-\widehat \gamma_k}^2}.
\end{align*}

The proposed algorithm differs from existing methods \citep{smucler2019unifying,ning2020robust,tan2020model,dukes2021inference}. We apply cross-fitting only within the labeled sample and do not perform cross-fitting on the unlabeled data. As a result, the nuisance estimators \eqref{mar PS estimator}--\eqref{mar OR estimator} and the final doubly robust estimators \eqref{mar generalizability estimator}--\eqref{mar transportability estimator} depend only on unlabeled covariate summaries, thereby eliminating the need for individual-level unlabeled covariates.

\begin{remark}[Required summary statistics under covariate-dependent labeling]\label{remark: mar gen use less info}
Unlike the setting discussed in Remark~\ref{remark:data-MCAR}, constructing robust estimators under covariate-dependent labeling requires access to the full unlabeled sample mean vector \(\bar X_0\).
This contrasts with Section~\ref{sec: mcar}, where the propensity score \(\gamma_n(X) = \P(\Gamma = 1 \mid X) = \P(\Gamma = 1)\) is constant under Assumption~\ref{assumption mcar}, allowing estimation without covariate information. In this setting, however, estimation of the propensity score function depends on \(\bar X_0\), as in \eqref{mar PS estimator}.

The nuisance estimators in \eqref{mar PS estimator}--\eqref{mar OR estimator} can also be constructed after variable screening. If the outcome regression model is correctly specified and all relevant features are retained after screening, estimation accuracy can be preserved while requiring only sample means of the selected features. However, this strategy is less robust, as it relies strongly on correct specification of the outcome regression model. In contrast, the proposed approach is more flexible and remains valid under possible misspecification of the outcome regression model, provided that the propensity score model is correctly specified.

Efficient statistical inference under covariate-dependent labeling, with asymptotic normality established in Theorem~\ref{theorem mar generalizability Asymptotics body}, requires access to the sub-matrix \(\bar{\Xi}_{0,\bar S} = \sum_{i=1}^n(1 - \Gamma_i)X_{i,\bar S}X_{i,\bar S}^\T/\sum_{i=1}^n(1 - \Gamma_i)\), where \(\bar S = \cup_{k=1}^K\mathrm{supp}(\widehat \beta^{(-k)}_{OR})\). In some practical settings, such second-moment information may not be fully available, and only marginal second moments for individual features, corresponding to the diagonal entries of \(\bar{\Xi}_0\), may be observed. Let \(\bar{\Xi}_{0,\bar S}^{\mathrm{diag}}\) denote the diagonal matrix formed by the diagonal elements of \(\bar{\Xi}_{0,\bar S}\), with off-diagonal entries set to zero. In this case, confidence intervals can still be constructed by replacing the quadratic term \(\widehat \beta_{OR}^{(-k),\T} \bar{\Xi}_0 \widehat \beta_{OR}^{(-k)}\) in the definitions of \(\widehat\sigma_g^2\) and \(\widehat\sigma_t^2\) with the conservative quantity \(\|\widehat \beta_{OR}^{(-k)}\|_0 \widehat \beta_{OR,\bar S}^{(-k),\T} \bar{\Xi}_{0,\bar S}^{\rm diag} \widehat \beta_{OR,\bar S}^{(-k)}\). The resulting confidence intervals are inflated by a factor of order \(|\bar S|^{1/2}\) while remaining asymptotically valid.
\end{remark}

\subsection{\label{sec: mar theory} Asymptotic theory}

Define the target nuisance parameters: $\alpha^*_{PS} = \argmin_{\alpha \in \R^d} \E\{(1-\Gamma)X^\T \alpha + \Gamma \exp(-X^\T \alpha)\}$ and $\beta^*_{OR} = \argmin_{\beta \in \R^d} \E\{\Gamma \exp(-X^\T \alpha^*_{PS})(Y - X^\T \beta)^2\}$.
\begin{assumption} \label{assumption mar nuisance (a)}
Suppose that there exist constants $c_0, k_0\in (0,1)$ and $\sigma,\kappa_l,\sigma_w,\delta_w>0$ such that the following conditions hold: (a) The propensity score model satisfies $k_0(1-\gamma_n)/\gamma_n \leq \{1-g(X^\T \alpha^*_{PS})\}/g(X^\T \alpha^*_{PS}) \leq k_0^{-1}(1-\gamma_n)/\gamma_n$ almost surely, $\P(\Gamma=0)\geq c_0$, and $\E\{\gamma_n^q(X)\} \leq \nu \gamma_n^q$ for some $q>1$ and $\nu>0$. (b) For each $j\in\{0,1\}$, and conditional on $\Gamma=j$, $X$ is a sub-Gaussian random vector and $X^\T\beta_{OR}^*$ is a sub-Gaussian random variable, both with parameter $\sigma$. In addition, $\inf_{v \in \R^d,\;\|v\|_2=1}\E\{(X^\T v)^2\mid \Gamma=1\} \geq \kappa_l$. (c) The residual $w_{OR} = Y - X^\T \beta^*_{OR}$ is a sub-Gaussian random variable with parameter $\sigma_w$, $\E(w_{OR}^8\mid \Gamma=1) \leq \sigma^8_w$, and $\E(w_{OR}^2\mid \Gamma=1) \geq \delta_w$.
\end{assumption}

Assumption~\ref{assumption mar nuisance (a)}(a) extends the standard overlap condition to accommodate potential model misspecification and decaying labeling probabilities under missing at random mechanisms. When the propensity score function is correctly specified as \(\gamma_n(X) = g(X^\T \alpha_{PS}^*)\) and \(\gamma_n \asymp 1\), this assumption simplifies to the usual overlap condition, \(k_0 \leq \gamma_n(X) \leq 1-k_0\) for some $k_0 \in (0,1)$ \citep{crump2009dealing, rothe2017robust, khan2010irregular}. Assumptions~\ref{assumption mar nuisance (a)}(b)--(c) impose standard regularity conditions on the distributions of the covariate vector and the residual.

The following theorem demonstrates the asymptotic behavior of $\widehat \theta_g$ and $\widehat \theta_t$. 

\begin{theorem}\label{theorem mar generalizability Asymptotics body}
Suppose that Assumptions~\ref{assumption mar} and~\ref{assumption mar nuisance (a)} hold. Choose $\lambda_\alpha \asymp \lambda_\beta \asymp \{\log d/(n\gamma_n)\}^{1/2}$. Let $n\gamma_n \gg  (\log n)^2\log d$, $\|\alpha^*_{PS}\|_0 \|\beta^*_{OR}\|_0 = o(n\gamma_n/\{\log n(\log d)^2\})$, and either of the following conditions hold: (1) (Correct outcome regression model) $\mu(X) = X^\T \beta^*_{OR}$ or (2) (Correct propensity score model) $\gamma_n(X) = g(X^\T \alpha^*_{PS})$ and $\|\alpha^*_{PS}\|_0 = o((n\gamma_n)^{1/2}/\log d)$. Then as $n, d \rightarrow \infty$,

(a) $\widehat \theta_g -\theta_g = O_p(n^{-1/2}\gamma_n^{-1/2})$, $\widehat \sigma_g^2 = \sigma_g^2\{1 + o_p(1)\}$, and $\widehat\sigma_g^{-1}n^{1/2}(\widehat \theta_g - \theta_g)\rightarrow_d \mc{N}(0,1)$, where $\sigma_g^2 = \var(X^\T \beta^*_{OR} + \Gamma w_{OR}/g(X^\T\alpha^*_{PS}))$.

(b) $\widehat \theta_t -\theta_t = O_p(n^{-1/2}\gamma_n^{-1/2})$, $\widehat \sigma_t^2 = \sigma_t^2\{1 + o_p(1)\}$, and $\widehat\sigma_t^{-1}n^{1/2}(\widehat \theta_t - \theta_t)\rightarrow_d \mc{N}(0,1)$, where $\sigma_t^2 =\E([(1-\Gamma)(X^\T \beta^*_{OR}-\theta_t)/(1-\gamma_n) + \Gamma\{1-g(X^\T \alpha^*_{PS})\}w_{OR}/\{(1-\gamma_n)g(X^\T \alpha^*_{PS})\}]^2)$.
\end{theorem}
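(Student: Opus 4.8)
The plan is to decompose each estimator into an oracle (infeasible) term plus error terms driven by nuisance estimation, then show the nuisance errors are asymptotically negligible at rate $o_p((n\gamma_n)^{-1/2})$ under either condition (1) or (2), and finally invoke a CLT for the oracle term with triangular-array variances handled by the decaying-$\gamma_n$ framework. I will treat $\widehat\theta_g$ in detail; $\widehat\theta_t$ is analogous with the extra $1/(1-\widehat\gamma_k)$ weights, which are $1+o_p(1)$ since $\widehat\gamma_k\to0$ and $\P(\Gamma=0)\ge c_0$.

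First I would set up notation for the doubly robust summand. Write the oracle score $\psi_i = X_i^\top\beta^*_{OR}+\frac{\Gamma_i}{g(X_i^\top\alpha^*_{PS})}(Y_i-X_i^\top\beta^*_{OR})$ and note $\E[\psi] = \theta_g$ under (1) or (2) by the doubly robust identity stated before the algorithm, together with $\theta_g=(1-\gamma_n)\E[X\mid\Gamma=0]^\top\beta^*_{OR}+\E[\Gamma X^\top\beta^*_{OR}+\ldots]$ which shows the imputed-mean piece $n^{-1}\sum(1-\Gamma_i)\bar X_0^\top\widehat\beta_{OR}^{(-k)}$ is the right object. Then $\widehat\theta_g-\theta_g$ splits, fold by fold, into (i) $n^{-1}\sum_k\sum_{i\in\mathcal I_k}(\psi_i-\E\psi)$, the oracle average; (ii) a term linear in $\widehat\beta_{OR}^{(-k)}-\beta^*_{OR}$ coming from the OR plug-in on both the primary and the imputed-external parts; (iii) a term linear in the PS estimation error through $1/g(X_i^\top\widehat\alpha^{(-k)}_{PS})-1/g(X_i^\top\alpha^*_{PS})$; and (iv) the cross/product term bilinear in both errors. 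Crucially, because cross-fitting is applied only on the primary data and $\widehat\beta^{(-k)}_{OR},\widehat\alpha^{(-k)}_{PS}$ depend on $\mathcal I_{-k}$ only through primary observations, conditionally on all $\{\Gamma_j\}_{j=1}^n$ the nuisance estimates are independent of the external summary $\bar X_0$; this is exactly the observation in Remark \ref{remark:sample-splitting} and it lets me evaluate conditional expectations of the imputed-external pieces cleanly, replacing $\bar X_0$ by $\E[X\mid\Gamma=0]$ up to an $O_p((n_E)^{-1/2})=o_p((n\gamma_n)^{-1/2})$ fluctuation since $n_E\gg n\gamma_n$.

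Next I would bound terms (ii)–(iv). For (ii), the cross-fitting makes the relevant cross term mean-zero conditionally, so its size is controlled by $\|\widehat\beta^{(-k)}_{OR}-\beta^*_{OR}\|_1\cdot(\text{empirical-process remainder})$ plus, under a misspecified PS model, a bias term that the specialized loss \eqref{mar PS estimator} is designed to kill: the KKT/first-order conditions of $\alpha^*_{PS}$ give $\E[(1-\Gamma)X-\Gamma\exp(-X^\top\alpha^*_{PS})X]=0$, i.e. $\E[\Gamma e^{-X^\top\alpha^*_{PS}}X]=\E[(1-\Gamma)X]=(1-\gamma_n)\E[X\mid\Gamma=0]$, which is precisely the Neyman-orthogonality that makes the product of the two nuisance errors the only first-order contribution. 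I would invoke standard Lasso rates under Assumption \ref{assumption mar nuisance (a)} — restricted eigenvalue from the sub-Gaussianity and $\kappa_l$, giving $\|\widehat\alpha^{(-k)}_{PS}-\alpha^*_{PS}\|_1=O_p(\|\alpha^*_{PS}\|_0\sqrt{\log d/(n\gamma_n)})$ and similarly for $\widehat\beta^{(-k)}_{OR}$ (the weighting $\exp(-X^\top\widehat\alpha^{(-k)}_{PS})$ is bounded above/below by $k_0^{-1},k_0$ times $\gamma_n/(1-\gamma_n)\cdot(\ldots)$ by Assumption \ref{assumption mar nuisance (a)}(a), so it does not disturb the rate) — and feed these into (iv): the product term is $O_p(\|\alpha^*_{PS}\|_0\|\beta^*_{OR}\|_0\log d/(n\gamma_n))$, which is $o_p((n\gamma_n)^{-1/2})$ exactly under the stated sparsity product $\|\alpha^*_{PS}\|_0\|\beta^*_{OR}\|_0=o(n\gamma_n/(\log n(\log d)^2))$. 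Under the correct-OR branch, term (iii) is genuinely present (PS need not be right), so I need the additional rate-type control $\|\widehat\alpha^{(-k)}_{PS}-\alpha^*_{PS}\|_2=o_p(1)$ plus the empirical-process bound, but its expectation vanishes because $\E[Y-X^\top\beta^*_{OR}\mid X]=0$; under the correct-PS branch the sharper condition $\|\alpha^*_{PS}\|_0=o(\sqrt{n\gamma_n}/\log d)$ is what upgrades the PS-only term from $o_p(1)$ to $o_p((n\gamma_n)^{-1/2})$.

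Finally, for the oracle average in (i) I would apply a Lindeberg/Lyapunov CLT to the triangular array $\{\psi_i-\E\psi\}$, whose variance is $\sigma_g^2$; the moment conditions in Assumption \ref{assumption mar nuisance (a)}(b)–(c) (in particular $\E[w_{OR}^8\mid\Gamma=1]\le\sigma_w^8$ and the overlap lower bound on $g$) provide the Lyapunov ratio going to zero even as $\gamma_n\to0$, after rescaling by $\sqrt n$ (equivalently noting $n\sigma_g^2\asymp n/\gamma_n\cdot(\ldots)$ so the effective sample size is $n\gamma_n$). For consistency of $\widehat\sigma_g^2$ I would show each empirical average in its definition converges to its population analogue: the primary-data quadratic pieces by a law of large numbers with the same moment bounds, and the external piece $n^{-1}\sum(1-\Gamma_i)\widehat\beta^{(-k),\top}_{OR}\bar\Xi_0\widehat\beta^{(-k)}_{OR}$ by replacing $\widehat\beta^{(-k)}_{OR}$ with $\beta^*_{OR}$ (error controlled by the $\ell_1$-rate and sparsity, as in Remark \ref{remark: mar gen use less info}) and $\bar\Xi_0$ with $\E[XX^\top\mid\Gamma=0]$; combining gives $\widehat\sigma_g^2=\sigma_g^2(1+o_p(1))$, and Slutsky finishes the studentized statement. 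I expect the \textbf{main obstacle} to be the bilinear/product remainder (iv) together with correctly tracking the interplay between the decaying $\gamma_n$, the weighting $\exp(-X^\top\widehat\alpha^{(-k)}_{PS})$ in the OR loss, and the conditional-independence argument that justifies swapping $\bar X_0$ for $\E[X\mid\Gamma=0]$ — i.e. verifying that all the empirical-process remainders really are $o_p((n\gamma_n)^{-1/2})$ uniformly over the fold-wise nuisance estimates under the precise sparsity-product budget, rather than a crude union bound that would cost an extra $\log$ factor.
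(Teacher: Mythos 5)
There is a genuine gap at the heart of your argument: you assert that ``conditionally on all $\{\Gamma_j\}_{j=1}^n$ the nuisance estimates are independent of the external summary $\bar X_0$,'' and you lean on this to evaluate conditional expectations of the imputed-external pieces and to make the cross terms mean-zero. That claim is true in the MCAR setting of Section \ref{sec: mcar} (Remark \ref{remark:sample-splitting}), but it is \emph{false} here: the propensity-score loss \eqref{mar PS estimator} contains the term $(1-\Gamma_i)\bar X_0^\top\alpha$, so $\widehat\alpha^{(-k)}_{PS}$ is an explicit function of $\bar X_0$ (which aggregates external covariates from \emph{all} folds, including $\mathcal I_k$), and therefore $\widehat\beta^{(-k)}_{OR}$ inherits this dependence through the weights $\exp(-X_i^\top\widehat\alpha^{(-k)}_{PS})$. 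This is exactly the difficulty the paper flags in Remark \ref{remark:challenge of cross-fitting}. Without the independence, your plan to ``replace $\bar X_0$ by $\E[X\mid\Gamma=0]$ up to an $O_p(n_E^{-1/2})$ fluctuation'' and to treat the OR cross term as conditionally mean-zero does not go through as stated.

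The paper's fix, which your proposal is missing, is to introduce the oracle outcome-regression estimator $\widetilde\beta^{(-k)}_{OR}$ (same loss but with the true $\alpha^*_{PS}$ in the weights), which \emph{is} conditionally independent of $\bar X_0$; the terms involving $\widetilde\beta^{(-k)}_{OR}-\beta^*_{OR}$ are then handled by conditional second moments exactly as you envision, while the residual $\Delta=\widehat\beta^{(-k)}_{OR}-\widetilde\beta^{(-k)}_{OR}$, which does depend on $\bar X_0$, contributes linearly and is controlled uniformly via its $\ell_1$-norm, $\|\Delta\|_1=O_p(\sqrt{\|\alpha^*_{PS}\|_0\|\beta^*_{OR}\|_0\log d/(n\gamma_n)})$ (Proposition \ref{proposition mar nuisance consistency body}(c) / Lemma \ref{lemma consistency of betahat and betatilde for product sparsity}), paired with an $\ell_\infty$ bound on the relevant empirical-process vector. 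This is where the product-sparsity condition actually enters the mean-estimation error, not only through the bilinear PS-times-OR remainder you identify. The rest of your outline --- the KKT/orthogonality identity for $\alpha^*_{PS}$, the Lyapunov CLT for the oracle score with $\sigma_g^2\asymp\gamma_n^{-1}$, the extra $\|\alpha^*_{PS}\|_0=o(\sqrt{n\gamma_n}/\log d)$ condition to handle the quadratic PS term when only the PS model is correct, and the variance-estimator consistency --- matches the paper's route.
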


Theorem~\ref{theorem mar generalizability Asymptotics body} shows that valid \emph{inference} for both \(\theta_g\) and \(\theta_t\) can be achieved as long as either the propensity score model or the outcome regression model is correctly specified. In particular, correct specification of only one of the two nuisance models is sufficient, highlighting the \emph{model doubly robust} property of the proposed estimator. Moreover, when higher-order moment information of the unlabeled covariates is available, more flexible models based on linear representations after suitable basis transformations can be used to better approximate the nuisance components, which may further improve robustness.

When both nuisance models are correctly specified, the asymptotic variances match those in \citet{bang2005doubly} and \citet{zhang2023double}, aligning with semiparametric efficiency bounds derived in settings where individualized covariates are observed \citep{tsiatis2007semiparametric, muller2012efficient, kennedy2024semiparametric}. Our results show that this efficiency bound can still be attained using only unlabeled covariate summaries, provided the nuisance models are correctly specified.

\begin{remark}[Technical challenges associated with the specialized cross-fitting]\label{remark:challenge of cross-fitting}
To achieve robust inference, we construct an outcome regression estimator \(\widehat \beta^{(-k)}_{OR}\) that incorporates weighting based on an initial propensity score estimator \(\widehat \alpha^{(-k)}_{PS}\). Unlike in settings with covariate-independent labeling, estimating the propensity score as a function of covariates is essential to adjust for the confounding bias in settings where labeling depends on covariates. However, since the construction of \(\widehat \alpha^{(-k)}_{PS}\) involves unlabeled covariates, the conditional independence between the outcome regression estimator \(\widehat \beta^{(-k)}_{OR}\) and the unlabeled summary $\bar X_0$, which holds under covariate-independent labeling as discussed in Remark~\ref{remark:sample-splitting}, no longer holds.

To resolve this dependency, we define an oracle estimator \(\widetilde \beta^{(-k)}_{OR}\) as in \eqref{mar OR estimator}, but replacing \(\widehat \alpha^{(-k)}_{PS}\) with the true parameter \(\alpha_{PS}^*\). Unlike \(\widehat \beta^{(-k)}_{OR}\), the oracle estimator \(\widetilde \beta^{(-k)}_{OR}\) does not depend on the unlabeled sample, which allows us to exploit its conditional independence with \(\bar X_0\) to control the effect of the error \(\widetilde \beta^{(-k)}_{OR} - \beta_{OR}^*\) on the final mean estimator. The remaining estimation error \(\Delta = \widehat \beta^{(-k)}_{OR} - \widetilde \beta^{(-k)}_{OR}\), which depends on \(\bar X_0\), is controlled using uniform bounds based on its \(\ell_1\)-norm, since \(\Delta\) enters the final estimator in a linear form. Lemma~\ref{lemma consistency of betahat and betatilde for product sparsity} shows that \(\|\Delta\|_1 = O_p(\{\|\alpha_{PS}^*\|_0 \|\beta_{OR}^*\|_0 \log d / (n\gamma_n)\}^{1/2})\), indicating that the error remains small provided that either \(\|\alpha_{PS}^*\|_0\) or \(\|\beta_{OR}^*\|_0\) is sufficiently small.

Building on the above analysis, we demonstrate that consistent and asymptotically normal estimation for the mean is achieved under the product sparsity condition \(\|\alpha_{PS}^*\|_0 \|\beta_{OR}^*\|_0 = o(n\gamma_n (\log d)^{-2}(\log n)^{-1} )\) when both nuisance models are correctly specified. With \(n\gamma_n\) interpreted as the effective sample size, this result aligns with existing doubly robust methods \citep{chernozhukov2017double,smucler2019unifying,zhang2023double,kallus2025role}, differing only by an additional \(\log n\) term. For scenarios where the outcome regression model is misspecified, an additional condition \(\|\alpha_{PS}^*\|_0 = o((n\gamma_n)^{1/2}/\log d)\) is required, consistent with \citet{smucler2019unifying}. Importantly, unlike existing approaches that apply cross-fitting across the entire dataset, our method confines this step to the labeled data, bypassing the need for individualized covariates from unlabeled data. This adjustment improves practical applicability while maintaining robustness guarantees.
\end{remark}

\begin{remark}[Technical challenges associated with the decaying probability]
A key feature of our setting is that $\|\alpha_{PS}^*\|_2 \to \infty$ as $\gamma_n \to 0$, since the intercept term diverges to $-\infty$ under Assumption~\ref{assumption mar nuisance (a)}. This leads to a setting that does not satisfy common high-dimensional assumptions that require bounded parameter norms. Despite this, by explicitly accounting for this diverging structure in the estimation procedure and adapting the concentration arguments to this regime, we can characterize its effect and show consistency. The resulting convergence rate is determined by the effective sample size $n\gamma_n$ rather than $n$.
\end{remark}

\subsection{\label{sec: causal} Applications to causal inference}

In this section, we apply the proposed methods to causal inference, with a focus on estimating average treatment effects in the constrained semi-supervised setting.

Consider the potential outcome framework \citep{rubin1974estimating, imbens2015causal}. Let the underlying random variables be $(\Gamma_i, A_i, X_i, Y_i(0), Y_i(1), Y_i)_{i=1}^n$, with $(\Gamma, A, X, Y(0), Y(1), Y)$ as an independent copy. Here, $\Gamma_i \in \{0,1\}$ is the labeling indicator, $A_i \in \{0,1\}$ is the treatment assignment, and $X_i \in \mathbb{R}^d$ is the covariate vector. The potential outcomes $Y_i(a)$ correspond to treatment level $a \in \{0,1\}$, and the observed outcome is $Y_i = Y_i(A_i)$. We observe the labeled data $(\Gamma_i, \Gamma_i A_i, \Gamma_i X_i, \Gamma_i Y_i)_{i=1}^n$, together with the unlabeled summaries $\bar X_0$ and $\bar \Xi_0$ as defined in \eqref{def:summary}. The labeling mechanism is allowed to depend on the covariates. We use standard causal identification conditions, including selection exchangeability, treatment exchangeability and treatment positivity; these conditions are stated formally in Section~\ref{sec: causal'} of the Supplementary Material.

Our parameters of interest are the average treatment effects defined with respect to the full population, corresponding to generalizability, and with respect to the unlabeled population, corresponding to transportability:
\begin{align}\label{def:ATE}
\tau_g = \E\{Y(1) - Y(0)\},\quad \tau_t = \E\{Y(1) - Y(0) \mid \Gamma=0\}.
\end{align}

In the following, we extend the mean estimation methods in Section~\ref{sec: mar method} to estimate $\tau_g$ and $\tau_t$ in the more general covariate-dependent labeling setting. The main construction is summarized below; full estimating procedures and theoretical results are provided in Section~\ref{sec: causal'} of the Supplementary Material.

For \(\tau_g\), it is sufficient to estimate \(\tau_{a,g} = \E\{Y(a)\}\) for each \(a \in \{0,1\}\). Since the potential outcome \(Y(a)\) is observed only when both \(\Gamma = 1\) and \(A = a\), this results in a two-occasion missing data problem. To handle this, we define \(\Gamma_{a} := \Gamma\mathrm{I}_{(A = a)}\), which serves as the ``effective labeling indicator''. Then, \(\tau_{a,g}\) can be estimated by applying Steps 1--4 of Section~\ref{sec: mar method}, replacing \(\Gamma\) with \(\Gamma_{a}\).

For \(\tau_t\), it is sufficient to estimate \(\tau_{a,t} = \E\{Y(a)\mid\Gamma = 0\}\) for each \(a \in \{0,1\}\). In Section~\ref{sec: mar}, the population is partitioned into two groups: (1) the labeled population with \(\Gamma = 1\), where \(Y\) is observed, and (2) the unlabeled target population with \(\Gamma = 0\), to which the results are transported. In the causal setting, the ``effective labeled population'' consists of individuals with \(\Gamma_{a} = \Gamma\mathrm{I}_{(A = a)} = 1\), since the relevant outcome is the potential outcome \(Y(a)\), which is observed only when \(\Gamma_{a} = 1\). Accordingly, the ``effective entire population'' includes those with \(\Omega_{a,i} = 1\), where \(\Omega_{a,i} := 1 - \Gamma_i + \Gamma_{a,i}\). That is, samples with \(\Omega_{a,i} = 0\), i.e., \(\Gamma_i = 1\) and \(A_i \neq a\), are excluded from the estimation of \(\tau_{a,t}\), as they do not contribute information on \(Y(a)\) and fall outside the relevant population. After removing these samples, \(\tau_{a,t}\) can be estimated by repeating Steps 1--4 of Section~\ref{sec: mar method}, replacing \(\Gamma\) with \(\Gamma_{a}\).

\section{Numerical experiments}\label{sec:num}

\subsection{Simulation studies}\label{sec:sim}
We evaluate the performance of the proposed estimators through simulation studies. Our focus is on the generalizability of causal inference in semi-supervised settings under covariate-dependent labeling mechanisms. The target parameter is \(\tau_g\), defined in \eqref{def:ATE}.

We introduce the data-generating process as follows. The covariates are generated as truncated normal variables, where \(X_i[1]=1\) and \(X_i[j]\) are independent and identically distributed as \(N_{\text{trun},2}\) for each \(i \in [n]\) and \(j \in \{2,\ldots,d\}\). Here, \(N_{\text{trun},2} \sim N \mid \{|N| \leq 2\}\) denotes a truncated normal distribution conditional on \(|N| \leq 2\), with \(N \sim \mathcal{N}(0,1)\). Let \(g(t)=e^t/(1+e^t)\) denote the logistic function, and let \(\epsilon_i\) be independent and identically distributed as \(\mathcal{N}(0,1)\). We consider the following two settings:

(a) Linear outcome regression model and non-logistic propensity score model. The treatment $A_i$ and group membership $\Gamma_i$ are generated as Bernoulli random variables with $\P(A_i = 1 \mid X_i) = 0.3\sin(X_i^\T\alpha_1^*) + 0.5$ and $\P(\Gamma_i = 1 \mid X_i, A=a) = g(X_i^\T \alpha^*_a)$ for $a \in \{0,1\}$. The outcomes follow linear models, $Y_i(a) = X_i^\T \beta^*_a + \epsilon_i$ and $Y_i = Y_i(A_i)$.

(b) Non-linear outcome regression model and logistic propensity score model. Consider $\P(A_i = 1 \mid X_i) = \{g(X_i^\T \alpha^*_1) - g(X_i^\T \alpha^*_0)+1\}/2$ and $\P(\Gamma_i=1 \mid X_i, A_i=a) = g(X_i^\T \alpha^*_a)/\P(A_i=a \mid X_i)$. This construction implies the logistic form for the joint propensity scores, $\P(\Gamma_i\mathrm{I}_{\{A_i=a\}}=1 \mid X_i) = g(X_i^\T \alpha^*_a)$. Define $X_i^2 := ((X_i[1])^2, \ldots, (X_i[d])^2)^\T$. The outcomes are generated through quadratic models, $Y_i(a) = 5X_i^\T \beta^*_a+ (X_i^2)^\T v^*_a + \epsilon_i$ and $Y_i = Y_i(A_i)$.

The parameter values are chosen as
\begin{align*}
\alpha^*_1 &= (\alpha_n,1, \boldsymbol{1}_{s_\alpha-1}/(s_\alpha-1), 0, \ldots, 0)^\T \in \R^d,\\
\alpha^*_0 &= (\alpha_n,-1, -\boldsymbol{1}_{s_\alpha-1}/(s_\alpha-1), 0, \ldots, 0)^\T \in \R^d,\\
\beta^*_1 &= 3(1, 1, \boldsymbol{1}_{s_\beta-1}/(s_\beta-1)^{1/2}, 0, \ldots, 0)^\T \in \R^d,\quad
\beta^*_0 = -\beta^*_1,\\
v^*_1 &= 0.5(-24, 1, \boldsymbol{1}_{s_\beta-1}/(s_\beta-1), 0, \ldots, 0)^\T \in \R^d,\quad
v^*_0 = -v^*_1,
\end{align*}
where $\boldsymbol{1}_{s} = (1,\ldots,1)\in\R^s$ for any $s\in \mathbb{Z}_+$ and $\alpha_n$ is chosen such that $\P(\Gamma=1)=\gamma_n$.

\begin{table}[!t]
\tbl{Simulation results for setting (a), with $d=201$, $s_\alpha=6$ and $s_\beta=2$}
{
\begin{tabular}{lccccccccc}
\hline Estimator &  Bias & RMSE & Length & Coverage & &  Bias & RMSE & Length & Coverage \\ 
\hline   &  \multicolumn{4}{c}{ $n=5000$, $\gamma_n=0.2$, $n\gamma_n=1000$} & &\multicolumn{4}{c}{$n=5000$, $\gamma_n=0.5$, $n\gamma_n=2500$}\\ \cline{2-5}\cline{7-10}
    $\hat{\tau}_{\rm aipw}$ & 0.567 & 0.686 & 1.805 & 0.785 & & 0.283 & 0.359 & 0.926 & 0.765\\
    $\hat{\tau}_{\rm cali}$ & 0.027 & 0.881 & 3.745 & 0.975 & & -0.009 & 0.165 & 0.826 & 0.985 \\
    $\hat{\tau}_{\rm brss}$ & -0.006 & 0.162 & 0.557 & 0.905 & & -0.006 & 0.128 & 0.467 & 0.935\\
    $\hat{\tau}_{\rm sbss}$ & -0.004 & 0.180 & 0.593 & 0.945& & -0.003 & 0.128 & 0.477 & 0.945 \\ \cline{2-10}
    &  \multicolumn{4}{c}{ $n=10000$, $\gamma_n=0.2$, $n\gamma_n=2000$} & &\multicolumn{4}{c}{$n=10000$, $\gamma_n=0.5$, $n\gamma_n=5000$}\\ \cline{2-5}\cline{7-10}   
    $\hat{\tau}_{\rm aipw}$ & 0.478 & 0.555 & 1.327 & 0.740 & & 0.246 & 0.278 & 0.666 & 0.740\\
    $\hat{\tau}_{\rm cali}$ & 0.004 & 0.219 & 1.101 & 0.995 & & 0.004 & 0.099 & 0.524 & 0.995\\    
    $\hat{\tau}_{\rm brss}$ & 0.001 & 0.106 & 0.407 & 0.935 & & 0.000 & 0.084 & 0.334 & 0.960\\
    $\hat{\tau}_{\rm sbss}$ & 0.004 & 0.113 & 0.433 & 0.950 & & 0.004 & 0.084 & 0.340 & 0.960\\\hline
\end{tabular}
}
\begin{tabnote}
RMSE, root mean squared error; Length, average length of the 95\% confidence intervals; Coverage, average coverage probability of the 95\% confidence intervals. Bias, RMSE and Length are computed using medians.
\end{tabnote}
\label{table 1}
\end{table}

We evaluate the performance of the proposed summary-based semi-supervised estimator, denoted by $\hat{\tau}_{\rm sbss}$, which combines individualized labeled samples with unlabeled covariate summaries; see Section~\ref{sec: causal}. 
For benchmarking, we compare the proposed estimator with several existing methods: (a) the supervised augmented inverse probability weighting estimator of \citet{chernozhukov2017double}, denoted by $\hat{\tau}_{\rm aipw}$; (b) the calibrated augmented inverse probability weighting estimator adapted from \citet{chu2023targeted}, denoted by $\hat{\tau}_{\rm cali}$, where calibration weights are obtained via empirical likelihood to match the first moment of covariates between the labeled sample and the full population; and (c) the bias-reduced semi-supervised estimator of \citet{zhang2023semi}, denoted by $\hat{\tau}_{\rm brss}$. For methods that allow cross-fitting, we split the data into 5 folds. Nuisance parameters are estimated using $\ell_1$-regularized methods, with tuning parameters selected by 5-fold cross-validation. Notably, $\hat{\tau}_{\rm aipw}$ relies only on labeled data and ignores unlabeled information, which can lead to biased estimates when labeling is covariate-dependent. In contrast, although $\hat{\tau}_{\rm brss}$ is designed to be consistent, it requires access to individual-level unlabeled covariates and is therefore not applicable when only summary-level unlabeled information is available. The estimator $\hat{\tau}_{\rm cali}$ is applicable in this setting, but its performance depends on the accuracy of the calibration weights used to extend inference to the full population. Simulation results over 200 repetitions are presented in Tables~\ref{table 1} and~\ref{table 2}.

\begin{table}[!t]
\tbl{Simulation results for setting (b), with $d=201$, $s_\alpha=6$ and $s_\beta=6$}
{
\begin{tabular}{lccccccccc}
    \hline Estimator &  Bias & RMSE & Length & Coverage & &  Bias & RMSE & Length & Coverage \\ 
    \hline   &  \multicolumn{4}{c}{$n=3000$,  $\gamma_n=0.2$, $n\gamma_n=600$} & &\multicolumn{4}{c}{$n=3000$, $\gamma_n=0.5$, $n\gamma_n=1500$}\\ \cline{2-5}\cline{7-10}
    $\hat{\tau}_{\rm aipw}$ & 0.978 & 2.226 & 8.534 & 0.945 & & 0.110 & 1.194 & 4.883 & 0.950 
    \\
    $\hat{\tau}_{\rm cali}$ & 0.937 & 3.474 & 14.532 & 0.980 & & 0.717 & 1.134 & 4.961 & 0.975 
    \\
    $\hat{\tau}_{\rm brss}$ & 0.008 & 0.723 & 2.703 & 0.920
 & & -0.004 & 0.689 & 2.681 & 0.950
    \\
    $\hat{\tau}_{\rm sbss}$ & 0.029 & 0.717 & 2.726 & 0.945
 & & -0.007 & 0.688 & 2.687 & 0.950
    \\
    \cline{2-10}   &  \multicolumn{4}{c}{ $n=10000$, $\gamma_n=0.2$, $n\gamma_n=2000$} & &\multicolumn{4}{c}{$n=10000$, $\gamma_n=0.5$, $n\gamma_n=5000$}\\ \cline{2-5}\cline{7-10}
    $\hat{\tau}_{\rm aipw}$ & 0.456 & 1.289 & 5.180 & 0.970
 & & 0.249 & 0.760 & 2.848 & 0.915
    \\
    $\hat{\tau}_{\rm cali}$ & 0.586 & 0.755 & 4.257 & 1.000 & & 0.249 & 0.536 & 2.380 & 0.975
    \\
    $\hat{\tau}_{\rm brss}$ & -0.063 & 0.388 & 1.486 & 0.945
  & & -0.003 & 0.383 & 1.471 & 0.950
    \\
	    $\hat{\tau}_{\rm sbss}$ & -0.047 & 0.385 & 1.493 & 0.950
	   & &-0.004 & 0.383 & 1.473 & 0.950
	\\\hline
\end{tabular}
}
\begin{tabnote}
The definitions of Bias, RMSE, Length and Coverage are the same as in Table~\ref{table 1}.
\end{tabnote}
\label{table 2}
\end{table}

Among the considered estimators, $\hat{\tau}_{\rm aipw}$ exhibits substantial bias and large RMSE, as it relies only on labeled data and becomes biased when $\tau_g=\E\{Y(1)-Y(0)\}\neq\E\{Y(1)-Y(0)\mid\Gamma=1\}$. In contrast, $\hat{\tau}_{\rm cali}$ incorporates summary-level unlabeled information through calibration weighting, but its validity depends on how accurately the calibration weights recover the target covariate distribution. Under both settings (a)--(b), $\hat{\tau}_{\rm cali}$ can produce RMSEs even larger than those of the biased estimator $\hat{\tau}_{\rm aipw}$ when $n\gamma_n$ is relatively small. As $n\gamma_n$ increases, its estimation accuracy improves, but it remains inferior to $\hat{\tau}_{\rm brss}$ and $\hat{\tau}_{\rm sbss}$.

The proposed estimator $\hat{\tau}_{\rm sbss}$ exhibits performance comparable to $\hat{\tau}_{\rm brss}$, with both estimators showing relatively low bias and RMSE. Their confidence intervals achieve accurate coverage provided that the effective sample size $n\gamma_n$ is sufficiently large, even when one nuisance model is misspecified. However, unlike $\hat{\tau}_{\rm brss}$, which requires access to individualized unlabeled covariates, the proposed estimator $\hat{\tau}_{\rm sbss}$ only requires unlabeled covariate summaries. Despite this restriction, the proposed method attains performance close to $\hat{\tau}_{\rm brss}$, highlighting its practicality, efficiency, and robustness across the simulation settings.

\subsection{Application to HIV drug resistance}
\label{sec:real}

We use the Nucleoside Reverse Transcriptase Inhibitor (NRTI) Mutation Pattern dataset and the Genotype\allowbreak-Phenotype dataset \citep{rhee2004distribution, melikian2012standardized} from the Stanford University HIV Drug Resistance Database \citep{rhee2003human}. The Genotype\allowbreak-Phenotype dataset is treated as the labeled sample, as it contains both gene mutation and drug resistance information. The NRTI Mutation Pattern dataset is used as the unlabeled sample after removing repeated records that appear in the Genotype\allowbreak-Phenotype dataset, and contains a large number of patient records with gene mutation information only. These datasets originate from different studies, and the availability of drug resistance measurements differs across them. As a result, there is a noticeable difference in mutation frequencies between the labeled and unlabeled samples, as shown in Table~\ref{tab:mutation_freq}.

We focus on the log-transformed lamivudine drug resistance as the outcome, and study the average effect of the presence of a specific mutation on drug resistance. The goal is to generalize findings from the labeled dataset to the full population without relying on individual-level information in the large unlabeled dataset. Specifically, we treat each mutation with a frequency exceeding $5\%$ in the full dataset as a treatment variable, and use the remaining mutations with at least 10 occurrences as covariates. This results in a dataset of size $n=247{,}921$, with a labeled sample size $n_L=2{,}120$, unlabeled sample size $n_U=245{,}801$, and covariate dimension $d=25$. 

\begin{table}[t!]
\tbl{Comparison of mutation frequencies between the labeled and unlabeled datasets; the comparison is restricted to frequently occurring mutations}
{
\begin{tabular}{c c c c c c c}
\hline
Mutation & 41L & 67N & 70R & 184V & 210W & 215Y \\
\hline
Labeled & 41.5\% & 33.9\% & 20.1\% & 49.3\% & 28.7\% & 35.2\% \\
Unlabeled & 8.4\% & 7.4\% & 5.4\% & 18.2\% & 4.9\% & 6.4\% \\
\hline
\end{tabular}
}
\label{tab:mutation_freq}
\end{table}

We implement the estimators $\hat{\tau}_{\rm aipw}$, $\hat{\tau}_{\rm cali}$, $\hat{\tau}_{\rm brss}$, and $\hat{\tau}_{\rm sbss}$ considered in Section~\ref{sec:sim}. To improve approximation accuracy, we further apply a basis expansion that incorporates all pairwise interactions among covariate mutations. After removing interaction terms that occur fewer than 10 times, the transformed covariate dimension becomes $d_2=147$. The corresponding estimators based on the expanded feature set are denoted by $\hat{\tau}_{\rm aipw,2}$, $\hat{\tau}_{\rm cali,2}$, $\hat{\tau}_{\rm brss,2}$, and $\hat{\tau}_{\rm sbss,2}$. To account for randomness induced by the cross-fitting procedure, we repeat cross-fitting 10 times and report the median of the resulting estimates as the final values; see also \citet{chernozhukov2017double} for further details. Figure~\ref{fig:real} reports the confidence intervals for the average treatment effects of the selected mutations on log-transformed drug resistance.

The proposed estimators $\hat{\tau}_{\rm sbss}$ and $\hat{\tau}_{\rm sbss,2}$ produce confidence intervals that are close to those obtained from $\hat{\tau}_{\rm brss}$ and $\hat{\tau}_{\rm brss,2}$, respectively. 
However, unlike $\hat{\tau}_{\rm brss}$ and $\hat{\tau}_{\rm brss,2}$, which require access to individual-level unlabeled covariates, the proposed estimators achieve comparable performance using only covariate summaries, showing that our approach improves practical applicability without loss of precision.
The confidence intervals with and without two-way interactions also largely overlap, indicating stable conclusions.

\begin{figure}[b!]
\centering
\includegraphics[width=0.75\textwidth,height=0.45\textwidth]{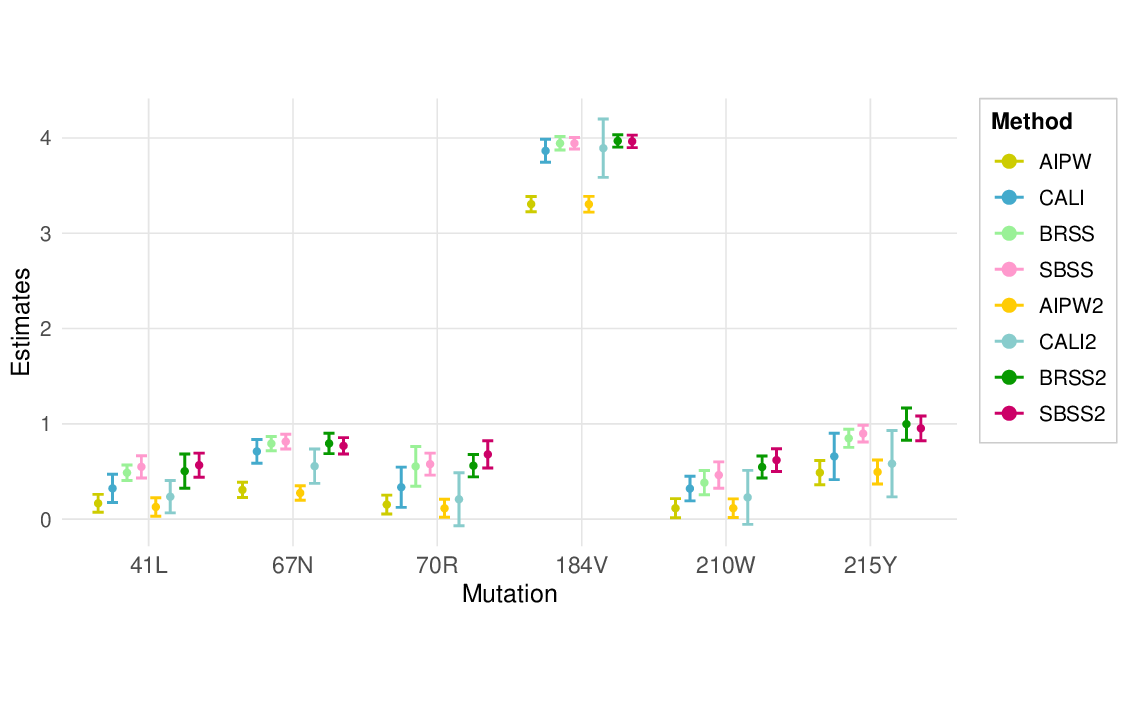}
\caption{Real-data analysis: confidence intervals for the average treatment effects of mutations on log-transformed drug resistance} \label{fig:real}
\end{figure}

On the other hand, $\hat{\tau}_{\rm cali}$ and $\hat{\tau}_{\rm cali,2}$ yield wider confidence intervals, particularly for the effects of mutations 70R and 210W, where $\hat{\tau}_{\rm cali,2}$ fails to detect significant effects while the other methods detect them. This pattern is likely driven by their reliance on calibration weighting, which can become unstable in high-dimensional settings. This finding is consistent with the simulation results; as shown in Tables~\ref{table 1}--\ref{table 2}, their confidence intervals are overly wide and the empirical coverage systematically exceeds the nominal 95\% level.

Lastly, $\hat{\tau}_{\rm aipw}$ and $\hat{\tau}_{\rm aipw,2}$ produce point estimates that differ substantially from those of the other estimators. Since these methods rely only on labeled data, they can be biased when labeling depends on covariates, as suggested by the clear differences in mutation frequencies reported in Table~\ref{tab:mutation_freq}.

\section{Discussion}

In this work, we study the mean estimation problem and the average treatment effect estimation problem by leveraging additional covariate information from unlabeled samples. Our results show that even without access to individual-level unlabeled covariates, reliable and efficient estimation can be achieved by using only covariate summaries, which are often easier to obtain in practice. Under covariate-independent labeling, we find that collecting additional individual-level covariates can further improve estimation efficiency in the presence of nonlinear relationships. We expect that similar gains in efficiency and robustness may also arise under covariate-dependent labeling, although a more detailed analysis is needed.

Given the varying costs of collecting different types of covariates, it is important to develop data-driven strategies to determine which information should be collected from unlabeled samples to balance estimation accuracy and data collection cost. In addition, beyond mean-type parameters, it would be useful to study the estimation of other quantities of interest, such as quantiles and regression coefficients, in more general regression models.




\section*{Data availability}
The data that support the findings in this paper were derived from the Stanford University HIV Drug Resistance Database at \texttt{https://hivdb.stanford.edu}.	

\section*{Supplementary material}

The Supplementary Material provides proofs and technical details. Section~\ref{sec: causal'} extends the causal inference methods of Section~\ref{sec: causal} with full estimation procedures and theoretical results. Proofs of the main results are given in Sections~\ref{sec: proof identifiability}--\ref{sec: proof causal}.

\bibliographystyle{plainnat}
\bibliography{ref}

@article{rhee2003human,
  title={Human immunodeficiency virus reverse transcriptase and protease sequence database},
  author={Rhee, Soo-Yon and Gonzales, Matthew J and Kantor, Rami and Betts, Bradley J and Ravela, Jaideep and Shafer, Robert W},
  journal={Nucleic Acids Research},
  volume={31},
  number={1},
  pages={298--303},
  year={2003},
  publisher={Oxford University Press}
}

@article{rhee2004distribution,
  title={Distribution of human immunodeficiency virus type 1 protease and reverse transcriptase mutation patterns in 4,183 persons undergoing genotypic resistance testing},
  author={Rhee, Soo-Yon and Liu, Tommy and Ravela, Jaideep and Gonzales, Matthew J and Shafer, Robert W},
  journal={Antimicrobial Agents and Chemotherapy},
  volume={48},
  number={8},
  pages={3122--3126},
  year={2004},
  publisher={American Society for Microbiology}
}

@article{melikian2012standardized,
  title={Standardized comparison of the relative impacts of HIV-1 reverse transcriptase (RT) mutations on nucleoside RT inhibitor susceptibility},
  author={Melikian, George L and Rhee, Soo-Yon and Taylor, Jonathan and Fessel, W Jeffrey and Kaufman, David and Towner, William and Troia-Cancio, Paolo V and Zolopa, Andrew and Robbins, Gregory K and Kagan, Ron and others},
  journal={Antimicrobial Agents and Chemotherapy},
  volume={56},
  number={5},
  pages={2305--2313},
  year={2012},
  publisher={American Society for Microbiology 1752 N St., NW, Washington, DC}
}

@article{crump2009dealing,
  title={Dealing with limited overlap in estimation of average treatment effects},
  author={Crump, Richard K and Hotz, V Joseph and Imbens, Guido W and Mitnik, Oscar A},
  journal={Biometrika},
  volume={96},
  number={1},
  pages={187--199},
  year={2009},
  publisher={Oxford University Press}
}

@article{chakrabortty2018efficient,
	Author = {Chakrabortty, Abhishek and Cai, Tianxi},
	Journal = {The Annals of Statistics},
	Number = {4},
	Pages = {1541--1572},
	Publisher = {Institute of Mathematical Statistics},
	Title = {Efficient and adaptive linear regression in semi-supervised settings},
	Volume = {46},
	Year = {2018}}

@article{chernozhukov2017double,
	Author = {Chernozhukov, Victor and Chetverikov, Denis and Demirer, Mert and Duflo, Esther and Hansen, Christian and Newey, Whitney},
	Date-Added = {2018-04-21 01:03:32 +0000},
	Date-Modified = {2018-04-21 01:03:32 +0000},
	Journal = {American Economic Review},
	Number = {5},
	Pages = {261--65},
	Title = {Double/Debiased/Neyman machine learning of treatment effects},
	Volume = {107},
	Year = {2017}
}

@article{chernozhukov2018double,
  title={Double/debiased machine learning for treatment and structural parameters},
  author={Chernozhukov, Victor and Chetverikov, Denis and Demirer, Mert and Duflo, Esther and Hansen, Christian and Newey, Whitney and Robins, James},
  journal={The Econometrics Journal},
  volume={21},
  number={1},
  pages={C1--C68},
  year={2018},
  publisher={Wiley Online Library}
}

@book{van2000asymptotic,
  title={Asymptotic statistics},
  author={Van der Vaart, Aad W},
  volume={3},
  year={2000},
  publisher={Cambridge University Press}
}

@article{kuchibhotla2022moving,
  title={Moving beyond sub-Gaussianity in high-dimensional statistics: Applications in covariance estimation and linear regression},
  author={Kuchibhotla, Arun Kumar and Chakrabortty, Abhishek},
  journal={Information and Inference: A Journal of the IMA},
  volume={11},
  number={4},
  pages={1389--1456},
  year={2022},
  publisher={Oxford University Press}
}

@article{chakrabortty2019high,
  title={High Dimensional M-Estimation with Missing Outcomes: A Semi-Parametric Framework},
  author={Chakrabortty, Abhishek and Lu, Jiarui and Cai, T Tony and Li, Hongzhe},
  journal={arXiv preprint arXiv:1911.11345},
  year={2019}
}

@book{Vershynin_2018,
author = {Vershynin, Roman},
title = {High dimensional probability: An introduction with applications in data science},
volume={47},
year={2018},
publisher={Cambridge University Press}
}

@book{wainwright2019high,
  title={High-dimensional statistics: A non-asymptotic viewpoint},
  author={Wainwright, Martin J},
  volume={48},
  year={2019},
  publisher={Cambridge University Press}
}

@article{zhang2019semi,
  title={Semi-supervised inference: General theory and estimation of means},
  author={Zhang, Anru and Brown, Lawrence D and Cai, T Tony},
  journal={The Annals of Statistics},
  volume={47},
  number={5},
  pages={2538--2566},
  year={2019},
  publisher={Institute of Mathematical Statistics}
}

@article{bang2005doubly,
  title={Doubly robust estimation in missing data and causal inference models},
  author={Bang, Heejung and Robins, James M},
  journal={Biometrics},
  volume={61},
  number={4},
  pages={962--973},
  year={2005},
  publisher={Wiley Online Library}
}

@article{rothe2017robust,
  title={Robust confidence intervals for average treatment effects under limited overlap},
  author={Rothe, Christoph},
  journal={Econometrica},
  volume={85},
  number={2},
  pages={645--660},
  year={2017},
  publisher={Wiley Online Library}
}

@article{tony2020semisupervised,
  title={Semisupervised inference for explained variance in high dimensional linear regression and its applications},
  author={Cai, T Tony and Guo, Zijian},
  journal={Journal of the Royal Statistical Society: Series B (Statistical Methodology)},
  volume={82},
  number={2},
  pages={391--419},
  year={2020},
  publisher={Wiley Online Library}
}

@book{tsiatis2007semiparametric,
  title={Semiparametric theory and missing data},
  author={Tsiatis, Anastasios},
  year={2007},
  publisher={Springer Science \& Business Media}
}

@article{farrell2015robust,
  title={Robust inference on average treatment effects with possibly more covariates than observations},
  author={Farrell, Max H},
  journal={Journal of Econometrics},
  volume={189},
  number={1},
  pages={1--23},
  year={2015},
  publisher={Elsevier}
}

@article{athey2018approximate,
  title={Approximate residual balancing: debiased inference of average treatment effects in high dimensions},
  author={Athey, Susan and Imbens, Guido W and Wager, Stefan},
  journal={Journal of the Royal Statistical Society Series B},
  volume={80},
  number={4},
  pages={597--623},
  year={2018},
  publisher={Royal Statistical Society}
}

@article{azriel2022semi,
  title={Semi-supervised linear regression},
  author={Azriel, David and Brown, Lawrence D and Sklar, Michael and Berk, Richard and Buja, Andreas and Zhao, Linda},
  journal={Journal of the American Statistical Association},
  volume={117},
  number={540},
  pages={2238--2251},
  year={2022},
  publisher={Taylor \& Francis}
}

@book{imbens2015causal,
  title={Causal inference in statistics, social, and biomedical sciences},
  author={Imbens, Guido W and Rubin, Donald B},
  year={2015},
  publisher={Cambridge University Press}
}

@article{rubin1974estimating,
  title={Estimating causal effects of treatments in randomized and nonrandomized studies.},
  author={Rubin, Donald B},
  journal={Journal of Educational Psychology},
  volume={66},
  number={5},
  pages={688},
  year={1974},
  publisher={American Psychological Association}
}

@article{tan2020model,
  title={Model-assisted inference for treatment effects using regularized calibrated estimation with high-dimensional data},
  author={Tan, Zhiqiang},
  journal={The Annals of Statistics},
  volume={48},
  number={2},
  pages={811--837},
  year={2020},
  publisher={Institute of Mathematical Statistics}
}

@article{smucler2019unifying,
  title={A unifying approach for doubly-robust $\ell_1$ regularized estimation of causal contrasts},
  author={Smucler, Ezequiel and Rotnitzky, Andrea and Robins, James M},
  journal={arXiv preprint arXiv:1904.03737},
  year={2019}
}

@article{dukes2021inference,
  title={Inference for treatment effect parameters in potentially misspecified high-dimensional models},
  author={Dukes, Oliver and Vansteelandt, Stijn},
  journal={Biometrika},
  volume={108},
  number={2},
  pages={321--334},
  year={2021},
  publisher={Oxford University Press}
}

@article{zhang2023double,
  title={Double robust semi-supervised inference for the mean: Selection bias under mar labeling with decaying overlap},
  author={Zhang, Yuqian and Chakrabortty, Abhishek and Bradic, Jelena},
  journal={Information and Inference: A Journal of the IMA},
  volume={12},
  number={3},
  pages={2066--2159},
  year={2023},
  publisher={Oxford University Press}
}

@article{rosenbaum1983central,
  title={The central role of the propensity score in observational studies for causal effects},
  author={Rosenbaum, Paul R and Rubin, Donald B},
  journal={Biometrika},
  volume={70},
  number={1},
  pages={41--55},
  year={1983},
  publisher={Oxford University Press}
}

@article{rubin1976inference,
  title={Inference and missing data},
  author={Rubin, Donald B},
  journal={Biometrika},
  volume={63},
  number={3},
  pages={581--592},
  year={1976},
  publisher={Oxford University Press}
}

@article{cheng2021robust,
  title={Robust and efficient semi-supervised estimation of average treatment effects with application to electronic health records data},
  author={Cheng, David and Ananthakrishnan, Ashwin N and Cai, Tianxi},
  journal={Biometrics},
  volume={77},
  number={2},
  pages={413--423},
  year={2021},
  publisher={Wiley Online Library}
}

@article{ning2020robust,
  title={Robust estimation of causal effects via a high-dimensional covariate balancing propensity score},
  author={Ning, Yang and Sida, Peng and Imai, Kosuke},
  journal={Biometrika},
  volume={107},
  number={3},
  pages={533--554},
  year={2020},
  publisher={Oxford University Press}
}

@article{zhang2022high,
  title={High-dimensional semi-supervised learning: in search of optimal inference of the mean},
  author={Zhang, Yuqian and Bradic, Jelena},
  journal={Biometrika},
  volume={109},
  number={2},
  pages={387--403},
  year={2022},
  publisher={Oxford University Press}
}

@article{chakrabortty2022general,
  title={A General Framework for Treatment Effect Estimation in Semi-Supervised and High Dimensional Settings},
  author={Chakrabortty, Abhishek and Dai, Guorong and Tchetgen, Eric Tchetgen},
  journal={arXiv preprint arXiv:2201.00468},
  year={2022}
}

@article{funk2011doubly,
  title={Doubly robust estimation of causal effects},
  author={Funk, Michele Jonsson and Westreich, Daniel and Wiesen, Chris and St{\"u}rmer, Til and Brookhart, M Alan and Davidian, Marie},
  journal={American Journal of Epidemiology},
  volume={173},
  number={7},
  pages={761--767},
  year={2011},
  publisher={Oxford University Press}
}

@article{zhang2021dynamic,
  title={Dynamic treatment effects: high-dimensional inference under model misspecification},
  author={Zhang, Yuqian and Ji, Weijie and Bradic, Jelena},
  journal={arXiv preprint arXiv:2111.06818},
  year={2021}
}

@article{lesko2017generalizing,
  title={Generalizing study results: a potential outcomes perspective},
  author={Lesko, Catherine R and Buchanan, Ashley L and Westreich, Daniel and Edwards, Jessie K and Hudgens, Michael G and Cole, Stephen R},
  journal={Epidemiology (Cambridge, Mass.)},
  volume={28},
  number={4},
  pages={553},
  year={2017},
  publisher={NIH Public Access}
}

@article{shi2023data,
  title={Data integration in causal inference},
  author={Shi, Xu and Pan, Ziyang and Miao, Wang},
  journal={Wiley Interdisciplinary Reviews: Computational Statistics},
  volume={15},
  number={1},
  pages={e1581},
  year={2023},
  publisher={Wiley Online Library}
}

@article{zhang2023semi,
  title={The Decaying Missing-at-Random Framework: Doubly Robust Causal Inference with Partially Labeled Data},
  author={Zhang, Yuqian and Chakrabortty, Abhishek and Bradic, Jelena},
  journal={arXiv preprint arXiv:2305.12789},
  year={2023}
}

@article{imai2015robust,
  title={Robust estimation of inverse probability weights for marginal structural models},
  author={Imai, Kosuke and Ratkovic, Marc},
  journal={Journal of the American Statistical Association},
  volume={110},
  number={511},
  pages={1013--1023},
  year={2015},
  publisher={Taylor \& Francis}
}

@article{khan2010irregular,
  title={Irregular identification, support conditions, and inverse weight estimation},
  author={Khan, Shakeeb and Tamer, Elie},
  journal={Econometrica},
  volume={78},
  number={6},
  pages={2021--2042},
  year={2010},
  publisher={Wiley Online Library}
}

@article{muller2015partial,
  title={The partial linear model in high dimensions},
  author={M{\"u}ller, Patric and Van de Geer, Sara},
  journal={Scandinavian Journal of Statistics},
  volume={42},
  number={2},
  pages={580--608},
  year={2015},
  publisher={Wiley Online Library}
}

@article{xie2009scad,
author = {Huiliang Xie and Jian Huang},
title = {{SCAD-penalized regression in high-dimensional partially linear models}},
volume = {37},
journal = {The Annals of Statistics},
number = {2},
publisher = {Institute of Mathematical Statistics},
pages = {673--696},
keywords = {asymptotic normality, High-dimensional data, oracle property, penalized estimation, semiparametric models, Variable selection},
year = {2009}
}

@article{lv2022debiased,
  title={Debiased distributed learning for sparse partial linear models in high dimensions},
  author={Lv, Shaogao and Lian, Heng},
  journal={Journal of Machine Learning Research},
  volume={23},
  number={2},
  pages={1--32},
  year={2022}
}

@inproceedings{zhu2019high,
  title={High dimensional inference in partially linear models},
  author={Zhu, Ying and Yu, Zhuqing and Cheng, Guang},
  booktitle={The 22nd International Conference on Artificial Intelligence and Statistics},
  pages={2760--2769},
  year={2019},
  organization={PMLR}
}

@article{zhang2016statistical,
  title={Statistical inference on restricted partial linear regression models with partial distortion measurement errors},
  author={Zhang, Jun and Zhou, Nanguang and Sun, Zipeng and Li, Gaorong and Wei, Zhenghong},
  journal={Statistica Neerlandica},
  volume={70},
  number={4},
  pages={304--331},
  year={2016},
  publisher={Wiley Online Library}
}

@article{angelopoulos2023prediction,
  title={Prediction-powered inference},
  author={Angelopoulos, Anastasios N and Bates, Stephen and Fannjiang, Clara and Jordan, Michael I and Zrnic, Tijana},
  journal={Science},
  volume={382},
  number={6671},
  pages={669--674},
  year={2023},
  publisher={American Association for the Advancement of Science}
}

@article{colnet2024causal,
  title={Causal inference methods for combining randomized trials and observational studies: a review},
  author={Colnet, B{\'e}n{\'e}dicte and Mayer, Imke and Chen, Guanhua and Dieng, Awa and Li, Ruohong and Varoquaux, Ga{\"e}l and Vert, Jean-Philippe and Josse, Julie and Yang, Shu},
  journal={Statistical Science},
  volume={39},
  number={1},
  pages={165--191},
  year={2024},
  publisher={Institute of Mathematical Statistics}
}

@article{ung2024combining,
  title={Combining an experimental study with external data: study designs and identification strategies},
  author={Ung, Lawson and Wang, Guanbo and Haneuse, Sebastien and Hernan, Miguel A and Dahabreh, Issa J},
  journal={arXiv preprint arXiv:2406.03302},
  year={2024}
}

@article{degtiar2023review,
  title={A review of generalizability and transportability},
  author={Degtiar, Irina and Rose, Sherri},
  journal={Annual Review of Statistics and Its Application},
  volume={10},
  number={1},
  pages={501--524},
  year={2023},
  publisher={Annual Reviews}
}

@article{buchanan2018generalizing,
  title={Generalizing evidence from randomized trials using inverse probability of sampling weights},
  author={Buchanan, Ashley L and Hudgens, Michael G and Cole, Stephen R and Mollan, Katie R and Sax, Paul E and Daar, Eric S and Adimora, Adaora A and Eron, Joseph J and Mugavero, Michael J},
  journal={Journal of the Royal Statistical Society Series A: Statistics in Society},
  volume={181},
  number={4},
  pages={1193--1209},
  year={2018},
  publisher={Oxford University Press}
}

@article{dahabreh2021study,
  title={Study designs for extending causal inferences from a randomized trial to a target population},
  author={Dahabreh, Issa J and Haneuse, Sebastien JP A and Robins, James M and Robertson, Sarah E and Buchanan, Ashley L and Stuart, Elizabeth A and Hern{\'a}n, Miguel A},
  journal={American Journal of Epidemiology},
  volume={190},
  number={8},
  pages={1632--1642},
  year={2021},
  publisher={Oxford University Press}
}

@article{deng2024optimal,
  title={Optimal and safe estimation for high-dimensional semi-supervised learning},
  author={Deng, Siyi and Ning, Yang and Zhao, Jiwei and Zhang, Heping},
  journal={Journal of the American Statistical Association},
  volume={119},
  number={548},
  pages={2748--2759},
  year={2024},
  publisher={Taylor \& Francis}
}

@article{song2024general,
  title={A general m-estimation theory in semi-supervised framework},
  author={Song, Shanshan and Lin, Yuanyuan and Zhou, Yong},
  journal={Journal of the American Statistical Association},
  volume={119},
  number={546},
  pages={1065--1075},
  year={2024},
  publisher={Taylor \& Francis}
}

@article{zrnic2024cross,
  title={Cross-prediction-powered inference},
  author={Zrnic, Tijana and Cand{\`e}s, Emmanuel J},
  journal={Proceedings of the National Academy of Sciences},
  volume={121},
  number={15},
  pages={e2322083121},
  year={2024},
  publisher={National Acad Sciences}
}

@article{dahabreh2020extending,
  title={Extending inferences from a randomized trial to a new target population},
  author={Dahabreh, Issa J and Robertson, Sarah E and Steingrimsson, Jon A and Stuart, Elizabeth A and Hernan, Miguel A},
  journal={Statistics in Medicine},
  volume={39},
  number={14},
  pages={1999--2014},
  year={2020},
  publisher={Wiley Online Library}
}

@article{kennedy2024semiparametric,
  title={Semiparametric doubly robust targeted double machine learning: a review},
  author={Kennedy, Edward H},
  journal={Handbook of Statistical Methods for Precision Medicine},
  pages={207--236},
  year={2024},
  publisher={Chapman and Hall/CRC}
}

@article{zhao2020bayesian,
  title={Bayesian weighted Mendelian randomization for causal inference based on summary statistics},
  author={Zhao, Jia and Ming, Jingsi and Hu, Xianghong and Chen, Gang and Liu, Jin and Yang, Can},
  journal={Bioinformatics},
  volume={36},
  number={5},
  pages={1501--1508},
  year={2020},
  publisher={Oxford University Press}
}

@article{morrison2020mendelian,
  title={Mendelian randomization accounting for correlated and uncorrelated pleiotropic effects using genome-wide summary statistics},
  author={Morrison, Jean and Knoblauch, Nicholas and Marcus, Joseph H and Stephens, Matthew and He, Xin},
  journal={Nature Genetics},
  volume={52},
  number={7},
  pages={740--747},
  year={2020},
  publisher={Nature Publishing Group US New York}
}

@article{burgess2013mendelian,
  title={Mendelian randomization analysis with multiple genetic variants using summarized data},
  author={Burgess, Stephen and Butterworth, Adam and Thompson, Simon G},
  journal={Genetic Epidemiology},
  volume={37},
  number={7},
  pages={658--665},
  year={2013},
  publisher={Wiley Online Library}
}

@article{lin2010relative,
  title={On the relative efficiency of using summary statistics versus individual-level data in meta-analysis},
  author={Lin, Dan-Yu and Zeng, Daniel},
  journal={Biometrika},
  volume={97},
  number={2},
  pages={321--332},
  year={2010},
  publisher={Oxford University Press}
}

@article{liu2015multivariate,
  title={Multivariate meta-analysis of heterogeneous studies using only summary statistics: efficiency and robustness},
  author={Liu, Dungang and Liu, Regina Y and Xie, Minge},
  journal={Journal of the American Statistical Association},
  volume={110},
  number={509},
  pages={326--340},
  year={2015},
  publisher={Taylor \& Francis}
}

@article{zhu2015meta,
  title={Meta-analysis of correlated traits via summary statistics from GWASs with an application in hypertension},
  author={Zhu, Xiaofeng and Feng, Tao and Tayo, Bamidele O and Liang, Jingjing and Young, J Hunter and Franceschini, Nora and Smith, Jennifer A and Yanek, Lisa R and Sun, Yan V and Edwards, Todd L and others},
  journal={The American Journal of Human Genetics},
  volume={96},
  number={1},
  pages={21--36},
  year={2015},
  publisher={Elsevier}
}

@article{muller2012efficient,
author = {Ursula U. M{\"u}ller and Ingrid Van Keilegom},
title = {{Efficient parameter estimation in regression with missing responses}},
volume = {6},
journal = {Electronic Journal of Statistics},
number = {none},
publisher = {Institute of Mathematical Statistics and Bernoulli Society},
pages = {1200--1219},
keywords = {efficiency, influence function, missing at random, Nonlinear regression, nuisance function, parametric regression, Quantile regression, quasi-likelihood regression},
year = {2012},
}

@article{chu2023targeted,
  title={Targeted optimal treatment regime learning using summary statistics},
  author={Chu, J and Lu, W and Yang, S},
  journal={Biometrika},
  volume={110},
  number={4},
  pages={913--931},
  year={2023},
  publisher={Biometrika Trust}
}

@article{qian2024changepoint,
  title={Changepoint Detection in Complex Models: Cross-Fitting Is Needed},
  author={Qian, Chengde and Wang, Guanghui and Wang, Zhaojun and Zou, Changliang},
  journal={arXiv e-prints},
  pages={arXiv--2411},
  year={2024}
}

@article{deville1992calibration,
  title={Calibration estimators in survey sampling},
  author={Deville, Jean-Claude and S{\"a}rndal, Carl-Erik},
  journal={Journal of the American statistical Association},
  volume={87},
  number={418},
  pages={376--382},
  year={1992},
  publisher={Taylor \& Francis}
}

@article{chen2025semi,
  title={Semi-supervised linear regression: enhancing efficiency and robustness in high dimensions},
  author={Chen, Kai and Zhang, Yuqian},
  journal={Biometrics},
  volume={81},
  number={3},
  pages={ujaf113},
  year={2025},
  publisher={Oxford University Press}
}

@article{hainmueller2012entropy,
  title={Entropy balancing for causal effects: A multivariate reweighting method to produce balanced samples in observational studies},
  author={Hainmueller, Jens},
  journal={Political Analysis},
  volume={20},
  number={1},
  pages={25--46},
  year={2012},
  publisher={Cambridge University Press}
}

@article{zubizarreta2015stable,
  title={Stable weights that balance covariates for estimation with incomplete outcome data},
  author={Zubizarreta, Jos{\'e} R},
  journal={Journal of the American Statistical Association},
  volume={110},
  number={511},
  pages={910--922},
  year={2015},
  publisher={Taylor \& Francis}
}

@article{kallus2025role,
  title={On the role of surrogates in the efficient estimation of treatment effects with limited outcome data},
  author={Kallus, Nathan and Mao, Xiaojie},
  journal={Journal of the Royal Statistical Society Series B: Statistical Methodology},
  volume={87},
  number={2},
  pages={480--509},
  year={2025},
  publisher={Oxford University Press UK}
}

@article{testa2025semiparametric,
  title={Semiparametric semi-supervised learning for general targets under distribution shift and decaying overlap},
  author={Testa, Lorenzo and Xu, Qi and Lei, Jing and Roeder, Kathryn},
  journal={arXiv preprint arXiv:2505.06452},
  year={2025}
}

@article{hu2026semiparametric,
  title={Semiparametric efficient fusion of individual data and summary statistics},
  author={Hu, Wenjie and Wang, Ruoyu and Li, Wei and Miao, Wang},
  journal={Journal of the American Statistical Association},
  number={just-accepted},
  pages={1--28},
  year={2026},
  publisher={Taylor \& Francis}
}

@article{zhao2017entropy,
  title={Entropy Balancing is Doubly Robust},
  author={Zhao, Qingyuan and Percival, Daniel},
  journal={Journal of Causal Inference},
  volume={5},
  number={1},
  pages={1--19},
  year={2017},
  publisher={De Gruyter}
}

\newpage

\appendix

\begin{center}
    \Large\textbf{Supplementary material for ``Semi-supervised inference using unlabeled summary statistics''}
\end{center}

\section{Details on the applications to causal inference}\label{sec: causal'}

In this section, we present further details on the causal inference problems discussed in Section~\ref{sec: causal}.

To identify the average treatment effect (ATE) parameters defined in \eqref{def:ATE}, we assume the following identification conditions.

\begin{assumption} \label{assumption causal}
The following conditions hold: (a) $\Gamma \perp \{Y(1), Y(0)\} \mid X$. (b) $A \perp \{Y(1), Y(0)\} \mid X, \Gamma=1$. (c) $\eta_0 \leq \P(A=1 \mid X, \Gamma=1) \leq 1-\eta_0$ for some constant $0<\eta_0<1$.
\end{assumption}

Assumption~\ref{assumption causal} includes standard identification conditions in causal inference and semi-supervised learning \citep{lesko2017generalizing,dahabreh2020extending,shi2023data}. Specifically, Assumption~\ref{assumption causal}(a) corresponds to ``ignorability'' or ``selection exchangeability,'' ensuring that the group assignment is independent of potential outcomes given covariates. Assumption~\ref{assumption causal}(b), commonly known as ``no unmeasured confounding'' or ``treatment exchangeability,'' implies that all relevant confounders between treatments and potential outcomes have been accounted for.
Assumption~\ref{assumption causal}(c), referred to as the ``overlap'' or ``positivity'' condition, ensures that each individual has a non-zero probability of receiving either treatment. Notably, Assumption~\ref{assumption causal} allows the labeling mechanism to depend on covariates and accommodates heterogeneity between distinct treatment groups.

\subsection{Achieving generalizability using summary statistics from unlabeled data}\label{sec: 4.A}

We begin by focusing on the generalization of causal conclusions to a broader population and consider the estimation of \(\tau_g = \E[Y(1) - Y(0)]\) over the entire population.

As discussed in Section~\ref{sec: causal}, we treat
\(\Gamma_{a} := \Gamma\mathbbm{1}_{\{A = a\}}\) as the ``effective labeling indicator'' for estimating the expected potential outcome \(\tau_{a,g} = \E[Y(a)]\) for each $a\in\{0,1\}$. Notably, since we have access to individualized information from the labeled data, we can also compute the covariate sample average for the group with \(\Gamma_{a,i} = \Gamma_i \mathbbm{1}_{\{A_i = a\}} = 0\), which includes individuals who either (1) belong to the unlabeled sample (\(\Gamma_i = 0\)) or (2) are from the labeled sample but received a different treatment (\(\Gamma_i = 1\) and \(A_i \neq a\)):
\[\bar X_{a,0} = \frac{\sum_{i=1}^n\p{1-\Gamma_{a,i}}X_i}{\sum_{i=1}^n\p{1-\Gamma_{a,i}}} = \frac{\p{n-\sum_{i=1}^n \Gamma_i}\bar{X}_{0} + \sum_{i=1}^n \Gamma_i\mathbbm{1}_{(A_i\neq a)}X_i}{n-\sum_{i=1}^n\Gamma_{a,i}}.\]

In the following, we extend the mean estimation method from Section~\ref{sec: mar method} to the ATE estimation problem by replacing the original labeling indicator \(\Gamma_i\) with \(\Gamma_{a,i}\) for each treatment arm \(a \in \{0,1\}\) and imputing the individualized covariates for the group with \(\Gamma_{a,i} = 0\) using the sample mean \(\bar{X}_{a,0}\). The detailed construction is outlined below.

Step 1: Divide the index set $[n]$ into $K$ disjoint subsets $\mathcal{I}_1, \dots, \mathcal{I}_K$ with equal sizes such that $n_k := \abs{\mathcal{I}_k} = n/K \text{ for } k \in [K]$. Let $\mathcal{I}_{-k} = [n] \setminus \mathcal{I}_k$ and separate $\mathcal{I}_{-k}$ into two disjoint subsets $\mathcal{I}_{-k,\alpha}, \mathcal{I}_{-k,\beta}$ with the same size $M = \abs{\mathcal{I}_{-k,\alpha}} = \abs{\mathcal{I}_{-k,\beta}}$.

Step 2: For each $a\in \{0,1\}$ and $k \in [K]$, construct the propensity score estimator as
\[\widehat \alpha^{(-k)}_{a,g} = \argmin_{\alpha \in \R^d} \left[M^{-1}\sum_{i \in \mathcal{I}_{-k,\alpha}} \left\{(1-\Gamma_{a,i})\bar{X}_{a,0}^\top \alpha + \Gamma_{a,i} \exp(-X_i^\top \alpha)\right\} + \lambda_\alpha \norm{\alpha}_1\right].\]

Step 3: For each $a\in \{0,1\}$ and $k \in [K]$, construct the outcome regression estimator as
\[\widehat \beta^{(-k)}_{a,g} = \argmin_{\beta \in \R^d} \left\{ M^{-1}\sum_{i \in \mathcal{I}_{-k,\beta}} \Gamma_{a,i} \exp(-X_i^\top \widehat \alpha_{a,g}^{(-k)})(Y_i - X_i^\top \beta)^2 + \lambda_\beta \norm{\beta}_1 \right\}.\]

Step 4: Let $g(t)=e^t/(1+e^t)$ be the logistic function. For each $a \in \{0,1\}$, define the cross-fitted estimator of $\tau_{a,g}$ as
\[\widehat \tau_{a,g} = n^{-1}\sum_{k=1}^K\sum_{i\in \mathcal{I}_k} \p{\Gamma_{a,i}X_i + (1-\Gamma_{a,i})\bar{X}_{a,0}}^\top \widehat \beta^{(-k)}_{a,g}  + n^{-1}\sum_{k=1}^K\sum_{i\in \mathcal{I}_k} \frac{\Gamma_{a,i}\p{Y_i - X_i^\top\widehat \beta^{(-k)}_{a,g}}}{g\p{X_i^\top \widehat \alpha^{(-k)}_{a,g}}}.\]

Step 5: We propose the estimator for the ATE over the entire population:
\begin{equation}\label{def:tau_g}
    \widehat \tau_{g}=\widehat \tau_{1,g}-\widehat \tau_{0,g}.
\end{equation}
When we have access to second-moment information from the unlabeled data, we can further estimate the corresponding asymptotic variance as
\begin{align*}
    \widehat \Sigma_{g}^2 &= n^{-1}\sum_{k=1}^K \sum_{i \in \mathcal{I}_k}\p{1-\Gamma_i}\p{\widehat \beta_{1,g}^{(-k)} -  \widehat \beta_{0,g}^{(-k)}}^\top\bar\Xi_0\p{\widehat \beta_{1,g}^{(-k)} -  \widehat \beta_{0,g}^{(-k)}}\\
    &\qquad + n^{-1}\sum_{k=1}^K \sum_{i \in \mathcal{I}_k}\Gamma_iA_i\left\{X_i^\top \widehat \beta^{(-k)}_{1,g} - X_i^\top \widehat \beta^{(-k)}_{0,g} + \frac{Y_i - X_i^\top \widehat \beta^{(-k)}_{1,g} }{g\p{X_i^\top \widehat \alpha^{(-k)}_{1,g}}}\right\}^2\\
    &\qquad + n^{-1}\sum_{k=1}^K \sum_{i \in \mathcal{I}_k}\Gamma_i\p{1-A_i}\left\{X_i^\top \widehat \beta^{(-k)}_{1,g} - X_i^\top \widehat \beta^{(-k)}_{0,g} - \frac{Y_i - X_i^\top \widehat \beta^{(-k)}_{0,g} }{g\p{X_i^\top \widehat \alpha^{(-k)}_{0,g}}}\right\}^2 - \widehat \tau_g^2.
\end{align*}

To establish the theoretical properties of the proposed methods, for each $a \in \{0,1\}$, we define the target nuisance parameters as
\begin{align*}
    \alpha^*_{a,g} &= \argmin_{\alpha \in \R^d} \E\{\p{1-\Gamma_{a}}X^\top \alpha + \Gamma_{a} \exp\p{-X^\top \alpha} \},\\
    \beta^*_{a,g} &= \argmin_{\beta \in \R^d} \E\{\Gamma_{a} \exp\p{-X^\top  \alpha^*_{a,g}}\p{Y - X^\top \beta}^2\}.
\end{align*}
Let $w_{a,g} = Y\p{a} - X^\top \beta^*_{a,g}$ be the residuals from the outcome regression models. The following theorem characterizes the asymptotic behavior of the proposed ATE estimator $\widehat \tau_{g}$ of $\tau_g$.

\begin{theorem}\label{theorem causal generalizability Asymptotics body}
Let Assumption~\ref{assumption causal} hold, and suppose that, for each \(a \in \{0,1\}\), Assumption~\ref{assumption mar nuisance (a)} holds with \((\alpha^*_{PS}, \beta^*_{OR}, w_{OR})\) replaced by \((\alpha^*_{a,g}, \beta^*_{a,g}, w_{a,g})\). Assume that \(n\gamma_n \gg  (\log n)^2\log d\),
\[
    \|\alpha^*_{a,g}\|_0 \|\beta^*_{a,g}\|_0
    = o((n\gamma_n)^{1/2}/\{\log n(\log d)^2\}),
\]
and that either of the following conditions holds: (1) (correct OR model) \(\E\{Y(a) \mid X\} = X^\top \beta^*_{a,g}\), or (2) (correct PS model) \(\P(\Gamma_{a} = 1 \mid X) = g\p{X^\top \alpha^*_{a,g}}\) and \(\|\alpha^*_{a,g}\|_0 = o((n\gamma_n)^{1/2}/ \log d)\). Then as \(n, d \rightarrow \infty\), \(\widehat \tau_{g} -\tau_{g} = O_p \p{(n\gamma_n)^{-1/2}}\), \(\widehat \Sigma^2_{g} = \Sigma_g^2\{1 + o_p(1)\}\), and \(\widehat\Sigma^{-1}_{g}n^{1/2}(\widehat \tau_{g} - \tau_{g})\xrightarrow{d} \mathcal{N}(0,1)\), where
\[
    \Sigma_g^2 = \Var\left(X^\top \beta^*_{1,g}- X^\top \beta^*_{0,g}
    + \frac{\Gamma A}{g\p{X^\top \alpha^*_{1,g}}}(Y - X^\top \beta^*_{1,g})
    - \frac{\Gamma\p{1-A}}{g\p{X^\top \alpha^*_{0,g}}}(Y - X^\top \beta^*_{0,g})\right).
\]
\end{theorem}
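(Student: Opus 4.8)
The plan is to reduce the ATE result to two applications of the mean-estimation theory developed in Section~\ref{sec: mar}, one for each treatment arm. Concretely, for each $a\in\{0,1\}$ I would replace the original labeling indicator $\Gamma_i$ by the effective indicator $\Gamma_{a,i}=\Gamma_i\mathbbm{1}_{\{A_i=a\}}$ and the external covariate mean $\bar X_0$ by $\bar X_{a,0}$, and then check that all the hypotheses of Theorem~\ref{theorem mar generalizability Asymptotics body}(a) are inherited. First I would verify the analogue of Assumption~\ref{assumption mar} for the pair $(\Gamma_a, X, Y(a))$: the MAR condition $\Gamma_a\perp Y(a)\mid X$ follows by combining Assumption~\ref{assumption causal}(a) and (b) (the group membership and treatment are each conditionally independent of the potential outcomes given $X$, and $\Gamma_a$ is a deterministic function of the two), and positivity $\P(\Gamma_a=1\mid X)>0$ follows from $\gamma_n(X)>0$ together with the overlap bound $\eta_0\le\P(A=1\mid X,\Gamma=1)\le 1-\eta_0$ in Assumption~\ref{assumption causal}(c). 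I would also note that $\mathbb{E}[Y(a)]=\theta_g$ for this reduced problem equals $\tau_{a,g}$ by the identification argument, and that $\gamma_{n,a}:=\P(\Gamma_a=1)\asymp\gamma_n$ because $\gamma_{n,a}=\mathbb{E}[\gamma_n(X)\P(A=a\mid X,\Gamma=1)]$ is pinched between $\eta_0\gamma_n$ and $\gamma_n$; hence $n\gamma_{n,a}\asymp n\gamma_n$ and the rate and sparsity conditions translate verbatim.

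Next I would confirm the regularity conditions: Assumption~\ref{assumption mar nuisance (a)}, stated with $(\alpha^*_{PS},\beta^*_{OR},w_{OR})$ replaced by $(\alpha^*_{a,g},\beta^*_{a,g},w_{a,g})$, is exactly what the theorem assumes, so nothing further is needed there. The only genuinely new point relative to a black-box invocation is that $\bar X_{a,0}$ is not a pure ``external'' summary: it mixes the true external mean with a primary-data average over the off-arm units ($\Gamma_i=1$, $A_i\ne a$). I would handle this by observing that the decomposition $\bar X_{a,0}=\frac{(n-n_P)\bar X_0+\sum_i\Gamma_i\mathbbm{1}_{\{A_i\ne a\}}X_i}{n-\sum_i\Gamma_{a,i}}$ writes $\bar X_{a,0}$ as an average of i.i.d.\ quantities of exactly the form required by the Section~\ref{sec: mar} analysis, so the conditional-independence device in Remark~\ref{remark:challenge of cross-fitting} still applies: $\widehat\beta^{(-k)}_{a,g}$ is built from the complement folds and its oracle version $\widetilde\beta^{(-k)}_{a,g}$ (with $\alpha^*_{a,g}$ in place of $\widehat\alpha^{(-k)}_{a,g}$) is conditionally independent of $\bar X_{a,0}$ given all the $\Gamma_{a,i}$, while the residual $\Delta=\widehat\beta^{(-k)}_{a,g}-\widetilde\beta^{(-k)}_{a,g}$ is controlled in $\ell_1$-norm via Lemma~\ref{lemma consistency of betahat and betatilde for product sparsity} under the product-sparsity condition. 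With these checks in place, Theorem~\ref{theorem mar generalizability Asymptotics body}(a) gives, for each $a$, $\widehat\tau_{a,g}-\tau_{a,g}=O_p((n\gamma_n)^{-1/2})$ together with the asymptotically linear expansion $\sqrt n(\widehat\tau_{a,g}-\tau_{a,g})=n^{-1/2}\sum_i\psi_{a,i}+o_p(1)$, where $\psi_a$ is the influence function $X^\top\beta^*_{a,g}+\frac{\Gamma_a}{g(X^\top\alpha^*_{a,g})}(Y-X^\top\beta^*_{a,g})-\tau_{a,g}$.

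Then the ATE result follows by taking differences: $\sqrt n(\widehat\tau_g-\tau_g)=n^{-1/2}\sum_i(\psi_{1,i}-\psi_{0,i})+o_p(1)$, so by the CLT $\widehat\Sigma_g^{-1}\sqrt n(\widehat\tau_g-\tau_g)\xrightarrow{d}\mathcal N(0,1)$ with $\Sigma_g^2=\Var(\psi_1-\psi_0)$, which after substituting $\Gamma_1=\Gamma A$, $\Gamma_0=\Gamma(1-A)$ is precisely the stated expression. For consistency of $\widehat\Sigma_g^2$ I would mirror the variance-estimator argument of Theorem~\ref{theorem mar generalizability Asymptotics body}: the cross-products $(\widehat\beta^{(-k)}_{1,g}-\widehat\beta^{(-k)}_{0,g})^\top\bar\Xi_0(\widehat\beta^{(-k)}_{1,g}-\widehat\beta^{(-k)}_{0,g})$ consistently estimate the external second-moment piece $\mathbb{E}[(1-\Gamma)(X^\top(\beta^*_{1,g}-\beta^*_{0,g}))^2]$ once one shows $\widehat\beta^{(-k)}_{a,g}\to\beta^*_{a,g}$ in $\ell_1$ at the Lasso rate and uses a uniform law of large numbers for $\bar\Xi_{0,\bar S}$ on the (random but low-cardinality) selected support, while the two primary-data sums converge to the within-arm parts of $\Var(\psi_1-\psi_0)$ by a standard LLN combined with the nuisance consistency. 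The main obstacle I anticipate is the bookkeeping around $\bar X_{a,0}$ and its interaction with the specialized cross-fitting — making rigorous that replacing $\Gamma_i$ by $\Gamma_{a,i}$ does not break the conditional-independence structure that Section~\ref{sec: mar} relies on, and verifying that the off-arm primary units contribute the right i.i.d.\ terms so that the ``effective sample size'' is still $n\gamma_n$ rather than something smaller; once that reduction is clean, the rest is a direct transcription of the Section~\ref{sec: mar} proof plus an application of the continuous mapping theorem to the difference.
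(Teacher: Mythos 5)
Your proposal is correct and follows essentially the same route as the paper: the paper likewise first verifies that the effective-indicator problem $(\Gamma_a, X, Y(a))$ inherits the MAR/overlap/moment conditions with $\gamma_{a,g}\asymp\gamma_n$ (its Lemma \ref{lemma causal generalizability}), then reruns the Section \ref{sec: mar} machinery — oracle Lyapunov CLT, nuisance rates, and the $\widehat\tau_g-\widetilde\tau_g=o_p((n\gamma_n)^{-1/2})$ bound — per arm, and handles the variance estimator's cross-arm terms explicitly. The one place a pure black-box per-arm invocation would not suffice is exactly the point you flag, namely $\widehat\Sigma_g^2$ mixing $\widehat\beta^{(-k)}_{1,g}$ and $\widehat\beta^{(-k)}_{0,g}$ and the conditional-independence bookkeeping around $\bar X_{a,0}$, and your treatment of both matches the paper's.
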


Theorem~\ref{theorem causal generalizability Asymptotics body} establishes robust inference for \(\tau_g\) under the condition that at least one nuisance model is correctly specified for each treatment arm \(a \in \{0,1\}\). When all nuisance models are correctly specified, the asymptotic variance aligns with that of existing studies \citep{kallus2025role,zhang2023double}, yet our approach does not require full access to individualized covariates from the unlabeled data.

Notably, our framework is well-suited for settings where the labeled data arise from randomized trials (individuals with \(\Gamma_i = 1\)), while unlabeled non-participants (\(\Gamma_i = 0\)) provide only summary statistics. This approach also extends to cases where the labeled data come from observational studies. In such instances, ensuring the ``no unmeasured confounding'' condition (Assumption~\ref{assumption causal}(b)) often requires the collection of a large number of covariates, resulting in a high-dimensional problem.

\subsection{Reaching transportability using unlabeled summary statistics}\label{sec: 4.B}

We now shift our focus to the transport of findings from the labeled data to the unlabeled population to estimate the ATE parameter \(\tau_t = \E\{Y(1) - Y(0) \mid \Gamma = 0\}\), commonly known as the transportability problem.

As discussed in Section~\ref{sec: causal}, we treat \(\Gamma_{a} = \Gamma \mathbbm{1}_{(A = a)}\) as the ``effective labeling indicator'' and exclude the samples with \(\Omega_{a,i} = 0\), where \(\Omega_{a,i} = 1 - \Gamma_i + \Gamma_{a,i}\). Below, we propose estimates for \(\tau_{a,t}\) and \(\tau_t=\tau_{1,t}-\tau_{0,t}\) based on the procedure outlined in Section~\ref{sec: mar method}.

Step 1: Divide the index set $[n]$ into $K$ disjoint subsets $\mathcal{I}_1, \dots, \mathcal{I}_K$ with equal sizes such that $n_k := \abs{\mathcal{I}_k} = n/K \text{ for } k \in [K]$. Let $\widehat\gamma_k = n_k^{-1}\sum_{i\in \mathcal{I}_k} \Gamma_i$, $\mathcal{I}_{-k} = [n] \setminus \mathcal{I}_k$, and separate $\mathcal{I}_{-k}$ into two disjoint subsets $\mathcal{I}_{-k,\alpha}, \mathcal{I}_{-k,\beta}$ with the same size $M = \abs{\mathcal{I}_{-k,\alpha}} = \abs{\mathcal{I}_{-k,\beta}}$.

Step 2: For each $a\in \{0,1\}$ and $k \in [K]$, construct the propensity score estimator as
\[\widehat \alpha^{(-k)}_{a,t} = \argmin_{\alpha \in \R^d} \left[ M^{-1}\sum_{i \in \mathcal{I}_{-k,\alpha}} \br{\p{1-\Gamma_i}\bar{X}_0^\top \alpha + \Gamma_{a,i} \exp\p{-X_i^\top \alpha}} + \lambda_\alpha \norm{\alpha}_1 \right].\]

Step 3: For each $a\in \{0,1\}$ and $k \in [K]$, construct the outcome regression estimator as
\[\widehat \beta^{(-k)}_{a,t} = \argmin_{\beta \in \R^d} \left\{ M^{-1}\sum_{i \in \mathcal{I}_{-k,\beta}} \Gamma_{a,i} \exp\p{-X_i^\top \widehat \alpha_{a,t}^{(-k)}}\p{Y_i - X_i^\top \beta}^2 + \lambda_\beta \norm{\beta}_1 \right\}.\]

Step 4: For each $a \in \{0,1\}$, define the cross-fitted estimator of $\tau_{a,t}$ as
\[\widehat \tau_{a,t} = K^{-1}\sum_{k=1}^K\bar{X}_0^\top \widehat \beta^{(-k)}_{a,t}  + n^{-1}\sum_{k=1}^K\sum_{i\in \mathcal{I}_k} \frac{\Gamma_{a,i}\exp\p{-X_i^\top \widehat \alpha^{(-k)}_{a,t}}}{1-\widehat \gamma_k}\p{Y_i - X_i^\top\widehat \beta^{(-k)}_{a,t}}.\]

Step 5: We propose the estimator for the ATE over the unlabeled population:
\[\widehat\tau_{t}=\widehat\tau_{1,t}-\widehat\tau_{0,t}.\]
When we have access to the matrix $\bar\Xi_0$, the corresponding asymptotic variance can be estimated as
\begin{align*}
    \widehat \Sigma_t^2 &= K^{-1}\sum_{k=1}^K \frac{\p{\widehat \beta_{1,t}^{(-k)} -  \widehat \beta_{0,t}^{(-k)}}^\top\bar \Xi_0\p{\widehat \beta_{1,t}^{(-k)} -  \widehat \beta_{0,t}^{(-k)}}-2 \widehat \tau_t \bar X_0^\top\p{\widehat \beta_{1,t}^{(-k)}-\widehat \beta_{0,t}^{(-k)}} + \widehat \tau_t^2}{1-\widehat \gamma_k}
    \\  
    &\quad+ n^{-1}\sum_{k=1}^{K}\sum_{i\in \mathcal{I}_k} \Gamma_iA_i\p{1-\widehat \gamma_k}^{-2}\exp\p{-2X_i^\top \widehat \alpha^{(-k)}_{1,t}}\p{Y_i - X_i^\top \widehat \beta_{1,t}^{(-k)}}^2 \\
    &\quad + n^{-1}\sum_{k=1}^{K}\sum_{i\in \mathcal{I}_k} \Gamma_i\p{1-A_i}\p{1-\widehat \gamma_k}^{-2}\exp\p{-2X_i^\top \widehat \alpha^{(-k)}_{0,t}}\p{Y_i - X_i^\top \widehat \beta_{0,t}^{(-k)}}^2.
\end{align*}

For each $a \in \{0,1\}$,  we define the target nuisance parameters as
\begin{align*}
    \alpha^*_{a,t} &= \argmin_{\alpha \in \R^d} \E\{\p{1-\Gamma}X^\top \alpha + \Gamma_{a} \exp\p{-X^\top \alpha} \},\\
    \beta^*_{a,t} &= \argmin_{\beta \in \R^d} \E\{\Gamma_{a} \exp\p{-X^\top  \alpha^*_{a,t}}\p{Y - X^\top \beta}^2\}.
\end{align*}
Let $w_{a,t} = Y\p{a} - X^\top \beta^*_{a,t}$ be the corresponding residuals and \(\Omega_a = 1-\Gamma+\Gamma_a\) be an independent copy of $\Omega_{a,i}$. The following theorem demonstrates the properties of the proposed estimator $\widehat\tau_{t}$ of $\tau_t$.

\begin{theorem}\label{theorem causal transportability Asymptotics body}
Let Assumption~\ref{assumption causal} hold, and suppose that, for each \(a\in\{0,1\}\), Assumption~\ref{assumption mar nuisance (a)} holds with \((\alpha^*_{PS}, \beta^*_{OR}, w_{OR})\) replaced by \((\alpha^*_{a,t}, \beta^*_{a,t}, w_{a,t})\). Assume that \(n\gamma_n \gg  (\log n)^2\log d\),
\[
    \|\alpha^*_{a,t}\|_0 \|\beta^*_{a,t}\|_0
    = o((n\gamma_n)^{1/2}/\{\log n(\log d)^2\}),
\]
and that either of the following conditions holds: (1) (correct OR model) \(\E\{Y(a) \mid X\} = X^\top \beta^*_{a,t}\), or (2) (correct PS model) \(\P(\Gamma_{a}=1 \mid X, \Omega_{a}=1) = g\p{X^\top \alpha^*_{a,t}}\) and \(\|\alpha^*_{a,t}\|_0 = o((n\gamma_n)^{1/2}/ \log d)\). Then as \(n, d \rightarrow \infty\),
\[
    \widehat \tau_{t} -\tau_{t} = O_p \p{(n\gamma_n)^{-1/2}}, \qquad
    \widehat \Sigma^2_{t} = \Sigma_t^2\{1 + o_p(1)\},
\]
and \(\widehat\Sigma^{-1}_{t}n^{1/2}(\widehat \tau_{t} - \tau_{t})\xrightarrow{d} \mathcal{N}(0,1)\), where
\[
    \begin{aligned}
    \Sigma_t^2 = \Var\bigg(&
    \frac{1-\Gamma}{1-\gamma_n}(X^\top \beta^*_{1,t} - X^\top \beta^*_{0,t} - \tau_{t})+ \frac{\Gamma A\exp\p{-X^\top \alpha^*_{1,t}}}{1-\gamma_n}(Y - X^\top \beta^*_{1,t})\\
    &\qquad- \frac{\Gamma \p{1-A}\exp\p{-X^\top \alpha^*_{0,t}}}{1-\gamma_n} (Y - X^\top \beta^*_{0,t})\bigg).
    \end{aligned}
\]
\end{theorem}

\section{Identifiability of target nuisance parameters}\label{sec: proof identifiability}

In this section, we discuss the identifiability of the target nuisance parameters.

\begin{lemma}\label{lemma least squared residual properties}
    Let Assumption~\ref{assumption mcar lasso (a)} hold. It holds that $\beta^* = \E[XX^\top]^{-1}\E[X^\top Y]$ and for $w = Y - X^\top \beta^*$,
    \begin{align}
        \E[w]=0 \quad \text{and} \quad \E[Xw] = \boldsymbol{0}.
    \end{align}
\end{lemma}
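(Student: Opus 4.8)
The plan is to derive everything directly from the first-order optimality (KKT) conditions of the population least-squares problem that defines $\beta^*$. Since $\beta^*$ minimizes the convex, smooth objective $Q(\beta) = \E[(Y - X^\top\beta)^2]$ over $\R^d$, and since Assumption \ref{assumption mcar lasso (a)}(a) guarantees $\inf_{\|v\|_2=1}\E[(X^\top v)^2]\ge\kappa_l>0$, the Hessian $2\E[XX^\top]$ is positive definite, so the minimizer is unique and is characterized by the stationarity condition $\nabla Q(\beta^*) = -2\E[X(Y - X^\top\beta^*)] = \boldsymbol{0}$. First I would verify that differentiation under the expectation is justified: the sub-Gaussian moment bounds on $X$ and on $w = Y - X^\top\beta^*$ (hence on $Y = X^\top\beta^* + w$, using the intercept coordinate $X[1]=1$) ensure all relevant second moments are finite, so $Q$ is finite-valued and continuously differentiable with the gradient above.

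Next I would read off the two displayed conclusions. From $\nabla Q(\beta^*)=\boldsymbol{0}$ we immediately get $\E[X(Y - X^\top\beta^*)] = \E[Xw] = \boldsymbol{0}$, which is the second claim. Since the first coordinate of $X$ is identically one, the first coordinate of this vector identity reads $\E[w] = \E[X[1]\,w] = 0$, giving the first claim. Finally, expanding $\E[Xw] = \E[XY] - \E[XX^\top]\beta^* = \boldsymbol{0}$ and left-multiplying by the inverse of the invertible matrix $\E[XX^\top]$ yields the closed form $\beta^* = \E[XX^\top]^{-1}\E[X^\top Y]$ (with the understanding that $\E[X^\top Y]$ denotes the vector $\E[XY]$). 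Conversely, one checks this $\beta^*$ satisfies the stationarity condition, confirming it is indeed the minimizer.

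This argument is essentially routine; the only mild subtlety — and thus the ``main obstacle,'' though it is a small one — is the interchange of expectation and differentiation (equivalently, confirming $Q$ is genuinely finite and differentiable), which is handled cleanly by the sub-Gaussianity assumptions. One could alternatively avoid calculus entirely: for any direction $v\in\R^d$ and scalar $t$, the function $t\mapsto Q(\beta^*+tv) = Q(\beta^*) - 2t\,\E[(X^\top v)w] + t^2\E[(X^\top v)^2]$ is a parabola minimized at $t=0$, forcing $\E[(X^\top v)w]=0$ for all $v$, hence $\E[Xw]=\boldsymbol{0}$; the rest follows as above. I would present whichever of these two routes is shorter in the write-up, defaulting to the KKT/gradient phrasing since the theorem statement already invokes "KKT conditions."
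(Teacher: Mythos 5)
Your proof is correct. The paper does not actually supply an argument for this lemma---it simply cites Lemma 1 of \cite{zhang2019semi}---and your derivation via the first-order condition $\E[X(Y-X^\top\beta^*)]=\boldsymbol{0}$ (using the intercept coordinate $X[1]=1$ for $\E[w]=0$ and positive definiteness of $\E[XX^\top]$ from Assumption \ref{assumption mcar lasso (a)} for uniqueness and the closed form) is exactly the standard argument that citation stands in for.
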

Lemma~\ref{lemma least squared residual properties} follows from Lemma 1 of \cite{zhang2019semi}.

\begin{lemma}\label{lemma oracle plm properties}
    Let Assumption~\ref{assumption mcar plm (a)} hold. For $\widetilde X =  X - \E\sbr{X \mid Z}$ and $\widetilde Y = Y - \E\sbr{Y \mid Z}$, it holds that 
    \begin{align}
    \beta^*_{plm} = \E\sbr{\widetilde X\widetilde X^\top}^{-1}\E\sbr{ X\widetilde Y} \text{ and } f^*\p{Z} = \E\sbr{Y - X^\top\beta^*_{plm} \mid Z}.\label{solution to plm oracle problem}
\end{align}
In addition, for $\epsilon = Y - X^\top \beta^*_{plm} - f^*(Z)$, it holds that
    \begin{align}
        \E\sbr{\epsilon \mid Z} = 0 \quad \text{and}\quad \E\sbr{ X\epsilon} = \boldsymbol{0}.
    \end{align}
\end{lemma}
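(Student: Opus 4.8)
Proof proposal for Lemma 2.3 (the partially linear model oracle characterization).

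The plan is to exploit the first-order optimality conditions of the population minimization problem defining $(\beta^*_{plm}, f^*)$, mirroring the classical Robinson-style partialling-out argument. First I would write the objective $L(\beta, f) = \E[(Y - X^\top \beta - f(Z))^2]$ and take variations separately in $f$ and in $\beta$. Varying $f$ at the optimum: for any admissible perturbation $h \in \mathcal{F}$, the Gateaux derivative condition gives $\E[(Y - X^\top\beta^*_{plm} - f^*(Z)) h(Z)] = 0$; since this holds for a rich enough class of $h$ (in particular approximating indicators of $Z$-measurable sets), it forces $\E[Y - X^\top\beta^*_{plm} - f^*(Z) \mid Z] = 0$ a.s., which is exactly $f^*(Z) = \E[Y - X^\top\beta^*_{plm} \mid Z]$ and also immediately yields $\E[\epsilon \mid Z] = 0$ for $\epsilon = Y - X^\top\beta^*_{plm} - f^*(Z)$. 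Varying $\beta$: the stationarity condition is $\E[(Y - X^\top\beta^*_{plm} - f^*(Z)) X] = \boldsymbol{0}$, i.e. $\E[X \epsilon] = \boldsymbol{0}$, which is the second displayed identity.

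Next I would derive the closed form for $\beta^*_{plm}$. Substituting $f^*(Z) = \E[Y\mid Z] - \E[X\mid Z]^\top \beta^*_{plm}$ into the $\beta$-stationarity condition and writing $\widetilde X = X - \E[X\mid Z]$, $\widetilde Y = Y - \E[Y\mid Z]$, the residual becomes $Y - X^\top\beta^*_{plm} - f^*(Z) = \widetilde Y - \widetilde X^\top \beta^*_{plm}$. Plugging in: $\E[X(\widetilde Y - \widetilde X^\top \beta^*_{plm})] = \boldsymbol 0$. Because $\widetilde Y - \widetilde X^\top\beta^*_{plm}$ is orthogonal to any $Z$-measurable function (it has conditional mean zero given $Z$ by the $f$-optimality condition), I can replace $X$ by $\widetilde X$ in this expression, giving $\E[\widetilde X \widetilde X^\top]\beta^*_{plm} = \E[\widetilde X \widetilde Y]$. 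The invertibility of $\E[\widetilde X\widetilde X^\top]$ is guaranteed by Assumption~\ref{assumption mcar plm (a)}(a), which states $\inf_{\|v\|_2=1}\E[(\widetilde X^\top v)^2] \geq \kappa_l > 0$, so $\beta^*_{plm} = \E[\widetilde X\widetilde X^\top]^{-1}\E[\widetilde X\widetilde Y]$. A final small step: $\E[\widetilde X\widetilde Y] = \E[X\widetilde Y]$ since $\E[\E[X\mid Z]\widetilde Y] = \E[\E[X\mid Z]\E[\widetilde Y \mid Z]] = \boldsymbol 0$, matching the stated form $\beta^*_{plm} = \E[\widetilde X\widetilde X^\top]^{-1}\E[X\widetilde Y]$.

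The main obstacle I anticipate is purely a matter of rigor rather than depth: justifying that the $f$-variation argument genuinely pins down the conditional expectation, which requires knowing that $\mathcal{F}$ (or its closure in $L^2$) is large enough to separate $Z$-measurable functions, and that the optimum is attained and interior so the Gateaux derivatives vanish. If $\mathcal{F}$ is assumed to be, say, a space containing all bounded measurable functions of $Z$ (or dense in $L^2(P_Z)$), this is immediate; otherwise $f^*$ should be interpreted as the $L^2$-projection and the conclusion $\E[\epsilon\mid Z]=0$ reads as orthogonality to $\overline{\mathcal{F}}$. I would also briefly note the integrability needed to differentiate under the expectation, which follows from the fourth-moment bounds in Assumption~\ref{assumption mcar plm (a)}(a). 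Everything else is routine algebra with iterated expectations.
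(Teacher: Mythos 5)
Your proposal is correct and follows essentially the same route as the paper: profile out $f$ (the paper invokes the bias--variance decomposition where you use a Gateaux-derivative argument, both yielding $f^*(Z)=\E[Y-X^\top\beta^*_{plm}\mid Z]$), then obtain the closed form for $\beta^*_{plm}$ from first-order optimality and the orthogonality $\E[\widetilde Y-\widetilde X^\top\beta^*_{plm}\mid Z]=0$, with $\E[\epsilon\mid Z]=0$ and $\E[X\epsilon]=\boldsymbol{0}$ following as in the paper. The only cosmetic difference is that you read $\E[X\epsilon]=\boldsymbol{0}$ directly off the $\beta$-stationarity condition, whereas the paper verifies it afterward by expanding $X=\widetilde X+\E[X\mid Z]$; your remark about $\mathcal{F}$ needing to be rich enough to contain the conditional expectation is a fair point that the paper leaves implicit.
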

\begin{proof}
Since $\p{\beta^*_{plm}, f^*} = \argmin_{\p{\beta, f} \in \R^d \times \mathcal{F}} \E\sbr{\p{Y - X^\top\beta-f\p{Z}}^2}$,
\begin{align*}
    f^* = \argmin_{f \in \R^d \times \mathcal{F}}\E\sbr{\p{Y - X^\top\beta^*_{plm}-f\p{Z}}^2}.
\end{align*}
By bias-variance decomposition, we have $f^*(Z) = \E\sbr{Y - X^\top\beta^*_{plm} \mid Z}$. On the other hand,
\begin{align*}
    \beta^*_{plm} = \argmin_{\beta \in \R^d} \E\sbr{\p{Y - X^\top\beta-f^*\p{Z}}^2}.
\end{align*}
By first order optimality, we have $\beta^*_{plm} = \E[\widetilde X\widetilde X^\top]^{-1}\E[X\widetilde Y]$, where $\widetilde X =  X - \E\sbr{X \mid Z}$ and $\widetilde Y = Y - \E\sbr{Y \mid Z}$.
Then it follows that
\begin{align*}
    \E\sbr{\epsilon \mid Z} &= \E\sbr{Y - X^\top \beta^*_{plm} - f^*\p{Z} \mid Z}=\E\sbr{Y - X^\top \beta^*_{plm} \mid Z} - f^*\p{Z} = 0,
\end{align*}
and
    \begin{align*}
        \E\sbr{X\epsilon} &= \E\sbr{\p{\widetilde X + \E\sbr{X \mid Z}}\p{\widetilde Y + \E\sbr{Y \mid Z} - \p{\widetilde X + \E\sbr{X \mid Z}}^\top \beta^*_{plm} - f^*\p{Z}}}\\
        &=\E\sbr{\p{\widetilde X + \E\sbr{X \mid Z}}\p{\widetilde Y -\widetilde X^\top \beta^*_{plm}}}\\
        &=\E\sbr{\widetilde X\p{\widetilde Y -\widetilde X^\top \beta^*_{plm}}} + \E\sbr{\E\sbr{X \mid Z}\p{\widetilde Y -\widetilde X^\top \beta^*_{plm}}}\\
        &=\E\sbr{\widetilde X\widetilde Y} - \E\sbr{X\widetilde Y} + \E\sbr{\E\sbr{X \mid Z}\E\sbr{\widetilde Y -\widetilde X^\top \beta^*_{plm} \mid Z}}\\
        &=-\E\sbr{\E\sbr{X \mid Z}\E\sbr{\widetilde Y \mid Z}} + \E\sbr{\E\sbr{X \mid Z}\E\sbr{\widetilde Y -\widetilde X^\top \beta^*_{plm} \mid Z}}=\boldsymbol{0},
    \end{align*}
    where the last equation comes from the fact that $\E[\widetilde X \mid Z] = \boldsymbol{0}$ and $ \E[\widetilde Y \mid Z] = 0$.
\end{proof}

\begin{lemma}\label{lemma oracle mar properties}
    Suppose that Assumption~\ref{assumption mar nuisance (a)} holds. Then
    \begin{align}
        \E\sbr{\Gamma\exp\p{-X^\top \alpha^*_{PS}}XX^\top}  \succcurlyeq k_0c_0\kappa_l\mathbf{I}_d,
    \end{align}
    which implies the uniqueness of \(\alpha_{PS}^*\) and \(\beta_{OR}^*\).
\end{lemma}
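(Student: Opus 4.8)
**Proof proposal for Lemma (identifiability of $\alpha_{PS}^*$ and $\beta_{OR}^*$ under MAR).**

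The plan is to establish the matrix lower bound $\E[\Gamma\exp(-X^\top\alpha^*_{PS})XX^\top]\succcurlyeq k_0 c_0\kappa_l\,\mathbf I_d$, and then read off identifiability of both nuisance parameters from the strict positive-definiteness of the relevant Hessians. First I would fix an arbitrary unit vector $v\in\R^d$ and write
\[
v^\top\E[\Gamma\exp(-X^\top\alpha^*_{PS})XX^\top]v=\E\!\left[\Gamma\exp(-X^\top\alpha^*_{PS})(X^\top v)^2\right].
\]
The key observation is that $\Gamma\exp(-X^\top\alpha^*_{PS})$ can be rewritten using the logistic parametrization: since $g(X^\top\alpha^*_{PS})=(1+\exp(-X^\top\alpha^*_{PS}))^{-1}$, we have $\exp(-X^\top\alpha^*_{PS})=(1-g(X^\top\alpha^*_{PS}))/g(X^\top\alpha^*_{PS})$. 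By Assumption \ref{assumption mar nuisance (a)}(a), this ratio is bounded below by $k_0(1-\gamma_n)/\gamma_n$ almost surely. Hence
\[
\E\!\left[\Gamma\exp(-X^\top\alpha^*_{PS})(X^\top v)^2\right]\ \geq\ k_0\,\frac{1-\gamma_n}{\gamma_n}\,\E\!\left[\Gamma (X^\top v)^2\right].
\]

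Next I would condition on $\Gamma=1$: $\E[\Gamma(X^\top v)^2]=\P(\Gamma=1)\,\E[(X^\top v)^2\mid\Gamma=1]=\gamma_n\,\E[(X^\top v)^2\mid\Gamma=1]$, and Assumption \ref{assumption mar nuisance (a)}(b) gives $\E[(X^\top v)^2\mid\Gamma=1]\geq\kappa_l$ for all unit $v$. Combining,
\[
v^\top\E[\Gamma\exp(-X^\top\alpha^*_{PS})XX^\top]v\ \geq\ k_0\,\frac{1-\gamma_n}{\gamma_n}\cdot\gamma_n\cdot\kappa_l\ =\ k_0(1-\gamma_n)\kappa_l\ \geq\ k_0 c_0\kappa_l,
\]
where the last inequality uses $1-\gamma_n=\P(\Gamma=0)\geq c_0$ from Assumption \ref{assumption mar nuisance (a)}(a). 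Since $v$ was an arbitrary unit vector, the matrix bound follows.

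For identifiability: the population objective defining $\alpha^*_{PS}$ is $\E[(1-\Gamma)X^\top\alpha+\Gamma\exp(-X^\top\alpha)]$, whose Hessian is $\E[\Gamma\exp(-X^\top\alpha)XX^\top]$; a parallel argument (the overlap bound in Assumption \ref{assumption mar nuisance (a)}(a) is stated at $\alpha^*_{PS}$, but strict convexity near the optimum plus coercivity of $\exp(-t)$ suffices, and one may alternatively argue the objective is strictly convex on the support of $X$) shows this Hessian is positive definite, so the minimizer is unique. Given $\alpha^*_{PS}$, the objective for $\beta^*_{OR}$ is the quadratic form $\E[\Gamma\exp(-X^\top\alpha^*_{PS})(Y-X^\top\beta)^2]$ with Hessian exactly $2\E[\Gamma\exp(-X^\top\alpha^*_{PS})XX^\top]\succcurlyeq 2k_0c_0\kappa_l\mathbf I_d\succ 0$, so $\beta^*_{OR}$ is uniquely determined. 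I expect the only genuinely delicate point to be making the strict convexity argument for $\alpha^*_{PS}$ fully rigorous — the Hessian $\E[\Gamma\exp(-X^\top\alpha)XX^\top]$ must be controlled uniformly in a neighborhood of the optimum (or one invokes that $\exp(-t)$ is strictly convex so the population objective is strictly convex in $\alpha$ as long as $\E[\Gamma XX^\top]\succ 0$, which again follows from Assumption \ref{assumption mar nuisance (a)}(b)); everything else is a direct chain of the three stated inequalities.
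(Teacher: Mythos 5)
Your proposal is correct and follows essentially the same route as the paper: lower-bound $\exp(-X^\top\alpha^*_{PS})$ by $k_0(1-\gamma_n)/\gamma_n$ via the overlap condition, use $\P(\Gamma=0)\geq c_0$ and $\E[(X^\top v)^2\mid\Gamma=1]\geq\kappa_l$ to obtain the stated matrix bound, and then invoke strong convexity of the two population objectives. The only difference is that you spell out the Hessian/convexity argument that the paper compresses into a single sentence, and your worry about uniform control of the Hessian for $\alpha^*_{PS}$ is resolved exactly as you suggest, since $\exp(-t)$ is strictly convex and $\E[\Gamma XX^\top]\succ 0$.
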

\begin{proof}
By Assumption~\ref{assumption mar nuisance (a)},
\begin{align*}
    \exp\p{-X^\top \alpha^*_{PS}} = \frac{1-g(X^\top \alpha^*_{PS})}{g(X^\top \alpha^*_{PS})} \geq k_0\frac{1-\gamma_n}{\gamma_n} \geq k_0c_0\gamma_n^{-1}.
\end{align*}
Thus,
    \begin{align*}
        \E\sbr{\Gamma\exp\p{-X^\top \alpha^*_{PS}}XX^\top} \succcurlyeq k_0c_0\gamma_n^{-1}\E\sbr{\Gamma XX^\top} = k_0c_0\E\sbr{XX^\top \mid \Gamma=1} \succcurlyeq k_0c_0\kappa_l\mathbf{I}_d.
    \end{align*}
Thus, strong convexity guarantees the uniqueness of \(\alpha_{PS}^*\) and \(\beta_{OR}^*\).
\end{proof}

\section{Consistency of nuisance estimators} \label{sec: appendix B}
In this section, we discuss the consistency of nuisance estimators in Sections~\ref{sec: mcar lasso method} and~\ref{sec: mar method}. Let the sparsity levels of $\beta^*$, $\alpha^*_{PS}$, and $\beta^*_{OR}$ be $s$, $s_\alpha$, and $s_\beta$, respectively. Propositions~\ref{proposition lasso consistency body} and~\ref{proposition mar nuisance consistency body} below characterize the convergence rates of the lasso estimator $\widehat \beta^{(-k)}$ and nuisance estimators $\widehat \alpha^{(-k)}_{PS}$ and $\widehat \beta^{(-k)}_{OR}$. The following analysis allows settings where $\gamma_n=\P(\Gamma=1)\to0$ as $n$ increases, and therefore differs from the standard lasso literature.

The proposition below demonstrates the convergence rate of the lasso estimator under missing completely at random (MCAR).
\begin{proposition}\label{proposition lasso consistency body}
   Let Assumptions~\ref{assumption mcar} and~\ref{assumption mcar lasso (a)} hold.
    Choose $\lambda_n \asymp (\log d/(n\gamma_n))^{1/2}$. If $$n\gamma_n \gg \max\br{s, \p{\log n}^2}\log d,$$ then as \(n, d \rightarrow \infty\),
    \[
        \norm{\widehat \beta^{(-k)} - \beta^*}_2
        = O_p\p{(s\log d/(n\gamma_n))^{1/2}}.
    \]
\end{proposition}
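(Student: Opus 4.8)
The plan is to prove this by adapting the standard oracle-inequality argument for the Lasso (as in, e.g., Bühlmann–van de Geer or the references cited in the paper: van de Geer, Javanmard–Montanari, Zhang–Bradic) to the setting with a random labeling indicator $\Gamma$ and a (possibly) decaying $\gamma_n$. The key structural fact is that, by Lemma \ref{lemma least squared residual properties}, the target $\beta^*$ satisfies $\E[Xw]=\mathbf{0}$ with $w=Y-X^\top\beta^*$ and both $X^\top\beta^*$ and $w$ sub-Gaussian by Assumption \ref{assumption mcar lasso (a)}. Since $\widehat\beta^{(-k)}$ is the minimizer of the $\Gamma$-weighted least squares objective over $\mathcal I_{-k}$ (effective sample size $N_{-k}:=\sum_{i\in\mathcal I_{-k}}\Gamma_i$, which concentrates around $|\mathcal I_{-k}|\gamma_n\asymp n\gamma_n$ by a binomial/Bernstein bound since $n\gamma_n\gg(\log n)^2\log d$), the standard basic inequality gives
\[
\frac{\sum_{i\in\mathcal I_{-k}}\Gamma_i\big((X_i^\top(\widehat\beta^{(-k)}-\beta^*)\big)^2}{N_{-k}}
\le 2\frac{\sum_{i\in\mathcal I_{-k}}\Gamma_i w_i X_i^\top(\widehat\beta^{(-k)}-\beta^*)}{N_{-k}}
+\lambda_n\big(\|\beta^*\|_1-\|\widehat\beta^{(-k)}\|_1\big).
\]

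The two ingredients I would establish are: (i) the \emph{empirical process / deviation bound} controlling the linear term, namely $\|N_{-k}^{-1}\sum_{i\in\mathcal I_{-k}}\Gamma_i w_i X_i\|_\infty = O_p(\sqrt{\log d/(n\gamma_n)})$, which follows from a union bound over the $d$ coordinates, each coordinate being an average of mean-zero products $\Gamma_i w_i X_i[j]$ of sub-exponential type (product of sub-Gaussians), together with the concentration of $N_{-k}$; the choice $\lambda_n\asymp\sqrt{\log d/(n\gamma_n)}$ then dominates this with high probability, which is the standard mechanism forcing the error into a cone $\|\Delta_{S^c}\|_1\le 3\|\Delta_S\|_1$ with $S=\mathrm{supp}(\beta^*)$, $|S|=s$. (ii) the \emph{restricted eigenvalue (RE) condition} for the $\Gamma$-weighted Gram matrix $\widehat\Sigma_{-k}:=N_{-k}^{-1}\sum_{i\in\mathcal I_{-k}}\Gamma_i X_iX_i^\top$; here one uses that $\E[\Gamma XX^\top\mid\Gamma=1]=\E[XX^\top]\succeq\kappa_l I_d$ (by MCAR, $\Gamma\perp X$), that $X$ restricted to $\Gamma=1$ is still sub-Gaussian with the same parameter, and then invokes a matrix-deviation/RE result (e.g., Rudelson–Zhou or the argument in Zhang–Bradic) to get that $\widehat\Sigma_{-k}$ satisfies the RE property over the cone with constant $\gtrsim\kappa_l$ on an event of probability $\to1$, provided the effective sample size $n\gamma_n\gg s\log d$. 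Combining (i), (ii), and the cone membership in the basic inequality yields the quadratic-in-$\|\Delta\|_2$ lower bound on the left and the $\lambda_n\sqrt{s}\|\Delta\|_2$ upper bound on the right, giving $\|\widehat\beta^{(-k)}-\beta^*\|_2=O_p(\lambda_n\sqrt s)=O_p(\sqrt{s\log d/(n\gamma_n)})$.

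I would carry out the steps in this order: first, the concentration of $N_{-k}$ around $n\gamma_n/K\cdot(K-1)\asymp n\gamma_n$ (so that dividing by $N_{-k}$ versus $n\gamma_n$ is harmless up to constants); second, the $\ell_\infty$ deviation bound for the score term and the verification that $\lambda_n$ dominates it; third, the RE condition for $\widehat\Sigma_{-k}$; fourth, assembling the basic inequality into the final rate. The main obstacle I anticipate is handling the decaying $\gamma_n$ and the randomness of $\Gamma$ simultaneously in steps two and three: one must get sub-exponential tail bounds for averages of $\Gamma_i w_iX_i[j]$ whose variance scales like $\gamma_n$ (not $1$), so the effective Bernstein rate is $\sqrt{\gamma_n\log d/n}=\sqrt{\log d/(n\gamma_n)}\cdot\gamma_n$, and likewise the RE analysis must be done \emph{conditionally on the $\Gamma_i$'s} (or on the event that $N_{-k}\asymp n\gamma_n$) so that the conditional law of $X_i$ given $\Gamma_i=1$ is the relevant sub-Gaussian law — this is exactly where the condition $n\gamma_n\gg\max\{s,(\log n)^2\}\log d$ enters, the $(\log n)^2$ factor ensuring $N_{-k}>0$ and well-concentrated with probability $1-o(1)$ after a union bound over the $K$ folds. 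Everything else is a routine transcription of the classical Lasso oracle inequality with $n$ replaced by the effective sample size $n\gamma_n$.
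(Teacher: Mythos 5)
Your proposal is correct and follows essentially the same route as the paper: basic inequality, an $\ell_\infty$ deviation bound for the sub-exponential products $\Gamma_i w_i X_i[j]$ with variance of order $\gamma_n$ (the paper's Lemma \ref{concentrate lem infty norm bound MCAR}, via the maximal inequality in Lemma \ref{Lemma Tail Bounds for Maximums}), a restricted-eigenvalue condition for the $\Gamma$-weighted Gram matrix in the ``RE with $\ell_1$-tolerance'' form (the paper's Lemma \ref{concentrate lem RE conditon MCAR}, taken from Lemma F.1 of \cite{zhang2023semi}), and the usual cone argument with effective sample size $n\gamma_n$. The only minor imprecision is your attribution of the $(\log n)^2\log d$ requirement to the concentration of $N_{-k}$; in the paper it arises from the Bernstein-type correction term $\log n\,(t+\log d)/(n\gamma_n)$ in the $\ell_\infty$ deviation bound (and the $\log n\log d$ term in the RE lemma), which must be dominated by $\sqrt{(t+\log d)/(n\gamma_n)}$.
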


To obtain the product sparsity condition under MAR, we define the oracle outcome estimator as
\begin{align*}
    \widetilde \beta_{OR}^{(-k)} = \argmin_{\beta \in \R^d} \left\{ M^{-1}\sum_{i \in \mathcal{I}_{-k,\beta}} \Gamma_i \exp\p{-X_i^\top \alpha^*_{PS}}\p{Y_i - X_i^\top \beta}^2 + \lambda_\beta \norm{\beta}_1 \right\}.
\end{align*}
The following proposition gives the convergence rate for nuisance estimators under MAR. 
\begin{proposition}\label{proposition mar nuisance consistency body}
    Let Assumption~\ref{assumption mar nuisance (a)} hold. Then as $n, d \rightarrow \infty$, it holds that

    (a) Choose $\lambda_\alpha \asymp (\log d/(n\gamma_n))^{1/2}$. If $n\gamma_n \gg \max\br{s_\alpha, \log n} \log d$, then
    \begin{align*}
        \norm{\widehat \alpha_{PS}^{(-k)} - \alpha^*_{PS}}_1
        &= O_p\p{s_\alpha(\log d/(n\gamma_n))^{1/2}},\\
        \norm{\widehat \alpha_{PS}^{(-k)} - \alpha^*_{PS}}_2
        &= O_p\p{(s_\alpha \log d/(n\gamma_n))^{1/2}}.
    \end{align*}

    (b) Choose $\lambda_\beta \asymp (\log d/(n\gamma_n))^{1/2}$. If $n\gamma_n \gg \max\br{s_\beta, (\log n)^2} \log d$, then
    \begin{align*}
        \norm{\widetilde\beta_{OR}^{(-k)} - \beta^*_{OR}}_1
        &= O_p\p{s_\beta (\log d/(n\gamma_n))^{1/2}},\\
        \norm{\widetilde\beta_{OR}^{(-k)} - \beta^*_{OR}}_2
        &= O_p\p{(s_\beta \log d/(n\gamma_n))^{1/2}}.
    \end{align*}
    
    (c) Choose $\lambda_\alpha \asymp \lambda_\beta \asymp (\log d/(n\gamma_n))^{1/2}$. If $n\gamma_n \gg \max\br{s_\alpha, s_\beta\log n , (\log n)^2} \log d$ and $s_\alpha s_\beta \ll {(n\gamma_n)^{3/2}}/{\log n(\log d)^2}$, then 
    \begin{align*}
    \norm{\widehat \beta_{OR}^{(-k)} - \widetilde \beta_{OR}^{(-k)}}_1 = O_p\p{(s_\alpha s_\beta \log d/(n\gamma_n))^{1/2}} \text{ and }\norm{\widehat \beta_{OR}^{(-k)} - \widetilde \beta_{OR}^{(-k)}}_2 = O_p\p{ (s_\alpha \log d/(n\gamma_n))^{1/2}}.
    \end{align*}
\end{proposition}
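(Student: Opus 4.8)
The plan is to treat the three statements as instances of the by-now-standard analysis of $\ell_1$-penalized $M$-estimation in high dimensions, with the genuinely new difficulty isolated in part (c). In every case the relevant loss is convex, the population Hessian at the target parameter is bounded below --- by $k_0c_0\kappa_l\mathbf{I}_d$ via Lemma \ref{lemma oracle mar properties} for parts (a) and (b), and by a controlled perturbation of it for part (c) --- and the argument splits into: (i) a restricted strong convexity (RSC) bound for the \emph{empirical} Hessian over the usual cone $\{v:\|v_{S^c}\|_1\le 3\|v_S\|_1\}$, $S$ being the support of the target; and (ii) a high-probability bound of the form $\|\nabla\widehat L(\theta^*)\|_\infty=O_p(\sqrt{\log d/(n\gamma_n)})$ on the empirical gradient at the target. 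Given (i)--(ii), choosing the tuning parameter $\asymp\sqrt{\log d/(n\gamma_n)}$ and running the usual basic-inequality/cone argument delivers the stated $\ell_1$ and $\ell_2$ rates. Throughout, the ``effective sample size'' is $n\gamma_n$ rather than $n$: only the $\Gamma_i=1$ observations (of which there are $\asymp n\gamma_n$) carry information in each loss, and the weights $e^{-X^\top\alpha^*_{PS}}\asymp\gamma_n^{-1}$ (Assumption \ref{assumption mar nuisance (a)}(a)) exactly rescale their contribution.

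For part (a), the propensity-score loss $\widehat L_\alpha(\alpha)=M^{-1}\sum_{i\in\mathcal I_{-k,\alpha}}\{(1-\Gamma_i)\bar X_0^\top\alpha+\Gamma_i e^{-X_i^\top\alpha}\}$ is convex, with gradient at $\alpha^*_{PS}$ equal to $M^{-1}\sum_i\{(1-\Gamma_i)\bar X_0-\Gamma_i e^{-X_i^\top\alpha^*_{PS}}X_i\}$; its population mean vanishes by the first-order optimality of $\alpha^*_{PS}$ (using $\E[(1-\Gamma)X]=(1-\gamma_n)\E[X\mid\Gamma=0]$), and substituting $\bar X_0$ for $X_i$ in the linear term costs only an $O_p(n^{-1/2})$ error since $\bar X_0$ is an $O_p(n^{-1/2})$-consistent estimate of $\E[X\mid\Gamma=0]$ (its mild dependence on the subsample is harmless at this precision). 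Because each coordinate of $\Gamma_i e^{-X_i^\top\alpha^*_{PS}}X_i$ has second moment $\asymp\gamma_n^{-1}$, a union bound over $d$ coordinates gives $\|\nabla\widehat L_\alpha(\alpha^*_{PS})\|_\infty=O_p(\sqrt{\log d/(n\gamma_n)})$. For RSC, the empirical Hessian $M^{-1}\sum_i\Gamma_i e^{-X_i^\top\alpha}X_iX_i^\top$ is controlled by localizing with $e^{-X_i^\top\alpha}\le e^{-X_i^\top\alpha^*_{PS}}e^{|X_i^\top(\alpha-\alpha^*_{PS})|}$, the sub-Gaussianity of $X$, and Lemma \ref{lemma oracle mar properties}; this needs $n\gamma_n\gg s_\alpha\log d$, the extra $\log n$ absorbing a tail-control step. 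Part (b) is the same template for the quadratic loss $\widehat L_\beta(\beta)=M^{-1}\sum_i\Gamma_i e^{-X_i^\top\alpha^*_{PS}}(Y_i-X_i^\top\beta)^2$ with \emph{known} weights: its gradient at $\beta^*_{OR}$ is $-2M^{-1}\sum_i\Gamma_i e^{-X_i^\top\alpha^*_{PS}}X_iw_{OR,i}$, mean zero by the definition of $\beta^*_{OR}$, with coordinatewise second moment again $\asymp\gamma_n^{-1}$; the unboundedness of $w_{OR}$ is what forces $(\log n)^2$ rather than $\log n$ in the sample-size condition (truncation of the residual), while the RSC step is identical to (a), and the oracle inequality gives the rates for $\widetilde\beta^{(-k)}_{OR}$.

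Part (c) is the crux. The estimators $\widehat\beta^{(-k)}_{OR}$ and $\widetilde\beta^{(-k)}_{OR}$ minimize the \emph{same} quadratic-plus-$\ell_1$ objective except that the known weights $e^{-X_i^\top\alpha^*_{PS}}$ are replaced by the estimated $e^{-X_i^\top\widehat\alpha^{(-k)}_{PS}}$. I would subtract the two KKT conditions and, using that the unperturbed loss is quadratic with Hessian $\widehat H(\alpha)=M^{-1}\sum_i\Gamma_i e^{-X_i^\top\alpha}X_iX_i^\top$, write $\widehat H(\alpha^*_{PS})\,\delta=-\lambda_\beta(\widehat z-\widetilde z)-D(\widehat\beta^{(-k)}_{OR})$, where $\delta=\widehat\beta^{(-k)}_{OR}-\widetilde\beta^{(-k)}_{OR}$, $\widehat z\in\partial\|\widehat\beta^{(-k)}_{OR}\|_1$, $\widetilde z\in\partial\|\widetilde\beta^{(-k)}_{OR}\|_1$, and the plug-in perturbation is $D(\beta)=M^{-1}\sum_i\Gamma_i\{e^{-X_i^\top\widehat\alpha^{(-k)}_{PS}}-e^{-X_i^\top\alpha^*_{PS}}\}X_i(Y_i-X_i^\top\beta)$. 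Pairing with $\delta$, discarding $-\lambda_\beta\delta^\top(\widehat z-\widetilde z)\le0$ by subgradient monotonicity, and invoking RSC of $\widehat H$ (valid because part (a) keeps $\widehat\alpha^{(-k)}_{PS}$ in an $\ell_1$-neighborhood of $\alpha^*_{PS}$ on which the weights stay $\asymp\gamma_n^{-1}$) reduces matters to bounding $\|D(\widehat\beta^{(-k)}_{OR})\|_\infty$ and establishing the accompanying cone inequality for $\delta$. A mean-value expansion of $t\mapsto e^{-t}$ turns $D(\beta)$ into a bilinear form in $\widehat\alpha^{(-k)}_{PS}-\alpha^*_{PS}$ and the residuals $Y_i-X_i^\top\beta$; splitting $Y_i-X_i^\top\widetilde\beta^{(-k)}_{OR}=w_{OR,i}-X_i^\top(\widetilde\beta^{(-k)}_{OR}-\beta^*_{OR})$ and $\widehat\beta^{(-k)}_{OR}=\widetilde\beta^{(-k)}_{OR}+\delta$, and then combining the $\ell_1$ rate from (a), the $\ell_1$ and $\ell_2$ rates from (b), and uniform concentration bounds for the weighted empirical cross-moments $M^{-1}\sum_i\Gamma_i e^{-X_i^\top\bar\alpha}X_{ij}X_{il}w_{OR,i}$ and $M^{-1}\sum_i\Gamma_i e^{-X_i^\top\bar\alpha}X_{ij}X_{il}X_{im}$ (uniformly over $\bar\alpha$ in a small $\ell_1$-ball about $\alpha^*_{PS}$), one obtains the needed bound on $\|D\|_\infty$, the terms carrying $\delta$ itself being absorbed by a standard fixed-point argument. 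Feeding this into the restricted-eigenvalue inequality and the compatibility factor $\sqrt{s_\beta}$ yields $\|\delta\|_2=O_p(\sqrt{s_\alpha\log d/(n\gamma_n)})$ and $\|\delta\|_1=O_p(\sqrt{s_\alpha s_\beta\log d/(n\gamma_n)})$. The product-sparsity condition $s_\alpha s_\beta\ll(n\gamma_n)^{3/2}/(\log n(\log d)^2)$ is precisely what keeps the plug-in perturbation $D$ negligible against $\lambda_\beta$ and makes the fixed-point argument close.

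The main obstacle is exactly the uniform control of these exponentially-weighted empirical cross-moments: the weights $e^{-X^\top\bar\alpha}$ are only of order $\gamma_n^{-1}$ and must be shown to stay so over an $\ell_1$-neighborhood of $\alpha^*_{PS}$, they multiply products of two or three sub-Gaussian covariates together with the heavier-tailed residual $w_{OR}$, and the bound has to hold uniformly over sparse $\ell_1$-balls of radii dictated by parts (a)--(b); extracting the correct power of $n\gamma_n$ and the right logarithmic factors from this step is what pins down the product-sparsity condition, while everything else is bookkeeping with restricted eigenvalues and cone inequalities.
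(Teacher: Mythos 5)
Your treatment of parts (a) and (b) matches the paper's: restricted strong convexity for each loss (the paper's Lemmas \ref{lemma mar RSC condition} and \ref{lemma mar delta l2 concentration}), an $\ell_\infty$ bound on the empirical gradient at the target (Lemmas \ref{lemma mar gradient infty norm} and \ref{lemma mar gradient infty norm for beta}, where the substitution of $\bar X_0$ for the individual $X_i$'s is handled by an explicit $\ell_\infty$ decomposition), and the standard oracle inequality. That part is fine.

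Part (c) is where your proposal has a genuine gap. You correctly subtract the two KKT conditions, discard $-\lambda_\beta\delta^\top(\hat z-\tilde z)\le 0$, and isolate the perturbation $D$, but you then propose to close the argument by bounding $\norm{D}_\infty$ and pairing it with $\norm{\delta}_1$ via H\"older together with the compatibility $\norm{\delta}_1\lesssim\sqrt{s_\beta}\norm{\delta}_2$. This route cannot deliver the stated rates. The leading term of the $j$-th coordinate of $D$ is $M^{-1}\sum_i\Gamma_i e^{-X_i^\top\alpha^*_{PS}}X_{ij}\p{X_i^\top\widehat\Delta_\alpha}w_{OR,i}$, and the cross-moments $\E\sbr{\Gamma e^{-X^\top\alpha^*_{PS}}X_jX_lw_{OR}}$ do not vanish, so the best one can hope for is $\norm{D}_\infty=O_p\p{\norm{\widehat\Delta_\alpha}_2}$ (via Cauchy--Schwarz over $i$). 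Feeding $\kappa\norm{\delta}_2^2\le\norm{D}_\infty\norm{\delta}_1\lesssim\norm{D}_\infty\sqrt{s_\beta}\norm{\delta}_2$ into the RSC inequality then gives $\norm{\delta}_2=O_p\p{\sqrt{s_\alpha s_\beta\log d/(n\gamma_n)}}$, which is worse than the claimed $O_p\p{\sqrt{s_\alpha\log d/(n\gamma_n)}}$ by a factor $\sqrt{s_\beta}$, and this loss propagates to $\norm{\delta}_1$ and ultimately to the downstream sparsity conditions. The paper (Lemma \ref{lemma consistency of betahat and betatilde for product sparsity}) avoids this by never converting $D^\top\delta$ into $\norm{D}_\infty\norm{\delta}_1$: it bounds the pairing directly by Cauchy--Schwarz over the sample index, keeping the factor $X_i^\top\delta$ inside the empirical quadratic form $\delta\bar l_2(\alpha^*,\widetilde\beta,\delta)$, so that the resulting inequality reads $\delta\bar l_2^{1/2}\lesssim\norm{\widehat\Delta_\alpha}_2\cdot(\cdots)$ and yields $\norm{\delta}_2\lesssim\norm{\widehat\Delta_\alpha}_2$ with no compatibility factor; the $\ell_1$ rate is then recovered at the end from $\norm{\delta}_1\le 4\sqrt{s_{\tilde\beta}}\norm{\delta}_2$. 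Two further points you gloss over: the relevant compatibility constant is $\sqrt{s_{\tilde\beta}}$ with $s_{\tilde\beta}=\norm{\widetilde\beta_{OR}}_0$, so one must separately prove $s_{\tilde\beta}=O_p(s_\beta)$ (the paper's Lemma \ref{lemma sparsity betatilde -> beta}); and the moments of $(X_i^\top\widehat\Delta_\alpha)^p$ are controlled not by uniform concentration over an $\ell_1$-ball, as you suggest, but by the conditional independence $\widehat\Delta_\alpha\perp\Gamma_iX_i\mid\Gamma_i$ for $i\in\mathcal I_{-k,\beta}$ afforded by the split of $\mathcal I_{-k}$ into $\mathcal I_{-k,\alpha}$ and $\mathcal I_{-k,\beta}$ (Lemmas \ref{lemma conditional independence} and \ref{lemma Exp(Xn2) ineq}); your uniform-concentration alternative is heavier and would need its own justification.
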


\section{Proof of propositions in Section~\ref{sec: appendix B}}

\subsection{Auxiliary lemmas for Proposition~\ref{proposition lasso consistency body}}
\begin{lemma}\label{sub-gaussian properties}
The following are some useful properties regarding the $\psi_\alpha$-norms.

(a) Let $X,Y \in \mathbb{R}$ be random variables. If $|X| \leq|Y|$ a.s., then $\|X\|_{\psi_2} \leq\|Y\|_{\psi_2}$. If $|X| \leq C$ a.s. for some constant $C>0$, then $\|X\|_{\psi_2} \leq\{\log (2)\}^{-1 / 2} C$.

(b) Let $X \in \mathbb{R}$ be a random variable. If $\|X\|_{\psi_2} \leq \sigma$, then $\mathbb{P}(|X|>t) \leq 2 \exp \left(-t^2 / \sigma^2\right)$ for all $t \geq 0$.

(c) Let $X \in \mathbb{R}$ be a random variable. If $\|X\|_{\psi_\alpha} \leq \sigma$ for some $(\alpha, \sigma)>0$, then $\mathbb{E}\left(|X|^m\right) \leq C_\alpha^m \sigma^m m^{m / \alpha}$ for all $m \geq 1$, for some constant $C_\alpha$ depending only on $\alpha$. In particular, if $\|X\|_{\psi_2} \leq \sigma, \mathbb{E}\left(|X|^m\right) \leq$ $2 \sigma^m \Gamma(m / 2+1)$, for all $m \geq 1$, where $\Gamma(a):=\int_0^{\infty} x^{a-1} \exp (-x) d x$ denotes the Gamma function. Hence, $\mathbb{E}(|X|) \leq \sigma \pi^{1/2}$ and $\mathbb{E}\left(|X|^m\right) \leq 2 \sigma^m(m / 2)^{m / 2}$ for $m \geq 2$.

(d) Let $X,Y \in \mathbb{R}$ be random variables. For any $\alpha, \beta>0$, let $\gamma:=\left(\alpha^{-1}+\beta^{-1}\right)^{-1}$. Then, for any $X, Y$ with $\|X\|_{\psi_\alpha}<\infty$ and $\|Y\|_{\psi_\beta}<\infty,\|X Y\|_{\psi_\gamma}<\infty$ and $\|X Y\|_{\psi_\gamma}<\|X\|_{\psi_\alpha}\|Y\|_{\psi_\beta}$.

(e) Let $X \in \mathbb{R}^d$ be a random vector with $\sup _{1 \leq j \leq d}\|X[j]\|_{\psi_\alpha} \leq \sigma$. Then, $\|\| X\left\|_{\infty}\right\|_{\psi_\alpha} \leq$ $\sigma\{\log (d)+2\}^{1 / \alpha}$.

(f) Let $X \in \mathbb{R}$ be a random variable. If $\norm{X}_{\psi_2}\leq \sigma$, then $\E\sbr{\exp\p{tX}} \leq \exp\p{C\sigma^2 t^2}$ for some constant $C$ and any $t \in \R$.
\end{lemma}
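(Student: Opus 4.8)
All six items are standard properties of $\psi_\alpha$-Orlicz norms (cf.\ van der Vaart and Wellner, and Vershynin's \emph{High-Dimensional Probability}); for completeness the plan is to derive them in the listed order from the definition $\norm{Z}_{\psi_\alpha}=\inf\br{t>0:\E[\exp(\abs{Z}^\alpha/t^\alpha)]\le2}$. Part (a) is immediate: if $\abs{X}\le\abs{Y}$ a.s.\ then $\exp(\abs{X}^2/t^2)\le\exp(\abs{Y}^2/t^2)$ pointwise, so every $t$ admissible for $Y$ is admissible for $X$; and if $\abs{X}\le C$ a.s.\ then $\E[\exp(\abs{X}^2/t^2)]\le\exp(C^2/t^2)\le2$ as soon as $t\ge C/\sqrt{\log2}$. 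Part (b) is a Chernoff bound: $\P(\abs{X}>t)\le\exp(-t^2/\sigma^2)\,\E[\exp(\abs{X}^2/\sigma^2)]\le2\exp(-t^2/\sigma^2)$, using that $\norm{X}_{\psi_2}\le\sigma$ gives $\E[\exp(\abs{X}^2/\sigma^2)]\le2$.

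For (c) I would integrate the $\psi_\alpha$-analogue of (b): $\E[\abs{X}^m]=\int_0^\infty mt^{m-1}\P(\abs{X}>t)\,dt\le2m\int_0^\infty t^{m-1}\exp(-t^\alpha/\sigma^\alpha)\,dt$, and the substitution $u=t^\alpha/\sigma^\alpha$ turns the integral into $\tfrac{\sigma^m}{\alpha}\Gamma(m/\alpha)$, giving $\E[\abs{X}^m]\le2\sigma^m\Gamma(m/\alpha+1)$. Specializing $\alpha=2$ and $m=1$ gives $\E\abs{X}\le2\sigma\Gamma(3/2)=\sigma\sqrt\pi$, while Stirling-type bounds on $\Gamma$ give $\Gamma(m/\alpha+1)\le C_\alpha^m m^{m/\alpha}$ and $\Gamma(m/2+1)\le(m/2)^{m/2}$ for $m\ge2$. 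Part (f) then follows by expanding $\E[\exp(tX)]=\sum_{m\ge0}t^m\E[X^m]/m!$, plugging in $\abs{\E[X^m]}\le\E[\abs{X}^m]\le2\sigma^m(m/2)^{m/2}$ for $m\ge2$ from (c), and using $m!\ge(m/e)^m$ so the tail of the series is summable and bounded by $\exp(C\sigma^2t^2)$ after isolating the $m\le1$ terms.

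For (d) the key is a pointwise Young inequality with the conjugate exponents $\alpha/\gamma$ and $\beta/\gamma$ (conjugate because $\gamma/\alpha+\gamma/\beta=1$): after rescaling so that $\norm{X}_{\psi_\alpha}=\norm{Y}_{\psi_\beta}=1$, one gets $\abs{XY}^\gamma\le\tfrac{\gamma}{\alpha}\abs{X}^\alpha+\tfrac{\gamma}{\beta}\abs{Y}^\beta$, hence $\exp(\abs{XY}^\gamma)\le(\exp(\abs{X}^\alpha))^{\gamma/\alpha}(\exp(\abs{Y}^\beta))^{\gamma/\beta}$, and Hölder on the expectation with the same pair gives $\E[\exp(\abs{XY}^\gamma)]\le2^{\gamma/\alpha}2^{\gamma/\beta}=2$, i.e.\ $\norm{XY}_{\psi_\gamma}\le1$; undoing the rescaling yields $\norm{XY}_{\psi_\gamma}\le\norm{X}_{\psi_\alpha}\norm{Y}_{\psi_\beta}$ (the stated strict inequality should be read as ``$\le$'').

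The delicate item is (e), which I expect to be the only non-routine step. Its starting point is harmless --- $\exp(\norm{X}_\infty^\alpha/t^\alpha)=\max_j\exp(\abs{X[j]}^\alpha/t^\alpha)\le\sum_{j=1}^d\exp(\abs{X[j]}^\alpha/t^\alpha)$, so $\E[\exp(\norm{X}_\infty^\alpha/t^\alpha)]\le\sum_{j=1}^d\E[\exp(\abs{X[j]}^\alpha/t^\alpha)]$ --- but forcing the right-hand side below $2$ with $t$ only of order $\sigma(\log d)^{1/\alpha}$ cannot be done by a crude union bound, which would need $t\asymp\sigma d^{1/\alpha}$. The logarithmic rate requires exploiting the super-multiplicativity of $x\mapsto\exp(x^\alpha)-1$, namely $(\exp(x^\alpha)-1)(\exp(y^\alpha)-1)\le\exp((cxy)^\alpha)-1$ for a suitable constant $c$ and all large $x,y$; this is exactly the hypothesis behind the classical Orlicz maximal inequality (van der Vaart and Wellner, Lemma 2.2.2), which yields $\norm{\max_{j\le d}\abs{X[j]}}_{\psi_\alpha}\le K_\alpha\,\psi_\alpha^{-1}(d)\,\max_j\norm{X[j]}_{\psi_\alpha}$ with $\psi_\alpha^{-1}(d)=(\log(1+d))^{1/\alpha}$, and a careful accounting of the constants in that argument (the slack being absorbed into the ``$+2$'') produces precisely $\norm{\norm{X}_\infty}_{\psi_\alpha}\le\sigma\{\log d+2\}^{1/\alpha}$.
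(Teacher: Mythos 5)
The paper does not prove this lemma at all: parts (a)--(e) are imported from Lemma D.1 of \cite{chakrabortty2019high} and part (f) from Proposition 2.5.2 of \cite{Vershynin_2018}. Your proposal is therefore a genuinely different (self-contained) route, and for (a)--(d) it is correct and complete in outline: the monotonicity and boundedness claims in (a) follow exactly as you say from the definition of the Orlicz norm; (b) is the standard exponential Markov bound; the layer-cake computation in (c) with the substitution $u=t^\alpha/\sigma^\alpha$ does give $\E[|X|^m]\le 2\sigma^m\Gamma(m/\alpha+1)$ and all the stated specializations; and the Young-plus-H\"older argument in (d) with exponents $\alpha/\gamma$ and $\beta/\gamma$ is the right proof (you are also correct that the strict inequality in the statement should be read as ``$\le$'').

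Two points deserve flagging. First, in (e) you correctly identify the Orlicz maximal inequality (van der Vaart--Wellner, Lemma 2.2.2) as the engine, but the claim that ``careful accounting of the constants'' yields the bound with leading constant exactly $1$, i.e. $\norm{\norm{X}_\infty}_{\psi_\alpha}\le\sigma\{\log d+2\}^{1/\alpha}$, is asserted rather than derived; the generic constant $K_\alpha$ in that lemma is not $1$, and even the elementary route (writing $\max_j\exp(|X[j]|^\alpha/t^\alpha)=(\max_j Z_j)^{1/L}$ with $Z_j=\exp(|X[j]|^\alpha/\sigma^\alpha)$, $t^\alpha=\sigma^\alpha L$, and applying Jensen and a union bound to get $(2d)^{1/L}\le 2$) produces $L\ge 1+\log d/\log 2$, not $\log d+2$. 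Since every downstream use in the paper only needs the result up to constants, this is harmless, but as written it is a gap between what you prove and what the lemma states. Second, and more substantively, part (f) as stated is false without a centering hypothesis (take $X\equiv 1$, so $\norm{X}_{\psi_2}=(\log 2)^{-1/2}$ but $\E[e^{tX}]=e^t\not\le e^{Ct^2}$ for small $t>0$); Vershynin's Proposition 2.5.2, which the paper cites, assumes $\E[X]=0$. Your series expansion inherits exactly this problem: after ``isolating the $m\le 1$ terms'' you are left with $1+t\,\E[X]$, which cannot be absorbed into $\exp(C\sigma^2t^2)$ unless $\E[X]=0$. You should either add the hypothesis $\E[X]=0$ (matching the cited source) or weaken the conclusion to $\E[\exp(tX)]\le\exp\p{C\sigma^2t^2+t\,\E[X]}$; note that the paper itself applies (f) to non-centered quantities such as $|X_i^\top\widetilde\Delta|$, so the looseness originates in the paper's statement, not only in your proof.
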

Lemma~\ref{sub-gaussian properties}(a)-(e) follow from Lemma D.1 of \cite{chakrabortty2019high}, and Lemma~\ref{sub-gaussian properties}(f) follows from Proposition 2.5.2 of \cite{Vershynin_2018}.

\begin{lemma}\label{lemma convergence of conditional random variable} (Lemma D.1 of \cite{zhang2023double})
    Let $\left(X_N\right)_{N \geq 1}$ and $\left(Y_N\right)_{N \geq 1}$ be sequences of random variables in $\mathbb{R}$. If 
    $\mathbb{E}\left(\left|X_N\right|^r \mid Y_N\right)=O_p(1)$ for some $r \geq 1,$ then $X_N=O_p(1)$.
\end{lemma}

\begin{lemma} \label{lemma concentrate gamma} (Lemma F.2 of \cite{zhang2023semi})
   Let $\Gamma_i \overset{i.i.d.}{\sim} Bernoulli(\gamma_n)$, then (a) for any $t>0$, $\mathbb{P}\left(\abs{n^{-1}\sum_{i=1}^n \Gamma_i - \gamma_n} \leq 2(t\gamma_n/n)^{1/2} + {t}/{n} \right) \geq 1-2 e^{-t}$; 
(b) for any $t>0$ such that $t<0.01n\gamma_n$,
$
\mathbb{P}\left(0.79 \gamma_n \leq n^{-1}\sum_{i=1}^n \Gamma_i \leq 1.21 \gamma_n\right)\geq1-2 e^{-t}.
$
In addition, (a) and (b) hold for $1-\Gamma_i$ with $\gamma_n$ replaced by $1-\gamma_n$.
\end{lemma}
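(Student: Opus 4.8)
The plan is to obtain part (a) from Bernstein's inequality for sums of independent bounded centered random variables, deduce part (b) by a direct numerical estimate of the resulting deviation bound, and then observe that the final claim is immediate because $(1-\Gamma_i)_{i=1}^n$ are i.i.d.\ $\mathrm{Bernoulli}(1-\gamma_n)$.

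For part (a), I would write $S_n = \sum_{i=1}^n \Gamma_i$ and $X_i = \Gamma_i - \gamma_n$; these are independent, mean zero, bounded by $1$ in absolute value, and satisfy $\sum_{i=1}^n \mathbb{E}[X_i^2] = n\gamma_n(1-\gamma_n) \le n\gamma_n$. Bernstein's inequality then gives, for every $u>0$,
$$\mathbb{P}\big(|S_n - n\gamma_n| > u\big) \le 2\exp\!\left(-\frac{u^2}{2(n\gamma_n + u/3)}\right).$$
Next I would take $u = 2\sqrt{t\,n\gamma_n} + t$ and verify the algebraic identity $u^2 - 2t(n\gamma_n + u/3) = 2t\,n\gamma_n + \tfrac{8}{3}\sqrt{t^3 n\gamma_n} + \tfrac{1}{3}t^2 \ge 0$, so that the exponent above is at least $t$ and the right-hand side is at most $2e^{-t}$. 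Dividing the event by $n$ yields exactly $\mathbb{P}\big(|n^{-1}S_n - \gamma_n| \le 2\sqrt{t\gamma_n/n} + t/n\big) \ge 1 - 2e^{-t}$, which is part (a).

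For part (b), suppose $0 < t < 0.01\,n\gamma_n$. Then $t/n < 0.01\gamma_n$, and $t\gamma_n/n < 0.01\gamma_n^2$ gives $2\sqrt{t\gamma_n/n} < 0.2\gamma_n$, so the deviation bound from part (a) is strictly below $0.21\gamma_n$. On the event of probability at least $1-2e^{-t}$ furnished by (a) we therefore get $|n^{-1}S_n - \gamma_n| < 0.21\gamma_n$, i.e.\ $0.79\gamma_n \le n^{-1}S_n \le 1.21\gamma_n$. Finally, since $1-\Gamma_i \overset{\mathrm{i.i.d.}}{\sim} \mathrm{Bernoulli}(1-\gamma_n)$, applying (a) and (b) with $\gamma_n$ replaced by $1-\gamma_n$ gives the last assertion.

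No step here is genuinely hard; the only mildly delicate point is choosing $u$ so that the Bernstein exponent is at least $t$. Any $u$ of the form $c_1\sqrt{t\,n\gamma_n} + c_2 t$ works, and the constants $c_1 = 2$, $c_2 = 1$ are picked to keep the arithmetic in part (b) clean (a tighter choice such as $\sqrt{2t\,n\gamma_n} + \tfrac{2}{3}t$ is also available but not needed).
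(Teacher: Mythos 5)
Your proof is correct. The paper itself does not prove this lemma; it simply cites Lemma F.2 of \cite{zhang2023semi}, so your Bernstein-based derivation is a valid self-contained replacement for that external reference. The key computation checks out: with $u = 2\sqrt{t\,n\gamma_n}+t$ and $a=\sqrt{t\,n\gamma_n}$ one has $u^2 - 2t\,n\gamma_n - \tfrac{2tu}{3} = 2a^2 + \tfrac{8}{3}at + \tfrac{1}{3}t^2 \ge 0$, so the Bernstein exponent is at least $t$ (and using $\sum_i \E[X_i^2] = n\gamma_n(1-\gamma_n)\le n\gamma_n$ only weakens the bound, which is harmless). The numerical step in (b) ($2\sqrt{0.01}\,\gamma_n + 0.01\gamma_n = 0.21\gamma_n$) and the reduction of the final claim to the Bernoulli$(1-\gamma_n)$ case are both fine. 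One cosmetic note: the lemma as printed says the probability is $\leq 1-2e^{-t}$, which is evidently a typo for $\geq$; you have (correctly) proved the intended inequality.
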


\begin{lemma}\label{Lemma Tail Bounds for Maximums}(Theorem 3.4 of \cite{kuchibhotla2022moving})
    Suppose $X_1, \ldots, X_n$ are independent mean zero random vectors in $\mathbb{R}^d$ such that for some $\alpha>0$ and $K_{n, d}>0$,
$$
\max _{1 \leq i \leq n} \max _{1 \leq j \leq d}\left\|X_i(j)\right\|_{\psi_\alpha} \leq K_{n, d}\;\; \text { and } \;\;\Gamma_{n, d}:=\max _{1 \leq j \leq d} \frac{1}{n} \sum_{i=1}^n \mathbb{E}\left[X_i^2(j)\right].
$$
Then for any $t \geq 0$, with probability at least $1-3 e^{-t}$,
$$
\left\|\frac{1}{n} \sum_{i=1}^n X_i\right\|_{\infty} \leq 7 (\Gamma_{n, d}(t+\log d)/n)^{1/2}+\frac{C_\alpha K_{n, d}(\log (2 n))^{1 / \alpha}(t+\log d)^{1 / \alpha^*}}{n},
$$
where $\alpha^*:=\min \{\alpha, 1\}$ and $C_\alpha>0$ is some constant depending only on $\alpha$.
\end{lemma}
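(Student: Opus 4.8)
The result is a known maximal inequality (a Bernstein-type bound for sub-Weibull, i.e.\ $\psi_\alpha$, summands), so the most economical route is simply to quote it from the high-dimensional statistics literature. For completeness, here is the strategy of a self-contained argument. The plan is: (i) fix a coordinate and truncate the summands at a level $M$ of order $K_{n,d}(\log(2n))^{1/\alpha}$, a level that does \emph{not} depend on $t$ or $d$; (ii) bound the bounded, recentered part by Bernstein's inequality; (iii) bound the truncated-away mass --- the genuinely heavy-tailed piece --- by a Fuk--Nagaev-type estimate built from the moment bound of Lemma~\ref{sub-gaussian properties}(c); and (iv) take a union bound over the $d$ coordinates, which amounts to replacing the per-coordinate deviation level $s$ by $s\asymp t+\log d$.

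In detail, fix $j\in[d]$, write $Y_i:=X_i(j)$, and decompose $Y_i=U_i+V_i$ with $U_i:=Y_i\mathbbm{1}\{|Y_i|\le M\}-\E[Y_i\mathbbm{1}\{|Y_i|\le M\}]$ and $V_i:=Y_i\mathbbm{1}\{|Y_i|>M\}-\E[Y_i\mathbbm{1}\{|Y_i|>M\}]$; both are independent across $i$ and mean zero, $|U_i|\le 2M$, and $n^{-1}\sum_i\Var(U_i)\le n^{-1}\sum_i\E[Y_i^2]\le\Gamma_{n,d}$. Bernstein's inequality then gives, with probability at least $1-2e^{-s}$, that $|n^{-1}\sum_i U_i|\le\sqrt{2\Gamma_{n,d}s/n}+2Ms/(3n)$, and with $M\asymp K_{n,d}(\log(2n))^{1/\alpha}$ this produces both the Gaussian term $7\sqrt{\Gamma_{n,d}(t+\log d)/n}$ (after setting $s\asymp t+\log d$ and collecting constants) and part of the correction. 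For the $V_i$ piece, the deterministic centering $|n^{-1}\sum_i\E[Y_i\mathbbm{1}\{|Y_i|>M\}]|$ is controlled by integrating the tail $\P(|Y_i|>x)\le 2\exp(-(x/K_{n,d})^\alpha)$ from Lemma~\ref{sub-gaussian properties}(b) and is negligible relative to the Bernstein terms, while the random part $n^{-1}\sum_i Y_i\mathbbm{1}\{|Y_i|>M\}$ is handled via $\sum_i|Y_i|\mathbbm{1}\{|Y_i|>M\}\le M^{1-p}\sum_i|Y_i|^p$, the moment estimate $\E[\sum_i|Y_i|^p]\le nC_\alpha^p K_{n,d}^p p^{p/\alpha}$ from Lemma~\ref{sub-gaussian properties}(c), and Markov's inequality; optimizing over $p$ and $M$ produces the correction $C_\alpha K_{n,d}(\log(2n))^{1/\alpha}(t+\log d)^{1/\alpha^*}/n$, where the exponent $1/\alpha^*$ rather than $1/\alpha$ and the extra $(\log(2n))^{1/\alpha}$ factor appear precisely because for $\alpha<1$ the truncated-away sum has a polynomial rather than exponential tail.

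Finally, with $s\asymp t+\log d$ and a union bound over $j=1,\dots,d$ of the three exceptional events --- failure of Bernstein for the bounded part, failure of the Markov bound for the truncated-away part, and failure of the event supporting the truncation estimate --- each of probability at most $e^{-t}$, one obtains the stated $1-3e^{-t}$. The main obstacle is step (iii): when $\alpha\ge 1$ the summands are sub-exponential and ordinary Bernstein after truncation suffices, but when $\alpha<1$ the Fuk--Nagaev-style control of $\sum_i|Y_i|\mathbbm{1}\{|Y_i|>M\}$ must be carried out carefully, and it is there that the slightly lossy $(\log(2n))^{1/\alpha}$ factor and the $1/\alpha^*$ exponent become unavoidable --- and also where matching the clean constant $7$ involves bookkeeping that is most easily sidestepped by invoking the existing maximal inequalities.
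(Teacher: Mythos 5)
Your primary recommendation---to quote this as a known maximal inequality---is exactly what the paper does: it gives no proof and simply cites Theorem 3.4 of Kuchibhotla and Chakrabortty (2022). Your truncation-plus-Bernstein sketch is a faithful outline of the argument underlying that citation (and you correctly flag that the delicate point is the $\alpha<1$ regime, where the fixed truncation level $M\asymp K_{n,d}(\log(2n))^{1/\alpha}$ alone does not suffice and the tail part needs the more careful Fuk--Nagaev/generalized Bernstein--Orlicz treatment), so there is nothing to correct.
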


\begin{lemma} \label{concentrate lem RE condition MCAR}
    Let Assumptions~\ref{assumption mcar} and \ref{assumption mcar lasso (a)} hold. Then, there exist some constants $\kappa_1, \kappa_2, C_1, C_2, c_1, c_2 > 0$ such that for any $\Delta \in \R^d \setminus \br{\boldsymbol{0}}$, if $n \gamma_n > \max\br{C_1, C_2\log n \log d}$,
$$
\P\p{\frac{n^{-1}\sum_{i=1}^n\Gamma_i (X_i^\top \Delta)^2}{n^{-1}\sum_{i=1}^n\Gamma_i} \geq   \kappa_1 \norm{\Delta}_2^2 - \kappa_2 \frac{\log d}{n\gamma_n}\norm{\Delta}_1^2 } \geq 1 - c_1e^{-c_2n\gamma_n}.
$$

\end{lemma}

\begin{proof}
    By Lemma F.1 of \cite{zhang2023semi}, for $a=1, v=0, \phi(\cdot)\equiv 1$, there exists some constant $\kappa_0, \kappa_1', \kappa_2', C_1, c_1', c_2' > 0$ such that when $n\gamma_n > C_1\max\br{1, \log n \log d}$, for any $\Delta \in \R^d$ with $\norm{\Delta}_2 \leq \kappa_0$, 
    \begin{align*}
n^{-1}\sum_{i=1}^n\Gamma_i (X_i^\top \Delta)^2 \geq \gamma_n \Bigl(\kappa_1' \norm{\Delta}_2^2 - \kappa_2' \frac{\log d}{n\gamma_n}\norm{\Delta}_1^2\Bigr),
    \end{align*}
with probability at least $1- c_1'e^{-c_2'n\gamma_n}$. 

Let $c_2 = \min\br{0.01, c_2'}$. Together with Lemma~\ref{lemma concentrate gamma}, it holds that
$$
\frac{n^{-1}\sum_{i=1}^n\Gamma_i (X_i^\top \Delta)^2}{n^{-1}\sum_{i=1}^n\Gamma_i} \geq  \frac{\kappa_1'}{1.21}\norm{\Delta}_2^2 - \frac{\kappa_2' }{0.79}\frac{\log d}{n\gamma_n}\norm{\Delta}_1^2,
$$
with probability at least $1- (c_1'+2)e^{-c_2n\gamma_n}$.

Since for any $\Delta \in \R^d \setminus \br{\boldsymbol{0}}$, we have $\norm{{\kappa_0 \Delta}/{\norm{\Delta}_2}}_2 \leq \kappa_0$. Let $\kappa_1 = \kappa_1'/1.21$, $\kappa_2 = \kappa_2'/0.79$, and $c_1 = c_1'+2$. Thus, for any $\Delta \in \R^d \setminus \br{\boldsymbol{0}}$, 
$$
\frac{n^{-1}\sum_{i=1}^n\Gamma_i (X_i^\top \Delta)^2}{n^{-1}\sum_{i=1}^n\Gamma_i} \geq   \kappa_1 \norm{\Delta}_2^2 - \kappa_2 \frac{\log d}{n\gamma_n}\norm{\Delta}_1^2,
$$
with probability at least $1 - c_1e^{-c_2n\gamma_n}$.
\end{proof}

\begin{lemma} \label{concentrate lem infty norm bound MCAR}
Let Assumptions~\ref{assumption mcar} and \ref{assumption mcar lasso (a)} hold. If $d>1$ and $n\gamma_n > \p{\log n}^2 \log d $, then for any $0 < t < {n\gamma_n}/{100+\p{\log n}^2 \log d}$, there exists some constants $\kappa_3>0$ such that
\begin{align*}
    \P\p{\norm{\frac{n^{-1}\sum_{i=1}^n \Gamma_i w_i X_i}{n^{-1}\sum_{i=1}^n \Gamma_i}}_\infty > \kappa_3 ((t + \log d)/(n\gamma_n))^{1/2}} \leq 5e^{-t}.
\end{align*}
\end{lemma}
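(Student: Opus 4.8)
The plan is to treat the numerator $n^{-1}\sum_{i=1}^n\Gamma_iw_iX_i$ as an average of i.i.d.\ mean-zero random vectors in $\R^d$, control its $\ell_\infty$-norm with the maximal Bernstein-type inequality of Lemma~\ref{Lemma Tail Bounds for Maximums}, and then divide by the denominator, which Lemma~\ref{lemma concentrate gamma} pins down to be of order $\gamma_n$. The first thing I would check is the centering: under Assumption~\ref{assumption mcar}, $\Gamma\perp(X,Y)$, hence $\Gamma\perp(X,w)$ since $w=Y-X^\top\beta^*$ is a measurable function of $(X,Y)$; combined with $\E[wX]=\mathbf 0$ from Lemma~\ref{lemma least squared residual properties}, this gives $\E[\Gamma wX]=\gamma_n\E[wX]=\mathbf 0$. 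So $(\Gamma_iw_iX_i)_{i=1}^n$ are i.i.d.\ and centered, and Lemma~\ref{Lemma Tail Bounds for Maximums} applies.

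Next I would compute the two quantities that feed into that lemma. For any coordinate $j$, $\abs{\Gamma wX[j]}\le\abs{wX[j]}$ since $\Gamma\in\{0,1\}$; Assumption~\ref{assumption mcar lasso (a)} gives $\norm{w}_{\psi_2}\le\sigma_w$ and $\norm{X[j]}_{\psi_2}=\norm{X^\top e_j}_{\psi_2}\le\sigma$, so Lemma~\ref{sub-gaussian properties}(d) yields $\norm{\Gamma wX[j]}_{\psi_1}\le\sigma\sigma_w$; thus we take $\alpha=1$ and $K_{n,d}\asymp\sigma\sigma_w$. For the variance proxy, $\frac1n\sum_{i=1}^n\E[(\Gamma_iw_iX_i[j])^2]=\gamma_n\E[w^2X[j]^2]\le\gamma_n\sqrt{\E[w^4]\E[X[j]^4]}\lesssim\gamma_n\sigma^2\sigma_w^2$ by Cauchy--Schwarz and the sub-Gaussian moment bounds of Lemma~\ref{sub-gaussian properties}(c), so $\Gamma_{n,d}\lesssim\gamma_n\sigma^2\sigma_w^2$. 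With $\alpha=\alpha^*=1$, Lemma~\ref{Lemma Tail Bounds for Maximums} then gives, for any $t\ge0$, with probability at least $1-3e^{-t}$,
\[
\norm{n^{-1}\sum_{i=1}^n\Gamma_iw_iX_i}_\infty\le C\sigma\sigma_w\p{\sqrt{\frac{\gamma_n(t+\log d)}{n}}+\frac{(\log 2n)(t+\log d)}{n}}.
\]

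Finally I would divide through by the denominator. The hypothesis $0<t<n\gamma_n/(100+(\log n)^2\log d)$ in particular forces $t<0.01\,n\gamma_n$, so Lemma~\ref{lemma concentrate gamma}(b) gives $n^{-1}\sum_{i=1}^n\Gamma_i\ge0.79\gamma_n$ with probability at least $1-2e^{-t}$; intersecting with the event above leaves failure probability at most $5e^{-t}$. On that intersection, dividing the displayed bound by $0.79\gamma_n$ turns the first term into exactly the target order $\sigma\sigma_w\sqrt{(t+\log d)/(n\gamma_n)}$ (using $\gamma_n^{-1}\sqrt{\gamma_n(t+\log d)/n}=\sqrt{(t+\log d)/(n\gamma_n)}$), while the second term becomes of order $\sigma\sigma_w(\log 2n)(t+\log d)/(n\gamma_n)$. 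The only real work is to absorb this second term into the first, i.e.\ to show $(\log 2n)^2(t+\log d)\lesssim n\gamma_n$: for the $\log d$ part use $(\log 2n)^2\le 4(\log n)^2$ together with $n\gamma_n>(\log n)^2\log d$, and for the $t$ part use $d\ge2$ (so $\log d\ge\log 2$) to get $(\log 2n)^2\lesssim 100+(\log n)^2\log d$ and hence $(\log 2n)^2t\lesssim n\gamma_n$ from the prescribed range of $t$. Collecting the absolute constants together with $\sigma,\sigma_w$ into a single $\kappa_3$ finishes the argument.

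The hard part is precisely that last bookkeeping step: the statement is calibrated so that the higher-order (``bias'') term in the maximal inequality is subordinate to the ``variance'' term after rescaling by $\gamma_n^{-1}$, and verifying this requires invoking both the sample-size condition $n\gamma_n>(\log n)^2\log d$ and the precise admissible range $t<n\gamma_n/(100+(\log n)^2\log d)$. Everything else is routine Orlicz-norm calculus, an application of the cited maximal and binomial concentration lemmas, and a union bound.
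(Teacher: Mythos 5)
Your proposal is correct and follows essentially the same route as the paper's proof: center $\Gamma_i w_i X_i$ via MCAR and Lemma \ref{lemma least squared residual properties}, bound the $\psi_1$-norm and second moment of each coordinate, apply Lemma \ref{Lemma Tail Bounds for Maximums}, lower-bound the denominator with Lemma \ref{lemma concentrate gamma}, and absorb the $(t+\log d)\log n/(n\gamma_n)$ term into the $\sqrt{(t+\log d)/(n\gamma_n)}$ term using $n\gamma_n>(\log n)^2\log d$, the admissible range of $t$, and $d>1$. The only cosmetic difference is that you bound the variance proxy by Cauchy--Schwarz on $\E[w^2X[j]^2]$ whereas the paper applies the moment bound of Lemma \ref{sub-gaussian properties}(c) directly to the $\psi_1$ product; both yield the same $O(\gamma_n\sigma^2\sigma_w^2)$ bound.
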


\begin{proof} 
By Lemma~\ref{lemma least squared residual properties}, we have $\E[X_iw_i]=\boldsymbol{0} \in \R^d$. 
Then, under Assumption~\ref{assumption mcar}, we have $\E[\Gamma_i X_i w_i]=\boldsymbol{0}$. Under Assumption~\ref{assumption mcar lasso (a)}, by Lemma~\ref{sub-gaussian properties}(a) and (d), we have 
$$
\sup_{j = 1,\dots, d}\norm{\Gamma_i w_i X_i[j] }_{\psi_1} \leq \sup_{j = 1,\dots, d}\norm{w_i  }_{\psi_2}\norm{X_i[j]}_{\psi_2} \leq  \sigma\sigma_w.
$$
Then by Lemma~\ref{sub-gaussian properties}(c), there exists some constant $C_1$ such that
$$
\sup_{j = 1,\dots, d}\E[(\Gamma_i w_i X_i[j])^2]   \leq C_1 \sigma^2\sigma_w^2\gamma_n.
$$
Thus, by Lemma~\ref{Lemma Tail Bounds for Maximums}, for some constants $C_2 > 0$, 
\begin{align*}
    \P\p{\norm{n^{-1}\sum_{i=1}^n \Gamma_i w_i X_i}_\infty > 7 C_1 \sigma\sigma_w(\gamma_n(t + \log d)/n)^{1/2} + C_2 \sigma\sigma_w \frac{(t+\log d)\log n}{n}} \leq 3e^{-t}.
\end{align*}
Together with Lemma~\ref{lemma concentrate gamma}, for $0 < t < {n\gamma_n}/{100 + \p{\log n}^2\log d }$, we have
\begin{align*}
    \P\p{\norm{\frac{n^{-1}\sum_{i=1}^n \Gamma_i w_i X_i}{n^{-1}\sum_{i=1}^n \Gamma_i}}_\infty > \kappa_1 ((t + \log d)/(n\gamma_n))^{1/2} + \kappa_2 \frac{(t+\log d){\log n}}{n\gamma_n}} \leq 5e^{-t},
\end{align*}
where $\kappa_1 = 7 C_1 \sigma\sigma_w/0.79$ and $\kappa_2 = C_2 \sigma\sigma_w/0.79$. 

In addition, if $n\gamma_n > \p{\log n}^2 \log d $, we have
$$
\frac{(t+\log d){\log n}}{n\gamma_n} < ((t + \log d)/(n\gamma_n))^{1/2} (\frac{\p{\log n}^2}{100 + \p{\log n}^2\log d} + \frac{\p{\log n}^2 \log d}{n\gamma_n})^{1/2}.
$$
Let $\kappa_3 = \kappa_1+ \kappa_2 (\p{\log 2}^{-1}+1)^{1/2}$, we have
\begin{align*}
    \P\p{\norm{\frac{n^{-1}\sum_{i=1}^n \Gamma_i w_i X_i}{n^{-1}\sum_{i=1}^n \Gamma_i}}_\infty > \kappa_3 ((t + \log d)/(n\gamma_n))^{1/2}} \leq 5e^{-t}.
\end{align*}

\end{proof}

\subsection{Proof of Proposition~\ref{proposition lasso consistency body}}
\begin{proof}
    Define the Lagrangian function
    $$
    \mathcal{L}(\beta; \mathcal{I}_{-k},\mathcal{D}_n) = \frac{\sum_{i \in \mathcal{I}_{-k}} \Gamma_i\p{Y_i - X_i^\top\beta}^2}{\sum_{i \in \mathcal{I}_{-k}} \Gamma_i }   + \lambda_n \norm{\beta}_1.
    $$
    By definition, we have $\mathcal{L}(\widehat \beta^{(-k)}; \mathcal{I}_{-k},\mathcal{D}_n) \leq \mathcal{L}(\beta^*; \mathcal{I}_{-k},\mathcal{D}_n)$, i.e.,
    $$
    \frac{\sum_{i \in \mathcal{I}_{-k}} \Gamma_i\p{Y_i - X_i^\top\widehat \beta^{(-k)}}^2}{\sum_{i \in \mathcal{I}_{-k}} \Gamma_i }   + \lambda_n \norm{\widehat \beta^{(-k)}}_1 \leq \frac{\sum_{i \in \mathcal{I}_{-k}} \Gamma_i\p{Y_i - X_i^\top\beta^*}^2}{\sum_{i \in \mathcal{I}_{-k}} \Gamma_i }   + \lambda_n \norm{\beta^*}_1.
    $$
    Rearranging it and let $\widehat \Delta^{(-k)} = \widehat \beta^{(-k)} - \beta^*$, we have
    \begin{align*}
         \frac{\sum_{i \in \mathcal{I}_{-k}} \Gamma_i\p{X_i^\top\widehat \Delta^{(-k)}}^2}{\sum_{i \in \mathcal{I}_{-k}} \Gamma_i } \leq  \frac{2\sum_{i \in \mathcal{I}_{-k}} \Gamma_iw_iX_i^\top \widehat \Delta^{(-k)}}{\sum_{i \in \mathcal{I}_{-k}} \Gamma_i } + \lambda_n \p{\norm{\beta^*}_1 - \norm{\widehat \beta^{(-k)}}_1}. 
    \end{align*}
Note that 
\begin{align}
    &\norm{\beta^*}_1 - \norm{\widehat \beta^{(-k)}}_1 = \norm{\beta^*}_1 - \norm{\widehat \Delta^{(-k)} + \beta^*}_1\notag \\
    &\qquad=\norm{\beta^*}_1 - \norm{{\widehat \Delta^{(-k)}}_S + \beta^*}_1 - \norm{{\widehat \Delta^{(-k)}}_{S^c}}_1\leq \norm{{\widehat \Delta^{(-k)}}_{S}}_1 - \norm{{\widehat \Delta^{(-k)}}_{S^c}}_1. \label{sparsity technique}
\end{align}
Thus, 
\begin{align}\label{lasso proof ineq}
         \frac{\sum_{i \in \mathcal{I}_{-k}} \Gamma_i\p{X_i^\top\widehat \Delta^{(-k)}}^2}{\sum_{i \in \mathcal{I}_{-k}} \Gamma_i } \leq&  2\norm{\frac{\sum_{i \in \mathcal{I}_{-k}} \Gamma_iw_iX_i}{\sum_{i \in \mathcal{I}_{-k}} \Gamma_i }}_\infty \norm{\widehat \Delta^{(-k)}}_1 + \lambda_n \p{\norm{{\widehat \Delta^{(-k)}}_{S}}_1 - \norm{{\widehat \Delta^{(-k)}}_{S^c}}_1}. 
\end{align}

By Lemma~\ref{concentrate lem RE condition MCAR} and Lemma~\ref{concentrate lem infty norm bound MCAR}, there exist some constants $\kappa_1, \kappa_2, \kappa_3, C_1,C_2, c_1',c_2' >0$ such that if $|\mathcal I_{-k}| \gamma_n > \max\br{C_1, C_2\log |\mathcal I_{-k}| \log d, \p{\log |\mathcal I_{-k}|}^2 \log d}$, for any $0 < t < {|\mathcal I_{-k}|\gamma_n}/{100+\p{\log |\mathcal I_{-k}|}^2 \log d}$, with probability at least $1-c_1'e^{-c_2'|\mathcal I_{-k}|\gamma_n}-5e^{-t}$, the following statements hold simultaneously:
    \begin{align*}
        \frac{|\mathcal I_{-k}|^{-1}\sum_{i \in \mathcal{I}_{-k}}{\Gamma_i (X_i^\top \widehat \Delta^{(-k)})^2 }}{|\mathcal I_{-k}|^{-1}\sum_{i \in \mathcal{I}_{-k}} \Gamma_i} &\geq \kappa_1 \norm{\widehat \Delta^{(-k)}}_2^2 - \kappa_2 \frac{\log d}{|\mathcal I_{-k}|\gamma_n}\norm{\widehat \Delta^{(-k)}}_1^2,\\
            \norm{\frac{|\mathcal I_{-k}|^{-1}\sum_{i \in \mathcal{I}_{-k}} \Gamma_i w_i X_i}{|\mathcal I_{-k}|^{-1}\sum_{i \in \mathcal{I}_{-k}} \Gamma_i}}_\infty &\leq  \kappa_3 ((t + \log d)/(|\mathcal I_{-k}|\gamma_n))^{1/2}.
    \end{align*}
Let $\lambda_n = 4\kappa_3 ((t + \log d)/(|\mathcal I_{-k}|\gamma_n))^{1/2}$. Then we have
\begin{align}
    \frac{\sum_{i \in \mathcal{I}_{-k}}{\Gamma_i (X_i^\top \widehat \Delta^{(-k)})^2 }}{\sum_{i \in \mathcal{I}_{-k}} \Gamma_i} &\geq \kappa_1\norm{\widehat \Delta^{(-k)}}_2^2 - \kappa_2 \frac{\log d}{|\mathcal I_{-k}|\gamma_n}\norm{\widehat \Delta^{(-k)}}_1^2,\label{inequality of RSC 1 MCAR}\\
    \lambda_n = 4\kappa_3 ((t + \log d)/(|\mathcal I_{-k}|\gamma_n))^{1/2} &\geq 4\norm{\frac{\sum_{i \in \mathcal{I}_{-k}} \Gamma_iw_iX_i}{\sum_{i \in \mathcal{I}_{-k}} \Gamma_i }}_\infty. \label{inequality of infty norm}
\end{align}
By \eqref{lasso proof ineq} and \eqref{inequality of infty norm},
$
    0 \leq {\lambda_n}/{2} \p{3\norm{{\widehat \Delta^{(-k)}}_{S}}_1 - \norm{{\widehat \Delta^{(-k)}}_{S^c}}_1},
$
which implies that $\norm{{\widehat \Delta^{(-k)}}_{S^c}}_1 \leq 3\norm{{\widehat \Delta^{(-k)}}_{S}}_1$. Then, 
$
    \norm{{\widehat \Delta^{(-k)}}}_1 \leq 4\norm{{\widehat \Delta^{(-k)}}_{S}}_1 \leq 4s^{1/2}\norm{{\widehat \Delta^{(-k)}}}_2.
$
By \eqref{lasso proof ineq}, \eqref{inequality of RSC 1 MCAR}, and \eqref{inequality of infty norm}, we have
\begin{align*}
    \norm{\widehat \Delta^{(-k)}}_2^2  &\leq \frac{6\kappa_3}{\kappa_1} ((t + \log d)/(|\mathcal I_{-k}|\gamma_n))^{1/2} \norm{{\widehat \Delta^{(-k)}}_{S}}_1 +  \frac{\kappa_2}{\kappa_1}\frac{\log d}{|\mathcal I_{-k}|\gamma_n}\norm{\widehat \Delta^{(-k)}}_1^2\\
    &\leq \frac{6\kappa_3}{\kappa_1} ((st + s\log d)/(|\mathcal I_{-k}|\gamma_n))^{1/2} \norm{{\widehat \Delta^{(-k)}}}_2 +  \frac{16\kappa_2}{\kappa_1}\frac{s\log d}{|\mathcal I_{-k}|\gamma_n}\norm{\widehat \Delta^{(-k)}}_2^2.
\end{align*}
Then if
\[
    |\mathcal I_{-k}| \gamma_n
    > \max\br{32(\kappa_1)^{-1}\kappa_2 s\log d, C_1,
    C_2\log |\mathcal I_{-k}| \log d, \p{\log |\mathcal I_{-k}|}^2 \log d},
\]
for any \(0 < t < |\mathcal I_{-k}|\gamma_n/\{100+\p{\log |\mathcal I_{-k}|}^2 \log d\}\), it holds that with probability at least \(1-c_1'e^{-c_2'|\mathcal I_{-k}|\gamma_n}-5e^{-t}\),
$
    \norm{\widehat \Delta^{(-k)}}_2  \leq {12\kappa_3}/{\kappa_1} ((st + s\log d)/(|\mathcal I_{-k}|\gamma_n))^{1/2}.
$
It follows that, if $n\gamma_n \gg \p{s\vee \p{\log n}^2}\log d$ and $\lambda_n \asymp (\log d/(n\gamma_n))^{1/2}$, as $n, d \rightarrow \infty$,
$
    \norm{\widehat \beta^{(-k)} - \beta^*}_2 = O_p\p{(s\log d/(n\gamma_n))^{1/2}}.
$

\end{proof}

\subsection{Auxiliary lemmas for Proposition~\ref{proposition mar nuisance consistency body}}
Define 
\begin{align*}
    &l_{1,i}\p{\alpha} := \p{1-\Gamma_i}\bar X_0^\top \alpha + \Gamma_i \exp\p{-X_i^\top\alpha},\quad
    l_{2,i}\p{\alpha, \beta} := \Gamma_i \exp\p{-X_i^\top\alpha}\p{Y_i - X_i^\top \beta}^2.
\end{align*}
We define the empirical summations of $l_{1,i}\p{\alpha}$ and $l_{2,i}\p{\alpha, \beta}$ as
\begin{align*}
    &\bar l_1^{(-k)}\p{\alpha} := M^{-1} \sum_{i \in \mathcal{I}_{-k,\alpha}} l_{1,i}\p{\alpha},\quad
    \bar l_2^{(-k)}\p{\alpha, \beta} := M^{-1} \sum_{i \in \mathcal{I}_{-k,\beta}} l_{2,i}\p{\alpha, \beta}.
\end{align*}
In addition, for any $\Delta \in \R^d$, we define
\begin{align*}
    &\delta \bar l_1^{(-k)}\p{\alpha, \Delta} := \bar l_1^{(-k)}\p{\alpha+\Delta} - \bar l_1^{(-k)}\p{\alpha} - \nabla_\alpha \bar l_1^{(-k)}\p{\alpha}^\top \Delta,\\
    &\delta \bar l_2^{(-k)}\p{\alpha, \beta, \Delta} := \bar l_2^{(-k)}\p{\alpha, \beta+\Delta} - \bar l_2^{(-k)}\p{\alpha, \beta} - \nabla_\beta \bar l_2^{(-k)}\p{\alpha, \beta}^\top \Delta.
\end{align*}

\begin{lemma}\label{lemma conditional independence}
For any index set $A \subseteq [n]$, let $\mc{M}_A := \{(1-\Gamma_j)X_j \mid j \in A\}$, $\mathcal{D}_{A} := \{\p{\Gamma_j, \Gamma_jX_j, \Gamma_jY_j} \mid j \in A\}$, and $\Gamma_{A}  := \{\Gamma_j \mid j \in A\}$. It holds that

(a) For any $A_1, A_2 \subseteq [n]$ and $B_1\subseteq [n]\setminus A_1$, $\mathcal{D}_{A_1} \perp \mc{M}_{A_2} \mid \Gamma_{A_1}$ and $\mathcal{D}_{A_1} \perp \mathcal{D}_{B_1} \mid \Gamma_{A_1}$.

(b) Let $f_1$ be any function of $\mc{M}_{[n]}$ and $\mathcal{D}_{\mathcal{I}_{-k}}$, then $f_1(\mc{M}_{[n]}, \mathcal{D}_{\mathcal{I}_{-k}})\perp \mathcal{D}_{\mathcal{I}_{k}} \mid \Gamma_{\mathcal{I}_{k}}$.

(c) Let $f_2$ be any function of $\mc{M}_{[n]}$ and $\mathcal{D}_{\mathcal{I}_{-k,\alpha}}$, then $f_2(\mc{M}_{[n]}, \mathcal{D}_{\mathcal{I}_{-k,\alpha}})\perp \mathcal{D}_{\mathcal{I}_{-k,\beta}} \mid \Gamma_{\mathcal{I}_{-k,\beta}}$.

(d) Let $f_3$ be any function of $\mc{M}_{[n]}$, $\mathcal{D}_{\mathcal{I}_{-k,\alpha}}$, and $\mathcal{D}_{\mathcal{I}_{-k,\beta}}$, then $f_3(\mc{M}_{[n]}, \mathcal{D}_{\mathcal{I}_{-k,\alpha}}, \mathcal{D}_{\mathcal{I}_{-k,\beta}})\perp \mathcal{D}_{\mathcal{I}_{k}} \mid \Gamma_{\mathcal{I}_{k}}$.
\end{lemma}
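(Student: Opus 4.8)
The whole lemma reduces to one observation about a single sample combined with the i.i.d.\ structure. Fix $i\in[n]$ and note that $\mc{M}_{\{i\}}=(1-\Gamma_i)X_i$ and $\mathcal{D}_{\{i\}}=(\Gamma_i,\Gamma_iX_i,\Gamma_iY_i)$ are measurable functions of $(\Gamma_i,X_i,Y_i)$, and that conditionally on $\Gamma_i$ exactly one of them is constant: on $\{\Gamma_i=1\}$ we have $\mc{M}_{\{i\}}=\boldsymbol{0}$, while on $\{\Gamma_i=0\}$ we have $\mathcal{D}_{\{i\}}=(0,\boldsymbol{0},0)$. Hence $\mathcal{D}_{\{i\}}\perp\mc{M}_{\{i\}}\mid\Gamma_i$. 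I would combine this with three elementary conditional-independence facts, used repeatedly: (i) if $U$ is a function of the coordinates indexed by a set $A$ and $(V,W)$ is a function of the coordinates indexed by a set $B$ with $A\cap B=\emptyset$, then $U\perp V\mid W$ (since $U\perp(V,W)$ by the cross-$j$ independence of the samples); (ii) if $U\perp V\mid W$ and $(U,V)\perp Z\mid W$, then $U\perp(V,Z)\mid W$; and (iii) conditional independence is preserved under applying measurable functions. The i.i.d.\ hypothesis also gives that, after conditioning on $\Gamma_A$ (a function of the $A$-block), the pairs $(\mathcal{D}_{\{j\}},\mc{M}_{\{j\}})$, $j\in A$, remain mutually independent.

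For part (a), first statement, I would write $A_2=(A_2\cap A_1)\sqcup(A_2\setminus A_1)$. Conditionally on $\Gamma_{A_1}$, the pairs $(\mathcal{D}_{\{j\}},\mc{M}_{\{j\}})_{j\in A_1}$ are mutually independent and each satisfies $\mathcal{D}_{\{j\}}\perp\mc{M}_{\{j\}}\mid\Gamma_j$, so grouping the coordinates yields $\mathcal{D}_{A_1}\perp\mc{M}_{A_2\cap A_1}\mid\Gamma_{A_1}$. Since $A_2\setminus A_1$ is disjoint from $A_1$, fact (i) gives $\mc{M}_{A_2\setminus A_1}\perp(\mathcal{D}_{A_1},\mc{M}_{A_2\cap A_1})\mid\Gamma_{A_1}$, and the combining rule (ii) then produces $\mathcal{D}_{A_1}\perp\mc{M}_{A_2}\mid\Gamma_{A_1}$. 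The second statement of (a) is immediate from fact (i): for $B_1\subseteq[n]\setminus A_1$, both $(\mathcal{D}_{A_1},\Gamma_{A_1})$ and $\mathcal{D}_{B_1}$ are built from disjoint index blocks.

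Parts (b)--(d) all follow the same template once one identifies the ``target block'': $\mathcal{I}_k$ in (b) and (d), and $\mathcal{I}_{-k,\beta}$ in (c). In each case the conditioning variable ($\Gamma_{\mathcal{I}_k}$ or $\Gamma_{\mathcal{I}_{-k,\beta}}$) is a function of that block; part (a) with $A_1=A_2$ equal to the target block gives $\mathcal{D}_{\text{target}}\perp\mc{M}_{\text{target}}$ given the conditioning variable; and every remaining object feeding into $f_1$, $f_2$, or $f_3$ — namely $\mc{M}$ over the non-target indices, together with the $\mathcal{D}$-blocks listed in the statement ($\mathcal{D}_{\mathcal{I}_{-k}}$; $\mathcal{D}_{\mathcal{I}_{-k,\alpha}}$; $\mathcal{D}_{\mathcal{I}_{-k,\alpha}},\mathcal{D}_{\mathcal{I}_{-k,\beta}}$, respectively) — is built from coordinates disjoint from the target block, hence conditionally independent of $(\mathcal{D}_{\text{target}},\mc{M}_{\text{target}})$ by fact (i). Merging these via (ii) gives $\mathcal{D}_{\text{target}}\perp(\text{all arguments of }f_\bullet)\mid(\text{conditioning variable})$, and (iii) passes to $f_\bullet$. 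The one genuinely delicate point is the bookkeeping in part (a) when $A_2\not\subseteq A_1$ while we condition only on $\Gamma_{A_1}$; this is exactly what the $A_2$-split handles, by quarantining the ``extra'' indices into an independent block. Everything else is a matter of carefully tracking which random object is a function of which block so that facts (i)--(iii) apply verbatim.
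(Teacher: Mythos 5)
Your proposal is correct and takes essentially the same route as the paper: the key observation that, conditionally on $\Gamma_i$, one of $\mc{M}_{\{i\}}$ or $\mathcal{D}_{\{i\}}$ is degenerate, combined with independence across samples. You are in fact more explicit than the paper's own (rather terse) proof about how the per-coordinate conditional independences are assembled into the joint statements via the split $A_2=(A_2\cap A_1)\sqcup(A_2\setminus A_1)$ and the standard weak-union/contraction rules.
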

\begin{proof}
    (a) For $\mathcal{D}_{A_1} \perp \mc{M}_{A_2} \mid \Gamma_{A_1}$, it suffices to prove that for each $i\in[n]$, $(1-\Gamma_i)X_i \perp \mathcal{D}_n \mid \Gamma_{i}$. If $\Gamma_i = 1$, $(1-\Gamma_i)X_i=0$, which is conditional independent of $\mathcal{D}_n$. If $\Gamma_i=0$, then $\mathcal{D}_n = \br{\p{\Gamma_j, \Gamma_jX_j, \Gamma_jY_j} \mid j \in [n]\setminus\br{i}} \cup \{\p{ 0, 0, 0}\}$, which is conditionally independent of $(1-\Gamma_i)X_i$. Thus, $(1-\Gamma_i)X_i \perp \mathcal{D}_n \mid \Gamma_{i}$ for any $i = 1,\dots, n$. In addition, since $\Gamma_{A_1} \subset \mathcal{D}_{A_1}$ and $\mathcal{D}_{A_1} \perp \mathcal{D}_{B_1}$ when $A_1 \cap B_1 = \emptyset$, we have that $\mathcal{D}_{A_1} \perp \mathcal{D}_{B_1} \mid \Gamma_{A_1}$ holds.
    
    (b), (c), (d) follow directly from (a).
\end{proof}

\begin{lemma} \label{lemma Lemma C.5 of zhang 2021}
    Suppose that $\mathbb{S}^{\prime}=\left(\mathbf{U}_i\right)_{i \in \mathcal{J}}$ are independent and identically distributed sub-Gaussian random vectors, i.e., $\left\|\mathbf{a}^{\top} \mathbf{U}\right\|_{\psi_2} \leq \sigma_{\mathbf{U}}\|\mathbf{a}\|_2$ for all $\mathbf{a} \in \mathbb{R}^d$ with some constant $\sigma_{\mathbf{U}}>0$. Additionally, suppose the smallest eigenvalue of $\mathbb{E}\left(\mathbf{U U}^{\top}\right)$ is bounded bellow by some constant $\lambda_{\mathbf{U}}>0$. Let $M=|\mathcal{J}|$. For any continuous function $\phi: \mathbb{R} \rightarrow(0, \infty)$, $v \in[0,1]$, and $\boldsymbol{\eta} \in \mathbb{R}^d$ satisfying $\mathbb{E}\left\{\left|\mathbf{U}^{\top} \boldsymbol{\eta}\right|^c\right\}<C$ with some constants $c, C>0$, there exists constants $\kappa_1, \kappa_2, c_1, c_2>0$, such that
\begin{align*}
    \mathbb{P}_{\mathbb{S}^{\prime}}\left(M^{-1} \sum_{i \in \mathcal{J}} \phi\left(\mathbf{U}_i^{\top}(\boldsymbol{\eta}+v \boldsymbol{\Delta})\right)\left(\mathbf{U}_i^{\top} \boldsymbol{\Delta}\right)^2 \geq \kappa_1\|\boldsymbol{\Delta}\|_2^2-\kappa_2 \frac{\log d}{M}\|\boldsymbol{\Delta}\|_1^2, \;\; \forall\|\boldsymbol{\Delta}\|_2 \leq 1\right)\geq 1-c_1\exp{-c_2M}.
\end{align*}
\end{lemma}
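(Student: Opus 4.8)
The plan is to strip off the weight $\phi$ by a truncation argument, reduce the claim to a one-sided uniform bound on a bounded, Lipschitz empirical process whose mean already dominates $\norm{\boldsymbol{\Delta}}_2^2$, and then absorb the sampling fluctuation by peeling over $\ell_1$-level sets (which is what produces the $\tfrac{\log d}{M}\norm{\boldsymbol{\Delta}}_1^2$ slack); this is the standard restricted-eigenvalue / restricted-strong-convexity template (cf.\ Raskutti--Wainwright--Yu, Rudelson--Zhou, and the $\phi\equiv1$, $v=0$ instance already used in Lemma~\ref{concentrate lem RE conditon MCAR} via Lemma~F.1 of \cite{zhang2023semi}). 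To begin, fix $T>0$. Since $\phi$ is continuous and strictly positive, $c_T:=\inf_{\abs{t}\le T}\phi(t)>0$; and since $v\in[0,1]$, the triangle inequality gives $\abs{\mathbf{U}_i^\top(\boldsymbol{\eta}+v\boldsymbol{\Delta})}\le T$ on $\{\abs{\mathbf{U}_i^\top\boldsymbol{\eta}}\le T/2\}\cap\{\abs{\mathbf{U}_i^\top\boldsymbol{\Delta}}\le T/2\}$, so $\phi(\mathbf{U}_i^\top(\boldsymbol{\eta}+v\boldsymbol{\Delta}))\ge c_T\,b_i\,\mathbbm{1}\{\abs{\mathbf{U}_i^\top\boldsymbol{\Delta}}\le T/2\}$ with $b_i:=\mathbbm{1}\{\abs{\mathbf{U}_i^\top\boldsymbol{\eta}}\le T/2\}\in\{0,1\}$. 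Let $\psi_T(t):=t^2\wedge(T/2-\abs{t})_+^2$, which is nonnegative, bounded by $T^2/16$, $O(T)$-Lipschitz, vanishes for $\abs{t}\ge T/2$, has $\psi_T(0)=0$ and $\sqrt{\psi_T(t)}=\abs{t}\wedge(T/2-\abs{t})_+$, and satisfies $\psi_T(t)\le t^2\mathbbm{1}\{\abs{t}\le T/2\}$ with equality on $\abs{t}\le T/4$. Hence
\[
  M^{-1}\sum_{i\in\mathcal{J}}\phi(\mathbf{U}_i^\top(\boldsymbol{\eta}+v\boldsymbol{\Delta}))(\mathbf{U}_i^\top\boldsymbol{\Delta})^2 \ \ge\ c_T\,G(\boldsymbol{\Delta}), \qquad G(\boldsymbol{\Delta}):=M^{-1}\sum_{i\in\mathcal{J}}b_i\,\psi_T(\mathbf{U}_i^\top\boldsymbol{\Delta}),
\]
so it suffices to lower bound $G$.

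\emph{Expectation.} From $\psi_T(t)\ge t^2-t^2\mathbbm{1}\{\abs{t}>T/4\}$ and $b=1-\mathbbm{1}\{\abs{\mathbf{U}^\top\boldsymbol{\eta}}>T/2\}$,
\[
  \E\sbr{G(\boldsymbol{\Delta})}\ \ge\ \E\sbr{(\mathbf{U}^\top\boldsymbol{\Delta})^2}-\E\sbr{(\mathbf{U}^\top\boldsymbol{\Delta})^2\mathbbm{1}\{\abs{\mathbf{U}^\top\boldsymbol{\eta}}>T/2\}}-\E\sbr{(\mathbf{U}^\top\boldsymbol{\Delta})^2\mathbbm{1}\{\abs{\mathbf{U}^\top\boldsymbol{\Delta}}>T/4\}}.
\]
The first term is $\ge\lambda_{\mathbf{U}}\norm{\boldsymbol{\Delta}}_2^2$. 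For the remaining two, Cauchy--Schwarz together with $\E[(\mathbf{U}^\top\boldsymbol{\Delta})^4]\le 8\sigma_{\mathbf{U}}^4\norm{\boldsymbol{\Delta}}_2^4$ (from $\norm{\mathbf{U}^\top\boldsymbol{\Delta}}_{\psi_2}\le\sigma_{\mathbf{U}}$, $\norm{\boldsymbol{\Delta}}_2\le1$, and Lemma~\ref{sub-gaussian properties}), the sub-Gaussian tail of $\mathbf{U}^\top\boldsymbol{\Delta}$ for the third term, and Markov's inequality applied to the hypothesis $\E\abs{\mathbf{U}^\top\boldsymbol{\eta}}^c<C$ for the second, show both are $o_T(1)\,\norm{\boldsymbol{\Delta}}_2^2$ uniformly over $\norm{\boldsymbol{\Delta}}_2\le1$. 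Consequently, for $T=T_0$ large enough (depending only on $\sigma_{\mathbf{U}},\lambda_{\mathbf{U}},c,C$) we get $\E[G(\boldsymbol{\Delta})]\ge\tfrac12\lambda_{\mathbf{U}}\norm{\boldsymbol{\Delta}}_2^2$ for all $\norm{\boldsymbol{\Delta}}_2\le1$.

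\emph{Uniform fluctuation (the main obstacle).} It remains to show that, with probability at least $1-c_1e^{-c_2M}$, $G(\boldsymbol{\Delta})\ge\tfrac18\E[G(\boldsymbol{\Delta})]-\kappa_2'\tfrac{\log d}{M}\norm{\boldsymbol{\Delta}}_1^2$ for all $\norm{\boldsymbol{\Delta}}_2\le1$; combined with the previous two displays this yields the lemma (with $\kappa_1\asymp c_{T_0}\lambda_{\mathbf{U}}$ and $\kappa_2\asymp c_{T_0}\kappa_2'$). A direct bound on $\abs{G(\boldsymbol{\Delta})-\E G(\boldsymbol{\Delta})}$ via symmetrization and the Ledoux--Talagrand contraction inequality (the map $t\mapsto b_i\psi_{T_0}(t)$ is $O(T_0)$-Lipschitz through $0$, so the Rademacher complexity over a shell $\{\norm{\boldsymbol{\Delta}}_1\le R,\ \norm{\boldsymbol{\Delta}}_2\le1\}$ is dominated by $O(T_0)\,R\,\E\norm{M^{-1}\sum_i\varepsilon_i\mathbf{U}_i}_\infty\lesssim R\sigma_{\mathbf{U}}\sqrt{\log d/M}$) scales only linearly in $\norm{\boldsymbol{\Delta}}$, which is useless near $\boldsymbol{\Delta}=0$ where $\E G(\boldsymbol{\Delta})\asymp\norm{\boldsymbol{\Delta}}_2^2$ is quadratic. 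I would therefore run the argument in the square-root parametrization in the style of Raskutti--Wainwright--Yu: $\sqrt{\psi_{T_0}}$ is the $1$-Lipschitz tent $t\mapsto\abs{t}\wedge(T_0/2-\abs{t})_+$, so after symmetrization and contraction the process $\boldsymbol{\Delta}\mapsto\sqrt{G(\boldsymbol{\Delta})}$ is controlled by the linear functional $\boldsymbol{\Delta}\mapsto M^{-1}\sum_i\varepsilon_i\mathbf{U}_i^\top\boldsymbol{\Delta}$; peeling over dyadic $\ell_1$-radii $R=2^{j}$ ($0\le j\lesssim\log_2\sqrt d$, since $\norm{\boldsymbol{\Delta}}_1\le\sqrt d$ on the $\ell_2$-ball), together with a bounded-differences/Talagrand concentration of each shell supremum about its mean (summands in $[0,T_0^2/16]$) and a union bound at deviation budget $\asymp e^{-c_2M}$, gives $\sqrt{G(\boldsymbol{\Delta})}\ge\tfrac12\sqrt{\E[G(\boldsymbol{\Delta})]}-C\sqrt{\log d/M}\,\norm{\boldsymbol{\Delta}}_1$ uniformly, and squaring via $a\ge b-c\Rightarrow a^2\ge\tfrac12 b^2-8c^2$ finishes.

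In short, essentially all of the work is in the uniform-fluctuation step, which must simultaneously scale quadratically near the origin (forcing the square-root parametrization), tolerate the bounded $\{0,1\}$-weight $b_i$ (harmless) and the assumption that $\boldsymbol{\eta}$ carries only a $c$-th moment (handled above by Markov rather than a sub-Gaussian tail), and deliver an exponential-in-$M$ confidence level. This step is structurally identical to the restricted-eigenvalue arguments behind Lemma~F.1 of \cite{zhang2023semi} and Lemma~\ref{concentrate lem RE conditon MCAR}, which I would reuse with the bounded Lipschitz $\psi_{T_0}$ in place of the square $t^2$; the truncation and expectation steps are then routine.
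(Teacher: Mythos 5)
First, note that the paper does not actually prove this statement: its entire ``proof'' is the one-line remark that the lemma follows from Lemma C.5 of \cite{zhang2021dynamic}, so the result is imported as an external black box and there is no internal argument to compare yours against. Within your sketch, the truncation of $\phi$ (reducing to $G(\boldsymbol{\Delta})=M^{-1}\sum_i b_i\psi_{T_0}(\mathbf{U}_i^\top\boldsymbol{\Delta})$) and the expectation lower bound $\E[G(\boldsymbol{\Delta})]\ge\tfrac12\lambda_{\mathbf U}\|\boldsymbol{\Delta}\|_2^2$ are correct and are indeed how proofs of such statements begin.

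The gap is in the uniform-fluctuation step, which you rightly identify as carrying all the weight but do not close. With the truncation level $T_0$ held \emph{fixed}, symmetrization plus Ledoux--Talagrand contraction controls the centered empirical mean $G-\E G$, not $\sqrt{G}$ (symmetrization is a statement about $(P_M-P)f$, not about a nonlinear functional of the empirical measure), and on the shell $\{\|\boldsymbol{\Delta}\|_1\le R\}$ it yields a deviation of order $T_0R\sqrt{\log d/M}$ with \emph{no} factor of $\|\boldsymbol{\Delta}\|_2$; converting that into a bound on $\sqrt{G}$ gives $(T_0R)^{1/2}(\log d/M)^{1/4}$, not the $R\sqrt{\log d/M}$ asserted in your display before squaring. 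Bounded differences at confidence $1-e^{-c_2M}$ likewise contributes an absolute constant $\asymp T_0^2\sqrt{c_2}$. Neither term can be absorbed into $\kappa_1\|\boldsymbol{\Delta}\|_2^2-\kappa_2\frac{\log d}{M}\|\boldsymbol{\Delta}\|_1^2$: for a one-sparse $\boldsymbol{\Delta}$ with $\|\boldsymbol{\Delta}\|_2=\|\boldsymbol{\Delta}\|_1=r\ll\sqrt{\log d/M}$ the target is $\asymp\kappa_1r^2>0$, while your control on the fluctuation is only of order $r\sqrt{\log d/M}\gg r^2$, so nothing positive survives. The standard repair---and what the references you invoke actually do---is to let the truncation level scale with $\|\boldsymbol{\Delta}\|_2$, i.e.\ replace $\psi_{T_0}$ by $\psi_{\tau\|\boldsymbol{\Delta}\|_2}$: this preserves your lower bound on $\phi$ (since $\tau\|\boldsymbol{\Delta}\|_2\le\tau$) and the expectation bound at every scale, restores degree-$2$ homogeneity so the whole problem reduces to the unit $\ell_2$-sphere on a single high-probability event (no $\ell_2$-peeling or extra union bound), and turns the contraction bound into $\asymp\tau\|\boldsymbol{\Delta}\|_2\|\boldsymbol{\Delta}\|_1\sqrt{\log d/M}$ and the concentration term into $\asymp\tau^2\|\boldsymbol{\Delta}\|_2^2\sqrt{c_2}$, both of which AM--GM absorbs into the two terms of the claimed bound. (Mendelson's small-ball method is an equally valid alternative.) As written, your argument establishes the lemma only for $\|\boldsymbol{\Delta}\|_2\gtrsim\sqrt{\log d/M}$.
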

Lemma~\ref{lemma Lemma C.5 of zhang 2021} follows from Lemma C.5 of \cite{zhang2021dynamic}.

\begin{lemma}\label{lemma mar RSC condition}
Let Assumption~\ref{assumption mar nuisance (a)} hold. Then there exist some constants $\kappa_1, \kappa_2$, $c_1, c_2>0$ such that
\begin{align*}
    \P\p{\delta \bar l_1 \p{\alpha^*, \Delta} \geq \kappa_1 \norm{\Delta}_2^2 -  \kappa_2 \frac{\log d}{M\gamma_n} \norm{\Delta}_1^2,\;\; \forall \norm{\Delta }_2 \leq 1}\geq 1 - c_1 \exp\p{-c_2M\gamma_n}.
\end{align*}
\end{lemma}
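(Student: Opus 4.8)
The plan is to exploit the fact that only the exponential part of $l_{1,i}$ carries any curvature. The linear-in-$\alpha$ piece $(1-\Gamma_i)\bar X_0^\top\alpha$ is affine in $\alpha$ (with $\bar X_0$ not depending on $\alpha$), so its Bregman divergence vanishes and it contributes nothing to $\delta\bar l_1^{(-k)}(\alpha^*_{PS},\Delta)$. Writing $a_i=X_i^\top\alpha^*_{PS}$ and $b_i=X_i^\top\Delta$, this leaves
\begin{align*}
\delta\bar l_1^{(-k)}\p{\alpha^*_{PS},\Delta}=M^{-1}\sum_{i\in\mathcal{I}_{-k,\alpha}}\Gamma_i\,e^{-a_i}\p{e^{-b_i}-1+b_i},
\end{align*}
and every summand is nonnegative by convexity of $t\mapsto e^{-t}$, so I may lower-bound termwise. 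I would use the elementary inequality $e^{-b}-1+b\ge\tfrac12 b^2\psi(b)$ with $\psi(b):=\min\{1,e^{-b}\}$, which is continuous and strictly positive, together with the pointwise bound $e^{-X^\top\alpha^*_{PS}}\ge k_0 c_0\gamma_n^{-1}$ (this is exactly the inequality established inside the proof of Lemma~\ref{lemma oracle mar properties}, using $1-\gamma_n\ge c_0$). This yields
\begin{align*}
\delta\bar l_1^{(-k)}\p{\alpha^*_{PS},\Delta}\ \ge\ \frac{k_0 c_0}{2\gamma_n}\,M^{-1}\sum_{i\in\mathcal{I}_{-k,\alpha}}\Gamma_i\,\psi\p{X_i^\top\Delta}\p{X_i^\top\Delta}^2,
\end{align*}
where the $\gamma_n^{-1}$ prefactor is what will absorb the thinning caused by the labels $\Gamma_i$.

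Next I would condition on $(\Gamma_i)_{i\in[n]}$ and set $\mathcal{J}:=\{i\in\mathcal{I}_{-k,\alpha}:\Gamma_i=1\}$, $N:=\abs{\mathcal{J}}$. Conditionally, $(X_i)_{i\in\mathcal{J}}$ are i.i.d.\ draws from the law of $X$ given $\Gamma=1$, which by Assumption~\ref{assumption mar nuisance (a)}(b) is sub-Gaussian with parameter $\sigma$ and has covariance matrix with smallest eigenvalue at least $\kappa_l$. I would then apply Lemma~\ref{lemma Lemma C.5 of zhang 2021} with $\mathbf{U}_i=X_i$, $\phi=\psi$, $\boldsymbol\eta=\mathbf 0$ (so $\E\{\abs{\mathbf{U}^\top\boldsymbol\eta}^c\}=0$), and $v=1$, to obtain constants $\kappa_1,\kappa_2,c_1,c_2>0$ such that, with conditional probability at least $1-c_1 e^{-c_2 N}$,
\begin{align*}
N^{-1}\sum_{i\in\mathcal{J}}\psi\p{X_i^\top\Delta}\p{X_i^\top\Delta}^2\ \ge\ \kappa_1\norm{\Delta}_2^2-\kappa_2\frac{\log d}{N}\norm{\Delta}_1^2\qquad\text{for all }\norm{\Delta}_2\le1.
\end{align*}
By Lemma~\ref{lemma concentrate gamma}, $0.79\,M\gamma_n\le N\le1.21\,M\gamma_n$ with probability at least $1-2e^{-0.01 M\gamma_n}$; on this event $N/M\ge0.79\gamma_n$ and $\log d/N\le\log d/(0.79\,M\gamma_n)$. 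Multiplying the previous display by $(N/M)\cdot k_0 c_0/(2\gamma_n)$, combining the two events via the tower property, and relabeling constants gives
\begin{align*}
\delta\bar l_1^{(-k)}\p{\alpha^*_{PS},\Delta}\ \ge\ \kappa_1'\norm{\Delta}_2^2-\kappa_2'\frac{\log d}{M\gamma_n}\norm{\Delta}_1^2\qquad\text{for all }\norm{\Delta}_2\le1
\end{align*}
with probability at least $1-c_1'e^{-c_2' M\gamma_n}$, which is the assertion (the argument is the same for each $k\in[K]$, so the dependence on $k$ is suppressed in the statement).

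The main obstacle is the interaction between the decaying sampling probability $\gamma_n$ and the non-quadratic curvature of the exponential loss. Because $\bar l_1^{(-k)}$ does not have a constant Hessian, one cannot simply invoke a restricted-eigenvalue estimate: the local curvature $e^{-X_i^\top\alpha}$ degenerates where $X_i^\top\alpha$ is large, and this must be tamed by a truncation — which is exactly what Lemma~\ref{lemma Lemma C.5 of zhang 2021} packages, and is the reason I route the remainder through the continuous positive weight $\psi$ so that the lemma applies verbatim. The second delicate point is the $\gamma_n$-bookkeeping: the factor $\gamma_n^{-1}$ pulled out of the propensity lower bound $e^{-X^\top\alpha^*_{PS}}\ge k_0 c_0\gamma_n^{-1}$ must exactly offset the loss of mass incurred when the empirical average is restricted to the $\Gamma_i=1$ subsample, and Lemma~\ref{lemma concentrate gamma} is what makes that cancellation rigorous while simultaneously converting the slack term $\log d/N$ into the stated $\log d/(M\gamma_n)$.
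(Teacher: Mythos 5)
Your proposal is correct and follows essentially the same route as the paper's proof: drop the affine part, lower-bound the curvature weight by $\exp(-X^\top\alpha^*_{PS})\ge k_0c_0\gamma_n^{-1}$, invoke Lemma~\ref{lemma Lemma C.5 of zhang 2021} conditionally on the labels, and convert to an unconditional bound via concentration of $\sum_i\Gamma_i$ (the paper integrates the conditional tail against the binomial MGF, while you intersect with the event $N\asymp M\gamma_n$ — an immaterial difference). Your use of the exact inequality $e^{-b}-1+b\ge\tfrac12 b^2\min\{1,e^{-b}\}$ in place of the paper's Taylor-remainder form with an intermediate point $t_1$ is a minor, slightly cleaner variant of the same step.
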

\begin{proof}
    By Taylor's theorem, with some $t_1 \in \p{0,1}$, 
    \begin{align}
        \delta \bar l_1\p{\alpha^*, \Delta} &= \frac{1}{2} M^{-1}\sum_{i \in \mathcal{I}_{-k,\alpha}} \Gamma_i \exp\p{ -X_i^\top\alpha^*-t_1X_i^\top\Delta}\p{X_i^\top\Delta}^2.\label{taylor expansion of delta l1}
    \end{align}
    Let $\widehat \gamma_{-k,\alpha} = M^{-1}\sum_{i \in \mathcal{I}_{-k,\alpha}} \Gamma_i$ and
\[
        L_1 = M^{-1}\sum_{i \in \mathcal{I}_{-k,\alpha}} \Gamma_i \exp\p{ -t_1X_i^\top\Delta}\p{X_i^\top\Delta}^2.
\]
    Under Assumption~\ref{assumption mar nuisance (a)}(a), since
\[
\exp\p{-X_i^\top \alpha^*} = \{1-g(X_i^\top \alpha^*)\}/g(X_i^\top \alpha^*),
\]
we have
\begin{align}
     k_0c_0\gamma_n^{-1}\leq k_0\frac{1-\gamma_n}{\gamma_n}\leq \exp\p{-X_i^\top \alpha^*} \leq k_0^{-1}\frac{1-\gamma_n}{\gamma_n} \leq k_0^{-1}\gamma_n^{-1}, \label{exp(-X'alpha = gamma-1)}
\end{align}
which implies
\begin{align}
    \delta \bar l_1\p{\alpha^*, \Delta} \geq \frac{k_0c_0}{2}\gamma_n^{-1} L_1. \label{RSC t1}
\end{align}
     By Lemma~\ref{lemma Lemma C.5 of zhang 2021}, 
    \begin{align*}
        \P\p{\widehat \gamma_{-k,\alpha}^{-1} L_1 \geq \kappa_1 \norm{\Delta}_2^2 - \kappa_2 \frac{\log d}{M\widehat \gamma_{-k,\alpha}} \norm{\Delta}_1^2,\;\;\forall \norm{\Delta
        }_2 \leq 1 \mid \Gamma_{\mathcal{I}_{-k,\alpha}}}\geq 1-c_1\exp\p{-c_2M\widehat \gamma_{-k,\alpha}}.
    \end{align*}
Then
\begin{align*}
    \P&\p{\widehat \gamma^{-1}_{-k,\alpha}L_1 < \kappa_1 \norm{\Delta}_2^2 - \kappa_2 \frac{\log d}{M\widehat \gamma_{-k,\alpha}} \norm{\Delta}_1^2, \forall \norm{\Delta
        }_2 \leq 1 }\\
        &=\E\sbr{\P\p{\widehat \gamma^{-1}_{-k,\alpha}L_1 < \kappa_1 \norm{\Delta}_2^2 - \kappa_2 \frac{\log d}{M\widehat \gamma_{-k,\alpha}} \norm{\Delta}_1^2, \forall \norm{\Delta
        }_2 \leq 1 \mid \Gamma_{\mathcal{I}_{-k,\alpha}}}}\leq \E\sbr{c_1\exp\p{-c_2M\widehat \gamma_{-k,\alpha}}}\\
        &\overset{(i)}{=}c_1\p{1-\gamma_n + \gamma_n e^{-c_2}}^M\overset{(ii)}{\leq} c_1 \exp\br{-\p{1-e^{-c_2}}M\gamma_n},
\end{align*}
where $(i)$ uses the moment-generating function of a binomial random variable,
\[
\E\{\exp(tB)\}=(1-p+pe^t)^n,\quad B\sim B(n,p),
\]
and $(ii)$ uses the inequality $1-t \leq e^{-t}$. By Lemma~\ref{lemma concentrate gamma}, for $0<t<0.01M\gamma_n$,
\begin{align*}
    \P\p{0.79\gamma_n \leq \widehat \gamma_{-k,\alpha} \leq 1.21\gamma_n} \geq 1-2e^{-t}.
\end{align*}
Together we have
\begin{align*}
    \P&\p{L_1 < 0.79\gamma_n\br{\kappa_1 \norm{\Delta}_2^2 - \kappa_2 \frac{\log d}{0.79M\gamma_n} \norm{\Delta}_1^2}, \forall \norm{\Delta
        }_2 \leq 1 }\\
        &\leq \P\p{\widehat \gamma^{-1}_{-k,\alpha}L_1 < \kappa_1 \norm{\Delta}_2^2 - \kappa_2 \frac{\log d}{M\widehat \gamma_{-k,\alpha}} \norm{\Delta}_1^2, \forall \norm{\Delta
        }_2 \leq 1 }+\P\p{ \abs{\frac{\widehat \gamma_{-k,\alpha}}{\gamma_n}-1} > 0.21}\\
        &\leq c_1\exp\br{-\p{1-e^{-c_2}}M\gamma_n} + 2\exp\p{-0.009M\gamma_n}.
\end{align*}
Let $\kappa_1' = 2\cdot 0.79\kappa_1, \kappa_2' = 2\kappa_2, c_1' = c_1+2, c_2' = \p{1-e^{-c_2} }\wedge 0.009$, we have
\begin{align*}
    \P\p{L_1 \geq \frac{\gamma_n}{2}\br{\kappa_1' \norm{\Delta}_2^2 -  \kappa_2' \frac{\log d}{M\gamma_n} \norm{\Delta}_1^2}, \forall \norm{\Delta }_2 \leq 1} \geq 1 - c_1' \exp\p{-c_2'M\gamma_n}.
\end{align*}
With \eqref{RSC t1}, it follows that
\begin{align*}
    \P\p{\delta \bar l_1\p{\alpha^*, \Delta} \geq \frac{k_0c_0}{4}\br{\kappa_1' \norm{\Delta}_2^2 -  \kappa_2' \frac{\log d}{M\gamma_n} \norm{\Delta}_1^2}, \forall \norm{\Delta }_2 \leq 1}\geq 1 - c_1' \exp\p{-c_2'M\gamma_n}.
\end{align*}
\end{proof}

\begin{lemma}\label{lemma mar gradient infty norm}
    Let Assumption~\ref{assumption mar nuisance (a)} hold. Then there exist some constants $c_1,C_1,$ $C_2>0$ such that if $M\gamma_n > C_1\log M \log d$, for any $0<t<C_2M\gamma_n/\log M$,
\begin{align*}
    \P\p{\norm{\nabla_\alpha \bar l_1\p{\alpha^*}}_\infty \leq c((t + \log d)/(M\gamma_n))^{1/2}} \geq 1 - 13\exp\p{-t}.
\end{align*}
\end{lemma}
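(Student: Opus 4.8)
The plan is to write the gradient at $\alpha^* = \alpha^*_{PS}$ explicitly and reduce it to centered empirical averages that can be controlled by the maximal Bernstein-type inequality of Lemma~\ref{Lemma Tail Bounds for Maximums}. Since $l_{1,i}(\alpha) = (1-\Gamma_i)\bar X_0^\top\alpha + \Gamma_i\exp(-X_i^\top\alpha)$, differentiating gives $\nabla_\alpha\bar l_1(\alpha^*) = M^{-1}\sum_{i\in\mathcal{I}_{-k,\alpha}}\br{(1-\Gamma_i)\bar X_0 - \Gamma_i X_i\exp(-X_i^\top\alpha^*)}$, where $\bar X_0$ is the external sample mean over \emph{all} $n$ observations. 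Because the population objective defining $\alpha^*_{PS}$ is strongly convex (Lemma~\ref{lemma oracle mar properties}), its first-order condition reads $\E[(1-\Gamma)X] = \E[\Gamma X\exp(-X^\top\alpha^*)]$, and both sides equal $(1-\gamma_n)\mu_0$ with $\mu_0 := \E[X\mid\Gamma=0]$. Adding and subtracting $M^{-1}\sum_{i\in\mathcal{I}_{-k,\alpha}}(1-\Gamma_i)X_i$ and inserting $\mu_0$, I would decompose $\nabla_\alpha\bar l_1(\alpha^*) = T_B + T_{A_1} + T_{A_2}$, where $T_B = M^{-1}\sum_{i\in\mathcal{I}_{-k,\alpha}}\br{(1-\Gamma_i)X_i - \Gamma_i X_i\exp(-X_i^\top\alpha^*)}$ and $T_{A_1} = M^{-1}\sum_{i\in\mathcal{I}_{-k,\alpha}}(1-\Gamma_i)(\mu_0 - X_i)$ are mean-zero i.i.d.\ averages over the $M$ indices of $\mathcal{I}_{-k,\alpha}$ (the first by the first-order condition, the second since $\E[(1-\Gamma)(\mu_0-X)]=0$), and $T_{A_2} = (1-\widehat\gamma_{-k,\alpha})(\bar X_0 - \mu_0)$ with $\widehat\gamma_{-k,\alpha} = M^{-1}\sum_{i\in\mathcal{I}_{-k,\alpha}}\Gamma_i$.

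For $T_B$ I would apply Lemma~\ref{Lemma Tail Bounds for Maximums} with $\alpha=2$. The delicate summand is $\Gamma_i X_i\exp(-X_i^\top\alpha^*)$: by the overlap bound \eqref{exp(-X'alpha = gamma-1)}, $\exp(-X_i^\top\alpha^*) \le k_0^{-1}\gamma_n^{-1}$ almost surely, so each coordinate of the summand of $T_B$ is bounded by $k_0^{-1}\gamma_n^{-1}|X_i[j]|$, giving tail parameter $K_{n,d} \asymp \sigma\gamma_n^{-1}$; its per-coordinate second moment equals $\E[(1-\Gamma)X[j]^2] + \E[\Gamma X[j]^2\exp(-2X^\top\alpha^*)]$, and since $\exp(-X^\top\alpha^*)\le k_0^{-1}\gamma_n^{-1}$ a.s.\ and $\E[\Gamma X[j]^2] = \gamma_n\E[X[j]^2\mid\Gamma=1] = O(\gamma_n\sigma^2)$ by Assumption~\ref{assumption mar nuisance (a)}(b), this is $O(\sigma^2) + O(\sigma^2\gamma_n^{-1}) = O(\sigma^2\gamma_n^{-1})$, i.e.\ $\Gamma_{n,d} = O(\sigma^2\gamma_n^{-1})$. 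Hence the leading term of Lemma~\ref{Lemma Tail Bounds for Maximums} is $O(\sigma)\sqrt{(t+\log d)/(M\gamma_n)}$ -- exactly the target rate -- while the sub-exponential correction $O(\sigma)\gamma_n^{-1}\sqrt{\log M}\,(t+\log d)/M$ is dominated by it precisely because $M\gamma_n \gg \log M\log d$ and $t \lesssim M\gamma_n/\log M$ force $\sqrt{\log M}\,(t+\log d)/(M\gamma_n) = O(1)$. For $T_{A_1}$ the summand is sub-Gaussian with parameter $O(\sigma)$ and per-coordinate second moment $O(\sigma^2)$ (using $\norm{\mu_0}_\infty = \sup_j|\E[X[j]\mid\Gamma=0]| = O(\sigma)$ by sub-Gaussianity of $X$ given $\Gamma=0$), so Lemma~\ref{Lemma Tail Bounds for Maximums} contributes $O(\sigma)\sqrt{(t+\log d)/M} \le O(\sigma)\sqrt{(t+\log d)/(M\gamma_n)}$ since $\gamma_n\le 1$.

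For $T_{A_2}$, I would bound $|1-\widehat\gamma_{-k,\alpha}| \le 1$ deterministically and write $\bar X_0 - \mu_0 = (n/n_E)\cdot n^{-1}\sum_{j=1}^n(1-\Gamma_j)(X_j-\mu_0)$ with $n_E = \sum_{j=1}^n(1-\Gamma_j)$. Here $(1-\Gamma_j)(X_j-\mu_0)$ is mean-zero, sub-Gaussian with parameter $O(\sigma)$ and second moment $O(\sigma^2)$, so Lemma~\ref{Lemma Tail Bounds for Maximums} over the full $n$ samples gives $\norm{n^{-1}\sum_{j}(1-\Gamma_j)(X_j-\mu_0)}_\infty = O(\sigma)\sqrt{(t+\log d)/n} \le O(\sigma)\sqrt{(t+\log d)/(M\gamma_n)}$, using $M\gamma_n\le M\le n$ and $M\asymp n$ (fixed $K$) to also absorb the correction term under the stated conditions; and Lemma~\ref{lemma concentrate gamma}(b) applied to $1-\Gamma_j$ gives $n/n_E \le (0.79(1-\gamma_n))^{-1} \le (0.79c_0)^{-1} = O(1)$ via $\P(\Gamma=0)\ge c_0$. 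Combining the three bounds and taking a union bound over the high-probability events -- the failure probabilities from the applications of Lemmas~\ref{Lemma Tail Bounds for Maximums} and~\ref{lemma concentrate gamma} add up to a constant multiple of $e^{-t}$, which I would arrange to be at most $13e^{-t}$ -- yields $\norm{\nabla_\alpha\bar l_1(\alpha^*)}_\infty \le c\sqrt{(t+\log d)/(M\gamma_n)}$ with probability at least $1-13e^{-t}$, the constant $c$ depending only on the constants in Assumption~\ref{assumption mar nuisance (a)}.

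I expect the main obstacle to be the bookkeeping forced by the decaying-$\gamma_n$ regime: one must simultaneously (i) carry the $\gamma_n^{-1}$ inflation in both the tail bound $K_{n,d}$ and the variance proxy $\Gamma_{n,d}$ of the reweighting summand $\Gamma_i X_i\exp(-X_i^\top\alpha^*)$ through Lemma~\ref{Lemma Tail Bounds for Maximums} and verify that its sub-exponential correction term is negligible under \emph{exactly} the hypotheses $M\gamma_n \gg \log M\log d$ and $t \lesssim M\gamma_n/\log M$, and (ii) reconcile the scale mismatch between the $n$-sample average $\bar X_0$ and the $M$-sample averages making up the rest of the gradient, using $\P(\Gamma=0)\ge c_0$ to keep $n_E/n$ bounded away from zero. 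Everything else is routine sub-Gaussian/sub-exponential concentration.
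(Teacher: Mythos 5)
Your proposal is correct and follows essentially the same route as the paper's proof: you split $\nabla_\alpha\bar l_1(\alpha^*)$ into the oracle gradient with $X_i$ in place of $\bar X_0$ (your $T_B$, mean zero by the first-order condition), plus the two centered pieces arising from the substitution (your $T_{A_1}$ and $T_{A_2}$ are exactly the paper's $-B_2$ and $B_1$), and control each via Lemma~\ref{Lemma Tail Bounds for Maximums} with the same $\gamma_n^{-1}$ bookkeeping and the same use of $M\gamma_n\gg\log M\log d$ and $t\lesssim M\gamma_n/\log M$ to absorb the sub-exponential correction. The only cosmetic difference is that you bound $|1-\widehat\gamma_{-k,\alpha}|\le 1$ deterministically and control only $n/n_E$ via Lemma~\ref{lemma concentrate gamma}, where the paper concentrates both empirical proportions separately; this does not change the argument.
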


\begin{proof}
Define 
\begin{align*}
    l_{1,i}^*\p{\alpha} &= \p{1-\Gamma_i}X_i^\top \alpha + \Gamma_i \exp\p{-X_i^\top \alpha},\quad
    \bar l_1^*\p{\alpha} = M^{-1}\sum_{i \in \mathcal{I}_{-k,\alpha}} l_{1,i}^*\p{\alpha}.
\end{align*}
Then
$
    \nabla_\alpha \bar l_1^*\p{\alpha^*} = M^{-1} \sum_{i \in \mathcal{I}_{-k,\alpha}} \br{ 1-\Gamma_i - \Gamma_i \exp\p{-X_i^\top\alpha^*}}X_i.
$

First, we show that there exists some constant $c>0$ such that, if $M\gamma_n > \log M \log d$, then for any $0 < t < 0.01M\gamma_n/\log M$,
\[
    \P\p{\norm{\nabla_\alpha \bar l_1^*\p{\alpha^*}}_{\infty} > c((t + \log d)/(M\gamma_n))^{1/2}} \leq 3\exp\p{-t}.
\]
Since $\alpha^* = \argmin_{\alpha \in \R^d} \E[l_{1,i}^*\p{\alpha^*}]$, by first order optimality we have
\[
    \E\sbr{\br{1-\Gamma_i - \Gamma_i \exp\p{-X_i^\top\alpha^*}}X_i} = \boldsymbol{0}.
\]
Denote the k-th element of $X_i$ by $X_i^\top e_k$, then by \eqref{exp(-X'alpha = gamma-1)},
\begin{align*}
    \abs{\br{1-\Gamma_i - \Gamma_i \exp\p{-X_i^\top\alpha^*}}X_i^\top e_k} \leq \p{1+k_0^{-1}\gamma_n^{-1}}\abs{X_i^\top e_k}.
\end{align*}
By Lemma~\ref{sub-gaussian properties}(a), 
\begin{align*}
    &\max_{i \in \mathcal{I}_{-k,\alpha}}\max_{1\leq k \leq d}\norm{\br{1-\Gamma_i\p{g\p{X_i^\top \alpha^*}}^{-1}}X_i^\top e_k}_{\psi_2} \\
&\qquad\leq \max_{i \in \mathcal{I}_{-k,\alpha}}\max_{1\leq k \leq d} \p{1+k_0^{-1}\frac{1-\gamma_n}{\gamma_n}}\norm{X_i^\top e_k}_{\psi_2} \leq \p{1+k_0^{-1}\gamma_n^{-1}}\sigma.
\end{align*}
By \eqref{exp(-X'alpha = gamma-1)},
\begin{align}
    k_0^2c_0^2\gamma_n^{-2}\leq \exp\p{-2X_i^\top \alpha^*} \leq k_0^{-2}\gamma_n^{-2}.\label{exp(-2X'alpha = gamma-2)}
\end{align}
Thus, by Assumption~\ref{assumption mar nuisance (a)}(b) and Lemma~\ref{sub-gaussian properties} (c),
\begin{align*}
    &\E\sbr{{\br{1-\Gamma_i-\Gamma_i\exp\p{-X_i^\top \alpha^*}}^2\p{X_i^\top e_k}}^2}=\E\sbr{\br{1-\Gamma_i + \Gamma_i\exp\p{-2X_i^\top \alpha^*}}\p{X_i^\top e_k}^2}\\
    &\qquad\leq \E\sbr{\p{X_i^\top e_k}^2} + k_0^{-2}\gamma_n^{-2}\E\sbr{\Gamma_i\p{X_i^\top e_k}^2}=\E\sbr{\p{X_i^\top e_k}^2} + k_0^{-2}\gamma_n^{-1}\E\sbr{\p{X_i^\top e_k}^2 \mid \Gamma_i = 1}\\
    &\qquad\leq 2\p{1 + k_0^{-2}\gamma_n^{-1}}\sigma^2,
\end{align*}
which implies 
\begin{align*}
    \max_{1\leq k \leq d} M^{-1} \sum_{i \in \mathcal{I}_{-k,\alpha}} \E\sbr{\p{{\br{1-\Gamma_i\p{g\p{X_i^\top \alpha^*}}^{-1}}X_i^\top e_k}}^2} \leq 2\p{1 + k_0^{-2}\gamma_n^{-1}}\sigma^2.
\end{align*}
By Lemma~\ref{Lemma Tail Bounds for Maximums}, for some constant $C>0$, with probability at least $1-3\exp\p{-t}$,
\begin{align*}
    \norm{\nabla_\alpha \bar l_1^*\p{\alpha^*}}_{\infty} &\leq 7(\frac{2\p{1 + k_0^{-2}\gamma_n^{-1}}\sigma^2\p{t + \log d}}{M})^{1/2} + \frac{C\p{1+k_0^{-1}\gamma_n^{-1}}\sigma\p{t + \log d}(\log \p{2M})^{1/2}}{M}.
\end{align*}
Let $c_1 = 7(2\p{1+ k_0^{-2}}\sigma^2)^{1/2}$ and $c_2 = C\p{1+k_0^{-1}}\sigma(2)^{1/2}$. Then with probability at least $1-3\exp\p{-t}$,
\begin{align*}
    \norm{\nabla_\alpha \bar l_1^*\p{\alpha^*}}_{\infty} \leq c_1((t + \log d)/(M\gamma_n))^{1/2} + c_2\frac{\p{t + \log d}(\log M)^{1/2}}{M\gamma_n}.
\end{align*}
If $M\gamma_n > \log M \log d$ and $0 < t < 0.01M\gamma_n/\log M $, we have
\begin{align}
    &c_1((t + \log d)/(M\gamma_n))^{1/2} + c_2\frac{\p{t + \log d}(\log M)^{1/2}}{M\gamma_n}\\
    =& c_1((t + \log d)/(M\gamma_n))^{1/2} + c_2((t + \log d)/(M\gamma_n))^{1/2}(\frac{t\log M}{M\gamma_n} + \frac{\log d\log M}{M\gamma_n})^{1/2}\notag\\
    \leq& c_1((t + \log d)/(M\gamma_n))^{1/2} + c_2((t + \log d)/(M\gamma_n))^{1/2}(\frac{0.01M\gamma_n\log M}{M\gamma_n\log M } + 1)^{1/2}\\
    \leq& c_3((t + \log d)/(M\gamma_n))^{1/2}, \label{simplify 20}
\end{align}
where $c_3 = \max\br{c_1, c_2(1.01)^{1/2}}$. Thus, we have
\[
    \P\p{\norm{\nabla_\alpha \bar l_1^*\p{\alpha^*}}_{\infty} > c_3((t + \log d)/(M\gamma_n))^{1/2}} \leq 3\exp\p{-t}.
\]

Now we back to this lemma. Note that
\begin{align*}
        &\nabla_\alpha \bar l_1\p{\alpha^*} - \nabla_\alpha \bar l_1^*\p{\alpha^*} = M^{-1} \sum_{i \in \mathcal{I}_{-k,\alpha}} \p{1-\Gamma_i} \p{\bar X_0 - X_i}\\
        &\qquad= \p{1-\widehat \gamma_{-k,\alpha}}\p{\bar X_0 - \E\sbr{X_i \mid \Gamma_i=0}} - M^{-1} \sum_{i \in \mathcal{I}_{-k,\alpha}}\p{1-\Gamma_i}\p{X_i - \E\sbr{X_i \mid \Gamma_i=0}}= B_1 - B_2,
    \end{align*}
where
\begin{align*}
    B_1 &= \p{1-\widehat \gamma_{-k,\alpha}}\p{1-\widehat \gamma}^{-1}n^{-1}\sum_{i=1}^n\p{1-\Gamma_i}\p{X_i - \E\sbr{X_i \mid \Gamma_i=0}},\\
    B_2 &= M^{-1} \sum_{i \in \mathcal{I}_{-k,\alpha}}\p{1-\Gamma_i}\p{X_i - \E\sbr{X_i \mid \Gamma_i=0}}.
\end{align*}
For $B_2$,
\begin{align*}
    \E\sbr{(1-\Gamma_i)\p{X_i - \E\sbr{X_i \mid \Gamma_i=0}}} = \p{1-\gamma_n}\E\sbr{\p{X_i - \E\sbr{X_i \mid \Gamma_i=0}} \mid \Gamma_i=0} = \boldsymbol{0}.
\end{align*}
By Lemma~\ref{sub-gaussian properties}(a) and (c), there exists some constant $C_1$ such that
\begin{align*}
    &\max_{i \in \mathcal{I}_{-k,\alpha}}\max_{1 \leq k \leq d} \norm{(1-\Gamma_i)\p{X_i^\top e_k - \E\sbr{X_i^\top e_k \mid \Gamma_i=0}}}_{\psi_2} \\
    &\leq \max_{i \in \mathcal{I}_{-k,\alpha}}\max_{1 \leq k \leq d}\norm{X_i^\top e_k - \E\sbr{X_i^\top e_k \mid \Gamma_i=0}}_{\psi_2}\\
    &\leq \max_{i \in \mathcal{I}_{-k,\alpha}}\max_{1 \leq k \leq d}\norm{X_i^\top e_k}_{\psi_2} + \norm{\E\sbr{X_i^\top e_k \mid \Gamma_i=0}}_{\psi_2}\leq C_1\sigma.
\end{align*}
Also by Lemma~\ref{sub-gaussian properties}(c), we have
\begin{align*}
    &\E\sbr{\br{(1-\Gamma_i)\p{X_i^\top e_k - \E\sbr{X_i^\top e_k \mid \Gamma_i=0}}}^2} = \E\sbr{\p{1-\Gamma_i}\p{X_i^\top e_k - \E\sbr{X_i^\top e_k \mid \Gamma_i=0}}^2}\\
    &\qquad=\p{1-\gamma_n}\E\sbr{\p{X_i^\top e_k - \E\sbr{X_i^\top e_k \mid \Gamma_i=0}}^2 \mid \Gamma_i=0}\leq \E\sbr{\p{X_i^\top e_k}^2 \mid \Gamma_i=0}\leq 2\sigma^2,
\end{align*}
which implies
\begin{align*}
    \max_{1\leq k \leq d} M^{-1} \sum_{i \in \mathcal{I}_{-k,\alpha}}\E\sbr{\br{(1-\Gamma_i)\p{X_i^\top e_k - \E\sbr{X_i^\top e_k \mid \Gamma_i=0}}}^2} \leq 2\sigma^2.
\end{align*}
By Lemma~\ref{Lemma Tail Bounds for Maximums}, with probability at least $1-3\exp\p{-t}$,
\begin{align*}
    \norm{B_2}_{\infty} \leq 7(\frac{2\sigma^2\p{t + \log d}}{M})^{1/2} + \frac{CC_1\sigma\p{t + \log d}(\log \p{2M})^{1/2}}{M}.
\end{align*}
If $M > \log M \log d$ and $0 < t < 0.01M/\log M $, with identical analysis to \eqref{simplify 20}, there exists some constant $c_4>0$ such that
$
    \P\p{\norm{B_2}_{\infty} > c_4({t + \log d}/{M})^{1/2}} \leq 3\exp\p{-t}.
$

For $B_1$, let
$
B_{1a} = n^{-1}\sum_{i=1}^n\p{1-\Gamma_i}\p{X_i - \E\sbr{X_i \mid \Gamma_i=0}}.
$
Similar to $B_2$, if $n > \log n \log d$ and $0 < t < 0.01n/\log n $, there exists some constant $c_5 >0$ such that
$
\P\p{\norm{B_{1a}}_{\infty} > c_5({t  + \log d}/{n})^{1/2}} \leq 3\exp\p{-t}.
$

By Lemma~\ref{lemma concentrate gamma}, for any $0<t<0.01M\gamma_n$, we have
\begin{align*}
    \P\p{0.79\p{1-\gamma_n} \leq 1-\widehat \gamma_{-k,\alpha} \leq 1.21 
    \p{1-\gamma_n} } \geq 1-2\exp\p{-t},
\end{align*}
and for any $0<t<0.01n\gamma_n$, we have
\begin{align*}
    \P\p{0.79\p{1-\gamma_n} \leq 1-\widehat \gamma \leq 1.21 
    \p{1-\gamma_n}} \geq 1-2\exp\p{-t}.
\end{align*}
Define events $\mathcal{A}_1\p{t}, \mathcal{A}_2\p{t}, \mathcal{A}_3\p{t}$ as 
\begin{align*}
    \mathcal{A}_1\p{t} &:= \br{\norm{B_{1a}}_{\infty} \leq c_5(\frac{t + \log d}{n})^{1/2}},\quad
    \mathcal{A}_2\p{t} := \br{0.79\p{1-\gamma_n} \leq 1-\widehat \gamma_{-k,\alpha} \leq 1.21 
    \p{1-\gamma_n}},\\
    \mathcal{A}_3\p{t} &:= \br{0.79\p{1-\gamma_n} \leq 1-\widehat \gamma \leq 1.21 
    \p{1-\gamma_n}}.
\end{align*}
Since $M = (K-1)n/\p{2K}$, there exists some constant $C_1' >0$ and $0< C_2' \leq 0.01$ such that if $M > C_1'\log M \log d$, we have $M > \log M \log d$ and $n > \log n \log d$, which implies $C_2'M\gamma_n/\log M < 0.01n\gamma_n/\log n$.
Since
\begin{align*}
    B_1 &= \p{1-\widehat \gamma_{-k,\alpha}}\p{1-\widehat \gamma}^{-1}n^{-1}\sum_{i=1}^n\p{1-\Gamma_i}\p{X_i - \E\sbr{X_i \mid \Gamma_i=0}},
\end{align*}
If $M > C_1'\log M \log d$, for any $0<t<C_2'M\gamma_n/\log M$, on event $\mathcal{A}_1\p{t}\cap \mathcal{A}_2\p{t} \cap \mathcal{A}_3\p{t}$, we have 
$
    \norm{B_1}_\infty \leq {1.21\p{1-\gamma_n}}/{0.79\p{1-\gamma_n}}c_5({t + \log d}/{n})^{1/2}.
$
Let $c_6 = {1.21}/{0.79}c_5$. Thus, if $M > C_1'\log M \log d$, for any $0<t<C_2'M\gamma_n/\log M$,
\begin{align*}
\P\p{\norm{B_1}_\infty \leq c_6(\frac{t  + \log d}{n})^{1/2}} \geq \P\p{\mathcal{A}_1\cap \mathcal{A}_2 \cap \mathcal{A}_3}\geq 1 - \P\p{\mathcal{A}_1^c} - \P\p{\mathcal{A}_2^c} - \P\p{\mathcal{A}_3^c}\geq 1 - 7\exp\p{-t}.
\end{align*}
In addition, define events $\mathcal{B}_1\p{t}, \mathcal{B}_2\p{t},\mathcal{B}_3\p{t}$ as
\begin{align*}
    \mathcal{B}_1\p{t} &:=\br{\norm{\nabla_\alpha \bar l_1\p{\alpha^*}}_{\infty} \leq c_3(\frac{t  + \log d}{M\gamma_n})^{1/2}},\quad
\mathcal{B}_2\p{t} :=\br{\norm{B_2}_{\infty} \leq c_4(\frac{t  + \log d}{M})^{1/2}},\\
\mathcal{B}_3\p{t} &:=\br{\norm{B_1}_\infty \leq c_6(\frac{t  + \log d}{n})^{1/2}}.
\end{align*}
Note that
\begin{align*}
    \nabla_\alpha \bar l_1\p{\alpha^*} &= \nabla_\alpha \bar l_1\p{\alpha^*} - \nabla_\alpha \bar l_1^*\p{\alpha^*} +  \nabla_\alpha \bar l_1^*\p{\alpha^*}=B_1 - B_2 + \nabla_\alpha \bar l_1^*\p{\alpha^*}.
\end{align*}
If $M > C_1'\log M \log d$, for any $0<t<C_2'M\gamma_n/\log M$, on event $\mathcal{B}_1\p{t} \cap \mathcal{B}_2\p{t} \cap \mathcal{B}_3\p{t}$, we have
\begin{align*}
    &\norm{\nabla_\alpha \bar l_1\p{\alpha^*}}_\infty \leq \norm{B_1}_\infty + \norm{B_2}_\infty + \norm{\nabla_\alpha \bar l_1^*\p{\alpha^*}}_\infty\\
    &\qquad\leq c_6(\frac{t  + \log d}{n})^{1/2} + c_4(\frac{t  + \log d}{M})^{1/2} + c_3(\frac{t  + \log d}{M\gamma_n})^{1/2}\\
    &\qquad\leq c_6\frac{K-1}{2K}(\frac{t  + \log d}{M\gamma_n})^{1/2} + c_4(\frac{t +  \log d}{M\gamma_n})^{1/2} + c_3(\frac{t  + \log d}{M\gamma_n})^{1/2}\leq c_1'(\frac{t  + \log d}{M\gamma_n})^{1/2},
\end{align*}
where $c_1' = c_6/2 + c_4 + c_3$. Thus, if $M\gamma_n > C_1'\log M \log d$, for any $0<t<C_2'M\gamma_n/\log M$,
\begin{align*}
    \P\p{\norm{\nabla_\alpha \bar l_1\p{\alpha^*}}_\infty \leq c_1'((t + \log d)/(M\gamma_n))^{1/2}} \geq 1 - 13\exp\p{-t}.
\end{align*}

\end{proof}

\begin{lemma}\label{lemma mar prob. of Delta in the ball}
    Let Assumption~\ref{assumption mar nuisance (a)} hold. For any $0<t<0.01M\gamma_n/\log M$, choose
\[
\lambda_\alpha \asymp ((t + \log d)/(M\gamma_n))^{1/2}.
\]
If
\[
M\gamma_n > C\max\br{s_{\alpha},\log M}\p{t+\log d}
\]
for some constant $C>0$, then there exist some constants $c_1, c_2>0$ such that
\begin{align*}
    \P\p{\norm{\widehat \alpha - \alpha^*}_2 \leq 1} \geq 1 - c_1 \exp\p{-c_2M\gamma_n} - 13\exp\p{-t}.
\end{align*}
\end{lemma}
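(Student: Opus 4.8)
The goal is to show the Lasso-type estimator $\widehat\alpha^{(-k)}_{PS}$ stays within an $\ell_2$-ball of radius $1$ around $\alpha^*_{PS}$ with high probability, which is the ``localization'' step needed before the standard Lasso analysis (valid only for $\|\Delta\|_2\le 1$ in Lemma \ref{lemma mar RSC condition} and Lemma \ref{lemma Lemma C.5 of zhang 2021}) can be invoked. I would follow the classical convexity/rescaling argument: since $\alpha\mapsto M^{-1}\sum_{i\in\mathcal I_{-k,\alpha}}\bigl[(1-\Gamma_i)\bar X_0^\top\alpha+\Gamma_i\exp(-X_i^\top\alpha)\bigr]+\lambda_\alpha\|\alpha\|_1$ is convex, if the minimizer $\widehat\alpha^{(-k)}_{PS}$ had $\|\widehat\alpha^{(-k)}_{PS}-\alpha^*_{PS}\|_2>1$, then the point $\widetilde\alpha := \alpha^*_{PS}+(\widehat\alpha^{(-k)}_{PS}-\alpha^*_{PS})/\|\widehat\alpha^{(-k)}_{PS}-\alpha^*_{PS}\|_2$ on the boundary of the unit ball would satisfy $\mathcal L(\widetilde\alpha)\le\mathcal L(\alpha^*_{PS})$ as well. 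So it suffices to show that, with the stated probability, $\mathcal L(\alpha)>\mathcal L(\alpha^*_{PS})$ for \emph{every} $\alpha$ with $\|\alpha-\alpha^*_{PS}\|_2=1$; this rules out the bad event.

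**Key steps in order.** First, write $\Delta=\alpha-\alpha^*_{PS}$ with $\|\Delta\|_2=1$ and expand
\[
\mathcal L(\alpha^*_{PS}+\Delta)-\mathcal L(\alpha^*_{PS})=\delta\bar l_1^{(-k)}(\alpha^*_{PS},\Delta)+\nabla_\alpha\bar l_1^{(-k)}(\alpha^*_{PS})^\top\Delta+\lambda_\alpha\bigl(\|\alpha^*_{PS}+\Delta\|_1-\|\alpha^*_{PS}\|_1\bigr),
\]
using the notation $\delta\bar l_1^{(-k)}$ from the excerpt. Second, invoke Lemma \ref{lemma mar RSC condition} to lower bound $\delta\bar l_1^{(-k)}(\alpha^*_{PS},\Delta)\ge\kappa_1\|\Delta\|_2^2-\kappa_2(\log d/(M\gamma_n))\|\Delta\|_1^2=\kappa_1-\kappa_2(\log d/(M\gamma_n))\|\Delta\|_1^2$ on an event of probability $\ge 1-c_1\exp(-c_2M\gamma_n)$. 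Third, invoke Lemma \ref{lemma mar gradient infty norm} to bound the gradient term: on an event of probability $\ge 1-13\exp(-t)$, $|\nabla_\alpha\bar l_1^{(-k)}(\alpha^*_{PS})^\top\Delta|\le\|\nabla_\alpha\bar l_1^{(-k)}(\alpha^*_{PS})\|_\infty\|\Delta\|_1\le c\sqrt{(t+\log d)/(M\gamma_n)}\,\|\Delta\|_1$, choosing $\lambda_\alpha\asymp\sqrt{(t+\log d)/(M\gamma_n)}$ so that $\lambda_\alpha$ dominates, say $\lambda_\alpha\ge 2c\sqrt{(t+\log d)/(M\gamma_n)}$. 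Fourth, use the standard cone manipulation \eqref{sparsity technique}: $\|\alpha^*_{PS}+\Delta\|_1-\|\alpha^*_{PS}\|_1\ge\|\Delta_{S^c}\|_1-\|\Delta_S\|_1$ where $S=\mathrm{supp}(\alpha^*_{PS})$, $|S|=s_\alpha$. Combining, on the intersection event,
\[
\mathcal L(\alpha^*_{PS}+\Delta)-\mathcal L(\alpha^*_{PS})\ge\kappa_1-\kappa_2\frac{\log d}{M\gamma_n}\|\Delta\|_1^2-\tfrac{1}{2}\lambda_\alpha\|\Delta\|_1+\lambda_\alpha(\|\Delta_{S^c}\|_1-\|\Delta_S\|_1).
\]
Split into two cases. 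If $\|\Delta_{S^c}\|_1\le 3\|\Delta_S\|_1$, then $\|\Delta\|_1\le 4\|\Delta_S\|_1\le 4\sqrt{s_\alpha}\|\Delta_S\|_2\le 4\sqrt{s_\alpha}$, so $\kappa_2(\log d/(M\gamma_n))\|\Delta\|_1^2\le 16\kappa_2 s_\alpha\log d/(M\gamma_n)$ and the linear-in-$\|\Delta\|_1$ terms are bounded by $O(\lambda_\alpha\sqrt{s_\alpha})=O(\sqrt{s_\alpha(t+\log d)/(M\gamma_n)})$; under $M\gamma_n\gg s_\alpha(t+\log d)$ (the stated sample-size condition, absorbing $\log M$) both are $o(1)$, hence $<\kappa_1$, and the difference is strictly positive. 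If instead $\|\Delta_{S^c}\|_1> 3\|\Delta_S\|_1$, then $\lambda_\alpha(\|\Delta_{S^c}\|_1-\|\Delta_S\|_1)-\tfrac12\lambda_\alpha\|\Delta\|_1\ge\tfrac12\lambda_\alpha\|\Delta_{S^c}\|_1-\tfrac32\lambda_\alpha\|\Delta_S\|_1\ge -\tfrac{3}{2}\lambda_\alpha\|\Delta_S\|_1\ge-\tfrac32\lambda_\alpha\sqrt{s_\alpha}$; this last lower bound is again $o(1)$ under the same condition, but we still need to handle the $-\kappa_2(\log d/(M\gamma_n))\|\Delta\|_1^2$ term which could now be large since $\|\Delta\|_1$ is not controlled by $\sqrt{s_\alpha}$. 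The fix is that in this cone, actually we should not split this way but rather note the RSC bound is only stated for $\|\Delta\|_2\le 1$; since we are \emph{on} the sphere $\|\Delta\|_2=1$, Lemma \ref{lemma mar RSC condition} still applies directly, and we further bound $\|\Delta\|_1^2$ using that $\|\Delta\|_1\le\sqrt d\|\Delta\|_2=\sqrt d$ only if $d$ small — better: re-derive via a peeling/truncation as in Lemma \ref{concentrate lem RE conditon MCAR}, or simply observe that on the unit sphere the RSC constant $\kappa_1$ beats $\kappa_2(\log d/(M\gamma_n))\|\Delta\|_1^2$ whenever $\|\Delta\|_1\le\sqrt{\kappa_1 M\gamma_n/(2\kappa_2\log d)}$, and separately argue the minimizer's $\ell_1$ deviation cannot exceed this threshold by a crude bound $\|\widehat\alpha^{(-k)}_{PS}-\alpha^*_{PS}\|_1\le(\mathcal L(\alpha^*_{PS})-\mathcal L(\widehat\alpha^{(-k)}_{PS}))/\lambda_\alpha+2\|\alpha^*_{PS}\|_1$ combined with boundedness of $\mathcal L$ at $\alpha^*_{PS}$.

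**Main obstacle.** The delicate point is controlling the quadratic-in-$\|\Delta\|_1$ penalty term $\kappa_2(\log d/(M\gamma_n))\|\Delta\|_1^2$ uniformly over the unit sphere $\{\|\Delta\|_2=1\}$: on the sphere $\|\Delta\|_1$ can be as large as $\sqrt d$, so the naive bound gives $\kappa_2 d\log d/(M\gamma_n)$, which need not be small. The resolution — and the step I expect to require the most care — is to avoid needing uniformity over the full sphere: use convexity to reduce to showing $\mathcal L$ increases on the sphere \emph{only along the ray through $\widehat\alpha^{(-k)}_{PS}$}, then exploit that the KKT/stationarity at $\widehat\alpha^{(-k)}_{PS}$ forces the cone condition $\|\widehat\Delta_{S^c}\|_1\le 3\|\widehat\Delta_S\|_1$ on the actual error (whenever $\lambda_\alpha\ge 2\|\nabla_\alpha\bar l_1^{(-k)}(\alpha^*_{PS})\|_\infty$, which holds on our good event), hence the relevant $\Delta$'s all lie in the restricted cone where $\|\Delta\|_1\le 4\sqrt{s_\alpha}\|\Delta\|_2$ and the quadratic term is genuinely $O(s_\alpha\log d/(M\gamma_n))=o(1)$. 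Putting this together with the union bound over the three events (RSC failure, gradient bound failure, and the $\gamma$-concentration inside Lemmas \ref{lemma concentrate gamma}, \ref{lemma mar RSC condition}, \ref{lemma mar gradient infty norm}) yields the claimed probability $1-c_1\exp(-c_2M\gamma_n)-13\exp(-t)$, after noting $M=(K-1)n/(2K)\asymp n$ so $M\gamma_n\asymp n\gamma_n$ and the sample-size requirement $M\gamma_n>C\max\{s_\alpha,\log M\}(t+\log d)$ is exactly what makes all the $o(1)$ claims hold.
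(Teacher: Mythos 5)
Your final argument --- localize via convexity by rescaling the error to the unit sphere, establish the cone condition $\|\Delta_{S^c}\|_1\le 3\|\Delta_S\|_1$ so that $\|\Delta\|_1\le 4\sqrt{s_\alpha}$, and then invoke the restricted strong convexity bound together with the gradient bound and the choice $\lambda_\alpha\ge 2\|\nabla_\alpha\bar l_1(\alpha^*)\|_\infty$ to force a strictly positive objective increase at the rescaled point, contradicting optimality --- is exactly the paper's proof; the paper merely packages it through the auxiliary convex function $F(\Delta)=\delta\bar l_1(\alpha^*,\Delta)+\tfrac{\lambda_\alpha}{2}\|\Delta\|_1-2\lambda_\alpha\sqrt{s_\alpha}\|\Delta\|_2$, whose nonpositivity at $\widehat\Delta_\alpha$ transfers by convexity to the rescaled point and yields the cone condition there, rather than deriving the cone condition for $\widehat\Delta_\alpha$ itself and using its scale invariance as you do. Your initial concern about uniformity of the RSC bound over the whole sphere is correctly dissolved by this reduction to the restricted cone, so the proposal is sound and follows the paper's route.
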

\begin{proof}
    By the construction of $\widehat \alpha$, we have
    \begin{align*}
        \bar l_1\p{\widehat \alpha} + \lambda_\alpha \norm{\widehat \alpha}_1 \leq \bar l_1\p{\alpha^*} + \lambda_\alpha \norm{\alpha^*}_1.
    \end{align*}
Let $\widehat \Delta_\alpha = \widehat \alpha - \alpha^*$. Then we have
\begin{align*}
    \bar l_1\p{\alpha^* + \widehat \Delta_\alpha} - \bar l_1\p{\alpha^*} - \nabla_\alpha \bar l_1\p{\alpha^*}^\top \widehat \Delta_\alpha + \nabla_\alpha \bar l_1\p{\alpha^*}^\top \widehat \Delta_\alpha + \lambda_\alpha\p{\norm{\widehat \alpha}_1 - \norm{\alpha^*}_1} \leq 0,
\end{align*}
which implies 
\begin{align*}
    \delta \bar l_1\p{\alpha^*, \widehat \Delta_\alpha} + \lambda_\alpha\p{\norm{ \alpha^* + \widehat \Delta_\alpha}_1 - \norm{\alpha^*}_1} \leq -\nabla_\alpha \bar l_1\p{\alpha^*}^\top \widehat \Delta_\alpha.
\end{align*}
Since
\begin{align*}
    &\norm{ \alpha^* + \widehat \Delta_\alpha}_1 = \norm{ \alpha^* + \widehat \Delta_{\alpha, S_\alpha}}_1 + \norm{ \widehat \Delta_{\alpha, S^c_\alpha}}_1\geq \norm{ \alpha^*}_1 - \norm{ \widehat \Delta_{\alpha, S_\alpha}}_1 + \norm{ \widehat \Delta_{\alpha, S^c_\alpha}}_1\\
    &\qquad=\norm{ \alpha^*}_1 - 2\norm{ \widehat \Delta_{\alpha, S_\alpha}}_1 + \norm{ \widehat \Delta_{\alpha}}_1 \geq \norm{ \alpha^*}_1 - 2(s_\alpha)^{1/2}\norm{ \widehat \Delta_{\alpha}}_2 + \norm{ \widehat \Delta_{\alpha}}_1,
\end{align*}
and
$-\nabla_\alpha \bar l_1\p{\alpha^*}^\top \widehat \Delta_\alpha \leq \norm{\nabla_\alpha \bar l_1\p{\alpha^*}}_\infty\norm{\widehat \Delta_\alpha}_1,$
we have
\begin{align*}
    \delta \bar l_1\p{\alpha^*, \widehat \Delta_\alpha} \leq \p{\norm{\nabla_\alpha \bar l_1\p{\alpha^*}}_\infty - \lambda_\alpha}\norm{\widehat \Delta_\alpha}_1 + 2\lambda_\alpha(s_\alpha)^{1/2}\norm{ \widehat \Delta_{\alpha}}_2.
\end{align*}
Define events $\mathcal{C}_1, \mathcal{C}_2(t)$ as
\begin{align}
    \mathcal{C}_1 &:= \br{\delta \bar l_1 \p{\alpha^*, \Delta} \geq \kappa_1 \norm{\Delta}_2^2 -  \kappa_2 \frac{\log d}{M\gamma_n} \norm{\Delta}_1^2, \forall \norm{\Delta }_2 \leq 1 }, \label{definition event C1}\\
    \mathcal{C}_2(t) &:= \br{\norm{\nabla_\alpha \bar l_1\p{\alpha^*}}_\infty \leq c((t + \log d)/(M\gamma_n))^{1/2}}.\label{definition event C2}
\end{align}
For $0 < t < C_1M\gamma_n/\log M $, if $M\gamma_n > C_2\log M \log d$ for some constant $C_1, C_2>0$, By Lemma~\ref{lemma mar RSC condition} and Lemma~\ref{lemma mar gradient infty norm}, we have 
\begin{align}
    \P\p{\mathcal{C}_1 \cap \mathcal{C}_2(t)} \geq 1 - c_1 \exp\p{-c_2M\gamma_n} - 13\exp\p{-t}. \label{prob. for C1 cap C2}
\end{align}
Choose $\lambda_\alpha \geq  2c((t + \log d)/(M\gamma_n))^{1/2}$. On event $\mathcal{C}_2(t)$, we have 
$\norm{\nabla_\alpha \bar l_1\p{\alpha^*}}_\infty \leq {\lambda_\alpha}/{2},$
which implies
\begin{align*}
    \delta \bar l_1\p{\alpha^*, \widehat \Delta_\alpha} + \frac{\lambda_\alpha}{2}\norm{\widehat \Delta_\alpha}_1 - 2\lambda_\alpha(s_\alpha)^{1/2}\norm{ \widehat \Delta_{\alpha}}_2 \leq 0.
\end{align*}
Define 
\begin{align}
    F\p{\Delta; \alpha^*} = \delta \bar l_1\p{\alpha^*, \Delta} + \frac{\lambda_\alpha}{2}\norm{\Delta}_1 - 2\lambda_\alpha(s_\alpha)^{1/2}\norm{ \Delta}_2. \label{definition of F}
\end{align}
Then $F\p{\widehat \Delta_\alpha; \alpha^*} \leq 0$ on event $\mathcal{C}_2(t)$. 

Now we show by contradiction that when $n$ is large enough, $\norm{\widehat \Delta_\alpha}_2 \leq 1$ on event $\mathcal{C}_1 \cap \mathcal{C}_2(t)$. Suppose that $\eta_\alpha^{-1} = \norm{\widehat \Delta_\alpha}_2 > 1$. Let 
$\widetilde \Delta_\alpha = \widehat \Delta_\alpha/\norm{\widehat \Delta_\alpha}_2 = \eta_\alpha\widehat \Delta_\alpha + \p{1- \eta_\alpha}\boldsymbol{0}.$
Then $\norm{\widetilde \Delta_\alpha}_2 = 1$. By \eqref{taylor expansion of delta l1}, $\delta \bar l_1\p{\alpha^*, \widetilde \Delta_\alpha} \geq 0$. Since $\delta \bar l_1\p{\alpha^*, \Delta} = \bar l_1\p{\alpha^*+\Delta} - \bar l_1\p{\alpha^*} - \nabla_\alpha \bar l_1\p{\alpha^*}^\top \Delta$ is a convex function of $\Delta$, it is clear that $F\p{\Delta; \alpha^*}$ is also a convex function of $\Delta$. Thus,
\begin{align}
    &\frac{\lambda_\alpha}{2}\norm{\widetilde \Delta_\alpha}_1 - 2\lambda_\alpha(s_\alpha)^{1/2}\norm{\widetilde \Delta_\alpha}_2 \leq F\p{\widetilde \Delta_\alpha; \alpha^*} = F\p{\eta_\alpha\widehat \Delta_\alpha + \p{1- \eta_\alpha}\boldsymbol{0}; \alpha^*} \notag\\
    &\qquad\leq \eta_\alpha F\p{\widehat \Delta_\alpha; \alpha^*} + \p{1- \eta_\alpha}F\p{\boldsymbol{0}; \alpha^*}= \eta_\alpha F\p{\widehat \Delta_\alpha; \alpha^*} \leq 0. \label{contradiction 1}
\end{align}
It follows that
\begin{align}
    \norm{\widetilde \Delta_\alpha}_1 \leq  4(s_\alpha)^{1/2}\norm{\widetilde \Delta_\alpha}_2 = 4(s_\alpha)^{1/2}.\label{mar cone}
\end{align}
Therefore, on event $\mathcal{C}_1$ we have
\begin{align*}
    &F\p{\widetilde \Delta_\alpha; \alpha^*} = \delta \bar l_1\p{\alpha^*, \widetilde \Delta_\alpha} + \frac{\lambda_\alpha}{2}\norm{\widetilde \Delta_\alpha}_1 - 2\lambda_\alpha(s_\alpha)^{1/2}\norm{ \widetilde \Delta_\alpha}_2\\
    &\qquad\geq \kappa_1 \norm{\widetilde \Delta_\alpha}_2^2 -  \kappa_2 \frac{\log d}{M\gamma_n} \norm{\widetilde \Delta_\alpha}_1^2+ \frac{\lambda_\alpha}{2}\norm{\widetilde \Delta_\alpha}_1 - 2\lambda_\alpha(s_\alpha)^{1/2}\norm{ \widetilde \Delta_\alpha}_2\\
    &\qquad\geq \kappa_1 - 2\lambda_\alpha(s_\alpha)^{1/2}  - \kappa_2\frac{\log d}{M\gamma_n}\norm{\widetilde \Delta_\alpha}_1^2\geq \frac{\kappa_1}{2} - 2\lambda_\alpha(s_\alpha)^{1/2}- 16\kappa_2\frac{s_\alpha\log d}{M\gamma_n}.
\end{align*}
Choose $\lambda_\alpha =  2c((t + \log d)/(M\gamma_n))^{1/2}$. For any $0<t<C_1M\gamma_n/\log M$, if 
${M\gamma_n}/{s_{\alpha}\p{t+\log d}} > \max\br{128{\kappa_2}/{\kappa_1}, 128{c^2}/{\kappa_1}},$
then
\begin{align*}
    F\p{\widetilde \Delta_\alpha; \alpha^*} \geq \frac{\kappa_1}{2} - 4c(\frac{s_\alpha(t + \log d)}{M\gamma_n})^{1/2}- 16\kappa_2\frac{s_\alpha\log d}{M\gamma_n} \geq \frac{\kappa_1}{2} - \frac{\kappa_1}{8} - \frac{\kappa_1}{8} = \frac{\kappa_1}{4} > 0,
\end{align*}
which contradicts with \eqref{contradiction 1}. It follows that $\norm{\widehat \Delta_\alpha}_2 \leq 1$ on event $\mathcal{C}_1 \cap \mathcal{C}_2(t)$ and 
\begin{align}
    \P\p{\norm{\widehat \Delta_\alpha}_2 \leq 1} \geq \P\p{\mathcal{C}_1 \cap \mathcal{C}_2(t)} \geq 1 - c_1 \exp\p{-c_2M\gamma_n} - 13\exp\p{-t}. \label{Delta alpha < 1 on C1 cap C2}
\end{align}
\end{proof}

\begin{lemma}\label{lemma mar delta l2 concentration}
    Let Assumption~\ref{assumption mar nuisance (a)} hold. Then there exist some constants $\kappa_1', \kappa_2'$, $c_1', c_2'>0$ such that for any $\beta \in \R^d$,
\begin{align*}
    \P\p{\delta \bar l_2\p{\alpha^*, \beta, \Delta} \geq \kappa_1' \norm{\Delta}_2^2 -  \kappa_2' \frac{\log d}{M\gamma_n} \norm{\Delta}_1^2, \forall \norm{\Delta
        }_2 \in \R^d}\geq 1 - c_1' \exp\p{-c_2'M\gamma_n}.
\end{align*}
\end{lemma}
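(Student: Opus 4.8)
The plan is to exploit that $\bar l_2\p{\alpha^*,\cdot}$ is a quadratic function of $\beta$ whose Hessian does not depend on $\beta$, so that its second-order Taylor remainder is an \emph{exact} quadratic form in $\Delta$. A direct computation gives
\begin{align*}
\delta \bar l_2\p{\alpha^*, \beta, \Delta} = M^{-1}\sum_{i\in\mathcal{I}_{-k,\beta}} \Gamma_i \exp\p{-X_i^\top\alpha^*}\p{X_i^\top\Delta}^2,
\end{align*}
which in particular does not depend on $\beta$; this is exactly why the claimed bound can hold for every $\beta\in\R^d$ simultaneously. Since both this expression and the target lower bound $\kappa_1'\norm{\Delta}_2^2 - \kappa_2'\tfrac{\log d}{M\gamma_n}\norm{\Delta}_1^2$ are homogeneous of degree two in $\Delta$, it suffices to establish the inequality uniformly over $\norm{\Delta}_2\le 1$ and then rescale an arbitrary $\Delta\neq\boldsymbol{0}$ by $\norm{\Delta}_2$; the case $\Delta=\boldsymbol{0}$ is trivial. (Note that, unlike in the proof of Lemma~\ref{lemma mar RSC condition}, no Taylor factor of the form $\exp\p{-t_1 X_i^\top\Delta}$ appears here, which is precisely what makes the unrestricted-$\Delta$ statement available.)

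Next I would invoke Assumption~\ref{assumption mar nuisance (a)}(a): since $\exp\p{-X^\top\alpha^*} = \p{1-g(X^\top\alpha^*)}/g(X^\top\alpha^*) \ge k_0(1-\gamma_n)/\gamma_n \ge k_0 c_0 \gamma_n^{-1}$ almost surely (using $\P\p{\Gamma=0}\ge c_0$), as already established in the proof of Lemma~\ref{lemma mar RSC condition}, it remains to bound $M^{-1}\sum_{i\in\mathcal{I}_{-k,\beta}}\Gamma_i\p{X_i^\top\Delta}^2$ from below uniformly over $\norm{\Delta}_2\le 1$. Conditioning on $\Gamma_{\mathcal{I}_{-k,\beta}}$, the vectors $\br{X_i : i\in\mathcal{I}_{-k,\beta},\ \Gamma_i=1}$ are i.i.d.\ draws from the law of $X\mid\Gamma=1$, which by Assumption~\ref{assumption mar nuisance (a)}(b) is sub-Gaussian with parameter $\sigma$ and satisfies $\inf_{\norm{v}_2=1}\E\sbr{\p{X^\top v}^2\mid\Gamma=1}\ge\kappa_l$. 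Applying Lemma~\ref{lemma Lemma C.5 of zhang 2021} with $\phi\equiv 1$, $v=0$, $\boldsymbol{\eta}=\boldsymbol{0}$, and effective sample size $\widehat\gamma_{-k,\beta}M := \sum_{i\in\mathcal{I}_{-k,\beta}}\Gamma_i$ yields constants $\kappa_1,\kappa_2,c_1,c_2>0$ such that, conditionally on $\Gamma_{\mathcal{I}_{-k,\beta}}$, with probability at least $1 - c_1 \exp\p{-c_2\widehat\gamma_{-k,\beta}M}$,
\begin{align*}
\p{\widehat\gamma_{-k,\beta}M}^{-1}\sum_{i\in\mathcal{I}_{-k,\beta}}\Gamma_i\p{X_i^\top\Delta}^2 \ge \kappa_1\norm{\Delta}_2^2 - \kappa_2\frac{\log d}{\widehat\gamma_{-k,\beta}M}\norm{\Delta}_1^2, \quad \forall\, \norm{\Delta}_2\le 1 .
\end{align*}
Taking expectations over $\Gamma_{\mathcal{I}_{-k,\beta}}$ and using the binomial moment generating function exactly as in Lemma~\ref{lemma mar RSC condition}, the unconditional failure probability is at most $c_1\p{1-\gamma_n+\gamma_n e^{-c_2}}^M \le c_1\exp\br{-\p{1-e^{-c_2}}M\gamma_n}$.

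Finally I would multiply through by $\widehat\gamma_{-k,\beta}$ and apply Lemma~\ref{lemma concentrate gamma} (taking, e.g., $t=0.009\,M\gamma_n$, which is admissible) to replace $\widehat\gamma_{-k,\beta}$ by $\gamma_n$ up to absolute constants: on the intersection of the two events, $M^{-1}\sum_{i\in\mathcal{I}_{-k,\beta}}\Gamma_i\p{X_i^\top\Delta}^2 \ge 0.79\gamma_n\kappa_1\norm{\Delta}_2^2 - \kappa_2\tfrac{\log d}{M}\norm{\Delta}_1^2$ for all $\norm{\Delta}_2\le 1$, with probability at least $1-\p{c_1+2}\exp\p{-c_2'M\gamma_n}$, where $c_2'=\min\br{1-e^{-c_2},\,0.009}$. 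Combining with the pointwise lower bound $\exp\p{-X_i^\top\alpha^*}\ge k_0 c_0\gamma_n^{-1}$ gives $\delta\bar l_2\p{\alpha^*,\beta,\Delta} \ge 0.79\,k_0 c_0\kappa_1\norm{\Delta}_2^2 - k_0 c_0\kappa_2\tfrac{\log d}{M\gamma_n}\norm{\Delta}_1^2$ for all $\norm{\Delta}_2\le 1$, and the homogeneity argument from the first paragraph extends it to all $\Delta\in\R^d$ with $\kappa_1'=0.79\,k_0 c_0\kappa_1$, $\kappa_2'=k_0 c_0\kappa_2$, $c_1'=c_1+2$, $c_2'$ as above. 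The only mildly delicate point is the conditioning-then-integrating step over $\Gamma_{\mathcal{I}_{-k,\beta}}$ needed to convert the $\widehat\gamma_{-k,\beta}M$-sample guarantee into an $M\gamma_n$-rate statement, but this is carried out verbatim as in Lemma~\ref{lemma mar RSC condition}; everything else is a routine rescaling.
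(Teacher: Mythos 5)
Your proposal is correct and follows essentially the same route as the paper: the exact quadratic identity for $\delta\bar l_2$, the pointwise lower bound $\exp\p{-X^\top\alpha^*}\ge k_0c_0\gamma_n^{-1}$ from Assumption \ref{assumption mar nuisance (a)}(a), an application of Lemma \ref{lemma Lemma C.5 of zhang 2021} conditionally on $\Gamma_{\mathcal{I}_{-k,\beta}}$, and the binomial-MGF integration plus Lemma \ref{lemma concentrate gamma} to pass from $\widehat\gamma_{-k,\beta}M$ to $M\gamma_n$. Your explicit homogeneity/rescaling argument for extending from $\norm{\Delta}_2\le 1$ to all of $\R^d$ (and the observation that this works precisely because no $\exp\p{-t_1X_i^\top\Delta}$ factor survives) makes explicit a step the paper leaves implicit.
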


\begin{proof}
By Taylor's theorem,
\begin{align} \label{delta l_2 taylor expansion}
    \delta \bar l_2\p{\alpha^*, \beta, \Delta} &= M^{-1} \sum_{i \in \mathcal{I}_{-k,\beta}} \Gamma_i \exp\p{-X_i^\top \alpha^*}\p{X_i^\top \Delta}^2. 
\end{align}
Then by \eqref{exp(-X'alpha = gamma-1)}, 
\begin{align}
    \delta \bar l_2\p{\alpha^*, \beta, \Delta} &\geq k_0c_0\gamma_n^{-1} L_2, \label{RSC t2'}
\end{align}
where
$L_2 = M^{-1} \sum_{i \in \mathcal{I}_{-k,\beta}} \Gamma_i \p{X_i^\top \Delta}^2.$
Let $\widehat \gamma_{-k,\beta} = M^{-1}\sum_{i \in \mathcal{I}_{-k,\beta}} \Gamma_i$. 
Under Assumption~\ref{assumption mar nuisance (a)}(b), by Lemma~\ref{lemma Lemma C.5 of zhang 2021} we have
\begin{align*}
    \P\p{\widehat \gamma^{-1}_{-k,\beta} L_2 \geq \kappa_1 \norm{\Delta}_2^2 - \kappa_2 \frac{\log d}{M\widehat \gamma_{-k,\beta}} \norm{\Delta}_1^2, \forall \norm{\Delta
        }_2 \in \R^d \mid \Gamma_{\mathcal{I}_{-k,\beta}}}\geq 1- c_1\exp\p{-c_2M\widehat \gamma_{-k,\beta}}.
\end{align*}
With identical analysis to \eqref{RSC t1}, we have that for some constants $\kappa_1', \kappa_2', c_1', c_2'>0$,
\begin{align*}
    \P\p{\delta \bar l_2\p{\alpha^*, \beta, \Delta} \geq \frac{k_0c_0}{2}\br{\kappa_1' \norm{\Delta}_2^2 -  \kappa_2' \frac{\log d}{M\gamma_n} \norm{\Delta}_1^2}, \forall \norm{\Delta
        }_2 \in \R^d} \geq 1 - c_1' \exp\p{-c_2'M\gamma_n}.
\end{align*}
\end{proof}

\begin{lemma}\label{lemma mar gradient infty norm for beta}
    Let Assumption~\ref{assumption mar nuisance (a)} hold. If $M\gamma_n > \p{\log M}^2\log d$, then for any $0<t<0.01M\gamma_n/\p{\log M}^2$, there exists some constant $c>0$ such that
\begin{align*}
        \P\p{\norm{\nabla_\beta \bar l_2\p{\alpha^*, \beta^*_{OR}}}_\infty \leq c((t + \log d)/(M\gamma_n))^{1/2}} \geq 1 - 3\exp\p{-t}.
\end{align*}
\end{lemma}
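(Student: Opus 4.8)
The plan is to recognize $\nabla_\beta \bar l_2(\alpha^*,\beta^*_{OR})$ as an empirical average of i.i.d.\ centered random vectors and then invoke the maximal Bernstein-type inequality of Lemma \ref{Lemma Tail Bounds for Maximums}. First I would differentiate $l_{2,i}$ in $\beta$: writing $w_{OR,i} := Y_i - X_i^\top\beta^*_{OR}$ and $U_i := \Gamma_i\exp(-X_i^\top\alpha^*)w_{OR,i}X_i$, one has $\nabla_\beta\bar l_2(\alpha^*,\beta^*_{OR}) = -2M^{-1}\sum_{i\in\mathcal{I}_{-k,\beta}}U_i$. Since the population objective defining $\beta^*_{OR}$ is strictly convex (Lemma \ref{lemma oracle mar properties}), its first-order optimality condition reads $\E[\Gamma\exp(-X^\top\alpha^*)w_{OR}X] = \boldsymbol{0}$, so the $U_i$ are i.i.d.\ and mean zero, and it remains to bound $\norm{M^{-1}\sum_i U_i}_\infty$.

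Next I would supply the two quantities needed by Lemma \ref{Lemma Tail Bounds for Maximums}. For the Orlicz bound: the almost-sure inequality $\Gamma_i\exp(-X_i^\top\alpha^*)\le k_0^{-1}\gamma_n^{-1}$ from \eqref{exp(-X'alpha = gamma-1)}, together with the scaling of Orlicz norms and Lemma \ref{sub-gaussian properties}(a),(d), gives $\sup_j\norm{U_i[j]}_{\psi_1}\le k_0^{-1}\gamma_n^{-1}\norm{w_{OR}}_{\psi_2}\norm{X[j]}_{\psi_2}\le k_0^{-1}\gamma_n^{-1}\sigma_w\sigma$, so one takes $\alpha=1$ and $K_{M,d}=k_0^{-1}\gamma_n^{-1}\sigma_w\sigma$. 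For the variance proxy: using \eqref{exp(-2X'alpha = gamma-2)}, $\E[U_i[j]^2]=\E[\Gamma\exp(-2X^\top\alpha^*)w_{OR}^2X[j]^2]\le k_0^{-2}\gamma_n^{-1}\E[w_{OR}^2X[j]^2\mid\Gamma=1]$, and by Cauchy--Schwarz, Assumption \ref{assumption mar nuisance (a)}(b)--(c) (the conditional eighth-moment bound on $w_{OR}$ is exactly what makes $\E[w_{OR}^4\mid\Gamma=1]$ finite), and Lemma \ref{sub-gaussian properties}(c), the conditional expectation is bounded by a constant, so $\Gamma_{M,d}\le C\gamma_n^{-1}$. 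Plugging into Lemma \ref{Lemma Tail Bounds for Maximums} gives, with probability at least $1-3e^{-t}$,
$$\norm{M^{-1}\sum_i U_i}_\infty \le 7\sqrt{\frac{C\gamma_n^{-1}(t+\log d)}{M}} + \frac{C_1 k_0^{-1}\gamma_n^{-1}\sigma_w\sigma\,\log(2M)(t+\log d)}{M}.$$

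Finally I would check that the remainder term is dominated by the leading term. Factoring $\frac{\log(2M)(t+\log d)}{M\gamma_n} = \sqrt{\frac{t+\log d}{M\gamma_n}}\cdot\sqrt{\frac{(\log(2M))^2(t+\log d)}{M\gamma_n}}$ and using $(\log(2M))^2\le 4(\log M)^2$, the hypothesis $M\gamma_n>(\log M)^2\log d$, and the restriction $t<0.01\,M\gamma_n/(\log M)^2$, the inner square root is at most $\sqrt{4(1+0.01)}$, a constant; hence the remainder is at most a constant multiple of $\sqrt{(t+\log d)/(M\gamma_n)}$. Combining the two terms, multiplying by $2$ (from the gradient), and absorbing everything into $c$ gives the claim. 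There is no genuine obstacle here — the argument is a routine concentration bound — the only thing to watch is the bookkeeping in the powers of $\gamma_n$: bounding $\exp(-X^\top\alpha^*)$ only by its almost-sure upper bound $\propto\gamma_n^{-1}$ forces both $K_{M,d}$ and $\Gamma_{M,d}$ to scale like $\gamma_n^{-1}$, and the cancellation against $M^{-1}$ is precisely what yields the effective-sample-size rate $(M\gamma_n)^{-1/2}$, while $M\gamma_n>(\log M)^2\log d$ is exactly the threshold past which the linear remainder becomes negligible.
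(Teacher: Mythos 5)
Your proposal is correct and follows essentially the same route as the paper: identify the gradient as $-2M^{-1}\sum_i \Gamma_i\exp(-X_i^\top\alpha^*)w_{OR,i}X_i$, use first-order optimality of $\beta^*_{OR}$ for centering, bound the coordinatewise $\psi_1$-norm by $O(\gamma_n^{-1})$ and the second moments by $O(\gamma_n^{-1})$ via \eqref{exp(-X'alpha = gamma-1)}--\eqref{exp(-2X'alpha = gamma-2)} and Cauchy--Schwarz with the fourth-moment bound on $w_{OR}$, then apply Lemma \ref{Lemma Tail Bounds for Maximums} and absorb the $\log M$ remainder under the stated restrictions on $t$ and $M\gamma_n$. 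No gaps.
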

\begin{proof}
    Let $w_{OR,i} = Y_i - X_i^\top \beta^*_{OR}$. For simplicity, we ignore the subscript $OR$ in this proof. 
 Then
\begin{align*}
    \nabla_\beta \bar l_2\p{\alpha^*, \beta^*} = -2M^{-1} \sum_{i \in \mathcal{I}_{-k,\beta}} \Gamma_i \exp\p{-X_i^\top \alpha^*}w_iX_i.
\end{align*}
First, by the construction of $\beta^*$ we have
$\E\sbr{\Gamma_i \exp\p{-X_i^\top \alpha^*}w_iX_i} = \boldsymbol{0}.$
Under Assumption~\ref{assumption mar nuisance (a)}, by \eqref{exp(-X'alpha = gamma-1)} and Lemma~\ref{sub-gaussian properties}(d), for any $1 \leq k \leq d$,
\begin{align*}
    \norm{-2\Gamma_i \exp\p{-X_i^\top \alpha^*}w_iX_i^\top e_k}_{\psi_1} \leq 2k_0^{-1}\frac{1-\gamma_n}{\gamma_n}\norm{w_i}_{\psi_2} \norm{X_i^\top e_k}_{\psi_2} \leq 2k_0^{-1}\frac{1-\gamma_n}{\gamma_n}\sigma_{w}\sigma,
\end{align*}
which implies
\begin{align*}
    \max_{i \in \mathcal{I}_{-k,\beta}} \max_{1 \leq k \leq d} \norm{-2\Gamma_i \exp\p{-X_i^\top \alpha^*}w_iX_i^\top e_k}_{\psi_1} \leq 2k_0^{-1}\frac{1-\gamma_n}{\gamma_n}\sigma_{w}\sigma.
\end{align*}
Moreover, by \eqref{exp(-2X'alpha = gamma-2)} and Lemma~\ref{sub-gaussian properties} (c),
\begin{align*}
    &\E\sbr{\br{-2\Gamma_i \exp\p{-X_i^\top \alpha^*}w_iX_i^\top e_k}^2}= \E\sbr{4\Gamma_i\exp\p{-2X_i^\top \alpha^*}\p{w_iX_i^\top e_k}^2}\\
    &\qquad\leq 4k_0^{-2}\p{\frac{1-\gamma_n}{\gamma_n}}^2\E\sbr{\Gamma_i\p{w_iX_i^\top e_k}^2}= 4k_0^{-2}\frac{(1-\gamma_n)^2}{\gamma_n}\E\sbr{\p{w_iX_i^\top e_k}^2\mid \Gamma_i=1}\\
    &\qquad\leq 4k_0^{-2}\frac{(1-\gamma_n)^2}{\gamma_n} \E\sbr{w_i^4\mid \Gamma_i=1}^{1/2}\E\sbr{\p{X_i^\top e_k}^4\mid \Gamma_i=1}^{1/2}\leq 4k_0^{-2} C_2\sigma_w^2\sigma^2\frac{(1-\gamma_n)^2}{\gamma_n}.
\end{align*}
Thus,
\begin{align*}
    \max_{1\leq k \leq d} M^{-1} \sum_{i \in \mathcal{I}_{-k,\beta}} \E\sbr{\br{-2\Gamma_i \exp\p{-X_i^\top \alpha^*}w_iX_i^\top e_k}^2} \leq 4k_0^{-2} C_2\sigma_w^2\sigma^2\frac{(1-\gamma_n)^2}{\gamma_n}.
\end{align*}
By Lemma~\ref{Lemma Tail Bounds for Maximums}, there exist some constants $c_1, c_2>0$ such that with probability at least $1-3\exp\p{-t}$,
\begin{align*}
    \norm{\nabla_\beta \bar l_2\p{\alpha^*, \beta^*}}_\infty \leq c_0\br{c_1((t + \log d)/(M\gamma_n))^{1/2} + c_2\frac{\p{t+ \log d}\log M}{M\gamma_n}}.
\end{align*}
If $M\gamma_n > \p{\log M}^2\log d$, then for any $0<t<0.01M\gamma_n/\p{\log M}^2$,
\begin{align*}
    c_1((t + \log d)/(M\gamma_n))^{1/2} + c_2\frac{\p{t+ \log d}\log M}{M\gamma_n} \leq \p{c_1 + c_2(2)^{1/2}}((t + \log d)/(M\gamma_n))^{1/2}.
\end{align*}
Let $c_3 = c_0(c_1 + c_2(2)^{1/2})$, we have
$
    \P\p{\norm{\nabla_\beta \bar l_2\p{\alpha^*, \beta^*}}_\infty\leq c_3((t + \log d)/(M\gamma_n))^{1/2}} \geq 1 - 3\exp\p{-t}.
$
\end{proof}

\begin{lemma}\label{lemma sparsity for norm1 and norm2}
   Let $s_{\tilde\beta} = \norm{ \widetilde \beta_{OR}}_0$. If $\lambda_\beta > \max\br{2\norm{\nabla_\beta \bar l_2(\alpha^*, \beta^*)}_\infty, 2\|\nabla_\beta \bar l_2(\widehat \alpha, \widetilde \beta)\|_\infty}$, then
    \begin{align*}
        \norm{\widetilde \beta_{OR} - \beta^*_{OR}}_1 \leq 4(s_\beta)^{1/2}\norm{\widetilde \beta_{OR} - \beta^*_{OR}}_2,\quad\norm{\widehat \beta_{OR} - \widetilde \beta_{OR}}_1 \leq 4(s_{\tilde\beta})^{1/2}\norm{\widehat \beta_{OR} - \widetilde \beta_{OR}}_2.
    \end{align*}
\end{lemma}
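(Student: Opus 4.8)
Both displays are instances of the standard Lasso ``basic inequality / cone'' argument, so the plan is to run that argument twice with two different reference points. For the first bound, write $\widetilde\Delta=\widetilde\beta_{OR}-\beta^*_{OR}$ and start from the optimality of $\widetilde\beta_{OR}$ for $\beta\mapsto\bar l_2(\alpha^*,\beta)+\lambda_\beta\norm{\beta}_1$, which gives
\[
\bar l_2(\alpha^*,\widetilde\beta_{OR})-\bar l_2(\alpha^*,\beta^*_{OR})\le\lambda_\beta\bigl(\norm{\beta^*_{OR}}_1-\norm{\widetilde\beta_{OR}}_1\bigr).
\]
Decompose the left-hand side as $\delta\bar l_2(\alpha^*,\beta^*_{OR},\widetilde\Delta)+\nabla_\beta\bar l_2(\alpha^*,\beta^*_{OR})^\top\widetilde\Delta$. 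Since $\bar l_2(\alpha,\cdot)$ is a quadratic with positive semidefinite Hessian, $\delta\bar l_2(\alpha,\beta,\Delta)=M^{-1}\sum_{i\in\mathcal{I}_{-k,\beta}}\Gamma_i\exp(-X_i^\top\alpha)(X_i^\top\Delta)^2\ge0$ for every $\alpha,\beta,\Delta$ (cf.\ \eqref{delta l_2 taylor expansion}), so it may be dropped; a H\"older bound together with the hypothesis $\norm{\nabla_\beta\bar l_2(\alpha^*,\beta^*)}_\infty<\lambda_\beta/2$ then yields $-\tfrac{\lambda_\beta}{2}\norm{\widetilde\Delta}_1\le\lambda_\beta(\norm{\beta^*_{OR}}_1-\norm{\widetilde\beta_{OR}}_1)$.

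Next apply the usual support decomposition over $S_\beta=\mathrm{supp}(\beta^*_{OR})$: since $\beta^*_{OR}$ vanishes off $S_\beta$, the triangle inequality gives $\norm{\beta^*_{OR}}_1-\norm{\widetilde\beta_{OR}}_1\le\norm{\widetilde\Delta_{S_\beta}}_1-\norm{\widetilde\Delta_{S_\beta^c}}_1$. Substituting this in, cancelling $\lambda_\beta>0$, and using $\norm{\widetilde\Delta}_1=\norm{\widetilde\Delta_{S_\beta}}_1+\norm{\widetilde\Delta_{S_\beta^c}}_1$ produces the cone inequality $\norm{\widetilde\Delta_{S_\beta^c}}_1\le3\norm{\widetilde\Delta_{S_\beta}}_1$, hence $\norm{\widetilde\Delta}_1\le4\norm{\widetilde\Delta_{S_\beta}}_1\le4\sqrt{s_\beta}\,\norm{\widetilde\Delta_{S_\beta}}_2\le4\sqrt{s_\beta}\,\norm{\widetilde\Delta}_2$, which is the first claim.

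For the second bound, repeat the argument verbatim with $(\alpha^*,\beta^*_{OR})$ replaced by $(\widehat\alpha,\widetilde\beta_{OR})$ and $\widehat\Delta=\widehat\beta_{OR}-\widetilde\beta_{OR}$: use optimality of $\widehat\beta_{OR}$ for $\beta\mapsto\bar l_2(\widehat\alpha,\beta)+\lambda_\beta\norm{\beta}_1$, drop the nonnegative term $\delta\bar l_2(\widehat\alpha,\widetilde\beta_{OR},\widehat\Delta)$, apply H\"older with the hypothesis $\norm{\nabla_\beta\bar l_2(\widehat\alpha,\widetilde\beta)}_\infty<\lambda_\beta/2$, and then decompose over $\mathrm{supp}(\widetilde\beta_{OR})$, which has cardinality $s_{\tilde\beta}$; this gives $\norm{\widehat\Delta}_1\le4\sqrt{s_{\tilde\beta}}\,\norm{\widehat\Delta}_2$. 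This also explains the exact form of the hypothesis: for the second inequality $\bar l_2$ must be expanded around $\widetilde\beta_{OR}$ rather than $\beta^*_{OR}$, so the gradient that must be controlled is $\nabla_\beta\bar l_2(\widehat\alpha,\widetilde\beta)$ and the relevant sparsity is $\norm{\widetilde\beta_{OR}}_0=s_{\tilde\beta}$. There is no genuine obstacle; the only points requiring care are expanding $\bar l_2$ around the correct reference point in each case and noting that the second-order remainder $\delta\bar l_2$ is automatically nonnegative, so it can be discarded without invoking any restricted-eigenvalue hypothesis.
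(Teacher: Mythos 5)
Your proof is correct and follows essentially the same route as the paper: start from the optimality of $\widetilde \beta_{OR}$ (resp.\ $\widehat \beta_{OR}$), drop the nonnegative quadratic remainder $\delta \bar l_2$, bound the gradient term by H\"older using $\lambda_\beta \ge 2\norm{\nabla_\beta \bar l_2}_\infty$, and apply the support decomposition over $S_\beta$ (resp.\ $\mathrm{supp}(\widetilde\beta_{OR})$) to get the cone condition $\norm{\Delta_{S^c}}_1 \le 3\norm{\Delta_S}_1$ and hence the factor $4\sqrt{s}$. No gaps.
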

\begin{proof}
For simplicity, we omit the subscript $OR$ in this proof. By the construction of $\widetilde \beta$,
\begin{align*}
    \bar l_2(\alpha^*, \widetilde \beta) + \lambda_\beta \norm{\widetilde \beta}_1 \leq \bar l_2(\alpha^*, \beta^*) + \lambda_\beta \norm{\beta^*}_1.
\end{align*}
Let $\widetilde \Delta_\beta = \widetilde \beta - \beta^*$ and $w_i = Y_i - X_i^\top \beta^*$. Then
\begin{align}
    &M^{-1}\sum_{i \in \mathcal{I}_{-k,\beta}} \Gamma_i \exp\p{-X_i^\top \alpha^*}\p{X_i^\top\widetilde \Delta_\beta}^2  \notag \\
    &\qquad\leq2M^{-1}\sum_{i \in \mathcal{I}_{-k,\beta}} \Gamma_i \exp\p{-X_i^\top \alpha^*}w_i\p{X_i^\top\widetilde \Delta_\beta}+ \lambda_\beta\p{\norm{\beta^*}_1 - \norm{\widetilde \beta}_1}. \label{qwer1}
\end{align}
Let $S_\beta$ be the support of $\beta^*$. Similar to \eqref{sparsity technique}, we have
$
    \norm{\beta^*}_1 - \norm{\widetilde \beta}_1 \leq \norm{{\widetilde \Delta}_{\beta, S_\beta}}_1 - \norm{{\widetilde \Delta}_{\beta, S_\beta^c}}_1.
$
If $\lambda_\beta \geq 2\norm{\nabla_\beta \bar l_2(\alpha^*, \beta^*)}_\infty$, then
\begin{align*}
    0 &\leq 2M^{-1}\sum_{i \in \mathcal{I}_{-k,\beta}} \Gamma_i \exp\p{-X_i^\top \alpha^*}w_i\p{X_i^\top\widetilde \Delta_\beta} + \lambda_\beta\p{\norm{\beta^*}_1 - \norm{\widetilde \beta}_1}\\
    &\leq \norm{\nabla_\beta \bar l_2(\alpha^*, \beta^*)}_\infty\norm{\widetilde \Delta_\beta}_1 + \lambda_\beta\p{\norm{{\widetilde \Delta}_{\beta, S_\beta}}_1 - \norm{{\widetilde \Delta}_{\beta, S_\beta^c}}_1}\\
    &\leq \frac{\lambda_\beta}{2}\norm{\widetilde \Delta_\beta}_1 + \lambda_\beta\p{\norm{{\widetilde \Delta}_{\beta, S_\beta}}_1 - \norm{{\widetilde \Delta}_{\beta, S_\beta^c}}_1}= \frac{3\lambda_\beta}{2}\norm{{\widetilde \Delta}_{\beta, S_\beta}}_1 - \frac{\lambda_\beta}{2}\norm{{\widetilde \Delta}_{\beta, S_\beta^c}}_1,
\end{align*}
which implies $\norm{{\widetilde \Delta}_{\beta, S_\beta^c}}_1 \leq 3\norm{{\widetilde \Delta}_{\beta, S_\beta}}_1$. Thus,
$
    \norm{\widetilde \Delta_\beta}_1 = \norm{{\widetilde \Delta}_{\beta, S_\beta^c}}_1 + \norm{{\widetilde \Delta}_{\beta, S_\beta}}_1 \leq 4\norm{{\widetilde \Delta}_{\beta, S_\beta}}_1\leq 4(s_\beta)^{1/2}\norm{\widetilde \Delta_\beta}_2.
$
In addition, for \eqref{qwer1}, by the analysis above we also have
\begin{align}
    M^{-1}\sum_{i \in \mathcal{I}_{-k,\beta}} \Gamma_i \exp\p{-X_i^\top \alpha^*}\p{X_i^\top\widetilde \Delta_\beta}^2 &\leq \frac{3\lambda_\beta}{2}\norm{{\widetilde \Delta}_{\beta, S_\beta}}_1- \frac{\lambda_\beta}{2}\norm{{\widetilde \Delta}_{\beta, S_\beta^c}}_1\leq \frac{3\lambda_\beta (s_\beta)^{1/2}}{2}\norm{\widetilde \Delta_\beta}_2. \label{qwer2}
\end{align}

Similarly, for $\Delta = \widehat \beta - \widetilde \beta$, consider the construction of $\widehat \beta$. If $\lambda_\beta \geq 2\norm{\nabla_\beta \bar l_2(\widehat \alpha, \widetilde \beta)}_\infty$, then
$
    \norm{\Delta}_1 \leq 4(s_{\tilde \beta})^{1/2}\norm{\Delta}_2.
$
\end{proof}

\begin{lemma}\label{lemma consistency for alpha and betatilde}
Let Assumption~\ref{assumption mar nuisance (a)} hold. Then as $n, d \rightarrow \infty$, it holds that

    (a) Choose $\lambda_\alpha \asymp (\log d/(n\gamma_n))^{1/2}$. If $n\gamma_n \gg \max\br{s_\alpha, \log n} \log d$, then
    $$\norm{\widehat \alpha_{PS} - \alpha^*_{PS}}_2 = O_p\p{(s_\alpha \log d/(n\gamma_n))^{1/2}} \quad \text{and} \quad \norm{\widehat \alpha_{PS} - \alpha^*_{PS}}_1 = O_p\p{s_\alpha(\log d/(n\gamma_n))^{1/2}}.$$

    (b) Choose $\lambda_\beta \asymp (\log d/(n\gamma_n))^{1/2}$. If $n\gamma_n \gg \max\br{s_\beta, (\log n)^2} \log d$, then
    $$\norm{\widetilde\beta_{OR} - \beta^*_{OR}}_2 = O_p\p{(s_\beta \log d/(n\gamma_n))^{1/2}} \quad \text{and} \quad \norm{\widetilde\beta_{OR} - \beta^*_{OR}}_1 = O_p\p{s_\beta (\log d/(n\gamma_n))^{1/2}}.$$
\end{lemma}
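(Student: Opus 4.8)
The plan is to run the standard localized Lasso oracle-inequality argument for both estimators, since all the ancillary ingredients have already been assembled. For each estimator the five steps are: (i) write the optimality (basic) inequality of the penalized objective; (ii) bound the first-order (gradient) term by the tuning parameter using the relevant $\ell_\infty$-gradient-control lemma; (iii) deduce a cone condition on the estimation error from the $\ell_1$-penalty; (iv) invoke the restricted strong convexity (RSC) of the loss; and (v) solve the resulting quadratic inequality, finally converting the high-probability bounds into $O_p$ rates.

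For part (a), set $\widehat\Delta_\alpha = \widehat\alpha_{PS} - \alpha^*_{PS}$. The basic inequality $\bar l_1(\widehat\alpha_{PS}) + \lambda_\alpha\|\widehat\alpha_{PS}\|_1 \le \bar l_1(\alpha^*_{PS}) + \lambda_\alpha\|\alpha^*_{PS}\|_1$ rearranges to
\[
\delta\bar l_1(\alpha^*_{PS},\widehat\Delta_\alpha) \;\le\; -\nabla_\alpha\bar l_1(\alpha^*_{PS})^\top\widehat\Delta_\alpha \,+\, \lambda_\alpha\big(\|\alpha^*_{PS}\|_1 - \|\widehat\alpha_{PS}\|_1\big).
\]
I would work on the intersection of the RSC event of Lemma \ref{lemma mar RSC condition}, the gradient event of Lemma \ref{lemma mar gradient infty norm}, and the localization event $\{\|\widehat\Delta_\alpha\|_2 \le 1\}$ from Lemma \ref{lemma mar prob. of Delta in the ball}, choosing $\lambda_\alpha = 2c\sqrt{(t+\log d)/(M\gamma_n)}$ so that $\|\nabla_\alpha\bar l_1(\alpha^*_{PS})\|_\infty \le \lambda_\alpha/2$. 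Decomposing $\|\alpha^*_{PS}\|_1 - \|\widehat\alpha_{PS}\|_1$ exactly as in \eqref{sparsity technique} gives the cone condition $\|\widehat\Delta_{\alpha,S_\alpha^c}\|_1 \le 3\|\widehat\Delta_{\alpha,S_\alpha}\|_1$, hence $\|\widehat\Delta_\alpha\|_1 \le 4\sqrt{s_\alpha}\,\|\widehat\Delta_\alpha\|_2$. Substituting the RSC lower bound into the basic inequality and using $\|\widehat\Delta_\alpha\|_1^2 \le 16 s_\alpha\|\widehat\Delta_\alpha\|_2^2$, the term $\kappa_2(\log d/(M\gamma_n))\|\widehat\Delta_\alpha\|_1^2$ is absorbed into the left-hand quadratic form once $M\gamma_n\gg s_\alpha\log d$, leaving $\tfrac{\kappa_1}{2}\|\widehat\Delta_\alpha\|_2^2 \le \tfrac{3\lambda_\alpha}{2}\sqrt{s_\alpha}\,\|\widehat\Delta_\alpha\|_2$, so $\|\widehat\Delta_\alpha\|_2 \lesssim \lambda_\alpha\sqrt{s_\alpha}$ and then $\|\widehat\Delta_\alpha\|_1 \lesssim \lambda_\alpha s_\alpha$. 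Taking $t$ a large fixed constant, noting $M = (K-1)n/(2K)\asymp n$, $M\gamma_n\asymp n\gamma_n\to\infty$, and $\lambda_\alpha\asymp\sqrt{\log d/(n\gamma_n)}$, the exponential and $e^{-t}$ tail terms become arbitrarily small and the claimed $O_p(\sqrt{s_\alpha\log d/(n\gamma_n)})$ and $O_p(s_\alpha\sqrt{\log d/(n\gamma_n)})$ rates follow.

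Part (b) is the same argument, but simpler because $\widetilde\beta_{OR}$ is built from the \emph{true} parameter $\alpha^*_{PS}$, so $\bar l_2(\alpha^*_{PS},\cdot)$ is a weighted least-squares loss, quadratic in $\beta$; consequently the RSC of Lemma \ref{lemma mar delta l2 concentration} holds for all $\Delta\in\R^d$ and no localization step (no analogue of Lemma \ref{lemma mar prob. of Delta in the ball}) is needed. Starting from $\bar l_2(\alpha^*_{PS},\widetilde\beta_{OR}) + \lambda_\beta\|\widetilde\beta_{OR}\|_1 \le \bar l_2(\alpha^*_{PS},\beta^*_{OR}) + \lambda_\beta\|\beta^*_{OR}\|_1$ with $\lambda_\beta = 2c\sqrt{(t+\log d)/(M\gamma_n)}$, Lemma \ref{lemma mar gradient infty norm for beta} controls the first-order term, while Lemma \ref{lemma sparsity for norm1 and norm2} already supplies both the cone bound $\|\widetilde\beta_{OR}-\beta^*_{OR}\|_1 \le 4\sqrt{s_\beta}\|\widetilde\beta_{OR}-\beta^*_{OR}\|_2$ and the inequality \eqref{qwer2}, i.e.\ $\delta\bar l_2(\alpha^*_{PS},\beta^*_{OR},\widetilde\Delta_\beta) \le \tfrac{3\lambda_\beta\sqrt{s_\beta}}{2}\|\widetilde\Delta_\beta\|_2$ with $\widetilde\Delta_\beta = \widetilde\beta_{OR}-\beta^*_{OR}$. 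Combining this with Lemma \ref{lemma mar delta l2 concentration} and absorbing $\kappa_2'(\log d/(M\gamma_n))\|\widetilde\Delta_\beta\|_1^2$ when $M\gamma_n\gg s_\beta\log d$ yields $\tfrac{\kappa_1'}{2}\|\widetilde\Delta_\beta\|_2^2 \le \tfrac{3\lambda_\beta\sqrt{s_\beta}}{2}\|\widetilde\Delta_\beta\|_2$, hence $\|\widetilde\Delta_\beta\|_2 \lesssim \lambda_\beta\sqrt{s_\beta}$ and $\|\widetilde\Delta_\beta\|_1 \lesssim \lambda_\beta s_\beta$; the $O_p$ conversion is as in part (a).

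The one genuinely delicate point is the localization needed in part (a): because $\bar l_1$ carries the exponential link $\exp(-X^\top\alpha)$, its restricted strong convexity (Lemma \ref{lemma mar RSC condition}) only holds on a bounded $\ell_2$-ball, so before applying it one must certify that the Lasso error $\widehat\Delta_\alpha$ actually lands in that ball --- exactly the content of Lemma \ref{lemma mar prob. of Delta in the ball}, which is proved by a convexity-plus-contradiction rescaling argument. Once that is in place, everything downstream is the routine Lasso computation sketched above, and part (b) sidesteps the issue entirely because its loss is globally quadratic. The remaining work is bookkeeping: tracking the $M\asymp n$ conversion, choosing a fixed large $t$ so the tail probabilities vanish, and checking that $\lambda_\alpha,\lambda_\beta\asymp\sqrt{\log d/(n\gamma_n)}$ reproduces the stated rates under the sparsity conditions $n\gamma_n\gg\max\{s_\alpha,\log n\}\log d$ and $n\gamma_n\gg\max\{s_\beta,(\log n)^2\}\log d$.
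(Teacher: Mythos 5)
Your proof is correct and follows essentially the same route as the paper: the paper simply delegates the basic-inequality/cone/RSC computation to Corollary 9.20 of \cite{wainwright2019high}, citing exactly the same ingredient lemmas (Lemmas \ref{lemma mar RSC condition} and \ref{lemma mar gradient infty norm} for part (a), Lemmas \ref{lemma mar delta l2 concentration} and \ref{lemma mar gradient infty norm for beta} for part (b)), with the localization for the exponential loss handled by Lemma \ref{lemma mar prob. of Delta in the ball} just as you describe. Your write-up merely unpacks what that corollary does, and all the steps and conditions check out.
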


\begin{proof}
    (a) follows by Lemma~\ref{lemma mar RSC condition}, Lemma~\ref{lemma mar gradient infty norm}, and Corollary 9.20 of \cite{wainwright2019high}. (b) follows by Lemma~\ref{lemma mar delta l2 concentration}, Lemma~\ref{lemma mar gradient infty norm for beta}, and Corollary 9.20 of \cite{wainwright2019high}.
\end{proof}

\begin{lemma}\label{lemma sparsity betatilde -> beta}
    Let Assumption~\ref{assumption mar nuisance (a)} hold. Choose $\lambda_\beta \asymp (\log d/(n\gamma_n))^{1/2}$. If $n\gamma_n \gg \max\br{s_\beta , \log n}\log n \log d$, then as $n, d\rightarrow \infty$, $s_{\tilde \beta} = O_p\p{s_{\beta}}$.
\end{lemma}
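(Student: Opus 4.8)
The plan is to bound the cardinality of the active set $\widehat T:=\mathrm{supp}\p{\widetilde\beta_{OR}^{(-k)}}$ by exploiting the KKT conditions of the Lasso problem defining $\widetilde\beta_{OR}^{(-k)}$, using that $\bar l_2^{(-k)}\p{\alpha^*_{PS},\cdot}$ is quadratic in $\beta$ with (half-)Hessian $\widehat\Sigma:=M^{-1}\sum_{i\in\mathcal{I}_{-k,\beta}}\Gamma_i\exp\p{-X_i^\top\alpha^*_{PS}}X_iX_i^\top$. First I would observe that at the minimizer every active coordinate $j\in\widehat T$ satisfies $\abs{\nabla_j\bar l_2^{(-k)}\p{\alpha^*_{PS},\widetilde\beta_{OR}^{(-k)}}}=\lambda_\beta$, and since $\nabla_\beta\bar l_2^{(-k)}\p{\alpha^*_{PS},\widetilde\beta_{OR}^{(-k)}}=\nabla_\beta\bar l_2^{(-k)}\p{\alpha^*_{PS},\beta^*_{OR}}+2\widehat\Sigma\widehat\Delta$ with $\widehat\Delta:=\widetilde\beta_{OR}^{(-k)}-\beta^*_{OR}$, for \emph{every} subset $T'\subseteq\widehat T$ one gets $\sqrt{\abs{T'}}\,\lambda_\beta\le\norm{\nabla_{T'}\bar l_2^{(-k)}\p{\alpha^*_{PS},\beta^*_{OR}}}_2+2\norm{\p{\widehat\Sigma\widehat\Delta}_{T'}}_2\le\sqrt{\abs{T'}}\,\norm{\nabla_\beta\bar l_2^{(-k)}\p{\alpha^*_{PS},\beta^*_{OR}}}_\infty+2\norm{\p{\widehat\Sigma\widehat\Delta}_{T'}}_2$.

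On the high-probability event of Lemma \ref{lemma mar gradient infty norm for beta} we have $\lambda_\beta\ge 2\norm{\nabla_\beta\bar l_2^{(-k)}\p{\alpha^*_{PS},\beta^*_{OR}}}_\infty$, so this reduces to $\abs{T'}\lambda_\beta^2\le 16\norm{\p{\widehat\Sigma\widehat\Delta}_{T'}}_2^2$. Next I would bound $\norm{\p{\widehat\Sigma\widehat\Delta}_{T'}}_2=\sup_{\norm{u}_2\le1,\ \mathrm{supp}(u)\subseteq T'}u^\top\widehat\Sigma\widehat\Delta\le\sqrt{\phi_{\max}\p{\abs{T'}}}\sqrt{\widehat\Delta^\top\widehat\Sigma\widehat\Delta}$ by Cauchy--Schwarz in the $\widehat\Sigma$-inner product, where $\phi_{\max}\p{m}:=\sup_{\norm{u}_0\le m,\ \norm{u}_2=1}u^\top\widehat\Sigma u$. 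For the second factor I would use \eqref{qwer2} together with the rate $\norm{\widehat\Delta}_2=O_p\p{\sqrt{s_\beta\log d/\p{n\gamma_n}}}=O_p\p{\lambda_\beta\sqrt{s_\beta}}$ from Lemma \ref{lemma consistency for alpha and betatilde}(b), which yields $\widehat\Delta^\top\widehat\Sigma\widehat\Delta=O_p\p{\lambda_\beta^2 s_\beta}$. Combining, $\abs{T'}=O_p\p{\phi_{\max}\p{\abs{T'}}\,s_\beta}$ for every $T'\subseteq\widehat T$.

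To turn this self-referential bound into $\abs{\widehat T}=O_p\p{s_\beta}$, I would also establish the sparse maximum-eigenvalue estimate $\phi_{\max}\p{m_0}=O_p(1)$ for $m_0:=\lceil n\gamma_n/\p{\log n\log d}\rceil$: using $\exp\p{-X_i^\top\alpha^*_{PS}}\le k_0^{-1}\gamma_n^{-1}$ from \eqref{exp(-X'alpha = gamma-1)}, this follows from an upper bound on $M^{-1}\sum\Gamma_i\p{X_i^\top u}^2$ uniform over $m_0$-sparse unit vectors $u$ --- indeed a one-sided estimate $u^\top\widehat\Sigma u\lesssim\norm{u}_2^2+\tfrac{\log d}{n\gamma_n}\norm{u}_1^2$ uniform over all $u\in\R^d$ suffices, and is easier than the matching lower bound --- which holds by the same sub-Gaussian concentration machinery underlying Lemmas \ref{Lemma Tail Bounds for Maximums} and \ref{lemma mar delta l2 concentration}, provided $m_0\log d\ll n\gamma_n$. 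Both $m_0\log d\ll n\gamma_n$ and $m_0\gg s_\beta$ are exactly what the hypothesis $n\gamma_n\gg\max\br{s_\beta,\log n}\log n\log d$ secures. Then I would close the argument by dichotomy: if $\abs{\widehat T}\ge m_0$, apply the displayed bound to a subset $T'\subseteq\widehat T$ with $\abs{T'}=m_0$ to get $m_0=O_p\p{\phi_{\max}\p{m_0}s_\beta}=O_p\p{s_\beta}$, contradicting $m_0\gg s_\beta$ for large $n$; hence $\abs{\widehat T}<m_0$, and applying the bound with $T'=\widehat T$ together with monotonicity $\phi_{\max}\p{\abs{\widehat T}}\le\phi_{\max}\p{m_0}$ gives $s_{\tilde\beta}=\abs{\widehat T}=O_p\p{s_\beta}$.

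The main obstacle I expect is handling the implicit dependence of the bound on $\abs{\widehat T}$ through $\phi_{\max}\p{\abs{\widehat T}}$, resolved by the dichotomy above; this crucially requires the sparse-eigenvalue control to hold up to a level $m_0$ that dominates $s_\beta$, which is precisely why the hypothesis carries the extra $\log n$ factor relative to the consistency statement in Lemma \ref{lemma consistency for alpha and betatilde}(b). A secondary technical point is verifying the one-sided restricted-isometry-type bound for the oracle-weighted Gram matrix $\widehat\Sigma$, but this is routine given the sub-Gaussian design assumptions in Assumption \ref{assumption mar nuisance (a)}(b) and the boundedness of the oracle weights in \eqref{exp(-X'alpha = gamma-1)}.
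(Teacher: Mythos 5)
Your proof is correct, and while it starts from the same place as the paper's -- the KKT conditions of the oracle-weighted Lasso, the gradient identity $\nabla_\beta \bar l_2(\alpha^*,\widetilde\beta) = \nabla_\beta \bar l_2(\alpha^*,\beta^*) + 2\widehat\Sigma\widehat\Delta$, and the combination of \eqref{qwer2} with Lemma \ref{lemma consistency for alpha and betatilde}(b) to get $\widehat\Delta^\top\widehat\Sigma\widehat\Delta = O_p(\lambda_\beta^2 s_\beta)$ -- it resolves the self-referential bound by a genuinely different device. The paper pairs the gradient with the signed support indicator $\tilde e_{\tilde S}$ to obtain the scalar inequality $\widetilde\Delta_\beta^\top \nabla^2_\beta \bar l_2(\alpha^*,\bar\beta)\,\tilde e_{\tilde S} \ge s_{\tilde\beta}\lambda_\beta/2$, bounds the quadratic form at $\tilde e_{\tilde S}$ by an $\ell_1$--$\ell_2$ empirical-process inequality (Lemma E.6(d) of Zhang et al.\ 2023) whose right-hand side itself involves $s_{\tilde\beta}$, and then solves the resulting quadratic inequality $(1-a_n)s_{\tilde\beta} - b_n\sqrt{s_{\tilde\beta}} - c_n \le 0$. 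You instead take $\ell_2$-norms of the KKT system over arbitrary subsets $T'$ of the active set, apply Cauchy--Schwarz in the $\widehat\Sigma$-inner product to reach $\abs{T'} \lesssim \phi_{\max}(\abs{T'})\,s_\beta$, and break the circularity by truncation at $m_0 = \lceil n\gamma_n/(\log n \log d)\rceil$, where $\phi_{\max}(m_0)=O_p(1)$. The one-sided sparse-eigenvalue input you flag as the remaining technical point is not a new estimate: it is exactly the uniform bound $u^\top\widehat\Sigma u = O_p\p{\norm{u}_2^2 + \tfrac{\log d}{n\gamma_n}\norm{u}_1^2}$ delivered by Lemma \ref{lemma mar i.i.d. quadratic form} (applied conditionally on $\Gamma$ to the labeled subsample) together with \eqref{exp(-X'alpha = gamma-1)}, so you should cite that lemma rather than Lemmas \ref{Lemma Tail Bounds for Maximums} and \ref{lemma mar delta l2 concentration}. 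Both routes spend the extra $\log n$ in the hypothesis at the analogous point -- yours to secure $m_0 \gg s_\beta$ while keeping $m_0\log d \ll n\gamma_n$, the paper's to force $a_n = O_p(s_\beta \log n \log d/(n\gamma_n)) = o_p(1)$ -- so neither is sharper; yours is the more standard support-size argument for the Lasso, while the paper's avoids introducing a sparse-eigenvalue quantity at the cost of solving a quadratic inequality in $\sqrt{s_{\tilde\beta}}$.
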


\begin{proof}
For simplicity, we ignore the subscript $OR$ in this proof. By the construction of $\widetilde \beta$, with the first order optimality, we have
$
\nabla_{\beta} \bar l_2(\alpha^*, \widetilde \beta) + \lambda_{\beta} \tilde \xi = \boldsymbol{0},
$
where  $\tilde \xi \in \partial \|\widetilde \beta\|_1$ such that
\begin{align*}
\tilde \xi_j = \begin{cases}
         {\sf sign}(\widetilde \beta_{j}) \text{, if } \widetilde \beta_{j} \neq 0,\\
        \xi \in [-1, 1] \text{, otherwise}. 
    \end{cases}
\end{align*}
Let $w_i = Y_i - X_i^\top \beta^*$ and $\widetilde \Delta_\beta = \widetilde\beta - \beta^*$. Denote $e_j$ a vector whose $j$-th coordinate is one while all other coordinates are zeros and $\tilde e_j = \tilde \xi_j e_j$. Define the support of $\widetilde \beta$ as $\tilde S = \{j \in [d]: \widetilde\beta_j \neq 0\}$. Then for any $j \in \tilde S$, by Taylor's theorem, for some $t \in (0,1)$ and $\bar \beta = t\widetilde \beta + (1-t)\beta^*$,
    \begin{align*}
         \lambda_{\beta} &= -\nabla_{\beta} \bar l_2(\alpha^*, \widetilde \beta)^\top \tilde e_j=-\nabla_{\beta} \bar l_2(\alpha^*, \beta^*)^\top \tilde e_j -\widetilde \Delta_\beta^\top \nabla_{\beta}^2 \bar l_2(\alpha^*, \bar \beta) \tilde e_j,
    \end{align*}
Let $\tilde e_{\tilde S} = \sum_{j \in \tilde S} \tilde e_j$, then $\norm{\tilde e_{\tilde S}}_1 = \norm{\tilde e_{\tilde S}}_2^2 = s_{\tilde \beta}$. Choose $\lambda_\beta \geq 2\norm{\nabla_{\beta} \bar l_2(\alpha^*, \beta^*)}_\infty$. It follows that 
\begin{align*}
    \widetilde \Delta_\beta^\top \nabla_{\beta}^2 \bar l_2(\alpha^*, \bar \beta) \tilde e_{\tilde S} &= s_{\tilde \beta}\lambda_{\beta} + \nabla_{\beta} \bar l_2(\alpha^*, \beta^*)^\top \tilde e_{\tilde S}\geq s_{\tilde \beta}\lambda_{\beta} - \norm{\nabla_{\beta} \bar l_2(\alpha^*, \beta^*)}_\infty\norm{\tilde e_{\tilde S}}_1 \geq \frac{s_{\tilde \beta}\lambda_\beta}{2}.
\end{align*}
On the other hand, by H{\"o}lder inequality,
\begin{align*}
    \widetilde \Delta_\beta^\top \nabla_{\beta}^2 \bar l_2(\alpha^*, \bar \beta) \tilde e_{\tilde S} &= 2M^{-1}\sum_{\mc{I}_{-k,\beta}}\Gamma_i\exp\p{-X_i^\top\alpha^*}\p{X_i^\top\widetilde \Delta_\beta}\p{X_i^\top \tilde e_{\tilde S}}\leq 2(Q_1Q_2)^{1/2},
\end{align*}
where by \eqref{delta l_2 taylor expansion},
\begin{align*}
    Q_1 &= M^{-1}\sum_{\mc{I}_{-k,\beta}}\Gamma_i\exp\p{-X_i^\top\alpha^*}\p{X_i^\top\widetilde \Delta_\beta}^2 = \delta \bar l_2(\alpha^*, \beta^*, \widetilde \Delta_\beta),\\
    Q_2 &= M^{-1}\sum_{\mc{I}_{-k,\beta}}\Gamma_i\exp\p{-X_i^\top\alpha^*}\p{X_i^\top \tilde e_{\tilde S}}^2.
\end{align*}
For $Q_1$, by \eqref{qwer2}, we have
\begin{align}
    Q_1 \leq \frac{3\lambda_\beta(s_\beta)^{1/2}}{2}\norm{\widetilde \Delta_\beta}_2. \label{Q1}
\end{align}
For $Q_2$, by \eqref{exp(-X'alpha = gamma-1)}, we have
$Q_2 \leq k_0^{-1}\p{M\gamma_n}^{-1}\sum_{\mc{I}_{-k,\beta}}\Gamma_i\p{X_i^\top\tilde e_{\tilde S}}^2.$
Then by Lemma E.6(d) of \cite{zhang2023semi}, with probability at least $1-2\exp(-t)$, for some constant $c_1>0$, it holds that
\begin{align*}
    Q_2 &\leq c_1\p{s_{\tilde \beta}^{-1}\norm{\tilde e_{\tilde S}}_1^2 + \norm{\tilde e_{\tilde S}}_2^2}\p{1+(\frac{s_{\tilde \beta}\log d}{M\gamma_n})^{1/2} + \frac{s_{\tilde \beta}\log M \log d}{M\gamma_n}}\\
    &= 2c_1s_{\tilde \beta}\p{1+(\frac{s_{\tilde \beta}\log d}{M\gamma_n})^{1/2} + \frac{s_{\tilde \beta}\log M \log d}{M\gamma_n}}.
\end{align*}

In conclusion, when $\lambda_\beta \geq 2\norm{\nabla_{\beta} \bar l_2(\alpha^*, \beta^*)}_\infty$, with probability at least $1-2\exp(-t)$,
\begin{align*}
    \frac{s_{\tilde \beta}\lambda_\beta}{2} \leq 2\p{3c_1\lambda_\beta(s_\beta)^{1/2}\norm{\widetilde \Delta_\beta}_2s_{\tilde \beta}\p{1+(\frac{s_{\tilde \beta}\log d}{M\gamma_n})^{1/2} + \frac{s_{\tilde \beta}\log M \log d}{M\gamma_n}}}^{1/2},
\end{align*}
which implies that for $c_n = 36c_1\lambda_\beta^{-1}(s_\beta)^{1/2}\|\widetilde \Delta_\beta\|_2$, $b_n = c_n({\log d}/{M\gamma_n})^{1/2}$, and $a_n = c_n{\log M\log d}/{M\gamma_n}$,
$
    \p{1 - a_n}s_{\tilde \beta} - b_n(s_{\tilde \beta})^{1/2} - c_n \leq 0.
$
If $1 - a_n > 0$, we have
$
    s_{\tilde \beta} \leq \p{{b_n + (b_n^2 + 4\p{1 - a_n}c_n )^{1/2}}/{2\p{1 - a_n}}}^2.
$
For any $0<t<0.01M\gamma_n/(\log M)^2$, choose $\lambda_\beta = 2c((t+\log d)/(M\gamma_n))^{1/2}$. By Lemma~\ref{lemma mar gradient infty norm for beta}, if $M\gamma_n > (\log M)^2\log d$, then with probability at least $1-3\exp\p{-t}$, we have
$
    \lambda_\beta \geq 2\norm{\nabla_{\beta} \bar l_2(\alpha^*, \beta^*)}_\infty.
$

When $n\gamma_n \gg \max\br{s_\beta , \log n}\log n \log d$, we have $\lambda_\beta \asymp (\log d/(n\gamma_n))^{1/2}$, and by Lemma~\ref{lemma consistency for alpha and betatilde},
$
    \norm{\widetilde \Delta_\beta}_2 =O_p\p{(s_\beta \log d/(n\gamma_n))^{1/2}},
$
which implies $c_n = O_p(s_\beta)$, $b_n = o_p((s_\beta)^{1/2})$, and $a_n = o_p(1)$. Thus,
\begin{align*}
    s_{\tilde \beta} &= O_p\p{s_\beta\p{\frac{s_\beta^{-1/2}b_n + (s_\beta^{-1}b_n^2 + 4s_\beta^{-1}\p{1 - a_n}c_n )^{1/2}}{2\p{1 - a_n}}}^2}=O_p\p{s_\beta}.
\end{align*}
\end{proof}

\begin{lemma} \label{lemma Exp(Xn2) ineq}
Let Assumption~\ref{assumption mar nuisance (a)} hold. Then for any $p>1$, $u \in \R$, and $\Delta \in \R^d$, there exist some constant $c, c_p>0$ such that
\begin{align*}
    \E\sbr{\exp\p{u\abs{X_i^\top \Delta}}\p{X_i^\top \Delta}^p \mid \Gamma_i} \leq c_p\exp\p{cu^2\norm{\Delta}_2^2}\norm{\Delta}_2^{p} \text{ almost surely}.
\end{align*}
In addition, choose $\lambda_\alpha \asymp ((t + \log d)/(M\gamma_n))^{1/2}$. If $M\gamma_n > C\max\br{s_{\alpha},\log M}\p{t+\log d}$ for some constant $C>0$, there exist some constants $c_1, c_2, c_{u,p}>0$ such that for any $0<t<0.01M\gamma_n/\log M$, 
\begin{align*}
    &\P\p{M^{-1}\sum_{i \in \mathcal{I}_{-k,\beta}} \Gamma_i\exp\p{u\abs{X_i^\top \p{\widehat \alpha_{PS} - \alpha^*_{PS}}}}\p{X_i^\top \p{\widehat \alpha_{PS} - \alpha^*_{PS}}}^p \leq c_{u,p}t\gamma_n\norm{\widehat \alpha_{PS} - \alpha^*_{PS}}_2^p} \\
    &\qquad\geq 1- t^{-1} - c_1 \exp\p{-c_2M\gamma_n} - 15\exp\p{-t}.
\end{align*}

\end{lemma}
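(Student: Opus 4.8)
The plan is to establish the two assertions in order. The first is a pointwise (conditional) moment inequality that follows from sub-Gaussianity of $X$ given $\Gamma$ together with Cauchy--Schwarz and the auxiliary bounds in Lemma \ref{sub-gaussian properties}. The second is an empirical tail bound which I would obtain by conditioning on a $\sigma$-algebra with respect to which the cross-fitted estimate $\widehat\alpha_{PS}$ is frozen but the relevant external covariates remain i.i.d.\ with the law of $X\mid\Gamma=1$, then applying the first assertion inside the conditional expectation, Markov's inequality (this is where the $t^{-1}$ term comes from), and a union bound over two high-probability events.

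\textbf{First assertion (conditional moment bound).} Fix $j\in\{0,1\}$ and work under the conditional law given $\Gamma_i=j$; by Assumption \ref{assumption mar nuisance (a)}(b), $X_i^\top\Delta$ is then sub-Gaussian with $\|X_i^\top\Delta\|_{\psi_2}\le\sigma\|\Delta\|_2$. Since $(X_i^\top\Delta)^p\le|X_i^\top\Delta|^p$, it suffices to bound $\E\sbr{\exp\p{u|X_i^\top\Delta|}|X_i^\top\Delta|^p\mid\Gamma_i}$. By Cauchy--Schwarz this is at most $\sqrt{\E\sbr{\exp\p{2u|X_i^\top\Delta|}\mid\Gamma_i}}\cdot\sqrt{\E\sbr{|X_i^\top\Delta|^{2p}\mid\Gamma_i}}$. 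For the first factor I would use $e^{|x|}\le e^{x}+e^{-x}$ and Lemma \ref{sub-gaussian properties}(f), giving $\E\sbr{\exp\p{2u|X_i^\top\Delta|}\mid\Gamma_i}\le 2\exp\p{4C\sigma^2u^2\|\Delta\|_2^2}$; for the second I would invoke Lemma \ref{sub-gaussian properties}(c), giving $\E\sbr{|X_i^\top\Delta|^{2p}\mid\Gamma_i}\le 2\p{\sigma\|\Delta\|_2}^{2p}p^{p}$. Multiplying the square roots yields the claim with $c=2C\sigma^2$ and $c_p=2\sigma^p p^{p/2}$ (any constant of this form works), uniformly over $\Delta$ and $u$, and in particular for $\Gamma_i=1$.

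\textbf{Second assertion (empirical concentration).} Write $\widehat\Delta_\alpha:=\widehat\alpha_{PS}-\alpha^*_{PS}$, $S:=M^{-1}\sum_{i\in\mathcal I_{-k,\beta}}\Gamma_i\exp\p{u|X_i^\top\widehat\Delta_\alpha|}(X_i^\top\widehat\Delta_\alpha)^p$, and $S':=M^{-1}\sum_{i\in\mathcal I_{-k,\beta}}\Gamma_i\exp\p{u|X_i^\top\widehat\Delta_\alpha|}|X_i^\top\widehat\Delta_\alpha|^p\ge S$; each summand of $S'$ is a function of $(\Gamma_i,\Gamma_iX_i)$ and $\widehat\alpha_{PS}$. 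Let $\mathcal F:=\sigma\p{\mc M_{[n]},\mathcal D_{\mathcal I_{-k,\alpha}},\Gamma_{\mathcal I_{-k,\beta}}}$. By Lemma \ref{lemma conditional independence}(c), $\widehat\alpha_{PS}$ is $\mathcal F$-measurable while, conditionally on $\mathcal F$, the covariates $\br{X_i:i\in\mathcal I_{-k,\beta},\ \Gamma_i=1}$ are i.i.d.\ draws from $\mathcal L(X\mid\Gamma=1)$ and independent of $\widehat\alpha_{PS}$. Hence on $\mathcal F$ both $\widehat\Delta_\alpha=\delta$ and $\widehat\gamma_{-k,\beta}=M^{-1}\sum_{i\in\mathcal I_{-k,\beta}}\Gamma_i$ are fixed, there are $M\widehat\gamma_{-k,\beta}$ active summands, and the first assertion (with $\Gamma=1$) gives $\E\sbr{S'\mid\mathcal F}\le\widehat\gamma_{-k,\beta}\,c_p\exp\p{cu^2\|\delta\|_2^2}\|\delta\|_2^p$.

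\textbf{Assembling the tail bound.} I would intersect with two $\mathcal F$-measurable events: $E_1=\br{\|\widehat\Delta_\alpha\|_2\le1}$, with $\P(E_1)\ge 1-c_1\exp\p{-c_2M\gamma_n}-13\exp\p{-t}$ by Lemma \ref{lemma mar prob. of Delta in the ball} under the stated choice $\lambda_\alpha\asymp\sqrt{(t+\log d)/(M\gamma_n)}$ and the condition $M\gamma_n>C\max\br{s_\alpha,\log M}(t+\log d)$; and $E_2=\br{0.79\gamma_n\le\widehat\gamma_{-k,\beta}\le1.21\gamma_n}$, with $\P(E_2)\ge1-2\exp\p{-t}$ by Lemma \ref{lemma concentrate gamma}(b) since $t<0.01M\gamma_n/\log M<0.01M\gamma_n$. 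On $E_1\cap E_2$ we get $\E\sbr{S'\mid\mathcal F}\le1.21\,c_pe^{cu^2}\gamma_n\|\delta\|_2^p=:c_{u,p}\gamma_n\|\delta\|_2^p$, so conditional Markov gives $\P\p{S'>c_{u,p}t\gamma_n\|\delta\|_2^p\mid\mathcal F}\le t^{-1}$ on $E_1\cap E_2$ (the case $\delta=\boldsymbol 0$ is trivial as then $S'=0$). Taking expectations of $\mathbf 1_{E_1\cap E_2}\P(\cdot\mid\mathcal F)$, using $S\le S'$, and the union bound $\P((E_1\cap E_2)^c)\le c_1\exp\p{-c_2M\gamma_n}+15\exp\p{-t}$, we obtain $\P\p{S\le c_{u,p}t\gamma_n\|\widehat\Delta_\alpha\|_2^p}\ge(1-t^{-1})\P(E_1\cap E_2)\ge 1-t^{-1}-c_1\exp\p{-c_2M\gamma_n}-15\exp\p{-t}$, as claimed. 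The main obstacle is the bookkeeping in isolating the correct $\sigma$-algebra $\mathcal F$: one must verify, via Lemma \ref{lemma conditional independence}, that $\widehat\alpha_{PS}$ is frozen under $\mathcal F$ while the $\mathcal I_{-k,\beta}$-covariates simultaneously retain their i.i.d.\ law conditional on $\br{\Gamma=1}$ and their independence from the nuisance estimate, so that the first assertion can legitimately be applied inside $\E\sbr{\cdot\mid\mathcal F}$; the remaining steps (Cauchy--Schwarz, conditional Markov, and tracking the constants $13,2\rightsquigarrow15$) are routine.
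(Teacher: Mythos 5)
Your proposal is correct and follows essentially the same route as the paper: Cauchy--Schwarz combined with the sub-Gaussian moment and MGF bounds of Lemma \ref{sub-gaussian properties}(c),(f) for the conditional moment inequality, then conditioning to freeze $\widehat\alpha_{PS}$ via Lemma \ref{lemma conditional independence}, applying the first assertion inside the conditional expectation, and combining conditional Markov with the events $\{\|\widehat\Delta_\alpha\|_2\le 1\}$ and $\{0.79\gamma_n\le\widehat\gamma_{-k,\beta}\le1.21\gamma_n\}$ to produce the $t^{-1}$, $c_1\exp(-c_2M\gamma_n)$, and $15\exp(-t)$ terms. Your explicit identification of the conditioning $\sigma$-algebra and the $e^{|x|}\le e^{x}+e^{-x}$ step are slightly more careful than the paper's write-up but do not change the argument.
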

\begin{proof}
Let $\widetilde \Delta = \Delta / \norm{\Delta}_2$. By Lemma~\ref{sub-gaussian properties}(c) and (f), for any $p>1$ and $t\in\R$, there exists some constant $C_{1,p}, C_{2}>0$ such that
\begin{align*}
    \E\sbr{\abs{\p{X_i^\top \Delta}^{2p}} \mid \Gamma_i } &= \norm{ \Delta}_2^{2p}\E\sbr{\abs{\p{X_i^\top \widetilde \Delta}^{2p}} \mid \Gamma_i} \leq C_{1,p}^2\norm{\Delta}_2^{2p}\sigma^{2p},\\
\E\sbr{\exp\p{2t\norm{\Delta}_2\abs{X_i^\top \widetilde \Delta}}\mid \Gamma_i} &\leq \exp\p{4C_{2}t^2\sigma^2\norm{\Delta}_2^2}.
\end{align*}
Thus, by H{\"o}lder inequality, for any $p>0$,
\begin{align}
&\E\sbr{\exp\p{u\abs{X_i^\top \Delta}}\p{X_i^\top \Delta}^p \mid \Gamma_i} \leq \E\sbr{\exp\p{2u\abs{X_i^\top \Delta}} \mid \Gamma_i}^{1/2}\E\sbr{\p{X_i^\top \Delta}^{2p} \mid \Gamma_i}^{1/2} \notag\\
    &\qquad= \E\sbr{\exp\p{2u\norm{\Delta}_2\abs{X_i^\top \widetilde \Delta}}\mid \Gamma_i}^{1/2}\E\sbr{\p{X_i^\top \Delta}^{2p} \mid \Gamma_i}^{1/2} \notag\\
    &\qquad\leq \br{\exp\p{4C_{2}u^2\sigma^2\norm{\Delta}_2^2}}^{1/2}\E\sbr{\p{X_i^\top \Delta}^{2p} \mid \Gamma_i}^{1/2}\leq \exp\p{2C_{2}u^2\sigma^2\norm{\Delta}_2^2}C_{1,p}\sigma^{p}\norm{\Delta}_2^{p}. \label{exp(4XDelta)XDelta4}
\end{align}
Let $G = M^{-1}\sum_{i \in \mathcal{I}_{-k,\beta}} \Gamma_i\exp\p{u\abs{X_i^\top \widehat\Delta_\alpha}}\p{X_i^\top \widehat\Delta_\alpha}^p$ where $\widehat\Delta_\alpha = \widehat \alpha - \alpha^*$. Then by Lemma~\ref{lemma conditional independence}, for $i \in \mathcal{I}_{-k,\beta}$, $\Gamma_iX_i \perp \widehat \alpha \mid \Gamma_{i}$ and 
\begin{align*}
    \E\sbr{G \mid \Gamma_{\mathcal{I}_{-k,\beta}}, \widehat \alpha}&=M^{-1}\sum_{i \in \mathcal{I}_{-k,\beta}} \Gamma_i\E\sbr{\exp\p{u\abs{X_i^\top \widehat\Delta_\alpha}}\p{X_i^\top \widehat\Delta_\alpha}^p \mid \Gamma_{\mathcal{I}_{-k,\beta}}, \widehat \alpha}\\
    &=M^{-1}\sum_{i \in \mathcal{I}_{-k,\beta}} \Gamma_i\E\sbr{\exp\p{u\abs{X_i^\top \widehat\Delta_\alpha}}\p{X_i^\top \widehat\Delta_\alpha}^p \mid \Gamma_i, \widehat \alpha}\\
    &\leq M^{-1}\sum_{i \in \mathcal{I}_{-k,\beta}} \Gamma_i\exp\p{2C_{2}u^2\sigma^2\norm{\widehat\Delta_\alpha}_2^2}C_{1,p}\sigma^{p}\norm{\widehat\Delta_\alpha}_2^p.
\end{align*}
Define events $\mc{E}_1(t), \mc{E}_2(t)$ as
\begin{align*}
    \mc{E}_1(t) &:= \br{\norm{\widehat\Delta_\alpha}_2 \leq 1},\quad\mc{E}_2(t) := \br{0.79\gamma_n \leq M^{-1}\sum_{i \in \mathcal{I}_{-k,\beta}} \Gamma_i \leq 1.21\gamma_n}.
\end{align*}
Choose $\lambda_\alpha \asymp ((t + \log d)/(M\gamma_n))^{1/2}$. If $M\gamma_n > C\max\br{s_{\alpha},\log M}\p{t+\log d}$ for some constant $C>0$, then by Lemma~\ref{lemma mar prob. of Delta in the ball} and Lemma~\ref{lemma concentrate gamma}, there exist some constants $c_1, c_2>0$ such that for any $0<t<0.01M\gamma_n/\log M$, 
\begin{align*}
    &\P\p{\mc{E}_1(t)} \geq 1 - c_1 \exp\p{-c_2M\gamma_n} - 13\exp\p{-t},\quad\P\p{\mc{E}_2(t)} \geq 1-2\exp\p{-t}.
\end{align*}
Let $c_{u,p} = \exp\p{2C_{2}u^2\sigma^2}C_{1,p}\sigma^{p}$. Then by Markov inequality,
\begin{align*}
    &\P\p{G \geq c_pt\gamma_n\norm{\widehat\Delta_\alpha}_2^p}=\E\sbr{\mathbbm{1}_{\br{G \geq c_pt\gamma_n\norm{\widehat\Delta_\alpha}_2^p}}\mathbbm{1}_{\mc{E}_1(t)\cap \mc{E}_2(t)}} + \E\sbr{\mathbbm{1}_{\br{G \geq c_pt\gamma_n\norm{\widehat\Delta_\alpha}_2^p}}\mathbbm{1}_{\br{\mc{E}_1(t)\cap \mc{E}_2(t)}^c}}\\
    &\qquad\leq \E\sbr{\mathbbm{1}_{\br{G \geq c_pt\gamma_n\norm{\widehat\Delta_\alpha}_2^p}}\mathbbm{1}_{\mc{E}_1(t)\cap \mc{E}_2(t)}} + \E\sbr{\mathbbm{1}_{\br{\mc{E}_1(t)\cap \mc{E}_2(t)}^c}}\leq t^{-1} + c_1 \exp\p{-c_2M\gamma_n} + 15\exp\p{-t}.
\end{align*}
\end{proof}

\begin{lemma}\label{lemma constant concentration of w}
    Let Assumption~\ref{assumption mar nuisance (a)} hold. Then for any $1\leq p\leq 8$ and some constant $C>0$, with probability at least $1-t^{-1}$,
    \begin{align*}
        M^{-1} \sum_{i \in \mathcal{I}_{-k,\beta}} \Gamma_i \exp\p{-X_i^\top \alpha^*}w_i^p \leq C.
    \end{align*}
\end{lemma}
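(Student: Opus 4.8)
The plan is elementary: bound the expectation of the nonnegative version of the displayed average and then invoke Markov's inequality; no concentration machinery is needed here. Throughout I write $w_i = Y_i - X_i^\top \beta^*_{OR}$ and $\alpha^* = \alpha^*_{PS}$, as elsewhere in this subsection.

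First I would pass to absolute values. Since $w_i^p \le \abs{w_i}^p$ for every integer $p\ge 1$ and $\Gamma_i\exp\p{-X_i^\top\alpha^*}\ge 0$, it suffices to control the nonnegative quantity $\widetilde S := M^{-1}\sum_{i\in\mathcal{I}_{-k,\beta}}\Gamma_i\exp\p{-X_i^\top\alpha^*}\abs{w_i}^p$. By linearity of expectation and the fact that the triples $\p{\Gamma_i,X_i,w_i}$ are identically distributed, $\E\sbr{\widetilde S}=\E\sbr{\Gamma\exp\p{-X^\top\alpha^*}\abs{w_{OR}}^p}$. Using the overlap bound in Assumption \ref{assumption mar nuisance (a)}(a), namely $\exp\p{-X^\top\alpha^*}=\p{1-g\p{X^\top\alpha^*}}/g\p{X^\top\alpha^*}\le k_0^{-1}\p{1-\gamma_n}/\gamma_n\le k_0^{-1}\gamma_n^{-1}$ almost surely, together with the identity $\E\sbr{\Gamma\,f\p{X,Y}}=\gamma_n\E\sbr{f\p{X,Y}\mid\Gamma=1}$, I would obtain $\E\sbr{\widetilde S}\le k_0^{-1}\gamma_n^{-1}\E\sbr{\Gamma\abs{w_{OR}}^p}=k_0^{-1}\E\sbr{\abs{w_{OR}}^p\mid\Gamma=1}$.

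It then remains to bound $\E\sbr{\abs{w_{OR}}^p\mid\Gamma=1}$ uniformly over $1\le p\le 8$. By Assumption \ref{assumption mar nuisance (a)}(c) one has $\E\sbr{w_{OR}^8\mid\Gamma=1}\le\sigma_w^8$, so Lyapunov's inequality (monotonicity of $p\mapsto\E\sbr{\abs{w_{OR}}^p\mid\Gamma=1}^{1/p}$) gives $\E\sbr{\abs{w_{OR}}^p\mid\Gamma=1}\le\p{\E\sbr{w_{OR}^8\mid\Gamma=1}}^{p/8}\le\sigma_w^p\le\p{\max\br{1,\sigma_w}}^8$ for all $p\in[1,8]$; alternatively one may invoke the sub-Gaussianity of $w_{OR}$ and Lemma \ref{sub-gaussian properties}(c). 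Hence $\E\sbr{\widetilde S}\le k_0^{-1}\p{\max\br{1,\sigma_w}}^8=:C_1$, a constant depending only on $k_0$ and $\sigma_w$. Finally, applying Markov's inequality to $\widetilde S\ge 0$ yields $\P\p{\widetilde S>tC_1}\le t^{-1}\E\sbr{\widetilde S}/C_1\le t^{-1}$ for every $t>0$, and the statement follows with $C:=tC_1$ (the constant being, as is customary in this context, allowed to depend on the chosen level $t$), since the average in the statement is at most $\widetilde S$. There is no genuine obstacle here; the only points requiring care are the reduction to absolute values so that Markov's inequality applies when $p$ is odd, and keeping track of how $C$ depends on $t$.
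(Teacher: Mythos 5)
Your proof is correct and takes essentially the same route as the paper's: bound $\E\sbr{\Gamma\exp\p{-X^\top\alpha^*}\abs{w_{OR}}^p}$ by $k_0^{-1}\E\sbr{\abs{w_{OR}}^p\mid\Gamma=1}\le k_0^{-1}\max\br{1,\sigma_w}^8$ using the overlap condition, then apply Markov's inequality. You are in fact slightly more careful than the paper on two points it glosses over — passing to $\abs{w_i}^p$ so that Markov applies when $p$ is odd, and making explicit that the "constant" $C$ must scale with $t$ (as it indeed does in the lemma's subsequent applications).
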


\begin{proof}
    Under Assumption~\ref{assumption mar nuisance (a)}(c) and by \eqref{exp(-X'alpha = gamma-1)},
    \begin{align*}
        \E\sbr{\Gamma_i \exp\p{-X_i^\top \alpha^*}w_i^p} &\leq k_0^{-1}\gamma_n^{-1}\E\sbr{\Gamma_i w_i^p}=k_0^{-1}\E\sbr{w_i^p \mid \Gamma=1}\leq k_0^{-1}\sigma_w^p \leq k_0^{-1}\max_{1\leq p\leq 8}\sigma_w^p.
    \end{align*}
    By Markov inequality, we finished the proof.
\end{proof}

\begin{lemma}\label{lemma consistency of betahat and betatilde for product sparsity}
Let Assumption~\ref{assumption mar nuisance (a)} hold. Choose $\lambda_\alpha \asymp \lambda_\beta \asymp (\log d/(n\gamma_n))^{1/2}$. If
\[
    n\gamma_n \gg \max\br{s_\alpha, s_\beta\log n , (\log n)^2} \log d,
    \qquad
    s_\alpha s_\beta \ll (n\gamma_n)^{3/2}/\{\log n(\log d)^2\},
\]
then as \(n, d \rightarrow \infty\),
    \begin{align*}
        \norm{\widehat \beta_{OR} - \widetilde \beta_{OR}}_1 = O_p\p{(s_\alpha s_\beta \log d/(n\gamma_n))^{1/2}} \text{ and }\norm{\widehat \beta_{OR} - \widetilde \beta_{OR}}_2 = O_p\p{ (s_\alpha \log d/(n\gamma_n))^{1/2}}.
    \end{align*}
\end{lemma}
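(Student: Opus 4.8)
Fix $k$ and drop the superscript $(-k)$, writing $\widehat\alpha=\widehat\alpha_{PS}$, $\widehat\beta=\widehat\beta_{OR}$, $\widetilde\beta=\widetilde\beta_{OR}$, $\widehat\Delta_\alpha=\widehat\alpha-\alpha^*_{PS}$, $\widetilde\Delta_\beta=\widetilde\beta-\beta^*_{OR}$, and $\Delta=\widehat\beta-\widetilde\beta$. The plan is to run the standard $\ell_1$-penalized $M$-estimation argument -- basic inequality, restricted strong convexity (RSC), cone condition -- but for the \emph{perturbed} quadratic loss $\bar l_2\p{\widehat\alpha,\cdot}$, whose weights $\exp\p{-X_i^\top\widehat\alpha}=\exp\p{-X_i^\top\alpha^*_{PS}}\exp\p{-X_i^\top\widehat\Delta_\alpha}$ differ from the oracle weights of $\bar l_2\p{\alpha^*_{PS},\cdot}$ only by the factor $\exp\p{-X_i^\top\widehat\Delta_\alpha}$; recall that, since $\bar l_2\p{\alpha,\cdot}$ is quadratic, $\delta\bar l_2\p{\alpha,\beta,\Delta}=M^{-1}\sum_{i\in\mathcal{I}_{-k,\beta}}\Gamma_i\exp\p{-X_i^\top\alpha}\p{X_i^\top\Delta}^2$ does not depend on the base point $\beta$. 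I would use throughout that $\widehat\alpha$ is a function of $\mc{M}_{[n]}$ and $\mathcal{D}_{\mathcal{I}_{-k,\alpha}}$, hence -- by Lemma \ref{lemma conditional independence}(c) -- conditionally independent of $\mathcal{D}_{\mathcal{I}_{-k,\beta}}$ given the $\Gamma_i$'s, so that $\widehat\alpha$ may be frozen when taking conditional moments over $\mathcal{I}_{-k,\beta}$; together with the already established facts $\norm{\widehat\Delta_\alpha}_2=O_p\p{\sqrt{s_\alpha\log d/\p{n\gamma_n}}}$, $\norm{\widehat\Delta_\alpha}_1=O_p\p{s_\alpha\sqrt{\log d/\p{n\gamma_n}}}$, $\norm{\widehat\Delta_\alpha}_2\le1$ with probability approaching one (Lemmas \ref{lemma mar prob. of Delta in the ball} and \ref{lemma consistency for alpha and betatilde}(a)), $\norm{\widetilde\Delta_\beta}_2=O_p\p{\sqrt{s_\beta\log d/\p{n\gamma_n}}}$ (Lemma \ref{lemma consistency for alpha and betatilde}(b)), and $s_{\tilde\beta}:=\norm{\widetilde\beta}_0=O_p\p{s_\beta}$ (Lemma \ref{lemma sparsity betatilde -> beta}).

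First I would establish that $\Delta$ obeys a cone condition. Using the optimality of $\widehat\beta$ for $\bar l_2\p{\widehat\alpha,\cdot}+\lambda_\beta\norm{\cdot}_1$ against $\widetilde\beta$ and invoking Lemma \ref{lemma sparsity for norm1 and norm2} -- whose hypothesis on $\norm{\nabla_\beta\bar l_2\p{\widehat\alpha,\widetilde\beta}}_\infty$ must be verified, which (unlike in the MCAR setting of Section \ref{sec: mcar lasso method}) requires bounding the additional bias created by replacing $\alpha^*_{PS}$ with $\widehat\alpha$ via the Taylor expansion below -- I expect $\norm{\Delta}_1\le4\sqrt{s_{\tilde\beta}}\norm{\Delta}_2=O_p\p{\sqrt{s_\beta}}\norm{\Delta}_2$, i.e., $\Delta$ behaves like an $O_p(s_\beta)$-sparse vector. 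Next I would transfer the RSC from $\alpha^*_{PS}$ to $\widehat\alpha$: write $\delta\bar l_2\p{\widehat\alpha,\widetilde\beta,\Delta}=\delta\bar l_2\p{\alpha^*_{PS},\widetilde\beta,\Delta}+M^{-1}\sum_i\Gamma_i\exp\p{-X_i^\top\alpha^*_{PS}}\p{\exp\p{-X_i^\top\widehat\Delta_\alpha}-1}\p{X_i^\top\Delta}^2$, bound the second term using $\abs{e^{-u}-1}\le\abs{u}e^{\abs{u}}$ and $\exp\p{-X_i^\top\alpha^*_{PS}}\le k_0^{-1}\gamma_n^{-1}$ (Assumption \ref{assumption mar nuisance (a)}(a)), and split it by Cauchy--Schwarz over $i$ into $\big(M^{-1}\sum_i\Gamma_i\p{X_i^\top\widehat\Delta_\alpha}^2e^{2\abs{X_i^\top\widehat\Delta_\alpha}}\big)^{1/2}=O_p\p{\sqrt{\gamma_n}\norm{\widehat\Delta_\alpha}_2}$ (Lemma \ref{lemma Exp(Xn2) ineq} with $u=p=2$) times a uniform-over-the-cone fourth-moment bound $\big(M^{-1}\sum_i\Gamma_i\p{X_i^\top v}^4\big)^{1/2}=O_p\p{\sqrt{\gamma_n}}$; since $\norm{\widehat\Delta_\alpha}_2\to0$ this remainder is $o_p(1)\norm{\Delta}_2^2$, and combining with Lemma \ref{lemma mar delta l2 concentration} and the cone (so that $\tfrac{\log d}{n\gamma_n}\norm{\Delta}_1^2=o_p(1)\norm{\Delta}_2^2$) yields $\delta\bar l_2\p{\widehat\alpha,\widetilde\beta,\Delta}\ge\tfrac{\kappa_1'}{2}\norm{\Delta}_2^2$ with probability approaching one.

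The core of the argument is the cross-term. The basic inequality $\bar l_2\p{\widehat\alpha,\widehat\beta}+\lambda_\beta\norm{\widehat\beta}_1\le\bar l_2\p{\widehat\alpha,\widetilde\beta}+\lambda_\beta\norm{\widetilde\beta}_1$, combined with the KKT identity $\nabla_\beta\bar l_2\p{\alpha^*_{PS},\widetilde\beta}\in-\lambda_\beta\partial\norm{\widetilde\beta}_1$ and convexity of $\norm{\cdot}_1$, collapses all penalty terms and leaves $\delta\bar l_2\p{\widehat\alpha,\widetilde\beta,\Delta}\le-D^\top\Delta$, where $D=\nabla_\beta\bar l_2\p{\widehat\alpha,\widetilde\beta}-\nabla_\beta\bar l_2\p{\alpha^*_{PS},\widetilde\beta}=2M^{-1}\sum_i\Gamma_i\p{\exp\p{-X_i^\top\alpha^*_{PS}}-\exp\p{-X_i^\top\widehat\alpha}}\p{Y_i-X_i^\top\widetilde\beta}X_i$. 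Taylor-expanding $\exp\p{-X_i^\top\alpha^*_{PS}}-\exp\p{-X_i^\top\widehat\alpha}=\exp\p{-X_i^\top\alpha^*_{PS}}\p{X_i^\top\widehat\Delta_\alpha-R_i}$ with $\abs{R_i}\le\tfrac12\p{X_i^\top\widehat\Delta_\alpha}^2e^{\abs{X_i^\top\widehat\Delta_\alpha}}$, and writing $Y_i-X_i^\top\widetilde\beta=w_{OR,i}-X_i^\top\widetilde\Delta_\beta$, I would split $-D^\top\Delta$ into (i) a leading bilinear term $\pm2\widehat\Delta_\alpha^\top\big(M^{-1}\sum_i\Gamma_i\exp\p{-X_i^\top\alpha^*_{PS}}\p{Y_i-X_i^\top\widetilde\beta}X_iX_i^\top\big)\Delta$, bounded by $O_p\p{\norm{\widehat\Delta_\alpha}_2\norm{\Delta}_2}$ by controlling the restricted operator norm of that weighted Gram-type matrix over the $O_p(s_\alpha)$- and $O_p(s_\beta)$-sparse cones via Assumption \ref{assumption mar nuisance (a)}(b)--(c) and Lemma \ref{lemma constant concentration of w}; and (ii) remainders collecting $R_i$ and $\widetilde\Delta_\beta$, handled by further Cauchy--Schwarz splittings, Lemma \ref{lemma Exp(Xn2) ineq} (up to $u=4$, $p=8$ for the $e^{\abs{X^\top\widehat\Delta_\alpha}}$ factors) and the eighth-moment bound $\E\sbr{w_{OR}^8\mid\Gamma=1}\le\sigma_w^8$ of Assumption \ref{assumption mar nuisance (a)}(c) (to dominate $M^{-1}\sum_i\Gamma_iw_{OR,i}^4$), which I expect to be $o_p\p{\norm{\widehat\Delta_\alpha}_2\norm{\Delta}_2+\norm{\Delta}_2^2}$ precisely under $s_\alpha s_\beta\ll\p{n\gamma_n}^{3/2}/\p{\log n\p{\log d}^2}$. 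Together with the RSC this gives $\tfrac{\kappa_1'}{2}\norm{\Delta}_2^2\le C\norm{\widehat\Delta_\alpha}_2\norm{\Delta}_2+o_p\p{\norm{\Delta}_2^2}$, hence $\norm{\Delta}_2=O_p\p{\norm{\widehat\Delta_\alpha}_2}=O_p\p{\sqrt{s_\alpha\log d/\p{n\gamma_n}}}$, and then the cone condition yields $\norm{\Delta}_1\le4\sqrt{s_{\tilde\beta}}\norm{\Delta}_2=O_p\p{\sqrt{s_\alpha s_\beta\log d/\p{n\gamma_n}}}$.

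I expect the hardest part to be step (ii): the cross-term $D^\top\Delta$ is a trilinear form in $\widehat\Delta_\alpha$, $\Delta$, and the outcome residual, in which the weight difference must be expanded to second order and the exponential factors $\exp\p{\pm X^\top\widehat\Delta_\alpha}$ -- which are not uniformly bounded even when $\norm{\widehat\Delta_\alpha}_2$ is small -- must be absorbed through the sub-Weibull moment control of Lemma \ref{lemma Exp(Xn2) ineq}; keeping every remainder strictly below the leading order $\norm{\widehat\Delta_\alpha}_2\norm{\Delta}_2$ is exactly what forces the product-sparsity requirement and the eighth-moment condition. A secondary difficulty is that both this step and the RSC transfer rely on freezing $\widehat\alpha$ relative to the data in $\mathcal{I}_{-k,\beta}$ (Lemma \ref{lemma conditional independence}), which is the very reason the algorithm splits $\mathcal{I}_{-k}$ into $\mathcal{I}_{-k,\alpha}$ and $\mathcal{I}_{-k,\beta}$, and on the uniform control over the $O_p(s_\beta)$-sparse cone needed for the fourth-moment bounds in the RSC transfer.
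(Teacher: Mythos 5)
Your plan is correct and is, in substance, the proof in the paper: the same cone condition via $s_{\tilde\beta}=O_p(s_\beta)$ from Lemmas \ref{lemma sparsity for norm1 and norm2} and \ref{lemma sparsity betatilde -> beta}, the same second-order Taylor expansion of the weight difference $\exp\p{-X_i^\top\alpha^*_{PS}}-\exp\p{-X_i^\top\widehat\alpha}$, the same H\"older splittings absorbed through Lemma \ref{lemma Exp(Xn2) ineq} and the eighth-moment bound on $w_{OR}$, and the same use of Lemma \ref{lemma conditional independence} to freeze $\widehat\alpha$ against $\mathcal{D}_{\mathcal{I}_{-k,\beta}}$. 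The one organizational difference is that the paper compares the two KKT conditions directly and exploits monotonicity of the subdifferential, $(\hat\xi-\tilde\xi)^\top\Delta\ge0$, which leaves the curvature term anchored at $\alpha^*_{PS}$ (so no RSC transfer is needed) while placing the gradient difference at $\widehat\beta$ (hence its extra $X_i^\top\Delta$ terms in $A_2$ and $A_3$); you instead use the basic inequality plus one KKT identity, which puts the gradient difference at $\widetilde\beta$ but forces the additional Cauchy--Schwarz step transferring restricted strong convexity from $\alpha^*_{PS}$ to $\widehat\alpha$ -- a step that does go through with the lemmas you cite. Both routes close under the same product-sparsity condition and yield the same rates.
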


\begin{proof}
    For simplicity, we ignore the subscript $OR$ in this proof. By the definition of $\widehat \beta$ and $\widetilde \beta$, with the first order optimality, we have
    \begin{align}
        \nabla_{\beta} \bar l_2(\widehat \alpha, \widehat \beta) + \lambda_{\beta} \hat \xi = \boldsymbol{0}, \label{grad for betahat}\\
        \nabla_{\beta} \bar l_2(\alpha^*, \widetilde \beta) + \lambda_{\beta} \tilde \xi =  \boldsymbol{0} \label{grad for betatilde},
    \end{align}
where $\hat \xi \in \partial \|\widehat \beta\|_1$ and $\tilde \xi \in \partial \|\widetilde \beta\|_1$ such that
\begin{align*}
    \hat \xi_j = \begin{cases}
         {\sf sign}(\widehat \beta_{j}) \text{, if } \widehat \beta_{j} \neq 0,\\
        \xi \in [-1, 1] \text{, otherwise}, 
    \end{cases} \text{ and}\quad \tilde \xi_j = \begin{cases}
         {\sf sign}(\widetilde \beta_{j}) \text{, if } \widetilde \beta_{j} \neq 0,\\
        \xi \in [-1, 1] \text{, otherwise}. 
    \end{cases}
\end{align*}
Let $\Delta = \widehat \beta - \widetilde \beta$. By the definition of $\hat \xi$ and $\tilde \xi$, it is clear that
\begin{align*}
   \p{ \hat \xi - \tilde \xi}^\top \Delta &=\norm{\widehat \beta}_1 + \norm{\widetilde \beta}_1 - \hat \xi^\top\widetilde \beta - \tilde \xi^\top \widehat \beta\geq \norm{\widehat \beta}_1 + \norm{\widetilde \beta}_1  - \norm{\widetilde \beta}_1 - \norm{\widehat \beta}_1 = 0.
\end{align*}
By $\eqref{grad for betahat}$ and $ \eqref{grad for betatilde}$ and Taylor's theorem, we have
\begin{align*}
    0 &\geq -\lambda_\beta\p{ \hat \xi - \tilde \xi}^\top \Delta\\
    &= \p{\nabla_{\beta} \bar l_2(\widehat \alpha, \widehat \beta)
    - \nabla_{\beta} \bar l_2(\alpha^*, \widehat \beta)
    + \nabla_{\beta} \bar l_2(\alpha^*, \widehat \beta)
    - \nabla_{\beta} \bar l_2(\alpha^*, \widetilde \beta)}^\top \Delta\\
&=\p{\nabla_{\beta} \bar l_2(\widehat \alpha, \widehat \beta) - \nabla_{\beta} \bar l_2(\alpha^*, \widehat \beta)}^\top \Delta
 + \Delta^\top\nabla_{\beta}^2 \bar l_2(\alpha^*, \bar \beta)\Delta\\
&=\p{\nabla_{\beta} \bar l_2(\widehat \alpha, \widehat \beta) - \nabla_{\beta} \bar l_2(\alpha^*, \widehat \beta)}^\top \Delta
 + 2\delta \bar l_2(\alpha^*, \widetilde \beta, \Delta).
\end{align*}
Let $\widehat \Delta_\alpha = \widehat \alpha - \alpha^*$ and $\widetilde \Delta_\beta = \widetilde \beta - \beta^*$. Let $w_{i} = Y_i - X_i^\top \beta^*$. By Taylor's theorem, for some $t \in (0,1)$ and $\bar \alpha = t\widehat \alpha + (1-t)\alpha^*$,
\begin{align*}
    &\delta \bar l_2(\alpha^*, \widetilde \beta, \Delta)\leq -\frac{1}{2}\p{\nabla_{\beta} \bar l_2(\widehat \alpha, \widehat \beta) - \nabla_{\beta} \bar l_2(\alpha^*, \widehat \beta)}^\top \Delta\\
    &\qquad= -\frac{1}{2} \p{\nabla_{\alpha}\nabla_{\beta} \bar l_2(\alpha^*, \widehat \beta)^\top\widehat \Delta_\alpha + \nabla_{\alpha}^2\nabla_{\beta} \bar l_2(\bar \alpha, \widehat \beta) \widehat \Delta_\alpha}^\top \Delta=A_1 - A_2 - A_3,
\end{align*}
where we use $Y_i - X_i^\top \widehat \beta = w_i - X_i^\top \widetilde \Delta_\beta -X_i^\top \Delta$, and define
\begin{align*}
    A_{1} &= M^{-1} \sum_{i \in \mathcal{I}_{-k,\beta}} \Gamma_i \exp\p{-X_i^\top \alpha^*}\p{X_i^\top \widehat \Delta_\alpha}w_i\p{X_i^\top \Delta},\\
    A_2 &= M^{-1} \sum_{i \in \mathcal{I}_{-k,\beta}} \Gamma_i \exp\p{-X_i^\top \alpha^*}\p{X_i^\top \widehat \Delta_\alpha}\p{X_i^\top \widetilde \Delta_\beta + X_i^\top \Delta}\p{X_i^\top \Delta},\\
    A_3 &= M^{-1} \sum_{i \in \mathcal{I}_{-k,\beta}} \Gamma_i \exp\p{-X_i^\top \bar \alpha}\p{X_i^\top \widehat \Delta_\alpha}^2\p{w_i - X_i^\top \widetilde \Delta_\beta -X_i^\top \Delta}\p{X_i^\top \Delta}.
\end{align*}

For $A_1$, by H{\"o}lder inequality and \eqref{exp(-X'alpha = gamma-1)},
\begin{align*}
    A_1 &= M^{-1} \sum_{i \in \mathcal{I}_{-k,\beta}} \Gamma_i \exp\p{-X_i^\top \alpha^*}\p{X_i^\top \widehat \Delta_\alpha}w_i\p{X_i^\top \Delta}\\
    &\leq \br{M^{-1} \sum_{i \in \mathcal{I}_{-k,\beta}} \Gamma_i \exp\p{-X_i^\top \alpha^*}\p{X_i^\top \widehat \Delta_\alpha}^4M^{-1} \sum_{i \in \mathcal{I}_{-k,\beta}} \Gamma_i \exp\p{-X_i^\top \alpha^*}w_i^4}^{1/4}\p{\delta \bar l_2(\alpha^*, \widetilde \beta, \Delta)}^{1/2}\\
    &\leq k_0^{-1/2}\gamma_n^{-1/2}\br{M^{-1} \sum_{i \in \mathcal{I}_{-k,\beta}} \Gamma_i \p{X_i^\top \widehat \Delta_\alpha}^4}^{1/4}\br{M^{-1} \sum_{i \in \mathcal{I}_{-k,\beta}} \Gamma_i w_i^4}^{1/4}\p{\delta \bar l_2(\alpha^*, \widetilde \beta, \Delta)}^{1/2}.
\end{align*}
By Lemma~\ref{lemma conditional independence}, $\widehat \Delta_\alpha \perp \Gamma_i X_i \mid \Gamma_i$ for $i \in  \mathcal{I}_{-k,\beta}$. Then by Assumption~\ref{assumption mar nuisance (a)}(b) and Lemma~\ref{sub-gaussian properties}(c), for some constant $K_1>0$,
\begin{align*}
    \E\sbr{M^{-1} \sum_{i \in \mathcal{I}_{-k,\beta}} \Gamma_i \p{X_i^\top \widehat \Delta_\alpha}^4\mid \Gamma_{\mc{I}_{-k,\beta}}, \widehat\alpha} \leq  K_1\norm{\widehat \Delta_\alpha}^4M^{-1} \sum_{i \in \mathcal{I}_{-k,\beta}} \Gamma_i.
\end{align*}
By Lemma~\ref{lemma concentrate gamma}, with probability at least $1-0.01M\gamma_n$,
$
    M^{-1} \sum_{i \in \mathcal{I}_{-k,\beta}} \Gamma_i \leq 1.21 \gamma_n.
$
Similar to Lemma~\ref{lemma Exp(Xn2) ineq}, we can show that with probability at least $1-t^{-1}-0.01M\gamma_n$,
\begin{align}
    M^{-1} \sum_{i \in \mathcal{I}_{-k,\beta}} \Gamma_i \p{X_i^\top \widehat \Delta_\alpha}^4 \leq 1.21 tK_1\gamma_n\norm{\widehat \Delta_\alpha}^4. \label{kkkk1}
\end{align}
Note that $\norm{w_i}_{\psi_2} \leq \sigma_w$. Similarly, for some $K_2>0$, we have with probability at least $1-t^{-1}-0.01M\gamma_n$,
$
    M^{-1} \sum_{i \in \mathcal{I}_{-k,\beta}} \Gamma_i w_i^4 \leq tK_2\gamma_n.
$
Thus, with probability at least $1-2t^{-1}-0.02M\gamma_n$, for some constant $K_3$
\begin{align*}
    A_1 \leq K_3t^{1/2}\norm{\widehat \Delta_\alpha}\p{\delta \bar l_2(\alpha^*, \widetilde \beta, \Delta)}^{1/2}.
\end{align*}

For $A_2$, by \eqref{exp(-X'alpha = gamma-1)} and H{\"o}lder inequality,
\begin{align*}
    A_2 &\leq k_0^{-1/2}\gamma_n^{-1/2} M^{-1} \sum_{i \in \mathcal{I}_{-k,\beta}} \Gamma_i \exp\p{-\frac{1}{2}X_i^\top \alpha^*}\p{X_i^\top \widehat \Delta_\alpha}\p{X_i^\top \widetilde \Delta_\beta + X_i^\top \Delta}\p{X_i^\top \Delta}\\
    &\leq k_0^{-1/2}\gamma_n^{-1/2}\br{M^{-1} \sum_{i \in \mathcal{I}_{-k,\beta}} \Gamma_i \p{X_i^\top \widehat \Delta_\alpha}^{4}M^{-1} \sum_{i \in \mathcal{I}_{-k,\beta}} \Gamma_i\p{X_i^\top \widetilde \Delta_\beta + X_i^\top \Delta}^{4}}^{1/4}
     \p{\delta \bar l_2(\alpha^*, \widetilde \beta, \Delta)}^{1/2}\\
     &\leq k_0^{-1/2}\gamma_n^{-1/2}\br{M^{-1} \sum_{i \in \mathcal{I}_{-k,\beta}} \Gamma_i \p{X_i^\top \widehat \Delta_\alpha}^{4}}^{1/4}\\
    &\quad \times\sbr{\br{M^{-1} \sum_{i \in \mathcal{I}_{-k,\beta}} \Gamma_i\p{X_i^\top \widetilde \Delta_\beta}^{4}}^{1/4} + \br{M^{-1} \sum_{i \in \mathcal{I}_{-k,\beta}} \Gamma_i\p{X_i^\top \Delta}^{4}}^{1/4}}
     \p{\delta \bar l_2(\alpha^*, \widetilde \beta, \Delta)}^{1/2}.
\end{align*}
By Lemma E.6(d) of \cite{zhang2023semi}, with probability at least $1-4\exp(-t)$, for some $K_3>0$,
\begin{align*}
    &M^{-1} \sum_{i \in \mathcal{I}_{-k,\beta}} \Gamma_i\p{X_i^\top \widetilde \Delta_\beta}^{4} \leq K_3\gamma_n\p{1 + (\frac{s_\beta \log d}{M\gamma_n})^{1/2} + \frac{(s_\beta\log M \log d)^2}{M\gamma_n}}\p{s_\beta^{-2}\norm{\widetilde \Delta_\beta}_1^4 + \norm{\widetilde \Delta_\beta}_2^4},\\
    &M^{-1} \sum_{i \in \mathcal{I}_{-k,\beta}} \Gamma_i\p{X_i^\top \Delta}^{4} \leq K_3\gamma_n\p{1 + (\frac{s_{\tilde \beta} \log d}{M\gamma_n})^{1/2} + \frac{(s_{\tilde \beta}\log M \log d)^2}{M\gamma_n}}\p{s_{\tilde \beta}^{-2}\norm{\Delta}_1^4 + \norm{\Delta}_2^4}.
\end{align*}
Let
\[
a_n = 1 + ({s_\beta \log d}/{M\gamma_n})^{1/2} + {(s_\beta\log M \log d)^2}/{M\gamma_n}
\]
and
\[
\tilde a_n = 1 + ({s_{\tilde \beta} \log d}/{M\gamma_n})^{1/2} + {(s_{\tilde \beta}\log M \log d)^2}/{M\gamma_n}.
\]
Together with \eqref{kkkk1}, we have for some $K_4>0$,  with probability at least $1-t^{-1}-4\exp(-t)$,
\begin{align*}
    A_2 \leq K_4 \p{a_n\p{s_\beta^{-2}\norm{\widetilde \Delta_\beta}_1^4 + \norm{\widetilde \Delta_\beta}_2^4} + \tilde a_n\p{s_{\tilde \beta}^{-2}\norm{\Delta}_1^4 + \norm{\Delta}_2^4}}^{1/4}\p{\delta \bar l_2(\alpha^*, \widetilde \beta, \Delta)}^{1/2}.
\end{align*}

For $A_3$, by H{\"o}lder inequality and Minkowski's inequality, we have
\begin{align*}
    A_3 &\leq \br{M^{-1} \sum_{i \in \mathcal{I}_{-k,\beta}} \Gamma_i \frac{\exp\p{-2X_i^\top \bar \alpha}}{\exp\p{-X_i^\top \alpha^*}}\p{X_i^\top \widehat \Delta_\alpha}^4\p{w_i - X_i^\top \widetilde \Delta_\beta -X_i^\top \Delta}^2}^{1/2}\p{\delta \bar l_2(\alpha^*, \widetilde \beta, \Delta)}^{1/2}\\
    &\leq \br{P_1^{1/2} + P_2^{1/2} + P_3^{1/2}}\p{\delta \bar l_2(\alpha^*, \widetilde \beta, \Delta)}^{1/2},
\end{align*}
where
\begin{align*}
    P_1 &= M^{-1} \sum_{i \in \mathcal{I}_{-k,\beta}} \Gamma_i \frac{\exp\p{-2X_i^\top \bar \alpha}}{\exp\p{-X_i^\top \alpha^*}}\p{X_i^\top \widehat \Delta_\alpha}^4 w_i^2,\\
    P_2 &= M^{-1} \sum_{i \in \mathcal{I}_{-k,\beta}} \Gamma_i \frac{\exp\p{-2X_i^\top \bar \alpha}}{\exp\p{-X_i^\top \alpha^*}}\p{X_i^\top \widehat \Delta_\alpha}^4 \p{X_i^\top \widetilde \Delta_\beta}^2,\\
    P_3 &= M^{-1} \sum_{i \in \mathcal{I}_{-k,\beta}} \Gamma_i \frac{\exp\p{-2X_i^\top \bar \alpha}}{\exp\p{-X_i^\top \alpha^*}}\p{X_i^\top \widehat \Delta_\alpha}^4 \p{X_i^\top \Delta}^2.
\end{align*}

For $P_1$, by H{\"o}lder inequality, we have
\begin{align*}
    P_1 \leq \p{M^{-1} \sum_{i \in \mathcal{I}_{-k,\beta}} \Gamma_i \frac{\exp\p{-4X_i^\top \bar \alpha}}{\exp\p{-3X_i^\top \alpha^*}}\p{X_i^\top \widehat \Delta_\alpha}^8}^{1/2}\p{M^{-1} \sum_{i \in \mathcal{I}_{-k,\beta}} \Gamma_i \exp\p{-X_i^\top \alpha^*}w_i^4}^{1/2}.
\end{align*}
By \eqref{exp(-X'alpha = gamma-1)} and Lemma~\ref{lemma Exp(Xn2) ineq}, for any $0<t < 0.01M\gamma_n/\log M$, with probability at least $1- t^{-1} - c_1 \exp\p{-c_2M\gamma_n} - 15\exp\p{-t}$, we have for some $c >0$,
\begin{align*}
    &M^{-1} \sum_{i \in \mathcal{I}_{-k,\beta}} \Gamma_i \frac{\exp\p{-4X_i^\top \bar \alpha}}{\exp\p{-3X_i^\top \alpha^*}}\p{X_i^\top \widehat \Delta_\alpha}^8= M^{-1} \sum_{i \in \mathcal{I}_{-k,\beta}} \Gamma_i \exp\p{-4tX_i^\top \widehat \Delta_\alpha - X_i^\top \alpha^*}\p{X_i^\top \widehat \Delta_\alpha}^8\\
    &\qquad\leq k_0^{-1}\gamma_n^{-1} M^{-1} \sum_{i \in \mathcal{I}_{-k,\beta}} \Gamma_i \exp\p{4\abs{X_i^\top \widehat \Delta_\alpha}}\p{X_i^\top \widehat \Delta_\alpha}^8\leq tck_0^{-1}\norm{\widehat \Delta_\alpha}_2^8.
\end{align*}
Together with Lemma~\ref{lemma constant concentration of w}, it follows that for any $0<t < 0.01M\gamma_n/\log M$, with probability at least $1- 2t^{-1} - c_1 \exp\p{-c_2M\gamma_n} - 15\exp\p{-t}$, for some constant $C_2>0$,
$
    P_1 \leq t C_2\norm{\widehat \Delta_\alpha}_2^4.
$

For $P_2$, by H{\"o}lder inequality, for some constant $r>0$ we have
\begin{align*}
    P_2 &\leq \p{M^{-1} \sum_{i \in \mathcal{I}_{-k,\beta}} \Gamma_i \frac{\exp\p{-4X_i^\top \bar \alpha}}{\exp\p{-2X_i^\top \alpha^*}}\p{X_i^\top \widehat \Delta_\alpha}^{8} }^{1/2}\p{M^{-1} \sum_{i \in \mathcal{I}_{-k,\beta}} \Gamma_i \p{X_i^\top \widetilde \Delta_\beta}^{4}}^{1/2}.
\end{align*}
By Lemma E.6(d) of \cite{zhang2023semi}, with probability at least $1-2\exp(-t)$, for some constant $c_3>0$, it holds that
\begin{align*}
    &M^{-1}\sum_{i \in \mathcal{I}_{-k,\beta}}\Gamma_i\p{X_i^\top \widetilde \Delta_\beta}^{4} \leq c_3\gamma_n\p{1 + (\frac{s_\beta \log d}{M\gamma_n})^{1/2} + \frac{(s_\beta \log M\log d )^{2}}{M\gamma_n}}\p{\norm{s_\beta^{-1/2}\widetilde \Delta_\beta}_1^{4} + \norm{\widetilde \Delta_\beta}_2^{4}}\\
    &\qquad\leq c_3\gamma_n\p{1 + (\frac{s_\beta \log d}{M\gamma_n})^{1/2} + \frac{(s_\beta \log M\log d )^{2}}{M\gamma_n}}\p{\norm{s_\beta^{-1/2}\widetilde \Delta_\beta}_1^{2} + \norm{\widetilde \Delta_\beta}_2^{2}}^{2}
\end{align*}

By Lemma~\ref{lemma Exp(Xn2) ineq}, with probability at least $1- 2t^{-1} - c_1 \exp\p{-c_2M\gamma_n} - 15\exp\p{-t}$, for some $c_r>0$,
\begin{align*}
    &M^{-1} \sum_{i \in \mathcal{I}_{-k,\beta}} \Gamma_i \frac{\exp\p{-4X_i^\top \bar \alpha}}{\exp\p{-2X_i^\top \alpha^*}}\p{X_i^\top \widehat \Delta_\alpha}^{8} = M^{-1} \sum_{i \in \mathcal{I}_{-k,\beta}} \Gamma_i \exp\p{-4tX_i^\top \widehat \Delta_\alpha-2X_i^\top \alpha^*}\p{X_i^\top \widehat \Delta_\alpha}^{8}\\
    &\qquad\leq k_0^{-2}\gamma_n^{-2} M^{-1} \sum_{i \in \mathcal{I}_{-k,\beta}} \Gamma_i \exp\p{4\abs{X_i^\top \widehat \Delta_\alpha}}\p{X_i^\top \widehat \Delta_\alpha}^{8}= k_0^{-2}\gamma_n^{-1}c_rt\norm{\widehat \Delta_\alpha}_2^{8}.
\end{align*}
Then for any $0<t < 0.01M\gamma_n/\log M$, with probability at least $1- 2t^{-1} - c_1 \exp\p{-c_2M\gamma_n} - 17\exp\p{-t}$, for some constant $C_r>0$, 
\begin{align*}
    P_2 \leq tC_r a_n^{1/2}\norm{\widehat \Delta_\alpha}_2^4\p{\norm{s_\beta^{-1/2}\widetilde \Delta_\beta}_1^{2} + \norm{\widetilde \Delta_\beta}_2^{2}}.
\end{align*}

For $P_3$, by identical analysis to $P_2$, we have that for any $0<t < 0.01M\gamma_n/\log M$, with probability at least $1- 2t^{-1} - c_1 \exp\p{-c_2M\gamma_n} - 17\exp\p{-t}$, for some constant $C_r>0$, 
\begin{align*}
    P_3 \leq tC_r \tilde a_n^{1/2}\norm{\widehat \Delta_\alpha}_2^4\p{\norm{s_{\tilde\beta}^{-1/2}\Delta}_1^{2} + \norm{\Delta}_2^{2}}.
\end{align*}
It follows that, with probability at least $1- 6t^{-1} - 3c_1 \exp\p{-c_2M\gamma_n} - 49\exp\p{-t}$, for some constant $K_5>0$,
\begin{align*}
    A_3 &\leq K_5t^{1/2}\norm{\widehat \Delta_\alpha}_2^2
    \p{1 + a_n^{1/4}(\norm{s_\beta^{-1/2}\widetilde \Delta_\beta}_1^{2} + \norm{\widetilde \Delta_\beta}_2^{2})^{1/2}}\p{\delta \bar l_2(\alpha^*, \widetilde \beta, \Delta)}^{1/2}\\
    &\quad + K_5t^{1/2}\norm{\widehat \Delta_\alpha}_2^2
    \tilde a_n^{1/4}(\norm{s_{\tilde\beta}^{-1/2}\Delta}_1^{2} + \norm{\Delta}_2^{2})^{1/2}
    \p{\delta \bar l_2(\alpha^*, \widetilde \beta, \Delta)}^{1/2}.
\end{align*}

Together $A_1, A_2, A_3$, we have
\begin{align*}
    \delta \bar l_2(\alpha^*, \widetilde \beta, \Delta)^{1/2}
    &\leq K_3t^{1/2}\norm{\widehat \Delta_\alpha}\\
    &\quad + K_4 \p{a_n\p{s_\beta^{-2}\norm{\widetilde \Delta_\beta}_1^4 + \norm{\widetilde \Delta_\beta}_2^4}
    + \tilde a_n\p{s_{\tilde \beta}^{-2}\norm{\Delta}_1^4 + \norm{\Delta}_2^4}}^{1/4}\\
    &\quad + K_5t^{1/2}\norm{\widehat \Delta_\alpha}_2^2\p{1 + a_n^{1/4}(\norm{s_\beta^{-1/2}\widetilde \Delta_\beta}_1^{2} + \norm{\widetilde \Delta_\beta}_2^{2})^{1/2}}\\
    &\quad + K_5t^{1/2}\norm{\widehat \Delta_\alpha}_2^2\tilde a_n^{1/4}(\norm{s_{\tilde\beta}^{-1/2}\Delta}_1^{2} + \norm{\Delta}_2^{2})^{1/2},
\end{align*}
which implies that for some constant $K_6>0$,
\begin{align*}
    K_6\delta \bar l_2(\alpha^*, \widetilde \beta, \Delta)
    &\leq t\norm{\widehat \Delta_\alpha}^2_2 + t\norm{\widehat \Delta_\alpha}^4_2\\
    &\quad + a_n^{1/2}\norm{\widehat \Delta_\alpha}^4_2\p{s_\beta^{-1}\norm{\widetilde \Delta_\beta}_1^2 + \norm{\widetilde \Delta_\beta}_2^2}\\
    &\quad + \tilde a_n^{1/2}\norm{\widehat \Delta_\alpha}^4_2\p{s_{\tilde \beta}^{-1}\norm{\Delta}_1^2 + \norm{\Delta}_2^2 }.
\end{align*}
In addition, by Lemma~\ref{lemma mar delta l2 concentration}, with probability at least $1 - c_1' \exp\p{-c_2'M\gamma_n}$,
\[
    \delta \bar l_2\p{\alpha^*, \widetilde \beta, \Delta} \geq \kappa_1' \norm{\Delta}_2^2 -  \kappa_2' {\log d}/{M\gamma_n} \norm{\Delta}_1^2.
\]
By Lemma~\ref{lemma sparsity for norm1 and norm2}, if
\[
\lambda_\beta > \max\br{2\norm{\nabla_\beta \bar l_2(\alpha^*, \beta^*)}_\infty, 2\|\nabla_\beta \bar l_2(\widehat \alpha, \widetilde \beta)\|_\infty},
\]
we have
$s_\beta^{-1}\norm{\widetilde \Delta_\beta}_1^2 \leq 16\norm{\widetilde \Delta_\beta}_2^2$ and
$s_{\tilde \beta}^{-1}\norm{\Delta}_1^2 \leq 16\norm{\Delta}_2^2.$
Then for any $0<t < 0.01M\gamma_n/\log M$, with probability at least $1- c_1't^{-1} - c_2' \exp\p{-c_3'M\gamma_n} - c_4'\exp\p{-t}$, with some $K_7, K_8, K_9>0$,
\begin{align*}
K_7\norm{\Delta}_2^2 - K_8\frac{ s_{\tilde \beta}\log d}{M\gamma_n} \norm{\Delta}_2^2 &\leq t\norm{\widehat \Delta_\alpha}_2^2 + t\norm{\widehat \Delta_\alpha}_2^4 + K_9a_n^{1/2}\norm{\widehat \Delta_\alpha}_2^4\norm{\widetilde \Delta_\beta}_2^2+ K_9\tilde a_n^{1/2}\norm{\widehat \Delta_\alpha}_2^4\norm{\Delta}_2^2.
\end{align*}
Let 
$ b_n = K_7 - K_8{ s_{\tilde \beta}\log d}/{M\gamma_n} - K_9\tilde a_n^{1/2}\norm{\widehat \Delta_\alpha}_2^4$ and 
$c_n = t + t\norm{\widehat \Delta_\alpha}_2^2 + K_9a_n^{1/2}\norm{\widehat \Delta_\alpha}_2^2\norm{\widetilde \Delta_\beta}_2^2,$
then
$
    b_n\norm{\Delta}_2^2 \leq c_n\norm{\widehat \Delta_\alpha}_2^2.
$
If $\lambda_\alpha \asymp \lambda_\beta \asymp (\log d/(n\gamma_n))^{1/2}$ and $n\gamma_n \gg \max\br{s_\alpha, s_\beta\log n , (\log n)^2} \log d$, by Lemma~\ref{lemma consistency for alpha and betatilde} and Lemma~\ref{lemma sparsity betatilde -> beta}, as $n,d \rightarrow \infty$, it follows that $s_{\tilde \beta} = O_p\p{s_\beta}$, 
\begin{align*}
    a_n^{1/2}\norm{\widehat \Delta_\alpha}_2^{2}\norm{\widetilde \Delta_\beta}_2^{2} 
 &= O_p\p{\p{1  + \frac{s_\beta \log n\log d}{(n\gamma_n)^{1/2}}}\frac{s_\alpha s_\beta (\log d)^2}{(n\gamma_n)^2}}=O_p\p{\frac{s_\alpha s_\beta (\log d)^2}{(n\gamma_n)^2} + \frac{s_\alpha s_\beta^2 \log n(\log d)^3}{(n\gamma_n)^{5/2}}},
\end{align*}
and
\begin{align*}
    \tilde a_n^{1/2}\norm{\widehat \Delta_\alpha}_2^4 = O_p\p{\frac{s_\alpha^2 (\log d)^2}{(n\gamma_n)^2} + \frac{s_\alpha^2 s_\beta \log n(\log d)^3}{(n\gamma_n)^{5/2}}}.
\end{align*}
Since $s_\alpha \ll n\gamma_n/\log d$ and $s_\beta \ll n\gamma_n/(\log n\log d)$, 
\begin{align*}
    b_n - K_7 &= o_p(1) + o_p\p{\frac{s_\alpha s_\beta (\log d)^2}{(n\gamma_n)^{3/2}}},\quad c_n = o_p(1) + O_p\p{1 + \frac{s_\alpha s_\beta \log n(\log d)^2}{(n\gamma_n)^{3/2}}}.
\end{align*}
In addition, suppose that $s_\alpha s_\beta \ll (n\gamma_n)^{3/2}/(\log n(\log d)^2)$, then 
$b_n - K_7 = o_p(1)$ and $c_n = O_p(1),$
which implies that
$ \norm{\Delta}_2 = O_p\p{\norm{\widehat \Delta_\alpha}_2}=O_p\p{(s_\alpha \log d/(n\gamma_n))^{1/2}}.$
Finally, by Lemma~\ref{lemma sparsity for norm1 and norm2} and Lemma~\ref{lemma sparsity betatilde -> beta},
$
     \norm{\Delta}_1 = O_p\p{(s_\alpha s_\beta \log d/(n\gamma_n))^{1/2}}.
$
\end{proof}

\subsection{Proof of Proposition~\ref{proposition mar nuisance consistency body}}

\begin{proof}
    Proposition~\ref{proposition mar nuisance consistency body} follows directly from Lemma~\ref{lemma consistency for alpha and betatilde} and Lemma~\ref{lemma consistency of betahat and betatilde for product sparsity}.

\end{proof}

\section{Proof of results in Section~\ref{sec: mcar}}

\subsection{Auxiliary lemmas for Theorem~\ref{theorem for the asymptotics under MCAR}}
\begin{lemma}\label{lemma gamma ratio convergence}
Suppose that $\Gamma_i \overset{i.i.d.}{\sim} Bernoulli(\gamma_n)$. It follows that
$$
\frac{\gamma_n}{n^{-1}\sum_{i=1}^n \Gamma_i}-1=O_p\left((n\gamma_n)^{-1 / 2}\p{1-\gamma_n}^{1/2}\right) .
$$
\end{lemma}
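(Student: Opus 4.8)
The plan is to set $\bar\Gamma_n := n^{-1}\sum_{i=1}^n\Gamma_i$ and write
\[
\frac{\gamma_n}{\bar\Gamma_n}-1=\frac{\gamma_n-\bar\Gamma_n}{\bar\Gamma_n},
\]
controlling the numerator by a second-moment (Chebyshev) argument and the denominator from below by Lemma~\ref{lemma concentrate gamma}. For the numerator, $n\bar\Gamma_n\sim\mathrm{Binomial}(n,\gamma_n)$ gives $\E[\bar\Gamma_n]=\gamma_n$ and $\E[(\gamma_n-\bar\Gamma_n)^2]=\Var(\bar\Gamma_n)=\gamma_n(1-\gamma_n)/n$; hence, writing $Z_n:=(\gamma_n-\bar\Gamma_n)/\sqrt{\gamma_n(1-\gamma_n)/n}$, we have $\E[Z_n^2]=1$ and so $Z_n=O_p(1)$ by Markov's inequality.

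For the denominator, I would choose a slowly diverging sequence $t_n\to\infty$ with $t_n<0.01\,n\gamma_n$ for all large $n$ --- possible in the regime $n\gamma_n\to\infty$ that is in force throughout the paper (e.g.\ $t_n=\log(n\gamma_n)$) --- and apply Lemma~\ref{lemma concentrate gamma}(b) to the event $\mathcal A_n:=\{\bar\Gamma_n\ge 0.79\,\gamma_n\}$, which then satisfies $\P(\mathcal A_n)\ge 1-2e^{-t_n}\to 1$. On $\mathcal A_n$,
\[
\abs{\frac{\gamma_n}{\bar\Gamma_n}-1}=\frac{\abs{\gamma_n-\bar\Gamma_n}}{\bar\Gamma_n}\le\frac{1}{0.79}\,\frac{\abs{\gamma_n-\bar\Gamma_n}}{\gamma_n}=\frac{1}{0.79}\sqrt{\frac{1-\gamma_n}{n\gamma_n}}\,\abs{Z_n},
\]
where I used $\sqrt{\gamma_n(1-\gamma_n)/n}\big/\gamma_n=\sqrt{(1-\gamma_n)/(n\gamma_n)}$.

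Combining the two pieces finishes the proof: for any $K>0$,
\[
\P\!\left(\abs{\frac{\gamma_n}{\bar\Gamma_n}-1}\Big/\sqrt{\frac{1-\gamma_n}{n\gamma_n}}>K\right)\le\P(\mathcal A_n^c)+\P\!\left(\abs{Z_n}>0.79\,K\right)\le 2e^{-t_n}+\frac{1}{(0.79\,K)^2},
\]
so $\limsup_n$ of the left-hand side is at most $(0.79K)^{-2}$, which is arbitrarily small for $K$ large; this is precisely $\gamma_n/\bar\Gamma_n-1=O_p\big((n\gamma_n)^{-1/2}(1-\gamma_n)^{1/2}\big)$. The one substantive point is the lower bound on $\bar\Gamma_n$: since $1/\bar\Gamma_n$ can be infinite when $\bar\Gamma_n=0$, the argument genuinely needs $\bar\Gamma_n$ bounded away from $0$ with probability tending to one, i.e.\ it relies on $n\gamma_n\to\infty$ via Lemma~\ref{lemma concentrate gamma}; everything else is a routine Chebyshev/union-bound manipulation.
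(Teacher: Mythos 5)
Your proposal is correct and follows essentially the same route as the paper: both arguments rest on the Chebyshev bound $\E[(\bar\Gamma_n-\gamma_n)^2]=\gamma_n(1-\gamma_n)/n$, which gives $(\bar\Gamma_n-\gamma_n)/\gamma_n=O_p\p{(n\gamma_n)^{-1/2}(1-\gamma_n)^{1/2}}$, and then convert this into a statement about $\gamma_n/\bar\Gamma_n-1$. The only cosmetic difference is in how the denominator is controlled: the paper uses the algebraic identity $\frac{\bar\Gamma_n-\gamma_n}{\bar\Gamma_n}\{1+\frac{\bar\Gamma_n-\gamma_n}{\gamma_n}\}=\frac{\bar\Gamma_n-\gamma_n}{\gamma_n}$ together with $(\bar\Gamma_n-\gamma_n)/\gamma_n=o_p(1)$, whereas you invoke the concentration bound of Lemma~\ref{lemma concentrate gamma} to get $\bar\Gamma_n\ge 0.79\gamma_n$ with probability tending to one; both steps implicitly use $n\gamma_n\to\infty$, and your explicit handling of the event $\{\bar\Gamma_n=0\}$ is a valid (indeed slightly more careful) way to make the same point.
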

\begin{proof}
Let $\widehat \gamma = n^{-1}\sum_{i=1}^n \Gamma_i$. Then
$$
\E\left[\left\{\frac{\widehat \gamma-\gamma_n}{\gamma_n}\right\}^2\right]=\gamma_n^{-2} n^{-1} \E\left[\left(\Gamma_i-\gamma_n\right)^2\right]=n^{-1} \gamma_n^{-1}\left(1-\gamma_n\right)=O\left(\left(n \gamma_n\right)^{-1}\p{1-\gamma_n}\right) .
$$
by Chebyshev's inequality, we have
\[
{\widehat \gamma-\gamma_n}/{\gamma_n}=O_p\left((n \gamma_n)^{-1 / 2}\p{1-\gamma_n}^{1/2}\right) .
\]
By the fact that
\[
{\widehat \gamma-\gamma_n}/{\widehat \gamma}\left\{1+{\widehat \gamma-\gamma_n}/{\gamma_n}\right\}={\widehat \gamma-\gamma_n}/{\gamma_n},
\]
we have
$$
-\frac{\widehat \gamma-\gamma_n}{\widehat \gamma }=-\left\{1+\frac{\widehat \gamma-\gamma_n}{\gamma_n}\right\}^{-1} \frac{\widehat \gamma-\gamma_n}{\gamma_n}=O_p\left((n \gamma_n)^{-1 / 2}\p{1-\gamma_n}^{1/2} \right) .
$$
\end{proof}

\begin{lemma} \label{lemma the oracle asymptotics under MCAR}
Let Assumptions~\ref{assumption mcar} and \ref{assumption mcar lasso (a)} hold. Then $\sigma_n^2 \asymp \gamma_n^{-1}$ and
$
\sigma_n^{-1}n^{1/2}(\widetilde \theta - \theta) \xrightarrow{d} \mathcal{N}(0,1),
$
where $\widetilde \theta = n^{-1}\sum_{i=1}^n \br{X_i^\top \beta^*  + {\Gamma_i}/{\gamma_n}\p{Y_i - X_i^\top \beta^*}}$.
\end{lemma}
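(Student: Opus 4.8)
The estimator $\widetilde\theta$ is a sample average of the i.i.d.\ terms $\xi_i := X_i^\top\beta^* + \frac{\Gamma_i}{\gamma_n}(Y_i - X_i^\top\beta^*)$, so the strategy is: (i) compute $\E[\xi_i]$ and $\Var(\xi_i)$; (ii) show $\sigma_n^2 \asymp \gamma_n^{-1}$; and (iii) invoke a triangular-array central limit theorem, since $\sigma_n$ depends on $n$ through $\gamma_n$. For (i), write $w_i = Y_i - X_i^\top\beta^*$ and use Assumption \ref{assumption mcar} ($\Gamma\perp(X,Y)$) together with Lemma \ref{lemma least squared residual properties}, which gives $\E[w]=0$, $\E[Xw]=\boldsymbol 0$, hence $\E[X^\top\beta^*] = \theta$ and $\Cov(X^\top\beta^*, w) = \beta^{*\top}\E[Xw] = 0$. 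Then $\E[\xi_i] = \E[X^\top\beta^*] + \gamma_n^{-1}\E[\Gamma]\,\E[w] = \theta$, so $\widetilde\theta - \theta = n^{-1}\sum_{i=1}^n(\xi_i-\theta)$ is a centered average with $\Var(\xi_i) = \sigma_n^2$ by the definition of $\sigma_n^2$; the same orthogonality yields $\Var(Y) = \Var(X^\top\beta^*) + \Var(w)$, which is the source of the stated identity $\sigma_n^2 = \gamma_n^{-1}\Var(w) + \Var(X^\top\beta^*)$.

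For (ii), the lower bound is immediate from $\Var(w) = \E[w^2] \ge \delta_w$ (Assumption \ref{assumption mcar lasso (a)}(b)), giving $\sigma_n^2 \ge \delta_w\gamma_n^{-1}$; for the upper bound I will use sub-Gaussianity and Lemma \ref{sub-gaussian properties}(c) to bound $\Var(w) \le \E[w^2] \lesssim \sigma_w^2$ and $\Var(X^\top\beta^*) \le \E[(X^\top\beta^*)^2] \lesssim \sigma^2$, so that $\sigma_n^2 \lesssim \gamma_n^{-1} + 1 \lesssim \gamma_n^{-1}$ using $\gamma_n \le 1$. Hence $\sigma_n^2 \asymp \gamma_n^{-1}$.

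For (iii), the centered summands $(\xi_i-\theta)/\sigma_n$ form a row-wise i.i.d.\ triangular array with mean $0$ and variance $1$, so it suffices to check Lyapunov's condition with exponent $3$, namely $n^{-1/2}\sigma_n^{-3}\,\E|\xi_1 - \theta|^3 \to 0$. By the $c_r$-inequality, $\E|\xi_1-\theta|^3 \lesssim \E|X_1^\top\beta^* - \theta|^3 + \gamma_n^{-3}\E[\Gamma_1|w_1|^3]$; the first term is $O(1)$ because $X^\top\beta^* - \theta$ is sub-Gaussian (with $|\theta| = |\E X^\top\beta^*| \le \sqrt{\pi}\,\sigma$ by Lemma \ref{sub-gaussian properties}(a),(c)), and the second is $\gamma_n^{-3}\E[\Gamma_1]\,\E|w_1|^3 = \gamma_n^{-2}\E|w_1|^3 = O(\gamma_n^{-2})$, again by Lemma \ref{sub-gaussian properties}(c). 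Therefore $\E|\xi_1-\theta|^3 = O(\gamma_n^{-2})$, and with $\sigma_n^3 \asymp \gamma_n^{-3/2}$ the Lyapunov ratio is $O\big(\gamma_n^{-2}/(\sqrt{n}\,\gamma_n^{-3/2})\big) = O\big((n\gamma_n)^{-1/2}\big) \to 0$, since the effective sample size satisfies $n\gamma_n \to \infty$. The Lyapunov CLT then gives $\sigma_n^{-1}\sqrt{n}(\widetilde\theta-\theta)\xrightarrow{d}\mathcal N(0,1)$. The only real subtlety — and the step I expect to require the most care — is the $n$-dependence of $\sigma_n$: one must use the triangular-array version of the CLT and obtain a third-absolute-moment bound on $\xi_i$ that is uniform enough (via the sub-Gaussian moment estimates) for the Lyapunov ratio to vanish at rate $(n\gamma_n)^{-1/2}$.
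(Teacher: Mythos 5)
Your proof is correct and follows essentially the same route as the paper: identify $\widetilde\theta$ as a centered i.i.d.\ average via $\E[w]=0$ and $\E[Xw]=\boldsymbol 0$, bound $\sigma_n^2\asymp\gamma_n^{-1}$ from the sub-Gaussian moments and the lower bound $\E[w^2]\ge\delta_w$, and apply a triangular-array CLT by verifying a higher-moment condition that vanishes at rate $(n\gamma_n)^{-c/2}$. The only cosmetic difference is that you check Lyapunov's condition directly with exponent $3$, whereas the paper bounds the $(2+c)$-th moment and then deduces the Lindeberg condition; these are interchangeable here.
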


\begin{proof}
    Let $U_i = X_i^\top  \beta^* + {\Gamma_i}/{\gamma_n}\p{Y_i - X_i^\top \beta^*}$. 
    Then under Assumption~\ref{assumption mcar},
    \begin{align*}
        \E[U_i] &= \E\sbr{X_i^\top \beta^*} + \E\left[\frac{\Gamma_i}{\gamma_n}\right]\E[\p{Y_i - X_i^\top \beta^*}] = \E[Y_i] = \theta.
    \end{align*}
Observe that
\begin{align*}
    \sigma_n^2 &= \Var[U_i] = \E\left[ \p{X_i^\top \beta^*  + \frac{\Gamma_i}{\gamma_n}w_i - \theta}^2\right]=\E\left[ \p{X_i^\top \beta^* - \theta}^2 + \frac{1}{\gamma_n}w_i^2\right],
\end{align*}
where the last equation comes from Lemma~\ref{lemma least squared residual properties}. Thus, by Lemma~\ref{sub-gaussian properties} (c), we have
\begin{align}
    \sigma_n^2 \asymp \gamma_n^{-1}. \label{shrinking rate of sigma_n under MCAR}
\end{align}
In addition, by Lemma~\ref{sub-gaussian properties} (c), there exist some constants $C_1, C_2 > 0$ such that 
$
\E\left[\abs{X_i^\top \beta^*}^{2+c}\right] \leq \p{C_1\sigma}^{2+c}$ and $\E\left[\abs{w_i}^{2+c}\right] \leq \p{C_2\sigma_w}^{2+c}.
$
Then,
\begin{align*}
    \norm{U_i - \theta}_{\P, 2+c} &= \norm{X_i^\top \beta^*- \theta  + \frac{\Gamma_i}{\gamma_n}w_i }_{\P, 2+c}\leq \norm{X_i^\top \beta^*}_{\P, 2+c} + \theta + \E\left[ \frac{\Gamma_i}{\gamma_n^{2+c}}w_i^{2+c}\right]^{\frac{1}{2+c}}\\
    &=\E\left[\abs{X_i^\top \beta^*}^{2+c}\right]^{\frac{1}{2+c}} + \theta + \gamma_n^{\frac{1}{2+c}-1}\E\left[\abs{w_i}^{2+c}\right]^{\frac{1}{2+c}}\leq C_1\sigma + \theta + C_2\sigma_w\gamma_n^{\frac{1}{2+c}-1},
\end{align*}
which implies 
\begin{align*}
    n^{-\frac{c}{2}}\gamma_n^{1+\frac{c}{2}}\E\left[ \abs{U_i - \theta}^{2+c} \right] &\leq n^{-\frac{c}{2}}\gamma_n^{1+\frac{c}{2}}\p{C_1\sigma + \theta + C_2\sigma_w\gamma_n^{\frac{1}{2+c}-1}}^{2+c} = n^{-\frac{c}{2}}\gamma_n^{1+\frac{c}{2}}O\p{\gamma_n^{-1-c}}. 
\end{align*}
That is,
\begin{align}
    n^{-\frac{c}{2}}\gamma_n^{1+\frac{c}{2}}\E\left[ \abs{U_i - \theta}^{2+c} \right] = O\p{(n\gamma_n)^{-\frac{c}{2}}} = o(1). \label{shrinking rate of Lyapunov condition}
\end{align}
Then for any $\delta>0$, as $N, d \rightarrow \infty$,
\begin{align*}
    \gamma_n\E\left[\p{U_i - \theta}^{2}\mathbbm{1}_{\br{\abs{U_i - \theta}>\delta(n/\gamma_n)^{1/2}}}\right] \leq \delta^{-c}n^{-\frac{c}{2}}\gamma_n^{1+\frac{c}{2}}\E\left[\p{U_i - \theta}^{2+c}\right] = o(1).
\end{align*}
By Proposition 2.27 (Lindeberg-Feller) of \cite{van2000asymptotic}, we have
$
        \sigma_n^{-1}n^{1/2}(\widetilde \theta - \theta) \xrightarrow{d} \mathcal{N}(0,1).
$
\end{proof}

\begin{lemma} \label{lemma the mean estimator consistency under MCAR}
    Let Assumptions~\ref{assumption mcar} and \ref{assumption mcar lasso (a)} hold. 
   if $n\gamma_n \gg \p{s\vee \p{\log n}^2}\log d$ and choose some 
$\lambda_n \asymp (\log d/(n\gamma_n))^{1/2}$, then as $n, d \rightarrow \infty$,
    \begin{align*}
        \widehat \theta -\widetilde \theta &= O_p\p{(\frac{s\log d}{n^2\gamma_n^2})^{1/2}},\quad\widehat \theta -\theta = O_p \p{(n\gamma_n)^{-1/2}}.
    \end{align*}
\end{lemma}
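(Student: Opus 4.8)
The plan is to compare the proposed estimator $\widehat\theta$ with the oracle estimator $\widetilde\theta = n^{-1}\sum_{i=1}^n\br{X_i^\top\beta^* + (\Gamma_i/\gamma_n)w_i}$, where $w_i := Y_i - X_i^\top\beta^*$, whose asymptotics are supplied by Lemma~\ref{lemma the oracle asymptotics under MCAR}, and to show the difference is of strictly faster order. Writing $\widehat\Delta^{(-k)} := \widehat\beta^{(-k)} - \beta^*$ and substituting $\widehat\beta^{(-k)} = \beta^* + \widehat\Delta^{(-k)}$, $Y_i - X_i^\top\widehat\beta^{(-k)} = w_i - X_i^\top\widehat\Delta^{(-k)}$, and $\widehat\gamma_k^{-1} = \gamma_n^{-1} + (\widehat\gamma_k^{-1} - \gamma_n^{-1})$ into \eqref{definition k-th fold ss outcome estimator MCAR}, and collecting (i) the terms free of $\widehat\Delta^{(-k)}$ and of $\widehat\gamma_k^{-1}$, (ii) the terms linear in $\widehat\Delta^{(-k)}$, and (iii) the $w_i$-terms carrying $\widehat\gamma_k^{-1}-\gamma_n^{-1}$, one obtains $\widehat\theta - \widetilde\theta = T_A + T_B + T_C$ with
\begin{align*}
T_A &= n^{-1}\sum_{k=1}^K\sum_{i\in\mathcal{I}_k}\p{1-\Gamma_i}\p{\bar X_0 - X_i}^\top\beta^*,\\
T_B &= \sum_{k=1}^K u_k^\top\widehat\Delta^{(-k)},\qquad u_k := n^{-1}\p{1-\widehat\gamma_k^{-1}}\sum_{i\in\mathcal{I}_k}\Gamma_iX_i + n^{-1}\p{\textstyle\sum_{i\in\mathcal{I}_k}\p{1-\Gamma_i}}\bar X_0,\\
T_C &= n^{-1}\sum_{k=1}^K\p{\widehat\gamma_k^{-1}-\gamma_n^{-1}}\sum_{i\in\mathcal{I}_k}\Gamma_iw_i.
\end{align*}
The term $T_A$ vanishes identically, because $\sum_{i=1}^n(1-\Gamma_i)\bar X_0 = \sum_{i=1}^n(1-\Gamma_i)X_i$ by the definition of $\bar X_0$ in \eqref{def:summary}.

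For $T_C$: applying Lemma~\ref{lemma gamma ratio convergence} on each of the $K$ folds ($K$ fixed) gives $\widehat\gamma_k^{-1}-\gamma_n^{-1} = O_p\p{(n\gamma_n)^{-1/2}\gamma_n^{-1}}$, while $n^{-1}\sum_{i\in\mathcal{I}_k}\Gamma_iw_i$ has mean zero (using $\Gamma\perp(X,Y)$ and $\E[w]=0$ from Lemma~\ref{lemma least squared residual properties}) and variance $\asymp\gamma_n/n$, so it is $O_p\p{\sqrt{\gamma_n/n}}$; on the event of Lemma~\ref{lemma concentrate gamma} one may also use $\widehat\gamma_k\asymp\gamma_n$. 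Multiplying the two rates yields $T_C = O_p\p{(n\gamma_n)^{-1}}$.

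The heart of the argument is $T_B$. Fix $k$ and condition on $\mathcal{F}_k := (\Gamma_1,\dots,\Gamma_n,\mathcal{D}_{\mathcal{I}_{-k}})$. Since $\widehat\beta^{(-k)}$ is a function of $\mathcal{D}_{\mathcal{I}_{-k}}$ alone, $\widehat\Delta^{(-k)}$ is $\mathcal{F}_k$-measurable; and by Lemma~\ref{lemma conditional independence} together with Assumption~\ref{assumption mcar}, conditional on $\mathcal{F}_k$ the fold-$k$ primary covariates $\br{X_i:i\in\mathcal{I}_k,\ \Gamma_i=1}$ and all external covariates $\br{X_j:\Gamma_j=0}$ are mutually independent i.i.d.\ copies of $X$. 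Using $\sum_{i\in\mathcal{I}_k}\Gamma_i = n_k\widehat\gamma_k$ and $n_k=n/K$, the $\E[X]$-parts of the two sums composing $u_k$ cancel, so
\[
u_k = n^{-1}\p{1-\widehat\gamma_k^{-1}}\sum_{i\in\mathcal{I}_k}\Gamma_i\p{X_i-\E X} + n^{-1}\p{\textstyle\sum_{i\in\mathcal{I}_k}\p{1-\Gamma_i}}\p{\bar X_0-\E X},
\]
whence $\E\sbr{u_k\mid\mathcal{F}_k}=\mathbf{0}$ and $\E\sbr{u_k^\top\widehat\Delta^{(-k)}\mid\mathcal{F}_k}=0$. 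As the two displayed sums are conditionally independent, $\Var\p{u_k^\top\widehat\Delta^{(-k)}\mid\mathcal{F}_k}$ equals
\[
n^{-2}\p{1-\widehat\gamma_k^{-1}}^2\Big(\textstyle\sum_{i\in\mathcal{I}_k}\Gamma_i\Big)\widehat\Delta^{(-k),\top}\Var(X)\widehat\Delta^{(-k)} + n^{-2}\Big(\textstyle\sum_{i\in\mathcal{I}_k}(1-\Gamma_i)\Big)^2 n_E^{-1}\widehat\Delta^{(-k),\top}\Var(X)\widehat\Delta^{(-k)}.
\]
On the high-probability event where $\widehat\gamma_k\asymp\gamma_n$ and $n_E\asymp n$ (Lemma~\ref{lemma concentrate gamma}), and since $\norm{\Var(X)}_{\mathrm{op}}\lesssim\sigma^2$ (Assumption~\ref{assumption mcar lasso (a)} with Lemma~\ref{sub-gaussian properties}(c)), both summands are $O\p{\norm{\widehat\Delta^{(-k)}}_2^2/(n\gamma_n)}$: the first because the large factor $(1-\widehat\gamma_k^{-1})^2\asymp\gamma_n^{-2}$ multiplies only $\asymp n\gamma_n$ terms together with $n^{-2}$; the second because it is $O\p{\norm{\widehat\Delta^{(-k)}}_2^2/n}$. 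By Proposition~\ref{proposition lasso consistency body}, $\norm{\widehat\Delta^{(-k)}}_2^2 = O_p\p{s\log d/(n\gamma_n)}$, hence $\E\sbr{(u_k^\top\widehat\Delta^{(-k)})^2\mid\mathcal{F}_k} = O_p\p{s\log d/(n\gamma_n)^2}$; Lemma~\ref{lemma convergence of conditional random varaible} (with $r=2$) then gives $u_k^\top\widehat\Delta^{(-k)} = O_p\p{\sqrt{s\log d}/(n\gamma_n)}$, and summing over the fixed number $K$ of folds, $T_B = O_p\p{\sqrt{s\log d/(n^2\gamma_n^2)}}$.

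Combining the three pieces yields $\widehat\theta - \widetilde\theta = O_p\p{\sqrt{s\log d/(n^2\gamma_n^2)}}$, which is the first claim. Because $s\log d = o(n\gamma_n)$ under the stated sparsity condition, this is $o_p\p{(n\gamma_n)^{-1/2}}$; and Lemma~\ref{lemma the oracle asymptotics under MCAR} (which gives $\sigma_n^2\asymp\gamma_n^{-1}$ and $\sigma_n^{-1}\sqrt n(\widetilde\theta-\theta)\xrightarrow{d}\mc N(0,1)$) implies $\widetilde\theta - \theta = O_p\p{\sigma_n/\sqrt n} = O_p\p{(n\gamma_n)^{-1/2}}$. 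Adding the two gives $\widehat\theta - \theta = O_p\p{(n\gamma_n)^{-1/2}}$. I expect the conditional second-moment bound for $T_B$ to be the main obstacle: the naive Hölder bound $\norm{u_k}_\infty\norm{\widehat\Delta^{(-k)}}_1$ loses a factor of order $\sqrt{s\log d}$, so one must instead exploit the exact conditional centering of $u_k$ — which requires careful bookkeeping of precisely which covariates are frozen by conditioning on $\mathcal{F}_k$ and which remain i.i.d. — and one must verify that the seemingly dangerous factor $(1-\widehat\gamma_k^{-1})\asymp\gamma_n^{-1}$ is tamed because it multiplies a sum with only $\asymp n\gamma_n$ conditionally independent centered summands.
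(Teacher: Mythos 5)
Your proof is correct and follows essentially the same route as the paper's: compare with the oracle $\widetilde\theta$, exploit that $\widehat\beta^{(-k)}$ is conditionally independent of both the fold-$k$ primary covariates and the external covariates given the labels, and bound each remainder by a conditional mean-zero/second-moment argument (via Lemma \ref{lemma convergence of conditional random varaible} and Proposition \ref{proposition lasso consistency body}). The only difference is bookkeeping: the paper splits the $\widehat\Delta^{(-k)}$-linear part into three separately centered pieces ($I_1$, $I_2$, $I_4$) while you keep them as a single vector $u_k$ and condition additionally on $\Gamma_{1:n}$ so the $\E[X]$-terms cancel exactly — the same cancellation, packaged more compactly, and yielding the identical rates.
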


\begin{proof}
Note that $\widetilde \theta$ can be written as
\begin{align*}
        \widetilde \theta =n^{-1}\sum_{i=1}^n \br{\p{\Gamma_i X_i + \p{1-\Gamma_i }\Bar{X}_0 }^\top  \beta^*  +  \frac{\Gamma_i}{\gamma_n}\p{Y_i - X_i^\top \beta^*}}.
\end{align*}
Let $\widehat \Delta^{(-k)} = \widehat \beta^{(-k)} - \beta^*$ and $\widehat \gamma^{(k)} = n_k^{-1}\sum_{i\in \mathcal{I}_k} \Gamma_i$. Observe that $\widehat \theta$ can be decomposed as
    \begin{align*}
        \widehat \theta = \widetilde \theta + I_1 + I_2 + I_3 + I_4,
    \end{align*}
    where
    \begin{align*}
        I_1 &= K^{-1}\sum_{k=1}^K n_k^{-1}\sum_{i\in \mathcal{I}_k} \p{\Gamma_i - \frac{\Gamma_i}{\gamma_n}} \p{X_i^\top \widehat \Delta^{(-k)} - \E\left[X_i\right]^\top \widehat \Delta^{(-k)}},\\
        I_2 &=  K^{-1}\sum_{k=1}^K \p{\frac{1}{\gamma_n}-\frac{1}{\widehat \gamma^{(k)}}} n_k^{-1}\sum_{i\in \mathcal{I}_k}  \Gamma_i \p{X_i^\top \widehat \Delta^{(-k)} - \E\left[X_i\right]^\top \widehat \Delta^{(-k)}},\\
        I_3 &= K^{-1}\sum_{k=1}^K\p{\frac{1}{\widehat \gamma^{(k)} } - \frac{1}{\gamma_n}} n_k^{-1}\sum_{i\in \mathcal{I}_k} \Gamma_iw_i,\quad
        I_4 = K^{-1}\sum_{k=1}^K n_k^{-1}\sum_{i\in \mathcal{I}_k} (1-\Gamma_i)\p{\Bar{X}_0^\top \widehat \Delta^{(-k)} - \E\left[X_i\right]^\top \widehat \Delta^{(-k)}}.
    \end{align*}

For $I_1$, let 
$$I_{1k} = n_k^{-1}\sum_{i\in \mathcal{I}_k}\p{\Gamma_i - \frac{\Gamma_i}{\gamma_n}} \p{X_i^\top \widehat \Delta^{(-k)} - \E\left[X_i\right]^\top \widehat \Delta^{(-k)}}.$$
Define $\mathcal{D}_{\mathcal{I}_{-k}} = \br{\p{\Gamma_i, \Gamma_iX_i, \Gamma_iY_i} \in \mathcal{D}_n \mid i \in \mathcal{I}_{-k}}$. Then
$$
\E[I_{1k}\mid \mathcal{D}_{\mathcal{I}_{-k}}] = n_k^{-1}\sum_{i\in \mathcal{I}_k}\E\left[\Gamma_i - \frac{\Gamma_i}{\gamma_n} \right]\E\left[X_i^\top \widehat \Delta^{(-k)} - \E\left[X_i\right]^\top \widehat \Delta^{(-k)}\mid \mathcal{D}_{\mathcal{I}_{-k}} \right] = 0,
$$
and
\begin{align*}
    &\E\left[I_{1k}^2\mid \mathcal{D}_{\mathcal{I}_{-k}}\right] = \E\left[\br{n_k^{-1}\sum_{i\in \mathcal{I}_k} \p{\Gamma_i - \frac{\Gamma_i}{\gamma_n}} \p{X_i^\top \widehat \Delta^{(-k)} - \E\left[X_i\right]^\top \widehat \Delta^{(-k)}}}^2 \mid \mathcal{D}_{\mathcal{I}_{-k}} \right]\\
    &\qquad= \E\left[n_k^{-2}\sum_{i\in \mathcal{I}_k} \p{\Gamma_i - \frac{\Gamma_i}{\gamma_n}}^2\p{X_i^\top \widehat \Delta^{(-k)} - \E\left[X_i\right]^\top \widehat \Delta^{(-k)}}^2 \mid \mathcal{D}_{\mathcal{I}_{-k}} \right]\\
    &\qquad\leq n_k^{-2}\sum_{i\in \mathcal{I}_k} \E\left[\p{\Gamma_i - \frac{\Gamma_i}{\gamma_n}}^2 \right]\E\left[ \p{X_i^\top \widehat \Delta^{(-k)}}^2 \mid \mathcal{D}_{\mathcal{I}_{-k}} \right]\\
    &\qquad=n_k^{-1}\E\left[\p{\Gamma_i - \frac{\Gamma_i}{\gamma_n}}^2 \right]\E\left[ \p{X_i^\top \widehat \Delta^{(-k)}}^2 \mid \mathcal{D}_{\mathcal{I}_{-k}} \right]=(1-\gamma_n)^2(n_k\gamma_n)^{-1}\widehat \Delta^{(-k),\top}\E\sbr{X_iX_i^\top}\widehat \Delta^{(-k)}
\end{align*}
By Lemma~\ref{sub-gaussian properties}(c) and Proposition~\ref{proposition lasso consistency body},
\begin{align}
    \Delta^{(-k),\top}\E\sbr{X_iX_i^\top}\widehat \Delta^{(-k)} \leq 2\sigma^2\norm{\widehat \Delta^{(-k)}}_2^2 = O_p\p{\frac{s\log d}{n\gamma_n}}. \label{lasso ineq bounding quadratic form expectation}
\end{align}
Then by Lemma~\ref{lemma convergence of conditional random variable},
\[
I_{1k} = O_p\p{(n\gamma_n)^{-1/2}(s\log d/(n\gamma_n))^{1/2}}.
\]
Thus,
\[
    I_1 = K^{-1}\sum_{k=1}^K I_{1k} = O_p\p{({s\log d}/{n^2\gamma_n^2})^{1/2}}.
\]

For $I_2$, let
\[
I_{2k} = n_k^{-1}\sum_{i\in \mathcal{I}_k}  \Gamma_i \p{X_i^\top \widehat \Delta^{(-k)} - \E\left[X_i\right]^\top \widehat \Delta^{(-k)}}.
\]
Then we have
\begin{align*}
    \E[I_{2k}\mid \mathcal{D}_{\mathcal{I}_{-k}}] =n_k^{-1}\sum_{i\in \mathcal{I}_k}  \E\left[\Gamma_i\right] \E\left[\p{X_i^\top \widehat \Delta^{(-k)} - \E\left[X_i\right]^\top \widehat \Delta^{(-k)}}\mid \mathcal{D}_{\mathcal{I}_{-k}}\right] = 0,
\end{align*}
and 
\begin{align*}
&\E\left[I_{2k}^2\mid \mathcal{D}_{\mathcal{I}_{-k}}\right] = n_k^{-2} \sum_{i\in \mathcal{I}_k}\E\left[\Gamma_i\right]\E\left[\p{X_i^\top \widehat \Delta^{(-k)} - \E\left[X_i\right]^\top \widehat \Delta^{(-k)}}^2\mid \mathcal{D}_{\mathcal{I}_{-k}}\right]\\
&\qquad= n_k^{-1}\gamma_n\br{\E\left[\p{X_i^\top \widehat \Delta^{(-k)}}^2 \mid \mathcal{D}_{\mathcal{I}_{-k}}\right] - \E\left[X_i^\top \widehat \Delta^{(-k)}\mid \mathcal{D}_{\mathcal{I}_{-k}}\right]^2}\\
&\qquad\leq n_k^{-1}\gamma_n\E\left[\p{X_i^\top \widehat \Delta^{(-k)}}^2 \mid \mathcal{D}_{\mathcal{I}_{-k}}\right]= n_k^{-1}\gamma_n\Delta^{(-k),\top}\E\sbr{X_iX_i^\top}\widehat \Delta^{(-k)}.
\end{align*}
Then by \eqref{lasso ineq  bounding quadratic form expectation}, and Lemma~\ref{lemma convergence of conditional random variable}, we have
\[
I_{2k} = O_p \p{n^{-1/2}\gamma_n^{1/2}(s\log d/(n\gamma_n))^{1/2}}.
\]
Together with Lemma~\ref{lemma gamma ratio convergence}, we have ${1}/{\gamma_n}-{1}/{\widehat \gamma^{(k)}} = O_p\p{n^{-1/2}\gamma_n^{-3/2}}$, then
\[
I_2 =  K^{-1}\sum_{k=1}^K \p{{1}/{\gamma_n}-{1}/{\widehat \gamma^{(k)}}} I_{2k} = O_p\p{({s\log d}/{n^3\gamma_n^3})^{1/2}}.
\]

For $I_3$, let $I_{3k} = n_k^{-1}\sum_{i\in \mathcal{I}_k} \Gamma_iw_i$. By Lemma~\ref{lemma least squared residual properties},
\begin{align*}
    \E[I_{3k}\mid \mathcal{D}_{\mathcal{I}_{-k}}] = n_k^{-1}\sum_{i\in \mathcal{I}_k}\E[\Gamma_iw_i\mid \mathcal{D}_{\mathcal{I}_{-k}}] = n_k^{-1}\sum_{i\in \mathcal{I}_k}\E[\Gamma_i] \E[w_i] = 0,
\end{align*}
and 
$
    \E[I_{3k}^2\mid \mathcal{D}_{\mathcal{I}_{-k}}] = n_k^{-2} \sum_{i\in \mathcal{I}_k} \E[\Gamma_iw_i^2] = n_k^{-1} \gamma_n \E[w_i^2].
$
By Lemma~\ref{sub-gaussian properties}(c), we have \begin{align}
    \E[w_i^2] \leq 2\sigma_w^2. \label{lasso ineq w_i^2 expectation}
\end{align} 
Thus, by Chebyshev's inequality,
$
I_{3k} = O_p \p{({\gamma_n}/{n})^{1/2}}.
$
Similarly, with Lemma~\ref{lemma gamma ratio convergence}, it follows that 
$$
I_3 = K^{-1}\sum_{k=1}^K\p{\frac{1}{\widehat \gamma^{(k)} } - \frac{1}{\gamma_n}} I_{3k} = O_p \p{(n\gamma_n)^{-1}}.
$$

For $I_4$, let $\widehat \gamma = n^{-1}\sum_{i =1}^n \Gamma_i$.
Then we have
\begin{align*}
\E\sbr{\p{\p{1-\widehat \gamma} - \p{1-\gamma_n } }^2 } &= \E\sbr{\p{\widehat \gamma - \gamma_n}^2} = \E\sbr{n^{-2}\sum_{i=1}^n \p{\Gamma_i - \gamma_n}^2} = n^{-1}\gamma_n\p{1-\gamma_n},
\end{align*}
which implies that 
\begin{align}
    \p{1-\widehat \gamma} - \p{1-\gamma_n } = O_p\p{n^{-1/2}\gamma_n^{1/2}\p{1-\gamma_n}^{1/2}}. \label{1-widehat gamma rate}
\end{align}
Similarly,
$
\p{1-\widehat \gamma^{(k)}} - \p{1-\gamma_n } = O_p\p{n^{-1/2}\gamma_n^{1/2}\p{1-\gamma_n}^{1/2}}.
$
Since Lemma~\ref{lemma gamma ratio convergence} can also be applied to $1-\Gamma_i$, we have
\begin{align*}
    \frac{1-\gamma_n}{1-\widehat \gamma^{(k)}} - 1 = O_p\p{\p{n\p{1-\gamma_n}}^{-1/2}\gamma_n^{1/2} }, \quad
    \frac{1-\gamma_n}{1-\widehat \gamma} - 1 = O_p\p{\p{n\p{1-\gamma_n}}^{-1/2}\gamma_n^{1/2} }.
\end{align*}
Then
\begin{align}
    \frac{1-\widehat \gamma^{(k)}}{1-\widehat \gamma} - 1 =& \p{1-\gamma_n}^{-1}\p{\frac{1-\gamma_n}{1-\widehat \gamma} - 1}\p{\p{1-\widehat \gamma^{(k)}} - \p{1-\gamma_n } } +\p{\frac{1-\gamma_n}{1-\widehat \gamma} - 1 } + \p{\frac{1-\widehat \gamma^{(k)}}{1-\gamma_n} - 1} \notag \\
    =& O_p\p{\gamma_n\p{n\p{1-\gamma_n}}^{-1}} + O_p\p{\gamma_n^{1/2}\p{n\p{1-\gamma_n}}^{-1/2}} \label{variance gamma(k) -1 ratio rate}.
\end{align}
Let
\begin{align*}
    I_{4k} &= \p{1-\widehat \gamma }\p{\Bar{X}_0^\top \widehat \Delta^{(-k)} - \E\left[X_i\right]^\top \widehat \Delta^{(-k)}} = n^{-1}\sum_{i=1}^n \p{1-\Gamma_i}\p{ X_i^\top \widehat \Delta^{(-k)} - \E\left[X_i\right]^\top \widehat \Delta^{(-k)}}.
\end{align*}
By Lemma~\ref{lemma conditional independence}, $(1-\Gamma_i)X_i \perp \mathcal{D}_{\mathcal{I}_{-k}}  \mid \Gamma_{1:n}$ for $i=1,\dots,n$. Then under Assumption~\ref{assumption mcar}, for $i=1,\dots,n$,
\begin{align*}
    &\E\sbr{\p{1-\Gamma_i}\p{ X_i^\top \widehat \Delta^{(-k)} - \E\left[X_i\right]^\top \widehat \Delta^{(-k)}} \mid \mathcal{D}_{\mathcal{I}_{-k}}, \Gamma_{1:n}}= (1-\Gamma_i)\E\sbr{X_i - \E\left[X_i\right] \mid \Gamma_{1:n}}^\top \widehat \Delta^{(-k)}= 0.
\end{align*}
and
\begin{align*}
    &\E\sbr{I_{4k}^2 \mid  \mathcal{D}_{\mathcal{I}_{-k}}, \Gamma_{1:n} } =n^{-2} \sum_{i=1}^n \E\sbr{ \p{\p{1-\Gamma_i}\p{X_i - \E[X_i]}^\top \widehat \Delta^{(-k)}}^2 \mid  \mathcal{D}_{\mathcal{I}_{-k}}, \Gamma_{1:n} }\\
    &\qquad\leq n^{-2}\sum_{i=1}^n \E\sbr{ \p{1-\Gamma_i} \p{X_i^\top \widehat \Delta^{(-k)}}^2 \mid  \mathcal{D}_{\mathcal{I}_{-k}}, \Gamma_{1:n} }=n^{-2}\sum_{i=1}^n\p{1-\Gamma_i}\widehat \Delta^{(-k),\top}\E\sbr{X_iX_i^\top\mid   \Gamma_{1:n}}\widehat \Delta^{(-k)}\\
    &\qquad\leq \p{n^{-1}\sum_{i=1}^n\p{1-\Gamma_i}}^2 \widehat \Delta^{(-k),\top}\E\sbr{X_iX_i^\top}\widehat \Delta^{(-k)}=\p{1-\widehat \gamma}^2\widehat \Delta^{(-k),\top}\E\sbr{X_iX_i^\top}\widehat \Delta^{(-k)},
\end{align*}
where the last inequality comes from the fact either $\sum_{i=1}^n\p{1-\Gamma_i} = 0$ or $\sum_{i=1}^n\p{1-\Gamma_i} \geq 1$ holds. By \eqref{lasso ineq  bounding quadratic form expectation}, \eqref{1-widehat gamma rate}, and Lemma~\ref{lemma convergence of conditional random variable}, we have
$
    I_{4k} = O_p\p{\p{1-\gamma_n}(s\log d/(n\gamma_n))^{1/2}}. 
$

We also have another upper bound:
\begin{align*}
    \E\sbr{I_{4k}^2 \mid  \mathcal{D}_{\mathcal{I}_{-k}}, \Gamma_{1:n} } &\leq \p{n^{-2}\sum_{i=1}^n\p{1-\Gamma_i}}\widehat \Delta^{(-k),\top}\E\sbr{X_iX_i^\top}\widehat \Delta^{(-k)}\leq n^{-1}\widehat \Delta^{(-k),\top}\E\sbr{X_iX_i^\top}\widehat \Delta^{(-k)}.
\end{align*}
By \eqref{lasso ineq  bounding quadratic form expectation} and Lemma~\ref{lemma convergence of conditional random variable},
$
    I_{4k} = O_p\p{n^{-1/2}(s\log d/(n\gamma_n))^{1/2}}.
$

Therefore, by \eqref{variance gamma(k) -1 ratio rate},
\begin{align*}
    I_4 &=K^{-1}\sum_{k=1}^K\p{\frac{1-\widehat \gamma^{(k)}}{1-\widehat \gamma} -1}I_{4k} + K^{-1}\sum_{k=1}^KI_{4k}\\
    &=O_p\p{n^{-1/2}\p{\gamma_n\p{1-\gamma_n}}^{1/2} (s\log d/(n\gamma_n))^{1/2} } + O_p\p{n^{-1/2}(s\log d/(n\gamma_n))^{1/2}}= O_p\p{(\frac{s\log d}{n^2\gamma_n})^{1/2} }.
\end{align*}

In conclusion,
\begin{align*}
    \widehat \theta^{(k)} - \widetilde \theta^{(k)} = O_p\p{(\frac{s\log d}{n^2\gamma_n^2})^{1/2}} + O_p\p{(\frac{s\log d}{n^3\gamma_n^3})^{1/2}}  + O_p\p{(\frac{s\log d}{n^2\gamma_n})^{1/2}} + O_p \p{(n\gamma_n)^{-1}},
\end{align*}
which implies
$
\widehat \theta = \widetilde \theta + O_p\p{ ({s\log d}/{n^2\gamma_n^2})^{1/2}}.
$
Together with Lemma~\ref{lemma the oracle asymptotics under MCAR}, we have
$\widehat \theta -\theta = O_p \p{(n\gamma_n)^{-1/2}}.$
\end{proof}

\begin{lemma} \label{lemma variance consistency under MCAR}
Let Assumptions~\ref{assumption mcar} and \ref{assumption mcar lasso (a)} hold. Choose
\[
\lambda_n \asymp (\log d/(n\gamma_n))^{1/2}.
\]
If $n\gamma_n \gg \p{s\vee \p{\log n}^2}\log d$, then as $n, d \rightarrow \infty$,
    \begin{align*}
        \widehat \sigma^2 = \sigma_n^2\br{1 + O_p\p{(s\log d/(n\gamma_n))^{1/2}}}.
    \end{align*}
\end{lemma}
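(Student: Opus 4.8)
The plan is to approximate $\widehat\sigma^2$ by an oracle variance estimator and to show the approximation error is negligible relative to $\sigma_n^2\asymp\gamma_n^{-1}$ (Lemma~\ref{lemma the oracle asymptotics under MCAR}). Define $\widetilde\sigma^2$ to be the quantity obtained from \eqref{variance estimator under MCAR} by replacing each $\widehat\beta^{(-k)}$ with $\beta^*$, each $\widehat\gamma_k$ with $\gamma_n$, and $\widehat\theta$ with $\widetilde\theta=n^{-1}\sum_i\{X_i^\top\beta^*+(\Gamma_i/\gamma_n)w_i\}$. Using $\bar\Xi_0=\sum_i(1-\Gamma_i)X_iX_i^\top/\sum_i(1-\Gamma_i)$ and $\Gamma_i^2=\Gamma_i$, one checks the algebraic identity $\widetilde\sigma^2=n^{-1}\sum_{i=1}^n U_i^2-\widetilde\theta^2$ with $U_i=X_i^\top\beta^*+(\Gamma_i/\gamma_n)w_i$; that is, $\widetilde\sigma^2$ is precisely the sample variance of the i.i.d.\ scores $U_i$. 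Since $\E[U_i]=\theta$ (by $\E[w]=0$, Lemma~\ref{lemma least squared residual properties}) and $\Var(U_i)=\sigma_n^2$ by definition, and $\E[U_i^4]=O(\gamma_n^{-3})$ by Assumptions~\ref{assumption mcar} and~\ref{assumption mcar lasso (a)} together with Lemma~\ref{sub-gaussian properties}(c), Chebyshev's inequality yields $\widetilde\sigma^2=\sigma_n^2+O_p(\gamma_n^{-1}(n\gamma_n)^{-1/2})$. As $(n\gamma_n)^{-1/2}\le\sqrt{s\log d/(n\gamma_n)}$, it remains to prove $\widehat\sigma^2-\widetilde\sigma^2=O_p(\gamma_n^{-1}\sqrt{s\log d/(n\gamma_n)})$.

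Write $\widehat\sigma^2-\widetilde\sigma^2=(\widehat S-\widetilde S)-(\widehat\theta^2-\widetilde\theta^2)$, where $\widehat S$ and $\widetilde S$ collect the two raw second-moment sums (everything but $-\widehat\theta^2$, $-\widetilde\theta^2$). The term $\widehat\theta^2-\widetilde\theta^2=(\widehat\theta-\widetilde\theta)(\widehat\theta+\widetilde\theta)$ is $O_p(\sqrt{s\log d/(n^2\gamma_n^2)})\cdot O_p(1)$ by Lemma~\ref{lemma the mean estimator consistency under MCAR}, hence of the required order. I would then split $\widehat S-\widetilde S$ into an ``external'' piece and a ``primary'' piece. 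The external piece equals $K^{-1}\sum_k(1-\widehat\gamma_k)\,\widehat\beta^{(-k),\top}\bar\Xi_0\widehat\beta^{(-k)}$, with oracle analog $(1-\widehat\gamma)\,\beta^{*\top}\bar\Xi_0\beta^*$; expanding $\widehat\beta^{(-k)}=\beta^*+\widehat\Delta^{(-k)}$ produces the cross term $2\beta^{*\top}\bar\Xi_0\widehat\Delta^{(-k)}$ and the quadratic term $\widehat\Delta^{(-k),\top}\bar\Xi_0\widehat\Delta^{(-k)}$, plus a factor $\widehat\gamma-\widehat\gamma_k=O_p(\sqrt{\gamma_n/n})$ multiplying $\beta^{*\top}\bar\Xi_0\beta^*=O_p(1)$. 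The key device, exactly as in the treatment of $I_4$ in the proof of Lemma~\ref{lemma the mean estimator consistency under MCAR}, is that by Lemma~\ref{lemma conditional independence} the estimator $\widehat\beta^{(-k)}$ is conditionally independent of $\bar\Xi_0$ given $\Gamma_{1:n}$, so $\E[\widehat\Delta^{(-k),\top}\bar\Xi_0\widehat\Delta^{(-k)}\mid\widehat\beta^{(-k)},\Gamma_{1:n}]=\widehat\Delta^{(-k),\top}\E[XX^\top]\widehat\Delta^{(-k)}\le C\sigma^2\|\widehat\Delta^{(-k)}\|_2^2$ and $\E[|\beta^{*\top}\bar\Xi_0\widehat\Delta^{(-k)}|\mid\widehat\beta^{(-k)},\Gamma_{1:n}]\lesssim\|\beta^*\|_2\|\widehat\Delta^{(-k)}\|_2$; combining with $\|\widehat\Delta^{(-k)}\|_2=O_p(\sqrt{s\log d/(n\gamma_n)})$ from Proposition~\ref{proposition lasso consistency body}, $\|\beta^*\|_2=O(1)$ (Assumption~\ref{assumption mcar lasso (a)}), and Lemma~\ref{lemma convergence of conditional random varaible}, the external piece differs from its oracle analog by $O_p(\sqrt{s\log d/(n\gamma_n)})$, which after dividing by $\sigma_n^2\asymp\gamma_n^{-1}$ is well within the target rate.

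For the primary piece, $\Gamma_i^2=\Gamma_i$ lets us write it as $n^{-1}\sum_k\sum_{i\in\mathcal I_k}\Gamma_i V_i(\widehat\beta^{(-k)},\widehat\gamma_k)^2$ with $V_i(\beta,\gamma)=X_i^\top\beta+\gamma^{-1}(Y_i-X_i^\top\beta)$, whose oracle analog $n^{-1}\sum_i\Gamma_i V_i(\beta^*,\gamma_n)^2$ already sits inside $\widetilde S$. Using $V_i(\widehat\beta^{(-k)},\widehat\gamma_k)-V_i(\beta^*,\gamma_n)=(1-\widehat\gamma_k^{-1})X_i^\top\widehat\Delta^{(-k)}+(\widehat\gamma_k^{-1}-\gamma_n^{-1})w_i$, I would expand the difference of squares and bound each resulting average over $\mathcal I_k$ by the same conditioning argument used for $I_1$--$I_3$ in the proof of Lemma~\ref{lemma the mean estimator consistency under MCAR}: the identities $\E[w]=0$ and $\E[Xw]=\mathbf{0}$ (Lemma~\ref{lemma least squared residual properties}) annihilate the leading conditional means of the cross terms, Proposition~\ref{proposition lasso consistency body} controls $\E[(X_i^\top\widehat\Delta^{(-k)})^2\mid\mathcal D_{\mathcal I_{-k}}]\lesssim\|\widehat\Delta^{(-k)}\|_2^2=O_p(s\log d/(n\gamma_n))$, Lemma~\ref{lemma gamma ratio convergence} gives $\widehat\gamma_k^{-1}-\gamma_n^{-1}=O_p(n^{-1/2}\gamma_n^{-3/2})$, and $|1-\widehat\gamma_k^{-1}|\asymp\gamma_n^{-1}$. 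Tracking the $\gamma_n$-powers term by term, each such contribution is $O_p(\gamma_n^{-1}\sqrt{s\log d/(n\gamma_n)})$ or smaller; assembling the external and primary bounds and dividing by $\sigma_n^2\asymp\gamma_n^{-1}$ gives $\widehat\sigma^2=\sigma_n^2\{1+O_p(\sqrt{s\log d/(n\gamma_n)})\}$. The main obstacle is the external quadratic form $\widehat\beta^{(-k),\top}\bar\Xi_0\widehat\beta^{(-k)}$: $\bar\Xi_0$ is a $d\times d$ sample Gram matrix that need not converge in operator norm when $d\gg n$, and it is precisely the conditional-independence structure of Lemma~\ref{lemma conditional independence}---so that only the bounded operator $\E[XX^\top]$, rather than $\bar\Xi_0$ itself, is applied to the sparse error $\widehat\Delta^{(-k)}$---that makes the argument go through.
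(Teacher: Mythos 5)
Your proposal is correct and follows essentially the same route as the paper's proof: the same oracle intermediate $\widetilde\sigma^2$, the same three-way decomposition (primary squared sums, external Gram-matrix quadratic form, and the $\widehat\theta^2$ correction), the same conditional-independence device from Lemma \ref{lemma conditional independence} to replace $\bar\Xi_0$ by $\E[XX^\top]$ on the error direction, and the same Chebyshev fourth-moment step for $\widetilde\sigma^2-\sigma_n^2$. The only cosmetic difference is that you center the oracle at $\widetilde\theta^2$ rather than $\theta^2$, which changes nothing since $\widetilde\theta^2-\theta^2=O_p((n\gamma_n)^{-1/2})$ is absorbed.
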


\begin{proof}
    Let $\widehat \gamma^{(k)} = n_k^{-1}\sum_{i\in \mathcal{I}_k} \Gamma_i$,
$g^{(-k)}_i = \Gamma_iX_i^\top \widehat \beta^{(-k)}  + {\Gamma_i}/{\widehat \gamma^{(k)}}\p{Y_i - X_i^\top \widehat \beta^{(-k)}}$, and
        $g_i = \Gamma_iX_i^\top \beta^*  + {\Gamma_i}/{\gamma_n}\p{Y_i - X_i^\top \beta^*}.$
Then
    $$\widehat \sigma^2 =  n^{-1}\sum_{k=1}^K\sum_{i\in \mathcal{I}_k} \p{g^{(-k)}_i}^2 +n^{-1}\sum_{k=1}^K\sum_{i\in \mathcal{I}_k}\p{1-\Gamma_i}\widehat \beta^{(-k),\top}\Bar \Xi_0 \widehat \beta^{(-k)} - \p{\widehat \theta}^2.$$
Define 
$
    \widetilde \sigma^2 = n^{-1}\sum_{i=1}^n \p{g_i}^2 + n^{-1}\sum_{i=1}^n(1-\Gamma_i)\p{X_i^\top \beta^*}^2 - \theta^2.
$
Then we have the following decomposition:
\begin{align*}
    \widehat \sigma^2 - \widetilde \sigma^2 = B_1 + B_2 - B_3,
\end{align*}
where
\begin{align*}
    B_1 &= n^{-1}\sum_{k=1}^K\sum_{i\in \mathcal{I}_k} \p{g^{(-k)}_i}^2 - n^{-1}\sum_{i=1}^n \p{g_i}^2, \\
    B_2 &=n^{-1}\sum_{k=1}^K\sum_{i\in \mathcal{I}_k}\p{1-\Gamma_i}\widehat \beta^{(-k),\top}\Bar \Xi_0 \widehat \beta^{(-k)} -n^{-1}\sum_{i=1}^n(1-\Gamma_i)\p{X_i^\top \beta^*}^2, \quad
    B_3 = \p{\widehat \theta}^2 - \theta^2.
\end{align*}
Let $\widehat \gamma = n^{-1}\sum_{i=1}^n \Gamma_i$, $\widehat \gamma^{(-k)} = \abs{\mathcal{I}_{-k}}^{-1}\sum_{i \in \mathcal{I}_{-k}}\Gamma_i$, and $\widehat \Delta^{(-k)} = \widehat \beta^{(-k)} - \beta^*$.
Then
\begin{align*}
    \p{g^{(-k)}_i - g_i}^2 &= \br{\p{\Gamma_i-\frac{\Gamma_i}{\gamma_n}}X_i^\top  \widehat \Delta^{(-k)} + \p{\frac{\Gamma_i}{\gamma_n} - \frac{\Gamma_i}{\widehat \gamma^{(k)}}}X_i^\top  \widehat \Delta^{(-k)} + \p{\frac{\Gamma_i}{\widehat \gamma^{(k)}} - \frac{\Gamma_i}{\gamma_n}}w_i}^2\\
    &\leq 2\br{\p{\Gamma_i-\frac{\Gamma_i}{\gamma_n}}X_i^\top  \widehat \Delta^{(-k)} + \p{\frac{\Gamma_i}{\gamma_n} - \frac{\Gamma_i}{\widehat \gamma^{(k)}}}X_i^\top  \widehat \Delta^{(-k)}}^2 + 2\br{\p{\frac{\Gamma_i}{\widehat \gamma^{(k)}} - \frac{\Gamma_i}{\gamma_n}}w_i}^2\\
    &\leq 4\br{\p{\Gamma_i-\frac{\Gamma_i}{\gamma_n}}^2 + \p{\frac{\Gamma_i}{\widehat \gamma^{(k)}} - \frac{\Gamma_i}{\gamma_n}}^2}\p{X_i^\top  \widehat \Delta^{(-k)}}^2 + 2\p{\frac{\Gamma_i}{\widehat \gamma^{(k)}} - \frac{\Gamma_i}{\gamma_n}}^2 w_i^2\\
    &\leq \frac{4\Gamma_i}{\gamma_n^2}\br{1 + \p{\frac{\gamma_n}{\widehat \gamma^{(k)}} - 1}^2}\p{X_i^\top  \widehat \Delta^{(-k)}}^2 + \frac{2\Gamma_i}{\gamma_n^2}\p{\frac{\gamma_n}{\widehat \gamma^{(k)}} - 1}^2 w_i^2.
\end{align*}
For the summation, we have
\begin{align*}
    n_k^{-1}\sum_{i \in \mathcal{I}_{k}}\p{g^{(-k)}_i - g_i}^2 \leq& 4\br{1 + \p{\frac{\gamma_n}{\widehat \gamma^{(k)}} - 1}^2} {n_k^{-1}\sum_{i \in \mathcal{I}_{k}} \frac{\Gamma_i}{\gamma_n^2}\p{X_i^\top  \widehat \Delta^{(-k)}}^2}+ 2\p{\frac{\gamma_n}{\widehat \gamma^{(k)}} - 1}^2 { n_k^{-1}\sum_{i \in \mathcal{I}_{k}} \frac{\Gamma_i}{\gamma_n^2}w_i^2}.
\end{align*}
With \eqref{lasso ineq bounding quadratic form expectation} and \eqref{lasso ineq w_i^2 expectation}, it is clear that
\begin{align*}
\E\sbr{n_k^{-1}\sum_{i \in \mathcal{I}_{k}} \frac{\Gamma_i}{\gamma_n^2}\p{X_i^\top  \widehat \Delta^{(-k)}}^2 \mid \mathcal{D}_{\mathcal{I}_{-k}}} &= \p{n_k\gamma_n}^{-1}\sum_{i \in \mathcal{I}_{k}}\E\sbr{\p{X_i^\top  \widehat \Delta^{(-k)}}^2\mid \mathcal{D}_{\mathcal{I}_{-k}}} = O_p\p{\frac{s\log d}{n\gamma_n^{2}}},\\
\E\sbr{n_k^{-1}\sum_{i \in \mathcal{I}_{k}} \frac{\Gamma_i}{\gamma_n^2}w_i^2 \mid \mathcal{D}_{\mathcal{I}_{-k}} } &= \p{n_k\gamma_n}^{-1}\sum_{i \in \mathcal{I}_{k}}\E\sbr{w_i^2} = O_p\p{\gamma_n^{-1}}.
\end{align*}
By Lemma~\ref{lemma gamma ratio convergence}, we have $\p{\gamma_n/\widehat \gamma^{(k)}} - 1 = O_p\p{\p{n\gamma_n}^{-1/2}}$. Together with Lemma~\ref{lemma convergence of conditional random variable}, we have
\begin{align}
    n_k^{-1}\sum_{i \in \mathcal{I}_{k}}\p{g^{(-k)}_i - g_i}^2 = O_p\p{\frac{s\log d}{n\gamma_n^{2}}} + O_p\p{\gamma_n^{-1}\p{n\gamma_n}^{-1}} = O_p\p{\frac{s\log d}{n\gamma_n^{2}}}. \label{convergence rate of the sum of (g_-k - g)^2}
\end{align}
In addition, we have
\begin{align*}
   \E\sbr{ n^{-1}\sum_{i=1}^n g_i^2} =& n^{-1}\sum_{i=1}^n \E\sbr{\p{\Gamma_iX_i^\top \beta^*  + \frac{\Gamma_i}{\gamma_n}w_i}^2}=  \gamma_n\E\sbr{\p{X_i^\top \beta^*}^2}  + \gamma_n^{-1}\E\sbr{w_i^2}\leq 2\gamma_n\sigma^2 + 2\gamma_n^{-1}\sigma_w^2.
\end{align*}
Thus, by Markov inequality, 
\begin{align}
    n^{-1}\sum_{i=1}^n \p{ g_i}^2 = O_p\p{\gamma_n^{-1}}. \label{convergence rate of the sum of g^2}
\end{align}

For $B_1$, since $a^2-b^2 = (a-b)^2 - 2b(a-b)$, we have
\begin{align*}
    \abs{B_1} &= \abs{n^{-1}\sum_{k=1}^K\sum_{i\in \mathcal{I}_k} \br{\p{g^{(-k)}_i}^2 - \p{g_i}^2}}\leq n^{-1}\sum_{k=1}^K\sum_{i\in \mathcal{I}_k}\p{g^{(-k)}_i - g_i}^2 + 2n^{-1}\sum_{k=1}^K\sum_{i\in \mathcal{I}_k}\abs{g_i\p{g^{(-k)}_i - g_i}}\\
    &\leq n^{-1}\sum_{k=1}^K\sum_{i\in \mathcal{I}_k}\p{g^{(-k)}_i - g_i}^2 + 2\br{n^{-1}\sum_{i=1}^n\p{ g_i}^2}^{1/2}\br{n^{-1}\sum_{k=1}^K\sum_{i\in \mathcal{I}_k}\p{g^{(-k)}_i - g_i}^2}^{1/2}.
\end{align*}
By \eqref{convergence rate of the sum of (g_-k - g)^2} and \eqref{convergence rate of the sum of g^2}, we have
\begin{align*}
    B_1 = O_p\p{\frac{s\log d}{n\gamma_n^{2}}} + O_p\p{\gamma_n^{-1/2}(\frac{s\log d}{n\gamma_n^2})^{1/2}} = O_p\p{\gamma_n^{-1}(s\log d/(n\gamma_n))^{1/2}}.
\end{align*}

For $B_2$, define
$
    B_{2k} := n_k^{-1}\sum_{i\in \mathcal{I}_k}\p{1-\Gamma_i}\widehat \beta^{(-k),\top}\Bar \Xi_0 \widehat \beta^{(-k)} - n_k^{-1}\sum_{i\in \mathcal{I}_k}(1-\Gamma_i)\p{X_i^\top \beta^*}^2.
$
Then we have
\begin{align*}
    B_{2k} &= \p{\frac{1-\widehat \gamma^{(k)}}{1-\widehat \gamma}} n^{-1}\sum_{i=1}^n \p{1-\Gamma_i}\p{X_i^\top \widehat \Delta^{(-k)} + X_i^\top\beta^*}^2 - n_k^{-1}\sum_{i\in \mathcal{I}_k}(1-\Gamma_i)\p{X_i^\top \beta^*}^2\\
    &= \p{\frac{1-\widehat \gamma^{(k)}}{1-\widehat \gamma} - 1} \p{B_{2k,1} +  B_{2k,2} + B_{2k,3}} + B_{2k,1} + B_{2k,3} + B_{2k,4} + B_{2k,5},
\end{align*}
where
\begin{align*}
    B_{2k,1} &= n^{-1}\sum_{i=1}^n \p{1-\Gamma_i}\p{X_i^\top \widehat \Delta^{(-k)}}^2,\quad
    B_{2k,2} = n^{-1} \sum_{i=1}^n \p{1-\Gamma_i}\p{ X_i^\top\beta^*}^2,\\
    B_{2k,3} &= 2n^{-1} \sum_{i=1}^n \p{1-\Gamma_i}\p{X_i^\top \widehat \Delta^{(-k)}} \p{X_i^\top\beta^*},\\
    B_{2k,4} &= n^{-1} \sum_{i=1}^n \p{1-\Gamma_i}\p{ X_i^\top\beta^*}^2 - \E\sbr{\p{1-\Gamma_i}\p{ X_i^\top\beta^*}^2},\\
    B_{2k,5} &= \E\sbr{\p{1-\Gamma_i}\p{ X_i^\top\beta^*}^2} - n_k^{-1}\sum_{i\in \mathcal{I}_k}(1-\Gamma_i)\p{X_i^\top \beta^*}^2.
\end{align*}
By Lemma~\ref{lemma conditional independence}, $(1-\Gamma_i)X_i \perp \mathcal{D}_{\mathcal{I}_{-k}}  \mid \Gamma_{1:n}$ for $i=1,\dots,n$. Thus, by \eqref{lasso ineq bounding quadratic form expectation}, we have
\begin{align*}
    &\E\sbr{B_{2k,1} \mid \mathcal{D}_{\mathcal{I}_{-k}}, \Gamma_{1:n}}  = \E\sbr{n^{-1}\sum_{i=1}^n \p{1-\Gamma_i}\p{X_i^\top \widehat \Delta^{(-k)}}^2 \mid \mathcal{D}_{\mathcal{I}_{-k}}, \Gamma_{1:n}} \\
    &\qquad  =\p{1-\widehat \gamma}\widehat \Delta^{(-k),\top}\E\sbr{X_iX_i^\top}\widehat \Delta^{(-k)} = O_p\p{\p{1-\gamma_n}\frac{s\log d}{n\gamma_n}}.
\end{align*}
By Lemma~\ref{lemma convergence of conditional random variable} we have
\begin{align}
    B_{2k,1} = O_p\p{\p{1-\gamma_n}\frac{s\log d}{n\gamma_n}}. \label{variance B_2k term 1}
\end{align}
Since 
$\E\sbr{\p{1-\Gamma_i}\p{ X_i^\top\beta^*}^2} = \p{1-\gamma_n}\E\sbr{\p{ X_i^\top\beta^*}^2} \leq 2\sigma^2 \p{1-\gamma_n},$
by Markov inequality, we have
\begin{align}
    B_{2k,2} = n^{-1} \sum_{i=1}^n \p{1-\Gamma_i}\p{ X_i^\top\beta^*}^2 = O_p\p{1-\gamma_n }. \label{variance B_2k term 2}
\end{align}
Then with \eqref{variance B_2k term 1}, \eqref{variance B_2k term 2}, and by Cauchy-Schwarz inequality, 
\begin{align}
    \abs{B_{2k,3}} &\leq n^{-1} \sum_{i=1}^n \abs{\p{1-\Gamma_i}\p{X_i^\top \widehat \Delta^{(-k)}} \p{X_i^\top\beta^*} }\notag\\
    &\leq \br{n^{-1} \sum_{i=1}^n\p{1-\Gamma_i}\p{X_i^\top \widehat \Delta^{(-k)}}^2
    n^{-1} \sum_{i=1}^n\p{1-\Gamma_i}\p{X_i^\top\beta^*}^2}^{1/2}\notag\\
    &= O_p\p{\p{1-\gamma_n}(s\log d/(n\gamma_n))^{1/2}}. \label{variance B_2k term 3}
\end{align}
Moreover, by Lemma~\ref{sub-gaussian properties}(c),
\begin{align*}
    &\E\sbr{\br{n^{-1} \sum_{i=1}^n \p{1-\Gamma_i}\p{ X_i^\top\beta^*}^2 - \E\sbr{\p{1-\Gamma_i}\p{ X_i^\top\beta^*}^2}}^2}= n^{-1}\Var\sbr{\p{1-\Gamma_i}\p{ X_i^\top\beta^*}^2}\\
    &\qquad\leq n^{-1} \E\sbr{\p{1-\Gamma_i}\p{ X_i^\top\beta^*}^4} = n^{-1}\p{1-\gamma_n} \E\sbr{\p{ X_i^\top\beta^*}^4}\leq 4\sigma^4n^{-1}\p{1-\gamma_n}.
\end{align*}
By Chebyshev's inequality, we have 
\begin{align}
    B_{2k,4} = n^{-1} \sum_{i=1}^n \p{1-\Gamma_i}\p{ X_i^\top\beta^*}^2 - \E\sbr{\p{1-\Gamma_i}\p{ X_i^\top\beta^*}^2} = O_p\p{n^{-1/2}\p{1-\gamma_n}^{1/2}}. \label{variance B_2k term 4}
\end{align}
Analogously, we have
\begin{align}
    B_{2k,5} = n_k^{-1} \sum_{i\in \mathcal{I}_k} \p{1-\Gamma_i}\p{ X_i^\top\beta^*}^2 - \E\sbr{\p{1-\Gamma_i}\p{ X_i^\top\beta^*}^2} = O_p\p{n^{-1/2}\p{1-\gamma_n}^{1/2}}. \label{variance B_2k term 5}
\end{align}
Let $\widehat \gamma^{(k)} = n_k^{-1}\sum_{i \in \mathcal{I}_{k}}\Gamma_i$. By \eqref{variance gamma(k) -1 ratio rate} we have
\begin{align}
    \frac{1-\widehat \gamma^{(k)}}{1-\widehat \gamma} - 1 = O_p\p{\gamma_n\p{n\p{1-\gamma_n}}^{-1}} + O_p\p{\gamma_n^{1/2}\p{n\p{1-\gamma_n}}^{-1/2}}.
\end{align}
Together with \eqref{variance B_2k term 1}-\eqref{variance B_2k term 5}, we have
\begin{align*}
    B_{2k} =& O_p\sbr{\br{\gamma_n\p{n\p{1-\gamma_n}}^{-1} + \gamma_n^{1/2}\p{n\p{1-\gamma_n}}^{-1/2}}\p{1-\gamma_n}\frac{s\log d}{n\gamma_n}}\\
    &+ O_p\sbr{\br{\gamma_n\p{n\p{1-\gamma_n}}^{-1} + \gamma_n^{1/2}\p{n\p{1-\gamma_n}}^{-1/2}}\p{1-\gamma_n}}\\
    &+ O_p\sbr{\br{\gamma_n\p{n\p{1-\gamma_n}}^{-1} + \gamma_n^{1/2}\p{n\p{1-\gamma_n}}^{-1/2}}\p{1-\gamma_n}(s\log d/(n\gamma_n))^{1/2}}\\
    &+O_p\p{\p{1-\gamma_n}\frac{s\log d}{n\gamma_n}+\p{1-\gamma_n}(s\log d/(n\gamma_n))^{1/2}+n^{-1/2}\p{1-\gamma_n}^{1/2}+n^{-1/2}\p{1-\gamma_n}^{1/2}}\\
    =& O_p\p{\p{1-\gamma_n}(s\log d/(n\gamma_n))^{1/2}+n^{-1}\gamma_n+n^{-1/2}\p{1-\gamma_n}^{1/2}}.
\end{align*}
Therefore, 
\begin{align*}
    B_2 = K^{-1}\sum_{k=1}^K B_{2k} = O_p\p{\p{1-\gamma_n}(s\log d/(n\gamma_n))^{1/2}}.
\end{align*}

For $B_3$, by Lemma~\ref{lemma the mean estimator consistency under MCAR}, it follows that
$
    B_3 = \p{\widehat \theta}^2 - \theta^2 = \p{ \widehat \theta - \theta}^2 + 2\theta\p{ \widehat \theta - \theta} = O_p\p{\p{n\gamma_n}^{-1/2}}.
$

In conclusion, we have
\begin{align*}
    \widehat \sigma^2 - \widetilde \sigma^2 &= O_p\p{\gamma_n^{-1}(s\log d/(n\gamma_n))^{1/2}} + O_p\p{\p{1-\gamma_n}(s\log d/(n\gamma_n))^{1/2}}+O_p\p{\p{n\gamma_n}^{-1/2}}\\
    & = O_p\p{\gamma_n^{-1}(s\log d/(n\gamma_n))^{1/2}}.
\end{align*}
Note that $\widetilde \sigma^2$ can be written as
\begin{align*}
    \widetilde \sigma^2 = n^{-1}\sum_{i=1}^n\p{X_i^\top \beta^* + \frac{\Gamma_i}{\gamma_n}\p{Y_i - X_i^\top \beta^*}}^2 - \theta^2 = n^{-1}\sum_{i=1}^n\p{X_i^\top \beta^* + \frac{\Gamma_i}{\gamma_n}w_i}^2 - \theta^2.
\end{align*}
Since $\E\sbr{X_i^\top \beta^* + {\Gamma_i}/{\gamma_n}\p{Y_i - X_i^\top \beta^*}} = \theta$, we have
\begin{align*}
    \sigma_n^2 &= \Var\p{X_i^\top \beta^*  + \frac{\Gamma_i}{\gamma_n}\p{Y_i - X_i^\top \beta^*}} = \E\sbr{\p{X_i^\top \beta^* + \frac{\Gamma_i}{\gamma_n}w_i}^2} - \theta^2.
\end{align*}
Note that
\begin{align*}
    \widetilde \sigma^2 - \sigma_n^2 = n^{-1}\sum_{i=1}^n\p{X_i^\top \beta^* + \frac{\Gamma_i}{\gamma_n}w_i}^2 - \E\sbr{\p{X_i^\top \beta^* + \frac{\Gamma_i}{\gamma_n}w_i}^2}.
\end{align*}
Then
\begin{align*}
    &\E\sbr{\p{\widetilde \sigma^2 - \sigma_n^2}^2}= n^{-2}\sum_{i=1}^n \Var \sbr{\p{X_i^\top \beta^* + \frac{\Gamma_i}{\gamma_n}w_i}^2}\leq 8n^{-1}\E\sbr{\p{X_i^\top \beta^*}^4 + \p{\frac{\Gamma_i}{\gamma_n}w_i}^4},
\end{align*}
where the last inequality comes from H{\"o}lder inequality. By Lemma~\ref{sub-gaussian properties} (c), we have
$
    \E\sbr{\p{X_i^\top \beta^*}^4} \leq 4\sigma^4$ and $\E\sbr{\p{{\Gamma_i}/{\gamma_n}w_i}^4} = \gamma_n^{-3} \E\sbr{w_i^4} \leq 4\sigma_w^4\gamma_n^{-3}.
$
It follows that 
$
    \E\sbr{\p{\widetilde \sigma^2 - \sigma_n^2}^2} \leq 32\p{\sigma^4 + \sigma_w^4\gamma_n^{-3}}n^{-1}.
$
By Chebyshev's inequality, 
$
    \widetilde \sigma^2 - \sigma_n^2 = O_p\p{\gamma_n^{-1}\p{n\gamma_n}^{-1/2}}.
$
Together with $\sigma_n^2 \asymp \gamma_n^{-1}$ from Lemma~\ref{lemma the oracle asymptotics under MCAR}, we have
\begin{align*}
    \widehat \sigma^2 = \widehat \sigma^2 -\widetilde \sigma^2 + \widetilde \sigma^2 - \sigma_n^2 + \sigma_n^2 = \sigma_n^2 + O_p\p{\gamma_n^{-1}\p{n\gamma_n}^{-1/2}} + O_p\p{\gamma_n^{-1}(s\log d/(n\gamma_n))^{1/2}},
\end{align*}
which implies that $\widehat \sigma^2 = \sigma_n^2\br{1 + O_p\p{(s\log d/(n\gamma_n))^{1/2}}}.$
\end{proof}

\subsection{Proof of Theorem~\ref{theorem for the asymptotics under MCAR}}

\begin{proof} By Lemmas~\ref{lemma the mean estimator consistency under MCAR} and \ref{lemma variance consistency under MCAR},
$\widehat \theta -\theta = O_p \p{(n\gamma_n)^{-1/2}}$ and  $\widehat \sigma^2 = \sigma_n^2\br{1 + o_p(1)}$. In addition, by Lemma~\ref{lemma the oracle asymptotics under MCAR}, Lemma~\ref{lemma the mean estimator consistency under MCAR}, and Lemma~\ref{lemma variance consistency under MCAR},
$
        \widehat\sigma^{-1}n^{1/2}(\widehat \theta -\widetilde \theta) = O_p\p{\p{n\gamma_n}^{1/2}{ ({s\log d}/{n^2\gamma_n^2})^{1/2}}} = o_p(1).
$
Then by Lemma~\ref{lemma the oracle asymptotics under MCAR} and Slutsky's theorem, we have
$$\widehat\sigma^{-1}n^{1/2}(\widehat \theta - \theta) = \frac{\sigma_n}{\widehat\sigma} \cdot \sigma_n^{-1}n^{1/2}(\widetilde \theta - \theta) + \widehat\sigma^{-1}n^{1/2}(\widehat \theta -\widetilde \theta) \xrightarrow{d} \mc{N}(0,1).$$
\end{proof}

\subsection{Auxiliary lemmas for Theorem~\ref{plm Asymptotics theorem}}

\begin{lemma} \label{lemma the plm oracle asymptotics under MCAR}
Let Assumptions~\ref{assumption mcar} and \ref{assumption mcar plm (a)} hold. Then $\sigma_{plm}^2 \asymp \gamma_n^{-1}$ and as $n, d \rightarrow \infty$,
$
\sigma_{plm}^{-1}n^{1/2}(\widetilde \theta_{plm} - \theta) \xrightarrow{d} \mathcal{N}(0,1),
$
where $\widetilde \theta_{plm} = n^{-1}\sum_{i=1}^n \br{X_i^\top \beta^*_{plm} + f^*\p{Z_i} + {\Gamma_i}/{\gamma_n}\epsilon_i }$.
\end{lemma}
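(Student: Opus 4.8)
\emph{Proof proposal.} The plan is to mirror the argument used for Lemma~\ref{lemma the oracle asymptotics under MCAR}, with the partially linear residual $\epsilon = Y - X^\top\beta^*_{plm} - f^*(Z)$ and the surrogate $\mu^*(W) = X^\top\beta^*_{plm} + f^*(Z)$ playing the roles of $w$ and $X^\top\beta^*$, and with the orthogonality relations of Lemma~\ref{lemma oracle plm properties} replacing those of Lemma~\ref{lemma least squared residual properties}. Write $U_i = \mu^*(W_i) + \frac{\Gamma_i}{\gamma_n}\epsilon_i$, so that $\widetilde\theta_{plm} = n^{-1}\sum_{i=1}^n U_i$ is an i.i.d.\ average. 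First I would establish $\E[U_i] = \theta$: Lemma~\ref{lemma oracle plm properties} gives $\E[\epsilon \mid Z] = 0$, hence $\E[\epsilon] = 0$, and $\mu^*(W) = Y - \epsilon$, hence $\E[\mu^*(W)] = \E[Y] = \theta$; under the MCAR structure $\Gamma$ is independent of $\epsilon$, so $\E[\tfrac{\Gamma}{\gamma_n}\epsilon] = \E[\tfrac{\Gamma}{\gamma_n}]\,\E[\epsilon] = 0$, giving $\E[U_i] = \theta$.

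Second, I would compute the variance. Writing $U_i - \theta = (\mu^*(W_i) - \theta) + \tfrac{\Gamma_i}{\gamma_n}\epsilon_i$ and expanding the square, the cross term reduces (using $\Gamma \perp \epsilon$ and $\Gamma^2 = \Gamma$) to $2\,\E[(\mu^*(W) - \theta)\epsilon] = 2\big(\E[X^\top\beta^*_{plm}\epsilon] + \E[f^*(Z)\epsilon] - \theta\,\E[\epsilon]\big)$, which vanishes because $\E[X\epsilon] = \boldsymbol{0}$ and $\E[f^*(Z)\epsilon] = \E[f^*(Z)\,\E(\epsilon \mid Z)] = 0$ by Lemma~\ref{lemma oracle plm properties}; similarly $\E[\tfrac{\Gamma^2}{\gamma_n^2}\epsilon^2] = \gamma_n^{-1}\E[\epsilon^2]$. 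Thus $\sigma_{plm}^2 = \Var(\mu^*(W)) + \gamma_n^{-1}\E[\epsilon^2]$, and since $\E[\mu^*(W)\epsilon] = 0$ gives $\Var(Y) = \Var(\mu^*(W)) + \E[\epsilon^2]$, this equals $\gamma_n^{-1}\Var(Y) + (1 - \gamma_n^{-1})\Var(\mu^*(W))$, matching the stated form. Assumption~\ref{assumption mcar plm (a)}(a) bounds $\Var(\mu^*(W))$ through the fourth-moment bounds on $X^\top\beta^*_{plm}$ and $f^*(Z)$, and keeps $\E[\epsilon^2] \in [\delta_\epsilon, \sigma_\epsilon^2]$ away from $0$ and $\infty$, so $\sigma_{plm}^2 \asymp \gamma_n^{-1}$.

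Third, I would invoke the Lindeberg--Feller theorem (Proposition~2.27 of \cite{van2000asymptotic}) for the triangular array $\{(U_i - \theta)/\sqrt{n}\}$. Since Assumption~\ref{assumption mcar plm (a)}(a) supplies only fourth moments, I would verify the moment bound with exponent $4$: $\norm{U_i - \theta}_{\P,4} \le \norm{\mu^*(W) - \theta}_{\P,4} + \norm{\tfrac{\Gamma}{\gamma_n}\epsilon}_{\P,4}$, where the first term is $O(1)$ and $\E\big[|\tfrac{\Gamma}{\gamma_n}\epsilon|^4\big] = \gamma_n^{-4}\E[\Gamma\epsilon^4] = \gamma_n^{-3}\E[\epsilon^4] = O(\gamma_n^{-3})$, so $\E[|U_i - \theta|^4] = O(\gamma_n^{-3})$. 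Then, for every $\delta > 0$, $\gamma_n\,\E\big[(U_i - \theta)^2 \mathbbm{1}_{\{|U_i - \theta| > \delta\sqrt{n/\gamma_n}\}}\big] \le \delta^{-2}n^{-1}\gamma_n^{2}\,\E[(U_i - \theta)^4] = O\big((n\gamma_n)^{-1}\big) = o(1)$, which is the Lindeberg condition up to the constant $\sigma_{plm}^2 \asymp \gamma_n^{-1}$; this yields $\sigma_{plm}^{-1}\sqrt{n}(\widetilde\theta_{plm} - \theta) \xrightarrow{d} \mathcal{N}(0,1)$.

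I do not anticipate a substantive obstacle: the content is bookkeeping, and the only points needing care are (i) routing the vanishing of the cross terms through the two orthogonality identities of Lemma~\ref{lemma oracle plm properties}, namely $\E[X\epsilon] = \boldsymbol{0}$ and $\E[\epsilon \mid Z] = 0$; (ii) making explicit that MCAR is being used to decouple $\Gamma$ from the residual $\epsilon$ (this is precisely where the $\Gamma \perp Z$ structure invoked in Theorem~\ref{plm Asymptotics theorem} enters); and (iii) working with exponent $4$ rather than sub-Gaussian tails, which still leaves the Lyapunov ratio of order $(n\gamma_n)^{-1} \to 0$.
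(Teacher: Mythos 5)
Your proposal is correct and follows essentially the same route as the paper's proof: the same decomposition $U_i = \mu^*(W_i) + \tfrac{\Gamma_i}{\gamma_n}\epsilon_i$, the same use of the two orthogonality identities from Lemma \ref{lemma oracle plm properties} to kill the cross terms, the same fourth-moment (Lyapunov with exponent $4$) bound giving $\E[|U_i-\theta|^4]=O(\gamma_n^{-3})$, and the same Lindeberg--Feller conclusion. Your explicit remark that the factorization $\E[\tfrac{\Gamma}{\gamma_n}\epsilon]=\E[\tfrac{\Gamma}{\gamma_n}]\E[\epsilon]$ tacitly uses $\Gamma\perp Z$ in addition to Assumption \ref{assumption mcar} is a point the paper glosses over, so that caveat is well placed.
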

\begin{proof}
    Let $V_i = X_i^\top \beta^*_{plm} + f^*\p{Z_i} + {\Gamma_i}/{\gamma_n}\epsilon_i$. Then 
    \begin{align*}
        \E[V_i] &= \E\sbr{X_i^\top \beta^*_{plm} + f^*\p{Z_i} } + \E\left[\frac{\Gamma_i}{\gamma_n}\right]\E\sbr{\p{Y_i - X_i^\top \beta^*_{plm} - f^*\p{Z_i}}} = \E[Y_i] = \theta.
    \end{align*}
Observe that
\begin{align*}
    \sigma_{plm}^2 = \Var[V_i] &= \E\left[ \p{X_i^\top \beta^*_{plm} + f^*\p{Z_i} + \frac{\Gamma_i}{\gamma_n}\epsilon_i - \theta}^2\right]= \E\left[ \p{X_i^\top \beta^*_{plm}+ f^*\p{Z_i} - \theta}^2 + \frac{1}{\gamma_n}\epsilon_i^2\right],
\end{align*}
where we use Lemma~\ref{lemma oracle plm properties} to obtain the last equation. Under Assumption~\ref{assumption mcar plm (a)}, the second moments of $X_i^\top \beta^*_{plm}$, $f^*\p{Z_i}$, and $\epsilon_i$ are bounded, then by Minkowski's inequality, we have that the second moment of $X_i^\top \beta^*_{plm}+ f^*\p{Z_i} - \theta$ is bounded. In addition, the second moment of $\epsilon_i$ is assumed to be lower bounded. Then it follows that
\begin{align}
    \sigma_{plm}^2 \asymp \gamma_n^{-1}. \label{plm shrinking rate of sigma under MCAR}
\end{align}
Take $c=2$ and we have
$
\E\left[\abs{X_i^\top \beta^*_{plm}}^{2+c}\right] \leq {\sigma}^{2+c}$,  $\E\left[\abs{f^*\p{Z_i}}^{2+c}\right] \leq {\sigma_f}^{2+c}$, and $\E\left[\abs{\epsilon_i}^{2+c}\right] \leq {\sigma_\epsilon}^{2+c}.
$
Then,
\begin{align*}
    &\norm{V_i - \theta}_{\P, 2+c} = \norm{X_i^\top \beta^*_{plm} + f^*\p{Z_i} - \theta  + \frac{\Gamma_i}{\gamma_n}\epsilon_i }_{\P, 2+c}\\
    &\qquad\leq \norm{X_i^\top \beta^*_{plm}}_{\P, 2+c} + \norm{f^*\p{Z_i}}_{\P, 2+c} + \theta + \E\left[ \frac{\Gamma_i}{\gamma_n^{2+c}}\epsilon_i^{2+c}\right]^{\frac{1}{2+c}}\\
    &\qquad=\E\left[\abs{X_i^\top \beta^*_{plm}}^{2+c}\right]^{\frac{1}{2+c}} + \E\left[\abs{f^*\p{Z_i}}^{2+c}\right]^{\frac{1}{2+c}} + \theta + \gamma_n^{\frac{1}{2+c}-1}\E\left[\abs{\epsilon_i}^{2+c}\right]^{\frac{1}{2+c}}\leq \sigma + \sigma_f + \theta + \sigma_\epsilon \gamma_n^{\frac{1}{2+c}-1},
\end{align*}
which implies that
\begin{align*}
    n^{-\frac{c}{2}}\gamma_n^{1+\frac{c}{2}}\E\left[ \abs{V_i - \theta}^{2+c} \right] &\leq n^{-\frac{c}{2}}\gamma_n^{1+\frac{c}{2}}\p{\sigma + \sigma_f + \theta + \sigma_\epsilon\gamma_n^{\frac{1}{2+c}-1}}^{2+c} = n^{-\frac{c}{2}}\gamma_n^{1+\frac{c}{2}}O\p{\gamma_n^{-1-c}}. 
\end{align*}
That is,
\begin{align}
    n^{-\frac{c}{2}}\gamma_n^{1+\frac{c}{2}}\E\left[ \abs{V_i - \theta}^{2+c} \right] = O\p{(n\gamma_n)^{-\frac{c}{2}}} = o(1). \label{plm shrinking rate of Lyapunov condition}
\end{align}
Then for any $\delta>0$, as $N, d \rightarrow \infty$,
\begin{align*}
    \gamma_n\E\left[\p{V_i - \theta}^{2}\mathbbm{1}_{\br{\abs{V_i - \theta}>\delta(n/\gamma_n)^{1/2}}}\right] \leq \delta^{-c}n^{-\frac{c}{2}}\gamma_n^{1+\frac{c}{2}}\E\left[\p{V_i - \theta}^{2+c}\right] = o(1).
\end{align*}
By Proposition 2.27 (Lindeberg-Feller) of \cite{van2000asymptotic}, we have
$
        \sigma_{plm}^{-1}n^{1/2}(\widetilde \theta_{plm} - \theta) \xrightarrow{d} \mathcal{N}(0,1).
$
\end{proof}

\begin{lemma}\label{lemma plm mean estimator consistency}
    Let Assumptions~\ref{assumption mcar} and \ref{assumption mcar plm (a)} hold. If $n\gamma_n \gg 1$, then as $n \rightarrow \infty$, 
    \begin{align*}
        \widehat \theta_{plm} - \widetilde \theta_{plm} &= o_p\p{\p{n\gamma_n}^{-1/2}},\quad
        \widehat \theta_{plm} -\theta = O_p\p{\p{n\gamma_n}^{-1/2}}.
    \end{align*}
\end{lemma}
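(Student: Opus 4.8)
The plan is to follow the template of the proof of Lemma~\ref{lemma the mean estimator consistency under MCAR}, now carrying the additional nonparametric term $\widehat f^{(-k)}$, and using only the $o_p(1)$-consistency of the nuisance estimators granted by Assumption~\ref{assumption mcar plm (a)}(b); this is why the target rate for $\widehat\theta_{plm}-\widetilde\theta_{plm}$ is merely $o_p((n\gamma_n)^{-1/2})$ rather than an explicit one. First I would rewrite $\widetilde\theta_{plm}$ in ``imputed'' form. Since $\bar X_0=\sum_{i=1}^n(1-\Gamma_i)X_i\,/\sum_{i=1}^n(1-\Gamma_i)$, the identity $n^{-1}\sum_i X_i^\top\beta^*_{plm}=n^{-1}\sum_i(\Gamma_iX_i+(1-\Gamma_i)\bar X_0)^\top\beta^*_{plm}$ holds exactly, and $Z_i$ (hence $f^*(Z_i)$) is observed for every $i$, so
\begin{align*}
\widetilde\theta_{plm}=n^{-1}\sum_{i=1}^n\br{\p{\Gamma_iX_i+(1-\Gamma_i)\bar X_0}^\top\beta^*_{plm}+f^*(Z_i)+\frac{\Gamma_i}{\gamma_n}\p{Y_i-X_i^\top\beta^*_{plm}-f^*(Z_i)}}.
\end{align*}

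Writing $\widehat\Delta^{(-k)}_\beta=\widehat\beta^{(-k)}_{plm}-\beta^*_{plm}$, $\widehat\Delta^{(-k)}_f(\cdot)=\widehat f^{(-k)}(\cdot)-f^*(\cdot)$, $\widehat\gamma_k=n_k^{-1}\sum_{i\in\mc I_k}\Gamma_i$, and $\epsilon_i=Y_i-X_i^\top\beta^*_{plm}-f^*(Z_i)$, and using $Y_i-X_i^\top\widehat\beta^{(-k)}_{plm}-\widehat f^{(-k)}(Z_i)=\epsilon_i-X_i^\top\widehat\Delta^{(-k)}_\beta-\widehat\Delta^{(-k)}_f(Z_i)$, subtraction gives a decomposition $\widehat\theta_{plm}-\widetilde\theta_{plm}=J_1+\cdots+J_5$ after inserting and re-collecting the centering quantities $\E[X]^\top\widehat\Delta^{(-k)}_\beta$ and $\E_Z[\widehat\Delta^{(-k)}_f(Z)\mid\mc D_{\mc I_{-k}}]$, which cancel over the whole sum (the latter using $\Gamma\perp Z$). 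Here $J_1=K^{-1}\sum_k(\gamma_n^{-1}-\widehat\gamma_k^{-1})\,n_k^{-1}\sum_{i\in\mc I_k}\Gamma_i\epsilon_i$ mirrors $I_3$; $J_2,J_3$ mirror $I_1,I_2$ with $X_i^\top\widehat\Delta^{(-k)}_\beta-\E[X]^\top\widehat\Delta^{(-k)}_\beta$ in place of $X_i^\top\widehat\Delta^{(-k)}-\E[X]^\top\widehat\Delta^{(-k)}$; $J_4=n^{-1}\sum_k\sum_{i\in\mc I_k}(1-\Gamma_i)\p{\bar X_0^\top\widehat\Delta^{(-k)}_\beta-\E[X]^\top\widehat\Delta^{(-k)}_\beta}$ mirrors $I_4$; and $J_5$ is the analogous block for the nonparametric error $\widehat\Delta^{(-k)}_f(Z_i)-\E_Z[\widehat\Delta^{(-k)}_f(Z)\mid\mc D_{\mc I_{-k}}]$.

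Each $J_\ell$ is then bounded by conditioning on the out-of-fold data $\mc D_{\mc I_{-k}}$ (and additionally on $\Gamma_{1:n}$ for $J_4$), which by the conditional-independence structure of Lemma~\ref{lemma conditional independence} (applied with $Z_i$ adjoined to the data) lets us treat $\widehat\Delta^{(-k)}_\beta$ and $\widehat\Delta^{(-k)}_f$ as fixed and the in-fold / imputed samples as fresh i.i.d.\ draws. Conditional means vanish by Assumption~\ref{assumption mcar}, $\Gamma\perp Z$, and the orthogonality relations $\E[X\epsilon]=\boldsymbol{0}$ and $\E[\epsilon\mid Z]=0$ from Lemma~\ref{lemma oracle plm properties}. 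Conditional second moments of $J_2,J_3,J_4$ are at most $O_p(n^{-1})\,\widehat\Delta^{(-k),\top}_\beta\E[XX^\top]\widehat\Delta^{(-k)}_\beta\le O_p(n^{-1})\,\kappa_u\norm{\widehat\Delta^{(-k)}_\beta}_2^2=o_p(n^{-1})$ by Assumption~\ref{assumption mcar plm (a)}(a),(b); those of $J_5$ are at most $O_p(n^{-1})\,\E_Z[\widehat\Delta^{(-k)}_f(Z)^2\mid\mc D_{\mc I_{-k}}]=o_p(n^{-1})$ by Assumption~\ref{assumption mcar plm (a)}(b); and $J_1=O_p((n\gamma_n)^{-1})$ just as $I_3$, via Lemma~\ref{lemma gamma ratio convergence} and Chebyshev (using $\E[\Gamma\epsilon]=0$ and $\E[\Gamma\epsilon^2]=O(\gamma_n)$). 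An $o_p$-version of Lemma~\ref{lemma convergence of conditional random varaible} (conditional Markov followed by dominated convergence) turns the $o_p(n^{-1})$ conditional-variance bounds into $J_\ell=o_p(n^{-1/2})$; since $n^{-1/2}\le(n\gamma_n)^{-1/2}$ and $\widehat\gamma_k/\gamma_n=1+o_p(1)$, summing over $k$ yields $\widehat\theta_{plm}-\widetilde\theta_{plm}=o_p((n\gamma_n)^{-1/2})$. Combined with Lemma~\ref{lemma the plm oracle asymptotics under MCAR}, which gives $\widetilde\theta_{plm}-\theta=O_p((n\gamma_n)^{-1/2})$, this gives $\widehat\theta_{plm}-\theta=O_p((n\gamma_n)^{-1/2})$.

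The hard part will be extracting the $o_p((n\gamma_n)^{-1/2})$ rate from only $o_p(1)$ nuisance consistency: the $O_p(1)$-sized multipliers such as $\bar X_0$ and $1-\Gamma_i$ are harmless only because the cross-fitting makes each $J_\ell$ a conditionally centered average with variance $o_p(n^{-1})$, so the centering and conditioning must be arranged exactly, the three error sources (the slope $\widehat\Delta^{(-k)}_\beta$, the function $\widehat\Delta^{(-k)}_f$, and the $\widehat\gamma_k$-normalization) must be controlled together, and one must be careful that $\widehat f^{(-k)}$ is measurable with respect to $\mc D_{\mc I_{-k}}$ alone and that $\Gamma\perp Z$ is invoked precisely where the $\widehat\Delta^{(-k)}_f$-terms need conditional mean-zero and variance control.
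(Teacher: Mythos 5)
Your proposal is correct and follows essentially the same route as the paper: the paper decomposes $\widehat\theta_{plm}-\widetilde\theta_{plm}$ into the same cross-fitted, conditionally centered blocks (its $T_1,T_2,T_3$ are just a coarser grouping of your $J_1,\dots,J_5$), uses the conditional independence afforded by sample splitting together with $\E[X\epsilon]=\boldsymbol{0}$, $\E[\epsilon\mid Z]=0$, and $\Gamma\perp Z$ to get conditional mean zero, bounds conditional second moments via the $o_p(1)$ nuisance consistency, and converts these to rates with the conditional-convergence lemma before invoking the oracle CLT. One small quantitative slip: for the blocks weighted by $\Gamma_i(1-\gamma_n^{-1})$ the conditional variance is $o_p((n\gamma_n)^{-1})$, not $o_p(n^{-1})$, since $\E[(\Gamma_i-\Gamma_i/\gamma_n)^2]=(1-\gamma_n)^2/\gamma_n$; this still yields the required $o_p((n\gamma_n)^{-1/2})$, so the conclusion is unaffected.
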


\begin{proof}
Let $\widehat \gamma = n^{-1}\sum_{i=1}^n \Gamma_i$, $\widehat \gamma^{\p{k}} = n_k^{-1}\sum_{i\in \mathcal{I}_{k}}^n \Gamma_i$, $\widehat \Delta^{(-k)}_{plm} = \widehat \beta^{(-k)}_{plm} - \beta^*_{plm}$, and $\widehat D^{(-k)} = \widehat f^{(-k)} - f^*$. Note that $\widehat \theta_{plm} = K^{-1}\sum_{k=1}^K \widehat \theta^{(-k)}_{plm}$, where
\begin{align*}
    \widehat \theta^{(k)}_{plm} &= n_{k}^{-1}\sum_{i\in\mathcal{I}_k} \br{\p{\Gamma_iX_i+ \p{1-\Gamma_i}\Bar{X}_0 }^\top \widehat \beta^{(-k)} + \widehat f^{(-k)}\p{Z_i}}+ n_{k}^{-1}\sum_{i \in \mathcal{I}_k}\frac{\Gamma_i}{\widehat \gamma^{(k)}}\p{Y_i - X_i^\top \widehat \beta^{(-k)} - \widehat f^{(-k)}\p{Z_i}},
\end{align*}
and that 
\begin{align*}
    \widetilde \theta_{plm} &= n^{-1}\sum_{i=1}^n \br{X_i^\top \beta^*_{plm} + f^*\p{Z_i} + \frac{\Gamma_i}{\gamma_n}\epsilon_i }=n^{-1}\sum_{i=1}^n \br{\p{\Gamma_iX_i + \p{1-\Gamma_i}\Bar{X}_0}^\top \beta^*_{plm} + f^*\p{Z_i} + \frac{\Gamma_i}{\gamma_n}\epsilon_i },
\end{align*}
where $\epsilon_i = Y_i - X_i^\top \beta^*_{plm} - f^*\p{Z_i}$. Then 
\begin{align*}
    &\widehat \theta_{plm} - \widetilde \theta_{plm}= T_1 + T_2 + T_3,
\end{align*}
where
\begin{align*}
    T_1 &= n^{-1}\sum_{k=1}^K\sum_{i \in \mathcal{I}_k} \p{\Gamma_i - \frac{\Gamma_i}{\widehat \gamma^{(k)} } } \br{ \p{X_i - \E\sbr{X_i } }^\top\widehat \Delta^{(-k)}_{plm} + \widehat D^{(-k)}\p{Z_i} - \E_Z\sbr{\widehat D^{(-k)}\p{Z}} },\\
    T_2 &= n^{-1}\sum_{k=1}^K\sum_{i \in \mathcal{I}_k} \p{1-\Gamma_i}\br{ \p{\Bar{X}_0 - \E\sbr{X_i } }^\top\widehat \Delta^{(-k)}_{plm} + \widehat D^{(-k)}\p{Z_i} - \E_Z\sbr{\widehat D^{(-k)}\p{Z}} },\\
    T_3 &= \p{\widehat \gamma^{-1} - \gamma_n^{-1}}n^{-1}\sum_{i=1}^n\Gamma_i\epsilon_i.
\end{align*}

For $T_1$, define
\begin{align*}
    T_{1k} = n_k^{-1}\sum_{i \in \mathcal{I}_k} \Gamma_i \br{ \p{X_i - \E\sbr{X_i } }^\top\widehat \Delta^{(-k)}_{plm} + \widehat D^{(-k)}\p{Z_i} - \E_Z\sbr{\widehat D^{(-k)}\p{Z}} }.
\end{align*}
Let $\mathcal{L}_{\mathcal{I}_{-k}} = \br{\p{Z_i, \Gamma_i, \Gamma_iX_i, \Gamma_iY_i}: i \in \mathcal{I}_{-k}}$. For any $i \in \mathcal{I}_k$, it holds that
\begin{align*}
    &\E\sbr{\p{X_i - \E\sbr{X_i } }^\top\widehat \Delta^{(-k)}_{plm} + \widehat D^{(-k)}\p{Z_i} - \E_Z\sbr{\widehat D^{(-k)}\p{Z}} \mid \mathcal{L}_{\mathcal{I}_{-k}}} \\
    &=\E\sbr{\p{X_i - \E\sbr{X_i } }^\top\widehat \Delta^{(-k)}_{plm} \mid \mathcal{L}_{\mathcal{I}_{-k}}} + \E\sbr{\widehat D^{(-k)}\p{Z_i} - \E_Z\sbr{\widehat D^{(-k)}\p{Z}} \mid \mathcal{L}_{\mathcal{I}_{-k}}}\\
    &=\E\sbr{X_i - \E\sbr{X_i } \mid \mathcal{L}_{\mathcal{I}_{-k}}}^\top\widehat \Delta^{(-k)}_{plm}=\E\sbr{X_i - \E\sbr{X_i } }^\top\widehat \Delta^{(-k)}_{plm}= 0,
\end{align*}
and 
\begin{align*}
    &\E\sbr{T_{1k}^2 \mid \mathcal{L}_{\mathcal{I}_{-k}}}= n_k^{-2}\sum_{i \in \mathcal{I}_{k} }\E\sbr{\Gamma_i\br{ \p{X_i - \E\sbr{X_i }}^\top\widehat \Delta^{(-k)}_{plm} + \widehat D^{(-k)}\p{Z_i} - \E_Z\sbr{\widehat D^{(-k)}\p{Z}}}^2 \mid \mathcal{L}_{\mathcal{I}_{-k}} }\\
    &\qquad=n_k^{-1}\E\sbr{\Gamma_i}\E\sbr{\br{ \p{X_i - \E\sbr{X_i }}^\top\widehat \Delta^{(-k)}_{plm} + \widehat D^{(-k)}\p{Z_i} - \E_Z\sbr{\widehat D^{(-k)}\p{Z}}}^2 \mid \mathcal{L}_{\mathcal{I}_{-k}} }\\
    &\qquad\leq 2\gamma_n n_k^{-1}\E\sbr{\p{ X_i^\top\widehat \Delta^{(-k)}_{plm}}^2\mid \widehat \Delta^{(-k)}_{plm} } +2 \gamma_n n_k^{-1}\E_Z\sbr{\widehat D^{(-k)}\p{Z}}.
\end{align*}
For any $i \in \mathcal{I}_{k}$, $X_i\perp \widehat \Delta^{(-k)}_{plm}$, we have
\begin{align} \label{plm bound for the quadratic linear form}
    \E\sbr{\p{ X_i^\top\widehat \Delta^{(-k)}_{plm}}^2\mid \widehat \Delta^{(-k)}_{plm} }= \Delta^{(-k),\top}_{plm}\E\sbr{ X_i X_i^\top}\widehat \Delta^{(-k)}_{plm} \leq \tilde \kappa_u \norm{\widehat \Delta^{(-k)}_{plm}}_2^2.
\end{align}
Thus,
\begin{align*}
    \E\sbr{T_{1k}^2 \mid \mathcal{L}_{\mathcal{I}_{-k}} } \leq 2\tilde \kappa_u \gamma_n n_k^{-1}\norm{\widehat \Delta^{(-k)}_{plm}}_2^2 + 2\gamma_n n_k^{-1}\E_Z\sbr{\widehat D^{(-k)}\p{Z}} = o_p\p{n^{-1}\gamma_n}.
\end{align*}
By Lemma~\ref{lemma convergence of conditional random variable}, it follows that 
$T_{1k} = o_p\p{\gamma_n \p{n\gamma_n}^{-1/2}}.$
Since by Lemma~\ref{lemma gamma ratio convergence}, 
\begin{align}\label{plm 1-1/widehat gamma rate}
    1-\frac{1}{\widehat \gamma^{(k)}} &= 1 - \frac{1}{\gamma_n} + \frac{1}{\gamma_n} - \frac{1}{\widehat \gamma^{(k)}} = O_p\p{\gamma_n^{-1}} + O_p\p{\gamma_n^{-1}\p{n\gamma_n}^{-1/2}\p{1-\gamma_n}^{1/2}} = O_p\p{\gamma_n^{-1}},
\end{align}
we have
$
    T_1 = K^{-1}\sum_{k=1}^K\p{1-{1}/{\widehat \gamma^{(k)}}}T_{1k} = o_p\p{\p{n\gamma_n}^{-1/2}}.
$

For $T_2$, we have
\begin{align*}
    T_2 = K^{-1}\sum_{k=1}^kT_{2ak} +  K^{-1}\sum_{k=1}^kT_{2bk},
\end{align*}
where
\begin{align*}
    T_{2ak} &= n_k^{-1}\sum_{i \in \mathcal{I}_k} \p{1-\Gamma_i}\p{\Bar{X}_0 - \E\sbr{X_i } }^\top\widehat \Delta^{(-k)}_{plm},\\
    T_{2bk} &= n_k^{-1}\sum_{i \in \mathcal{I}_k} \p{1-\Gamma_i}\p{\widehat D^{(-k)}\p{Z_i} - \E_Z\sbr{\widehat D^{(-k)}\p{Z}}}.
\end{align*}
Since $\Bar{X}_0 = {\sum_{i=1}^n \p{1-\Gamma_i} X_i}/{\sum_{i=1}^n \p{1-\Gamma_i}}$, $T_{2ak}$ can be written as
\begin{align*}
    T_{2ak} &= \p{1-\widehat \gamma^{(k)}}\p{\Bar{X}_0 - \E\sbr{X_i } }^\top\widehat \Delta^{(-k)}_{plm}=\frac{1-\widehat \gamma^{(k)}}{1-\widehat \gamma}\br{n^{-1}\sum_{i=1}^n\p{1-\Gamma_i}\p{X_i - \E\sbr{X_i}}^\top \widehat \Delta^{(-k)}_{plm}}.
\end{align*}
For $\mathcal{D}'_{\mathcal{I}_{-k}}  = \br{\p{\Gamma_j, \Gamma_i Z_j, \Gamma_jX_j, \Gamma_jY_j} \mid j \in \mathcal{I}_{-k}}$, by Lemma~\ref{lemma conditional independence}, for any $i=1,\dots,n$,
\begin{align}
    (1-\Gamma_i)X_i \perp \mathcal{D}'_{\mathcal{I}_{-k}} | \Gamma_{1:n}. \label{plm conditional independence between 1-Gamma_i X_i and L_I_-k}
\end{align}
 Then for any $i \in [n]$, under Assumption~\ref{assumption mcar},
 \begin{align*}
     &\E\sbr{\p{1-\Gamma_i}\p{X_i - \E\sbr{X_i}}^\top \widehat \Delta^{(-k)}_{plm} \mid \mathcal{D}'_{\mathcal{I}_{-k}}, \Gamma_{1:n}} = \p{1-\Gamma_i}\E\sbr{X_i - \E\sbr{X_i} \mid  \Gamma_{1:n}}^\top \widehat \Delta^{(-k)}_{plm}= 0.
 \end{align*}
 and
\begin{align*}
& \E\sbr{\br{n^{-1}\sum_{i=1}^n\p{1-\Gamma_i}\p{X_i - \E\sbr{X_i}}^\top \widehat \Delta^{(-k)}_{plm}}^2 \mid  \mathcal{D}'_{\mathcal{I}_{-k}}, \Gamma_{1:n} }\\
    &\qquad\leq n^{-2}\sum_{i=1}^n \E\sbr{ \p{1-\Gamma_i} \p{X_i^\top \widehat \Delta^{(-k)}_{plm}}^2 \mid  \mathcal{D}'_{\mathcal{I}_{-k}}, \Gamma_{1:n} }=\p{n^{-2}\sum_{i=1}^n\p{1-\Gamma_i}}\widehat \Delta^{(-k),\top}_{plm}\E\sbr{X_iX_i^\top}\widehat \Delta^{(-k)}_{plm}\\
    &\qquad\leq \p{n^{-1}\sum_{i=1}^n\p{1-\Gamma_i}}^2 \widehat \Delta^{(-k),\top}_{plm}\E\sbr{X_iX_i^\top}\widehat \Delta^{(-k)}_{plm}=\p{1-\widehat \gamma}^2\widehat \Delta^{(-k),\top}_{plm}\E\sbr{X_iX_i^\top}\widehat \Delta^{(-k)}_{plm}.
\end{align*}

We also have another bound:
\begin{align*}
    &\E\sbr{\br{n^{-1}\sum_{i=1}^n\p{1-\Gamma_i}\p{X_i - \E\sbr{X_i}}^\top \widehat \Delta^{(-k)}_{plm}}^2 \mid  \mathcal{D}'_{\mathcal{I}_{-k}}, \Gamma_{1:n} }\\
    &\qquad\leq \p{n^{-2}\sum_{i=1}^n\p{1-\Gamma_i}}\widehat \Delta^{(-k),\top}_{plm}\E\sbr{X_iX_i^\top}\widehat \Delta^{(-k)}_{plm}= n^{-1}\p{1-\widehat \gamma}\widehat \Delta^{(-k),\top}_{plm}\E\sbr{X_iX_i^\top}\widehat \Delta^{(-k)}_{plm}.
\end{align*}
By \eqref{1-widehat gamma rate} and \eqref{plm bound for the quadratic linear form}, 
\begin{align*}
    n^{-1}\sum_{i=1}^n\p{1-\Gamma_i}\p{X_i - \E\sbr{X_i}}^\top \widehat \Delta^{(-k)}_{plm} = o_p\p{\p{1-\gamma_n}\land \p{n^{-1/2}\p{1-\gamma_n}^{1/2}}}.
\end{align*}
By \eqref{variance gamma(k) -1 ratio rate}, we have
\begin{align*}
    \frac{1-\widehat \gamma^{(k)}}{1-\widehat \gamma} - 1 = O_p\p{\gamma_n\p{n\p{1-\gamma_n}}^{-1}+\gamma_n^{1/2}\p{n\p{1-\gamma_n}}^{-1/2}},
\end{align*}
then
\begin{align*}
    T_{2ak} &= \p{\frac{1-\widehat \gamma^{(k)}}{1-\widehat \gamma}-1}\br{n^{-1}\sum_{i=1}^n\p{1-\Gamma_i}\p{X_i - \E\sbr{X_i}}^\top \widehat \Delta^{(-k)}_{plm}}+ n^{-1}\sum_{i=1}^n\p{1-\Gamma_i}\p{X_i - \E\sbr{X_i}}^\top \widehat \Delta^{(-k)}_{plm}\\
    &=o_p\sbr{\br{\gamma_n\p{n\p{1-\gamma_n}}^{-1}}\p{1-\gamma_n}+\br{\gamma_n^{1/2}\p{n\p{1-\gamma_n}}^{-1/2}}\p{1-\gamma_n} +n^{-1/2}\p{1-\gamma_n}^{1/2}}\\
    &= o_p\p{\p{\gamma_n \vee \p{1-\gamma_n}}^{1/2}n^{-1/2}}=o_p\p{n^{-1/2}}.
\end{align*}
Consider $T_{2bk}$, we have
\begin{align*}
    \E\sbr{T_{2bk} \mid \mathcal{L}_{\mathcal{I}_{-k}} } &= n_k^{-1}\sum_{i \in \mathcal{I}_k} \E\sbr{\p{1-\Gamma_i}\p{\widehat D^{(-k)}\p{Z_i} - \E_Z\sbr{\widehat D^{(-k)}\p{Z}}}\mid\mathcal{L}_{\mathcal{I}_{-k}} }\\
    &=n_k^{-1}\sum_{i \in \mathcal{I}_k} \E\sbr{\p{1-\Gamma_i}}\E\sbr{\p{\widehat D^{(-k)}\p{Z_i} - \E_Z\sbr{\widehat D^{(-k)}\p{Z}}}\mid \mathcal{L}_{\mathcal{I}_{-k}} }=0,
\end{align*}
and 
\begin{align*}
    \E\sbr{T_{2bk}^2 \mid \mathcal{L}_{\mathcal{I}_{-k}} } &= n_k^{-2}\sum_{i \in \mathcal{I}_k}\E\sbr{\p{1-\Gamma_i}\p{\widehat D^{(-k)}\p{Z_i} - \E_Z\sbr{\widehat D^{(-k)}\p{Z}}}^2\mid \mathcal{L}_{\mathcal{I}_{-k}} }\\
    &\leq n_k^{-2}\sum_{i \in \mathcal{I}_k}\E\sbr{\p{1-\Gamma_i}}\E\sbr{\p{\widehat D^{(-k)}\p{Z_i}}^2\mid \mathcal{L}_{\mathcal{I}_{-k}} }= \p{1-\gamma_n}n_{k}^{-1}\E_Z\sbr{\p{\widehat D^{(-k)}\p{Z}}^2 }.
\end{align*}
Thus, under Assumption~\ref{assumption mcar plm (a)}(b),
$
    T_{2bk} = o_p\p{\p{1-\gamma_n}^{1/2}n^{-1/2}}.
$
Together with $T_{2ak} = o_p\p{n^{-1/2}}$, we have
$
    T_2 = K^{-1}\sum_{k=1}^kT_{2ak} +  K^{-1}\sum_{k=1}^kT_{2bk} = o_p\p{n^{-1/2}}.
$

For $T_3$, by Lemma~\ref{lemma oracle plm properties}, we have
$
    \E\sbr{n^{-1}\sum_{i=1}^n\Gamma_i\epsilon_i} = n^{-1}\sum_{i=1}^n\E\sbr{\Gamma_i\epsilon_i} = \E\sbr{\Gamma_i}\E\sbr{\epsilon_i} = 0
$
and 
\begin{align*}
\E\sbr{\p{n^{-1}\sum_{i=1}^n\Gamma_i\epsilon_i}^2} &= n^{-2}\sum_{i=1}^n \E\sbr{\Gamma_i\epsilon_i^2}=n^{-1}\E\sbr{\Gamma_i}\E\sbr{\epsilon_i^2}=n^{-1} \gamma_n \E\sbr{\epsilon_i^2}\leq C\sigma_{\epsilon}^2n^{-1}\gamma_n.
\end{align*}
Then by Chebyshev's inequality,
$
    n^{-1}\sum_{i=1}^n\Gamma_i\epsilon_i = O_p\p{n^{-1/2}\gamma_n^{1/2}}.
$
By Lemma~\ref{lemma gamma ratio convergence}, we have $\widehat \gamma^{-1} - \gamma_n^{-1} = O_p\p{\gamma_n^{-1}\p{n\gamma_n}^{-1/2}\p{1-\gamma_n}^{1/2}}$, then
$
    T_3 = \p{\widehat \gamma^{-1} - \gamma_n^{-1}}n^{-1}\sum_{i=1}^n\Gamma_i\epsilon_i = O_p\p{\p{1-\gamma_n}^{1/2}\p{n\gamma_n}^{-1}}.
$

In conclusion, we have
\begin{align*}
    \widehat \theta_{plm} - \widetilde \theta_{plm}
    &= T_1 + T_2 + T_3\\
    &= o_p\p{\p{n\gamma_n}^{-1/2}} + o_p\p{n^{-1/2}} + O_p\p{\p{1-\gamma_n}^{1/2}\p{n\gamma_n}^{-1}}\\
    &= o_p\p{\p{n\gamma_n}^{-1/2}}.
\end{align*}
In addition, by Lemma~\ref{lemma the plm oracle asymptotics under MCAR}, we have
\begin{align*}
    \widehat \theta_{plm} -\theta &= \widehat \theta_{plm}- \widetilde \theta_{plm} + \widetilde \theta_{plm} - \theta = o_p\p{\p{n\gamma_n}^{-1/2}} + O_p\p{\p{n\gamma_n}^{-1/2}} = O_p\p{\p{n\gamma_n}^{-1/2}}.
\end{align*}
\end{proof}

\begin{lemma}\label{lemma plm variance estimator consistency}
    Let Assumptions~\ref{assumption mcar} and \ref{assumption mcar plm (a)} hold. If $n\gamma_n \gg 1$, then as $n, d \rightarrow \infty$,
$
        \widehat \sigma^2_{plm} = \sigma^2_{plm}\br{1 + o_p\p{1}}.
$
\end{lemma}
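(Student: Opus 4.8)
The plan is to adapt the proof of Lemma \ref{lemma variance consistency under MCAR}: introduce an oracle version $\widetilde\sigma^2_{plm}$ of $\widehat\sigma^2_{plm}$ with the nuisance estimates replaced by their population targets, show $\widetilde\sigma^2_{plm}=\sigma^2_{plm}(1+o_p(1))$, and then show $\widehat\sigma^2_{plm}-\widetilde\sigma^2_{plm}=o_p(\gamma_n^{-1})$; since $\sigma^2_{plm}\asymp\gamma_n^{-1}$ by Lemma \ref{lemma the plm oracle asymptotics under MCAR}, these combine to the claim. Writing $\mu^*(W_i)=X_i^\top\beta^*_{plm}+f^*(Z_i)$ and $g^{plm}_i=\Gamma_i\mu^*(W_i)+\gamma_n^{-1}\Gamma_i\epsilon_i$, the natural oracle is
\[
\widetilde\sigma^2_{plm}=n^{-1}\sum_{i=1}^n\p{g^{plm}_i}^2+n^{-1}\sum_{i=1}^n\p{1-\Gamma_i}\p{\mu^*(W_i)}^2-\theta^2 ,
\]
and by Lemma \ref{lemma oracle plm properties} ($\E[\epsilon\mid Z]=0$, $\E[X\epsilon]=\mathbf 0$, hence $\E[\mu^*(W)\epsilon]=0$), Assumption \ref{assumption mcar}, and $\Gamma\perp Z$, one checks $\E[\widetilde\sigma^2_{plm}]=\E[(\mu^*(W))^2]+\gamma_n^{-1}\E[\epsilon^2]-\theta^2=\sigma^2_{plm}$.

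The oracle step is a second-moment computation: $\widetilde\sigma^2_{plm}-\sigma^2_{plm}=n^{-1}\sum_i\{T_i-\E[T_i]\}$ with $T_i=(g^{plm}_i)^2+(1-\Gamma_i)(\mu^*(W_i))^2$ satisfying $\E[T_i^2]=O(\gamma_n^{-3})$ by the fourth-moment bounds of Assumption \ref{assumption mcar plm (a)}(a) (the dominant term being $\gamma_n^{-2}\Gamma_i\epsilon_i^2$), so Chebyshev's inequality gives $\widetilde\sigma^2_{plm}-\sigma^2_{plm}=O_p(n^{-1/2}\gamma_n^{-3/2})=o_p(\gamma_n^{-1})$, as in the closing computation of Lemma \ref{lemma variance consistency under MCAR}. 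For the estimation-error step, decompose $\widehat\sigma^2_{plm}-\widetilde\sigma^2_{plm}=B_1+B_2-B_3$, where $B_1$ collects the first-line (primary) squares, $B_2$ the external terms built from $\bar\Xi_0$, $(\widehat f^{(-k)})^2$ and the cross term, and $B_3=\widehat\theta^2_{plm}-\theta^2=(\widehat\theta_{plm}-\theta)^2+2\theta(\widehat\theta_{plm}-\theta)=O_p((n\gamma_n)^{-1/2})=o_p(\gamma_n^{-1})$ by Lemma \ref{lemma plm mean estimator consistency}. Term $B_1$ is handled exactly as $B_1$ in Lemma \ref{lemma variance consistency under MCAR}: the $i$th first-line summand $\widehat g^{(-k)}_i$ satisfies $\widehat g^{(-k)}_i-g^{plm}_i=\Gamma_i(1-\widehat\gamma_k^{-1})r^{(-k)}_i+\Gamma_i(\widehat\gamma_k^{-1}-\gamma_n^{-1})\epsilon_i$ with $r^{(-k)}_i=X_i^\top(\widehat\beta^{(-k)}_{plm}-\beta^*_{plm})+(\widehat f^{(-k)}(Z_i)-f^*(Z_i))$; one bounds $n_k^{-1}\sum_{i\in\mathcal I_k}\Gamma_i(r^{(-k)}_i)^2=\gamma_n\,o_p(1)$ by a conditional-expectation argument (Lemma \ref{lemma convergence of conditional random varaible}) from $\|\widehat\beta^{(-k)}_{plm}-\beta^*_{plm}\|_2=o_p(1)$, $\E_Z[(\widehat f^{(-k)}(Z)-f^*(Z))^2]=o_p(1)$ (Assumption \ref{assumption mcar plm (a)}(b)) and $\sup_v\E[(X^\top v)^2]\le\kappa_u$, and uses $\widehat\gamma_k^{-1}-\gamma_n^{-1}=O_p(\gamma_n^{-1}(n\gamma_n)^{-1/2})$ (Lemmas \ref{lemma concentrate gamma}, \ref{lemma gamma ratio convergence}); together with $n^{-1}\sum_i(g^{plm}_i)^2=O_p(\gamma_n^{-1})$, $a^2-b^2=(a-b)^2-2b(a-b)$ and Cauchy--Schwarz this gives $B_1=o_p(\gamma_n^{-1})$.

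The crux is $B_2$. Because $\widehat\beta^{(-k)}_{plm}$ and $\widehat f^{(-k)}$ are functions of the primary data in $\mathcal I_{-k}$ only, they are conditionally independent of $\{(1-\Gamma_j)X_j\}_{j=1}^n$ — hence of $\bar X_0,\bar\Xi_0$ — given $\{\Gamma_j\}_{j=1}^n$ (Lemma \ref{lemma conditional independence}, as at \eqref{plm conditional independence between 1-Gamma_i X_i and L_I_-k}). Using the exact identity $n^{-1}\sum_i(1-\Gamma_i)(X_i^\top\beta^*_{plm})^2=(1-\widehat\gamma)\beta^{*\top}_{plm}\bar\Xi_0\beta^*_{plm}$ and the MCAR relation $\E[XX^\top\mid\Gamma=0]=\E[XX^\top]$, the $\bar\Xi_0$-term of $\widehat\sigma^2_{plm}$ matches its oracle counterpart up to $o_p(1)$ after conditioning on $(\{\Gamma_j\},\mathcal L_{\mathcal I_{-k}})$ and applying Lemma \ref{lemma convergence of conditional random varaible} with $\sup_v\E[(X^\top v)^2]\le\kappa_u$, $\E[(X^\top\beta^*_{plm})^4]\le\sigma^4$ and the consistency of $\widehat\beta^{(-k)}_{plm}$; the $(\widehat f^{(-k)})^2$-term is analogous via $a^2-b^2=(a-b)^2+2b(a-b)$, $\E_Z[(\widehat f^{(-k)}(Z)-f^*(Z))^2]=o_p(1)$ and $\E[(f^*(Z))^4]\le\sigma_f^4$. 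The genuine obstacle is the external cross term $2n^{-1}\sum_k\sum_{i\in\mathcal I_k}(1-\Gamma_i)(\bar X_0^\top\widehat\beta^{(-k)}_{plm})\widehat f^{(-k)}(Z_i)$: $\bar X_0^\top\widehat\beta^{(-k)}_{plm}$ is constant within a fold while its oracle counterpart carries the individual factor $X_i^\top\beta^*_{plm}$, so matching the two requires, beyond the conditional-independence structure, a partially-linear-model orthogonality argument ensuring that replacing $X_i^\top\beta^*_{plm}$ by the external mean $\bar X_0^\top\beta^*_{plm}$ in this product introduces only negligible bias; granted that, a Cauchy--Schwarz step with $\E_Z[(\widehat f^{(-k)}(Z)-f^*(Z))^2]=o_p(1)$, $\|\widehat\beta^{(-k)}_{plm}-\beta^*_{plm}\|_2=o_p(1)$ and the fourth-moment bounds yields $B_2=o_p(\gamma_n^{-1})$. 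Assembling $B_1,B_2,B_3$ with the oracle step gives $\widehat\sigma^2_{plm}=\sigma^2_{plm}(1+o_p(1))$.
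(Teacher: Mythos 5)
Your overall architecture matches the paper's: the same oracle quantity $\widetilde\sigma^2_{plm}$ (your $n^{-1}\sum_i (g^{plm}_i)^2+n^{-1}\sum_i(1-\Gamma_i)(\mu^*(W_i))^2-\theta^2$ is algebraically identical to the paper's $n^{-1}\sum_i(\mu^*(W_i)+\gamma_n^{-1}\Gamma_i\epsilon_i)^2-\theta^2$), the same Chebyshev argument giving $\widetilde\sigma^2_{plm}-\sigma^2_{plm}=O_p(n^{-1/2}\gamma_n^{-3/2})$, and the same term-by-term comparison of $\widehat\sigma^2_{plm}-\widetilde\sigma^2_{plm}$ exploiting the conditional independence of the fold-$(-k)$ nuisances from $(\bar X_0,\bar\Xi_0)$ given $\{\Gamma_j\}_{j=1}^n$. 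The paper merely splits your $B_2$ into three pieces ($R_2$, $R_4$, $R_5$); your handling of the primary squares, the $\bar\Xi_0$ term, the $(\widehat f^{(-k)})^2$ term and $\widehat\theta_{plm}^2-\theta^2$ coincides with its handling of $R_1$, $R_4$, $R_2$, $R_3$.

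The proof is nonetheless incomplete at exactly the point you flag and then wave through with ``granted that'': the external cross term. No ``partially-linear-model orthogonality'' is available to justify replacing $X_i^\top\beta^*_{plm}$ by $\bar X_0^\top\beta^*_{plm}$ inside the product with $f^*(Z_i)$. Lemma \ref{lemma oracle plm properties} gives $\E[X\epsilon]=\boldsymbol{0}$ and $\E[\epsilon\mid Z]=0$, but neither controls $\Cov(X^\top\beta^*_{plm},f^*(Z))$, which is what the substitution costs: since $\bar X_0^\top\beta^*_{plm}$ is a single scalar multiplying the external empirical mean of $f^*(Z_i)$, the estimated cross term converges (under MCAR) to $2(1-\gamma_n)\,\E[X^\top\beta^*_{plm}]\,\E[f^*(Z)]$, while the oracle cross term in $\widetilde\sigma^2_{plm}$ converges to $2(1-\gamma_n)\,\E[(X^\top\beta^*_{plm})f^*(Z)]$. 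The difference, $2(1-\gamma_n)\Cov(X^\top\beta^*_{plm},f^*(Z))$, is $O(1)$ and generically nonzero whenever $X$ and $Z$ are dependent (which Assumption \ref{assumption mcar plm (a)} permits, requiring only $\inf_v\E[(\widetilde X^\top v)^2]\geq\kappa_l$). Because the lemma assumes only $n\gamma_n\to\infty$ and allows $\gamma_n\asymp1$, in which case $\sigma^2_{plm}\asymp\gamma_n^{-1}\asymp1$, an $O(1)$ discrepancy is not $o_p(\sigma^2_{plm})$; closing this step requires an actual argument (or an extra condition such as $\gamma_n\to0$ or $\Cov(X^\top\beta^*_{plm},f^*(Z))=0$), not a ``granted that.'' You have correctly located the crux --- indeed the paper's own proof of $R_5$ silently swaps the subtrahend $(X_i^\top\beta^*_{plm})f^*(Z_i)$ for $(\bar X_0^\top\beta^*_{plm})f^*(Z_i)$ between its definition and its first display --- but identifying the obstacle is not the same as overcoming it, so as written the proof has a genuine gap.
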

\begin{proof}
    Let $\epsilon_i = Y_i -  X_i^\top \beta^*_{plm} - f^*\p{Z_i}$, 
    \begin{align*}
        g_{i, plm}^{(-k)} &= \Gamma_i\p{X_i^\top\widehat \beta^{(-k)}_{plm} + \widehat f^{(-k)}\p{Z_i} } + \frac{\Gamma_i\p{Y_i - X_i^\top\widehat \beta^{(-k)}_{plm} - \widehat f^{(-k)}\p{Z_i} }}{\sum_{i\in\mathcal{I}_{k}} \Gamma_i},\\
        g_{i, plm}^* &= \Gamma_i \p{X_i^\top \beta^*_{plm} + f^*\p{Z_i}} + \frac{\Gamma_i}{\gamma_n}\epsilon_i,\quad
         \widetilde \sigma_{plm}^2 = n^{-1}\sum_{i=1}^n \p{X_i^\top \beta^*_{plm} + f^*\p{Z_i} + \frac{\Gamma_i}{\gamma_n}\epsilon_i}^2 - \theta^2.
    \end{align*}
Then we have
    \begin{align*}
        \widetilde \sigma_{plm}^2 &=n^{-1}\sum_{i=1}^n \p{g_{i, plm}^*}^2 + n^{-1}\sum_{i=1}^n\p{1-\Gamma_i}\p{X_i^\top \beta^*_{plm}}^2+ n^{-1}\sum_{i=1}^n \p{1-\Gamma_i}\p{ f^*\p{Z_i}}^2\\
        &\qquad + 2n^{-1}\sum_{i=1}^n \p{1-\Gamma_i}\p{X_i^\top \beta^*_{plm}} f^*\p{Z_i} - \theta^2.
    \end{align*}
Note that 
\begin{align*}
    \widehat \sigma_{plm}^2 &= n^{-1}\sum_{k=1}^K\sum_{i\in\mathcal{I}_{k}} \br{\Gamma_i\p{X_i^\top\widehat \beta^{(-k)}_{plm} + \widehat f^{(-k)}\p{Z_i} } + \frac{\Gamma_i\p{Y_i - X_i^\top\widehat \beta^{(-k)}_{plm} - \widehat f^{(-k)}\p{Z_i} }}{\sum_{i\in\mathcal{I}_{k}} \Gamma_i} }^2 \notag \\
    &\qquad+ n^{-1}\sum_{k=1}^K\sum_{i\in\mathcal{I}_{k}}\p{1-\Gamma_i} \widehat \beta^{(-k), \top}_{plm} \Bar{\Xi}_0 \widehat \beta^{(-k)}_{plm} + n^{-1}\sum_{k=1}^K\sum_{i\in\mathcal{I}_{k}}\p{1-\Gamma_i} \p{\widehat f^{(-k)}\p{Z_i}}^2 \notag \\
    &\qquad+ 2n^{-1}\sum_{k=1}^K\sum_{i\in\mathcal{I}_{k}}\p{1-\Gamma_i} \p{X_i^\top\widehat \beta^{(-k)}_{plm}}\widehat f^{(-k)}\p{Z_i}  - \p{\widehat \theta_{plm}}^2.
\end{align*}
Thus, 
\begin{align*}
    \widehat \sigma_{plm}^2 - \widetilde \sigma_{plm}^2 =& R_1 + R_2 - R_3 + R_4  + 2R_5,
\end{align*}
where 
\begin{align*}
    R_1 &= n^{-1}\sum_{k=1}^K\sum_{i\in\mathcal{I}_{k}}\br{ \p{g_{i, plm}^{(-k)}}^2 - \p{g_{i, plm}^*}^2},\quad
    R_2 = n^{-1}\sum_{k=1}^K\sum_{i\in\mathcal{I}_{k}}\p{1-\Gamma_i} \br{\p{\widehat f^{(-k)}\p{Z_i}}^2 - \p{f^*\p{Z_i}}^2},\\
    R_3 &= \p{\widehat \theta_{plm}}^2 - \theta^2,\quad
R_4 = n^{-1}\sum_{k=1}^K\sum_{i\in\mathcal{I}_{k}} \p{1-\Gamma_i} \p{\widehat \beta^{(-k), \top}_{plm} \Bar{\Xi}_0 \widehat \beta^{(-k)}_{plm} - \p{X_i^\top \beta^*_{plm}}^2},\\
    R_5 &= n^{-1}\sum_{k=1}^K\sum_{i\in\mathcal{I}_{k}}\p{1-\Gamma_i} \br{\p{\bar X_0^\top\widehat \beta^{(-k)}_{plm}}\widehat f^{(-k)}\p{Z_i} - \p{X_i^\top \beta^*_{plm}} f^*\p{Z_i}}.
\end{align*}
Let $\widehat \gamma = n^{-1}\sum_{i=1}^n \Gamma_i$, $\widehat \gamma^{\p{k}} = n_k^{-1}\sum_{i\in \mathcal{I}_{k}}^n \Gamma_i$, $\widehat \Delta^{(-k)}_{plm} = \widehat \beta^{(-k)}_{plm} - \beta^*_{plm}$, and $\widehat D^{(-k)} = \widehat f^{(-k)} - f^*$.
Since
\begin{align*}
    &\p{g_{i, plm}^{(-k)} - g_{i, plm}^*}^2 = \Gamma_i \br{ \p{1-\frac{1}{\widehat \gamma^{(k)}}}\p{X_i^\top \widehat \Delta^{(-k)}_{plm} + \widehat D^{(-k)}\p{Z_i}} + \p{\frac{1}{\widehat \gamma^{(k)}} - \frac{1}{\gamma_n}}\epsilon_i }^2\\
    &\qquad\leq 3\Gamma_i \p{1-\frac{1}{\widehat \gamma^{(k)}}}^2\p{X_i^\top \widehat \Delta^{(-k)}_{plm}}^2 + 3\Gamma_i \p{1-\frac{1}{\widehat \gamma^{(k)}}}^2\p{\widehat D^{(-k)}\p{Z_i}}^2 + 3\Gamma_i \p{\frac{1}{\widehat \gamma^{(k)}} - \frac{1}{\gamma_n}}^2\epsilon_i^2,
\end{align*}
we have
\begin{align*}
    n_k^{-1}\sum_{i\in\mathcal{I}_{k}}\p{g_{i, plm}^{(-k)} - g_{i, plm}^*}^2 &\leq 3\p{1-\frac{1}{\widehat \gamma^{(k)}}}^2 n_k^{-1}\sum_{i\in\mathcal{I}_{k}}\Gamma_i\p{X_i^\top \widehat \Delta^{(-k)}_{plm}}^2\\
    &\qquad + 3\p{1-\frac{1}{\widehat \gamma^{(k)}}}^2 n_k^{-1}\sum_{i\in\mathcal{I}_{k}}\Gamma_i\p{\widehat D^{(-k)}\p{Z_i}}^2+ 3\p{\frac{1}{\widehat \gamma^{(k)}} - \frac{1}{\gamma_n}}^2 n_k^{-1}\sum_{i\in\mathcal{I}_{k}}\Gamma_i\epsilon_i^2.
\end{align*}
By \eqref{plm bound for the quadratic linear form}, we have
\begin{align*}
    \E\sbr{n_k^{-1}\sum_{i\in\mathcal{I}_{k}}\Gamma_i\p{X_i^\top \widehat \Delta^{(-k)}_{plm}}^2 \mid \mathcal{L}_{\mathcal{I}_{-k}}} &= n_k^{-1}\sum_{i\in\mathcal{I}_{k}}\E\sbr{\Gamma_i}\E\sbr{\p{X_i^\top \widehat \Delta^{(-k)}_{plm}}^2 \mid \mathcal{L}_{\mathcal{I}_{-k}}} \leq \gamma_n \sigma\norm{\widehat \Delta^{(-k)}_{plm}}_2^2.
\end{align*}
Thus, by Assumption~\ref{assumption mcar plm (a)}(b) and Lemma~\ref{lemma convergence of conditional random variable},
$
    n_k^{-1}\sum_{i\in\mathcal{I}_{k}}\Gamma_i\p{X_i^\top \widehat \Delta^{(-k)}_{plm}}^2 = o_p\p{\gamma_n}.
$
Similarly, since
\begin{align*}
\E\sbr{n_k^{-1}\sum_{i\in\mathcal{I}_{k}}\Gamma_i\p{\widehat D^{(-k)}\p{Z_i}}^2 \mid \mathcal{L}_{\mathcal{I}_{-k}} } &= n_k^{-1}\sum_{i\in\mathcal{I}_{k}}\E\sbr{\Gamma_i}\E\sbr{\p{\widehat D^{(-k)}\p{Z_i}}^2 \mid \widehat D^{(-k)} }\\
    &=\gamma_n\E_Z\sbr{\p{\widehat f^{(-k)}\p{Z} - f^*\p{Z}}^2},
\end{align*}
by Assumption~\ref{assumption mcar plm (a)}(b) and Lemma~\ref{lemma convergence of conditional random variable},
$
    n_k^{-1}\sum_{i\in\mathcal{I}_{k}}\Gamma_i\p{\widehat D^{(-k)}\p{Z_i}}^2 = o_p\p{\gamma_n}.
$
Under Assumption~\ref{assumption mcar plm (a)}(a), for some constant $C_1>0$,
$
\E\sbr{n_k^{-1}\sum_{i\in\mathcal{I}_{k}}\Gamma_i\epsilon_i^2} = n_k^{-1}\sum_{i\in\mathcal{I}_{k}}\E\sbr{\Gamma_i}\E\sbr{\epsilon_i^2}\leq C_1\sigma_\epsilon^2 \gamma_n.
$
Then by Markov inequality, 
$
    n_k^{-1}\sum_{i\in\mathcal{I}_{k}}\Gamma_i\epsilon_i^2 = O_p\p{\gamma_n}.
$
By Lemma~\ref{lemma gamma ratio convergence}, we have 
\begin{align*}
    \frac{1}{\gamma_n}-\frac{1}{\widehat \gamma^{(k)}} = O_p\p{\gamma_n^{-1}\p{n\gamma_n}^{-1/2}\p{1-\gamma_n}^{1/2}},
\end{align*}
which implies
\begin{align*}
    1-\frac{1}{\widehat \gamma^{(k)}} = 1 - \frac{1}{\gamma_n} + \frac{1}{\gamma_n}-\frac{1}{\widehat \gamma^{(k)}} = O_p\p{\gamma_n^{-1}} + O_p\p{\gamma_n^{-1}\p{n\gamma_n}^{-1/2}\p{1-\gamma_n}^{1/2}} =  O_p\p{\gamma_n^{-1}}.
\end{align*}
Combine all the rates together,
\begin{align*}
    n_k^{-1}\sum_{i\in\mathcal{I}_{k}}\p{g_{i, plm}^{(-k)} - g_{i, plm}^*}^2 &= o_p\p{\gamma_n^{-2}\gamma_n} + O_p\p{\gamma_n^{-2}\p{n\gamma_n}^{-1}\p{1-\gamma_n}\gamma_n}=o_p\p{\gamma_n^{-1}},
\end{align*}
which implies 
\begin{align}\label{plm variance estimator rate of sum (g-k - g)^2}
    n^{-1}\sum_{k=1}^K\sum_{i\in\mathcal{I}_{k}}\p{g_{i, plm}^{(-k)} - g_{i, plm}^*}^2 &= K^{-1}\sum_{k=1}^Kn_k^{-1}\sum_{i\in\mathcal{I}_{k}}\p{g_{i, plm}^{(-k)} - g_{i, plm}^*}^2 =o_p\p{\gamma_n^{-1}}.
\end{align}
Moreover, since $X_i^\top \beta^*_{plm}$, $f^*(Z_i)$, and $\epsilon_i$ have bounded second moments, there exists some constant $C_2>0$ such that
\begin{align*}
\E\sbr{n^{-1}\sum_{i=1}^n \p{g_{i, plm}^*}^2}&= n^{-1}\sum_{i=1}^n \E\sbr{\Gamma_i}\E\sbr{ \p{X_i^\top \beta^*_{plm} + f^*\p{Z_i} + \gamma_n^{-1}\epsilon_i }^2}\\
&\leq \gamma_n n^{-1}\sum_{i=1}^n \br{3\E\sbr{ \p{X_i^\top \beta^*_{plm}}^2}+ 3\E\sbr{ \p{f^*\p{Z_i}}^2} + 3\E\sbr{ \p{\gamma_n^{-1}\epsilon_i }^2}}\\
&\leq \gamma_n\p{3C_2\sigma^2 + 3C_2\sigma_f^2 + 3C_2\gamma_n^{-2}\sigma_\epsilon^2},
\end{align*}
then by Markov inequality,
$
    n^{-1}\sum_{i=1}^n \p{g_{i, plm}^*}^2 = O_p\p{\gamma_n^{-1}}.
$

For $R_1$, note that $a^2 - b^2 = \p{a-b}^2 + 2b\p{a-b}$, then 
\begin{align*}
    R_1 &= n^{-1}\sum_{k=1}^K\sum_{i\in\mathcal{I}_{k}}\p{g_{i, plm}^{(-k)} - g_{i, plm}^*}^2 + 2n^{-1}\sum_{k=1}^K\sum_{i\in\mathcal{I}_{k}}g_{i, plm}^*\p{g_{i, plm}^{(-k)} - g_{i, plm}^*}\\
    &\leq n^{-1}\sum_{k=1}^K\sum_{i\in\mathcal{I}_{k}}\p{g_{i, plm}^{(-k)} - g_{i, plm}^*}^2 + 2\br{n^{-1}\sum_{i=1}^n \p{g_{i, plm}^*}^2 }^{1/2}\br{n^{-1}\sum_{k=1}^K\sum_{i\in\mathcal{I}_{k}}\p{g_{i, plm}^{(-k)} - g_{i, plm}^*}^2}^{1/2}\\
    &=o_p\p{\gamma_n^{-1} } + o_p\p{\gamma_n^{-1/2}\gamma_n^{-1/2}}=o_p\p{\gamma_n^{-1} }.
\end{align*}

For $R_2$, we have
\begin{align*}
    R_2 &= n^{-1}\sum_{k=1}^K\sum_{i\in\mathcal{I}_{k}}\p{1-\Gamma_i} \p{\widehat f^{(-k)}\p{Z_i} - f^*\p{Z_i}}^2  + 2n^{-1}\sum_{k=1}^K\sum_{i\in\mathcal{I}_{k}}\p{1-\Gamma_i}f^*\p{Z_i}\p{\widehat f^{(-k)}\p{Z_i} - f^*\p{Z_i}}\\
    &\leq n^{-1}\sum_{k=1}^K\sum_{i\in\mathcal{I}_{k}}\p{1-\Gamma_i} \p{\widehat f^{(-k)}\p{Z_i} - f^*\p{Z_i}}^2\\
    &\qquad + 2\br{n^{-1}\sum_{i=1}^n\p{1-\Gamma_i}\p{f^*\p{Z_i}}^2n^{-1}\sum_{k=1}^K\sum_{i\in\mathcal{I}_{k}}\p{1-\Gamma_i}\p{\widehat f^{(-k)}\p{Z_i} - f^*\p{Z_i}}^2}^{1/2}.
\end{align*}
Since
\begin{align*}
    &\E\sbr{n_k^{-1}\sum_{i\in\mathcal{I}_{k}}\p{1-\Gamma_i} \p{\widehat f^{(-k)}\p{Z_i} - f^*\p{Z_i}}^2 \mid \mathcal{L}_{\mathcal{I}_{-k}}}\\
    &\qquad= n_k^{-1}\sum_{i\in\mathcal{I}_{k}}\E\sbr{1-\Gamma_i}\E\sbr{ \p{\widehat f^{(-k)}\p{Z_i} - f^*\p{Z_i}}^2 \mid \widehat f^{(-k)}}= (1-\gamma_n)\E_Z\sbr{\p{\widehat f^{(-k)}\p{Z} - f^*\p{Z}}^2},
\end{align*}
under Assumption~\ref{assumption mcar plm (a)}(b), by Lemma~\ref{lemma convergence of conditional random variable}, it follows that
\begin{align*}
&n^{-1}\sum_{k=1}^K\sum_{i\in\mathcal{I}_{k}}\p{1-\Gamma_i} \p{\widehat f^{(-k)}\p{Z_i} - f^*\p{Z_i}}^2 \\
&\qquad= K^{-1}\sum_{k=1}^Kn_k^{-1}\sum_{i\in\mathcal{I}_{k}}\p{1-\Gamma_i} \p{\widehat f^{(-k)}\p{Z_i} - f^*\p{Z_i}}^2 = o_p\p{1-\gamma_n}.
\end{align*}
Since $f^*\p{Z_i}$ has a bounded second moment, we have
$$
    \E\sbr{n^{-1}\sum_{i=1}^n\p{1-\Gamma_i}\p{f^*\p{Z_i}}^2} = n^{-1}\sum_{i=1}^n\E\sbr{1-\Gamma_i}\E\sbr{\p{f^*\p{Z_i}}^2} \leq C_2\sigma_f^2 \p{1-\gamma_n},
$$
then by Markov inequality, 
$
    n^{-1}\sum_{i=1}^n\p{1-\Gamma_i}\p{f^*\p{Z_i}}^2 = O_p\p{1-\gamma_n}.
$
Thus,
\begin{align*}
    R_2 = o_p\p{1-\gamma_n} + o_p\p{\p{1-\gamma_n}^{1/2}\p{1-\gamma_n}^{1/2}} = o_p\p{1-\gamma_n}.
\end{align*}

For $R_3$, by Lemma~\ref{lemma plm mean estimator consistency} we have
\begin{align*}
    R_3 &= \p{\widehat \theta_{plm}}^2 - \theta^2=\p{\widehat \theta_{plm} - \theta}^2 + 2\theta\p{\widehat \theta_{plm} - \theta}=O_p\p{\p{n\gamma_n}^{-1/2} },
\end{align*}

For $R_4$, define
\begin{align*}
    R_{4k} := n_k^{-1}\sum_{i\in \mathcal{I}_k}\p{1-\Gamma_i}\widehat \beta^{(-k),\top}_{plm}\Bar \Xi_0 \widehat \beta^{(-k)}_{plm} - n_k^{-1}\sum_{i\in \mathcal{I}_k}(1-\Gamma_i)\p{X_i^\top \beta^*_{plm}}^2.
\end{align*}
Then we have
\begin{align*}
    R_{4k} &= \p{\frac{1-\widehat \gamma^{(k)}}{1-\widehat \gamma}} n^{-1}\sum_{i=1}^n \p{1-\Gamma_i}\p{X_i^\top \widehat \Delta^{(-k)}_{plm} + X_i^\top\beta^*_{plm}}^2 - n_k^{-1}\sum_{i\in \mathcal{I}_k}(1-\Gamma_i)\p{X_i^\top \beta^*_{plm}}^2\\
    &= \p{\frac{1-\widehat \gamma^{(k)}}{1-\widehat \gamma} - 1} \p{R_{4k,1} +  R_{4k,2} + R_{4k,3}} + R_{4k,1} + R_{4k,3} + R_{4k,4} + R_{4k,5},
\end{align*}
where
\begin{align*}
    R_{4k,1} &= n^{-1}\sum_{i=1}^n \p{1-\Gamma_i}\p{X_i^\top \widehat \Delta^{(-k)}_{plm}}^2,\quad
    R_{4k,2} = n^{-1} \sum_{i=1}^n \p{1-\Gamma_i}\p{ X_i^\top\beta^*_{plm} }^2,\\
    R_{4k,3} &= 2n^{-1} \sum_{i=1}^n \p{1-\Gamma_i}\p{X_i^\top \widehat \Delta^{(-k)}_{plm}} \p{X_i^\top\beta^*_{plm}},\\
    R_{4k,4} &= n^{-1} \sum_{i=1}^n \p{1-\Gamma_i}\p{ X_i^\top\beta^*_{plm}}^2 - \E\sbr{\p{1-\Gamma_i}\p{ X_i^\top\beta^*_{plm}}^2},\\
    R_{4k,5} &= \E\sbr{\p{1-\Gamma_i}\p{ X_i^\top\beta^*_{plm}}^2} - n_k^{-1}\sum_{i\in \mathcal{I}_k}(1-\Gamma_i)\p{X_i^\top \beta^*_{plm}}^2.
\end{align*}
With \eqref{plm bound for the quadratic linear form} and \eqref{plm conditional independence between 1-Gamma_i X_i and L_I_-k}, we have
\begin{align*}
    &\E\sbr{R_{4k,1} \mid \mathcal{D}'_{\mathcal{I}_{-k}}, \Gamma_{1:n}}  = \E\sbr{n^{-1}\sum_{i=1}^n \p{1-\Gamma_i}\p{X_i^\top \widehat \Delta^{(-k)}_{plm}}^2 \mid \mathcal{D}'_{\mathcal{I}_{-k}}, \Gamma_{1:n}} \\
    &\qquad =n^{-1}\sum_{i=1}^n \p{1-\Gamma_i}\widehat \Delta^{(-k),\top}_{plm}\E\sbr{X_iX_i^\top}\widehat \Delta^{(-k)}_{plm} \leq \tilde \kappa_u \p{1-\widehat \gamma}\norm{\widehat \Delta^{(-k)}_{plm}}_2^2.
\end{align*}
By \eqref{1-widehat gamma rate}, $1 - \widehat \gamma = O_p\p{1-\gamma_n}$, then under Assumption~\ref{assumption mcar plm (a)}(b) and by Lemma~\ref{lemma convergence of conditional random variable}, we have
\begin{align}\label{plm variance B_2k term 1}
    R_{4k,1} = n^{-1}\sum_{i=1}^n \p{1-\Gamma_i}\p{X_i^\top \widehat \Delta^{(-k)}_{plm}}^2 = o_p\p{1-\gamma_n}.
\end{align}
Since 
$\E\sbr{\p{1-\Gamma_i}\p{ X_i^\top\beta^*_{plm}}^2} = \p{1-\gamma_n}\E\sbr{\p{ X_i^\top\beta^*_{plm}}^2} \leq C_2\sigma^2 \p{1-\gamma_n},$
by Markov inequality,
\begin{align}
    R_{4k,2} = n^{-1} \sum_{i=1}^n \p{1-\Gamma_i}\p{ X_i^\top\beta^*_{plm}}^2 = O_p\p{1-\gamma_n }. \label{plm variance B_2k term 2}
\end{align}
Then with \eqref{plm variance B_2k term 1}, \eqref{plm variance B_2k term 2}, and by Cauchy-Schwarz inequality, 
\begin{align}
    \abs{R_{4k,3}} &\leq n^{-1} \sum_{i=1}^n \abs{\p{1-\Gamma_i}\p{X_i^\top \widehat \Delta^{(-k)}_{plm}} \p{X_i^\top\beta^*_{plm}} }\notag\\
    &\leq \br{n^{-1} \sum_{i=1}^n\p{1-\Gamma_i}\p{X_i^\top \widehat \Delta^{(-k)}_{plm}}^2}^{1/2}\br{n^{-1} \sum_{i=1}^n\p{1-\Gamma_i}\p{X_i^\top\beta^*_{plm}}^2}^{1/2}= o_p\p{1-\gamma_n}. \label{plm variance B_2k term 3}
\end{align}
Moreover, under Assumption~\ref{assumption mcar plm (a)}(a) the fourth moment of $X_i^\top \beta^*$ is bounded, then
\begin{align*}
    &\E\sbr{\br{n^{-1} \sum_{i=1}^n \p{1-\Gamma_i}\p{ X_i^\top\beta^*_{plm}}^2 - \E\sbr{\p{1-\Gamma_i}\p{ X_i^\top\beta^*_{plm}}^2}}^2}\\
    &\qquad\leq n^{-1} \E\sbr{\p{1-\Gamma_i}\p{ X_i^\top\beta^*_{plm}}^4} = n^{-1}\p{1-\gamma_n} \E\sbr{\p{ X_i^\top\beta^*_{plm}}^4}\leq \sigma^4n^{-1}\p{1-\gamma_n}.
\end{align*}
By Chebyshev's inequality, we have 
\begin{align}
    R_{4k,4} &= n^{-1} \sum_{i=1}^n \p{1-\Gamma_i}\p{ X_i^\top\beta^*_{plm}}^2 - \E\sbr{\p{1-\Gamma_i}\p{ X_i^\top\beta^*_{plm}}^2} = O_p\p{n^{-1/2}\p{1-\gamma_n}^{1/2}}. \label{plm variance B_2k term 4}
\end{align}
Similarly, we have
\begin{align}
    R_{4k,5} &= n_k^{-1} \sum_{i\in \mathcal{I}_k} \p{1-\Gamma_i}\p{ X_i^\top\beta^*_{plm}}^2 - \E\sbr{\p{1-\Gamma_i}\p{ X_i^\top\beta^*_{plm}}^2} = O_p\p{n^{-1/2}\p{1-\gamma_n}^{1/2}}. \label{plm variance B_2k term 5}
\end{align}
By \eqref{variance gamma(k) -1 ratio rate} we have
\begin{align*}
    \frac{1-\widehat \gamma^{(k)}}{1-\widehat \gamma} - 1 = O_p\p{\gamma_n\p{n\p{1-\gamma_n}}^{-1}} + O_p\p{\gamma_n^{1/2}\p{n\p{1-\gamma_n}}^{-1/2}}.
\end{align*}
Together with \eqref{plm variance B_2k term 1}-\eqref{plm variance B_2k term 5}, we have
\begin{align*}
    R_{4k} =& o_p\sbr{\br{\gamma_n\p{n\p{1-\gamma_n}}^{-1} + \gamma_n^{1/2}\p{n\p{1-\gamma_n}}^{-1/2}}\p{1-\gamma_n}}\\
    &+ O_p\sbr{\br{\gamma_n\p{n\p{1-\gamma_n}}^{-1} + \gamma_n^{1/2}\p{n\p{1-\gamma_n}}^{-1/2}}\p{1-\gamma_n}}\\
    &+ o_p\sbr{\br{\gamma_n\p{n\p{1-\gamma_n}}^{-1} + \gamma_n^{1/2}\p{n\p{1-\gamma_n}}^{-1/2}}\p{1-\gamma_n}}\\
    &+o_p\p{1-\gamma_n} + o_p\p{1-\gamma_n}+O_p\p{n^{-1/2}\p{1-\gamma_n}^{1/2}} + O_p\p{n^{-1/2}\p{1-\gamma_n}^{1/2}}\\
    =& o_p\p{1-\gamma_n} + O_p\p{n^{-1}\gamma_n} + O_p\p{n^{-1/2}\p{1-\gamma_n}^{1/2}}.
\end{align*}
Therefore, 
\begin{align*}
    R_4 = K^{-1}\sum_{k=1}^K R_{4k} = o_p\p{1-\gamma_n} + O_p\p{n^{-1}\gamma_n} + O_p\p{n^{-1/2}\p{1-\gamma_n}^{1/2}}.
\end{align*}

For $R_5$, note that $ab - cd = (a-c)(b-d) + d(a-c) + c(b-d)$. Then
\begin{align*}
    R_5 =& n^{-1}\sum_{k=1}^K\sum_{i\in\mathcal{I}_{k}}\p{1-\Gamma_i} \br{\p{\bar X_0^\top\widehat \beta^{(-k)}_{plm}}\widehat f^{(-k)}\p{Z_i} - \p{\bar X_0^\top \beta^*_{plm}} f^*\p{Z_i}}\\
    =& K^{-1}\sum_{k=1}^K\p{1-\widehat \gamma^{(k)} }\br{\bar X_0^\top \widehat \Delta_{plm}^{(-k)} \widehat D^{(-k)}\p{Z_i} + \bar X_0^\top\widehat \Delta_{plm}^{(-k)} f^*\p{Z_i} + \bar X_0^\top \beta^* \widehat D^{(-k)}\p{Z_i}}.
\end{align*}
Observe that
\begin{align*}
    &\bar X_0^\top \widehat \Delta_{plm}^{(-k)} \widehat D^{(-k)}\p{Z_i} + \bar X_0^\top\widehat \Delta_{plm}^{(-k)} f^*\p{Z_i} + \bar X_0^\top \beta^* \widehat D^{(-k)}\p{Z_i}\\
    &\qquad= \p{1-\widehat \gamma}n^{-1}\sum_{i=1}^n\p{1-\Gamma_i}\p{X_i^\top \widehat \Delta_{plm}^{(-k)} \widehat D^{(-k)}\p{Z_i} + X_i^\top\widehat \Delta_{plm}^{(-k)} f^*\p{Z_i} + X_i^\top \beta^* \widehat D^{(-k)}\p{Z_i}}.
\end{align*}
By Minkowski's inequality and H{\"o}lder inequality,
\begin{align*}
    &\norm{\p{1-\Gamma_i}\p{X_i^\top \widehat \Delta_{plm}^{(-k)} \widehat D^{(-k)}\p{Z_i} + X_i^\top\widehat \Delta_{plm}^{(-k)} f^*\p{Z_i} + X_i^\top \beta^* \widehat D^{(-k)}\p{Z_i}}}_{\P, 1} \\
    &\quad\leq \norm{\p{1-\Gamma_i}X_i^\top \widehat \Delta_{plm}^{(-k)} \widehat D^{(-k)}\p{Z_i} }_{\P, 1}  + \norm{\p{1-\Gamma_i}X_i^\top\widehat \Delta_{plm}^{(-k)} f^*\p{Z_i}}_{\P, 1} + \norm{ \p{1-\Gamma_i}X_i^\top \beta^* \widehat D^{(-k)}\p{Z_i}}_{\P, 1} \\
    &\quad\leq \norm{\p{1-\Gamma_i}X_i^\top \widehat \Delta_{plm}^{(-k)}}_{\P, 2} \norm{D^{(-k)}\p{Z_i}}_{\P, 2} + \norm{\p{1-\Gamma_i}X_i^\top\widehat \Delta_{plm}^{(-k)}}_{\P, 2}\norm{f^*\p{Z_i}}_{\P, 2} \\
    &\quad\quad + \norm{X_i^\top \beta^*}_{\P, 2}\norm{\widehat D^{(-k)}\p{Z_i}}_{\P, 2}\\
    &\quad\overset{(i)}{\leq} C\sigma\norm{\p{1-\Gamma_i}\norm{ \widehat \Delta_{plm}^{(-k)}}_2}_{\P, 2}\p{ \norm{D^{(-k)}\p{Z_i}}_{\P, 2} + \norm{f^*\p{Z_i}}_{\P, 2}} + C\sigma \norm{\widehat D^{(-k)}\p{Z_i}}_{\P, 2}\overset{(ii)}{=}o_p\p{1},
\end{align*}
where in $(i)$ we use $\E\sbr{\p{1-\Gamma_i}\p{X_i^\top \widehat \Delta_{plm}^{(-k)}}^2 \mid \Gamma_i} = \p{1-\Gamma_i}\widehat \Delta_{plm}^{(-k),\top}\E\sbr{X_iX_i^\top}\widehat \Delta_{plm}^{(-k)} \leq C\sigma^2\norm{\widehat \Delta_{plm}^{(-k)}}_2^2$ by \eqref{plm conditional independence between 1-Gamma_i X_i and L_I_-k}, and in $(ii)$ we use Assumption~\ref{assumption mcar plm (a)}(b). By Markov inequality, it follows that
$
    R_5 = o_p\p{1}.
$

In conclusion, we have
\begin{align*}
     &\widehat \sigma_{plm}^2 - \widetilde \sigma_{plm}^2 = R_1 - R_2 - R_3 + R_4  + R_5\\
     &\qquad= o_p\p{\gamma_n^{-1}+\p{1-\gamma_n}} + O_p\p{\p{n\gamma_n}^{-1/2} } + o_p\p{1-\gamma_n} + O_p\p{n^{-1}\gamma_n+n^{-1/2}\p{1-\gamma_n}^{1/2}} + o_p\p{1}\\
     &\qquad= o_p\p{\gamma_n^{-1}}.
\end{align*}

Since $\sigma_{plm}^2 = \E\sbr{\p{X_i^\top \beta^*_{plm} + f^*\p{Z_i} + {\Gamma_i}/{\gamma_n}\epsilon_i}^2} - \theta^2$, we have
\begin{align*}
        &\E\sbr{\widetilde \sigma_{plm}^2 -  \sigma_{plm}^2} = n^{-1}\sum_{i=1}^n \E\sbr{ \p{X_i^\top \beta^*_{plm} + f^*\p{Z_i} + \frac{\Gamma_i}{\gamma_n}\epsilon_i}^2} - E\sbr{\p{X_i^\top \beta^*_{plm} + f^*\p{Z_i} + \frac{\Gamma_i}{\gamma_n}\epsilon_i}^2}=0,
    \end{align*}
and 
\begin{align*}
    \E\sbr{\p{\widetilde \sigma_{plm}^2 -  \sigma_{plm}^2}^2} &= n^{-2}\sum_{i=1}^n\Var\sbr{\p{X_i^\top \beta^*_{plm} + f^*\p{Z_i} + \frac{\Gamma_i}{\gamma_n}\epsilon_i}^2}\\
    &\leq n^{-1}\E\sbr{\p{X_i^\top \beta^*_{plm} + f^*\p{Z_i} + \frac{\Gamma_i}{\gamma_n}\epsilon_i}^4}\\
    &\leq 27n^{-1}\E\sbr{\p{X_i^\top \beta^*_{plm}}^4} + 27n^{-1}\E\sbr{\p{f^*\p{Z_i}}^4} + 27n^{-1}\E\sbr{\p{\frac{\Gamma_i}{\gamma_n}\epsilon_i}^4},
\end{align*}
where the last inequality comes from H{\"o}lder inequality. Note that $\E\sbr{\p{X_i^\top \beta^*_{plm}}^4} \leq \sigma^4$, $\E\sbr{\p{f^*\p{Z_i}}^4} \leq \sigma_f^4$, and 
$
    \E\sbr{\p{{\Gamma_i}/{\gamma_n}\epsilon_i}^4} = \gamma_n^{-4}\E\sbr{\Gamma_i}\E\sbr{\epsilon_i^4}\leq \gamma_n^{-3} \sigma_{\epsilon}^4.
$
Thus,
$
    \E\sbr{\p{\widetilde \sigma_{plm}^2 -  \sigma_{plm}^2}^2} = O_p\p{\gamma_n^{-2}\p{n\gamma_n}^{-1}}.
$
By Chebyshev's inequality, we have
$
    \widetilde \sigma_{plm}^2 -  \sigma_{plm}^2 = O_p\p{\gamma_n^{-1}\p{n\gamma_n}^{-1/2}}.
$
By Lemma~\ref{lemma the plm oracle asymptotics under MCAR}, we have $\sigma_{plm}^2 \asymp \gamma_n^{-1}$, then it follows that
\begin{align*}
    \widehat \sigma_{plm}^2 -  \sigma_{plm}^2 &= \widehat \sigma_{plm}^2 - \widetilde \sigma_{plm}^2 + \widetilde \sigma_{plm}^2 -  \sigma_{plm}^2= o_p\p{\gamma_n^{-1}} + O_p\p{\gamma_n^{-1}\p{n\gamma_n}^{-1/2}}=\sigma_{plm}^2 o_p\p{1}.
\end{align*}
\end{proof}

\subsection{Proof of Theorem~\ref{plm Asymptotics theorem}}
\begin{proof}
By Lemmas~\ref{lemma plm mean estimator consistency} and \ref{lemma plm variance estimator consistency},
     $\widehat \theta_{plm} -\theta = O_p \p{(n\gamma_n)^{-1/2}}$ and 
    $\widehat \sigma_{plm}^2 = \sigma_{plm}^2\br{1 + o_p(1)}$. In addition, by Lemmas~\ref{lemma the plm oracle asymptotics under MCAR}, \ref{lemma plm mean estimator consistency}, and \ref{lemma plm variance estimator consistency}, we have $\widehat\sigma_{plm}^2 \asymp \gamma_n^{-1}$ and 
    \begin{align*}
        \widehat\sigma_{plm}^{-1}n^{1/2}(\widehat \theta_{plm} -\widetilde \theta_{plm}) = o_p\p{\gamma_n^{1/2}n^{1/2}\p{n\gamma_n}^{-1/2}} = o_p(1).
    \end{align*}
Then by Lemma~\ref{lemma the plm oracle asymptotics under MCAR} and Slutsky's theorem, 
$$\widehat\sigma_{plm}^{-1}n^{1/2}(\widehat \theta_{plm} - \theta) = \frac{\sigma_{plm}}{\widehat\sigma_{plm}} \cdot \sigma_{plm}^{-1}n^{1/2}(\widetilde \theta_{plm} - \theta) + \widehat\sigma_{plm}^{-1}n^{1/2}(\widehat \theta_{plm} -\widetilde \theta_{plm}) \xrightarrow{d} \mc{N}(0,1).$$
\end{proof}

\section{Proof of results in Section~\ref{sec: mar}}
For simplicity, throughout this section, we ignore the superscript $(-k)$ and subscripts $PS$, $OR$ in the proof when the context is clear.

\subsection{Auxiliary lemmas for generalizability}

\begin{lemma}\label{lemma mar theta tilde normality}
   Let Assumptions~\ref{assumption mar}  and \ref{assumption mar nuisance (a)} hold. Assume either $\gamma_n\p{X} = g\p{X^\top \alpha^*_{PS}}$ or $\mu\p{X} = X^\top \beta^*_{OR}$ holds. If $n\gamma_n \gg 1$, then as $n, d \rightarrow \infty$, $\sigma_g^2 \asymp \gamma_n^{-1}$ and
    $
    \sigma^{-1}n^{1/2}\p{\widetilde \theta_g - \theta_g} \rightarrow \mathcal{N}\p{0,1},
    $
    where $\widetilde \theta_g = n^{-1}\sum_{i=1}^n \br{X_i^\top \beta^*_{OR} + { \Gamma_i}/{g\p{X_i^\top\alpha^*_{PS}}}\p{Y_i - X_i^\top\beta^*_{OR}}}$.
\end{lemma}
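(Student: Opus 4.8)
\textbf{Proof proposal for Lemma \ref{lemma mar theta tilde normality}.}

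The plan is to treat $\widetilde\theta_g$ as a sample average of i.i.d. terms and invoke the Lindeberg--Feller central limit theorem, exactly in the spirit of Lemma \ref{lemma the oracle asymptotics under MCAR} and Lemma \ref{lemma the plm oracle asymptotics under MCAR}. Write $U_i = X_i^\top\beta^*_{OR} + \frac{\Gamma_i}{g(X_i^\top\alpha^*_{PS})}(Y_i - X_i^\top\beta^*_{OR})$, so that $\widetilde\theta_g = n^{-1}\sum_{i=1}^n U_i$. First I would verify that $\E[U_i] = \theta_g$: this is the doubly robust identification property, valid under Assumption \ref{assumption mar} whenever either $\gamma_n(X) = g(X^\top\alpha^*_{PS})$ or $\mu(X) = X^\top\beta^*_{OR}$; the identity was already recorded in the doubly robust representation of $\theta_g$ in Section \ref{sec: mar method} with $(m^*,\gamma_n^*)$ taken as the working-model limits $(X^\top\beta^*_{OR}, g(X^\top\alpha^*_{PS}))$. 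Then, since $U_i$ are i.i.d. with finite variance $\sigma_g^2 = \Var(U_i)$, the CLT reduces to checking a Lyapunov-type moment condition adapted to the decaying-$\gamma_n$ regime.

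The second step is to establish the order $\sigma_g^2 \asymp \gamma_n^{-1}$. Expanding, $\sigma_g^2 = \Var(X^\top\beta^*_{OR}) + \E\!\big[\frac{\Gamma}{g(X^\top\alpha^*_{PS})^2}w_{OR}^2\big] + 2\,\mathrm{Cov}\big(X^\top\beta^*_{OR},\frac{\Gamma}{g(X^\top\alpha^*_{PS})}w_{OR}\big)$ where $w_{OR} = Y - X^\top\beta^*_{OR}$. For the dominant middle term, use $\E[\frac{\Gamma}{g(X^\top\alpha^*_{PS})^2}w_{OR}^2] = \E[\frac{\gamma_n(X)}{g(X^\top\alpha^*_{PS})^2}\E[w_{OR}^2\mid X]]$; Assumption \ref{assumption mar nuisance (a)}(a) gives $g(X^\top\alpha^*_{PS}) \asymp \gamma_n$ (since $k_0(1-\gamma_n)/\gamma_n \le (1-g)/g \le k_0^{-1}(1-\gamma_n)/\gamma_n$), and combined with $\E[\gamma_n^q(X)] \le \nu\gamma_n^q$, $\P(\Gamma=0)\ge c_0$, and the conditional moment bounds on $w_{OR}$ in Assumption \ref{assumption mar nuisance (a)}(b)--(c), this term is $\asymp \gamma_n^{-1}$; the other two terms are $O(1)$ and $O(\gamma_n^{-1/2})$ respectively, and the lower bound $\E[w_{OR}^2\mid\Gamma=1]\ge\delta_w$ prevents degeneracy. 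To be careful about the heterogeneous case I would need Hölder on $\E[\frac{\gamma_n(X)}{g(X^\top\alpha^*_{PS})^2}\E[w_{OR}^2\mid X]]$ using the $q$-th moment control and the sub-Gaussianity of $X^\top\beta^*_{OR}$ and $w_{OR}$ conditional on $\Gamma$.

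The third step is the Lyapunov condition: I would show $n^{-c/2}\gamma_n^{1+c/2}\,\E[|U_i-\theta_g|^{2+c}] = o(1)$ for some $c>0$ (say $c=1$ or $c=2$), which by the Minkowski inequality reduces to bounding $\|X^\top\beta^*_{OR}\|_{2+c}$, $\|\Gamma\, g(X^\top\alpha^*_{PS})^{-1} w_{OR}\|_{2+c}$. Using Assumption \ref{assumption mar nuisance (a)}, the second norm is $O(\gamma_n^{1/(2+c)-1})$ because $\E[\Gamma\, g(X^\top\alpha^*_{PS})^{-(2+c)}|w_{OR}|^{2+c}] = \E[\gamma_n(X)g(X^\top\alpha^*_{PS})^{-(2+c)}\E[|w_{OR}|^{2+c}\mid X]] \lesssim \gamma_n^{1-(2+c)} = \gamma_n^{-1-c}$. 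Hence $\gamma_n^{1+c/2}\E[|U_i-\theta_g|^{2+c}] = O(1)$ and the whole quantity is $O((n\gamma_n)^{-c/2}) = o(1)$ since $n\gamma_n\to\infty$. The standard truncation argument then gives $\gamma_n\,\E[(U_i-\theta_g)^2\mathbbm{1}\{|U_i-\theta_g|>\delta\sqrt{n/\gamma_n}\}] \le \delta^{-c}\,\gamma_n^{1+c/2}n^{-c/2}\E[(U_i-\theta_g)^{2+c}] = o(1)$, so Proposition 2.27 (Lindeberg--Feller) of \citet{van2000asymptotic} yields $\sigma_g^{-1}\sqrt{n}(\widetilde\theta_g-\theta_g)\xrightarrow{d}\mathcal N(0,1)$.

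The main obstacle I anticipate is handling the moment computations cleanly in the decaying-$\gamma_n$ regime with only one nuisance model correctly specified: the cross term $\mathrm{Cov}(X^\top\beta^*_{OR},\frac{\Gamma}{g(X^\top\alpha^*_{PS})}w_{OR})$ no longer vanishes as it did under MCAR (where $\E[Xw]=0$), so I cannot simply drop it, and I must track its rate carefully to confirm it is lower-order than $\gamma_n^{-1}$. Relatedly, the conditional-moment bookkeeping — repeatedly conditioning on $X$, bounding $g(X^\top\alpha^*_{PS})^{-1}$ by $k_0^{-1}\gamma_n^{-1}$, and invoking the $q$-th moment condition $\E[\gamma_n^q(X)]\le\nu\gamma_n^q$ to control $\E[\gamma_n(X)g(X^\top\alpha^*_{PS})^{-p}\cdot(\text{sub-Gaussian moments})]$ via Hölder — is where the correct-specification-of-one-model hypothesis actually gets used (to ensure $\E[w_{OR}\mid X] = 0$ or $\gamma_n(X)=g(X^\top\alpha^*_{PS})$ as needed for $\E[U_i]=\theta_g$), and getting the exponents to line up is the delicate part. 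Everything else is routine.
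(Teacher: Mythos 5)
Your overall route is the same as the paper's: verify $\E[U_i]=\theta_g$ by conditioning on $X$ under MAR with one correct nuisance model, compute $\sigma_g^2=\Var(U_i)$, and apply the Lindeberg--Feller CLT via a Lyapunov condition with $\delta=2$, bounding $\E\sbr{\Gamma\, g(X^\top\alpha^*_{PS})^{-4}w_{OR}^4}\lesssim \gamma_n^{-3}$ exactly as you describe. That part is fine.

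The one step that does not close as written is the two-sided bound $\sigma_g^2\asymp\gamma_n^{-1}$ in the case where the PS model is correct but the OR model is misspecified. Your plan is to keep the single decomposition $U_i-\theta_g=(X^\top\beta^*_{OR}-\theta_g)+\frac{\Gamma}{g(X^\top\alpha^*_{PS})}w_{OR}$ and argue that the cross term is ``lower-order than $\gamma_n^{-1}$'' (it is indeed $O(\gamma_n^{-1/2})$ by Cauchy--Schwarz, and in fact $O(1)$ here). That suffices for the upper bound, and for the lower bound whenever $\gamma_n\to 0$. But the lemma only assumes $n\gamma_n\gg 1$, so $\gamma_n$ may stay bounded away from zero; in that regime every term is $O(1)$ and a possibly negative cross term of the same order cannot be dismissed as ``lower order'' --- Cauchy--Schwarz only gives $\sigma_g^2\ge(\sqrt{B}-\sqrt{A})^2$, which has no uniform positive lower bound. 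The paper avoids this by choosing a case-dependent centering so the cross term is \emph{exactly} zero: when the OR model is correct it uses your decomposition (the cross term vanishes since $\E[w_{OR}\mid X]=0$ and $\Gamma\perp Y\mid X$, which you note); when the PS model is correct it instead writes $U_i-\theta_g=(Y_i-\theta_g)+\p{\frac{\Gamma_i}{g(X_i^\top\alpha^*_{PS})}-1}w_{OR,i}$, whose cross term vanishes because $\E\sbr{\frac{\Gamma}{g(X^\top\alpha^*_{PS})}-1\mid X}=0$, and then lower-bounds the dominant term by restricting to $\{\Gamma=1\}$: $\E\sbr{\p{\frac{\Gamma}{g}-1}^2w_{OR}^2}\ge \E\sbr{\Gamma\frac{(1-g)^2}{g^2}w_{OR}^2}\ge k_0^2c_0^2\gamma_n^{-1}\E\sbr{w_{OR}^2\mid\Gamma=1}\ge k_0^2c_0^2\delta_w\gamma_n^{-1}$. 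You should adopt this (or an equivalent conditional-variance argument) to make the lower bound airtight; the rest of your proposal matches the paper's proof.
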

\begin{proof}
    Let $w_i = Y_i - X_i^\top \beta^*$. Define $V_i = X_i^\top \beta^* + {\Gamma_i}/{g\p{X_i^\top\alpha^*}}w_i = X_i^\top \beta^* + \Gamma_i\p{1+ \exp\p{-X_i^\top \alpha^*}}w_i.$
When either $\gamma_n\p{X} = g\p{X^\top \alpha^*}$ or $\mu\p{X} = X^\top \beta^*$,
    \begin{align*}
        \E\sbr{V_i} - \theta_g &= \E\sbr{\p{1 - \frac{\Gamma_i}{g\p{X_i^\top \alpha^*}}}\p{X_i^\top \beta^* - Y_i}}=\E\sbr{\E\sbr{\p{1 - \frac{\Gamma_i}{g\p{X_i^\top \alpha^*}}}\p{X_i^\top \beta^* - Y_i} \mid X_i}}\\
        &=\E\sbr{\p{1 - \frac{\E\sbr{\Gamma_i \mid X_i}}{g\p{X_i^\top \alpha^*}}}\p{X_i^\top \beta^* - \E\sbr{Y_i \mid X_i}}}= 0.
    \end{align*}
    By Lyapunov's central limit theorem, it suffices to prove for some $\delta>0$ and $C>0$,
    \begin{align}
        \lim_{n\rightarrow \infty}n^{-\delta/2}\sigma^{-\p{2+\delta}} \E\sbr{\abs{V_i - \theta_g}^{2+\delta}} = 0.
    \end{align}
    If $\E\sbr{Y_i\mid X_i} = X_i^\top \beta^*$, i.e., $\E\sbr{w_i\mid X_i} = 0$, then
    \begin{align*}
        \sigma^2_g &=  \E\sbr{\p{V_i - \theta_g}^2}=\E\sbr{\p{X_i^\top\beta^* - \theta_g}^2} + \E\sbr{\p{\frac{\Gamma_i}{g\p{X_i^\top \alpha^*}}w_i}^2}+ 2\E\sbr{\p{X_i^\top \beta^* - \theta_g}\frac{\Gamma_i}{g\p{X_i^\top \alpha^*}}w_i}\\
        &=\E\sbr{\p{X_i^\top\beta^* - \theta_g}^2} + \E\sbr{\p{\frac{\Gamma_i}{g\p{X_i^\top \alpha^*}}w_i}^2}+ 2\E\sbr{\E\sbr{\p{X_i^\top \beta^* - \theta_g}\frac{\Gamma_i}{g\p{X_i^\top \alpha^*}}\mid X_i}\E\sbr{w_i\mid X_i}}\\
        &=\E\sbr{\p{X_i^\top\beta^* - \theta_g}^2} + \E\sbr{\frac{\Gamma_i}{\p{g\p{X_i^\top \alpha^*}}^2}w_i^2} \overset{(i)}{\asymp} \gamma_n^{-1},
    \end{align*}
    where (i) uses $g\p{X_i^\top \alpha^*} \asymp \gamma_n$ by Assumption~\ref{assumption mar nuisance (a)}(a). By Lemma~\ref{sub-gaussian properties}(c) and Minkowski's inequality, 
$
       \norm{\p{X_i^\top\beta^* - \theta_g}^2}_{\P, 2} \leq \norm{\p{X_i^\top\beta^*}^2}_{\P, 2} + \abs{\theta_g} \leq C_1\sigma + \abs{\theta_g}$ and $
        \gamma_n\delta_w\leq \E\sbr{\Gamma_iw_i^2}= \gamma_n\E\sbr{w_i^2\mid \Gamma_i=1} \leq\gamma_n\sigma_w^2.
$
If $\P\p{\Gamma_i = 1\mid X_i} = g\p{X_i^\top \alpha^*}$, then
\begin{align*}
    \sigma^2_g &=  \E\sbr{\p{V_i - \theta_g}^2} =\Var\p{Y_i} + \E\sbr{\p{\frac{\Gamma_i}{g\p{X_i^\top \alpha^*}}-1}^2w_i^2}+ \E\sbr{\p{Y_i - \theta_g}\p{\frac{\Gamma_i}{g\p{X_i^\top \alpha^*}}-1}w_i}\\
    &=\Var\p{Y_i} + \E\sbr{\p{\frac{\Gamma_i}{g\p{X_i^\top \alpha^*}}-1}^2w_i^2}+ 2\E\sbr{\E\sbr{\frac{\Gamma_i}{g\p{X_i^\top \alpha^*}}-1 \mid X_i}\E\sbr{\p{Y_i - \theta_g}w_i \mid X_i}}\\
    &=\Var\p{Y_i} + \E\sbr{\frac{\Gamma_i}{\p{g\p{X_i^\top \alpha^*}}^2}w_i^2} -2\E\sbr{\frac{\Gamma_i}{g\p{X_i^\top \alpha^*}}w_i^2} + \E\sbr{w_i^2} \asymp \gamma_n^{-1}.
\end{align*}
By Lemma~\ref{sub-gaussian properties}(c), we have
$\E\sbr{\Gamma_iw_i^2}\asymp \gamma_n$, $\E\sbr{w_i^2} \leq \sigma_w^2$, and by Minkowski's inequality, 
$
    \Var\p{Y_i}^{1/2} \leq \norm{X_i^\top \beta^* + w_i}_{\P,2} \leq \norm{X_i^\top \beta^*}_{\P,2} + \norm{w_i}_{\P,2} \leq C_2{\sigma+\sigma_w}.
$
Thus, when either $\gamma_n\p{X} = g\p{X^\top \alpha^*}$ or $\mu\p{X} = X^\top \beta^*$, it holds that
\begin{align}
    \sigma^2_g \asymp \gamma_n^{-1}. \label{mar sigma2 asymp gamma-1}
\end{align}
On the other hand, by Minkowski's inequality, 
$
    \norm{V_i - \theta_g}_{\P, 2+\delta} \leq \norm{X_i^\top \beta^*}_{\P, 2+\delta}   + \norm{{\Gamma_i}/{g\p{X_i^\top \alpha^*}}w_i}_{\P, 2+\delta} + \abs{\theta_g}.
$
Choose $\delta=2$. By Lemma~\ref{sub-gaussian properties}(c), we have $\norm{X_i^\top \beta^*}_{\P, 2+\delta} \leq C_3\sigma$ and
\begin{align*}
    \norm{\frac{\Gamma_i}{g\p{X_i^\top \alpha^*}}w_i}_{\P, 2+\delta} &= \E\sbr{\frac{\Gamma_i}{\p{g\p{X_i^\top \alpha^*}}^{4}}w_i^{4}}^{1/4}\leq k_0^{-1}\gamma_n^{-1}\E\sbr{\Gamma_iw_i^{4}}^{1/4}\leq k_0^{-1}\sigma_w\gamma_n^{-3/4}.
\end{align*}
Thus, 
\begin{align*}
    &\E\sbr{\abs{V_i - \theta_g}^{2+\delta}} = \norm{V_i - \theta}_{\P, 2+\delta}^{{2+\delta}}\leq \p{\norm{X_i^\top \beta^*}_{\P, 2+\delta}   + \norm{\frac{\Gamma_i}{g\p{X_i^\top \alpha^*}}w_i}_{\P, 2+\delta} + \abs{\theta_g}}^{{2+\delta}}\\
    &\qquad\leq \p{\abs{\theta_g} + C_3\sigma + k_0^{-1}\sigma_w\gamma_n^{-3/4}}^4=\gamma_n^{-3}\p{\p{\abs{\theta_g} + C_3\sigma}\gamma_n^{3/4} + k_0^{-1}\sigma_w}^4\leq \p{\p{\abs{\theta_g} + C_3\sigma} + k_0^{-1}\sigma_w}^4\gamma_n^{-3}.
\end{align*}
Thus, there exists some $C>0$ such that for $\delta = 2$,
\begin{align*}
        \lim_{n\rightarrow \infty} n^{-\delta/2}\sigma^{-\p{2+\delta}} \E\sbr{\abs{V_i - \theta_g}^{2+\delta}} \leq \lim_{n\rightarrow \infty} n^{-1}C\gamma_n^{2}\gamma_n^{-3} = \lim_{n\rightarrow \infty} C\p{n\gamma_n}^{-1} = 0.
    \end{align*}
\end{proof}

\begin{lemma}\label{lemma infty norm Ui}
Let Assumption~\ref{assumption mar nuisance (a)} hold. If $n\gamma_n \gg \log n \log d$, then as $n, d \rightarrow \infty$, it holds that 
\begin{align*}
    \norm{n_k^{-1}\sum_{i\in \mathcal{I}_k}\br{\p{1-\frac{\Gamma_i}{g\p{X_i^\top \alpha^*}}}X_i}}_\infty = O_p\p{(\frac{\log d}{n\gamma_n})^{1/2}}.
\end{align*}  
\end{lemma}
\begin{proof}
Let 
$
    U_i = \p{1-{\Gamma_i}/{g\p{X_i^\top \alpha^*}}}X_i.
$
    By the construction of $\alpha^*$, we have
$
        \E\sbr{\p{1-{\Gamma_i}/{g\p{X_i^\top \alpha^*}}}X_i} = \boldsymbol{0},
$
which implies $\E\sbr{U_i} = \boldsymbol{0}$. By Lemma~\ref{sub-gaussian properties}(c), for any $1 \leq j \leq d$, we have
$
        \abs{U_i^\top e_j} \leq k_0^{-1}\gamma_n^{-1}\abs{X_i^\top e_j} \leq  k_0^{-1}\gamma_n^{-1}\abs{X_i^\top e_j},
$
    which implies
$
        \sup_{i\in \mathcal{I}_k}\sup_{1\leq j \leq d} \norm{U_i^\top e_j}_{\psi_2} \leq k_0^{-1}\gamma_n^{-1}\sigma.
$
    In addition, by \eqref{exp(-2X'alpha = gamma-2)},
    \begin{align*}
        &\E\sbr{\p{U_i^\top e_j}^2} =(1-\gamma_n)\E\sbr{\p{X_i^\top e_j}^2 \mid \Gamma_i=0} + \gamma_n\E\sbr{\p{1-\frac{1}{g\p{X_i^\top \alpha^*}}}^2\p{X_i^\top e_j}^2 \mid \Gamma_i=1}\\
    &\qquad\leq \E\sbr{\p{X_i^\top e_j}^2 \mid \Gamma_i=0} + k_0^{-2}\gamma_n^{-1}\E\sbr{\p{X_i^\top e_j}^2 \mid \Gamma_i=1}\leq 2\p{1+k_0^{-2}\gamma_n^{-1}}\sigma^2,
    \end{align*}
    which implies 
$
        \sup_{1\leq j \leq d} {1}/{n_k}\sum_{i\in \mathcal{I}_k}\E\sbr{\p{U_i^\top e_j}^2} \leq 2\p{1+k_0^{-2}\gamma_n^{-1}}\sigma^2.
$
Then by Lemma~\ref{Lemma Tail Bounds for Maximums}, when $n\gamma_n \gg \log n \log d$, we have
\begin{align*}
    \norm{\frac{1}{n_k}\sum_{i\in \mathcal{I}_k}U_i}_\infty = O_p\p{(\frac{\log d}{n_k\gamma_n})^{1/2} + \frac{\log d(\log n_k)^{1/2}}{n_k\gamma_n}} = O_p\p{(\frac{\log d}{n\gamma_n})^{1/2}}.
\end{align*}

\end{proof}

\begin{lemma}\label{lemma infty norm Ji}
Let Assumption~\ref{assumption mar nuisance (a)} hold. If $n\gamma_n \gg (\log n)^2 \log d$, then as $n, d \rightarrow \infty$, it holds that 
\begin{align*}
    \norm{n_k^{-1}\sum_{i\in \mathcal{I}_k}\Gamma_i\exp\p{-X_i^\top \alpha^*}w_iX_i}_\infty = O_p\p{(\frac{\log d}{n\gamma_n})^{1/2}}.
\end{align*}    
\end{lemma}
\begin{proof}
Let 
$
    J_i = \Gamma_i\exp\p{-X_i^\top \alpha^*}w_iX_i.
$
    By the construction of $\beta^*$,
$
    \E\sbr{\Gamma_i\exp\p{-X_i^\top \alpha^*}w_iX_i} = \boldsymbol{0},
$
which implies $\E\sbr{J_i} = \boldsymbol{0}$. By \eqref{exp(-X'alpha = gamma-1)}, for any $1 \leq j \leq d$, we have
$
        \abs{J_i^\top e_j} \leq k_0^{-1}\gamma_n^{-1}w_iX_i^\top e_j.
$
By Lemma~\ref{sub-gaussian properties}(d),
    \begin{align*}
        \sup_{i\in \mathcal{I}_k}\sup_{1\leq j \leq d} \norm{J_i^\top e_j}_{\psi_1} \leq k_0^{-1}\gamma_n^{-1}\norm{w_i}_{\psi_2}\norm{X_i^\top e_j}_{\psi_2} \leq k_0^{-1}\gamma_n^{-1} \sigma_w\sigma.
    \end{align*}
    In addition, by \eqref{exp(-2X'alpha = gamma-2)} and Lemma~\ref{sub-gaussian properties}(c), for some $C>0$,
    \begin{align*}
        \E\sbr{\p{J_i^\top e_j}^2} &= \E\sbr{\Gamma_i\exp\p{-2X_i^\top \alpha^*}w_i^2(X_i^\top e_j)^2}\leq \E\sbr{\Gamma_i\exp\p{-2X_i^\top \alpha^*}w_i^2\p{X_i^\top e_j}^2}\\
    &\leq k_0^{-2}\gamma_n^{-2}\E\sbr{\Gamma_iw_i^2\p{X_i^\top e_j}^2}=k_0^{-2}\gamma_n^{-1}\E\sbr{w_i^2\p{X_i^\top e_j}^2\mid \Gamma_i=1}\\
&\leq k_0^{-2}\gamma_n^{-1}\E\sbr{w_i^4\mid \Gamma_i=1}^{1/2}\E\sbr{\p{X_i^\top e_j}^4\mid \Gamma_i=1}^{1/2}\leq C\sigma_w^2 \sigma^2 k_0^{-2}\gamma_n^{-1}.
    \end{align*}
    which implies
\[
        \sup_{1\leq j \leq d} {1}/{n_k}\sum_{i\in \mathcal{I}_k}\E\sbr{\p{J_i^\top e_j}^2} \leq C\sigma_w^2 \sigma^2 k_0^{-2}\gamma_n^{-1}.
\]
By Lemma~\ref{Lemma Tail Bounds for Maximums}, when $n\gamma_n \gg (\log n)^2 \log d$,
\begin{align*}
    \norm{\frac{1}{n_k}\sum_{i\in \mathcal{I}_k}J_i}_\infty = O_p\p{(\frac{\log d}{n_k\gamma_n})^{1/2} + \frac{\log n_k\log d}{n_k\gamma_n}} = O_p\p{(\frac{\log d}{n\gamma_n})^{1/2}}.
\end{align*}
\end{proof}

\begin{lemma}\label{lemma gradient infty norm 1-Gamma X}
Let Assumption~\ref{assumption mar nuisance (a)} hold. If $n\gamma_n \gg \log n \log d$, then as $n, d \rightarrow \infty$, it holds that 
    \begin{align*}
        \norm{n_k^{-1}\sum_{i\in \mathcal{I}_k}\p{1-\Gamma_i}\p{X_i - \E\sbr{X_i\mid \Gamma_i=0}} }_\infty = O_p\p{(\frac{\log d}{n})^{1/2}}.
    \end{align*}
\end{lemma}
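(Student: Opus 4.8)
This is a concentration result very similar in spirit to Lemmas \ref{lemma infty norm Ui}, \ref{lemma infty norm Ji}. The plan is to apply Lemma \ref{Lemma Tail Bounds for Maximums} (Theorem 3.4 of \cite{kuchibhotla2022moving}) to the centered, independent, mean-zero random vectors $R_i := (1-\Gamma_i)(X_i - \E[X_i\mid\Gamma_i=0])$ for $i\in\mathcal{I}_k$. The computation is lighter here than in Lemma \ref{lemma infty norm Ji} because there are no $\exp(-X^\top\alpha^*)$ factors, so the $\psi_2$-norm and the second moment are not inflated by powers of $\gamma_n^{-1}$; the relevant $\ell_\infty$ fluctuation will turn out to be $\sqrt{\log d/n}$ rather than $\sqrt{\log d/(n\gamma_n)}$.

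\begin{proof}
Let $R_i = (1-\Gamma_i)(X_i - \E[X_i\mid \Gamma_i=0])$. Since $\E[(1-\Gamma_i)(X_i - \E[X_i\mid\Gamma_i=0])] = (1-\gamma_n)\E[X_i - \E[X_i\mid\Gamma_i=0]\mid \Gamma_i=0] = \boldsymbol{0}$, the vectors $R_i$, $i\in\mathcal{I}_k$, are i.i.d.\ mean-zero. For the Orlicz norm, by Lemma \ref{sub-gaussian properties}(a) and Assumption \ref{assumption mar nuisance (a)}(b), for each $1\le j\le d$,
\begin{align*}
    \norm{R_i^\top e_j}_{\psi_2} \leq \norm{X_i^\top e_j - \E[X_i^\top e_j\mid\Gamma_i=0]}_{\psi_2} \leq \norm{X_i^\top e_j}_{\psi_2} + \norm{\E[X_i^\top e_j\mid\Gamma_i=0]}_{\psi_2} \leq C\sigma
\end{align*}
for some constant $C>0$, so $\max_{i\in\mathcal{I}_k}\max_{1\le j\le d}\norm{R_i^\top e_j}_{\psi_2}\le C\sigma$. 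For the second moment, using $(1-\Gamma_i)^2 = 1-\Gamma_i$ and Lemma \ref{sub-gaussian properties}(c),
\begin{align*}
    \E[(R_i^\top e_j)^2] = (1-\gamma_n)\E\sbr{(X_i^\top e_j - \E[X_i^\top e_j\mid\Gamma_i=0])^2\mid\Gamma_i=0} \leq \E\sbr{(X_i^\top e_j)^2\mid\Gamma_i=0} \leq 2\sigma^2,
\end{align*}
hence $\max_{1\le j\le d} n_k^{-1}\sum_{i\in\mathcal{I}_k}\E[(R_i^\top e_j)^2]\le 2\sigma^2$. Applying Lemma \ref{Lemma Tail Bounds for Maximums} with $\alpha=2$, $K_{n,d}\asymp\sigma$, and $\Gamma_{n,d}\le 2\sigma^2$, we get for any $t\ge 0$, with probability at least $1-3e^{-t}$,
\begin{align*}
    \norm{n_k^{-1}\sum_{i\in\mathcal{I}_k}R_i}_\infty \leq 7\sqrt{\frac{2\sigma^2(t+\log d)}{n_k}} + \frac{C'\sigma(\log(2n_k))^{1/2}(t+\log d)}{n_k}
\end{align*}
for some constant $C'>0$. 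Since $n_k = n/K \asymp n$, and since $n\gamma_n\gg\log n\log d$ implies $n\gg\log n\log d$, the second term is of smaller order than the first for $t\asymp\log d$; choosing $t = \log d$ gives, with probability at least $1-3d^{-1}\to 0$,
\begin{align*}
    \norm{n_k^{-1}\sum_{i\in\mathcal{I}_k}\p{1-\Gamma_i}\p{X_i - \E\sbr{X_i\mid\Gamma_i=0}}}_\infty = O_p\p{\sqrt{\frac{\log d}{n}}},
\end{align*}
which is the claimed bound.
\end{proof}

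The only subtlety — and hence the one point that needs a little care rather than being fully routine — is checking that the second (deviation) term in the bound from Lemma \ref{Lemma Tail Bounds for Maximums} is genuinely negligible: this uses $n\gg\log n\log d$, which follows from the hypothesis $n\gamma_n\gg\log n\log d$ together with $\gamma_n\le 1$. Everything else is a direct verification of the $\psi_2$-norm and second-moment conditions, which proceeds exactly as in Lemmas \ref{lemma infty norm Ui} and \ref{lemma infty norm Ji} but without the $\gamma_n^{-1}$ blow-up, since no inverse-propensity weighting appears.
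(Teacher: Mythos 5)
Your proof is correct and follows essentially the same route as the paper's: verify mean-zero, bound the coordinatewise $\psi_2$-norms and second moments (neither inflated by $\gamma_n^{-1}$ since there is no inverse-propensity factor), and apply Lemma \ref{Lemma Tail Bounds for Maximums}, using $n\gg\log n\log d$ to absorb the deviation term. No gaps.
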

\begin{proof}
First we have 
   \begin{align*}
       \E\sbr{\p{1-\Gamma_i}(X_i - \E\sbr{X_i\mid \Gamma_i=0})} = (1-\gamma_n)\E\sbr{X_i - \E\sbr{X_i\mid \Gamma_i=0} \mid \Gamma_i=0} = \boldsymbol{0},
   \end{align*}
For each $1 \leq j \leq d$, by Lemma~\ref{sub-gaussian properties}(c), $\E\sbr{X_i^\top e_j\mid \Gamma_i=0}) \leq C_1\sigma$ for some $C_1>0$. Then by Lemma~\ref{sub-gaussian properties}(a), for some $C_2>0$,
\begin{align*}
    \norm{\p{1-\Gamma_i}(X_i - \E\sbr{X_i\mid \Gamma_i=0})^\top e_j}_{\psi_2} \leq \norm{X_i^\top e_j}_{\psi_2} + \norm{\E\sbr{X_i^\top e_j\mid \Gamma_i=0}}_{\psi_2} \leq C_2\sigma,
\end{align*}
which implies
\begin{align*}
    \max_{i\in\mc{I}_k}\max_{1\leq j \leq d} \norm{\p{1-\Gamma_i}(X_i - \E\sbr{X_i\mid \Gamma_i=0})^\top e_j}_{\psi_2} \leq C_2\sigma.
\end{align*}
In addition, by Lemma~\ref{sub-gaussian properties}(c),
\begin{align*}
    &\E\sbr{\p{\p{1-\Gamma_i}(X_i - \E\sbr{X_i\mid \Gamma_i=0})^\top e_j}^2}= (1-\gamma_n)\E\sbr{\p{(X_i - \E\sbr{X_i\mid \Gamma_i=0})^\top e_j}^2 \mid \Gamma_i=0}\\
    &\qquad\leq \E\sbr{(X_i^\top e_j)^2 \mid \Gamma_i=0}\leq 2\sigma^2.
\end{align*}
Then we have
\begin{align*}
    \max_{1\leq j \leq d}n_k^{-1}\sum_{i \in \mc{I}_k}\E\sbr{\p{\p{1-\Gamma_i}(X_i - \E\sbr{X_i\mid \Gamma_i=0})^\top e_j}^2} \leq 2\sigma^2.
\end{align*}
By Lemma~\ref{Lemma Tail Bounds for Maximums}, when $n \gg \log n \log d$, as $n, d \rightarrow \infty$,
\begin{align*}
    \norm{n_k^{-1}\sum_{i\in \mathcal{I}_k}\p{1-\Gamma_i}\p{X_i - \E\sbr{X_i\mid \Gamma_i=0}} }_\infty &= O_p\p{(\frac{\log d}{n_k})^{1/2} + \frac{\log d(\log n_k)^{1/2}}{n_k}}= O_p\p{(\frac{\log d}{n})^{1/2}}.
\end{align*}
\end{proof}

\begin{lemma}\label{lemma mar theta hat - theta tilde consistency}
    Let Assumption~\ref{assumption mar nuisance (a)} hold. Choose $\lambda_\alpha \asymp \lambda_\beta \asymp (\log d/(n\gamma_n))^{1/2}$. If $n\gamma_n \gg \max\{s_\alpha, s_\beta\log n$, $(\log n)^2\} \log d$ and  $s_\alpha s_\beta \ll (n\gamma_n)^{3/2}/(\log n(\log d)^2)$, then as $n, d\rightarrow \infty$, 
    \begin{align*}
        \widehat \theta_g - \widetilde \theta_g &= O_p\p{\frac{\p{s_\alpha + (s_\alpha s_\beta)^{1/2}} \log d}{n\gamma_n}},
    \end{align*}
In addition, let Assumption~\ref{assumption mar} hold. Assume that $\mu(X) = X^\top \beta^*_{OR}$ holds. Then as $n, d\rightarrow \infty$, 
\begin{align*}
\widehat \theta_g - \widetilde \theta_g&=O_p\p{\frac{(s_\alpha \log d)^{1/2}}{n\gamma_n} +\frac{(s_\alpha s_\beta)^{1/2} \log d}{n\gamma_n}}.
\end{align*}
\end{lemma}
\begin{proof}
    Let
\begin{align*}
    \Psi_i\p{\alpha, \beta}
    &= \Gamma_i X_i^\top \beta + { \Gamma_i}/{g\p{X_i^\top\alpha}}\p{Y_i - X_i^\top\beta},\\
    \widetilde \theta_g^{\p{k}}
    &= n_k^{-1}\sum_{i \in \mathcal{I}_{k}} \p{\p{1-\Gamma_i}\bar X_0^\top \beta^* + \Psi_i\p{\alpha^*, \beta^*}}.
\end{align*}
    Then $\widehat \theta_g - \widetilde \theta_g = K^{-1} \sum_{k=1}^K \p{\widehat \theta_g^{\p{k}} - \widetilde \theta_g^{\p{k}}}$. Note that
    \begin{align*}
         \widehat \theta_g^{(k)} = n_k^{-1}\sum_{i\in \mathcal{I}_k} \p{(1-\Gamma_i)\Bar{X}_0^\top \widehat \beta^{(-k)} + \Psi_i\p{\widehat \alpha^{(-k)}, \widehat \beta^{(-k)}} }.
    \end{align*}
Let $\widehat \gamma_{k} = n_k^{-1}\sum_{i \in \mc{I}_k} \Gamma_i$, $\widehat \gamma = n^{-1}\sum_{i=1}^n \Gamma_i$, $w_i = Y_i - X_i^\top \beta^*$, $\widehat \Delta_\alpha^{(-k)} = \widehat \alpha^{(-k)} - \alpha^*$, and $\widehat \Delta_\beta^{(-k)} = \widehat \beta^{(-k)} - \beta^*$, $\widetilde \Delta_\beta^{(-k)} = \widetilde \beta^{(-k)} - \beta^*$, and $\Delta^{(-k)} = \widehat \beta^{(-k)} - \widetilde \beta^{(-k)}$. Then for each $1\leq k\leq K$, we have
    \begin{align*}
        \widehat \theta_g^{\p{k}} - \widetilde \theta_g^{\p{k}} &= n_k^{-1}\sum_{i\in \mathcal{I}_k} \br{\Psi_i\p{\widehat \alpha^{(-k)}, \widehat \beta^{(-k)}} - \Psi_i\p{\alpha^*, \beta^*}} +  n_k^{-1}\sum_{i\in \mathcal{I}_k} (1-\Gamma_i)\Bar{X}_0^\top \widehat \Delta_{\beta}^{(-k)}\\
        &= r_1 - r_2 + r_3 - r_4 + r_5,
    \end{align*}
where
\begin{align*}
    r_1 &= n_k^{-1}\sum_{i\in \mathcal{I}_k}\p{1-\frac{\Gamma_i}{g\p{X_i^\top\alpha^*}}}X_i^\top  \widehat \Delta_{\beta}^{(-k)},\quad
    r_2 = n_k^{-1}\sum_{i\in \mathcal{I}_k}\p{\frac{ \Gamma_i}{g\p{X_i^\top\widehat \alpha^{(-k)}}} -\frac{\Gamma_i}{g\p{X_i^\top\alpha^*}} }X_i^\top  \widehat \Delta_{\beta}^{(-k)},\\
    r_3 &= n_k^{-1}\sum_{i\in \mathcal{I}_k}\p{\frac{ \Gamma_i}{g\p{X_i^\top\widehat \alpha^{(-k)}}} -\frac{\Gamma_i}{g\p{X_i^\top\alpha^*}} }w_i,\quad
    r_4 = n_k^{-1}\sum_{i\in \mathcal{I}_k}\p{1-\Gamma_i}\p{X_i - \E\sbr{X_i\mid \Gamma_i=0}}^\top\widehat \Delta_{\beta}^{(-k)},\\
    r_5 &= \p{1-\widehat\gamma_k}\p{1-\widehat \gamma}^{-1}n^{-1}\sum_{i=1}^n \p{1-\Gamma_i}\p{X_i -\E\sbr{X_i\mid \Gamma_i=0}}^\top\widehat \Delta_{\beta}^{(-k)}.
\end{align*}

For $r_1$, we have $r_1 = r_{11} + r_{12}$, where
\begin{align*}
    r_{11} &= n_k^{-1}\sum_{i\in \mathcal{I}_k}\p{1-\frac{\Gamma_i}{g\p{X_i^\top\alpha^*}}}X_i^\top   \Delta^{(-k)},\quad
    r_{12} = n_k^{-1}\sum_{i\in \mathcal{I}_k}\p{1-\frac{\Gamma_i}{g\p{X_i^\top\alpha^*}}}X_i^\top  \widetilde \Delta_{\beta}^{(-k)}.
\end{align*}
By Lemma~\ref{lemma infty norm Ui}, we have
\begin{align*}
    r_{11} \leq \norm{n_k^{-1}\sum_{i\in \mathcal{I}_k}\p{1-\frac{\Gamma_i}{g\p{X_i^\top\alpha^*}}}X_i}_\infty   \norm{\Delta^{(-k)}}_1 = O_p\p{\norm{\Delta^{(-k)}}_1(\frac{\log d}{n\gamma_n})^{1/2}}.
\end{align*}

 Since $\widetilde \Delta_\beta^{(-k)} \perp \mc{D}_{\mc{I}_k}$, by first order optimal of $\alpha^*$, for any $i \in \mc{I}_k$, we have 
\begin{align*}
    \E\sbr{\p{1-\frac{\Gamma_i}{g\p{X_i^\top \alpha^*}}}X_i^\top\widetilde \Delta_\beta^{(-k)} \mid \widetilde \beta^{(-k)}}= \E\sbr{\p{1-\Gamma_i - \Gamma_i\exp\p{-X_i^\top \alpha^*}}X_i}^\top\widetilde \Delta_\beta^{(-k)} = 0.
\end{align*}
Then we have
\begin{align*}
&\E\sbr{r_{12}^2 \mid \widetilde \beta^{(-k)}} = n_k^{-1}\E\sbr{\br{\p{1-\frac{\Gamma_i}{g\p{X_i^\top \alpha^*}}}X_i^\top\widetilde \Delta_\beta^{(-k)}}^2 \mid \widetilde \beta^{(-k)}}\\
&\qquad= n_k^{-1}(1-\gamma_n)\E\sbr{\p{X_i^\top\widetilde \Delta_\beta^{(-k)}}^2 \mid \Gamma_i=0, \widetilde \beta^{(-k)}}\\
&\qquad\qquad+ n_k^{-1}\gamma_n\E\sbr{\p{1-\frac{1}{g\p{X_i^\top \alpha^*}}}^2\p{X_i^\top\widetilde \Delta_\beta^{(-k)}}^2 \mid \Gamma_i=1, \widetilde \beta^{(-k)}}.
\end{align*}
By Assumption~\ref{assumption mar nuisance (a)}(a), we have $1-(g\p{X_i^\top \alpha^*})^{-1} \leq k_0^{-1}\gamma_n^{-1}$. By Lemma~\ref{sub-gaussian properties}(c),
\[
    \E\sbr{\p{X_i^\top\widetilde \Delta_\beta^{(-k)}}^2 \mid \Gamma_i,\widetilde \beta^{(-k)}} \leq 2\sigma^2\norm{\widetilde \Delta_\beta^{(-k)}}_2^2.
\]
Thus, by Lemma~\ref{lemma convergence of conditional random variable},
\[
    r_{12} = O_p\p{(n\gamma_n)^{-1/2}\norm{\widetilde \Delta_\beta^{(-k)}}_2}.
\]
It follows that
\begin{align*}
    r_1 = O_p\p{\norm{\Delta^{(-k)}}_1(\frac{\log d}{n\gamma_n})^{1/2} + (n\gamma_n)^{-1/2}\norm{\widetilde \Delta_\beta^{(-k)}}_2}.
\end{align*}

For $r_2$, by H{\"o}lder inequality, we have
\begin{align*}
    r_2 \leq \br{n_k^{-1}\sum_{i\in \mathcal{I}_k}\Gamma_i\p{\frac{ 1}{g\p{X_i^\top\widehat \alpha^{(-k)}}} -\frac{1}{g\p{X_i^\top\alpha^*}} }^2}^{1/2}\br{n_k^{-1}\sum_{i\in \mathcal{I}_k}\Gamma_i\p{X_i^\top  \widehat \Delta_{\beta}^{(-k)}}^2}^{1/2}
\end{align*}

By Taylor's theorem and \eqref{exp(-X'alpha = gamma-1)}, for some $t \in (0,1)$,
\begin{align*}
    \p{\frac{1}{g\p{X_i^\top \widehat \alpha^{(-k)}}} - \frac{1}{g\p{X_i^\top \alpha^*}}}^2 &= \p{\exp\p{-tX_i^\top \widehat \Delta_\alpha^{(-k)} - X_i^\top\alpha^*}X_i^\top \widehat \Delta_\alpha^{(-k)}}^2\notag \\
    &\leq k_0^{-2}\gamma_n^{-2}\exp\p{-2\abs{X_i^\top \widehat \Delta_\alpha^{(-k)}}}\p{X_i^\top \widehat \Delta_\alpha^{(-k)}}^2. 
\end{align*}
Then by Lemma~\ref{lemma Exp(Xn2) ineq}, 
\begin{align*}
    n_k^{-1}\sum_{i\in \mathcal{I}_k}\Gamma_i\p{\frac{ 1}{g\p{X_i^\top\widehat \alpha^{(-k)}}} -\frac{1}{g\p{X_i^\top\alpha^*}} }^2 = O_p\p{\gamma_n^{-1}\norm{\widehat \Delta_\alpha^{(-k)}}_2^2}.
\end{align*}
By Lemma~\ref{lemma conditional independence}, we have $\mc{D}_{\mc{I}_k} \perp \widehat \Delta_{\beta}^{(-k)} \mid \Gamma_{\mc{I}_k}$, then for any $i \in \mc{I}_k$, by Lemma~\ref{sub-gaussian properties}(c), 
\begin{align*}
    \E\sbr{\p{X_i^\top  \widehat \Delta_{\beta}^{(-k)}}^2 \mid \Gamma_{\mc{I}_k}, \widehat \Delta_{\beta}^{(-k)}} = \E\sbr{\p{X_i^\top  \widehat \Delta_{\beta}^{(-k)}}^2 \mid \Gamma_i, \widehat \Delta_{\beta}^{(-k)}} \leq 2\sigma^2\norm{\widehat \Delta_\beta^{(-k)}}_2^2.
\end{align*}
Since $\widehat \gamma_k = O_p(\gamma_n)$ due to Lemma~\ref{lemma concentrate gamma}, then
$
    n_k^{-1}\sum_{i\in \mathcal{I}_k}\Gamma_i\p{X_i^\top  \widehat \Delta_{\beta}^{(-k)}}^2 = O_p\p{\gamma_n\norm{\widehat \Delta_\beta^{(-k)}}_2^2}.
$
Thus,
$
    r_2 = O_p\p{\norm{\widehat \Delta_\alpha^{(-k)}}_2\norm{\widehat \Delta_\beta^{(-k)}}_2}.
$

For $r_3$, by Taylor's theorem, there exists $\widetilde \alpha$ between $\widehat \alpha^{(-k)}$ and $\alpha^*$ such that
\begin{align*}
    r_3 &= n_k^{-1}\sum_{i\in \mathcal{I}_k}\Gamma_i\br{-\exp\p{-X_i^\top \alpha^*}\p{X_i^\top\widehat \Delta_\alpha^{(-k)}} + \exp\p{-X_i^\top\widetilde \alpha}\p{X_i^\top\widehat \Delta_\alpha^{(-k)}}^2}w_i=-r_{31} + r_{32},
\end{align*}
where
\begin{align*}
    r_{31} &=n_k^{-1}\sum_{i\in \mathcal{I}_k}\Gamma_i\exp\p{-X_i^\top \alpha^*}w_i\p{X_i^\top\widehat \Delta_\alpha^{(-k)}},\quad
    r_{32} = n_k^{-1}\sum_{i\in \mathcal{I}_k}\Gamma_i\exp\p{-X_i^\top\widetilde \alpha}w_i\p{X_i^\top\widehat \Delta_\alpha^{(-k)}}^2.
\end{align*}
By Lemma~\ref{lemma infty norm Ji}, we have
\begin{align*}
r_{31}&\leq \norm{n_k^{-1}\sum_{i\in \mathcal{I}_k}\Gamma_i\exp\p{-X_i^\top \alpha^*}w_iX_i}_\infty \norm{\widehat \Delta_\alpha^{(-k)}}_1 =O_p\p{\norm{\widehat \Delta_\alpha^{(-k)}}_1(\frac{\log d}{n \gamma_n})^{1/2}}.
\end{align*}
In addition, by \eqref{exp(-X'alpha = gamma-1)},
$
\exp\p{-X_i^\top\widetilde \alpha} \leq  \exp\p{\abs{X_i^\top\widehat \Delta_\alpha^{(-k)}} - X_i^\top\alpha^*} \leq k_0^{-1}\gamma_n^{-1}\exp\p{X_i^\top\abs{\widehat \Delta_\alpha^{(-k)}}}.
$
By H{\"o}lder inequality, we have
\begin{align*}
   r_{32}&\leq k_0^{-1}\gamma_n^{-1}n_k^{-1}\sum_{i\in \mathcal{I}_k}\Gamma_i\exp\p{\abs{X_i^\top\widehat \Delta_\alpha^{(-k)}}}w_i\p{X_i^\top\widehat \Delta_\alpha^{(-k)}}^2\\
    &\leq k_0^{-1}\gamma_n^{-1}\br{n_k^{-1}\sum_{i\in \mathcal{I}_k}\Gamma_iw_i^2}^{1/2}\br{n_k^{-1}\sum_{i\in \mathcal{I}_k}\Gamma_i\exp\p{2\abs{X_i^\top\widehat \Delta_\alpha^{(-k)}}}\p{X_i^\top\widehat \Delta_\alpha^{(-k)}}^4}^{1/2}.
\end{align*}
Since $\E\sbr{\Gamma_iw_i^2} = \gamma_n \E\sbr{w_i^2\mid\Gamma_i=1} \leq \gamma_n\sigma_w^2$, then
$
    n_k^{-1}\sum_{i\in \mathcal{I}_k}\Gamma_iw_i^2 = O_p\p{\gamma_n}.
$
By Lemma~\ref{lemma Exp(Xn2) ineq}, we have
\begin{align*}
    n_k^{-1}\sum_{i\in \mathcal{I}_k}\Gamma_i\exp\p{2\abs{X_i^\top\widehat \Delta_\alpha^{(-k)}}}\p{X_i^\top\widehat \Delta_\alpha^{(-k)}}^4 = O_p\p{\gamma_n\norm{\widehat \Delta_\alpha^{(-k)}}_2^4}.
\end{align*}
Thus, we have
\begin{align*}
    n_k^{-1}\sum_{i\in \mathcal{I}_k}\Gamma_i\exp\p{-X_i^\top\widetilde \alpha}w_i\p{X_i^\top\widehat \Delta_\alpha^{(-k)}}^2 = O_p\p{\norm{\widehat \Delta_\alpha^{(-k)}}_2^2}.
\end{align*}
It follows that
$
    r_3 = O_p\p{\norm{\widehat \Delta_\alpha^{(-k)}}_1({\log d}/{n \gamma_n})^{1/2} + \norm{\widehat \Delta_\alpha^{(-k)}}_2^2}.
$

For $r_4$, we have $r_{4} = r_{41} + r_{42}$, where
\begin{align*}
    r_{41} &= n_k^{-1}\sum_{i\in \mathcal{I}_k}\p{1-\Gamma_i}\p{X_i - \E\sbr{X_i\mid \Gamma_i=0}}^\top\Delta^{(-k)},\\
    r_{42} &= n_k^{-1}\sum_{i\in \mathcal{I}_k}\p{1-\Gamma_i}\p{X_i - \E\sbr{X_i\mid \Gamma_i=0}}^\top\widetilde \Delta_{\beta}^{(-k)}.
\end{align*}
By Lemma~\ref{lemma gradient infty norm 1-Gamma X}, 
\begin{align*}
    r_{41} \leq \norm{n_k^{-1}\sum_{i\in \mathcal{I}_k}\p{1-\Gamma_i}\p{X_i - \E\sbr{X_i\mid \Gamma_i=0}}}_\infty \norm{\Delta^{(-k)}}_1 = O_p\p{\norm{\Delta^{(-k)}}_1(\frac{\log d}{n})^{1/2}}.
\end{align*}
Since $\widetilde \Delta_{\beta}^{(-k)} \perp (1-\Gamma_i)X_i $ for $i \in \mc{D}_{\mc{I}_k}$, then
\begin{align*}
    \E\sbr{r_{42} \mid \widetilde \Delta_{\beta}^{(-k)}} = n_k^{-1}\sum_{i\in \mathcal{I}_k}\E\sbr{\p{1-\Gamma_i}\p{X_i - \E\sbr{X_i\mid \Gamma_i=0}}^\top\widetilde \Delta_{\beta}^{(-k)} \mid \widetilde \Delta_{\beta}^{(-k)}} = 0,
\end{align*}
and for $i \in \mc{D}_{\mc{I}_k}$, by Lemma~\ref{sub-gaussian properties}(c),
\begin{align*}
    \E\sbr{r_{42}^2\mid \widetilde \Delta_{\beta}^{(-k)}} &= n_k^{-1}\E\sbr{\p{1-\Gamma_i}\p{\p{X_i - \E\sbr{X_i\mid \Gamma_i=0}}^\top\widetilde \Delta_{\beta}^{(-k)}}^2 \mid \widetilde \Delta_{\beta}^{(-k)}}\\
    &=n_k^{-1}(1-\gamma_n)\E\sbr{\p{\p{X_i - \E\sbr{X_i\mid \Gamma_i=0}}^\top\widetilde \Delta_{\beta}^{(-k)}}^2 \mid \Gamma_i=0, \widetilde \Delta_{\beta}^{(-k)}}\\
    &\leq n_k^{-1}\E\sbr{\p{X_i^\top\widetilde \Delta_{\beta}^{(-k)}}^2\mid \Gamma_i=0, \widetilde \Delta_{\beta}^{(-k)}}\leq 2\sigma^2 n_k^{-1}\norm{\widetilde \Delta_{\beta}^{(-k)}}_2^2.
\end{align*}
Thus, $r_{42} = O_p\p{n^{-1/2}\norm{\widetilde \Delta_{\beta}^{(-k)}}_2}$. It follows that
\begin{align*}
    r_{4} = O_p\p{\norm{\Delta^{(-k)}}_1(\frac{\log d}{n})^{1/2} + n^{-1/2}\norm{\widetilde \Delta_{\beta}^{(-k)}}_2}.
\end{align*}

For $r_5$, we have $r_5 = \p{1-\widehat\gamma_k}\p{1-\widehat \gamma}^{-1}(r_{51} + r_{52})$, where
\begin{align*}
    r_{51} &= n^{-1}\sum_{i=1}^n \p{1-\Gamma_i}\p{X_i -\E\sbr{X_i\mid \Gamma_i=0}}^\top\Delta^{(-k)},\\
    r_{52} &= n^{-1}\sum_{i=1}^n \p{1-\Gamma_i}\p{X_i -\E\sbr{X_i\mid \Gamma_i=0}}^\top\widetilde \Delta_{\beta}^{(-k)}.
\end{align*}
Similar to $r_{41}$, we have
$
    r_{51} = O_p\p{\norm{\Delta^{(-k)}}_1({\log d}/{n})^{1/2}}.
$
By Lemma~\ref{lemma conditional independence}, we have $\widetilde \Delta_{\beta}^{(-k)} \perp (1-\Gamma_i)X_i \mid \Gamma_{1:n}$ for $i \in [n]$. Since
\begin{align*}
    &\E\sbr{\p{1-\Gamma_i}(X_i -\E\sbr{X_i\mid \Gamma_i=0})^\top\widetilde \Delta_{\beta}^{(-k)} \mid \Gamma_{1:n}, \widetilde \Delta_{\beta}^{(-k)}} \\
    &\qquad= \p{1-\Gamma_i}\E\sbr{\p{X_i -\E\sbr{X_i\mid \Gamma_i=0}}^\top\widetilde \Delta_{\beta}^{(-k)} \mid \Gamma_i=0, \widetilde \Delta_{\beta}^{(-k)}} = 0,
\end{align*}
and for $i \neq j$,
\begin{align*}
    \p{1-\Gamma_i}(X_i -\E\sbr{X_i\mid \Gamma_i=0})  \perp \p{1-\Gamma_j}(X_j -\E\sbr{X_j\mid \Gamma_i=0}) \mid (\Gamma_i,\Gamma_j, \widetilde \Delta_{\beta}^{(-k)}),
\end{align*}
then by Lemma~\ref{sub-gaussian properties}(c), 
\begin{align*}
    \E\sbr{r_{52}^2\mid \Gamma_{1:n}, \widetilde \Delta_{\beta}^{(-k)}} &= n^{-2}\sum_{i=1}^n \E\sbr{\p{1-\Gamma_i}\p{(X_i -\E\sbr{X_i\mid \Gamma_i=0})^\top\widetilde \Delta_{\beta}^{(-k)}}^2 \mid \Gamma_{1:n}, \widetilde \Delta_{\beta}^{(-k)}}\\
    &\leq n^{-2}\sum_{i=1}^n \E\sbr{\p{1-\Gamma_i}\p{X_i^\top\widetilde \Delta_{\beta}^{(-k)}}^2 \mid \Gamma_{i}=0, \widetilde \Delta_{\beta}^{(-k)}}\\
    &\leq n^{-2}\sum_{i=1}^n\p{1-\Gamma_i}2\sigma^2\norm{ \widetilde \Delta_{\beta}^{(-k)}}_2^2= 2\sigma^2n^{-1}(1-\widehat\gamma)\norm{ \widetilde \Delta_{\beta}^{(-k)}}_2^2,
\end{align*}
which implies 
$
    r_{52} = O_p\p{n^{-1/2}\norm{ \widetilde \Delta_{\beta}^{(-k)}}_2}.
$
Thus,
$
    r_5 = O_p\p{\norm{\Delta^{(-k)}}_1({\log d}/{n})^{1/2} + n^{-1/2}\norm{ \widetilde \Delta_{\beta}^{(-k)}}_2}.
$

In conclusion, we have
\begin{align*}
    \widehat \theta_g^{\p{k}} - \widetilde \theta_g^{\p{k}} &= O_p\p{\norm{\Delta^{(-k)}}_1(\frac{\log d}{n\gamma_n})^{1/2} + (n\gamma_n)^{-1/2}\norm{\widetilde \Delta_\beta^{(-k)}}_2+\norm{\widehat \Delta_\alpha^{(-k)}}_2\norm{\widehat \Delta_\beta^{(-k)}}_2} \\
    &\qquad +O_p\p{\norm{\widehat \Delta_\alpha^{(-k)}}_1(\frac{\log d}{n \gamma_n})^{1/2} + \norm{\widehat \Delta_\alpha^{(-k)}}_2^2+\norm{\Delta^{(-k)}}_1(\frac{\log d}{n})^{1/2} + n^{-1/2}\norm{\widetilde \Delta_{\beta}^{(-k)}}_2}\\
    &\qquad + O_p\p{\norm{\Delta^{(-k)}}_1(\frac{\log d}{n})^{1/2} + n^{-1/2}\norm{ \widetilde \Delta_{\beta}^{(-k)}}_2}.
\end{align*}
By Proposition~\ref{proposition mar nuisance consistency body} and Lemma~\ref{lemma consistency of betahat and betatilde for product sparsity}, we have
\begin{align*}
    \widehat \theta_g^{\p{k}} - \widetilde \theta_g^{\p{k}} &=O_p\p{\norm{\widehat \Delta_\alpha^{(-k)}}_2^2 + \norm{\widehat \Delta_\alpha^{(-k)}}_2\norm{\widehat \Delta_\beta^{(-k)}}_2+\p{\norm{\Delta^{(-k)}}_1 +  \norm{\widehat \Delta_\alpha^{(-k)}}_1}(\frac{\log d}{n \gamma_n})^{1/2}}\\ &=O_p\p{  \frac{\p{s_\alpha + (s_\alpha s_\beta)^{1/2}} \log d}{n\gamma_n}}.
\end{align*}

Let $\E\sbr{Y_i \mid X_i} = X_i^\top \beta^*$. For $r_3$, by Lemma~\ref{lemma conditional independence}, $(\Gamma_iX_i, \Gamma_iY_i)\perp \widehat \alpha^{(-k)} \mid \Gamma_i$ for $i \in \mathcal{I}_k$. Then
\begin{align*}
    &\E\sbr{\p{\frac{ \Gamma_i}{g\p{X_i^\top\widehat \alpha^{(-k)}}} -\frac{\Gamma_i}{g\p{X_i^\top\alpha^*}} }w_i \mid \widehat \alpha^{(-k)}, \Gamma_i, X_i}=\p{\frac{ 1}{g\p{X_i^\top\widehat \alpha^{(-k)}}} -\frac{1}{g\p{X_i^\top\alpha^*}} }\E\sbr{\Gamma_iw_i\mid \Gamma_i, X_i}\\
    &\qquad\overset{(i)}{=} \Gamma_i\p{\frac{ 1}{g\p{X_i^\top\widehat \alpha^{(-k)}}} -\frac{1}{g\p{X_i^\top\alpha^*}} }\E\sbr{w_i\mid X_i} = 0,
\end{align*}
where we use Assumption~\ref{assumption mar} in $(i)$. Then by mean value theorem and \eqref{exp(-X'alpha = gamma-1)}, for some $t \in (0,1)$,
\begin{align*}
    &\E\sbr{r_3^2 \mid \Gamma_{1:n},\widehat \alpha^{(-k)}}= n_k^{-2}\sum_{i\in \mathcal{I}_k}\E\sbr{\p{\frac{ \Gamma_i}{g\p{X_i^\top\widehat \alpha^{(-k)}}} -\frac{\Gamma_i}{g\p{X_i^\top\alpha^*}} }^2w_i^2 \mid \Gamma_i=1, \widehat \alpha^{(-k)}}\\
    &=n_k^{-2}\sum_{i\in \mathcal{I}_k}\Gamma_i\E\sbr{\p{\exp(-2t\widehat \Delta_\alpha^{(-k)} - 2X_i^\top \alpha^*)}\p{X_i^\top \widehat \Delta_\alpha^{(-k)}}^2w_i^2 \mid \Gamma_i=1, \widehat \alpha^{(-k)}}\\
    &\leq k_0^{-2}\gamma_n^{-2}n_k^{-2}\sum_{i\in \mathcal{I}_k}\Gamma_i\E\sbr{\exp\p{2\abs{\widehat \Delta_\alpha^{(-k)} }}\p{X_i^\top \widehat \Delta_\alpha^{(-k)}}^2w_i^2\mid \Gamma_i=1, \widehat \alpha^{(-k)}}\\
    &\leq  k_0^{-2}\gamma_n^{-2}n_k^{-2}\sum_{i\in \mathcal{I}_k}\Gamma_i\E\sbr{\exp\p{4\abs{\widehat \Delta_\alpha^{(-k)} }}\p{X_i^\top \widehat \Delta_\alpha^{(-k)}}^4\mid \Gamma_i=1, \widehat \alpha^{(-k)}}^{1/2}\E\sbr{w_i^4\mid \Gamma_i=1}^{1/2}.
\end{align*}
By Lemma~\ref{lemma concentrate gamma} and Lemma~\ref{lemma Exp(Xn2) ineq}, we have
$
    r_3 = O_p\p{(n\gamma_n)^{-1/2}\norm{\widehat \Delta_\alpha^{(-k)}}_2}.
$
It follows that 
\begin{align*}
    \widehat \theta_g^{\p{k}} - \widetilde \theta_g^{\p{k}} &=O_p\p{\p{n\gamma_n}^{-1/2}\norm{\widehat \Delta_\alpha^{(-k)}}_2 +  \norm{\widehat \Delta_\alpha^{(-k)}}_2\norm{\widehat \Delta_\beta^{(-k)}}_2+\norm{\Delta^{(-k)}}_1(\frac{\log d}{n \gamma_n})^{1/2}}\\ &=O_p\p{ \frac{(s_\alpha \log d)^{1/2}}{n\gamma_n} +\frac{(s_\alpha s_\beta)^{1/2} \log d}{n\gamma_n}}.
\end{align*}
\end{proof}

\begin{lemma}\label{lemma mar i.i.d. quadratic form}
    Suppose $\left(\mathbf{X}_i\right)_{i=1}^m$ are i.i.d. sub-Gaussian random vectors. Then, for any (possibly random) $\boldsymbol{\Delta} \in \mathbb{R}^d$, as $m, d \rightarrow \infty$,
$$
\sup _{\boldsymbol{\Delta} \in \mathbb{R}^d /\{\mathbf{0}\}} \frac{m^{-1} \sum_{i=1}^m\left(\mathbf{X}_i^{\top} \boldsymbol{\Delta}\right)^2}{\|\boldsymbol{\Delta}\|_1^2 m^{-1} \log d+\|\boldsymbol{\Delta}\|_2^2}=O_p(1) .
$$
\end{lemma}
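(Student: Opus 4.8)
�The goal is to establish a uniform bound on the empirical quadratic form $m^{-1}\sum_{i=1}^m(\mathbf X_i^\top\boldsymbol\Delta)^2$ normalized by $\|\boldsymbol\Delta\|_1^2 m^{-1}\log d + \|\boldsymbol\Delta\|_2^2$, valid over all $\boldsymbol\Delta\in\mathbb R^d\setminus\{\mathbf 0\}$ simultaneously. My plan is to derive this as a one-sided companion to the restricted-eigenvalue-type lower bounds already used in the paper (e.g.\ Lemma~\ref{concentrate lem RE conditon MCAR} and Lemma~\ref{lemma Lemma C.5 of zhang 2021}), since the structure of the bound — a $\|\boldsymbol\Delta\|_2^2$ term plus a $(\log d/m)\|\boldsymbol\Delta\|_1^2$ term — is exactly the form that arises from a union bound over coordinates in high-dimensional concentration.

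\textbf{Step 1 (reduction by homogeneity).} Since both numerator and denominator are positively homogeneous of degree $2$ in $\boldsymbol\Delta$, it suffices to prove $\sup_{\|\boldsymbol\Delta\|_2=1} m^{-1}\sum_{i=1}^m(\mathbf X_i^\top\boldsymbol\Delta)^2 = O_p(1 + m^{-1}\log d\cdot\|\boldsymbol\Delta\|_1^2)$ on the unit sphere — or, more precisely, to show that with high probability $m^{-1}\sum_{i=1}^m(\mathbf X_i^\top\boldsymbol\Delta)^2 \le C(\|\boldsymbol\Delta\|_2^2 + m^{-1}\log d\,\|\boldsymbol\Delta\|_1^2)$ for all $\boldsymbol\Delta$ at once.

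\textbf{Step 2 (the key deviation bound).} Write $m^{-1}\sum_{i=1}^m(\mathbf X_i^\top\boldsymbol\Delta)^2 = \boldsymbol\Delta^\top\widehat\Sigma\boldsymbol\Delta$ with $\widehat\Sigma = m^{-1}\sum_{i=1}^m\mathbf X_i\mathbf X_i^\top$ and $\Sigma = \mathbb E[\mathbf X_i\mathbf X_i^\top]$. Then $\boldsymbol\Delta^\top\widehat\Sigma\boldsymbol\Delta \le \boldsymbol\Delta^\top\Sigma\boldsymbol\Delta + \|\widehat\Sigma - \Sigma\|_{\max}\|\boldsymbol\Delta\|_1^2 \le \lambda_{\max}(\Sigma)\|\boldsymbol\Delta\|_2^2 + \|\widehat\Sigma - \Sigma\|_{\max}\|\boldsymbol\Delta\|_1^2$. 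The sub-Gaussian assumption gives $\lambda_{\max}(\Sigma) = O(1)$, so it remains to control $\|\widehat\Sigma - \Sigma\|_{\max}$. Each entry $(\widehat\Sigma - \Sigma)_{jk} = m^{-1}\sum_i(\mathbf X_i[j]\mathbf X_i[k] - \Sigma_{jk})$ is an average of i.i.d.\ mean-zero sub-exponential variables (products of sub-Gaussians, by Lemma~\ref{sub-gaussian properties}(d)), so Bernstein-type concentration combined with Lemma~\ref{Lemma Tail Bounds for Maximums} applied to the $d^2$ entries yields $\|\widehat\Sigma - \Sigma\|_{\max} = O_p(\sqrt{\log d/m} + \log d/m) = O_p(\log d/m)$ provided $m \gg \log d$ — and more generally $O_p(\sqrt{\log d/m})$. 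Either way, feeding this back into the displayed inequality gives $\boldsymbol\Delta^\top\widehat\Sigma\boldsymbol\Delta = O_p(\|\boldsymbol\Delta\|_2^2) + O_p(\sqrt{\log d/m})\|\boldsymbol\Delta\|_1^2$, and since $\sqrt{\log d/m}\le 1 + \log d/m$, this is absorbed into $O_p(\|\boldsymbol\Delta\|_2^2 + (\log d/m)\|\boldsymbol\Delta\|_1^2)$, which after dividing by the denominator gives the claim uniformly over $\boldsymbol\Delta$.

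\textbf{Main obstacle.} The substantive point is that the bound must be \emph{uniform} over all random $\boldsymbol\Delta$, which is why I route everything through a deterministic-coefficient inequality ($\boldsymbol\Delta^\top\widehat\Sigma\boldsymbol\Delta \le \boldsymbol\Delta^\top\Sigma\boldsymbol\Delta + \|\widehat\Sigma-\Sigma\|_{\max}\|\boldsymbol\Delta\|_1^2$) that holds pointwise for \emph{every} $\boldsymbol\Delta$ on the single high-probability event $\{\|\widehat\Sigma - \Sigma\|_{\max}\le C\sqrt{\log d/m}\}$; no chaining or peeling over sparsity levels is needed here because this is an upper bound (unlike the restricted-eigenvalue lower bounds, which genuinely require the $\ell_1$-correction and a localization argument). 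The only mild care needed is making sure the sub-exponential tail bound on the entries of $\widehat\Sigma - \Sigma$ is applied with the correct Orlicz norm and that the resulting rate, whether $\sqrt{\log d/m}$ or $\log d/m$, is dominated by $1 + \log d/m$ so it can be folded into the stated denominator; this is routine given Lemmas~\ref{sub-gaussian properties} and~\ref{Lemma Tail Bounds for Maximums}.
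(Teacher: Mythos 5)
Your overall strategy---bound $\Delta^\top\widehat\Sigma\Delta$ by $\Delta^\top\Sigma\Delta+\abs{\Delta^\top(\widehat\Sigma-\Sigma)\Delta}$ and control the second term uniformly via an elementwise max-norm---contains a genuine gap at the absorption step. The inequality $\abs{\Delta^\top(\widehat\Sigma-\Sigma)\Delta}\leq\norm{\widehat\Sigma-\Sigma}_{\max}\norm{\Delta}_1^2$ is fine, and your rate $\norm{\widehat\Sigma-\Sigma}_{\max}=O_p(\sqrt{\log d/m})$ is correct (and essentially tight; note also that for $m\gg\log d$ the dominant term is $\sqrt{\log d/m}$, not $\log d/m$ as you first write). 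But the lemma requires the coefficient of $\norm{\Delta}_1^2$ to be $\log d/m$, which is smaller than $\sqrt{\log d/m}$ by a factor $\sqrt{m/\log d}$. Your attempted fix, $\sqrt{\log d/m}\leq 1+\log d/m$, produces the bound $\norm{\Delta}_1^2+(\log d/m)\norm{\Delta}_1^2$, and the leftover term $\norm{\Delta}_1^2$ cannot be folded into $C\norm{\Delta}_2^2$: the ratio $\norm{\Delta}_1^2/\norm{\Delta}_2^2$ can be as large as $d$. Concretely, take $\Delta=\mathbf{1}_d/\sqrt d$ and $m\asymp d\log d$; then your upper bound on the numerator is of order $\sqrt d$ while the denominator $\norm{\Delta}_1^2 m^{-1}\log d+\norm{\Delta}_2^2$ is of order $1$, so the ratio you control diverges even though the lemma asserts it is $O_p(1)$.

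The reason the max-norm route cannot be repaired is that the lemma is genuinely stronger than what entrywise concentration gives: it is the ``transfer principle'' form of the upper restricted eigenvalue bound. The standard proof (and the one behind Lemma C.7 of \cite{zhang2021dynamic}, which is all the paper itself invokes here --- the paper gives no self-contained argument) first establishes $\sup_{\norm{v}_0\leq s,\,\norm{v}_2=1}\abs{v^\top(\widehat\Sigma-\Sigma)v}\leq C$ with probability $1-e^{-cm}$ for $s\asymp m/\log d$, via an $\varepsilon$-net over $s$-sparse unit vectors paired with a union bound over the $\binom{d}{s}$ supports (this is where the factor $s\log d\asymp m$ in the exponent comes from), and then extends to arbitrary $\Delta$ by decomposing its coordinates into blocks of size $s$, which is exactly what produces the $s^{-1}\norm{\Delta}_1^2\asymp(\log d/m)\norm{\Delta}_1^2$ correction. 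So some form of localization over sparsity levels is unavoidable here, contrary to your remark that no peeling is needed for the upper bound; you should either carry out that argument or cite the external lemma as the paper does.
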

Lemma~\ref{lemma mar i.i.d. quadratic form} follows from Lemma C.7 of \cite{zhang2021dynamic}.

\begin{lemma}\label{lemma psi diff square}
Let Assumption~\ref{assumption mar nuisance (a)} hold. Choose $\lambda_\alpha \asymp (\log d/ (n\gamma_n))^{1/2}$. If $n\gamma_n \gg \max\br{s_{\alpha},\log n}\log d$, then
    \begin{align*}
        n_k^{-1}\sum_{i\in \mathcal{I}_k}\p{\Psi_i\p{\widehat \alpha^{(-k)}, \widehat \beta^{(-k)}} - \Psi_i\p{\alpha^*, \beta^*}}^2 = O_p\p{\gamma_n^{-1}\br{\norm{\widehat \Delta_\alpha^{(-k)}}_2^2 + \norm{\widehat \Delta_\beta^{(-k)}}_2^2}}.
    \end{align*}
\end{lemma}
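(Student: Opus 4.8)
The plan is to expand the difference $\Psi_i(\widehat\alpha^{(-k)},\widehat\beta^{(-k)}) - \Psi_i(\alpha^*,\beta^*)$ and bound the empirical second moment of each resulting term. Writing $w_i = Y_i - X_i^\top\beta^*$, $\widehat\Delta_\alpha = \widehat\alpha^{(-k)}-\alpha^*$, and $\widehat\Delta_\beta = \widehat\beta^{(-k)}-\beta^*$, we have
\begin{align*}
\Psi_i(\widehat\alpha^{(-k)},\widehat\beta^{(-k)}) - \Psi_i(\alpha^*,\beta^*)
&= \Gamma_i X_i^\top\widehat\Delta_\beta - \frac{\Gamma_i}{g(X_i^\top\widehat\alpha^{(-k)})}\p{w_i - X_i^\top\widehat\Delta_\beta} + \frac{\Gamma_i}{g(X_i^\top\alpha^*)}w_i\\
&= \p{1-\frac{\Gamma_i}{g(X_i^\top\widehat\alpha^{(-k)})}}\Gamma_i X_i^\top\widehat\Delta_\beta + \Gamma_i\p{\frac{1}{g(X_i^\top\alpha^*)} - \frac{1}{g(X_i^\top\widehat\alpha^{(-k)})}}w_i.
\end{align*}
Since $(a+b)^2 \le 2a^2 + 2b^2$, the sum $n_k^{-1}\sum_{i\in\mathcal{I}_k}(\cdots)^2$ splits into two pieces. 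For the first piece I would use $\Gamma_i/g(X_i^\top\widehat\alpha^{(-k)}) = \Gamma_i(1+\exp(-X_i^\top\widehat\alpha^{(-k)}))$ and bound $|1-\Gamma_i/g(X_i^\top\widehat\alpha^{(-k)})|^2\Gamma_i \le \Gamma_i\exp(-2X_i^\top\widehat\alpha^{(-k)})$ up to constants, so that the term is controlled by $n_k^{-1}\sum_{i\in\mathcal{I}_k}\Gamma_i\exp(-2X_i^\top\widehat\alpha^{(-k)})(X_i^\top\widehat\Delta_\beta)^2$; then split $\exp(-2X_i^\top\widehat\alpha^{(-k)}) = \exp(-2X_i^\top\alpha^*)\exp(-2X_i^\top\widehat\Delta_\alpha)$, invoke \eqref{exp(-X'alpha = gamma-1)}-type bounds ($\exp(-X_i^\top\alpha^*) \asymp \gamma_n^{-1}$ by Assumption \ref{assumption mar nuisance (a)}(a)), and apply a variant of Lemma \ref{lemma Exp(Xn2) ineq} combined with Lemma \ref{lemma mar i.i.d. quadratic form} (using the conditional independence $\mathcal{D}_{\mathcal{I}_k}\perp\widehat\beta^{(-k)}\mid\Gamma_{\mathcal{I}_k}$ from Lemma \ref{lemma conditional independence}) to get the bound $O_p(\gamma_n^{-1}\|\widehat\Delta_\beta\|_2^2)$, where the $\|\widehat\Delta_\beta\|_1^2\log d/(n\gamma_n)$ cross-term is absorbed since $\|\widehat\Delta_\beta\|_1^2 \lesssim s_\beta\|\widehat\Delta_\beta\|_2^2$ and $s_\beta\log d/(n\gamma_n) = o(1)$.

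For the second piece, by Taylor's theorem there is $\widetilde\alpha$ between $\widehat\alpha^{(-k)}$ and $\alpha^*$ with $1/g(X_i^\top\alpha^*) - 1/g(X_i^\top\widehat\alpha^{(-k)}) = \exp(-X_i^\top\widetilde\alpha)(X_i^\top\widehat\Delta_\alpha)$, and $\exp(-X_i^\top\widetilde\alpha) \le \exp(|X_i^\top\widehat\Delta_\alpha| - X_i^\top\alpha^*) \lesssim \gamma_n^{-1}\exp(|X_i^\top\widehat\Delta_\alpha|)$. Hence $n_k^{-1}\sum_{i\in\mathcal{I}_k}\Gamma_i(1/g(X_i^\top\alpha^*) - 1/g(X_i^\top\widehat\alpha^{(-k)}))^2 w_i^2 \lesssim \gamma_n^{-2} n_k^{-1}\sum_{i\in\mathcal{I}_k}\Gamma_i\exp(2|X_i^\top\widehat\Delta_\alpha|)(X_i^\top\widehat\Delta_\alpha)^2 w_i^2$. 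Applying H\"older's inequality to separate the $w_i^2$ factor (using $\E[w_i^8\mid\Gamma_i=1]\le\sigma_w^8$ from Assumption \ref{assumption mar nuisance (a)}(c) and Lemma \ref{lemma constant concentration of w}-type arguments) from $\Gamma_i\exp(2|X_i^\top\widehat\Delta_\alpha|)(X_i^\top\widehat\Delta_\alpha)^{p}$ for suitable $p$, and then invoking Lemma \ref{lemma Exp(Xn2) ineq} together with the conditional-independence fact $\Gamma_i X_i \perp \widehat\alpha^{(-k)}\mid\Gamma_i$ (Lemma \ref{lemma conditional independence}), gives $O_p(\gamma_n^{-1}\|\widehat\Delta_\alpha\|_2^2)$; note the factor $\gamma_n^{-2}$ becomes $\gamma_n^{-1}$ after the $\gamma_n$ from $\E[\Gamma_i\cdots]$ is extracted, as in the proofs of Lemmas \ref{lemma mar theta hat - theta tilde consistency} and \ref{lemma Exp(Xn2) ineq}.

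Combining the two pieces yields $n_k^{-1}\sum_{i\in\mathcal{I}_k}(\Psi_i(\widehat\alpha^{(-k)},\widehat\beta^{(-k)}) - \Psi_i(\alpha^*,\beta^*))^2 = O_p(\gamma_n^{-1}(\|\widehat\Delta_\alpha^{(-k)}\|_2^2 + \|\widehat\Delta_\beta^{(-k)}\|_2^2))$, as claimed. I expect the main obstacle to be the bookkeeping around the exponential moment terms $\Gamma_i\exp(c|X_i^\top\widehat\Delta_\alpha|)(X_i^\top\widehat\Delta_\alpha)^p$: one must carefully track that $\widehat\Delta_\alpha$ stays in the unit ball with high probability (Lemma \ref{lemma mar prob. of Delta in the ball}), apply the right conditional-independence decomposition so that $\widehat\Delta_\alpha$ can be treated as fixed inside the conditional expectation, and verify that the exponents chosen in the repeated H\"older splits are consistent — but each individual estimate is a direct application of a lemma already established in the excerpt, so no genuinely new machinery is needed.
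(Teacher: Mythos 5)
Your proposal is essentially the paper's proof: the same Cauchy--Schwarz/H\"older splitting, the same Taylor expansion of $1/g$ with the bound $\exp(-X_i^\top\widetilde\alpha)\leq k_0^{-1}\gamma_n^{-1}\exp\p{\abs{X_i^\top\widehat\Delta_\alpha^{(-k)}}}$, the same use of Lemma \ref{lemma Exp(Xn2) ineq} on the event $\norm{\widehat\Delta_\alpha^{(-k)}}_2\leq 1$, and the same extraction of two factors of $\gamma_n^{1/2}$ to turn $\gamma_n^{-2}$ into $\gamma_n^{-1}$. The only structural difference is that you group the $\beta$-part around $1/g(X_i^\top\widehat\alpha^{(-k)})$ (two terms) whereas the paper anchors it at $1/g(X_i^\top\alpha^*)$ and peels off the cross term separately (three terms, $R_1,R_2,R_3$); this is immaterial, and the sign slip in your first displayed line washes out upon squaring.

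One caveat worth fixing: for the term $n_k^{-1}\sum_{i\in\mathcal I_k}\Gamma_i\p{X_i^\top\widehat\Delta_\beta^{(-k)}}^4$ you propose routing through Lemma \ref{lemma mar i.i.d. quadratic form} and absorbing the resulting $\norm{\widehat\Delta_\beta^{(-k)}}_1^2\log d/(n\gamma_n)$ term via $\norm{\widehat\Delta_\beta^{(-k)}}_1^2\lesssim s_\beta\norm{\widehat\Delta_\beta^{(-k)}}_2^2$. That step needs a cone condition on $\widehat\Delta_\beta^{(-k)}$ (Lemma \ref{lemma sparsity for norm1 and norm2}, which requires a high-probability event on $\lambda_\beta$) and the condition $s_\beta\log d/(n\gamma_n)=o(1)$ --- neither of which appears among the hypotheses of this lemma, whose statement constrains only $\lambda_\alpha$ and $s_\alpha$. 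The paper sidesteps this entirely by the other tool you already cite: since $\Gamma_iX_i\perp\widehat\Delta_\beta^{(-k)}\mid\Gamma_i$ for $i\in\mathcal I_k$ (Lemma \ref{lemma conditional independence}), one bounds $\E\sbr{\Gamma_i\p{X_i^\top\widehat\Delta_\beta^{(-k)}}^4\mid\Gamma_i,\widehat\beta^{(-k)}}\leq \Gamma_i C\sigma^4\norm{\widehat\Delta_\beta^{(-k)}}_2^4$ directly from conditional sub-Gaussianity and concludes via Markov and Lemma \ref{lemma convergence of conditional random varaible}, with no $\ell_1$ term and no assumption on $s_\beta$. Replace the quadratic-form-lemma detour with that argument and the proof is complete.
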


\begin{proof}
From Lemma~\ref{lemma mar theta hat - theta tilde consistency},
\begin{align*}
    \Psi_i\p{\widehat \alpha^{(-k)}, \widehat \beta^{(-k)}} &= \Psi_i\p{\alpha^*, \beta^*} +\Gamma_i\p{1-\frac{1}{g\p{X_i^\top\alpha^*}}}X_i^\top  \widehat \Delta_{\beta}^{(-k)}\\
    &\qquad +\p{\frac{ \Gamma_i}{g\p{X_i^\top\widehat \alpha^{(-k)}}} -\frac{\Gamma_i}{g\p{X_i^\top\alpha^*}} }X_i^\top  \widehat \Delta_{\beta}^{(-k)} +\p{\frac{ \Gamma_i}{g\p{X_i^\top\widehat \alpha^{(-k)}}} -\frac{\Gamma_i}{g\p{X_i^\top\alpha^*}} }w_i.
\end{align*}
Then by Cauchy-Schwarz inequality,
\begin{align*}
    n_k^{-1}\sum_{i\in \mathcal{I}_k}\p{\Psi_i\p{\widehat \alpha^{(-k)}, \widehat \beta^{(-k)}} - \Psi_i\p{\alpha^*, \beta^*}}^2 &\leq 3(R_1 + R_2 + R_3),
\end{align*}
where
\begin{align*}
    R_1 &= n_k^{-1}\sum_{i\in \mathcal{I}_k}\Gamma_i\p{1-\frac{1}{g\p{X_i^\top\alpha^*}}}^2\p{X_i^\top  \widehat \Delta_{\beta}^{(-k)}}^2,\\
    R_2 &= n_k^{-1}\sum_{i\in \mathcal{I}_k}\p{\frac{ \Gamma_i}{g\p{X_i^\top\widehat \alpha^{(-k)}}} -\frac{\Gamma_i}{g\p{X_i^\top\alpha^*}} }^2\p{X_i^\top  \widehat \Delta_{\beta}^{(-k)}}^2,\\
    R_3 &= n_k^{-1}\sum_{i\in \mathcal{I}_k}\p{\frac{ \Gamma_i}{g\p{X_i^\top\widehat \alpha^{(-k)}}} -\frac{\Gamma_i}{g\p{X_i^\top\alpha^*}} }^2w_i^2
\end{align*}
By mean value theorem and \eqref{exp(-X'alpha = gamma-1)}, for some $t \in (0,1)$,
\begin{align*}
    \p{\frac{1}{g\p{X_i^\top\widehat \alpha^{(-k)}}} -\frac{1}{g\p{X_i^\top\alpha^*}}}^2 &= \p{\exp\p{-tX_i^\top \widehat\Delta_\alpha^{(-k)} - X_i^\top \alpha^*}\p{X_i^\top \widehat\Delta_\alpha^{(-k)}}}^2\\
    &\leq k_0^{-2}\gamma_n^{-2}\exp\p{2\abs{X_i^\top \widehat\Delta_\alpha^{(-k)}}}\p{X_i^\top \widehat\Delta_\alpha^{(-k)}}^2.
\end{align*}
Since $\p{1-{1}/{g\p{X_i^\top\alpha^*}}} \leq k_0^{-1}\gamma_n^{-1}$, by H{\"o}lder inequality,
\begin{align*}
   R_1 &\leq k_0^{-2}\gamma_n^{-2}\p{n_k^{-1}\sum_{i\in \mathcal{I}_k} \Gamma_i}^{1/2}R_a^{1/2},\quad
   R_2 \leq k_0^{-2}\gamma_n^{-2}R_a^{1/2}R_b^{1/2},\quad
   R_3 \leq k_0^{-2}\gamma_n^{-2}R_{b}^{1/2}R_{x}^{1/2},
\end{align*}
where
\begin{align*}
    R_a &= n_k^{-1}\sum_{i\in \mathcal{I}_k}\Gamma_i\p{X_i^\top  \widehat \Delta_{\beta}^{(-k)}}^4,\;\;
    R_b = n_k^{-1}\sum_{i\in \mathcal{I}_k}\Gamma_i\exp\p{4\abs{X_i^\top \widehat\Delta_\alpha^{(-k)}}}\p{X_i^\top \widehat\Delta_\alpha^{(-k)}}^4,\;\;
    R_c = n_k^{-1}\sum_{i\in \mathcal{I}_k}\Gamma_iw_i^4.
\end{align*}
By Lemma~\ref{lemma conditional independence}, $\Gamma_i X_i\perp \widehat \Delta_\beta^{(-k)} \mid \Gamma_i$ for $i \in \mc{I}_k$. By Lemma~\ref{sub-gaussian properties}(c), for some $C_1>0$,
\begin{align*}
    \E\sbr{\Gamma_i\p{X_i^\top\widehat \Delta_\beta^{(-k)}}^4 \mid \Gamma_i,\widetilde \beta^{(-k)}} \leq \Gamma_iC_1\sigma^2\norm{\widehat \Delta_\beta^{(-k)}}_2^4.
\end{align*}
Since
\begin{align*}
    \E\sbr{R_a \mid \Gamma_{1:n},\widetilde \beta^{(-k)}} = n_k^{-1}\sum_{i\in \mathcal{I}_k}\E\sbr{\Gamma_i\p{X_i^\top  \widehat \Delta_{\beta}^{(-k)}}^4\mid \Gamma_i,\widetilde \beta^{(-k)}} \leq n_k^{-1}\sum_{i\in \mathcal{I}_k}\Gamma_iC_1\sigma^2\norm{\widehat \Delta_\beta^{(-k)}}_2^4,
\end{align*}
by Lemma~\ref{lemma concentrate gamma}, we have $n_k^{-1}\sum_{i\in \mathcal{I}_k}\Gamma_i = O_p\p{\gamma_n}$ and 
$
    R_a = O_p\p{\gamma_n\norm{\widehat \Delta_\beta^{(-k)}}_2^2}.
$

By Lemma~\ref{lemma Exp(Xn2) ineq}, if $\lambda_\alpha \asymp (\log d/ (n\gamma_n))^{1/2}$ and $n\gamma_n \gg \max\br{s_{\alpha},\log n}\log d$, we have
\[
    R_b = O_p\p{\gamma_n\norm{\widehat \Delta_\alpha^{(-k)}}_2^4}.
\]
Note that
\[
    \E\sbr{R_c} = \E\sbr{\Gamma_iw_i^4} = \gamma_n\E\sbr{w_i^4 \mid \Gamma_i=1} \leq \gamma_n\sigma_w^4,
\]
we have
\[
    R_c = O_p\p{\gamma_n}.
\]
In conclusion,
\begin{align*}
    n_k^{-1}\sum_{i\in \mathcal{I}_k}\p{\Psi_i\p{\widehat \alpha^{(-k)}, \widehat \beta^{(-k)}} - \Psi_i\p{\alpha^*, \beta^*}}^2 = O_p\p{\gamma_n^{-1}\br{\norm{\widehat \Delta_\alpha^{(-k)}}_2^2 + \norm{\widehat \Delta_\beta^{(-k)}}_2^2}}.
\end{align*}
\end{proof}

\begin{lemma}\label{lemma mar sigma hat = sigma (1+o(1))}
    Let Assumptions~\ref{assumption mar}  and \ref{assumption mar nuisance (a)} hold. Assume either $\gamma_n\p{X} = g\p{X^\top \alpha^*_{PS}}$ or $\mu\p{X} = X^\top \beta^*_{OR}$ holds. Choose $\lambda_\alpha \asymp \lambda_\beta \asymp (\log d/(n\gamma_n))^{1/2}$. If $n\gamma_n \gg \max \{s_\alpha, s_\beta\log n$, $(\log n)^2\} \log d$ and  $s_\alpha s_\beta \ll (n\gamma_n)^{3/2}/(\log n(\log d)^2)$, then as $n, d\rightarrow \infty$, 
    \begin{align*}
        \widehat \sigma^2_g = \sigma^2_g\br{1 + O_p\p{(\frac{(s_\alpha + s_\beta)\log d}{n\gamma_n})^{1/2}}}.
    \end{align*}
\end{lemma}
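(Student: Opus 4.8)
The plan is to follow the same decomposition strategy used for the mean estimator in the MCAR case (Lemma~\ref{lemma variance consistency under MCAR}) and for the point estimator $\widehat\theta_g$ (Lemma~\ref{lemma mar theta hat - theta tilde consistency}): introduce an oracle variance estimator $\widetilde\sigma_g^2$ in which all nuisance estimates $\widehat\alpha^{(-k)}_{PS},\widehat\beta^{(-k)}_{OR}$ are replaced by the true $\alpha^*_{PS},\beta^*_{OR}$ and in which $\bar\Xi_0$-type quadratic forms are replaced by their population analogues, then argue (i) $\widehat\sigma_g^2-\widetilde\sigma_g^2 = o_p(\gamma_n^{-1})$ and (ii) $\widetilde\sigma_g^2-\sigma_g^2 = O_p(\gamma_n^{-1}(n\gamma_n)^{-1/2})$. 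Since Lemma~\ref{lemma mar theta tilde normality} already gives $\sigma_g^2\asymp\gamma_n^{-1}$, combining (i) and (ii) yields $\widehat\sigma_g^2 = \sigma_g^2\{1+o_p(1)\}$, and tracking the leading error term will give the stated rate $O_p(\sqrt{(s_\alpha+s_\beta)\log d/(n\gamma_n)})$.

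Concretely, writing $\Psi_i(\alpha,\beta) = \Gamma_i X_i^\top\beta + \Gamma_i g(X_i^\top\alpha)^{-1}(Y_i-X_i^\top\beta)$ and recalling
$\widehat\sigma_g^2 = n^{-1}\sum_k\sum_{i\in\mathcal I_k}\{(1-\Gamma_i)\widehat\beta_{OR}^{(-k),\top}\bar\Xi_0\widehat\beta_{OR}^{(-k)} + \Psi_i(\widehat\alpha^{(-k)}_{PS},\widehat\beta^{(-k)}_{OR})^2\} - \widehat\theta_g^2$,
I would split $\widehat\sigma_g^2-\widetilde\sigma_g^2$ into three pieces: a $\Psi$-term $B_1 = n^{-1}\sum_k\sum_{i\in\mathcal I_k}\{\Psi_i(\widehat\alpha^{(-k)}_{PS},\widehat\beta^{(-k)}_{OR})^2 - \Psi_i(\alpha^*_{PS},\beta^*_{OR})^2\}$, a gram-matrix term $B_2$ involving $(1-\Gamma_i)\{\widehat\beta_{OR}^{(-k),\top}\bar\Xi_0\widehat\beta_{OR}^{(-k)} - (X_i^\top\beta^*_{OR})^2\}$, and $B_3 = \widehat\theta_g^2-\theta_g^2$. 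For $B_3$ I invoke Theorem~\ref{theorem mar generalizability Asymptotics body}(a) (or Lemma~\ref{lemma mar theta hat - theta tilde consistency} together with Lemma~\ref{lemma mar theta tilde normality}) to get $B_3 = O_p((n\gamma_n)^{-1/2}) = o_p(\gamma_n^{-1})$. For $B_1$, using $a^2-b^2 = (a-b)^2 + 2b(a-b)$, Cauchy--Schwarz, Lemma~\ref{lemma psi diff square} (which bounds $n_k^{-1}\sum_{i\in\mathcal I_k}(\Psi_i(\widehat\alpha^{(-k)}_{PS},\widehat\beta^{(-k)}_{OR})-\Psi_i(\alpha^*_{PS},\beta^*_{OR}))^2 = O_p(\gamma_n^{-1}(\|\widehat\Delta^{(-k)}_\alpha\|_2^2+\|\widehat\Delta^{(-k)}_\beta\|_2^2))$), and the fact that $n^{-1}\sum_i\Psi_i(\alpha^*_{PS},\beta^*_{OR})^2 = O_p(\gamma_n^{-1})$ (which follows from $\E[\Psi_i(\alpha^*,\beta^*)^2]\asymp\gamma_n^{-1}$ as in the proof of Lemma~\ref{lemma mar theta tilde normality} plus Markov), I obtain $B_1 = O_p(\gamma_n^{-1}(\|\widehat\Delta_\alpha\|_2^2+\|\widehat\Delta_\beta\|_2^2)) + O_p(\gamma_n^{-1/2}\cdot\gamma_n^{-1/2}(\|\widehat\Delta_\alpha\|_2+\|\widehat\Delta_\beta\|_2)) = O_p(\gamma_n^{-1}\sqrt{(s_\alpha+s_\beta)\log d/(n\gamma_n)})$ by Proposition~\ref{proposition mar nuisance consistency body} and Lemma~\ref{lemma consistency of betahat and betatilde for product sparsity}.

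The gram-matrix term $B_2$ requires the conditional-independence bookkeeping of Lemma~\ref{lemma conditional independence}: conditioning on $\Gamma_{1:n}$ and the primary data used to build $\widehat\beta^{(-k)}_{OR}$, the summands $(1-\Gamma_i)X_iX_i^\top$ are conditionally independent of $\widehat\beta^{(-k)}_{OR}$, so I can replace $\bar\Xi_0 = \sum_i(1-\Gamma_i)X_iX_i^\top/\sum_i(1-\Gamma_i)$ by $\E[X X^\top\mid\Gamma=0]$ up to errors controlled by $\|\widehat\Delta^{(-k)}_\beta\|_2$, $(1-\widehat\gamma)-(1-\gamma_n) = O_p(n^{-1/2}\gamma_n^{1/2})$ (cf.\ \eqref{1-widehat gamma rate}), and the concentration of $n^{-1}\sum_i(1-\Gamma_i)(X_i^\top\beta^*_{OR})^2$ around its mean; this mirrors the handling of $B_{2k,1}$--$B_{2k,5}$ in Lemma~\ref{lemma variance consistency under MCAR}, and yields $B_2 = O_p((1-\gamma_n)\sqrt{(s_\alpha+s_\beta)\log d/(n\gamma_n)})+O_p(n^{-1/2})$, which is $o_p(\gamma_n^{-1})$. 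Finally, for step (ii), $\widetilde\sigma_g^2 = n^{-1}\sum_i\{(1-\Gamma_i)(X_i^\top\beta^*_{OR})^2 + \Psi_i(\alpha^*_{PS},\beta^*_{OR})^2\} - \theta_g^2$ is an i.i.d.\ average with mean $\sigma_g^2$, and computing its variance via fourth-moment bounds (using Assumption~\ref{assumption mar nuisance (a)}(b)--(c), in particular $\E[w_{OR}^8\mid\Gamma=1]\le\sigma_w^8$ to control the $\Gamma g(X^\top\alpha^*)^{-2}w_{OR}^2$ pieces, whose fourth moment scales like $\gamma_n^{-3}$) gives $\E[(\widetilde\sigma_g^2-\sigma_g^2)^2] = O(\gamma_n^{-2}(n\gamma_n)^{-1})$, hence $\widetilde\sigma_g^2-\sigma_g^2 = O_p(\gamma_n^{-1}(n\gamma_n)^{-1/2})$ by Chebyshev. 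The main obstacle is the $B_2$ analysis: one must carefully set up the nested conditioning (on $\Gamma_{1:n}$ and the relevant primary-data blocks) so that $\bar\Xi_0$ can be treated as independent of $\widehat\beta^{(-k)}_{OR}$, and then control all the cross terms uniformly — this is exactly the place where the "cross-fitting only on the primary data" design (Remarks~\ref{remark:sample-splitting} and~\ref{remark:challenge of cross-fitting}) is essential, and where the $\ell_1$/$\ell_2$ rates from Proposition~\ref{proposition mar nuisance consistency body} and Lemma~\ref{lemma consistency of betahat and betatilde for product sparsity} must be plugged in with care to get the advertised $\sqrt{(s_\alpha+s_\beta)\log d/(n\gamma_n)}$ rate rather than something weaker.
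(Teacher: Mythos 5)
Your overall strategy coincides with the paper's: the same oracle $\widetilde\sigma_g^2$ (which, after noting $(1-\widehat\gamma)\bar\Xi_0=n^{-1}\sum_i(1-\Gamma_i)X_iX_i^\top$, collapses to the i.i.d.\ average $n^{-1}\sum_i(X_i^\top\beta^*_{OR}+\Gamma_i g(X_i^\top\alpha^*_{PS})^{-1}w_i)^2-\theta_g^2$), the same three-term decomposition into a $\Psi$-difference term, a gram-matrix term, and $\widehat\theta_g^2-\theta_g^2$, and the same ingredients (Lemma~\ref{lemma psi diff square}, the nuisance rates of Proposition~\ref{proposition mar nuisance consistency body} and Lemma~\ref{lemma consistency of betahat and betatilde for product sparsity}, fourth-moment bounds plus Chebyshev for the oracle). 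Your rates for $B_1$, $B_3$, and step~(ii) match the paper's.

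There is, however, one step in your $B_2$ analysis that is wrong as stated and would fail: the claim that, conditioning on $\Gamma_{1:n}$ and the primary data used to build $\widehat\beta^{(-k)}_{OR}$, the summands $(1-\Gamma_i)X_iX_i^\top$ are conditionally independent of $\widehat\beta^{(-k)}_{OR}$. This is false in the MAR setting, because $\widehat\beta^{(-k)}_{OR}$ is built from $\widehat\alpha^{(-k)}_{PS}$, whose loss \eqref{mar PS estimator} involves $\bar X_0$ and hence \emph{all} external covariates $\{(1-\Gamma_j)X_j\}_{j\in[n]}$ --- exactly the random quantities entering $\bar\Xi_0$. Lemma~\ref{lemma conditional independence}(b) only gives independence of such a function from the \emph{primary} data $\mathcal{D}_{\mathcal{I}_k}$, not from $\mathcal{M}_{[n]}$; this is the whole point of Remark~\ref{remark:challenge of cross-fitting}. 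The paper's proof circumvents this by splitting $\widehat\beta^{(-k)}_{OR}-\beta^*_{OR}=\Delta^{(-k)}+\widetilde\Delta^{(-k)}_\beta$ with $\widetilde\beta^{(-k)}_{OR}$ the oracle OR estimator built from $\alpha^*_{PS}$: the $\widetilde\Delta^{(-k)}_\beta$ part \emph{is} conditionally independent of the external covariates and can be handled by conditional second moments, while the $\Delta^{(-k)}$ part is controlled by the uniform-in-$\Delta$ quadratic-form bound (Lemma~\ref{lemma mar i.i.d. quadratic form}) together with its $\ell_1$/$\ell_2$ rates from Lemma~\ref{lemma consistency of betahat and betatilde for product sparsity}. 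You cite those rates at the end, so you are gesturing at the right fix, but the conditional-independence justification you actually give for replacing $\bar\Xi_0$ by its population analogue does not hold, and without the $\widetilde\beta/\Delta$ split the $B_2$ bound is not established.
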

\begin{proof}
Let $\widetilde \sigma^2_g =n^{-1}\sum_{i =1}^n\p{1-\Gamma_i} \beta^{*,\top}\bar \Xi_0  \beta^{*} + n^{-1}\sum_{i=1}^n\Psi\p{\alpha^*,\beta^*,W_i}^2 - \theta_g^2$. Then
\begin{align*}
    \widehat \sigma^2_g - \widetilde \sigma^2_g = K^{-1}\sum_{k=1}^K\p{s_{k1} - s_{k2} + s_{k3}},
\end{align*}
where
\begin{align*}
    s_{k1} &= n_k^{-1}\sum_{i \in \mathcal{I}_{k}} \br{\Psi_i\p{\widehat \alpha^{(-k)},\widehat \beta^{(-k)}}^2 - \Psi_i\p{\alpha^*,\beta^*}^2},\\
    s_{k2} &= n_k^{-1}\sum_{i \in \mathcal{I}_{k}}\p{1-\Gamma_i}\p{\widehat \beta^{(-k),\top}\bar \Xi_0 \widehat \beta^{(-k)} -  \beta^{*,\top}\bar \Xi_0  \beta^{*}},\quad
    s_{k3} = \widehat \theta^2_g - \theta^2_g.
\end{align*}

    For $s_{k1}$, note that $a^2 - b^2 = \p{a-b}^2 + 2b\p{a-b} $. Thus,
\begin{align*}
    s_{k1} &= n_k^{-1}\sum_{i\in \mathcal{I}_k}\p{\Psi_i\p{\widehat \alpha^{(-k)}, \widehat \beta^{(-k)}} - \Psi_i\p{\alpha^*, \beta^*}}^2+ 2n_k^{-1}\sum_{i\in \mathcal{I}_k}\p{\Psi_i\p{\widehat \alpha^{(-k)}, \widehat \beta^{(-k)}} - \Psi_i\p{\alpha^*, \beta^*}}\Psi_i\p{\alpha^*, \beta^*}\\
    &\leq n_k^{-1}\sum_{i\in \mathcal{I}_k}\p{\Psi_i\p{\widehat \alpha^{(-k)}, \widehat \beta^{(-k)}} - \Psi_i\p{\alpha^*, \beta^*}}^2\\
    &\qquad + 2\br{n_k^{-1}\sum_{i\in \mathcal{I}_k}\p{\Psi_i\p{\widehat \alpha^{(-k)}, \widehat \beta^{(-k)}} - \Psi_i\p{\alpha^*, \beta^*}}^2}^{1/2}\br{n_k^{-1}\sum_{i\in \mathcal{I}_k}\p{\Psi_i\p{\alpha^*, \beta^*}}^2}^{1/2}.
\end{align*}
By \eqref{exp(-X'alpha = gamma-1)} and Lemma~\ref{sub-gaussian properties}(c),
\begin{align*}
    \E\sbr{\p{\Psi_i\p{\alpha^*, \beta^*}}^2} &= \E\sbr{\p{\Gamma_i X_i^\top \beta^* + \frac{ \Gamma_i}{g\p{X_i^\top\alpha^*}}w_i}^2}=2\E\sbr{\Gamma_i \p{X_i^\top \beta^*}^2} + 2\E\sbr{\frac{ \Gamma_i}{\p{g\p{X_i^\top\alpha^*}}^2}w_i^2}\\
    &\leq 2\E\sbr{\p{X_i^\top \beta^*}^2} + 2(1+k_0^{-1}\gamma_n^{-1})^2\E\sbr{\Gamma_iw_i^2}\\
    &=2\E\sbr{\p{X_i^\top \beta^*}^2} + 2(1+k_0^{-1}\gamma_n^{-1})^2\gamma_n\E\sbr{w_i^2\mid \Gamma_i=1} C\gamma_n^{-1}.
\end{align*}
Thus,
$
    n_k^{-1}\sum_{i\in \mathcal{I}_k}\p{\Psi_i\p{\alpha^*, \beta^*}}^2 = O_p\p{\gamma_n^{-1}}.
$
By Lemma~\ref{lemma psi diff square}, we have
\begin{align}
    s_{k1} = O_p\p{\gamma_n^{-1}(\norm{\widehat \Delta_\alpha^{(-k)}}_2^2 + \norm{\widehat \Delta_\beta^{(-k)}}_2^2)^{1/2}}. \label{sk1}
\end{align}

For $s_{k2}$,
\begin{align*}
    s_{k2} &= n_k^{-1}\sum_{i \in \mathcal{I}_{k}}\p{1-\Gamma_i} \widehat \Delta_\beta^{(-k),\top} \bar \Xi_0 \widehat \Delta_\beta^{(-k)}+ 2n_k^{-1}\sum_{i \in \mathcal{I}_{k}}\p{1-\Gamma_i}\beta^{*,\top} \bar \Xi_0 \widehat \Delta_\beta^{(-k)}\\
    &=\p{1-\widehat \gamma_k}\p{1-\widehat \gamma}^{-1}\br{s_{k21} + 2s_{k22}},
\end{align*}
where
\begin{align*}
    s_{k21} &= n^{-1}\sum_{i=1}^n \p{1-\Gamma_i}\p{X_i^\top \widehat \Delta_\beta^{(-k)}}^2, \quad
    s_{k22} = n^{-1}\sum_{i=1}^n\p{1-\Gamma_i}\p{X_i^\top \widehat \Delta_\beta^{(-k)}}\p{X_i^\top \beta^*}.
\end{align*}
By Lemma~\ref{lemma mar i.i.d. quadratic form}, given $\Gamma_{1:n}$,
\begin{align*}
    n^{-1}\sum_{i=1}^n \p{1-\Gamma_i}\p{X_i^\top \Delta^{(-k)}}^2 &= \p{1-\widehat \gamma}\p{n\p{1-\widehat \gamma}}^{-1}\sum_{i=1}^n \p{1-\Gamma_i}\p{X_i^\top\Delta^{(-k)}}^2\\
    &=O_p\p{\p{1-\widehat \gamma}\br{\norm{\Delta^{(-k)}}_1^2\frac{\log d}{n\p{1-\widehat \gamma}} + \norm{\Delta^{(-k)}}_2^2}}.
\end{align*}
Since $\widetilde \beta^{(-k)} \perp (1-\Gamma_i)X_i \mid \Gamma_{i}$ for any $i \in [n]$ due to Lemma~\ref{lemma conditional independence}, we have 
\begin{align*}
    \E\sbr{\p{1-\Gamma_i}\p{X_i^\top \widetilde \Delta_\beta^{(-k)}}^2\mid \Gamma_{1:n},\widetilde \beta^{(-k)}} = \E\sbr{\p{1-\Gamma_i}\p{X_i^\top \widetilde \Delta_\beta^{(-k)}}^2\mid \Gamma_i,\widetilde \beta^{(-k)}} \leq 2\sigma^2\norm{\widehat \Delta_\beta^{(-k)}}_2^2.
\end{align*}
It follows that 
\begin{align*}
    s_{k21} &\leq  2n^{-1}\sum_{i=1}^n \p{1-\Gamma_i}\p{X_i^\top \Delta}^2 + 2n^{-1}\sum_{i=1}^n \p{1-\Gamma_i}\p{X_i^\top \widetilde \Delta_\beta^{(-k)}}^2 \\
    &= O_p\p{\norm{\Delta^{(-k)}}_1^2\frac{\log d}{n} + \norm{\Delta^{(-k)}}_2^2 + \norm{\widehat \Delta_\beta^{(-k)}}_2^2}.
\end{align*}
Since
\[
    \E\sbr{\p{1-\Gamma_i}\p{X_i^\top \beta^*}^2}
    = \p{1-\gamma_n}\E\sbr{\p{X_i^\top \beta^*}^2 \mid \Gamma_i=0}
    \leq \p{1-\gamma_n}C\sigma^2,
\]
we have
\[
    n^{-1}\sum_{i=1}^n\p{1-\Gamma_i}\p{X_i^\top \beta^*}^2 = O_p\p{1-\gamma_n}.
\]
By Cauchy-Schwarz inequality, 
\begin{align*}
    s_{k22}&=n^{-1}\sum_{i=1}^n\p{1-\Gamma_i}\p{X_i^\top \widehat\Delta_\beta^{(-k)}}\p{X_i^\top \beta^*} \\
    &\leq \br{n^{-1}\sum_{i=1}^n\p{1-\Gamma_i}\p{X_i^\top \widehat\Delta_\beta^{(-k)}}^2}^{1/2}\br{n^{-1}\sum_{i=1}^n\p{1-\Gamma_i}\p{X_i^\top \beta^*}^2 }^{1/2}\\
    &= O_p\p{\norm{\Delta^{(-k)}}_1(\frac{\log d}{n})^{1/2}+ \norm{\Delta^{(-k)}}_2+  \norm{\widehat \Delta_\beta^{(-k)}}_2}.
\end{align*}
Thus, with $\widehat \gamma_k-\gamma_n = o_p(1)$ and $\widehat \gamma - \gamma_n = o_p(1)$,
\begin{align}
    s_{k2} &= O_p\p{\norm{\Delta^{(-k)}}_1(\frac{\log d}{n})^{1/2}+ \norm{\Delta^{(-k)}}_2 + \norm{\widehat \Delta_\beta^{(-k)}}_2}.\label{sk2}
\end{align}

For $s_{k3}$, use the fact $a^2 - b^2 = \p{a-b}^2 + 2b\p{a-b} $ again and we have
$
    s_{k3} = \widehat \theta^2_g - \theta^2_g = \p{\widehat \theta_g - \theta_g}^2 + 2\theta\p{\widehat \theta_g - \theta_g}.
$
By Lemma~\ref{lemma mar theta tilde normality} and Lemma~\ref{lemma mar theta hat - theta tilde consistency}, we have
\begin{align}
    s_{k3} = O_p\p{\widehat \theta_g - \theta_g} =  O_p\p{\frac{\p{s_\alpha + (s_\alpha s_\beta)^{1/2}} \log d}{n\gamma_n} + \p{n\gamma_n}^{-1/2}}. \label{sk3}
\end{align}
Together with \eqref{sk1} and \eqref{sk2}, by Proposition~\ref{proposition mar nuisance consistency body}, we have
\begin{align}
    \widehat \sigma^2_g - \widetilde \sigma^2_g &= O_p\p{\gamma_n^{-1}(\norm{\widehat \Delta_\alpha^{(-k)}}_2^2+ \norm{\widehat \Delta_\beta^{(-k)}}_2^2)^{1/2}+\norm{\Delta^{(-k)}}_1(\frac{\log d}{n})^{1/2}+ \norm{\Delta^{(-k)}}_2 + \norm{\widehat \Delta_\beta^{(-k)}}_2} \notag \\
    &\qquad + O_p\p{\frac{\p{s_\alpha + (s_\alpha s_\beta)^{1/2}}\log d}{n\gamma_n} + \p{n\gamma_n}^{-1/2}}\notag\\
&=O_p\p{\gamma_n^{-1}(\frac{(s_\alpha + s_\beta)\log d}{n\gamma_n})^{1/2}}. \label{mar sigma hat - sigma tilde}
\end{align}

Note that
$
    \widetilde \sigma_g^2 =n^{-1}\sum_{i=1}^n \p{X_i^\top \beta^* + {\Gamma_i}/{g\p{X_i^\top \alpha^*}}w_i}^2 - \theta^2.
$
Then
\begin{align*}
    \widetilde \sigma_g^2 - \sigma_g^2 = n^{-1}\sum_{i=1}^n \br{\p{X_i^\top \beta^* + \frac{\Gamma_i}{g\p{X_i^\top \alpha^*}}w_i}^2 - \E\sbr{\p{X_i^\top \beta^* + \frac{\Gamma_i}{g\p{X_i^\top \alpha^*}}w_i}^2}},
\end{align*}
and $\E\sbr{\widetilde \sigma_g^2} = \sigma_g^2$. Thus, by Assumption~\ref{assumption mar nuisance (a)}, for some constant $C>0$,
\begin{align*}
    \E\sbr{\p{\widetilde \sigma^2_g - \sigma^2_g}^2} &\leq n^{-1}\E\sbr{\p{X_i^\top \beta^* + \frac{\Gamma_i}{g\p{X_i^\top \alpha^*}}w_i}^4}\leq 8n^{-1}\E\sbr{\p{X_i^\top \beta^*}^4} + 8n^{-1}\E\sbr{\frac{\Gamma_i}{\p{g\p{X_i^\top \alpha^*}}^4}w_i^4}\\
    &\leq 8n^{-1}\E\sbr{\p{X_i^\text{•}op \beta^*}^4} + 8(1 + k_0^{-1}\gamma_n^{-1})^4n^{-1}\E\sbr{\Gamma_iw_i^4}\\
    &\leq 8n^{-1}\E\sbr{\p{X_i^\top \beta^*}^4} + 8(1 + k_0^{-1}\gamma_n^{-1})^4n^{-1}\gamma_n\E\sbr{w_i^4\mid \Gamma_i=1}\leq Cn^{-1}\gamma_n^{-3},
\end{align*}
which implies $\widetilde \sigma_g^2 - \sigma_g^2 = O_p\p{\gamma_n^{-1}\p{n\gamma_n}^{-1/2}}$. Together with \eqref{mar sigma hat - sigma tilde},
\begin{align*}
    \widehat \sigma^2_g - \sigma^2_g &=  \widehat \sigma^2_g -\widetilde \sigma^2_g + \widetilde \sigma^2_g -\sigma^2_g=O_p\p{\gamma_n^{-1}\p{(\frac{(s_\alpha + s_\beta)\log d}{n\gamma_n})^{1/2} + \p{n\gamma_n}^{-1/2}}}.
\end{align*}
By Lemma~\ref{lemma mar theta tilde normality}, $\sigma_g^2 \asymp \gamma_n^{-1}$, then
$
    \widehat \sigma^2_g = \sigma^2_g\br{1 + O_p\p{({(s_\alpha + s_\beta)\log d}/{n\gamma_n})^{1/2}}}.
$
\end{proof}

\begin{lemma}\label{lemma key results for generalizability}
Let Assumptions~\ref{assumption mar}  and \ref{assumption mar nuisance (a)} hold. Assume either $\gamma_n\p{X} = g\p{X^\top \alpha^*_{PS}}$ or $\mu\p{X} = X^\top \beta^*_{OR}$ holds. Choose $\lambda_\alpha \asymp \lambda_\beta \asymp (\log d/(n\gamma_n))^{1/2}$. If $n\gamma_n \gg  (\log n)^2\log d$, $s_\alpha = o((n\gamma_n)^{1/2}/\log d)$ and $s_\alpha s_\beta = o(n\gamma_n/(\log n\p{\log d}^2))$, then as $n, d\rightarrow \infty$, $\widehat \theta_g -\theta_g = O_p \p{(n\gamma_n)^{-1/2}}$, $\widehat \sigma^2_g = \sigma_g^2\br{1 + o_p(1)}$, and $\widehat\sigma_g^{-1}n^{1/2}(\widehat \theta_g - \theta_g)\xrightarrow{d} \mc{N}(0,1)$.

In addition, Assume that $\mu\p{X} = X^\top \beta^*_{OR}$ holds. Choose $\lambda_\alpha \asymp \lambda_\beta \asymp (\log d/(n\gamma_n))^{1/2}$. If $n\gamma_n \gg  (\log n)^2\log d$ and $s_\alpha s_\beta = o(n\gamma_n/(\log n\p{\log d}^2))$, then as $n, d\rightarrow \infty$, $\widehat \theta_g -\theta_g = O_p \p{(n\gamma_n)^{-1/2}}$, $\widehat \sigma^2_g = \sigma_g^2\br{1 + o_p(1)}$, and $\widehat\sigma_g^{-1}n^{1/2}(\widehat \theta_g - \theta_g)\xrightarrow{d} \mc{N}(0,1)$.
\end{lemma}

\begin{proof}
    When $n\gamma_n \gg  (\log n)^2\log d$, $s_\alpha = o((n\gamma_n)^{1/2}/\log d)$ and $s_\alpha s_\beta = o(n\gamma_n/(\log n\p{\log d}^2))$, we have $n\gamma_n \gg \max \{s_\alpha$, $ s_\beta\log n, (\log n)^2\} \log d$ and  $s_\alpha s_\beta \ll (n\gamma_n)^{3/2}/(\log n(\log d)^2)$.
    
    First, by Lemma~\ref{lemma mar theta tilde normality} and Lemma~\ref{lemma mar theta hat - theta tilde consistency}, we have
    \begin{align*}
        \widehat \theta_g - \theta_g = O_p\p{\frac{\p{s_\alpha + (s_\alpha s_\beta)^{1/2}}\log d}{n\gamma_n} + \p{n\gamma_n}^{-1/2}} = O_p(\p{n\gamma_n}^{-1/2}).
    \end{align*}

    Second, by Lemma~\ref{lemma mar sigma hat = sigma (1+o(1))}, 
    \begin{align*}
        \widehat \sigma^2_g &= \sigma^2_g\br{1 + O_p\p{(\frac{(s_\alpha + s_\beta)\log d}{n\gamma_n})^{1/2}}}= \sigma^2_g\br{1 + o_p(1)}.
    \end{align*}

    Third, by Lemma~\ref{lemma mar theta hat - theta tilde consistency}, we have
    \begin{align*}
        \widehat \theta_g - \widetilde\theta_g = O_p\p{\frac{\p{s_\alpha + (s_\alpha s_\beta)^{1/2}} \log d}{n\gamma_n}} = o_p(\p{n\gamma_n}^{-1/2}).
    \end{align*}
    By Lemma~\ref{lemma mar theta tilde normality}, since $\sigma_g^2 \asymp \gamma_n^{-1}$, by Slutsky's theorem,
    \begin{align*}
        \widehat\sigma^{-1}_gn^{1/2}(\widehat \theta_g - \theta_g) = \widehat\sigma_g^{-1}\sigma_g\p{\sigma_g^{-1}n^{1/2}(\widetilde \theta_g - \theta_g) +  \sigma_g^{-1}n^{1/2}\p{\widehat \theta_g - \widetilde\theta_g}} \rightarrow \mc{N}(0,1).
    \end{align*}

In addition, if $\mu\p{X} = X^\top \beta^*_{OR}$, we only need $s_\alpha s_\beta = o(n\gamma_n/(\log n(\log d)^2))$, which implies $s_\alpha = o(n\gamma_n/(\log n(\log d)^2))$. Under these conditions, by Lemma~\ref{lemma mar theta hat - theta tilde consistency},
\begin{align*}
    \widehat \theta_g - \widetilde\theta_g =   O_p\p{\frac{(s_\alpha \log d)^{1/2}}{n\gamma_n} +\frac{(s_\alpha s_\beta)^{1/2} \log d}{n\gamma_n}} = o_p((n\gamma_n)^{-1/2}).
\end{align*}
The rest of proof follows similarly.
\end{proof}

\subsection{Auxiliary lemmas for transportability}
\begin{lemma}\label{lemma mar transportability theta t normal}
    Let Assumptions~\ref{assumption mar} and \ref{assumption mar nuisance (a)} hold. Assume either $\gamma_n\p{X} = g\p{X^\top \alpha^*_{PS}}$ or $\mu\p{X} = X^\top \beta^*_{OR}$ holds. If $n\gamma_n \gg 1$, then as $n, d \rightarrow \infty$, $\sigma_t^2 \asymp \gamma_n^{-1}$ and
$
    \sigma_t^{-1}n^{1/2}\p{\widetilde\theta_t - \theta_t} \rightarrow \mathcal{N}\p{0,1},
$
    where $\widetilde \theta_t = n^{-1}\sum_{i=1}^n \br{{1-\Gamma_i}/{1-\gamma_n}\p{X_i^\top \beta^*_{OR}-\theta_t} + {\Gamma_i\p{1-g\p{X_i^\top \alpha^*_{PS}}}}/{\p{1-\gamma_n}g\p{X_i^\top \alpha^*_{PS}}}\p{Y_i - X_i^\top\beta^*_{OR}}} + \theta_t$
\end{lemma}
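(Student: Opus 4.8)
The plan is to establish the asymptotic normality of the oracle estimator $\widetilde\theta_t$ via the Lindeberg–Feller central limit theorem, following closely the template of Lemma \ref{lemma mar theta tilde normality} (the generalizability analogue) and Lemma \ref{lemma the oracle asymptotics under MCAR}. Write
\[
V_i = \frac{1-\Gamma_i}{1-\gamma_n}\p{X_i^\top\beta^*_{OR} - \theta_t} + \frac{\Gamma_i\p{1-g\p{X_i^\top\alpha^*_{PS}}}}{\p{1-\gamma_n}g\p{X_i^\top\alpha^*_{PS}}}w_i + \theta_t,
\]
where $w_i = Y_i - X_i^\top\beta^*_{OR}$, so that $\widetilde\theta_t = n^{-1}\sum_{i=1}^n V_i$. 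First I would verify unbiasedness: $\E[V_i] = \theta_t$. This uses one of the two doubly robust representations for $\theta_t$ given in Section \ref{sec: mar method}; under $\mu(X)=X^\top\beta^*_{OR}$ one conditions on $X_i$ and uses $\E[w_i\mid X_i]=0$, while under $\gamma_n(X)=g(X^\top\alpha^*_{PS})$ one conditions on $X_i$ and uses $\E[\Gamma_i\mid X_i]=g(X_i^\top\alpha^*_{PS})$, noting $\Gamma\perp Y\mid X$ from Assumption \ref{assumption  mar}. The cross-terms between the two summands vanish after conditioning on $X_i$ in either case.

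Next I would compute $\sigma_t^2 = \Var(V_i) = \E[(V_i-\theta_t)^2]$ and show $\sigma_t^2 \asymp \gamma_n^{-1}$. The key observation, exactly as in Lemma \ref{lemma mar theta tilde normality}, is that Assumption \ref{assumption mar nuisance (a)}(a) gives $g(X^\top\alpha^*_{PS}) \asymp \gamma_n$ (more precisely $k_0(1-\gamma_n)/\gamma_n \le (1-g)/g \le k_0^{-1}(1-\gamma_n)/\gamma_n$ and $\P(\Gamma=0)\ge c_0$), so the dominant term is $\E\big[\Gamma_i(1-g(X_i^\top\alpha^*_{PS}))^2/((1-\gamma_n)^2 g(X_i^\top\alpha^*_{PS})^2) w_i^2\big] \asymp \gamma_n^{-1}\E[\Gamma_i w_i^2]/\gamma_n = \gamma_n^{-1}\E[w_i^2\mid\Gamma=1] \asymp \gamma_n^{-1}$ by Assumption \ref{assumption mar nuisance (a)}(c) (using $\delta_w \le \E[w_{OR}^2\mid\Gamma=1]\le\sigma_w^2$). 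The first summand contributes $O(1)$ since $X_i^\top\beta^*_{OR}$ is sub-Gaussian and $1-\gamma_n\asymp 1$ via $\P(\Gamma=0)\ge c_0$. For the upper bound one uses Minkowski's inequality and Lemma \ref{sub-gaussian properties}(c). This confirms $\sigma_t^2\asymp\gamma_n^{-1}$ and, as a byproduct, that $\sigma_t^2$ is bounded away from $0$.

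Then I would check the Lyapunov condition: for $\delta=2$, show $n^{-\delta/2}\sigma_t^{-(2+\delta)}\E[|V_i-\theta_t|^{2+\delta}] = o(1)$. Bounding $\norm{V_i-\theta_t}_{\P,4}$ by Minkowski, the binding term is $\norm{\Gamma_i(1-g)/((1-\gamma_n)g) w_i}_{\P,4}$. Using $\Gamma_i \in\{0,1\}$, $(1-g)/g \le k_0^{-1}\gamma_n^{-1}$, $1-\gamma_n\asymp 1$, and $\E[\Gamma_i w_i^4] = \gamma_n\E[w_i^4\mid\Gamma=1]\le\gamma_n\sigma_w^4$ (this is why the eighth-moment-type condition in Assumption \ref{assumption mar nuisance (a)}(c) exceeds a fourth moment — here only $\E[w^4\mid\Gamma=1]$ is needed, controlled by $\sigma_w$), one gets this term is $O(\gamma_n^{-1}\cdot\gamma_n^{1/4}) = O(\gamma_n^{-3/4})$, hence $\E[|V_i-\theta_t|^4] = O(\gamma_n^{-3})$. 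Combined with $\sigma_t^{-4}\asymp\gamma_n^2$, the Lyapunov ratio is $O(n^{-1}\gamma_n^2\gamma_n^{-3}) = O((n\gamma_n)^{-1}) = o(1)$ since $n\gamma_n\gg 1$. Applying Proposition 2.27 (Lindeberg–Feller) of \cite{van2000asymptotic} yields $\sigma_t^{-1}\sqrt n(\widetilde\theta_t - \theta_t)\xrightarrow{d}\mathcal{N}(0,1)$. The main obstacle is purely bookkeeping: tracking the $\gamma_n$-powers through the two regimes (correct OR versus correct PS model) when isolating the dominant variance term and handling the cross-term cancellations cleanly — but this is entirely parallel to Lemma \ref{lemma mar theta tilde normality}, so no genuinely new difficulty arises.
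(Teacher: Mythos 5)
Your proposal is correct and follows essentially the same route as the paper's proof: unbiasedness via the doubly robust identity after conditioning on $X_i$ (using $\Gamma\perp Y\mid X$), the variance order $\sigma_t^2\asymp\gamma_n^{-1}$ from the overlap bounds in Assumption \ref{assumption mar nuisance (a)}(a) together with $\delta_w\le\E[w_{OR}^2\mid\Gamma=1]\le\sigma_w^2$, and the Lyapunov condition with $\delta=2$ giving $\E[|V_i-\theta_t|^4]=O(\gamma_n^{-3})$ and ratio $O((n\gamma_n)^{-1})$. The only cosmetic difference is that you keep the original two-term decomposition for the variance in both regimes (where the cross term is identically zero since $(1-\Gamma_i)\Gamma_i=0$), whereas the paper re-centers around $Y_i-\theta_t$ in the correct-PS case and bounds a non-vanishing $O(1)$ cross term; both lead to the same conclusion.
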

\begin{proof}
Let $w_i = Y_i - X_i^\top \beta^*$.
\begin{align*}
    V_{t,i} = \frac{1-\Gamma_i}{1-\gamma_n}\p{X_i^\top \beta^*-\theta_t} + \frac{\Gamma_i\p{1-g\p{X_i^\top \alpha^*}}}{\p{1-\gamma_n}g\p{X_i^\top \alpha^*}}\p{Y_i - X_i^\top\beta^*} + \theta_t.
\end{align*}
When either $\gamma_n\p{X} = g\p{X^\top \alpha^*_{PS}}$ or $\mu\p{X} = X^\top \beta^*_{OR}$ holds, under Assumption~\ref{assumption mar} we have
    \begin{align*}
        \E\sbr{\widetilde\theta_t - \theta_t} &= \E\sbr{\frac{1-\Gamma_i}{1-\gamma_n}X_i^\top \beta^*} + \E\sbr{\frac{\Gamma_i\p{1-g\p{X_i^\top \alpha^*}}}{\p{1-\gamma_n}g\p{X_i^\top \alpha^*}}\p{Y_i - X_i^\top\beta^*}} - \E\sbr{\frac{1-\Gamma_i}{1-\gamma_n}}\theta_t\\
         &=\E\sbr{\frac{1-\Gamma_i}{1-\gamma_n}\p{X_i^\top \beta^* - Y_i}} + \E\sbr{\frac{\Gamma_i\p{1-g\p{X_i^\top \alpha^*}}}{\p{1-\gamma_n}g\p{X_i^\top \alpha^*}}\p{Y_i - X_i^\top\beta^*}}\\
        &=\E\sbr{\p{\frac{1-\Gamma_i}{1-\gamma_n}-\frac{\Gamma_i\p{1-g\p{X_i^\top \alpha^*}}}{\p{1-\gamma_n}g\p{X_i^\top \alpha^*}}}\p{X_i^\top \beta^* - Y_i}}\\
    &=\E\sbr{\E\sbr{\p{\frac{g\p{X_i^\top \alpha^*} - \Gamma_i}{\p{1-\gamma_n}g\p{X_i^\top \alpha^*}}}\p{X_i^\top \beta^* - Y_i} \mid X_i}}\\
    &=\E\sbr{\p{\frac{\E\sbr{g\p{X_i^\top \alpha^*} - \Gamma_i\mid X_i}}{\p{1-\gamma_n}g\p{X_i^\top \alpha^*}}}\E\sbr{X_i^\top \beta^* - Y_i\mid X_i}}=0.
    \end{align*}
    By Lyapunov's central limit theorem, it suffices to prove for some $\delta>0$ and $C>0$,
    \begin{align}
        \lim_{n\rightarrow \infty}n^{-\delta/2}\sigma^{-\p{2+\delta}} \E\sbr{\abs{V_{t,i} - \theta_t}^{2+\delta}} = 0.
    \end{align}
    If $\E\sbr{Y_i\mid X_i} = X_i^\top \beta^*$, then $\E\sbr{w_i\mid X_i} = 0$ and
    \begin{align*}
    \sigma_t^2 &= \E\sbr{\br{\frac{1-\Gamma_i}{1-\gamma_n}\p{X_i^\top \beta^*-\theta_t} + \frac{\Gamma_i\p{1-g\p{X_i^\top \alpha^*}}}{\p{1-\gamma_n}g\p{X_i^\top \alpha^*}}w_i}^2}\\
    &=\E\sbr{\p{\frac{1-\Gamma_i}{1-\gamma_n}\p{X_i^\top \beta^*-\theta_t}}^2} + \E\sbr{\Gamma_i\p{\frac{\p{1-g\p{X_i^\top \alpha^*}}}{\p{1-\gamma_n}g\p{X_i^\top \alpha^*}}}^2w_i^2}.
    \end{align*}
    since 
    \begin{align*}
    &\E\sbr{\p{\frac{1-\Gamma_i}{1-\gamma_n}\p{X_i^\top \beta^*-\theta_t}}\frac{\Gamma_i\p{1-g\p{X_i^\top \alpha^*}}}{\p{1-\gamma_n}g\p{X_i^\top \alpha^*}}w_i}\\
    &\qquad=\E\sbr{\E\sbr{\p{\frac{1-\Gamma_i}{1-\gamma_n}\p{X_i^\top \beta^*-\theta_t}}\frac{\Gamma_i\p{1-g\p{X_i^\top \alpha^*}}}{\p{1-\gamma_n}g\p{X_i^\top \alpha^*}}\mid X_i}\E\sbr{w_i \mid X_i}}=0.
    \end{align*}

    Under Assumption~\ref{assumption mar}, we have
    \begin{align*}
        \theta_t = \E\sbr{Y_i \mid \Gamma_i=0} &= \E\sbr{\E\sbr{Y_i \mid \Gamma_i=0, X_i}\mid \Gamma_i=0}=\E\sbr{\E\sbr{Y_i \mid  X_i}\mid \Gamma_i=0}=\E\sbr{X_i^\top \beta^* \mid \Gamma_i=0}
    \end{align*}
Under Assumption~\ref{assumption mar nuisance (a)},
    \begin{align*}
        \E\sbr{\p{\frac{1-\Gamma_i}{1-\gamma_n}\p{X_i^\top \beta^*-\theta_t}}^2} &= \p{1-\gamma_n}^{-1}\E\sbr{\p{X_i^\top \beta^*-\theta_t}^2 \mid \Gamma_i=0}\\
        &\leq \E\sbr{\p{X_i^\top \beta^*}^2 \mid \Gamma_i=0}\leq C\sigma^2.
    \end{align*}
In addition, we have
\begin{align*}
    \E\sbr{\Gamma_i\p{\frac{\p{1-g\p{X_i^\top \alpha^*}}}{\p{1-\gamma_n}g\p{X_i^\top \alpha^*}}}^2w_i^2} &= \gamma_n\E\sbr{\p{\frac{\p{1-g\p{X_i^\top \alpha^*}}}{\p{1-\gamma_n}g\p{X_i^\top \alpha^*}}}^2w_i^2 \mid \Gamma_i=1}\\
    &\asymp \gamma_n\frac{\p{1+\gamma_n^{-1}}^2}{\p{1-\gamma_n}^2 }\E\sbr{w_i^2\mid \Gamma_i=1}\asymp \gamma_n^{-1}.
\end{align*}
It follows that when $\E\sbr{w_i \mid X_i} = 0$,
$
    \sigma_t^2 \asymp \gamma_n^{-1}.
$

On the other hand, if $\E\sbr{\Gamma_i \mid X_i} = g\p{X_i^\top \alpha^*}$, we have
\begin{align*}
    \sigma_t^2 &= \E\sbr{\p{V_{t,i} - \theta_t}^2} \\
    &= \E\sbr{\br{\frac{1-\Gamma_i}{1-\gamma_n}\p{X_i^\top \beta^*-\theta_t} + \frac{\Gamma_i\p{1-g\p{X_i^\top \alpha^*}}}{\p{1-\gamma_n}g\p{X_i^\top \alpha^*}}w_i}^2}\\
    &=\E\sbr{\br{\frac{1-\Gamma_i}{1-\gamma_n}\p{Y_i - \theta_t} + \p{\frac{\Gamma_i\p{1-g\p{X_i^\top \alpha^*}}}{\p{1-\gamma_n}g\p{X_i^\top \alpha^*}  } -\frac{1-\Gamma_i}{1-\gamma_n}}w_i}^2 }\\
    &=\E\sbr{\br{\p{\frac{1-\Gamma_i}{1-\gamma_n}\p{Y_i - \theta_t}} + \frac{\Gamma_i - g\p{X_i^\top \alpha^*}}{\p{1-\gamma_n}g\p{X_i^\top \alpha^*}  } w_i}^2 }\\
    &=\E\sbr{\p{\frac{1-\Gamma_i}{1-\gamma_n}\p{Y_i - \theta_t}}^2} + \E\sbr{\p{\frac{\Gamma_i - g\p{X_i^\top \alpha^*}}{\p{1-\gamma_n}g\p{X_i^\top \alpha^*}  }}^2 w_i^2}\\
    &\quad + 2\E\sbr{\p{\frac{1-\Gamma_i}{1-\gamma_n}\p{Y_i - \theta_t}}\p{\frac{\Gamma_i - g\p{X_i^\top \alpha^*}}{\p{1-\gamma_n}g\p{X_i^\top \alpha^*}  } w_i}}.
\end{align*}
Note that 
\begin{align*}
    &\E\sbr{\p{\frac{1-\Gamma_i}{1-\gamma_n}\p{Y_i - \theta_t}}\p{\frac{\Gamma_i - g\p{X_i^\top \alpha^*}}{\p{1-\gamma_n}g\p{X_i^\top \alpha^*}  } w_i}} \\
    &= -\p{1-\gamma_n}^{-1}\E\sbr{\p{Y_i - \theta_t}w_i \mid \Gamma_i = 0}\\
    &\leq c_0^{-1}\p{\E\sbr{w_i^2\mid \Gamma_i = 0} + \E\sbr{w_i^2 \mid \Gamma_i = 0}^{1/2}\E\sbr{\p{X_i^\top\beta^*}^2\mid \Gamma_i = 0}^{1/2} + \abs{\theta_t}\E\sbr{\abs{w_i}\mid \Gamma_i = 0}}\\
    &\leq c_0^{-1}C\p{c_0^{-1}\sigma_w^2 + c_0^{-1}\sigma_w\sigma + c_0^{-1}\sigma_w}.
\end{align*}
Similarly, we have
\begin{align*}
    &\E\sbr{\p{\frac{1-\Gamma_i}{1-\gamma_n}\p{Y_i - \theta_t}}^2} = \E\sbr{\p{Y_i - \theta_t}^2 \mid \Gamma_i=0}\leq \E\sbr{Y_i^2\mid \Gamma_i=0}\\
    &\qquad\leq 2\p{\E\sbr{w_i^2 \mid \Gamma_i = 0} + \E\sbr{\p{X_i^\top \beta}^2 \mid \Gamma_i = 0}}\leq 2\p{c_0^{-1}C_1\sigma_w^2 + C_2\sigma^2}.
\end{align*}
In addition, 
\begin{align*}
    &\E\sbr{\p{\frac{\Gamma_i - g\p{X_i^\top \alpha^*}}{\p{1-\gamma_n}g\p{X_i^\top \alpha^*}  }}^2 w_i^2} = \gamma_n\E\sbr{\p{\frac{1 - g\p{X_i^\top \alpha^*}}{\p{1-\gamma_n}g\p{X_i^\top \alpha^*}  }}^2 w_i^2 \mid \Gamma_i=1}+ \p{1-\gamma_n}^{-1}\E\sbr{ w_i^2 \mid \Gamma_i = 0}\\
    &\qquad\asymp \gamma_n\p{\frac{1-\gamma_n^{-1}}{\p{1-\gamma_n}\gamma_n}}^2\E\sbr{w_i^2 \mid \Gamma_i=1}+ \p{1-\gamma_n}^{-1}\E\sbr{ w_i^2 \mid \Gamma_i = 0}\asymp \gamma_n^{-1}.
\end{align*}
It follows that when $\E\sbr{\Gamma_i \mid X_i} = g\p{X_i^\top \alpha^*}$,
$
    \sigma_t^2 \asymp \gamma_n^{-1}.
$

Thus, when either $\gamma_n\p{X} = g\p{X^\top \alpha^*_{PS}}$ or $\mu\p{X} = X^\top \beta^*_{OR}$ holds, we have
\begin{align}
    \sigma_t^2 \asymp \gamma_n^{-1}. \label{sigma t asymp gamma-1 transportability}
\end{align}
By Minkowski's inequality, we have
\begin{align*}
    \norm{V_{t,i}-\theta_t}_{\P, 2+\delta}
    &\leq \norm{{1-\Gamma_i}/{1-\gamma_n}X_i^\top \beta^*}_{\P, 2+\delta}\\
    &\quad + \norm{{\Gamma_i\p{1-g\p{X_i^\top \alpha^*}}}/{\p{1-\gamma_n}g\p{X_i^\top \alpha^*}}w_i}_{\P, 2+\delta}
    + c_0^{-1}\abs{\theta_t}.
\end{align*}
Choose $\delta = 2$. Then under Assumption~\ref{assumption mar nuisance (a)},
\[
    \E\sbr{\abs{{1-\Gamma_i}/{1-\gamma_n}X_i^\top \beta^*}^4} \leq c_0^{-4}\E\sbr{\p{X_i^\top \beta^*}^4} \leq c_0^{-4}C_3^{4}\sigma^4.
\]
In addition,
\begin{align*}
    \E\sbr{\abs{\frac{\Gamma_i\p{1-g\p{X_i^\top \alpha^*}}}{\p{1-\gamma_n}g\p{X_i^\top \alpha^*}}w_i}^4} &= \gamma_n\p{1-\gamma_n}^{-4} \E\sbr{\abs{\frac{\p{1-g\p{X_i^\top \alpha^*}}}{g\p{X_i^\top \alpha^*}}w_i}^4 \mid \Gamma_i = 1}\\
    &\leq c_0^{-4}\gamma_n^{-3}\E\sbr{w_i^4 \mid \Gamma_i = 1}\leq c_0^{-4}\gamma_n^{-3}\sigma_w^4.
\end{align*}
It follows that
\begin{align*}
    \norm{V_{t,i}-\theta_t}_{\P, 4} &\leq c_0^{-1}C_3\sigma + c_0^{-1}\sigma_w\gamma_n^{-3/4} + c_0^{-1}\abs{\theta_t}\leq \gamma_n^{-3/4}\p{c_0^{-1}C_3\sigma + c_0^{-1}\sigma_w + c_0^{-1}\abs{\theta_t}}.
\end{align*}
Thus, if $n\gamma_n \gg 1$, as $n, d \rightarrow \infty$,
\begin{align*}
    \lim_{n\rightarrow \infty}n^{-\delta/2}\sigma^{-\p{2+\delta}} \E\sbr{\abs{V_{t,i} - \theta_t}^{2+\delta}} \leq \lim_{n\rightarrow \infty} Cn^{-1}\gamma_n^{2}\gamma_n^{-3} = \lim_{n\rightarrow \infty} C\p{n\gamma_n}^{-1} = 0.
    \end{align*}
\end{proof}

\begin{lemma}\label{lemma mar transportability theta hat - theta tilde consistency}
    Let Assumption~\ref{assumption mar nuisance (a)} hold. Choose $\lambda_\alpha \asymp \lambda_\beta \asymp (\log d/(n\gamma_n))^{1/2}$. If $n\gamma_n \gg \max\{s_\alpha, s_\beta\log n$, $(\log n)^2\} \log d$ and  $s_\alpha s_\beta \ll (n\gamma_n)^{3/2}/(\log n(\log d)^2)$, then as $n, d\rightarrow \infty$, 
    \begin{align*}
        \widehat \theta_t - \widetilde \theta_t &= O_p\p{\frac{\p{s_\alpha + (s_\alpha s_\beta)^{1/2}} \log d}{n\gamma_n}}.
    \end{align*}
In addition, let Assumption~\ref{assumption mar} hold. Assume that $\mu(X) = X^\top \beta^*_{OR}$ holds. Then as $n, d\rightarrow \infty$, 
\begin{align*}
    \widehat \theta_t - \widetilde \theta_t &= O_p\p{\frac{(s_\alpha \log d)^{1/2}}{n\gamma_n} +\frac{(s_\alpha s_\beta)^{1/2} \log d}{n\gamma_n} }.
\end{align*}
\end{lemma}
\begin{proof}
    Let $\widehat \Delta_\alpha^{(-k)} = \widehat \alpha^{(-k)} - \alpha^*$, $\widehat \Delta_\beta^{(-k)} = \widehat \beta^{(-k)} - \beta^*$, $\widetilde \Delta_\beta^{(-k)} = \widetilde \beta^{(-k)} - \beta^*$, $\Delta^{(-k)} = \widehat \beta^{(-k)} - \widetilde \beta^{(-k)}$, $\widehat \gamma_k = n_k^{-1}\sum_{i\in \mathcal{I}_k} \Gamma_i$, $\widehat \gamma = n^{-1}\sum_{i=1}^n\Gamma_i$, and
\begin{align*}
    \Phi_i\p{\alpha, \beta} = \Gamma_i\frac{1-g\p{X_i^\top \alpha}}{g\p{X_i^\top \alpha}}\p{Y_i - X_i^\top \beta} = \Psi_i\p{\alpha, \beta} - \Gamma_i Y_i.
\end{align*}
Then
\begin{align*}
    \widehat \theta_t^{(k)} &= \p{1-\widehat \gamma_k}^{-1}n_k^{-1}\sum_{i\in \mathcal{I}_k} \br{\p{1-\Gamma_i}\bar X_0^\top \widehat \beta^{(-k)} + \Phi_i\p{\widehat \alpha^{(-k)}, \widehat \beta^{(-k)}}},\\
    \widetilde \theta_t^{(k)} &= \p{1-\gamma_n}^{-1}n_k^{-1}\sum_{i\in \mathcal{I}_k} \br{\p{1-\Gamma_i}\bar X_0^\top  \beta^* + \Phi_i\p{\alpha^*, \beta^*}} + \frac{\widehat \gamma_k - \gamma_n}{1-\gamma_n}\theta_t,
\end{align*}
where $\widetilde \theta_t = K^{-1}\sum_{k=1}^K \widetilde \theta_t^{(k)}$. Note that $\Phi_i\p{\widehat \alpha^{(-k)}, \widehat \beta^{(-k)}} - \Phi_i\p{\alpha^*, \beta^*} = \Psi_i\p{\widehat \alpha^{(-k)}, \widehat \beta^{(-k)}} - \Psi_i\p{\alpha^*, \beta^*}$ and 
\begin{align*}
    \p{1-\widehat \gamma_{k}}\widehat\theta_t^{(k)} - \p{1-\gamma_n}\p{\widetilde\theta_t^{(k)} - \frac{\widehat \gamma_k - \gamma_n}{1-\gamma_n}\theta_t} = \widehat \theta_g^{(k)} - \widetilde \theta_g^{(k)}.
\end{align*}
By results in Lemma~\ref{lemma mar theta hat - theta tilde consistency} we have
\begin{align*}
    \p{1-\widehat \gamma_{k}}\widehat\theta_t^{(k)} - \p{1-\gamma_n}\p{\widetilde\theta_t^{(k)} - \frac{\widehat \gamma_k - \gamma_n}{1-\gamma_n}\theta_t} = O_p\p{  \frac{\p{s_\alpha + (s_\alpha s_\beta)^{1/2}} \log d}{n\gamma_n}}.
\end{align*}

On the other hand, let  
\begin{align*}
    \bar\theta_t^{(k)} = n_k^{-1}\sum_{i \in \mathcal{I}_k}\br{\frac{1-\Gamma_i}{1-\gamma_n}X_i^\top \beta^* + \Phi_i\p{\alpha^*, \beta^*}} + \frac{\widehat \gamma_k - \gamma_n}{1-\gamma_n}\theta_t.
\end{align*}
Then
\begin{align*}
    \widetilde \theta_t^{(k)} - \bar\theta_t^{(k)} = q_1 - q_2 + q_3,
\end{align*}
where
\begin{align*}
    q_1 &= \frac{1-\widehat \gamma_k}{1-\gamma_n}\p{1-\widehat \gamma}^{-1} n^{-1}\sum_{i=1}^n\br{\p{1-\Gamma_i}X_i - \E\sbr{\p{1-\Gamma_i}X_i}}^\top \beta^*,\\
    q_2 &= \frac{1}{1-\gamma_n} n_k^{-1}\sum_{i\in \mathcal{I}_k} \br{\p{1-\Gamma_i}X_i - \E\sbr{\p{1-\Gamma_i}X_i}}^\top \beta^*,\\
    q_3 &= \p{\frac{1-\widehat \gamma_k}{1-\gamma_n}\p{1-\widehat \gamma}^{-1} - \frac{1}{1-\gamma_n}} \E\sbr{\p{1-\Gamma_i}X_i}^\top \beta^*.
\end{align*}

For $q_1$, let 
\begin{align*}
    q_{1a} = n^{-1}\sum_{i=1}^n\br{\p{1-\Gamma_i}X_i - \E\sbr{\p{1-\Gamma_i}X_i}}^\top \beta^*.
\end{align*}
Then $\E\sbr{q_{1a}}=0$. By Assumption~\ref{assumption mar nuisance (a)},
\begin{align*}
    \E\sbr{q_{1a}^2} &= n^{-1}\E\sbr{\br{\p{\p{1-\Gamma_i}X_i - \E\sbr{\p{1-\Gamma_i}X_i}}^\top \beta^*}^2}\leq n^{-1} \E\sbr{\p{1-\Gamma_i}\p{X_i^\top \beta^*}^2}\\
    &= \p{1-\gamma_n}n^{-1} \E\sbr{\p{X_i^\top \beta^*}^2 \mid \Gamma_i = 0}=O_p\p{n^{-1}}.
\end{align*}
Thus, 
$
    q_{1a} = O_p\p{n^{-1/2}}.
$
By Lemma~\ref{lemma gamma ratio convergence},
\begin{align}
    \frac{1-\gamma_n}{1-\widehat \gamma_{k}} - 1 = O_p\p{(\frac{\gamma_n}{n})^{1/2}} \quad \text{and} \quad \frac{1}{1-\widehat \gamma_k} = O_p\p{1}. \label{q4b gamma}
\end{align}
Thus,
$
    q_1 = O_p\p{n^{-1/2}}.
$
Similarly, we have
$
    q_2 = O_p\p{n^{-1/2}}.
$
By Lemma~\ref{lemma concentrate gamma} and Lemma~\ref{lemma gamma ratio convergence}, we have
\begin{align*}
\frac{1-\widehat \gamma_k}{1-\widehat \gamma} - 1 
    &= \frac{1-\widehat \gamma_k - \p{1-\gamma_n}}{1-\gamma_n}\p{\frac{1-\gamma_n}{1-\widehat \gamma}-1} + \p{\frac{1-\gamma_n}{1-\widehat \gamma}-1} + \frac{1-\widehat \gamma_k - \p{1-\gamma_n}}{1-\gamma_n} \notag \\
    &= O_p\p{\p{n\p{1-\gamma_n}}^{-1/2}}O_p\p{\p{n\p{1-\gamma_n}}^{-1/2}\gamma_n^{1/2}} + O_p\p{n\p{1-\gamma_n}}^{-1/2}\notag\\
    &= O_p\p{\gamma_n^{1/2}\p{n\p{1-\gamma_n}}^{-1}} + O_p\p{n\p{1-\gamma_n}}^{-1/2}.
\end{align*}
Since $1-\gamma_n\geq c_0$, we have
$
    q_3 = O_p\p{n^{-1/2}}.
$
Thus,
$
    \widetilde \theta_t^{(k)} - \bar\theta_t^{(k)} = O_p\p{n^{-1/2}}.
$

Note that by Lemma~\ref{lemma mar transportability theta t normal}, $\bar \theta_t^{(k)} - \theta_t = O_p\p{\p{n\gamma_n}^{-1/2}}$. Then by \eqref{q4b gamma}, we have
\begin{align*}
   \widehat\theta_t^{(k)} - \widetilde\theta_t^{(k)} &= \p{\frac{1-\gamma_n}{1-\widehat \gamma_{k}} - 1}\p{\widetilde\theta_t^{(k)} - \bar \theta_t^{(k)}} + \p{\frac{1-\gamma_n}{1-\widehat \gamma_{k}}-1}\p{\bar \theta_t^{(k)} - \theta_t}  + \frac{1}{1-\widehat \gamma_k}O_p\p{\p{s_\alpha + s_\beta}\frac{\log d}{n}}\\
   &=O_p\p{\gamma_n^{1/2}n^{-1}} + O_p\p{n^{-1}} +  O_p\p{  \frac{\p{s_\alpha + (s_\alpha s_\beta)^{1/2}} \log d}{n\gamma_n}}= O_p\p{\frac{\p{s_\alpha + (s_\alpha s_\beta)^{1/2}} \log d}{n\gamma_n}}.
\end{align*}
It follows that 
$
    \widehat \theta_t - \widetilde \theta_t = O_p\p{  {\p{s_\alpha + (s_\alpha s_\beta)^{1/2}} \log d}/{n\gamma_n}}.
$

In addition, when $\mu(X_i) = X_i^\top \beta^*$, by Lemma~\ref{lemma mar theta hat - theta tilde consistency}, we have 
\begin{align*}
    \p{1-\widehat \gamma_{k}}\widehat\theta_t^{(k)} - \p{1-\gamma_n}\p{\widetilde\theta_t^{(k)} - \frac{\widehat \gamma_k - \gamma_n}{1-\gamma_n}\theta_t} = O_p\p{\frac{(s_\alpha \log d)^{1/2}}{n\gamma_n} +\frac{(s_\alpha s_\beta)^{1/2} \log d}{n\gamma_n}}.
\end{align*}
By identical analysis, we have
$
     \widehat \theta_t - \widetilde \theta_t = O_p\p{{(s_\alpha \log d)^{1/2}}/{n\gamma_n} +{(s_\alpha s_\beta)^{1/2} \log d}/{n\gamma_n}}.
$
\end{proof}

\begin{lemma}\label{lemma mar transportability variance consistency}
    Let Assumptions~\ref{assumption mar} and \ref{assumption mar nuisance (a)} hold. Assume either $\gamma_n\p{X} = g\p{X^\top \alpha^*_{PS}}$ or $\mu\p{X} = X^\top \beta^*_{OR}$ holds. Choose $\lambda_\alpha \asymp \lambda_\beta \asymp (\log d/(n\gamma_n))^{1/2}$. If $n\gamma_n \gg \max\{s_\alpha, s_\beta\log n$, $(\log n)^2\} \log d$ and  $s_\alpha s_\beta \ll (n\gamma_n)^{3/2}/(\log n(\log d)^2)$, then as $n, d\rightarrow \infty$, 
    \begin{align*}
        \widehat \sigma_t^2 = \sigma_t^2\br{1+O_p\p{(\frac{\p{s_\alpha + s_\beta}\log d}{n\gamma_n})^{1/2} + \gamma_n(\frac{s_\alpha s_\beta \log d }{n\gamma_n})^{1/2}}}.
    \end{align*}
\end{lemma}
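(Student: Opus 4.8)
The plan is to adapt, essentially verbatim, the argument used for the generalizability variance in the proof of Lemma~\ref{lemma mar sigma hat = sigma (1+o(1))}, now accounting for the reweighting by $(1-\widehat\gamma_k)^{-2}$ and the recentering by $\widehat\theta_t$. First I would introduce the oracle variance
$$\widetilde\sigma_t^2 \;=\; n^{-1}\sum_{i=1}^n\frac{1-\Gamma_i}{(1-\gamma_n)^2}\p{X_i^\top\beta^*_{OR}-\theta_t}^2 \;+\; n^{-1}\sum_{i=1}^n\frac{\Gamma_i\exp\p{-2X_i^\top\alpha^*_{PS}}}{(1-\gamma_n)^2}\,w_{OR,i}^2 .$$
Using $\Gamma_i(1-\Gamma_i)=0$ together with $\exp(-X_i^\top\alpha^*_{PS})=\{1-g(X_i^\top\alpha^*_{PS})\}/g(X_i^\top\alpha^*_{PS})$, this is precisely $n^{-1}\sum_i\p{V_{t,i}-\theta_t}^2$ with $V_{t,i}$ the summands appearing in Lemma~\ref{lemma mar transportability theta t normal}, so $\E[\widetilde\sigma_t^2]=\sigma_t^2$. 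The gap $\widetilde\sigma_t^2-\sigma_t^2$ is then controlled by Chebyshev: Assumption~\ref{assumption mar nuisance (a)}, in particular the conditional moment bounds on $w_{OR}$ and the overlap bound $g(X^\top\alpha^*_{PS})\asymp\gamma_n$, gives $\E[(V_{t,i}-\theta_t)^4]=O(\gamma_n^{-3})$, hence $\widetilde\sigma_t^2-\sigma_t^2=O_p\p{\gamma_n^{-1}(n\gamma_n)^{-1/2}}$, which is of smaller order than the claimed remainder once we recall $\sigma_t^2\asymp\gamma_n^{-1}$ from Lemma~\ref{lemma mar transportability theta t normal}.

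The substance of the proof is the bound on $\widehat\sigma_t^2-\widetilde\sigma_t^2$. I would split it over the folds and, within each fold, into three blocks. (i) The \emph{external-data block}: noting that $\widehat\beta_{OR}^{(-k),\top}\bar\Xi_0\widehat\beta_{OR}^{(-k)}+\widehat\theta_t^2-2\bar X_0^\top\widehat\beta_{OR}^{(-k)}\widehat\theta_t$ equals $n_E^{-1}\sum_{j:\Gamma_j=0}\p{X_j^\top\widehat\beta_{OR}^{(-k)}-\widehat\theta_t}^2$, write $X_j^\top\widehat\beta_{OR}^{(-k)}-\widehat\theta_t=\p{X_j^\top\beta^*_{OR}-\theta_t}+X_j^\top\widehat\Delta_\beta^{(-k)}-(\widehat\theta_t-\theta_t)$ and expand the square; the error reduces to controlling $n_E^{-1}\sum_{\Gamma_j=0}(X_j^\top\widehat\Delta_\beta^{(-k)})^2$, the cross term $n_E^{-1}\sum_{\Gamma_j=0}(X_j^\top\widehat\Delta_\beta^{(-k)})(X_j^\top\beta^*_{OR}-\theta_t)$, the analogous $\bar X_0$-linear term, and $(\widehat\theta_t-\theta_t)^2$. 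Following Remark~\ref{remark:challenge of cross-fitting}, I would split $\widehat\Delta_\beta^{(-k)}=\Delta^{(-k)}+\widetilde\Delta_\beta^{(-k)}$: the part carried by $\widetilde\Delta_\beta^{(-k)}$ is conditionally independent of the external covariates given all $\Gamma_i$ (Lemma~\ref{lemma conditional independence}) and is handled in $\ell_2$ via Lemma~\ref{sub-gaussian properties}, while the part carried by $\Delta^{(-k)}$ is handled through its $\ell_1$ norm using the uniform quadratic-form bound of Lemma~\ref{lemma mar i.i.d. quadratic form} together with the product-sparsity rate $\|\Delta^{(-k)}\|_1=O_p(\sqrt{s_\alpha s_\beta\log d/(n\gamma_n)})$ from Lemma~\ref{lemma consistency of betahat and betatilde for product sparsity}. (ii) The \emph{IPW block}: expanding $\exp(-2X_i^\top\widehat\alpha_{PS}^{(-k)})=\exp(-2X_i^\top\alpha^*_{PS})\exp(-2X_i^\top\widehat\Delta_\alpha^{(-k)})$ and $Y_i-X_i^\top\widehat\beta_{OR}^{(-k)}=w_{OR,i}-X_i^\top\widehat\Delta_\beta^{(-k)}$, the difference from the oracle is a sum of products of error terms, bounded by Cauchy--Schwarz, by Lemma~\ref{lemma Exp(Xn2) ineq} for the $\exp(u|X^\top\widehat\Delta_\alpha^{(-k)}|)$-weighted moments with $u\in\{2,4\}$, and by Lemma~\ref{lemma constant concentration of w} for the $\Gamma_i\exp(-X_i^\top\alpha^*_{PS})w_{OR,i}^p$ averages, exactly as in Lemma~\ref{lemma psi diff square}; this block sits at the $O_p(\gamma_n^{-1})$ scale and yields the $\gamma_n^{-1}\sqrt{(s_\alpha+s_\beta)\log d/(n\gamma_n)}$ piece of $\widehat\sigma_t^2-\widetilde\sigma_t^2$. (iii) The \emph{scalar-factor corrections}: replacing $(1-\widehat\gamma_k)^{-2}$ by $(1-\gamma_n)^{-2}$ and $\widehat\theta_t$ by $\theta_t$ uses $1-\gamma_n\ge c_0$, Lemma~\ref{lemma concentrate gamma}, Lemma~\ref{lemma gamma ratio convergence} and Lemma~\ref{lemma mar transportability theta hat - theta tilde consistency}, and is of order $(n\gamma_n)^{-1/2}$ or smaller.

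Collecting the three blocks gives $\widehat\sigma_t^2-\widetilde\sigma_t^2=O_p\p{\gamma_n^{-1}\sqrt{(s_\alpha+s_\beta)\log d/(n\gamma_n)}+\sqrt{s_\alpha s_\beta\log d/(n\gamma_n)}}$, and dividing by $\sigma_t^2\asymp\gamma_n^{-1}$ (and absorbing $\widetilde\sigma_t^2-\sigma_t^2$) produces the two terms in the asserted relative rate, the second being the $\ell_1$-driven, $\gamma_n$-scaled contribution $\gamma_n\sqrt{s_\alpha s_\beta\log d/(n\gamma_n)}$ traceable to $\|\Delta^{(-k)}\|_1$ in block~(i). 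The main obstacle is exactly block~(i): because $\widehat\beta_{OR}^{(-k)}$ depends on the external summary $\bar X_0$ through the initial propensity estimate $\widehat\alpha_{PS}^{(-k)}$, the conditional-independence shortcut available in the homogeneous case fails, and one must instead peel off $\widetilde\beta_{OR}^{(-k)}$ (which is independent of $\bar X_0$ given the $\Gamma_i$'s) and control the residual $\Delta^{(-k)}$ purely through its $\ell_1$ norm against the external Gram matrix $\bar\Xi_0$ — the same mechanism that forces the product-sparsity condition $s_\alpha s_\beta\ll(n\gamma_n)^{3/2}/(\log n(\log d)^2)$, and which, since these external-data terms are of order $O_p(1)$ rather than $O_p(\gamma_n^{-1})$, gets amplified by $\gamma_n$ after normalization by $\sigma_t^2$.
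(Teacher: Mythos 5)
Your proposal is correct and follows essentially the same route as the paper: the same oracle quantity $\widetilde\sigma_t^2$ handled by Chebyshev with a fourth-moment bound of order $\gamma_n^{-3}$, the same splitting $\widehat\Delta_\beta^{(-k)}=\Delta^{(-k)}+\widetilde\Delta_\beta^{(-k)}$ with conditional independence for the oracle part and $\ell_1$-control of $\Delta^{(-k)}$ against $\bar\Xi_0$ via Lemma \ref{lemma mar i.i.d. quadratic form} and Lemma \ref{lemma consistency of betahat and betatilde for product sparsity}, the same treatment of the IPW block through the $\Psi$-difference bound of Lemma \ref{lemma psi diff square}, and the same scalar corrections for $(1-\widehat\gamma_k)^{-2}$ and $\widehat\theta_t$. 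The only difference is cosmetic: you complete the square to write the three external-summary terms as $n_E^{-1}\sum_{\Gamma_j=0}(X_j^\top\widehat\beta_{OR}^{(-k)}-\widehat\theta_t)^2$, whereas the paper bounds the quadratic, cross, and $\widehat\theta_t^2$ pieces ($p_{k1},p_{k3},p_{k4}$) separately before recombining; both yield the same two-term relative rate.
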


\begin{proof}
    Note that 
    \begin{align*}
        \widehat \sigma_t^2 - \widetilde \sigma_t^2 = K^{-1}\sum_{k=1}^K\p{p_{k1a}-p_{k1b} + p_{k2a}-p_{k2b} - 2\p{p_{k3a}-p_{k3b}} + p_{k4a} - p_{k4b}},
    \end{align*}
where
\begin{align*}
    &p_{k1a} = \p{1-\widehat \gamma_k}^{-1} \widehat \beta^{(-k),\top}\bar \Xi_0 \widehat \beta^{(-k)}, \quad
    p_{k1b} = \frac{1-\widehat \gamma_k}{\p{1-\gamma_n}^{2}} \beta^{*,\top}\bar \Xi_0 \beta^{*},\\
    &p_{k2a} = \p{1-\widehat \gamma_k}^{-2}n_k^{-1}\sum_{i \in \mathcal{I}_k} \Phi_i\p{\widehat \alpha^{(-k)},\widehat \beta^{(-k)}}^2, \quad
    p_{k2b} =\p{1-\gamma_n}^{-2}n_k^{-1}\sum_{i \in \mathcal{I}_k} \Phi_i\p{\alpha^*, \beta^*}^2,\\
        &p_{k3a} = \p{1-\widehat \gamma_k}^{-1}\p{\bar X_0^\top \widehat \beta^{(-k)}}\widehat \theta_t, \quad
        p_{k3b} = \p{1-\gamma_n}^{-2}\p{1-\widehat \gamma_k}\p{\bar X_0^\top \beta^*}\theta_t\\
        &p_{k4a} =\p{1-\widehat \gamma_k}^{-1}\widehat \theta_t^2,\quad
        p_{k4b} = \p{1-\gamma_n}^{-2}\p{1-\widehat \gamma_k}\theta_t^2.
\end{align*}
Recall $s_{k21}, s_{k22}$ in the proof of Lemma~\ref{lemma mar sigma hat = sigma (1+o(1))},
\begin{align*}
    &\p{1-\widehat \gamma_k}^2p_{k1a} - \p{1-\gamma_n}^2p_{k1b} = \p{1-\widehat \gamma_k}\p{\widehat \Delta_\beta^{(-k),\top}\bar \Xi_0 \widehat \Delta_\beta^{(-k)} + 2\beta^{*,\top}\Xi_0 \widehat \Delta_\beta^{(-k)}}\\
    &\qquad\leq s_{k21} + 2s_{k22}=O_p\p{\norm{\Delta^{(-k)}}_1(\frac{\log d}{n})^{1/2}+ \norm{\Delta^{(-k)}}_2+  \norm{\widehat \Delta_\beta^{(-k)}}_2}.
\end{align*}
Under Assumption~\ref{assumption mar nuisance (a)},
    \begin{align*}
        &\E\sbr{\Phi_i\p{\alpha^*, \beta^*}^2} = \E\sbr{\frac{\Gamma_i}{\p{1-\gamma_n}^2}\p{\frac{1-g\p{X_i^\top \alpha^*}}{g\p{X_i^\top \alpha^*}}}^2w_i^2}\leq c_0^{-2}k_0^{-2}\gamma_n^{-2}\E\sbr{\Gamma_iw_i^2}\\
        &\qquad=c_0^{-2}k_0^{-2}\gamma_n^{-1}\E\sbr{w_i^2 \mid \Gamma_i=1}\leq c_0^{-2}k_0^{-2}\sigma_w^{-1}\gamma_n^{-1},
    \end{align*}
which implies 
\begin{align}
    p_{k2b} = O_p\p{\p{1-\gamma_n}^{-2}\gamma_n^{-1}
} = O_p\p{\gamma_n^{-1}}. \label{pk2b}
\end{align}
Since $a^2 - b^2 = (a-b)^2 + 2(a-b)b$ and 
$
    \Phi_i\p{\widehat \alpha^{(-k)}, \widehat \beta^{(-k)}} - \Phi_i\p{\alpha^*, \beta^*} = \Psi_i\p{\widehat \alpha^{(-k)}, \widehat \beta^{(-k)}} - \Psi_i\p{\alpha^*, \beta^*},
$
by Cauchy-Schwarz inequality, we have
\begin{align*}
    &\p{1-\widehat \gamma_k}^2p_{k2a} - \p{1-\gamma_n}^2p_{k2b} \\
    &\quad= n_k^{-1}\sum_{i \in \mathcal{I}_k}\p{\Psi_i\p{\widehat \alpha^{(-k)}, \widehat \beta^{(-k)}} - \Psi_i\p{\alpha^*, \beta^*}}^2+ 2n_k^{-1}\sum_{i \in \mathcal{I}_k}\p{\Psi_i\p{\widehat \alpha^{(-k)}, \widehat \beta^{(-k)}} - \Psi_i\p{\alpha^*, \beta^*}}\Phi_i\p{\alpha^*, \beta^*}\\
    &\quad\leq n_k^{-1}\sum_{i \in \mathcal{I}_k}\p{\Psi_i\p{\widehat \alpha^{(-k)}, \widehat \beta^{(-k)}} - \Psi_i\p{\alpha^*, \beta^*}}^2 \\
    &\quad\quad + 2\br{n_k^{-1}\sum_{i \in \mathcal{I}_k}\p{\Psi_i\p{\widehat \alpha^{(-k)}, \widehat \beta^{(-k)}} - \Psi_i\p{\alpha^*, \beta^*}}^2}^{1/2}\p{p_{k2b}}^{1/2}\\
    &\quad\overset{(i)}{=}O_p\p{\gamma_n^{-1}\br{\norm{\widehat \Delta_\alpha^{(-k)}}_2^2 + \norm{\widehat \Delta_\beta^{(-k)}}_2^2}} + O_p\p{\gamma_n^{-1}\br{\norm{\widehat \Delta_\alpha^{(-k)}}_2 + \norm{\widehat \Delta_\beta^{(-k)}}_2}},
\end{align*}
    where we use Lemma~\ref{lemma psi diff square} in $(i)$. 
In addition,
\begin{align*}
    &\p{1-\widehat \gamma_k}^2p_{k3a} - \p{1- \gamma_n}^2p_{k3b} = \p{1-\widehat \gamma_k}\br{\bar X_0^\top \widehat \Delta_\beta^{(-k)}\p{\widehat \theta_t - \theta_t} + \bar X_0^\top \beta^* \p{\widehat \theta_t - \theta_t} + \bar X_0^\top\widehat \Delta_\beta^{(-k)}\theta_t}\\
    &\qquad=\frac{1-\widehat \gamma_k}{1-\widehat \gamma}\p{\widehat \theta_t - \theta_t + \theta_t} n^{-1}\sum_{i=1}^n\p{1-\Gamma_i}X_i^\top\Delta^{(-k)}+\frac{1-\widehat \gamma_k}{1-\widehat \gamma}\p{\widehat \theta_t - \theta_t}n^{-1}\sum_{i=1}^n\p{1-\Gamma_i}X_i^\top\beta^*\\
    &\qquad\qquad +\frac{1-\widehat \gamma_k}{1-\widehat \gamma}\p{\widehat \theta_t - \theta_t + \theta_t} n^{-1}\sum_{i=1}^n\p{1-\Gamma_i}X_i^\top\widetilde \Delta_\beta^{(-k)}.
\end{align*}
By Lemma~\ref{sub-gaussian properties}(c), for some $C>0$,
$
    \max_{1\leq i \leq n}\max_{1\leq k \leq d} \norm{\p{\p{1-\Gamma_i}X_i - \E\sbr{\p{1-\Gamma_i}X_i}}^\top e_k}_{\psi_2} \leq C\sigma,
$
and
$
    \max_{1\leq k \leq d}n^{-1}\sum_{i=1}^n\E\sbr{\br{\p{\p{1-\Gamma_i}X_i - \E\sbr{\p{1-\Gamma_i}X_i}}^\top e_k}^2} \leq C\sigma_2.
$
Since $n\gamma_n \gg (\log n)^2\log d$, by Lemma~\ref{Lemma Tail Bounds for Maximums}, 
\begin{align*}
    \norm{n^{-1}\sum_{i=1}^n\p{1-\Gamma_i}X_i - \E\sbr{\p{1-\Gamma_i}X_i} }_\infty = O_p\p{(\frac{\log d}{n})^{1/2}}.
\end{align*}
Thus,
\begin{align*}
    &n^{-1}\sum_{i=1}^n\p{1-\Gamma_i}X_i^\top\Delta^{(-k)}\\
    &\qquad \leq \norm{n^{-1}\sum_{i=1}^n\p{1-\Gamma_i}X_i - \E\sbr{\p{1-\Gamma_i}X_i} }_\infty\norm{\Delta^{(-k)}}_1\\
    &\qquad\quad + \norm{\E\sbr{\p{1-\Gamma_i}X_i} }_\infty\norm{\Delta^{(-k)}}_1\\
    &\qquad = O_p\p{\norm{\Delta^{(-k)}}_1}.
\end{align*}
By Lemma~\ref{sub-gaussian properties} (c),
$
    \E\sbr{\p{1-\Gamma_i}X_i^\top \beta^*} \leq \E\sbr{\abs{X_i^\top \beta^*}} \leq C\sigma,
$
we have
$
    n^{-1}\sum_{i=1}^n\p{1-\Gamma_i}X_i^\top\beta^* = O_p\p{1}.
$
Since $\widetilde \Delta_\beta^{(-k)} \perp (1-\Gamma_i)X_i \mid \Gamma_{1:n}$, then by Lemma~\ref{sub-gaussian properties}(c),
\begin{align*}
    \E\sbr{\p{1-\Gamma_i}X_i^\top\widetilde \Delta_\beta^{(-k)} \mid \Gamma_{1:n}, \widetilde\beta^{(-k)}} = \p{1-\Gamma_i}\E\sbr{X_i\mid \Gamma_i=0}^\top\widetilde \Delta_\beta^{(-k)} \leq 2\sigma\norm{\widetilde \Delta_\beta^{(-k)}}_2.
\end{align*}
By Lemma~\ref{lemma convergence of conditional random variable}, 
$
    n^{-1}\sum_{i=1}^n\p{1-\Gamma_i}X_i^\top\widetilde \Delta_\beta^{(-k)} = O_p\p{\norm{\widetilde \Delta_\beta^{(-k)}}_2}.
$
By Lemma~\ref{lemma mar transportability theta t normal} and Lemma~\ref{lemma mar transportability theta hat - theta tilde consistency},
\begin{align*}
    \widehat \theta_t - \theta_t &= O_p\p{\frac{\p{s_\alpha + (s_\alpha s_\beta)^{1/2}} \log d}{n\gamma_n} + \p{n\gamma_n}^{-1/2}}.
\end{align*}
Moreover, by Lemma~\ref{lemma concentrate gamma} and Lemma~\ref{lemma gamma ratio convergence},
\begin{align*}
    \frac{1-\widehat \gamma_k}{1-\widehat \gamma} = \frac{1-\widehat \gamma_k}{1-\gamma_n}\p{\frac{1-\gamma_n}{1-\widehat \gamma} - 1} + \frac{1-\widehat \gamma_k}{1-\gamma_n} = O_p\p{1},
\end{align*}
which implies
\begin{align}
    p_{k3b} = \p{1-\gamma_n}^{-2}\p{1-\widehat \gamma_k}\p{\bar X_0^\top \beta^*}\theta_t = O_p\p{1}. \label{pk3b}
\end{align}
Thus, 
\begin{align*}
    &\p{1-\widehat \gamma_k}^2p_{k3a} - \p{1- \gamma_n}^2p_{k3b} = O_p\p{\norm{\Delta^{(-k)}}_1+\norm{\widetilde \Delta_\beta^{(-k)}}_2 + \frac{\p{s_\alpha + (s_\alpha s_\beta)^{1/2}} \log d}{n\gamma_n} + \p{n\gamma_n}^{-1/2}}.
\end{align*}
It is clear that 
\begin{align}
    p_{k4b} = \frac{1-\widehat \gamma_k}{\p{1-\gamma_n}^2}\theta_t^2 = O_p\p{1}, \label{pk4b}
\end{align}
and
\begin{align*}
    &\p{1-\widehat \gamma_k}^2p_{k4a} - \p{1- \gamma_n}^2p_{k4b} = \p{1-\widehat \gamma_k}\br{\p{\widehat \theta_t - \theta_t}^2 + 2\p{\widehat \theta_t - \theta_t}\theta_t}\\
    &\qquad=O_p\p{\frac{\p{s_\alpha + (s_\alpha s_\beta)^{1/2}} \log d}{n\gamma_n} + \p{n\gamma_n}^{-1/2}}.
\end{align*}
In conclusion, by Proposition~\ref{proposition mar nuisance consistency body}, Lemma~\ref{lemma consistency of betahat and betatilde for product sparsity}, and Lemma~\ref{lemma consistency for alpha and betatilde}, we have
\begin{align*}
    a_1 = \p{1-\widehat \gamma_k}^2p_{k1a} - \p{1-\gamma_n}^2p_{k1b} &= O_p\p{ (\frac{\p{s_\alpha + s_\beta}\log d}{n\gamma_n})^{1/2}},\\
    a_2 =\p{1-\widehat \gamma_k}^2p_{k2a} - \p{1-\gamma_n}^2p_{k2b} &= O_p\p{\gamma_n^{-1} (\frac{\p{s_\alpha + s_\beta}\log d}{n\gamma_n})^{1/2}},\\
    a_3 =\p{1-\widehat \gamma_k}^2p_{k3a} - \p{1-\gamma_n}^2p_{k3b} &= O_p\p{(\frac{s_\alpha s_\beta \log d }{n\gamma_n})^{1/2}},\\
    a_4 =\p{1-\widehat \gamma_k}^2p_{k4a} - \p{1-\gamma_n}^2p_{k4b} &= O_p\p{\frac{\p{s_\alpha + (s_\alpha s_\beta)^{1/2}} \log d}{n\gamma_n} + \p{n\gamma_n}^{-1/2}}.
\end{align*}
Note that
\begin{align*}
     p_{k1b} = \frac{1-\widehat \gamma_k}{\p{1-\gamma_n}^{2}}\beta^{*,\top}\bar \Xi_0 \beta^{*} = \frac{\p{1-\widehat \gamma_k}\p{1-\widehat \gamma}}{\p{1-\gamma_n}^{2}}n^{-1}\sum_{i=1}^n\p{1-\Gamma_i}\p{X_i^\top \beta^*}^2.
\end{align*}
By Lemma~\ref{lemma concentrate gamma}, $1-\widehat \gamma_k = O_p\p{1-\gamma_n}$, $1-\widehat \gamma = O_p\p{1-\gamma_n}$. By Lemma~\ref{sub-gaussian properties}(c),
\begin{align*}
    \E\sbr{\p{1-\Gamma_i}\p{X_i^\top \beta^*}^2} = \p{1-\gamma_n}\E\sbr{\p{X_i^\top \beta^*}^2 \mid \Gamma_i=0} = O_p\p{1-\gamma_n},
\end{align*}
which implies
\begin{align}
    p_{k1b} = O_p\p{1-\gamma_n}. \label{pk1b}
\end{align}
In addition, by Lemma~\ref{lemma gamma ratio convergence}, ${1-\gamma_n}/{1-\widehat \gamma_{k}} - 1 = O_p\p{\gamma_n^{1/2}\p{n\p{1-\gamma_n}}^{-1/2}}$ and
\begin{align}
    \p{\frac{1-\gamma_n}{1-\widehat \gamma_{k}}}^2 - 1 = \p{\frac{1-\gamma_n}{1-\widehat \gamma_{k}} - 1}^2 + 2\p{\frac{1-\gamma_n}{1-\widehat \gamma_{k}}-1} = O_p\p{\gamma_n^{1/2}\p{n\p{1-\gamma_n}}^{-1/2}}. \label{1-gamma hat square ratio}
\end{align}
By \eqref{pk2b}-\eqref{1-gamma hat square ratio}, we have
\begin{align*}
    &p_{k1a}-p_{k1b} + p_{k2a}-p_{k2b} - 2\p{p_{k3a}-p_{k3b}} + p_{k4a} - p_{k4b}\\
    &\qquad=\p{\p{\frac{1-\gamma_n}{1-\widehat \gamma_k}}^2 - 1}\p{p_{k1b} + p_{k2b} - 2p_{k3b} + p_{k4b}}  + \frac{1}{\p{1-\widehat \gamma_k}^2}\p{a_1 + a_2 -2a_3 + a_4}\\
    &\qquad=O_p\p{ \gamma_n^{-1} (\frac{\p{s_\alpha + s_\beta}\log d}{n\gamma_n})^{1/2} + (\frac{s_\alpha s_\beta \log d }{n\gamma_n})^{1/2}}.
\end{align*}
Then it follows that
\begin{align*}
    \widehat \sigma_t^2 - \widetilde \sigma_t^2 = O_p\p{\gamma_n^{-1} (\frac{\p{s_\alpha + s_\beta}\log d}{n\gamma_n})^{1/2} + (\frac{s_\alpha s_\beta \log d }{n\gamma_n})^{1/2}}.
\end{align*}

On the other hand, since
\begin{align*}
    \sigma_t^2 = \E\sbr{\br{\frac{1-\Gamma_i}{1-\gamma_n}\p{X_i^\top \beta^*-\theta_t} + \frac{\Gamma_i\p{1-g\p{X_i^\top \alpha^*}}}{\p{1-\gamma_n}g\p{X_i^\top \alpha^*}}\p{Y_i - X_i^\top\beta^*}}^2},
\end{align*}
we have $\E\sbr{\widetilde \sigma_t^2 - \sigma_t^2} = 0$.
Thus,
\begin{align*}
    &\E\sbr{\p{\widetilde \sigma_t^2 - \sigma_t^2}^2} \leq n^{-1}\E\sbr{\br{\frac{1-\Gamma_i}{1-\gamma_n}\p{X_i^\top \beta^*-\theta_t} + \frac{\Gamma_i\p{1-g\p{X_i^\top \alpha^*}}}{\p{1-\gamma_n}g\p{X_i^\top \alpha^*}}w_i}^4}\\
    &\qquad=n^{-1}\p{1-\gamma_n}^{-4}\br{\E\sbr{\p{1-\Gamma_i}\p{X_i^\top \beta^*-\theta_t}^4}-\E\sbr{\p{\frac{\Gamma_i\p{1-g\p{X_i^\top \alpha^*}}}{g\p{X_i^\top \alpha^*}}}^4w_i^4}}\\
    &\qquad\leq Cn^{-1}\p{1-\gamma_n}^{-4}\br{\p{1-\gamma_n}\E\sbr{\p{X_i^\top \beta^*}^4 \mid \Gamma_i=0} + \gamma_n^{-4}\E\sbr{\Gamma_iw_i^4}}\\
    &\qquad\leq Cn^{-1}\p{1-\gamma_n}^{-4}\br{\p{1-\gamma_n}\E\sbr{\p{X_i^\top \beta^*}^4 \mid \Gamma_i=0} + \gamma_n^{-3}\E\sbr{w_i^4 \mid \Gamma_i = 1}}=O_p\p{n^{-1}\gamma_n^{-3}},
\end{align*}
which implies
$
    \widetilde \sigma_t^2 - \sigma_t^2 = O_p\p{\gamma_n^{-1}\p{n\gamma_n}^{-1/2}}. 
$
By Lemma~\ref{lemma mar transportability theta t normal}, $\sigma_t^2 \asymp \gamma_n^{-1}$. Then
\begin{align*}
    \widehat \sigma_t^2  &= \sigma_t^2 + O_p\p{\gamma_n^{-1}\p{n\gamma_n}^{-1/2}} + O_p\p{\gamma_n^{-1} (\frac{\p{s_\alpha + s_\beta}\log d}{n\gamma_n})^{1/2}+ (\frac{s_\alpha s_\beta \log d }{n\gamma_n})^{1/2}}\\
    &=\sigma_t^2\br{1+O_p\p{(\frac{\p{s_\alpha + s_\beta}\log d}{n\gamma_n})^{1/2} + \gamma_n(\frac{s_\alpha s_\beta \log d }{n\gamma_n})^{1/2}}}.
\end{align*}
\end{proof}

\begin{lemma}\label{lemma key results for transportability}
Let Assumptions~\ref{assumption mar}  and \ref{assumption mar nuisance (a)} hold. Assume either $\gamma_n\p{X} = g\p{X^\top \alpha^*_{PS}}$ or $\mu\p{X} = X^\top \beta^*_{OR}$ holds. Choose $\lambda_\alpha \asymp \lambda_\beta \asymp (\log d/(n\gamma_n))^{1/2}$. If $n\gamma_n \gg  (\log n)^2\log d$, $s_\alpha = o((n\gamma_n)^{1/2}/\log d)$ and
\[
s_\alpha s_\beta = o\{n\gamma_n/(\log n\p{\log d}^2)\},
\]
then as $n, d\rightarrow \infty$, $\widehat \theta_t -\theta_t = O_p \p{(n\gamma_n)^{-1/2}}$, $\widehat \sigma^2_t = \sigma_t^2\br{1 + o_p(1)}$, and $\widehat\sigma_t^{-1}n^{1/2}(\widehat \theta_t - \theta_t)\xrightarrow{d} \mc{N}(0,1)$.

In addition, Assume that $\mu\p{X} = X^\top \beta^*_{OR}$ holds. Choose $\lambda_\alpha \asymp \lambda_\beta \asymp (\log d/(n\gamma_n))^{1/2}$. If $n\gamma_n \gg  (\log n)^2\log d$ and $s_\alpha s_\beta = o\{n\gamma_n/(\log n\p{\log d}^2)\}$, then as $n, d\rightarrow \infty$, $\widehat \theta_t -\theta_t = O_p \p{(n\gamma_n)^{-1/2}}$, $\widehat \sigma^2_t = \sigma_t^2\br{1 + o_p(1)}$, and $\widehat\sigma_t^{-1}n^{1/2}(\widehat \theta_t - \theta_t)\xrightarrow{d} \mc{N}(0,1)$.
\end{lemma}

\begin{proof}
    When $n\gamma_n \gg  (\log n)^2\log d$, $s_\alpha = o((n\gamma_n)^{1/2}/\log d)$ and $s_\alpha s_\beta = o(n\gamma_n/(\log n\p{\log d}^2))$, we have $n\gamma_n \gg \max \{s_\alpha$, $ s_\beta\log n, (\log n)^2\} \log d$ and  $s_\alpha s_\beta \ll (n\gamma_n)^{3/2}/(\log n(\log d)^2)$.
    
    First, by Lemma~\ref{lemma mar transportability theta t normal} and Lemma~\ref{lemma mar transportability theta hat - theta tilde consistency}, we have
    \begin{align*}
        \widehat \theta_t - \theta_t = O_p\p{\frac{\p{s_\alpha + (s_\alpha s_\beta)^{1/2}}\log d}{n\gamma_n} + \p{n\gamma_n}^{-1/2}} = O_p(\p{n\gamma_n}^{-1/2}).
    \end{align*}

    Second, by Lemma~\ref{lemma mar sigma hat = sigma (1+o(1))}, 
    \begin{align*}
        \widehat \sigma_t^2 &= \sigma_t^2\br{1+O_p\p{(\frac{\p{s_\alpha + s_\beta}\log d}{n\gamma_n})^{1/2} + \gamma_n(\frac{s_\alpha s_\beta \log d }{n\gamma_n})^{1/2}}}= \sigma^2_g\br{1 + o_p(1)}.
    \end{align*}

    Third, by Lemma~\ref{lemma mar transportability theta hat - theta tilde consistency}, we have
    \begin{align*}
        \widehat \theta_g - \widetilde\theta_g = O_p\p{\frac{\p{s_\alpha + (s_\alpha s_\beta)^{1/2}} \log d}{n\gamma_n}} = o_p(\p{n\gamma_n}^{-1/2}).
    \end{align*}
    By Lemma~\ref{lemma mar transportability theta t normal}, since $\sigma_t^2 \asymp \gamma_n^{-1}$, by Slutsky's theorem,
    \begin{align*}
        \widehat\sigma^{-1}_tn^{1/2}(\widehat \theta_t - \theta_t) = \widehat\sigma_t^{-1}\sigma_t\p{\sigma_g^{-1}n^{1/2}(\widetilde \theta_t - \theta_t) +  \sigma_t^{-1}n^{1/2}\p{\widehat \theta_t - \widetilde\theta_t}} \rightarrow \mc{N}(0,1).
    \end{align*}

In addition, if $\mu\p{X} = X^\top \beta^*_{OR}$, we only need $s_\alpha s_\beta = o(n\gamma_n/(\log n(\log d)^2))$, which implies $s_\alpha = o(n\gamma_n/(\log n(\log d)^2))$. Under these conditions, by Lemma~\ref{lemma mar transportability theta hat - theta tilde consistency},
\begin{align*}
    \widehat \theta_t - \widetilde\theta_t =   O_p\p{\frac{(s_\alpha \log d)^{1/2}}{n\gamma_n} +\frac{(s_\alpha s_\beta)^{1/2} \log d}{n\gamma_n}} = o_p((n\gamma_n)^{-1/2}).
\end{align*}
The rest of proof follows similarly.
\end{proof}

\subsection{Proof of Theorem~\ref{theorem mar generalizability Asymptotics body}}
\begin{proof}
    Theorem~\ref{theorem mar generalizability Asymptotics body} follows directly from Lemma~\ref{lemma key results for generalizability} and Lemma~\ref{lemma key results for transportability}.
\end{proof}

\section{Proof of results in Section~\ref{sec: causal'}}\label{sec: proof causal}

For each $a \in \{0,1\}$, define 
\begin{align*}
    \widetilde \beta^{(-k)}_{a,g} &= \argmin_{\beta \in \R^d} \br{ M^{-1}\sum_{i \in \mathcal{I}_{-k,\beta}} \Gamma_{a,i} \exp\p{-X_i^\top \alpha_{a,g}^*}\p{Y_i - X_i^\top \beta}^2 + \lambda_\beta \norm{\beta}_1 },\\
    \widetilde \beta^{(-k)}_{a,t} &= \argmin_{\beta \in \R^d} \br{ M^{-1}\sum_{i \in \mathcal{I}_{-k,\beta}} \Gamma_{a,i} \exp\p{-X_i^\top \widehat \alpha_{a,t}^*}\p{Y_i - X_i^\top \beta}^2 + \lambda_\beta \norm{\beta}_1 }.
\end{align*}
In addition, for each $a \in \{0,1\}$, we define the sparsity levels of $\alpha_{a,g}^*$, $\beta_{a,g}^*$ and $\alpha_{a,t}^*$, $\beta_{a,t}^*$ as $s_{\alpha_{a,g}}, s_{\beta_{a,g}}$ and $s_{\alpha_{a,t}}, s_{\beta_{a,t}}$, respectively.

\subsection{Auxiliary lemmas for generalizability}
\begin{lemma}\label{lemma causal generalizability}
    Let Assumption~\ref{assumption causal} hold and for each \(a \in \{0,1\}\), Assumption~\ref{assumption mar nuisance (a)} hold with \((\alpha^*_{PS}, \beta^*_{OR}, w_{OR})\) replaced by \((\alpha^*_{a,g}, \beta^*_{a,g}, w_{a,g})\). Then 

    (a) For each $a\in \{0,1\}$, $\Gamma_{a,i} \perp Y_i\p{a} \mid X_i$;

    (b) For each $a \in \{0,1\}$, there exists some constant $k_0, c_0 \in (0,1)$ such that 
    \begin{align*}
        k_0(1-\gamma_{a,g})/\gamma_{a,g} \leq (1-g(X^\top \alpha^*_{a,g}))/g(X^\top \alpha^*_{a,g}) \leq k_0^{-1}(1-\gamma_{a,g})/\gamma_{a,g},
    \end{align*}
    where $\gamma_{a,g} = \P\p{\Gamma_{a,i}=1} \asymp \gamma_n$ and $1-\gamma_{a,g} \geq c_0$. In addition, for some $q>1$ and $\nu>0$, $\E[\P(\Gamma_{a,i}=1\mid X_i)^q] \leq \nu\gamma_{a,g}^q$.
    
    (c) For each $a \in \{0,1\}$, there exist some constants $\lambda_l,\sigma >0$, $X_i$ is sub-Gaussian given $\Gamma_{a,i}=\Gamma_a$ such that $\norm{X_i^\top \beta^*_{a,g}}_{\psi_2} \leq \sigma$ and
    \begin{align*}
        \norm{X_i}_{g, \Gamma_a, \psi_2} := \sup_{v \in \R^d, \norm{v}_2 = 1}\inf \br{t > 0: \E\sbr{\exp\br{\frac{\p{X_i^\top v}^2}{t^2}} \mid \Gamma_{a,i}=\Gamma_a}\leq 2}\leq \sigma.
    \end{align*}
    In addition, $\lambda_{\min}\p{\E\sbr{X_iX_i^\top \mid \Gamma_{a,i}=1}} \geq \lambda_l$.

    (d) For each $a \in \{0,1\}$, let $w_{a,g,i} = Y_i\p{a} - X_i^\top \beta^*_{a,g}$. There exist some constants $\sigma_w, \delta_w>0$, $w_{a,g,i}$ is sub-Gaussian with $\norm{w_{a,g,i}}_{\psi_2} \leq \sigma_w$ and 
    \begin{align*}
        \E\sbr{w_{a,g,i}^8 \mid \Gamma_{a,i}=1} \leq \sigma_w^8 \quad \text{and} \quad \E\sbr{w_{a,g,i}^2 \mid \Gamma_{a,i}=1} \geq \delta_w.
    \end{align*}
\end{lemma}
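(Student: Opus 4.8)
The plan is to verify, for each $a\in\{0,1\}$, that the ``effective labeling indicator'' $\Gamma_{a,i}=\Gamma_i\mathbbm{1}_{\{A_i=a\}}$, together with the potential outcome $Y_i(a)$ and the nuisance parameters $(\alpha^*_{a,g},\beta^*_{a,g})$, satisfies the same structural conditions that were imposed on $(\Gamma_i,Y_i,\alpha^*_{PS},\beta^*_{OR})$ in Section~\ref{sec: mar}, but with $\gamma_n$ replaced by $\gamma_{a,g}=\P(\Gamma_{a}=1)$. Part~(a) supplies the MAR-type conditional independence, and parts~(b)--(d) re-express Assumption~\ref{assumption mar nuisance (a)} in terms of $\Gamma_{a,i}$; once these are in hand, Theorem~\ref{theorem mar generalizability Asymptotics body} applies to each treatment arm and the results combine via $\widehat\tau_{a,g}=\widehat\tau_{1,g}-\widehat\tau_{0,g}$. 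I would treat the four items in order, the only genuinely substantive steps being the conditional independence in~(a) and the two-sided comparison $\gamma_{a,g}\asymp\gamma_n$ in~(b); items~(c) and~(d) amount to unwinding the hypothesis.

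For part~(a), I would decompose, via the chain rule for regular conditional probabilities, $\P(\Gamma_{a}=1\mid X,Y(a))=\P(\Gamma=1,A=a\mid X,Y(a))=\P(A=a\mid X,\Gamma=1,Y(a))\,\P(\Gamma=1\mid X,Y(a))$, with the convention that the product is $0$ on $\{\P(\Gamma=1\mid X,Y(a))=0\}$. Assumption~\ref{assumption causal}(a) (selection exchangeability) gives $\P(\Gamma=1\mid X,Y(a))=\P(\Gamma=1\mid X)$ a.s., and Assumption~\ref{assumption causal}(b) (treatment exchangeability within the primary source) gives $\P(A=a\mid X,\Gamma=1,Y(a))=\P(A=a\mid X,\Gamma=1)$ a.s.\ on $\{\P(\Gamma=1\mid X)>0\}$, the complement contributing nothing to the product. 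Hence $\P(\Gamma_{a}=1\mid X,Y(a))=\P(A=a\mid X,\Gamma=1)\,\P(\Gamma=1\mid X)$, a measurable function of $X$ alone; since $\Gamma_{a}$ is binary, this is precisely $\Gamma_{a}\perp Y(a)\mid X$.

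For part~(b), I would first write $\gamma_{a,g}=\P(\Gamma=1,A=a)=\E[\mathbbm{1}_{\{\Gamma=1\}}\,\P(A=a\mid X,\Gamma=1)]$ and invoke the overlap condition of Assumption~\ref{assumption causal}(c), $\eta_0\le\P(A=a\mid X,\Gamma=1)\le 1-\eta_0$, to obtain $\eta_0\gamma_n\le\gamma_{a,g}\le(1-\eta_0)\gamma_n$; this yields $\gamma_{a,g}\asymp\gamma_n$ and, since $\gamma_n\le1$, also $1-\gamma_{a,g}\ge\eta_0$, so $c_0$ may be taken to be $\eta_0$. Because $\gamma_n,\gamma_{a,g}\to0$, the ratio $\{(1-\gamma_{a,g})/\gamma_{a,g}\}/\{(1-\gamma_n)/\gamma_n\}$ is bounded above and below by positive constants, so the overlap bound on $g(X^\top\alpha^*_{a,g})$ furnished by the $a$-version of Assumption~\ref{assumption mar nuisance (a)}(a) transfers to the normalization $(1-\gamma_{a,g})/\gamma_{a,g}$ after renaming $k_0$. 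For the moment condition, $\P(\Gamma_{a}=1\mid X)=\gamma_n(X)\,\P(A=a\mid X,\Gamma=1)\le\gamma_n(X)$ with $\gamma_n(X):=\P(\Gamma=1\mid X)$, hence $\E[\P(\Gamma_{a}=1\mid X)^q]\le\E[\gamma_n^q(X)]\le\nu\gamma_n^q\lesssim\gamma_{a,g}^q$.

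Parts~(c) and~(d) follow by unwinding the substitution in Assumption~\ref{assumption mar nuisance (a)}(b)--(c): conditioning on $\Gamma_{a}=j$ in place of $\Gamma=j$, with $X^\top\beta^*_{a,g}$, $\E[Y(a)\mid X]$ and $w_{a,g}=Y(a)-X^\top\beta^*_{a,g}=w_{a,g,i}$ replacing $X^\top\beta^*_{OR}$, $\mu(X)$ and $w_{OR}$, after noting that $\|X\|_{g,\Gamma_a,\psi_2}\le\sigma$ is equivalent, up to an absolute constant, to $\sup_{\|v\|_2=1}\|X^\top v\|_{\psi_2}\le\sigma$ and that $\lambda_{\min}(\E[XX^\top\mid\Gamma_a=1])=\inf_{\|v\|_2=1}\E[(X^\top v)^2\mid\Gamma_a=1]$. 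The main obstacle I anticipate is the rigorous bookkeeping in part~(a): one must chain the two exchangeability assumptions in the correct order and handle the null set $\{\P(\Gamma=1\mid X)=0\}$ so that the decomposition is valid without assuming positivity of the true selection probability; once that is settled, the remaining estimates are routine.
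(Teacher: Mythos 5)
Your proposal is correct and follows essentially the same route as the paper: part (a) via the chain rule together with Assumptions \ref{assumption causal}(a)--(b) (your treatment of the null set $\{\P(\Gamma=1\mid X)=0\}$ is actually more careful than the paper, which simply asserts (a) follows directly), and part (b) via the treatment-overlap bound $\eta_0\gamma_n\le\gamma_{a,g}\le(1-\eta_0)\gamma_n$, which also matches the paper and which you push further than the paper does by explicitly transferring the overlap normalization and the $q$-th moment condition from $\gamma_n$ to $\gamma_{a,g}$.

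The one place you are too quick is parts (c)--(d). The hypothesis of the lemma replaces only the nuisance parameters $(\alpha^*_{PS},\beta^*_{OR},w_{OR})$ by $(\alpha^*_{a,g},\beta^*_{a,g},w_{a,g})$; the conditioning events in Assumption \ref{assumption mar nuisance (a)}(b)--(c) remain $\{\Gamma=j\}$, whereas the conclusions (c)--(d) are stated conditionally on $\{\Gamma_{a}=j\}$. So these parts are not a pure ``unwinding of the substitution'': one must convert conditional moment bounds given $\Gamma$ into conditional moment bounds given $\Gamma_a$. The paper does this explicitly, e.g.\ $\E[h(X)\mid\Gamma_{a}=1]=\gamma_{a,g}^{-1}\E[\Gamma\mathbbm{1}_{\{A=a\}}h(X)]\le\eta_0^{-1}\E[h(X)\mid\Gamma=1]$ using Assumption \ref{assumption causal}(c) and $\gamma_{a,g}\ge\eta_0\gamma_n$, and, for the event $\{\Gamma_a=0\}$, $\E[h(X)\mid\Gamma_a=0]\le\P(\Gamma_a=0)^{-1}\E[h(X)]$ with $\P(\Gamma_a=0)\ge\eta_0$ (this last bound degrades the sub-Gaussian parameter by a factor $\sqrt{\log(1/\eta_0)}$, and reverses direction for the lower bounds on $\lambda_{\min}$ and $\E[w_{a,g}^2\mid\Gamma_a=1]$, which need the upper bound $\gamma_{a,g}\le(1-\eta_0)\gamma_n$). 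Since this is exactly the comparison mechanism you already set up in part (b), the omission is easily repaired, but as written your argument for (c)--(d) asserts the conclusion rather than deriving it.
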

\begin{proof}
    (a) follows directly from Assumption~\ref{assumption causal}.

    (b) Under Assumption~\ref{assumption causal}, 
    \begin{align*}
        \gamma_{a,g} &= \P\p{\Gamma_{a,i}=1}=\E\sbr{\Gamma_i\E\sbr{\mathbbm{1}_{\br{A_i = a}} \mid \Gamma_i}} \asymp \E\sbr{\Gamma_i} = \gamma_n,\\
        1-\gamma_{a,g} &= \P\p{\Gamma_{a,i}=0}=\E\sbr{\p{1-\Gamma_i}\E\sbr{\mathbbm{1}_{\br{A_i = a}} \mid \Gamma_i}} \asymp \E\sbr{\p{1-\Gamma_i}} = 1-\gamma_n,
    \end{align*}
    which implies (b).

    (c) Let $\Gamma_a=1$. For any $t>0$ and $v \in \R^d, \norm{v}_2 = 1$, under Assumption~\ref{assumption causal},
    \begin{align*}
       &\E\sbr{\exp\br{\frac{\p{X_i^\top v}^2}{t^2}} \mid \Gamma_{a,i}=1} = \gamma_{a,g}^{-1}\E\sbr{\Gamma_{i}\mathbbm{1}_{\br{A_i = a}}\exp\br{\frac{\p{X_i^\top v}^2}{t^2}}}\\
       &\qquad=\gamma_{a,g}^{-1}\E\sbr{\E\sbr{\mathbbm{1}_{\br{A_i = a}} \mid \Gamma_i, X_i}\Gamma_{i}\exp\br{\frac{\p{X_i^\top v}^2}{t^2}}}\asymp \gamma_{n}^{-1}\E\sbr{\Gamma_{i}\exp\br{\frac{\p{X_i^\top v}^2}{t^2}}}\\
       &\qquad=\E\sbr{\exp\br{\frac{\p{X_i^\top v}^2}{t^2}} \mid \Gamma_i = 1}.
    \end{align*}
Thus,
$
    \norm{X_i}_{g, 1, \psi_2} \asymp \norm{X_i}_{1, \psi_2}.
$

    Let $\Gamma_a = 0$. Note that under Assumption~\ref{assumption causal},
    \begin{align*}
        \P\p{\Gamma_{a,i}=0} &= \P\sbr{\br{\Gamma_i=0}\cup \br{\Gamma_i=1, A_i \neq a}}=\P\p{\Gamma_i=0} + \P\p{A_i \neq a \mid \Gamma_i=1}\P\p{\Gamma_i=1}\\
        &\geq 1-\gamma_n + \eta_0 \gamma_n \geq \eta_0.
    \end{align*}
    In addition, for any $t>0$ and $v \in \R^d, \norm{v}_2 = 1$,
    \begin{align*}
        \E\sbr{\exp\br{\frac{\p{X_i^\top v}^2}{t^2}} \mid \Gamma_{a,i}=0}
        &= \p{1-\gamma_{a,g}}^{-1}\E\sbr{\p{1-\Gamma_{i}\mathbbm{1}_{\br{A_i = a}}}\exp\br{\frac{\p{X_i^\top v}^2}{t^2}}}\\
        &\leq \eta^{-1}\E\sbr{\exp\br{\frac{\p{X_i^\top v}^2}{t^2}}}.
    \end{align*}
    Thus,
$
    \norm{X_i}_{g, 0, \psi_2} \leq (\log \p{1/\eta_0})^{1/2}\norm{X_i}_{\psi_2}.
$

(d) Note that $\gamma_{a,g} = \P\p{\Gamma_{a,i}=1}=\E\sbr{\Gamma_i\E\sbr{\mathbbm{1}_{\br{A_i = a}} \mid \Gamma_i}}$. Under Assumption~\ref{assumption causal},
$
    \eta_0 \gamma_n \leq \gamma_{a,g} \leq \p{1-\eta_0} \gamma_n.
$
Then we have
\begin{align*}
    \E\sbr{w_{a,g,i}^8 \mid \Gamma_{a,i}=1} &= \gamma_{a,g}^{-1}\E\sbr{\Gamma_i\mathbbm{1}_{\br{A_i = a}}w_{a,g,i}^8}\leq \eta_0^{-1} \gamma_n^{-1} \E\sbr{\Gamma_iw_{a,g,i}^8}=\eta_0^{-1}\E\sbr{w_{a,g,i}^8 \mid \Gamma_i=1} \leq \eta_0^{-1}\sigma_w^8,
\end{align*}
and
\begin{align*}
    \E\sbr{w_{a,g,i}^2 \mid \Gamma_{a,i}=1}&= \gamma_{a,g}^{-1}\E\sbr{\Gamma_i\mathbbm{1}_{\br{A_i = a}}w_{a,g,i}^2}\geq \p{1-\eta_0}^{-1} \gamma_n^{-1} \E\sbr{\Gamma_iw_{a,g,i}^2}\\
    &=\p{1-\eta_0}^{-1}\E\sbr{w_{a,g,i}^2 \mid \Gamma_i=1} \geq \eta^{-1}\delta_w.
\end{align*}
\end{proof}

\begin{lemma}\label{lemma causal generalizability tau tilde normal}
Let Assumption~\ref{assumption causal} hold, and for each \(a \in \{0,1\}\), Assumption~\ref{assumption mar nuisance (a)} hold with \((\alpha^*_{PS}, \beta^*_{OR}, w_{OR})\) replaced by \((\alpha^*_{a,g}, \beta^*_{a,g}, w_{a,g})\). Let either $\P\p{\Gamma_{a,i} = 1 \mid X_i} = g\p{X_i^\top \alpha^*_{a,g}}$ or $\E\sbr{Y_i(a) \mid X_i} = X_i^\top \beta^*_{a,g}$ holds for each $a\in\{0,1\}$. If $n\gamma_n \gg 1$, then as $n, d \rightarrow \infty$, $\Sigma_{g}^2 \asymp \gamma_n^{-1}$ and 
$
\Sigma_{g}^{-1}n^{1/2}\p{\widetilde \tau_{g} - \tau_{g}} \rightarrow \mathcal{N}\p{0,1},
$
    where $\widetilde \tau_{g} = \widetilde \tau_{1,g} - \widetilde \tau_{0,g}$ and $\widetilde \tau_{a,g} = n^{-1}\sum_{i=1}^n X_i^\top \beta^*_{a,g} +  {\Gamma_{a,i}}/{g\p{X_i^\top \alpha^*_{a,g}}}\p{Y_i - X_i^\top \beta^*_{a,g}}$.
\end{lemma}

\begin{proof}
    Let $ Q_{g,i} = Q_{1,g,i} - Q_{0,g,i}$,
where
\begin{align*}
    Q_{a,g,i} = X_i^\top \beta^*_{a,g} +  \frac{\Gamma_{a,i}}{g\p{X_i^\top \alpha^*_{a,g}}}\p{Y_i - X_i^\top \beta^*_{a,g}}.
\end{align*}
Note that $\Gamma_{a,i} = \Gamma_i\mathbbm{1}_{\br{A_i=a}}$, then under Assumption~\ref{assumption causal}, we have $\Gamma_{a,i} \perp Y_i(a) \mid X_i$ and $\Gamma_{a,i}Y_i = \Gamma_{a,i}Y_i\p{a}$.
    It follows that
    \begin{align*}
       \E\sbr{Q_{a,g,i}} -  \E\sbr{Y_i\p{a}} &= \E\sbr{X_i^\top \beta^*_{a,g} +  \frac{\Gamma_{a,i}}{g\p{X_i^\top \alpha^*_{a,g}}}\p{Y_i\p{a} - X_i^\top \beta^*_{a,g}}} - \E\sbr{Y_i\p{a}}\\
       &=\E\sbr{\p{\frac{\Gamma_{a,i}}{g\p{X_i^\top \alpha^*_{a,g}}} - 1}  \p{Y_i\p{a} - X_i^\top \beta^*_{a,g}}}\\
       &=\E\sbr{\frac{\E\sbr{\Gamma_{a,i} - g\p{X_i^\top \alpha^*_{a,g}} \mid X_i}}{g\p{X_i^\top \alpha^*_{a,g}}}  \E\sbr{Y_i\p{a} - X_i^\top \beta^*_{a,g} \mid X_i}}.
    \end{align*}
When either $\P\sbr{\Gamma_{a,i} = 1 \mid X_i} = g\p{X_i^\top \alpha^*_{a,g}}$ or $\E\sbr{Y_i(a) \mid X_i} = X_i^\top \beta^*_{a,g}$ holds for each $a\in\{0,1\}$, we have
$
    \E\sbr{Q_{a,g,i}} -  \E\sbr{Y_i\p{a}} = 0.
$
Thus,
$
       \E\sbr{\widetilde \tau_{g}} =\E\sbr{Q_{g,i}} = \tau_g.
$
    By Lyapunov's central limit theorem, it suffices to prove for some $\delta>0$,
    \begin{align}
        \lim_{n\rightarrow \infty}n^{-\delta/2}\Sigma_g^{-\p{2+\delta}} \E\sbr{\abs{Q_{g,i} - \theta}^{2+\delta}} = 0.
    \end{align}

    Note that 
    \begin{align*}
        \Sigma_g^2 &= \E\sbr{\p{X_i^\top \beta^*_{1,g}- X_i^\top \beta^*_{0,g} - \tau_g}^2} + \E\sbr{\frac{\Gamma_iA_i}{\p{g\p{X_i^\top \alpha^*_{1,g}}}^2}\p{Y_i(1) - X_i^\top \beta^*_{1,g}}^2}\\
        &\qquad + \E\sbr{\frac{\Gamma_i\p{1-A_i}}{\p{g\p{X_i^\top \alpha^*_{0,g}}}^2}\p{Y_i(0) - X_i^\top \beta^*_{0,g}}^2}\\
        &\qquad + 2\E\sbr{\p{X_i^\top \beta^*_{1,g}- X_i^\top \beta^*_{0,g} - \tau_g}\frac{\Gamma_iA_i}{g\p{X_i^\top \alpha^*_{1,g}}}\p{Y_i(1) - X_i^\top \beta^*_{1,g}}}\\
        &\qquad - 2\E\sbr{\p{X_i^\top \beta^*_{1,g}- X_i^\top \beta^*_{0,g} - \tau_g}\frac{\Gamma_i\p{1-A_i}}{g\p{X_i^\top \alpha^*_{0,g}}}\p{Y_i(0) - X_i^\top \beta^*_{0,g}}}.
    \end{align*}
    By Minkowski's inequality,
    \begin{align*}
        \norm{X_i^\top \beta^*_{1,g}- X_i^\top \beta^*_{0,g} - \tau_g}_{\P,2} &\leq \norm{X_i^\top \beta^*_{1,g}}_{\P,2} + \norm{X_i^\top \beta^*_{0,g}}_{\P,2} + \abs{\tau_g}\leq C_1 \sigma +  \abs{\tau_g},
    \end{align*}
    which implies
$
        \E\sbr{\p{X_i^\top \beta^*_{1,g}- X_i^\top \beta^*_{0,g} - \tau_g}^2} \leq \p{C_1 \sigma +  \abs{\tau_g}}^2.
$
    By Lemma~\ref{lemma causal generalizability},
    \begin{align*}
        &\E\sbr{\frac{\Gamma_iA_i}{\p{g\p{X_i^\top \alpha^*_{1,g}}}^2}\p{Y_i(1) - X_i^\top \beta^*_{1,g}}^2} \\
        &\qquad= \E\sbr{\E\sbr{A_i \mid \Gamma_i, X_i}\E\sbr{\frac{\Gamma_i}{\p{g\p{X_i^\top \alpha^*_{1,g}}}^2}\p{Y_i(1) - X_i^\top \beta^*_{1,g}}^2 \mid \Gamma_i, X_i}}\\
        &\qquad\asymp \E\sbr{\frac{\Gamma_i}{\p{g\p{X_i^\top \alpha^*_{1,g}}}^2}\p{Y_i(1) - X_i^\top \beta^*_{1,g}}^2}= \gamma_n\E\sbr{\frac{1}{\p{g\p{X_i^\top \alpha^*_{1,g}}}^2}\p{Y_i(1) - X_i^\top \beta^*_{1,g}}^2 \mid \Gamma_i=1}\\
        &\qquad\asymp \gamma_n^{-1}\E\sbr{\p{Y_i(1) - X_i^\top \beta^*_{1,g}}^2 \mid \Gamma_i=1}\asymp \gamma_n^{-1}.
    \end{align*}
    Similarly, we have
$
        \E\sbr{{\Gamma_i\p{1-A_i}}/{\p{g\p{X_i^\top \alpha^*_{0,g}}}^2}\p{Y_i(0) - X_i^\top \beta^*_{0,g}}^2} \asymp \gamma_n^{-1}.
$
Moreover, by H{\"o}lder inequality,
\begin{align*}
    &\abs{\E\sbr{\p{X_i^\top \beta^*_{1,g}- X_i^\top \beta^*_{0,g} - \tau_g}\frac{\Gamma_iA_i}{g\p{X_i^\top \alpha^*_{1,g}}}\p{Y_i(1) - X_i^\top \beta^*_{1,g}}}}\\
    &\qquad=\abs{\gamma_n\E\sbr{\frac{A_i}{g\p{X_i^\top \alpha^*_{1,g}}}\p{X_i^\top \beta^*_{1,g}- X_i^\top \beta^*_{0,g} - \tau_g}\p{Y_i(1) - X_i^\top \beta^*_{1,g}}\mid \Gamma_i=1} }\\
    &\qquad\leq \abs{k_0^{-1}\E\sbr{\p{X_i^\top \beta^*_{1,g}- X_i^\top \beta^*_{0,g} - \tau_g}\p{Y_i(1) - X_i^\top \beta^*_{1,g}}\mid \Gamma_i=1} }\\
    &\qquad\leq k_0^{-1}\E\sbr{\p{X_i^\top \beta^*_{1,g}- X_i^\top \beta^*_{0,g} - \tau_g}^2\mid \Gamma_i=1}^{1/2}\E\sbr{\p{Y_i(1) - X_i^\top \beta^*_{1,g}}^2\mid \Gamma_i=1}^{1/2}\\
    &\qquad\leq k_0^{-1}\p{C_1\sigma + \abs{\tau_g}}\sigma_w.
\end{align*}
Similarly, we have
\begin{align*}
    \abs{\E\sbr{\p{X_i^\top \beta^*_{1,g}- X_i^\top \beta^*_{0,g} - \tau_g}\frac{\Gamma_i\p{1-A_i}}{g\p{X_i^\top \alpha^*_{0,g}}}\p{Y_i(0) - X_i^\top \beta^*_{0,g}}}} \leq k_0^{-1}\p{C_1\sigma + \abs{\tau_g}}\sigma_w.
\end{align*}
Thus,
$
    \Sigma_g^2 \asymp \gamma_n^{-1}.
$

On the other hand, choose $\delta = 2$. Then
\begin{align*}
    \norm{Q_{g,i} - \tau_g}_{\P,4} &\leq \norm{ X_i^\top \beta^*_{1,g}- X_i^\top \beta^*_{0,g} - \tau_g}_{\P,4} + \norm{\frac{\Gamma_iA_i}{g\p{X_i^\top \alpha^*_{1,g}}}\p{Y_i(1) - X_i^\top \beta^*_{1,g}}}_{\P,4} \\
    &\qquad + \norm{\frac{\Gamma_{i}\p{1-A_i}}{g\p{X_i^\top \alpha^*_{0,g}}}\p{Y_i(0) - X_i^\top \beta^*_{0,g}}}_{\P,4}\\
    &\leq \norm{ X_i^\top \beta^*_{1,g}- X_i^\top \beta^*_{0,g} - \tau_g}_{\P,4} + k_0^{-1}\gamma_n^{-1}\norm{\Gamma_i\p{Y_i(1) - X_i^\top \beta^*_{1,g}}}_{\P,4}\\
    &\qquad + k_0^{-1}\gamma_n^{-1}\norm{\Gamma_i\p{Y_i(0) - X_i^\top \beta^*_{0,g}}}_{\P,4}.
\end{align*}
In addition, we have
\begin{align*}
    \norm{ X_i^\top \beta^*_{1,g}- X_i^\top \beta^*_{0,g} - \tau_g}_{\P,4} &\leq \norm{ X_i^\top \beta^*_{1,g}}_{\P,4} + \norm{X_i^\top \beta^*_{0,g}}_{\P,4} + \abs{\tau_g}\leq C_2\sigma + \abs{\tau_g},
\end{align*}
and
\begin{align*}
    \norm{\Gamma_i\p{Y_i(1) - X_i^\top \beta^*_{1,g}}}_{\P,4} &= \E\sbr{\Gamma_i\p{Y_i(1) - X_i^\top \beta^*_{1,g}}^4}^{1/4}\\
    &=\gamma_n^{1/4}\E\sbr{\p{Y_i(1) - X_i^\top \beta^*_{1,g}}^4 \mid \Gamma_i=1}^{1/4}\leq \sigma_w\gamma_n^{1/4}.
\end{align*}
Similarly, we have
$
    \norm{\Gamma_i\p{Y_i(0) - X_i^\top \beta^*_{0,g}}}_{\P,4} \leq \sigma_w\gamma_n^{1/4}.
$
Thus,
\begin{align*}
    \E\sbr{\abs{Q_{g,i} - \tau_g}^4} = \norm{Q_{g,i} - \tau_g}_{\P,4}^4 &\leq \p{C_2\sigma + \abs{\tau_g} + 2k_0^{-1}\sigma_w\gamma_n^{-3/4}}^4\\
    &\leq \gamma_n^{-3}\p{C_2\sigma + \abs{\tau_g} + 2k_0^{-1}\sigma_w}^4=: C_3\gamma_n^{-3}.
\end{align*}
It follows that for $\delta=2$, if $n\gamma_n \gg 1$, as $n, d \rightarrow \infty$,
\begin{align*}
    \lim_{n\rightarrow \infty}n^{-1}\Sigma_g^{-4} \E\sbr{\abs{Q_{g,i} - \theta}^{4}} \leq \lim_{n\rightarrow \infty}C\p{n\gamma_n}^{-1} = 0.
\end{align*}
\end{proof}
\begin{lemma}\label{lemma causal generalizability nuisance consistency}
    Let Assumption~\ref{assumption causal} hold, and for each \(a \in \{0,1\}\), Assumption~\ref{assumption mar nuisance (a)} holds with \((\alpha^*_{PS}, \beta^*_{OR}, w_{OR})\) replaced by \((\alpha^*_{a,g}, \beta^*_{a,g}, w_{a,g})\). Then as $n, d \rightarrow \infty$, it holds that

(a) Choose $\lambda_\alpha \asymp (\log d/(n\gamma_n))^{1/2}$. If $n\gamma_n \gg \max\br{s_{\alpha_{a,g}}, \log n} \log d$, then
    $$\norm{\widehat \alpha_{a,g}^{(-k)} - \alpha^*_{a,g}}_1 = O_p\p{s_{\alpha_{a,g}}(\log d/(n\gamma_n))^{1/2}} \text{ and } \norm{\widehat \alpha_{a,g}^{(-k)} - \alpha^*_{a,g}}_2 = O_p\p{(\frac{s_{\alpha_{a,g}} \log d}{n\gamma_n})^{1/2}}.$$

(b) Choose $\lambda_\alpha \asymp (\log d/(n\gamma_n))^{1/2}$. If $n\gamma_n \gg \max\br{s_{\beta_{a,g}}, (\log n)^2} \log d$, then
$$\norm{\widetilde \beta_{a,g}^{(-k)} - \beta^*_{a,g}}_1 = O_p\p{s_{\beta_{a,g}}(\log d/(n\gamma_n))^{1/2}} \text{ and } \norm{\widetilde \beta_{a,g}^{(-k)} - \beta^*_{a,g}}_2 = O_p\p{(\frac{s_{\beta_{a,g}} \log d}{n\gamma_n})^{1/2}}.$$
 
    (c) Choose $\lambda_\alpha \asymp \lambda_\beta \asymp (\log d/(n\gamma_n))^{1/2}$. If
\[
n\gamma_n \gg \max\br{s_{\alpha_{a,g}}, s_{\beta_{a,g}}\log n , (\log n)^2} \log d
\]
and $s_{\alpha_{a,g}} s_{\beta_{a,g}} \ll (n\gamma_n)^{3/2}/(\log n(\log d)^2)$, then
    \begin{align*}
     \norm{\widehat \beta_{a,g}^{(-k)} - \widetilde \beta_{a,g}^{(-k)}}_1 = O_p\p{(\frac{s_{\alpha_{a,g}} s_{\beta_{a,g}} \log d}{n\gamma_n})^{1/2}}\text{ and } \norm{\widehat \beta_{a,g}^{(-k)} - \widetilde \beta_{a,g}^{(-k)}}_2 = O_p\p{(\frac{s_{\alpha_{a,g}} \log d}{n\gamma_n})^{1/2}}.
    \end{align*}
\end{lemma}
\begin{proof}
     Lemma~\ref{lemma causal generalizability nuisance consistency} follows by repeating the arguments used in the proofs of Lemmas~\ref{lemma consistency for alpha and betatilde} and \ref{lemma consistency of betahat and betatilde for product sparsity}.
\end{proof}

\begin{lemma}\label{lemma causal generalizability hat tau - tilde tau consistency}
    Let Assumption~\ref{assumption causal} hold, and suppose that, for each \(a \in \{0,1\}\), Assumption~\ref{assumption mar nuisance (a)} holds with \((\alpha^*_{PS}, \beta^*_{OR}, w_{OR})\) replaced by \((\alpha^*_{a,g}, \beta^*_{a,g}, w_{a,g})\). Choose $\lambda_\alpha \asymp \lambda_\beta \asymp (\log d/(n\gamma_n))^{1/2}$. If
    \[
        n\gamma_n \gg \sum_{a=0,1}\max\br{s_{\alpha_{a,g}}, s_{\beta_{a,g}}\log n , (\log n)^2} \log d,
        \qquad
        \sum_{a=0,1}s_{\alpha_{a,g}} s_{\beta_{a,g}} \ll (n\gamma_n)^{3/2}/\{\log n(\log d)^2\},
    \]
    then as \(n, d\rightarrow \infty\),
    \begin{align*}
        \widehat \tau_g - \widetilde \tau_g &= O_p\p{\sum_{a=0,1}\frac{\p{s_{\alpha_{a,g}} + (s_{\alpha_{a,g}} s_{\beta_{a,g}})^{1/2}} \log d}{n\gamma_n}}.
    \end{align*}
In addition, assume that $\E\sbr{Y_i(a) \mid X_i} = X_i^\top \beta^*_{a,g}$ for all $a \in \{0,1\}$. Then as $n, d\rightarrow \infty$, 
\begin{align*}
\widehat \tau_g - \widetilde \tau_g&=O_p\p{\sum_{a=0,1}\br{\frac{(s_{\alpha_{a,g}} \log d)^{1/2}}{n\gamma_n} +\frac{(s_{\alpha_{a,g}} s_{\beta_{a,g}})^{1/2} \log d}{n\gamma_n}}}.
\end{align*}
\end{lemma}
\begin{proof}
    Lemma~\ref{lemma causal generalizability hat tau - tilde tau consistency} follows by repeating the proof of Lemma~\ref{lemma mar theta hat - theta tilde consistency}.
\end{proof}

\begin{lemma}\label{lemma causal generalizability hat sigma = sigma(1+o(1))}
    Let Assumption~\ref{assumption causal} hold, and for each \(a \in \{0,1\}\), Assumption~\ref{assumption mar nuisance (a)} hold with \((\alpha^*_{PS}, \beta^*_{OR}, w_{OR})\) replaced by \((\alpha^*_{a,g}, \beta^*_{a,g}, w_{a,g})\). Let either $\P\p{\Gamma_{a,i} = 1 \mid X_i} = g\p{X_i^\top \alpha^*_{a,g}}$ or $\E\sbr{Y_i(a) \mid X_i} = X_i^\top \beta^*_{a,g}$ hold for each $a\in\{0,1\}$. Choose $\lambda_\alpha \asymp \lambda_\beta \asymp (\log d/(n\gamma_n))^{1/2}$. If
\[
n\gamma_n \gg \sum_{a=0,1}\max\br{s_{\alpha_{a,g}}, s_{\beta_{a,g}}\log n , (\log n)^2} \log d
\]
and $\sum_{a=0,1}s_{\alpha_{a,g}} s_{\beta_{a,g}} \ll (n\gamma_n)^{3/2}/(\log n(\log d)^2)$, then as $n, d\rightarrow \infty$,
    \begin{align*}
    \widehat \Sigma_g^2 = \Sigma_g^2\br{1 + O_p\p{\sum_{a=0,1}(\frac{ \p{s_{\alpha_{a,g}}+s_{\beta_{a,g}}} \log d}{n\gamma_n})^{1/2}}}.
\end{align*}
\end{lemma}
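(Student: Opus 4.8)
The plan is to follow the proof of Lemma \ref{lemma mar sigma hat = sigma (1+o(1))} essentially verbatim, exploiting that, arm by arm, $\widehat\tau_{a,g}$ and its variance contributions are exactly the generalizability constructions of Section \ref{sec: mar} with the labeling indicator $\Gamma_i$ replaced by the effective labeling indicator $\Gamma_{a,i}=\Gamma_i\mathbbm{1}_{\{A_i=a\}}$ and with $\bar X_0$ replaced by $\bar X_{a,0}$. By Lemma \ref{lemma causal generalizability}, Assumption \ref{assumption mar nuisance (a)} transfers to the triple $(\alpha^*_{a,g},\beta^*_{a,g},w_{a,g})$ with $\gamma_{a,g}\asymp\gamma_n$, so every concentration and consistency bound invoked in Section \ref{sec: mar} remains available, now through Lemma \ref{lemma causal generalizability nuisance consistency} and Lemma \ref{lemma casual generalizability hat tau - tilde tau consistency}.

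First I would introduce the oracle variance estimator $\widetilde\Sigma_g^2$ obtained from $\widehat\Sigma_g^2$ by replacing each $\widehat\alpha^{(-k)}_{a,g},\widehat\beta^{(-k)}_{a,g}$ with the population $\alpha^*_{a,g},\beta^*_{a,g}$ and $\widehat\tau_g$ with $\tau_g$, and decompose $\widehat\Sigma_g^2-\widetilde\Sigma_g^2$ into three groups of terms: the difference of the squared cross-fitted AIPW scores (the sums over $\Gamma_iA_i(\cdot)^2$ and $\Gamma_i(1-A_i)(\cdot)^2$); the difference of the external quadratic form $\p{\widehat\beta^{(-k)}_{1,g}-\widehat\beta^{(-k)}_{0,g}}^\top\bar\Xi_0\p{\widehat\beta^{(-k)}_{1,g}-\widehat\beta^{(-k)}_{0,g}}$ relative to its oracle counterpart with $\beta^*_{1,g}-\beta^*_{0,g}$; and $\widehat\tau_g^2-\tau_g^2$. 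For the squared-score group I would use $a^2-b^2=(a-b)^2+2b(a-b)$ together with Cauchy--Schwarz, reducing it to the product-sparsity control of $n_k^{-1}\sum_{i\in\mc{I}_k}\p{\text{score}_{a,i}(\widehat\alpha^{(-k)}_{a,g},\widehat\beta^{(-k)}_{a,g})-\text{score}_{a,i}(\alpha^*_{a,g},\beta^*_{a,g})}^2$, whose bound $O_p\p{\gamma_n^{-1}\br{\norm{\widehat\alpha^{(-k)}_{a,g}-\alpha^*_{a,g}}_2^2+\norm{\widehat\beta^{(-k)}_{a,g}-\beta^*_{a,g}}_2^2}}$ is the arm-$a$ analogue of Lemma \ref{lemma psi diff square} (same $\exp$-bound plus H\"older plus Lemma \ref{lemma Exp(Xn2) ineq} argument, using the eighth-moment control from Lemma \ref{lemma causal generalizability}(d)), together with $n_k^{-1}\sum_{i\in\mc{I}_k}\text{score}_{a,i}(\alpha^*_{a,g},\beta^*_{a,g})^2=O_p(\gamma_n^{-1})$; by Lemma \ref{lemma causal generalizability nuisance consistency} this gives $O_p\p{\gamma_n^{-1}\p{s_{\alpha_{a,g}}+s_{\beta_{a,g}}}\log d/(n\gamma_n)}$. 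For the external quadratic form I would split $\widehat\beta^{(-k)}_{a,g}-\beta^*_{a,g}=\p{\widehat\beta^{(-k)}_{a,g}-\widetilde\beta^{(-k)}_{a,g}}+\p{\widetilde\beta^{(-k)}_{a,g}-\beta^*_{a,g}}$: the oracle increment $\widetilde\beta^{(-k)}_{a,g}-\beta^*_{a,g}$ is conditionally independent of the external covariates given the membership indicators (Lemma \ref{lemma conditional independence}), so a conditional-expectation bound plus Lemma \ref{lemma convergence of conditional random varaible} controls its contribution through the $\ell_2$-rate, while $\widehat\beta^{(-k)}_{a,g}-\widetilde\beta^{(-k)}_{a,g}$ enters linearly and is handled with Lemma \ref{lemma mar i.i.d. quadratic form} and the $\ell_1$-rate of Lemma \ref{lemma causal generalizability nuisance consistency}(c). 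The $\widehat\tau_g^2-\tau_g^2$ term is $O_p(\widehat\tau_g-\tau_g)=O_p\p{\sum_a\p{s_{\alpha_{a,g}}+\sqrt{s_{\alpha_{a,g}}s_{\beta_{a,g}}}}\log d/(n\gamma_n)+(n\gamma_n)^{-1/2}}$ by Lemma \ref{lemma causal generailizability tau tilde normal} and Lemma \ref{lemma casual generalizability hat tau - tilde tau consistency}. Collecting, $\widehat\Sigma_g^2-\widetilde\Sigma_g^2=O_p\p{\gamma_n^{-1}\sum_{a=0,1}\sqrt{\p{s_{\alpha_{a,g}}+s_{\beta_{a,g}}}\log d/(n\gamma_n)}}$.

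It remains to handle $\widetilde\Sigma_g^2-\Sigma_g^2$: $\widetilde\Sigma_g^2$ is an i.i.d.\ average of the squares of $X_i^\top\beta^*_{1,g}-X_i^\top\beta^*_{0,g}+\frac{\Gamma_iA_i}{g(X_i^\top\alpha^*_{1,g})}(Y_i-X_i^\top\beta^*_{1,g})-\frac{\Gamma_i(1-A_i)}{g(X_i^\top\alpha^*_{0,g})}(Y_i-X_i^\top\beta^*_{0,g})$ minus the constant $\tau_g^2$, with expectation $\Sigma_g^2$; a fourth-moment bound using the overlap condition and the sub-Gaussian and bounded-moment conclusions of Lemma \ref{lemma causal generalizability}(c),(d) gives $\E\sbr{\p{\widetilde\Sigma_g^2-\Sigma_g^2}^2}=O\p{n^{-1}\gamma_n^{-3}}$, hence $\widetilde\Sigma_g^2-\Sigma_g^2=O_p\p{\gamma_n^{-1}(n\gamma_n)^{-1/2}}$ by Chebyshev. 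Combining the two displays with $\Sigma_g^2\asymp\gamma_n^{-1}$ from Lemma \ref{lemma causal generailizability tau tilde normal} yields $\widehat\Sigma_g^2=\Sigma_g^2\br{1+O_p\p{\sum_{a=0,1}\sqrt{\p{s_{\alpha_{a,g}}+s_{\beta_{a,g}}}\log d/(n\gamma_n)}}}$. I expect the main obstacle to be the external quadratic-form term: because only $\bar\Xi_0$ rather than the individual external covariates is observed, its estimation error cannot be absorbed by an empirical second-moment concentration and must instead be routed entirely through the $\ell_1$ and $\ell_2$ rates of $\widehat\beta^{(-k)}_{a,g}-\widetilde\beta^{(-k)}_{a,g}$ and $\widetilde\beta^{(-k)}_{a,g}-\beta^*_{a,g}$ — this is exactly where the product-sparsity assumption is consumed, and where the $\widetilde\beta$-versus-$\widehat\beta$ decomposition, together with the conditional independence of $\widetilde\beta^{(-k)}_{a,g}$ from the external summaries, is indispensable.
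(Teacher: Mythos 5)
Your proposal follows the same route as the paper's proof: the same oracle $\widetilde\Sigma_g^2$, the same three-group decomposition (squared AIPW scores via $a^2-b^2$ plus Cauchy--Schwarz and the arm-$a$ analogue of Lemma \ref{lemma psi diff square}; the external quadratic form via the $\widehat\beta-\widetilde\beta$ versus $\widetilde\beta-\beta^*$ split with Lemma \ref{lemma conditional independence} and Lemma \ref{lemma mar i.i.d. quadratic form}; and $\widehat\tau_g^2-\tau_g^2$ via Lemma \ref{lemma casual generalizability hat tau - tilde tau consistency}), the same fourth-moment/Chebyshev control of $\widetilde\Sigma_g^2-\Sigma_g^2$, and the same final combination using $\Sigma_g^2\asymp\gamma_n^{-1}$. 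This matches the paper's argument and the stated rates, so no further comment is needed.
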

\begin{proof}
    Let $\widehat \Delta_{\alpha_{a,g}}^{(-k)} = \widehat \alpha^{(-k)}_{a,g} - \alpha^*_{a,g}$, $\widehat \Delta_{\beta_{a,g}}^{(-k)} = \widehat \beta^{(-k)}_{a,g} - \beta^*_{a,g}$, $\Delta_{a,g}^{(-k)} = \widehat \beta^{(-k)}_{a,g} - \beta^*_{a,g}$, $\widetilde \Delta_{\beta_{a,g}}^{(-k)} = \widetilde \beta^{(-k)}_{a,g} - \beta^*_{a,g}$, $\widehat \gamma_k = n_k^{-1}\sum_{i\in \mathcal{I}_k} \Gamma_i$, $\widehat \gamma = n^{-1}\sum_{i=1}^n\Gamma_i$, and 
    \begin{align*}
        \widetilde \Sigma_g^2 &= n^{-1}\sum_{i=1}^n \p{1-\Gamma_i}\p{X_i^\top \beta^*_{1,g} - X_i^\top \beta^*_{0,g}}^2\\
        &\qquad + n^{-1}\sum_{i=1}^n \Gamma_iA_i\p{X_i^\top \beta^*_{1,g} - X_i^\top \beta^*_{0,g} + \frac{Y_i - X_i^\top \beta^*_{1,g}}{g\p{X_i^\top \alpha^*_{1,g}}}}^2\\
        &\qquad + n^{-1}\sum_{i=1}^n \Gamma_i\p{1 - A_i}\p{X_i^\top \beta^*_{1,g} - X_i^\top \beta^*_{0,g} - \frac{Y_i - X_i^\top \beta^*_{0,g}}{g\p{X_i^\top \alpha^*_{0,g}}}}^2 - \tau_g^2.
    \end{align*}
Then 
\begin{align*}
    \widehat \Sigma_g^2 - \widetilde \Sigma_g^2 = K^{-1}\sum_{k=1}^K\p{\ell_{k1a} - \ell_{k1b} + \ell_{k2a} - \ell_{k2b} +\ell_{k3a}- \ell_{k3b} -\ell_{k4}},
\end{align*}
where
\begin{align*}
\ell_{k1a} &= \p{1-\widehat \gamma_k}\p{\widehat \beta^{(-k), \top}_{1,g}\bar 
    \Xi_0 \widehat \beta^{(-k)}_{1,g} + \widehat \beta^{(-k), \top}_{0,g}\bar 
    \Xi_0 \widehat \beta^{(-k)}_{0,g} - 2\widehat \beta^{(-k), \top}_{1,g}\bar 
    \Xi_0 \widehat \beta^{(-k)}_{0,g}},\\
    \ell_{k1b} &= \p{1-\widehat \gamma_k}\p{\beta^{*, \top}_{1,g}\bar 
    \Xi_0 \beta^{*}_{1,g} +  \beta^{*, \top}_{0,g}\bar 
    \Xi_0 \beta^{*}_{0,g} - 2 \beta^{*, \top}_{1,g}\bar 
    \Xi_0 \beta^{*}_{0,g}},\\
    \ell_{k2a} &= n_k^{-1}\sum_{i \in \mathcal{I}_k}\Gamma_iA_i\p{X_i^\top \widehat \beta^{(-k)}_{1,g} - X_i^\top \widehat \beta^{(-k)}_{0,g} + \frac{Y_i - X_i^\top \widehat \beta^{(-k)}_{1,g} }{g\p{X_i^\top \widehat \alpha^{(-k)}_{1,g}}}}^2,\\
    \ell_{k2b}&=n_k^{-1}\sum_{i \in \mathcal{I}_k}\Gamma_iA_i\p{X_i^\top \beta^{*}_{1,g} - X_i^\top \beta^{*}_{0,g} + \frac{Y_i - X_i^\top \beta^{*}_{1,g} }{g\p{X_i^\top \alpha^{*}_{1,g}}}}^2,\\
    \ell_{k3a}&=n_k^{-1}\sum_{i \in \mathcal{I}_k}\Gamma_i\p{1-A_i}\p{X_i^\top \widehat \beta^{(-k)}_{1,g} - X_i^\top \widehat \beta^{(-k)}_{0,g} - \frac{Y_i - X_i^\top \widehat \beta^{(-k)}_{0,g} }{g\p{X_i^\top \widehat \alpha^{(-k)}_{0,g}}}}^2,\\
    \ell_{k3b}&=n_k^{-1}\sum_{i \in \mathcal{I}_k}\Gamma_i\p{1-A_i}\p{X_i^\top \beta^{*}_{1,g} - X_i^\top \beta^{*}_{0,g} - \frac{Y_i - X_i^\top \beta^{*}_{0,g} }{g\p{X_i^\top\alpha^{*}_{0,g}}}}^2,\quad
    \ell_{k4} = \widehat \tau_{g}^2 - \tau_{g}^2.
\end{align*}

For $\ell_{k1a} - \ell_{k1b}$, we have
\begin{align*}
    \ell_{k1a}- \ell_{k1b} &= \p{1-\widehat \gamma_k}\p{\widehat \beta^{(-k), \top}_{1,g}\bar 
    \Xi_0 \widehat \beta^{(-k)}_{1,g} - \beta^{*, \top}_{1,g}\bar 
    \Xi_0 \beta^{*}_{1,g} }+\p{1-\widehat \gamma_k}\p{ \widehat \beta^{(-k), \top}_{0,g}\bar 
    \Xi_0 \widehat \beta^{(-k)}_{0,g} -  \beta^{*, \top}_{0,g}\bar 
    \Xi_0 \beta^{*}_{0,g}}\\
    &\qquad-2\p{1-\widehat \gamma_k}\p{ \widehat \beta^{(-k), \top}_{1,g}\bar 
    \Xi_0 \widehat \beta^{(-k)}_{0,g} - \beta^{*, \top}_{1,g}\bar 
    \Xi_0 \beta^{*}_{0,g}}.
\end{align*}
Recall $s_{k2}$ in Lemma~\ref{lemma mar sigma hat = sigma (1+o(1))}. Similarly, we have
\begin{align*}
    &\p{1-\widehat \gamma_k}\p{\widehat \beta^{(-k), \top}_{1,g}\bar 
    \Xi_0 \widehat \beta^{(-k)}_{1,g} - \beta^{*, \top}_{1,g}\bar 
    \Xi_0 \beta^{*}_{1,g} } = O_p\p{\norm{\Delta_{1,g}^{(-k)}}_1(\frac{\log d}{n})^{1/2}+ \norm{\Delta_{1,g}^{(-k)}}_2 + \norm{\widehat \Delta_{\beta_{1,g}}^{(-k)}}_2},
\end{align*}
and
\begin{align*}
    &\p{1-\widehat \gamma_k}\p{ \widehat \beta^{(-k), \top}_{0,g}\bar 
    \Xi_0 \widehat \beta^{(-k)}_{0,g} -  \beta^{*, \top}_{0,g}\bar 
    \Xi_0 \beta^{*}_{0,g}}= O_p\p{\norm{\Delta_{0,g}^{(-k)}}_1(\frac{\log d}{n})^{1/2}+ \norm{\Delta_{0,g}^{(-k)}}_2 + \norm{\widehat \Delta_{\beta_{0,g}}^{(-k)}}_2}.
\end{align*}
Note that
\begin{align*}
    \p{1-\widehat \gamma_k}\p{\widehat \beta^{(-k), \top}_{1,g}\bar 
    \Xi_0 \widehat \beta^{(-k)}_{0,g} - \beta^{*, \top}_{1,g}\bar 
    \Xi_0 \beta^{*}_{0,g}} &= \frac{1-\widehat \gamma_k}{1-\widehat \gamma} \p{\ell_{k11} + \ell_{k12} + \ell_{k13}} ,
\end{align*}
where
\begin{align*}
    \ell_{k11} &= n^{-1}\sum_{i=1}^n \p{1-\Gamma_i}\p{X_i^\top \widehat \Delta_{\beta_{1,g}}^{(-k)}}^2,\quad
    \ell_{k12} = n^{-1}\sum_{i=1}^n \p{1-\Gamma_i}\p{X_i^\top \widehat \Delta_{\beta_{0,g}}^{(-k)}}\p{X_i^\top \beta^*_{1,g}},\\
    \ell_{k13} &= n^{-1}\sum_{i=1}^n \p{1-\Gamma_i}\p{X_i^\top \widehat \Delta_{\beta_{1,g}}^{(-k)}}\p{X_i^\top \beta^*_{0,g}}.
\end{align*}
Recall $s_{k21}$ and $s_{k22}$ of Lemma~\ref{lemma mar sigma hat = sigma (1+o(1))}. Similarly, we have
\begin{align*}
    \ell_{k11} &= O_p\p{\norm{\Delta_{1,g}^{(-k)}}_1^2\frac{\log d}{n} + \norm{\Delta_{1,g}^{(-k)}}_2^2 + \norm{\widehat \Delta_{\beta_{1,g}}^{(-k)}}_2^2},\\
    \ell_{k12} &= O_p\p{\norm{\Delta_{0,g}^{(-k)}}_1(\frac{\log d}{n})^{1/2}+ \norm{\Delta_{0,g}^{(-k)}}_2+ \norm{\widehat \Delta_{\beta_{0,g}}^{(-k)}}_2},\\
    \ell_{k13}&= O_p\p{\norm{\Delta_{1,g}^{(-k)}}_1(\frac{\log d}{n})^{1/2}+ \norm{\Delta_{1,g}^{(-k)}}_2+ \norm{\widehat \Delta_{\beta_{1,g}}^{(-k)}}_2}.
\end{align*}
By Lemma~\ref{lemma concentrate gamma}, $\widehat \gamma_k - \gamma_n = o_p(1)$ and  $\widehat \gamma - \gamma_n = o_p(1)$, then
\begin{align*}
    &\p{1-\widehat \gamma_k}\p{\widehat \beta^{(-k), \top}_{1,g}\bar 
    \Xi_0 \widehat \beta^{(-k)}_{0,g} - \beta^{*, \top}_{1,g}\bar 
    \Xi_0 \beta^{*}_{0,g}}= O_p\p{\sum_{a=0,1}\br{\norm{\Delta_{a,g}^{(-k)}}_1(\frac{\log d}{n})^{1/2}+ \norm{\Delta_{a,g}^{(-k)}}_2+ \norm{\widehat \Delta_{\beta_{a,g}}^{(-k)}}_2}}.
\end{align*}
Thus,
\begin{align*}
    \ell_{k1a}- \ell_{k1b} = O_p\p{\sum_{a=0,1}\br{\norm{\Delta_{a,g}^{(-k)}}_1(\frac{\log d}{n})^{1/2}+ \norm{\Delta_{a,g}^{(-k)}}_2+ \norm{\widehat \Delta_{\beta_{a,g}}^{(-k)}}_2}}.
\end{align*}

For $\ell_{k2a}- \ell_{k2b}$, let 
\begin{align*}
    \varphi_{a,i}\p{\alpha, \beta_1, \beta_0} = \Gamma_{a,i}\p{X_i^\top \beta_1 - X_i^\top \beta_0 + \frac{Y_i - X_i^\top \beta_1 }{g\p{X_i^\top \alpha}}}.
\end{align*}
Since $a^2 - b^2 = \p{a - b}^2 + 2\p{a-b}b$, it follows
\begin{align*}
    \ell_{k2a}- \ell_{k2b}&= n_k^{-1}\sum_{i \in \mathcal{I}_k}\p{\varphi_{1,i}\p{\widehat \alpha^{(-k)}_{1,g}, \widehat \beta^{(-k)}_{1,g}, \widehat \beta^{(-k)}_{0,g}} - \varphi_{1,i}\p{\alpha^{*}_{1,g}, \beta^{*}_{1,g}, \beta^{*}_{0,g}}}^2\\
    &\qquad+2 n_k^{-1}\sum_{i \in \mathcal{I}_k}\p{\varphi_{1,i}\p{\widehat \alpha^{(-k)}_{1,g}, \widehat \beta^{(-k)}_{1,g}, \widehat \beta^{(-k)}_{0,g}} - \varphi_{1,i}\p{\alpha^{*}_{1,g}, \beta^{*}_{1,g}, \beta^{*}_{0,g}}}\p{\varphi_{1,i}\p{\alpha^{*}_{1,g}, \beta^{*}_{1,g}, \beta^{*}_{0,g}}}\\
    &\leq n_k^{-1}\sum_{i \in \mathcal{I}_k}\p{\varphi_{1,i}\p{\widehat \alpha^{(-k)}_{1,g}, \widehat \beta^{(-k)}_{1,g}, \widehat \beta^{(-k)}_{0,g}} - \varphi_{1,i}\p{\alpha^{*}_{1,g}, \beta^{*}_{1,g}, \beta^{*}_{0,g}}}^2\\
    &\qquad+ 2 \p{n_k^{-1}\sum_{i \in \mathcal{I}_k}\p{\varphi_{1,i}\p{\widehat \alpha^{(-k)}_{1,g}, \widehat \beta^{(-k)}_{1,g}, \widehat \beta^{(-k)}_{0,g}} - \varphi_{1,i}\p{\alpha^{*}_{1,g}, \beta^{*}_{1,g}, \beta^{*}_{0,g}}}^2}^{1/2}\p{\ell_{k2b}}^{1/2}.
\end{align*}
Let $w_{1,g,i} = Y_i\p{1} - X_i^\top \beta^*_{1,g}$. Then
\begin{align*}
    &\p{\varphi_{1,i}\p{\widehat \alpha^{(-k)}_{1,g}, \widehat \beta^{(-k)}_{1,g}, \widehat \beta^{(-k)}_{0,g}} - \varphi_{1,i}\p{\alpha^{*}_{1,g}, \beta^{*}_{1,g}, \beta^{*}_{0,g}}}^2\\
    &\quad= \Gamma_iA_i\left\{X_i^\top \widehat \Delta_{\beta_{1,g}}^{(-k)} - X_i^\top \widehat \Delta_{\beta_{0,g}}^{(-k)} - \frac{1}{g\p{X_i^\top \widehat \alpha^{(-k)}_{1,g}}}X_i^\top\widehat \Delta_{\beta_{1,g}}^{(-k)} + \p{\frac{1}{g\p{X_i^\top \widehat \alpha^{(-k)}_{1,g}}} - \frac{1}{g\p{X_i^\top \alpha^{*}_{1,g}}}}w_{1,g,i}\right\}^2\\
    &\quad\leq 4\Gamma_iA_i\p{1- \frac{1}{g\p{X_i^\top \alpha^*_{1,g}}}}^2\p{X_i^\top \widehat \Delta_{\beta_{1,g}}^{(-k)}}^2 + 4\Gamma_iA_i\p{\frac{1}{g\p{X_i^\top \widehat \alpha^{(-k)}_{1,g}}} - \frac{1}{g\p{X_i^\top \alpha^{*}_{1,g}}}}^2\p{X_i^\top \widehat \Delta_{\beta_{1,g}}^{(-k)}}^2 \\
    &\quad\quad+4\Gamma_iA_i\p{X_i^\top \widehat \Delta_{\beta_{0,g}}^{(-k)}}^2 + 4\Gamma_iA_i\p{\frac{1}{g\p{X_i^\top \widehat \alpha^{(-k)}_{1,g}}} - \frac{1}{g\p{X_i^\top \alpha^{*}_{1,g}}}}^2w_{1,g,i}^2.
\end{align*}
Recall $R_2$ and $R_3$ of Lemma~\ref{lemma psi diff square}. Similarly, we have
\begin{align*}
    n_k^{-1}\sum_{i \in \mathcal{I}_k}\Gamma_iA_i\p{\frac{1}{g\p{X_i^\top \widehat \alpha^{(-k)}_{1,g}}} - \frac{1}{g\p{X_i^\top \alpha^{*}_{1,g}}}}^2\p{X_i^\top \widehat \Delta_{\beta_{1,g}}^{(-k)}}^2 = O_p\p{\gamma_n^{-1}\norm{\widehat \Delta_{\alpha_{1,g}}^{(-k)}}_2^2\norm{\widehat \Delta_{\beta_{1,g}}^{(-k)}}_2^2},
\end{align*}
and
\begin{align*}
    n_k^{-1}\sum_{i \in \mathcal{I}_k}\Gamma_iA_i\p{\frac{1}{g\p{X_i^\top \widehat \alpha^{(-k)}_{1,g}}} - \frac{1}{g\p{X_i^\top \alpha^{*}_{1,g}}}}^2w_{1,g,i}^2 = O_p\p{\gamma_n^{-1}\norm{\widehat \Delta_{\alpha_{1,g}}^{(-k)}}_2^2}.
\end{align*}
By Lemma~\ref{sub-gaussian properties}(c), for some $C>0$,
\begin{align*}
    &n_k^{-1}\sum_{i \in \mathcal{I}_k}\E\sbr{\Gamma_iA_i\p{1- \frac{1}{g\p{X_i^\top \alpha^*_{1,g}}}}^2\p{X_i^\top \widehat \Delta_{\beta_{1,g}}^{(-k)}}^2\mid \Gamma_{1:n}, \widehat \beta^{(-k)}_{1,g}}\\
    &\leq k_0^{-2}\gamma_n^{-2}n_k^{-1}\sum_{i \in \mathcal{I}_k} \E\sbr{\Gamma_i\p{X_i^\top \widehat \Delta_{\beta_{1,g}}^{(-k)}}^2\mid \Gamma_i, \widehat \beta^{(-k)}_{1,g}}\leq C\gamma_n^{-2}n_k^{-1}\sum_{i \in \mathcal{I}_k}\Gamma_i\norm{\widehat \Delta_{\beta_{1,g}}^{(-k)}}_2^2.
\end{align*}
Since $n_k^{-1}\sum_{i \in \mathcal{I}_k}\Gamma_i = O_p(\gamma_n)$, then
\begin{align*}
    n_k^{-1}\sum_{i \in \mathcal{I}_k} \Gamma_iA_i\p{1- \frac{1}{g\p{X_i^\top \alpha^*_{1,g}}}}^2\p{X_i^\top \widehat \Delta_{\beta_{1,g}}^{(-k)}}^2 = O_p\p{\gamma_n^{-1}\norm{\widehat \Delta_{\beta_{1,g}}^{(-k)}}_2^2}.
\end{align*}
Similarly,
$
    n_k^{-1}\sum_{i \in \mathcal{I}_k}\Gamma_iA_i\p{X_i^\top \widehat \Delta_{\beta_{0,g}}^{(-k)}}^2 = O_p\p{\gamma_n\norm{\widehat \Delta_{\beta_{0,g}}^{(-k)}}_2^2}.
$
Thus,
\begin{align*}
    &n_k^{-1}\sum_{i \in \mathcal{I}_k}\p{\varphi_{a,i}\p{\widehat \alpha^{(-k)}_{1,g}, \widehat \beta^{(-k)}_{1,g}, \widehat \beta^{(-k)}_{0,g}} - \varphi_{a,i}\p{\alpha^{*}_{1,g}, \beta^{*}_{1,g}, \beta^{*}_{0,g}}}^2\\
    &\qquad=O_p\p{\gamma_n^{-1}\p{\norm{\widehat \Delta_{\alpha_{1,g}}^{(-k)}}_2^2 + \norm{\widehat \Delta_{\beta_{1,g}}^{(-k)}}_2^2 + \gamma_n^2\norm{\widehat \Delta_{\beta_{0,g}}^{(-k)}}_2^2}}.
\end{align*}
In addition, by Lemma~\ref{lemma causal generalizability} and Lemma~\ref{sub-gaussian properties}(c),
\begin{align*}
    &\E\sbr{\Gamma_iA_i\p{X_i^\top \beta^{*}_{1,g} - X_i^\top \beta^{*}_{0,g} + \frac{Y_i - X_i^\top \beta^{*}_{1,g} }{g\p{X_i^\top \alpha^{*}_{1,g}}}}^2} \\
    &\qquad\leq 3\E\sbr{\p{X_i^\top \beta^{*}_{1,g}}^2} + 3\E\sbr{\p{X_i^\top \beta^{*}_{0,g}}^2} + 3\E\sbr{\Gamma_i\frac{w_{1,g,i}^2 }{\p{g\p{X_i^\top \alpha^{*}_{1,g}}}^2}}\\
    &\qquad\leq 3C\sigma^2 + 3\gamma_n\E\sbr{\frac{w_{1,g,i}^2 }{\p{g\p{X_i^\top \alpha^{*}_{1,g}}}^2} \mid \Gamma_i=1}\\
    &\qquad\leq 3C\sigma^2 + 3k_0^{-2}\gamma_n^{-1}\E\sbr{w_{1,g,i}^2 \mid \Gamma_i=1}\leq 3C\sigma^2 + 3\sigma_w^2k_0^{-2}\gamma_n^{-1},
\end{align*}
which implies
\begin{align*}
    \ell_{k2b} &= n_k^{-1}\sum_{i \in \mathcal{I}_k}\Gamma_iA_i\p{X_i^\top \beta^{*}_{1,g} - X_i^\top \beta^{*}_{0,g} + \frac{Y_i - X_i^\top \beta^{*}_{1,g} }{g\p{X_i^\top \alpha^{*}_{1,g}}}}^2=O_p\p{\gamma_n^{-1}}.
\end{align*}
Therefore,
\begin{align*}
    \ell_{k2a} - \ell_{k2b} &= O_p\p{\gamma_n^{-1}\p{\norm{\widehat \Delta_{\alpha_{1,g}}^{(-k)}}_2^2 + \norm{\widehat \Delta_{\beta_{1,g}}^{(-k)}}_2^2+ \gamma_n^2\norm{\widehat \Delta_{\beta_{0,g}}^{(-k)}}_2^2}}\\
    &\qquad+O_p\p{\gamma_n^{-1/2}(\norm{\widehat \Delta_{\alpha_{1,g}}^{(-k)}}_2^2 + \norm{\widehat \Delta_{\beta_{1,g}}^{(-k)}}_2^2+ \gamma_n^2\norm{\widehat \Delta_{\beta_{0,g}}^{(-k)}}_2^2)^{1/2}}O_p\p{\gamma_n^{-1/2}}\\
    &=O_p\p{\gamma_n^{-1}(\norm{\widehat \Delta_{\alpha_{1,g}}^{(-k)}}_2^2 + \norm{\widehat \Delta_{\beta_{1,g}}^{(-k)}}_2^2+ \gamma_n^2\norm{\widehat \Delta_{\beta_{0,g}}^{(-k)}}_2^2)^{1/2}}.
\end{align*}

Similarly, we have
\begin{align*}
    \ell_{k3a} - \ell_{k3b} = O_p\p{\gamma_n^{-1}(\norm{\widehat \Delta_{\alpha_{0,g}}^{(-k)}}_2^2 + \norm{\widehat \Delta_{\beta_{0,g}}^{(-k)}}_2^2 + \gamma_n^2\norm{\widehat \Delta_{\beta_{1,g}}^{(-k)}}_2^2)^{1/2}}.
\end{align*}

For $\ell_{k4}$, by Lemma~\ref{lemma causal generalizability tau tilde normal} and Lemma~\ref{lemma causal generalizability hat tau - tilde tau consistency},
\begin{align*}
    \ell_{k4} &= \widehat \tau_{g}^2 - \tau_{g}^2=\p{\widehat \tau_{g}- \tau_{g}}^2 + 2\p{\widehat \tau_{g} - \tau_{g}}\tau_{g}\\
    &=O_p\p{\sum_{a=0,1} \br{\frac{\p{s_{\alpha_{a,g}} + (s_{\alpha_{a,g}} s_{\beta_{a,g}})^{1/2}} \log d}{n\gamma_n}} + \p{n\gamma_n}^{-1/2}}.
\end{align*}
By Lemma~\ref{lemma causal generalizability nuisance consistency}, we have
\begin{align*}
    &\ell_{k1a} - \ell_{k1b} + \ell_{k2a} - \ell_{k2b} +\ell_{k3a}- \ell_{k3b} -\ell_{k4}\\
    &=O_p\p{\sum_{a=0,1}\br{\norm{\Delta_{a,g}^{(-k)}}_1(\frac{\log d}{n})^{1/2}+ \norm{\Delta_{a,g}^{(-k)}}_2+ \norm{\widehat \Delta_{\beta_{a,g}}^{(-k)}}_2}}\\
    &\qquad +O_p\p{\gamma_n^{-1}\sum_{a=0,1}\br{\norm{\widehat \Delta_{\alpha_{a,g}}^{(-k)}}_2 + \norm{\widehat \Delta_{\beta_{a,g}}^{(-k)}}_2}+\sum_{a=0,1} \br{\frac{\p{s_{\alpha_{a,g}} + (s_{\alpha_{a,g}} s_{\beta_{a,g}})^{1/2}} \log d}{n\gamma_n}} + \p{n\gamma_n}^{-1/2}}\\
    &=O_p\p{\gamma_n^{-1}\sum_{a=0,1}(\frac{ \p{s_{\alpha_{a,g}}+s_{\beta_{a,g}}} \log d}{n\gamma_n})^{1/2}},
\end{align*}
which implies
\begin{align}
     \widehat \Sigma_g^2 - \widetilde \Sigma_g^2 = O_p\p{\gamma_n^{-1}\sum_{a=0,1}(\frac{ \p{s_{\alpha_{a,g}}+s_{\beta_{a,g}}} \log d}{n\gamma_n})^{1/2}}. \label{hat sigma g - tilde sigma g rate}
\end{align}

On the other hand, note that
\begin{align*}
    \Sigma_g^2 &= \E\sbr{\br{X_i^\top \beta^*_{1,g}- X_i^\top \beta^*_{0,g} +  \frac{\Gamma_iA_i}{g\p{X_i^\top \alpha^*_{1,g}}}\p{Y_i - X_i^\top \beta^*_{1,g}}  -  \frac{\Gamma_{i}\p{1-A_i}}{g\p{X_i^\top \alpha^*_{0,g}}}\p{Y_i - X_i^\top \beta^*_{0,g}}}^2} - \tau_g^2\\
    &=\E\sbr{\p{1-\Gamma_i}\p{X_i^\top \beta^*_{1,g} - X_i^\top \beta^*_{0,g}}^2} + \E\sbr{\Gamma_iA_i\p{X_i^\top \beta^*_{1,g} - X_i^\top \beta^*_{0,g} + \frac{Y_i - X_i^\top \beta^*_{1,g}}{g\p{X_i^\top \alpha^*_{1,g}}}}^2}\\
    &\qquad + \E\sbr{\Gamma_i\p{1 - A_i}\p{X_i^\top \beta^*_{1,g} - X_i^\top \beta^*_{0,g} - \frac{Y_i - X_i^\top \beta^*_{0,g}}{g\p{X_i^\top \alpha^*_{0,g}}}}^2} -\tau_g^2.
\end{align*}
We have $\E\sbr{\widetilde \Sigma_g^2 - \Sigma_g^2} = 0$ and
\begin{align*}
    \E\sbr{\p{\widetilde \Sigma_g^2 - \Sigma_g^2}^2} &\leq n^{-1} \E\sbr{\p{1-\Gamma_i}\p{X_i^\top \beta^*_{1,g} - X_i^\top \beta^*_{0,g}}^4}\\
    &\qquad+ n^{-1} \E\sbr{\Gamma_iA_i\p{X_i^\top \beta^*_{1,g} - X_i^\top \beta^*_{0,g} + \frac{Y_i - X_i^\top \beta^*_{1,g}}{g\p{X_i^\top \alpha^*_{1,g}}}}^4}\\
        &\qquad + n^{-1} \E\sbr{\Gamma_i\p{1 - A_i}\p{X_i^\top \beta^*_{1,g} - X_i^\top \beta^*_{0,g} - \frac{Y_i - X_i^\top \beta^*_{0,g}}{g\p{X_i^\top \alpha^*_{0,g}}}}^4}\\
        &\leq n^{-1} \E\sbr{\p{X_i^\top \beta^*_{1,g} - X_i^\top \beta^*_{0,g}}^4}\\
        &\qquad+ n^{-1} \E\sbr{\Gamma_i\p{X_i^\top \beta^*_{1,g} - X_i^\top \beta^*_{0,g} + \frac{Y_i\p{1} - X_i^\top \beta^*_{1,g}}{g\p{X_i^\top \alpha^*_{1,g}}}}^4}\\
        &\qquad + n^{-1} \E\sbr{\Gamma_i\p{X_i^\top \beta^*_{1,g} - X_i^\top \beta^*_{0,g} - \frac{Y_i\p{0} - X_i^\top \beta^*_{0,g}}{g\p{X_i^\top \alpha^*_{0,g}}}}^4}.
\end{align*}
By Minkowski's inequality,
$
    \norm{X_i^\top \beta^*_{1,g} - X_i^\top \beta^*_{0,g}}_{\P,4} \leq \norm{X_i^\top \beta^*_{1,g}}_{\P,4} + \norm{X_i^\top \beta^*_{0,g}}_{\P,4}\leq C\sigma.
$
By Lemma~\ref{lemma causal generalizability},
\begin{align*}
    &\norm{\Gamma_i\p{X_i^\top \beta^*_{1,g} - X_i^\top \beta^*_{0,g} + \frac{Y_i\p{1} - X_i^\top \beta^*_{1,g}}{g\p{X_i^\top \alpha^*_{1,g}}}}}_{\P,4} \\
    &\qquad\leq \norm{X_i^\top \beta^*_{1,g} - X_i^\top \beta^*_{0,g}}_{\P,4} + \norm{\Gamma_i\frac{w_{1,g,i}}{g\p{X_i^\top \alpha^*_{1,g}}}}_{\P,4}\leq C\sigma + k_0^{-1}\gamma_n^{-1}\norm{\Gamma_iw_{1,g,i}}_{\P,4}.
\end{align*}
In addition,
$
    \E\sbr{\Gamma_iw_{1,g,i}^4} = \gamma_n\E\sbr{w_{1,g,i}^4 \mid \Gamma_i=1}\leq \gamma_n\sigma_w^4.
$
It follows that 
\begin{align*}
    \E\sbr{\Gamma_i\p{X_i^\top \beta^*_{1,g} - X_i^\top \beta^*_{0,g} + \frac{Y_i\p{1} - X_i^\top \beta^*_{1,g}}{g\p{X_i^\top \alpha^*_{1,g}}}}^4} &\leq \p{C\sigma + k_0^{-1}\sigma_w\gamma_n^{-3/4}}^4\leq \p{C\sigma + k_0^{-1}\sigma_w}^4\gamma_n^{-3}.
\end{align*}
Similarly,
\begin{align*}
    \E\sbr{\Gamma_i\p{X_i^\top \beta^*_{1,g} - X_i^\top \beta^*_{0,g} - \frac{Y_i\p{0} - X_i^\top \beta^*_{0,g}}{g\p{X_i^\top \alpha^*_{0,g}}}}^4} &\leq \p{C\sigma + k_0^{-1}\sigma_w}^4\gamma_n^{-3}.
\end{align*}
Thus,
$
    \widetilde \Sigma_g^2 - \Sigma_g^2 = O_p\p{\gamma_n^{-1}\p{n\gamma_n}^{-1/2}}.
$
Together with \eqref{hat sigma g - tilde sigma g rate},
\begin{align*}
    \widehat \Sigma_g^2-\Sigma_g^2 &= \widehat \Sigma_g^2 - \widetilde \Sigma_g^2 + \widetilde \Sigma_g^2-\Sigma_g^2= O_p\p{\gamma_n^{-1}\sum_{a=0,1}(\frac{ \p{s_{\alpha_{a,g}}+s_{\beta_{a,g}}} \log d}{n\gamma_n})^{1/2}} + O_p\p{\gamma_n^{-1}\p{n\gamma_n}^{-1/2}}\\
   &=O_p\p{\gamma_n^{-1}\sum_{a=0,1}(\frac{ \p{s_{\alpha_{a,g}}+s_{\beta_{a,g}}} \log d}{n\gamma_n})^{1/2}}.
\end{align*}
By Lemma~\ref{lemma causal generalizability tau tilde normal}, $\Sigma_g^2 \asymp \gamma_n^{-1}$, it holds that
$
    \widehat \Sigma_g^2 = \Sigma_g^2\br{1 + O_p\p{\sum_{a=0,1}({ \p{s_{\alpha_{a,g}}+s_{\beta_{a,g}}} \log d}/{n\gamma_n})^{1/2}}}.
$
\end{proof}

\subsection{Proof of Theorem~\ref{theorem causal generalizability Asymptotics body}}

\begin{proof}
Theorem~\ref{theorem causal generalizability Asymptotics body} follows from Lemmas~\ref{lemma causal generalizability hat tau - tilde tau consistency}, \ref{lemma causal generalizability tau tilde normal}, and \ref{lemma causal generalizability hat sigma = sigma(1+o(1))} by repeating the proof of Lemma~\ref{lemma key results for generalizability}.
\end{proof}

\subsection{Auxiliary lemmas for transportability}

\begin{lemma}\label{lemma causal transportability}
   Let Assumption~\ref{assumption causal} hold, and for each $a\in\{0,1\}$, Assumption~\ref{assumption mar nuisance (a)} hold with $(\alpha^*_{PS}, \beta^*_{OR}, w_{OR})$ replaced by $(\alpha^*_{a,t}, \beta^*_{a,t}, w_{a,t})$. Then 
    
    (a) For each $a \in \{0,1\}$, there exist some constants $\lambda_l,\sigma >0$, $X_i$ is sub-Gaussian given $\Gamma_{i}=\Gamma$ such that $\norm{\Omega_{a,i}X_i^\top \beta^*_{a,t}}_{\psi_2} \leq \sigma$ and
    \begin{align*}
        \norm{\Omega_{a,i}X_i}_{\Gamma, \psi_2} := \sup_{v \in \R^d, \norm{v}_2 = 1}\inf \br{t > 0: \E\sbr{\exp\br{\frac{\Omega_{a,i}\p{X_i^\top v}^2}{t^2}} \mid \Gamma_{i}=\Gamma}\leq 2}\leq \sigma.
    \end{align*}
    In addition, $\lambda_{\min}\p{\E\sbr{\Omega_{a,i}X_iX_i^\top \mid \Gamma_i=1}} \geq \lambda_l$.

    (b) For each $a \in \{0,1\}$, let $\tilde w_{a,t,i} = \Omega_{a,i}\p{Y_i\p{a} - X_i^\top \beta^*_{a,t}}$. There exist some constants $\sigma_w, \delta_w>0$, $\tilde w_{a,t,i}$ is sub-Gaussian with $\norm{\tilde w_{a,t,i}}_{\psi_2} \leq \sigma_w$ and 
    \begin{align*}
        \E\sbr{\tilde w_{a,t,i}^8 \mid \Gamma_i=1} \leq \sigma_w^8 \quad \text{and} \quad \E\sbr{\tilde w_{a,t,i}^2 \mid \Gamma_i=1} \geq \delta_w.
    \end{align*}
\end{lemma}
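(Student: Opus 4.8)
The plan is to verify all the listed bounds by reducing them to the corresponding unconditional (or conditional-on-$\Gamma_i$) sub-Gaussian and moment bounds supplied by the hypotheses, exploiting two elementary facts about $\Omega_{a,i}=1-\Gamma_i+\Gamma_{a,i}$: first, $\Omega_{a,i}\in\{0,1\}$ almost surely, so $\Omega_{a,i}\leq 1$ and $\Omega_{a,i}^m=\Omega_{a,i}$ for every $m\geq1$, whence any $\Omega_{a,i}$-weighted quantity is dominated pointwise by the unweighted one; and second, on the event $\{\Gamma_i=1\}$ one has $\Omega_{a,i}=\mathbbm{1}_{\{A_i=a\}}$, since $\Gamma_{a,i}=\Gamma_i\mathbbm{1}_{\{A_i=a\}}$. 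The argument is the transportability analogue of the one already carried out for the generalizability nuisances in Lemma~\ref{lemma causal generalizability}; the only bookkeeping difference is that the relevant conditioning event here is $\{\Gamma_i=1\}$ (and $\{\Gamma_i=\Gamma\}$ for the Orlicz norms), not the effective-label event $\{\Gamma_{a,i}=1\}$. Throughout, write $w_{a,t,i}=Y_i(a)-X_i^\top\beta^*_{a,t}$, so that $\tilde w_{a,t,i}=\Omega_{a,i}w_{a,t,i}$, and recall that by Theorem~\ref{theorem causal transportability Asymptotics body} the relevant hypothesis is Assumption~\ref{assumption mar nuisance (a)} with $(\alpha^*_{PS},\beta^*_{OR},w_{OR})$ replaced by $(\alpha^*_{a,t},\beta^*_{a,t},w_{a,t})$.

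For part (a), the bound $\norm{\Omega_{a,i}X_i}_{\Gamma,\psi_2}\leq\sigma$ follows because $\Omega_{a,i}(X_i^\top v)^2\leq(X_i^\top v)^2$ pointwise gives $\E[\exp(\Omega_{a,i}(X_i^\top v)^2/t^2)\mid\Gamma_i=\Gamma]\leq\E[\exp((X_i^\top v)^2/t^2)\mid\Gamma_i=\Gamma]$, which is at most $2$ at $t=\sigma\norm{v}_2$ by the conditional sub-Gaussianity of $X$ in the replaced Assumption~\ref{assumption mar nuisance (a)}(b). The bound $\norm{\Omega_{a,i}X_i^\top\beta^*_{a,t}}_{\psi_2}\leq\sigma$ follows from $|\Omega_{a,i}X_i^\top\beta^*_{a,t}|\leq|X_i^\top\beta^*_{a,t}|$ together with Lemma~\ref{sub-gaussian properties}(a) and the observation that a two-component mixture over the Bernoulli variable $\Gamma_i$ of variables each sub-Gaussian with parameter $\sigma$ is again sub-Gaussian with parameter $\sigma$, since $\E[\exp((X^\top\beta^*_{a,t})^2/\sigma^2)]=\gamma_n\E[\,\cdot\mid\Gamma=1]+(1-\gamma_n)\E[\,\cdot\mid\Gamma=0]\leq 2$. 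For the eigenvalue bound, use $\Omega_{a,i}\mathbbm{1}_{\{\Gamma_i=1\}}=\mathbbm{1}_{\{A_i=a\}}\mathbbm{1}_{\{\Gamma_i=1\}}$ and Assumption~\ref{assumption causal}(b)--(c):
\begin{align*}
\E\sbr{\Omega_{a,i}X_iX_i^\top\mid\Gamma_i=1}=\E\sbr{\P\p{A_i=a\mid X_i,\Gamma_i=1}X_iX_i^\top\mid\Gamma_i=1}\succcurlyeq\eta_0\,\E\sbr{X_iX_i^\top\mid\Gamma_i=1}\succcurlyeq\eta_0\kappa_l\mathbf{I}_d,
\end{align*}
the last step being the replaced Assumption~\ref{assumption mar nuisance (a)}(b); take $\lambda_l=\eta_0\kappa_l$.

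For part (b), since $|\tilde w_{a,t,i}|=\Omega_{a,i}|w_{a,t,i}|\leq|w_{a,t,i}|$, Lemma~\ref{sub-gaussian properties}(a) and the replaced Assumption~\ref{assumption mar nuisance (a)}(c) give $\norm{\tilde w_{a,t,i}}_{\psi_2}\leq\norm{w_{a,t,i}}_{\psi_2}\leq\sigma_w$. Using $\tilde w_{a,t,i}^8=\Omega_{a,i}w_{a,t,i}^8$ and $\Omega_{a,i}=\mathbbm{1}_{\{A_i=a\}}$ on $\{\Gamma_i=1\}$,
\begin{align*}
\E\sbr{\tilde w_{a,t,i}^8\mid\Gamma_i=1}=\E\sbr{\mathbbm{1}_{\{A_i=a\}}w_{a,t,i}^8\mid\Gamma_i=1}\leq\E\sbr{w_{a,t,i}^8\mid\Gamma_i=1}\leq\sigma_w^8.
\end{align*}
For the lower bound, $\tilde w_{a,t,i}^2=\mathbbm{1}_{\{A_i=a\}}w_{a,t,i}^2$ on $\{\Gamma_i=1\}$; conditioning further on $X_i$ and using $A_i\perp\{Y(1),Y(0)\}\mid X_i,\Gamma_i=1$ (hence $A_i\perp w_{a,t,i}\mid X_i,\Gamma_i=1$) together with $\P(A_i=a\mid X_i,\Gamma_i=1)\geq\eta_0$,
\begin{align*}
\E\sbr{\tilde w_{a,t,i}^2\mid\Gamma_i=1}=\E\sbr{\P\p{A_i=a\mid X_i,\Gamma_i=1}\,\E\sbr{w_{a,t,i}^2\mid X_i,\Gamma_i=1}\mid\Gamma_i=1}\geq\eta_0\,\E\sbr{w_{a,t,i}^2\mid\Gamma_i=1}\geq\eta_0\delta_w,
\end{align*}
so the stated constants hold with $\sigma_w$ unchanged and $\delta_w$ replaced by $\eta_0\delta_w$.

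There is no genuine obstacle: every claim is a deterministic moment or Orlicz-norm inequality, and the structure is essentially identical to Lemma~\ref{lemma causal generalizability}. The only point that requires a little care is to keep track that the conditioning here is on $\{\Gamma_i=1\}$ (and $\{\Gamma_i=\Gamma\}$), so that one invokes $\Omega_{a,i}=\mathbbm{1}_{\{A_i=a\}}$ on $\{\Gamma_i=1\}$ and the overlap condition from Assumption~\ref{assumption causal}(c), rather than the $\gamma_{a,g}$-comparison arguments that governed the effective-label conditioning in the generalizability case.
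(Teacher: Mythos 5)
Your proof is correct and follows essentially the same route as the paper's: pointwise domination by $\Omega_{a,i}\le 1$ for the Orlicz-norm and upper moment bounds, and conditioning on $X_i$ together with the overlap condition $\P(A_i=a\mid X_i,\Gamma_i=1)\ge\eta_0$ for the eigenvalue and lower moment bounds. If anything, your write-up is slightly more careful than the paper's (you consistently use $\mathbbm{1}_{\{A_i=a\}}$ for general $a$ and explicitly invoke $A_i\perp\{Y(1),Y(0)\}\mid X_i,\Gamma_i=1$ where the paper's displayed lower bound contains a typo, stating it for the eighth moment rather than the second).
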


\begin{proof}
    (a) Note that $\abs{\Omega_{a,i}X_i} \leq \abs{X_i}$ and $\abs{\Omega_{a,i}X_i^\top \beta^*_{a,t}} \leq \abs{X_i^\top \beta^*_{a,t}}$, then $\norm{\Omega_{a,i}X_i}_{\Gamma, \psi_2} \leq \norm{X_i}_{\Gamma, \psi_2}$ and $\norm{\Omega_{a,i}X_i^\top \beta^*_{a,t}}_{\psi} \leq \sigma$. In addition, we have
    \begin{align*}
        \E\sbr{\Omega_{a,i}X_iX_i^\top \mid \Gamma_i=1} = \E\sbr{\E\sbr{\Omega_{a,i} \mid \Gamma_i=1, X_i}X_iX_i^\top}  \succcurlyeq \eta_0\E\sbr{X_iX_i^\top}.
    \end{align*}

    (b) Under Assumption~\ref{assumption causal}, for some constant $C_1,C_2>0$,
    \begin{align*}
        \E\sbr{\tilde w_{a,t,i}^8 \mid \Gamma_i=1} &= \E\sbr{A_i\p{Y_i\p{a} - X_i^\top \beta^*_{a,t}}^8\mid \Gamma_i=1}\leq \E\sbr{w_{1,t,i}^8\mid \Gamma_i=1}\leq \sigma_w^8.
    \end{align*}
    In addition, by Assumption~\ref{assumption causal},
    \begin{align*}
        \E\sbr{\tilde w_{a,t,i}^8 \mid \Gamma_i=1} &= \E\sbr{A_iw_{1,t,i}^8\mid \Gamma_i=1}=\E\sbr{\E\sbr{A_i \mid \Gamma_i=1, X_i}\E\sbr{w_{1,t,i}^8\mid X_i,\Gamma_i=1}\mid \Gamma_i=1}\\
        &\geq \eta_0 \E\sbr{w_{1,t,i}^8\mid \Gamma_i=1} \geq \eta_0\delta_w.
    \end{align*}
\end{proof}

\begin{lemma}\label{lemma causal transportability tau tilde normal}
   Let Assumption~\ref{assumption causal} hold, and for each $a\in\{0,1\}$, Assumption~\ref{assumption mar nuisance (a)} hold with $(\alpha^*_{PS}, \beta^*_{OR}, w_{OR})$ replaced by $(\alpha^*_{a,t}, \beta^*_{a,t}, w_{a,t})$. Let either $\P\p{\Gamma_{a,i}=1 \mid X_i, \Omega_{a,i}=1} = g\p{X_i^\top \alpha^*_{a,t}}$ or $\E\sbr{Y_i(a) \mid X_i} = X_i^\top \beta^*_{a,t}$ holds for each $a \in \{0,1\}$. If $n\gamma_n \gg 1$, then as $n, d \rightarrow \infty$, $\Sigma_{t}^2 \asymp \gamma_n^{-1}$ and
$
\Sigma_{t}^{-1}n^{1/2}\p{\widetilde \tau_{t} - \tau_{t}} \rightarrow \mathcal{N}\p{0,1},
$
where $\widetilde \tau_{t} = \widetilde \tau_{1,t} - \widetilde \tau_{0,t}$ and
\begin{align*}
    \widetilde \tau_{a,t} = n^{-1}\sum_{i=1}^n \br{\frac{1-\Gamma_i}{1-\gamma_n} \p{X_i^\top \beta^*_{a,t} - \tau_{a,t}} +  \frac{\Gamma_i\mathbbm{1}_{\br{A_i=a}}}{1-\gamma_n}\frac{1-g\p{X_i^\top \alpha^*_{a,t}}}{g\p{X_i^\top \alpha^*_{a,t}}}\p{Y_i - X_i^\top \beta^*_{a,t}}} + \tau_{a,t}.
\end{align*}
\end{lemma}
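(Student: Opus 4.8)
The plan is to transcribe the argument of Lemma~\ref{lemma mar transportability theta t normal} (and its causal counterpart Lemma~\ref{lemma causal generailizability tau tilde normal}) with the reductions set up in Section~\ref{sec: 4.B}: treat $\Gamma_{a,i}=\Gamma_i\mathbbm{1}_{\{A_i=a\}}$ as the effective labeling indicator and $\Omega_{a,i}=1-\Gamma_i+\Gamma_{a,i}$ as the effective membership indicator. First I would write $\widetilde\tau_t = n^{-1}\sum_{i=1}^n\p{Q_{1,t,i}-Q_{0,t,i}}$, where $Q_{a,t,i}$ is the $i$-th summand of $\widetilde\tau_{a,t}$, and establish $\E\sbr{\widetilde\tau_t}=\tau_t$ by showing $\E\sbr{Q_{a,t,i}}=\tau_{a,t}$ for each $a$. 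Using $Y_i=Y_i(A_i)$ so that $\Gamma_{a,i}Y_i=\Gamma_{a,i}Y_i(a)$, the residual inside the $\Gamma_{a,i}$-term becomes $Y_i(a)-X_i^\top\beta^*_{a,t}$; then, conditioning on $X_i$ and on $\Omega_{a,i}=1$, the doubly robust representation of Section~\ref{sec: mar} applies: under the correct PS model $\P(\Gamma_{a,i}=1\mid X_i,\Omega_{a,i}=1)=g(X_i^\top\alpha^*_{a,t})$ the cross term vanishes by iterated expectation, while under the correct OR model $\E\sbr{Y_i(a)\mid X_i}=X_i^\top\beta^*_{a,t}$ both the residual term and the outcome-model bias vanish, with the first piece handled by $\tau_{a,t}=\E\sbr{X_i^\top\beta^*_{a,t}\mid\Gamma_i=0}$. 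The excluded samples $\{\Omega_{a,i}=0\}=\{\Gamma_i=1,A_i\neq a\}$ contribute nothing because they are multiplied by $\Gamma_{a,i}(1-\Gamma_i)=0$ throughout.

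Next I would compute $\Sigma_t^2=\Var\p{Q_{1,t,i}-Q_{0,t,i}}$ and show $\Sigma_t^2\asymp\gamma_n^{-1}$. Expanding the square, the dominant contribution comes from the terms $\Gamma_{a,i}\p{(1-g(X_i^\top\alpha^*_{a,t}))/g(X_i^\top\alpha^*_{a,t})}^2\p{Y_i(a)-X_i^\top\beta^*_{a,t}}^2/(1-\gamma_n)^2$; using $1-\gamma_n\geq c_0$, the overlap bounds $(1-g)/g\asymp\gamma_n^{-1}$ (Assumption~\ref{assumption mar nuisance (a)}(a) transferred to $(\alpha^*_{a,t},\beta^*_{a,t})$, justified by the analogue of Lemma~\ref{lemma causal generalizability}, i.e.\ Lemma~\ref{lemma causal transportability}), and $\E\sbr{\Gamma_{a,i}\p{Y_i(a)-X_i^\top\beta^*_{a,t}}^2\mid\Gamma_i}\asymp\gamma_n$, this term is of order $\gamma_n^{-1}$. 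The remaining pieces — the OR-difference term, of order $1$, and the cross terms, bounded by Cauchy--Schwarz together with the moment bounds of Lemma~\ref{lemma causal transportability} — are of strictly smaller order, giving the upper bound; the lower bound follows because after conditioning on $X_i$ the $\Gamma_{a,i}$-terms are sums of nonnegative quantities not cancelled by the cross terms (orthogonality of $\E\sbr{\text{residual}\mid X_i}$ in the correct-OR case, direct algebra in the correct-PS case, exactly as in Lemma~\ref{lemma mar transportability theta t normal}).

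Finally I would verify the Lyapunov condition with $\delta=2$: by Minkowski's inequality and the fourth-moment bounds from Lemma~\ref{lemma causal transportability}, the worst factor $\Gamma_{a,i}g(X_i^\top\alpha^*_{a,t})^{-1}\p{Y_i(a)-X_i^\top\beta^*_{a,t}}$ has $L^4$-norm of order $\gamma_n^{-3/4}$, whence $\E\sbr{\abs{Q_{1,t,i}-Q_{0,t,i}-\tau_t}^4}=O\p{\gamma_n^{-3}}$. Combined with $\Sigma_t^2\asymp\gamma_n^{-1}$ this yields $n^{-1}\Sigma_t^{-4}\E\sbr{\abs{Q_{1,t,i}-Q_{0,t,i}-\tau_t}^4}=O\p{(n\gamma_n)^{-1}}=o(1)$ under $n\gamma_n\to\infty$, and the Lindeberg--Feller theorem (Proposition~2.27 of \cite{van2000asymptotic}, as cited) delivers $\Sigma_t^{-1}\sqrt{n}\p{\widetilde\tau_t-\tau_t}\xrightarrow{d}\mathcal{N}(0,1)$. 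The main obstacle is the unbiasedness step: the PS model is stated conditionally on $\Omega_{a,i}=1$, so the identification must be carried out on the effective entire population and the bookkeeping of which samples enter each average must match the reduction of Section~\ref{sec: 4.B}; verifying that the normalizing constants $\P(\Gamma_{a,i}=1)$ and $\P(\Omega_{a,i}=1)$ scale like $\gamma_n$ and $1$ respectively (so that all the $\asymp\gamma_n^{-1}$ claims hold uniformly) is the delicate part, while the rest is a routine adaptation of the $\widetilde\theta_t$ argument with $\Gamma$ and $Y$ replaced by $\Gamma_a$ and $Y(a)$.
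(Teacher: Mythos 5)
Your proposal is correct and follows essentially the same route as the paper's proof: the same unbiasedness computation via the identity $\P(\Gamma_i=1,A_i=a\mid X_i,\Omega_{a,i}=1)=\P(\Gamma_i=1,A_i=a\mid X_i)/\P(\Omega_{a,i}=1\mid X_i)$ and the doubly robust factorization, the same $\Sigma_t^2\asymp\gamma_n^{-1}$ bound from the overlap condition, and the same Lyapunov verification with $\delta=2$ giving $\E\sbr{\abs{Q_{t,i}-\tau_t}^4}=O(\gamma_n^{-3})$. The one point worth tightening is your handling of the variance cross terms: because the indicators $(1-\Gamma_i)$, $\Gamma_iA_i$, and $\Gamma_i(1-A_i)$ are mutually exclusive, every cross term in $\Var\p{Q_{1,t,i}-Q_{0,t,i}}$ vanishes identically (this is what the paper uses to write the variance as a sum of three nonnegative terms), whereas a bare Cauchy--Schwarz bound on the cross term between the two $\Gamma$-residual pieces would only give $O(\gamma_n^{-1})$ --- the same order as the main terms --- so your parenthetical appeal to ``direct algebra'' there is essential rather than optional and should be made explicit.
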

\begin{proof}
Let $Q_{t,i} = Q_{1,t,i} - Q_{0,t,i} + \tau_t$ where
\begin{align*}
    Q_{a,t,i} &= \frac{1-\Gamma_i}{1-\gamma_n} \p{X_i^\top \beta^*_{a,t} - \tau_{a,t}} +  \frac{\Gamma_i\mathbbm{1}_{\br{A_i=a}}}{1-\gamma_n}\frac{1-g\p{X_i^\top \alpha^*_{a,t}}}{g\p{X_i^\top \alpha^*_{a,t}}}\p{Y_i - X_i^\top \beta^*_{a,t}}.
\end{align*}
    Under Assumption~\ref{assumption causal},
$
        \mathbbm{1}_{\br{A_i=a}}Y_i = \mathbbm{1}_{\br{A_i=a}}Y_i\p{a}.
$
    Note that $\Omega_{a,i} = 1-\Gamma_i + \Gamma_i\mathbbm{1}_{\br{A_i=a}} = \mathbbm{1}_{\br{\Gamma_i=0} \cup \br{\Gamma_i=1, A_i=a}}$. Then
    \begin{align*}
        \P\p{\Gamma_i=1,A_i=a \mid X_i, \Omega_{a,i}=1}&=\frac{\P\p{\Gamma_i=1,A_i=a, \Omega_{a,i}=1} \mid X_i}{\P\p{\Omega_{a,i}=1 \mid X_i}}=\frac{\P\p{\Gamma_i=1,A_i=a} \mid X_i}{\P\p{\Omega_{a,i}=1 \mid X_i}}.
    \end{align*}
    It follows that when either $\P\sbr{\Gamma_{a,i}=1 \mid X_i, \Omega_{a,i}=1} = g\p{X_i^\top \alpha^*_{a,t}}$ or $\E\sbr{Y_i(a) \mid X_i} = X_i^\top \beta^*_{a,t}$ holds for each $a \in \{0,1\}$,
    \begin{align*}
        \E\sbr{\widetilde \tau_{a,t}} &= \E\sbr{\frac{1-\Gamma_i}{1-\gamma_n} \p{X_i^\top \beta^*_{a,t} - \tau_{a,t}} +  \frac{\Gamma_i\mathbbm{1}_{\br{A_i=a}}}{1-\gamma_n}\frac{1-g\p{X_i^\top \alpha^*_{a,t}}}{g\p{X_i^\top \alpha^*_{a,t}}}\p{Y_i(a) - X_i^\top \beta^*_{a,t}}}\\
        &=\E\sbr{\p{\frac{1-\Gamma_i}{1-\gamma_n} -  \frac{\Gamma_i\mathbbm{1}_{\br{A_i=a}}}{1-\gamma_n}\frac{1-g\p{X_i^\top \alpha^*_{a,t}}}{g\p{X_i^\top \alpha^*_{a,t}}}}\p{Y_i(a) - X_i^\top \beta^*_{a,t}}}  + \E\sbr{\frac{1-\Gamma_i}{1-\gamma_n}\p{Y_i(a) - \tau_{a,t}}}\\
&=\E\sbr{\frac{\E\sbr{\Omega_{a,i}g\p{X_i^\top \alpha^*_{a,t}} - \Gamma_i\mathbbm{1}_{\br{A_i=a}} \mid X_i}}{\p{1-\gamma_n}g\p{X_i^\top \alpha^*_{a,t}}}\E\sbr{Y_i(a) - X_i^\top \beta^*_{a,t} \mid X_i}}\\
        &=\E\sbr{\frac{g\p{X_i^\top \alpha^*_{a,t}}\P\p{\Omega_{a,i}=1 \mid X_i} - \P\p{\Gamma_i=1, A_i=a \mid X_i}}{\p{1-\gamma_n}g\p{X_i^\top \alpha^*_{a,t}}}\E\sbr{Y_i(a) - X_i^\top \beta^*_{a,t} \mid X_i}}=0.
    \end{align*}
Thus, $\E\sbr{\widetilde \tau_t} =\E\sbr{Q_{t,i}} = \tau_t$.
    By Lyapunov's central limit theorem, it suffices to prove for some $\delta>0$ and $C>0$,
    \begin{align}
        \lim_{n\rightarrow \infty}n^{-\delta/2}\Sigma_t^{-\p{2+\delta}} \E\sbr{\abs{Q_{t,i} - \tau_t}^{2+\delta}} = 0.
    \end{align}
    
Note that under Assumption~\ref{assumption mar nuisance (a)} with $(\alpha^*_{PS}, \beta^*_{OR}, w_i)$ replaced by $(\alpha^*_{a,t}, \beta^*_{a,t}, w_{a,t,i})$,
\begin{align*}
    \Var\p{Q_{t,i}} &= \Sigma_t^2\\
    &= \E\sbr{\frac{1-\Gamma_i}{\p{1-\gamma_n}^2} \p{X_i^\top \beta^*_{1,t} - X_i^\top \beta^*_{0,t} - \tau_{t}}^2 }  \\
    &\qquad + \E\sbr{\frac{\Gamma_i A_i}{\p{1-\gamma_n}^2}\p{\frac{1-g\p{X_i^\top \alpha^*_{1,t}}}{g\p{X_i^\top \alpha^*_{1,t}}}}^2\p{Y_i(1) - X_i^\top \beta^*_{1,t}}^2}\\
    &\qquad + \E\sbr{\frac{\Gamma_i\p{1-A_i}}{\p{1-\gamma_n}^2}\p{\frac{1-g\p{X_i^\top \alpha^*_{0,t}}}{g\p{X_i^\top \alpha^*_{0,t}}}}^2\p{Y_i(0) - X_i^\top \beta^*_{0,t}}^2}.
\end{align*}
Since
\begin{align*}
    \E\sbr{\frac{1-\Gamma_i}{\p{1-\gamma_n}^2} \p{X_i^\top \beta^*_{1,t} - X_i^\top \beta^*_{0,t} - \tau_{t}}^2 } \leq c_0^{-2}\E\sbr{\p{X_i^\top \beta^*_{1,t} - X_i^\top \beta^*_{0,t} - \tau_{t}}^2 },
\end{align*}
by Minkowski's inequality,
\begin{align*}
    \norm{X_i^\top \beta^*_{1,t} - X_i^\top \beta^*_{0,t} - \tau_{t}}_{\P,2} &\leq \norm{X_i^\top \beta^*_{1,t}}_{\P,2} + \norm{X_i^\top \beta^*_{0,t}}_{\P,2} + \abs{\tau_{t}} \leq C_1\sigma + \abs{\tau_{t}}.
\end{align*}
Thus,
\begin{align*}
     \E\sbr{\frac{1-\Gamma_i}{\p{1-\gamma_n}^2} \p{X_i^\top \beta^*_{1,t} - X_i^\top \beta^*_{0,t} - \tau_{t}}^2 } \leq c_0^{-2}\p{C\sigma + \abs{\tau_{t}}}^2.
\end{align*}
In addition, 
\begin{align*}
    &\E\sbr{\frac{\Gamma_i A_i}{\p{1-\gamma_n}^2}\p{\frac{1-g\p{X_i^\top \alpha^*_{1,t}}}{g\p{X_i^\top \alpha^*_{1,t}}}}^2\p{Y_i(1) - X_i^\top \beta^*_{1,t}}^2}\\
    &=\E\sbr{\E\sbr{A_i \mid \Gamma_i, X_i}\E\sbr{\frac{\Gamma_i }{\p{1-\gamma_n}^2}\p{\frac{1-g\p{X_i^\top \alpha^*_{1,t}}}{g\p{X_i^\top \alpha^*_{1,t}}}}^2\p{Y_i(1) - X_i^\top \beta^*_{1,t}}^2 \mid \Gamma_i, X_i}}\\
    &\asymp \E\sbr{\Gamma_i\p{\frac{1-g\p{X_i^\top \alpha^*_{1,t}}}{g\p{X_i^\top \alpha^*_{1,t}}}}^2\p{Y_i(1) - X_i^\top \beta^*_{1,t}}^2}\\
    &=\gamma_n\E\sbr{\p{\frac{1-g\p{X_i^\top \alpha^*_{1,t}}}{g\p{X_i^\top \alpha^*_{1,t}}}}^2\p{Y_i(1) - X_i^\top \beta^*_{1,t}}^2 \mid \Gamma_i=1}\\
    &\asymp \gamma_n^{-1}\E\sbr{\p{Y_i(1) - X_i^\top \beta^*_{1,t}}^2 \mid \Gamma_i=1}\asymp \gamma_n^{-1}.
\end{align*}
Similarly, we have
\begin{align*}
    \E\sbr{\frac{\Gamma_i\p{1-A_i}}{\p{1-\gamma_n}^2}\p{\frac{1-g\p{X_i^\top \alpha^*_{0,t}}}{g\p{X_i^\top \alpha^*_{0,t}}}}^2\p{Y_i(0) - X_i^\top \beta^*_{0,t}}^2} \asymp \gamma_n^{-1}.
\end{align*}

Choose $\delta = 2$. Then
\begin{align*}
    \E\p{\abs{Q_{t,i} - \tau_t}^4} &= \E\sbr{\frac{1-\Gamma_i}{\p{1-\gamma_n}^4} \p{X_i^\top \beta^*_{1,t} - X_i^\top \beta^*_{0,t} - \tau_{t}}^4 }  \\
    &\qquad + \E\sbr{\frac{\Gamma_i A_i}{\p{1-\gamma_n}^4}\p{\frac{1-g\p{X_i^\top \alpha^*_{1,t}}}{g\p{X_i^\top \alpha^*_{1,t}}}}^4\p{Y_i(1) - X_i^\top \beta^*_{1,t}}^4}\\
    &\qquad + \E\sbr{\frac{\Gamma_i\p{1-A_i}}{\p{1-\gamma_n}^4}\p{\frac{1-g\p{X_i^\top \alpha^*_{0,t}}}{g\p{X_i^\top \alpha^*_{0,t}}}}^4\p{Y_i(0) - X_i^\top \beta^*_{0,t}}^4}\\
    &\leq c_0^{-4} \E\sbr{\p{X_i^\top \beta^*_{1,t} - X_i^\top \beta^*_{0,t} - \tau_{t}}^4}\\
    &\qquad + c_0^{-4}\E\sbr{\Gamma_i A_i\p{\frac{1-g\p{X_i^\top \alpha^*_{1,t}}}{g\p{X_i^\top \alpha^*_{1,t}}}}^4\p{Y_i(1) - X_i^\top \beta^*_{1,t}}^4}\\
    &\qquad +  c_0^{-4}\E\sbr{\Gamma_i\p{1-A_i}\p{\frac{1-g\p{X_i^\top \alpha^*_{0,t}}}{g\p{X_i^\top \alpha^*_{0,t}}}}^4\p{Y_i(0) - X_i^\top \beta^*_{0,t}}^4}.
\end{align*}
By Minkowski's inequality,
\begin{align*}
    \norm{X_i^\top \beta^*_{1,t} - X_i^\top \beta^*_{0,t} - \tau_{t}}_{\P, 4} &\leq \norm{X_i^\top \beta^*_{1,t}}_{\P, 4}+\norm{X_i^\top \beta^*_{0,t}}_{\P, 4}+\abs{\tau_{t}}\leq C_2\sigma + \abs{\tau_{t}},
\end{align*}
which implies
$
    \E\sbr{\p{X_i^\top \beta^*_{1,t} - X_i^\top \beta^*_{0,t} - \tau_{t}}^4} \leq \p{C_2\sigma + \abs{\tau_{t}}}^4.
$
Note that
\begin{align*}
    &\E\sbr{\Gamma_i A_i\p{\frac{1-g\p{X_i^\top \alpha^*_{1,t}}}{g\p{X_i^\top \alpha^*_{1,t}}}}^4\p{Y_i(1) - X_i^\top \beta^*_{1,t}}^4}\leq \E\sbr{\Gamma_i\p{\frac{1}{g\p{X_i^\top \alpha^*_{1,t}}}}^4\p{Y_i(1) - X_i^\top \beta^*_{1,t}}^4}\\
    &\qquad=\gamma_n\E\sbr{\p{\frac{1}{g\p{X_i^\top \alpha^*_{1,t}}}}^4\p{Y_i(1) - X_i^\top \beta^*_{1,t}}^4 \mid \Gamma_i=1}.
\end{align*}
Observe that
$
    \p{{1}/{g\p{X_i^\top \alpha^*_{1,t}}}}^4 \leq k_0^{-4}\gamma_n^{-4}$ and $\E\sbr{\p{Y_i(1) - X_i^\top \beta^*_{1,t}}^4 \mid \Gamma_i=1} \leq \sigma_w^4,
$
which implies
\begin{align*}
    \E\sbr{\Gamma_i A_i\p{\frac{1-g\p{X_i^\top \alpha^*_{1,t}}}{g\p{X_i^\top \alpha^*_{1,t}}}}^4\p{Y_i(1) - X_i^\top \beta^*_{1,t}}^4} \leq k_0^{-4}\sigma_w^4 \gamma_n^{-3}.
\end{align*}
Similarly,
\begin{align*}
    \E\sbr{\Gamma_i\p{1-A_i}\p{\frac{1-g\p{X_i^\top \alpha^*_{0,t}}}{g\p{X_i^\top \alpha^*_{0,t}}}}^4\p{Y_i(0) - X_i^\top \beta^*_{0,t}}^4} \leq k_0^{-4}\sigma_w^4 \gamma_n^{-3}.
\end{align*}
Thus,
\begin{align*}
    \E\p{\abs{Q_{t,i} - \tau_t}^4} &\leq c_0^{-4}\br{\p{C_2\sigma + \abs{\tau_{t}}}^4 + 2k_0^{-4}\sigma_w^4 \gamma_n^{-3}}\\
    &\leq c_0^{-4}\br{\p{C_2\sigma + \abs{\tau_{t}}}^4 + 2k_0^{-4}\sigma_w^4}\gamma_n^{-3} =: C_3\gamma_n^{-3}.
\end{align*}
It follows that for $\delta=2$, if $n\gamma_n \gg 1$, as $n, d \rightarrow 0$,
\begin{align*}
    \lim_{n\rightarrow \infty}n^{-1}\Sigma_t^{-4} \E\sbr{\abs{Q_{t,i} - \tau_t}^{4}} \leq \lim_{n\rightarrow \infty}n^{-1}C\gamma_n^{-1} = 0.
\end{align*}
\end{proof}

\begin{lemma}\label{lemma causal transportability nuisance consistency}
    Let Assumption~\ref{assumption causal} hold, and for each \(a \in \{0,1\}\), Assumption~\ref{assumption mar nuisance (a)} hold with $(\alpha^*_{PS}, \beta^*_{OR}, w_{OR})$ replaced by $(\alpha^*_{a,t}, \beta^*_{a,t}, w_{a,t})$. Then as $n, d \rightarrow \infty$, it holds that

(a) Choose $\lambda_\alpha \asymp (\log d/(n\gamma_n))^{1/2}$. If $n\gamma_n \gg \max\br{s_{\alpha_{a,t}}, \log n} \log d$, then
    $$\norm{\widehat \alpha_{a,t}^{(-k)} - \alpha^*_{a,t}}_1 = O_p\p{s_{\alpha_{a,t}}(\log d/(n\gamma_n))^{1/2}} \text{ and } \norm{\widehat \alpha_{a,t}^{(-k)} - \alpha^*_{a,t}}_2 = O_p\p{(\frac{s_{\alpha_{a,t}} \log d}{n\gamma_n})^{1/2}}.$$

(b) Choose $\lambda_\alpha \asymp (\log d/(n\gamma_n))^{1/2}$. If $n\gamma_n \gg \max\br{s_{\beta_{a,t}}, (\log n)^2} \log d$, then
    $$\norm{\widetilde \beta_{a,t}^{(-k)} - \beta^*_{a,t}}_1 = O_p\p{s_{\beta_{a,t}}(\log d/(n\gamma_n))^{1/2}} \text{ and } \norm{\widetilde \beta_{a,t}^{(-k)} - \beta^*_{a,t}}_2 = O_p\p{(\frac{s_{\beta_{a,t}} \log d}{n\gamma_n})^{1/2}}.$$

    (c) Choose $\lambda_\alpha \asymp \lambda_\beta \asymp (\log d/(n\gamma_n))^{1/2}$. If $n\gamma_n \gg \max\br{s_{\alpha_{a,t}}, s_{\beta_{a,t}}\log n , (\log n)^2} \log d$ and  $s_{\alpha_{a,t}} s_{\beta_{a,t}} \ll (n\gamma_n)^{3/2}/(\log n(\log d)^2)$, then 
    \begin{align*}
     \norm{\widehat \beta_{a,t}^{(-k)} - \widetilde \beta_{a,t}^{(-k)}}_1 = O_p\p{(\frac{s_{\alpha_{a,t}} s_{\beta_{a,t}} \log d}{n\gamma_n})^{1/2}} \text{ and } \norm{\widehat \beta_{a,t}^{(-k)} - \widetilde \beta_{a,t}^{(-k)}}_2 = O_p\p{(\frac{s_{\alpha_{a,t}} \log d}{n\gamma_n})^{1/2}}.
    \end{align*}
\end{lemma}
\begin{proof}
    Note that $(1-\Gamma_i) = (1-\Gamma_i)\Omega_{a,i}$ and $\Gamma_{a,i} = \Gamma_i\Omega_{a,i}$. Let $(\widetilde X_{a,i}, \widetilde Y_{a,i}) = (\Omega_{a,i} X_i, \Omega_{a,i}Y_i)$, then Lemma~\ref{lemma causal generalizability nuisance consistency} follows by repeating the proof of Lemma~\ref{lemma consistency for alpha and betatilde} and Lemma~\ref{lemma consistency of betahat and betatilde for product sparsity}.
\end{proof}

\begin{lemma}\label{lemma causal transportability hat tau - tilde tau consistency}
Let Assumption~\ref{assumption causal} hold, and suppose that, for each \(a \in \{0,1\}\), Assumption~\ref{assumption mar nuisance (a)} holds with $(\alpha^*_{PS}, \beta^*_{OR}, w_{OR})$ replaced by $(\alpha^*_{a,t}, \beta^*_{a,t}, w_{a,t})$. Choose $\lambda_\alpha \asymp \lambda_\beta \asymp (\log d/(n\gamma_n))^{1/2}$. If
\[
    n\gamma_n \gg \sum_{a=0,1}\max\br{s_{\alpha_{a,t}}, s_{\beta_{a,t}}\log n , (\log n)^2} \log d,
    \qquad
    \sum_{a=0,1}s_{\alpha_{a,t}} s_{\beta_{a,t}} \ll (n\gamma_n)^{3/2}/\{\log n(\log d)^2\},
\]
then as \(n, d\rightarrow \infty\),
    \begin{align*}
        \widehat \tau_t - \widetilde \tau_t &= O_p\p{\sum_{a=0,1}\frac{\p{s_{\alpha_{a,t}} + (s_{\alpha_{a,t}} s_{\beta_{a,t}})^{1/2}} \log d}{n\gamma_n}}.
    \end{align*}
In addition, assume that $\E\sbr{Y_i(a) \mid X_i} = X_i^\top \beta^*_{a,t}$ for all $a \in \{0,1\}$. Then as $n, d\rightarrow \infty$, 
\begin{align*}
\widehat \tau_t - \widetilde \tau_t&=O_p\p{\sum_{a=0,1}\br{\frac{(s_{\alpha_{a,t}} \log d)^{1/2}}{n\gamma_n} +\frac{(s_{\alpha_{a,t}} s_{\beta_{a,t}})^{1/2} \log d}{n\gamma_n}}}.
\end{align*}
\end{lemma}

\begin{proof}
    Note that $(1-\Gamma_i) = (1-\Gamma_i)\Omega_{a,i}$ and $\Gamma_{a,i} = \Gamma_i\Omega_{a,i}$. Let $(\widetilde X_{a,i}, \widetilde Y_{a,i}) = (\Omega_{a,i} X_i, \Omega_{a,i}Y_i)$, then Lemma~\ref{lemma causal transportability hat tau - tilde tau consistency} follows by repeating the proof of Lemma~\ref{lemma mar transportability theta hat - theta tilde consistency}.
\end{proof}

\begin{lemma}\label{lemma causal transportability hat sigma = sigma(1+o(1))}
Let Assumption~\ref{assumption causal} hold, and suppose that, for each \(a \in \{0,1\}\), Assumption~\ref{assumption mar nuisance (a)} holds with $(\alpha^*_{PS}, \beta^*_{OR}, w_{OR})$ replaced by $(\alpha^*_{a,t}, \beta^*_{a,t}, w_{a,t})\). Let either \(\P\p{\Gamma_{a,i}=1 \mid X_i, \Omega_{a,i}=1} = g\p{X_i^\top \alpha^*_{a,t}}\) or \(\E\sbr{Y_i(a) \mid X_i} = X_i^\top \beta^*_{a,t}\) hold for each \(a \in \br{0,1}\). Choose $\lambda_\alpha \asymp \lambda_\beta \asymp (\log d/(n\gamma_n))^{1/2}$. If
\[
    n\gamma_n \gg \sum_{a=0,1}\max\br{s_{\alpha_{a,t}}, s_{\beta_{a,t}}\log n , (\log n)^2} \log d,
    \qquad
    \sum_{a=0,1}s_{\alpha_{a,t}} s_{\beta_{a,t}} \ll (n\gamma_n)^{3/2}/\{\log n(\log d)^2\},
\]
then as \(n, d\rightarrow \infty\),
    \begin{align*}
        \widehat \Sigma_t^2 = \Sigma_t^2\br{1 + O_p\p{\sum_{a=0,1}\br{({\p{s_{\alpha_{a,t}} + s_{\beta_{a,t}}}\log d}/{n\gamma_n})^{1/2} +\gamma_n({s_{\alpha_{a,t}} s_{\beta_{a,t}} \log d}/{n\gamma_n})^{1/2}}}}.
    \end{align*}
\end{lemma}
\begin{proof}
    Let $\widehat \Delta_{\alpha_{a,t}}^{(-k)} = \widehat \alpha^{(-k)}_{a,t} - \alpha^*_{a,t}$, $\widehat \Delta_{\beta_{a,t}}^{(-k)} = \widehat \beta^{(-k)}_{a,t} - \beta^*_{a,t}$, $\Delta_{a,g}^{(-k)} = \widehat \beta^{(-k)}_{a,t} - \beta^*_{a,t}$, $\widetilde \Delta_{\beta_{a,t}}^{(-k)} = \widetilde \beta^{(-k)}_{a,t} - \beta^*_{a,t}$, $\widehat \gamma_k = n_k^{-1}\sum_{i\in \mathcal{I}_k} \Gamma_i$, $\widehat \gamma = n^{-1}\sum_{i=1}^n\Gamma_i$, 
    \begin{align*}
        \widetilde \Sigma_t^2 &= n^{-1}\sum_{i=1}^n {1-\Gamma_i}/{\p{1-\gamma_n}^2}\p{X_i^\top \beta^*_{1,t} - X_i^\top \beta^*_{0,t} - \tau_{t}}^2 \\
        &\qquad + n^{-1}\sum_{i=1}^n{\Gamma_i A_i}/{\p{1-\gamma_n}^2}\p{{1-g\p{X_i^\top \alpha^*_{1,t}}}/{g\p{X_i^\top \alpha^*_{1,t}}}}^2\p{Y_i - X_i^\top \beta^*_{1,t}}^2\\
        &\qquad + n^{-1}\sum_{i=1}^n{\Gamma_i\p{1-A_i}}/{\p{1-\gamma_n}^2}\p{{1-g\p{X_i^\top \alpha^*_{0,t}}}/{g\p{X_i^\top \alpha^*_{0,t}}}}^2\p{Y_i - X_i^\top \beta^*_{0,t}}^2.
    \end{align*}
Then
\begin{align*}
    \widehat \Sigma_t^2 - \widetilde \Sigma_t^2 = K^{-1}\sum_{k=1}^K \p{S_{k1a} - S_{k1b} + S_{k2a} - S_{k2b} + S_{k3a} - S_{k3b}+ S_{k4a} - S_{k4b}},
\end{align*}
where
\begin{align*}
    S_{k1a} &= \bigl\{\widehat \beta_{1,t}^{(-k),\top} \Xi_0\widehat \beta_{1,t}^{(-k)}
    +\widehat \beta_{0,t}^{(-k),\top} \Xi_0\widehat \beta_{0,t}^{(-k)}
    - 2\widehat \beta_{1,t}^{(-k),\top} \Xi_0\widehat \beta_{0,t}^{(-k)}\\
    &\quad -2 \bar X_0^\top\widehat \beta_{1,t}^{(-k)}\widehat \tau_t
    + 2 \bar X_0^\top\widehat \beta_{0,t}^{(-k)}\widehat \tau_t\bigr\}
    \p{n_k^{-1}\sum_{i\in \mathcal{I}_k} \p{1-\Gamma_i}}^{-1},\\
    S_{k1b} &= \bigl\{\beta_{1,t}^{*,\top} \Xi_0\beta_{1,t}^{*}
    +\beta_{0,t}^{*,\top} \Xi_0\beta_{0,t}^{*}
    - 2\beta_{1,t}^{*,\top} \Xi_0\beta_{0,t}^{*}\\
    &\quad - 2\bar X_0^\top\beta_{1,t}^{*} \tau_t
    + 2 \bar X_0^\top\beta_{0,t}^{*} \tau_t\bigr\}\p{1-\gamma_n}^{-2},\\
    S_{k2a} &= {\widehat \tau_t^2}/{n_k^{-1}\sum_{i\in \mathcal{I}_k} \p{1-\Gamma_i}},\quad
    S_{k2b} = {n_k^{-1}\sum_{i\in \mathcal{I}_k} \p{1-\Gamma_i} \tau_t^2}/{\p{1-\gamma_n}^2},\\
    S_{k3a} &= n_k^{-1}\sum_{i\in \mathcal{I}_k}
    \Gamma_iA_i\p{\frac{1-g\p{X_i^\top \widehat \alpha_{1,t}^{(-k)}}}{g\p{X_i^\top \widehat \alpha_{1,t}^{(-k)}}}}^2\\
    &\quad \times \p{Y_i - X_i^\top \widehat \beta_{1,t}^{(-k)}}^2
    \p{n_k^{-1}\sum_{i\in \mathcal{I}_k} \p{1-\Gamma_i}}^{-2},\\
    S_{k3b} &= n_k^{-1}\sum_{i\in \mathcal{I}_k}
    \frac{\Gamma_iA_i}{\p{1-\gamma_n}^2}
    \p{\frac{1-g\p{X_i^\top \alpha^*_{1,t}}}{g\p{X_i^\top \alpha^*_{1,t}}}}^2
    \p{Y_i - X_i^\top \beta^*_{1,t}}^2,\\
    S_{k4a} &= n_k^{-1}\sum_{i\in \mathcal{I}_k}
    \Gamma_i\p{1-A_i}\p{\frac{1-g\p{X_i^\top \widehat \alpha_{0,t}^{(-k)}}}{g\p{X_i^\top \widehat \alpha_{0,t}^{(-k)}}}}^2\\
    &\quad \times \p{Y_i - X_i^\top \widehat \beta_{0,t}^{(-k)}}^2
    \p{n_k^{-1}\sum_{i\in \mathcal{I}_k} \p{1-\Gamma_i}}^{-2},\\
    S_{k4b} &= n_k^{-1}\sum_{i\in \mathcal{I}_k}
    \frac{\Gamma_i\p{1-A_i}}{\p{1-\gamma_n}^2}
    \p{\frac{1-g\p{X_i^\top \alpha^*_{0,t}}}{g\p{X_i^\top \alpha^*_{0,t}}}}^2
    \p{Y_i - X_i^\top \beta^*_{0,t}}^2.
\end{align*}

For $S_{k1a}$ and $S_{k1b}$, 
\begin{align*}
    &\p{1-\widehat \gamma_k}^2S_{k1a} - \p{1-\gamma_n}^2S_{k1b} \\
    &\qquad=  \p{1-\widehat \gamma_k}\p{\widehat \beta_{1,t}^{(-k),\top} \Xi_0\widehat \beta_{1,t}^{(-k)} - 
 \beta_{1,t}^{*,\top} \Xi_0\beta_{1,t}^{*}} + \p{1-\widehat \gamma_k}\p{\widehat \beta_{0,t}^{(-k),\top} \Xi_0\widehat \beta_{0,t}^{(-k)} - \beta_{0,t}^{*,\top} \Xi_0\beta_{0,t}^{*}}\\
 &\qquad\qquad -2\p{1-\widehat \gamma_k}\p{\widehat \beta_{1,t}^{(-k),\top} \Xi_0\widehat \beta_{0,t}^{(-k)} - \beta_{1,t}^{*,\top} \Xi_0\beta_{0,t}^{*}} -2\p{1-\widehat \gamma_k}\p{\bar X_0^\top\widehat \beta_{1,t}^{(-k)}\widehat \tau_t - \bar X_0^\top\beta_{1,t}^{*} \tau_t}\\
 &\qquad\qquad + 2\p{1-\widehat \gamma_k}\p{\bar X_0^\top\widehat \beta_{0,t}^{(-k)}\widehat \tau_t - \bar X_0^\top\beta_{0,t}^{*} \tau_t}.
\end{align*}
Similar to $\ell_{k1a} - \ell_{k1b}$ of Lemma~\ref{lemma causal generalizability hat sigma = sigma(1+o(1))}, we have
\begin{align*}
    &\p{1-\widehat \gamma_k}\p{\widehat \beta_{1,t}^{(-k),\top} \Xi_0\widehat \beta_{1,t}^{(-k)} - 
 \beta_{1,t}^{*,\top} \Xi_0\beta_{1,t}^{*}} = O_p\p{\norm{\Delta_{1,t}^{(-k)}}_1({\log d}/{n})^{1/2}+ \norm{\Delta_{1,t}^{(-k)}}_2 + \norm{\widehat \Delta_{\beta_{1,t}}^{(-k)}}_2},\\
     &\p{1-\widehat \gamma_k}\p{\widehat \beta_{0,t}^{(-k),\top} \Xi_0\widehat \beta_{0,t}^{(-k)} - \beta_{0,t}^{*,\top} \Xi_0\beta_{0,t}^{*}}=O_p\p{\norm{\Delta_{0,t}^{(-k)}}_1({\log d}/{n})^{1/2}+ \norm{\Delta_{0,t}^{(-k)}}_2 + \norm{\widehat \Delta_{\beta_{0,t}}^{(-k)}}_2},\\
    &\p{1-\widehat \gamma_k}\p{\widehat \beta^{(-k), \top}_{1,t}\bar 
    \Xi_0 \widehat \beta^{(-k)}_{0,t} - \beta^{*, \top}_{1,t}\bar 
    \Xi_0 \beta^{*}_{0,t}} = O_p\p{\sum_{a=0}^1\br{\norm{\Delta_{a,t}^{(-k)}}_1({\log d}/{n})^{1/2}+ \norm{\Delta_{a,t}^{(-k)}}_2+ \norm{\widehat \Delta_{\beta_{a,t}}^{(-k)}}_2}}.
\end{align*}
Recall $p_{k3a}$ and $p_{k3b}$ of Lemma~\ref{lemma mar transportability variance consistency}. Similarly, by Lemma~\ref{lemma causal transportability tau tilde normal} and Lemma~\ref{lemma causal transportability hat tau - tilde tau consistency},
\begin{align}
    \widehat \tau_t - \tau_t = O_p\p{\sum_{a=0,1}{\p{s_{\alpha_{a,t}} + (s_{\alpha_{a,t}} s_{\beta_{a,t}})^{1/2}} \log d}/{n\gamma_n} + \p{n\gamma_n}^{-1/2}}, \label{hat tau t - tau}
\end{align}
then
\begin{align*}
    &\p{1-\widehat \gamma_k}\p{\bar X_0^\top\widehat \beta_{1,t}^{(-k)}\widehat \tau_t - \bar X_0^\top\beta_{1,t}^{*} \tau_t} \\
    &\qquad= O_p\p{\norm{\Delta^{(-k)}_{1,t}}_1+\norm{\widetilde \Delta_{\beta_{1,t}}^{(-k)}}_2 + \sum_{a=0,1}{\p{s_{\alpha_{a,t}} + (s_{\alpha_{a,t}} s_{\beta_{a,t}})^{1/2}} \log d}/{n\gamma_n} + \p{n\gamma_n}^{-1/2}}.
\end{align*}
and
\begin{align*}
    &\p{1-\widehat \gamma_k}\p{\bar X_0^\top\widehat \beta_{0,t}^{(-k)}\widehat \tau_t - \bar X_0^\top\beta_{0,t}^{*} \tau_t}\\
    &\qquad=O_p\p{\norm{\Delta^{(-k)}_{0,t}}_1+\norm{\widetilde \Delta_{\beta_{0,t}}^{(-k)}}_2 + \sum_{a=0,1}{\p{s_{\alpha_{a,t}} + (s_{\alpha_{a,t}} s_{\beta_{a,t}})^{1/2}} \log d}/{n\gamma_n} + \p{n\gamma_n}^{-1/2}}.
\end{align*}
Thus, 
\begin{align*}
    &\p{1-\widehat \gamma_k}^2S_{k1a} - \p{1-\gamma_n}^2S_{k1b} \\
    &\qquad= O_p\p{\sum_{a=0,1}\br{\norm{\Delta_{a,t}^{(-k)}}_1({\log d}/{n})^{1/2}+ \norm{\Delta_{a,t}^{(-k)}}_2+ \norm{\widehat \Delta_{\beta_{a,t}}^{(-k)}}_2}}\\
    &\qquad\qquad + O_p\p{\sum_{a=0,1}\br{\norm{\Delta^{(-k)}_{a,t}}_1+\norm{\widetilde \Delta_{\beta_{a,t}}^{(-k)}}_2} + \sum_{a=0,1}{\p{s_{\alpha_{a,t}} + (s_{\alpha_{a,t}} s_{\beta_{a,t}})^{1/2}} \log d}/{n\gamma_n} + \p{n\gamma_n}^{-1/2}}.
\end{align*}
Note that 
\begin{align*}
    \E\sbr{\abs{S_{k1b}}} &\leq c_0^{-2}\E\sbr{\p{1-\Gamma_i}\p{X_i^\top \beta_{1,t}}^2}+c_0^{-2}\E\sbr{\p{1-\Gamma_i}\p{X_i^\top \beta_{0,t}}^2}\\
    &\qquad + 2c_0^{-2}\E\sbr{\p{1-\Gamma_i}\abs{\p{X_i^\top \beta_{1,t}}\p{X_i^\top \beta_{0,t}}}} + 2c_0^{-2}\E\sbr{\p{1-\Gamma_i}\abs{X_i^\top \beta_{1,t}}} \abs{\tau_t} \\
    &\qquad + 2c_0^{-2}\E\sbr{\p{1-\Gamma_i}\abs{X_i^\top \beta_{0,t}}} \abs{\tau_t}\\
    &\leq c_0^{-2}\E\sbr{\p{X_i^\top \beta_{1,t}}^2}+c_0^{-2}\E\sbr{\p{X_i^\top \beta_{0,t}}^2}\\
    &\qquad + 2c_0^{-2}\E\sbr{\p{X_i^\top \beta_{1,t}}^2}^{1/2}\E\sbr{\p{X_i^\top \beta_{0,t}}^2}^{1/2} + 2c_0^{-2}\E\sbr{\abs{X_i^\top \beta_{1,t}}} \abs{\tau_t} \\
    &\qquad + 2c_0^{-2}\E\sbr{\abs{X_i^\top \beta_{0,t}}} \abs{\tau_t}\\
    &\leq C_1c_0^{-2}\sigma^2 + C_2c_0^{-2}\sigma \abs{\tau_t}.
\end{align*}
Thus, $S_{k1b} = O_p\p{1}$. It follows that by \eqref{1-gamma hat square ratio} and Lemma~\ref{lemma causal transportability nuisance consistency},
\begin{align*}
    S_{k1a} - S_{k1b} &= \p{\p{{1-\gamma_n}/{1-\widehat \gamma_k}}^2 - 1}S_{k1b} + {1}/{\p{1-\widehat \gamma_k}^2}O_p\p{\sum_{a=0,1}({\p{s_{\alpha_{a,t}} + s_{\beta_{a,t}}}\log d}/{n\gamma_n})^{1/2}}\\
&=O_p\p{\sum_{a=0,1}\p{({(s_{\alpha_{a,t}} + s_{\beta_{a,t}})\log d}/{n\gamma_n})^{1/2} + ({s_{\alpha_{a,t}} s_{\beta_{a,t}} \log d}/{n\gamma_n})^{1/2}}}.
\end{align*}

For $S_{k2a}$ and $S_{k2b}$, we have
\begin{align*}
    \p{1-\widehat \gamma_k}^2S_{k2a} - \p{1-\gamma_n}^2S_{k2b} &= \p{1-\widehat \gamma_k}\p{\widehat \tau_t^2 - \tau_t^2}=\p{1-\widehat \gamma_k}\br{\p{\widehat \tau_t - \tau_t}^2 + 2\p{\widehat \tau_t - \tau_t}\tau}.
\end{align*}
By Lemma~\ref{lemma concentrate gamma} and \eqref{hat tau t - tau}, $1-\widehat \gamma_k = O_p(1-\gamma_n)$ and 
\begin{align*}
    \p{1-\widehat \gamma_k}^2S_{k2a} - \p{1-\gamma_n}^2S_{k2b} = O_p\p{\sum_{a=0,1}{\p{s_{\alpha_{a,t}} + (s_{\alpha_{a,t}} s_{\beta_{a,t}})^{1/2}} \log d}/{n\gamma_n} + \p{n\gamma_n}^{-1/2}}.
\end{align*}
By Lemma~\ref{lemma concentrate gamma}, $S_{k2b}=O_p\p{1}$. Thus, by \eqref{1-gamma hat square ratio},
\begin{align*}
    S_{k2a} - S_{k2b} &= \p{\p{{1-\gamma_n}/{1-\widehat \gamma_k}}^2 - 1}S_{k2b}+ {1}/{\p{1-\widehat \gamma_k}^2}O_p\p{\sum_{a=0,1}\p{s_{\alpha_{a,t}} + s_{\beta_{a,t}}}{\log d}/{n\gamma_n} + \p{n\gamma_n}^{-1/2}}\\
    &=O_p\p{\sum_{a=0,1}{\p{s_{\alpha_{a,t}} + (s_{\alpha_{a,t}} s_{\beta_{a,t}})^{1/2}} \log d}/{n\gamma_n} + \p{n\gamma_n}^{-1/2}}.
\end{align*}

For $S_{k3a}$ and $S_{k3b}$, let 
\begin{align*}
\Phi_{a,i}\p{\alpha, \beta} = \Gamma_{a,i}{1-g\p{X_i^\top \alpha}}/{g\p{X_i^\top \alpha}}\p{Y_i - X_i^\top \beta}.
\end{align*}
Recall $\p{1-\widehat \gamma_k}^2p_{k2a} - \p{1-\gamma_n}^2p_{k2b}$ of Lemma~\ref{lemma mar transportability variance consistency}. We have $S_{k3b} = O_p\p{\gamma_n^{-1}}$ and
\begin{align*}
    \p{1-\widehat \gamma_k}^2S_{k3a} - \p{1-\gamma_n}^2S_{k3b} &= n_k^{-1}\sum_{i \in \mathcal{I}_k} \br{\Phi_{a,i}\p{\widehat \alpha_{1,t}^{(-k)}, \widehat \beta_{1,t}^{(-k)}}^2 - \Phi_{a,i}\p{\alpha^*_{1,t}, \beta^*_{1,t}}^2}\\
    &=O_p\p{\gamma_n^{-1}\br{\norm{\widehat \Delta_{\alpha_{1,t}}^{(-k)}}_2 + \norm{\widehat \Delta_{\beta_{1,t}}^{(-k)}}_2}}.
\end{align*}
Then by Lemma~\ref{lemma causal transportability nuisance consistency},
\begin{align*}
   S_{k3a} - S_{k3b} &= \p{\p{{1-\gamma_n}/{1-\widehat \gamma_k}}^2 - 1}S_{k3b}+ {1}/{\p{1-\widehat \gamma_k}^2}O_p\p{\gamma_n^{-1}\p{\sum_{a=0,1}({\p{s_{\alpha_{a,t}} + s_{\beta_{a,t}}}\log d}/{n\gamma_n})^{1/2}}}\\
&=O_p\p{\gamma_n^{-1}\p{\sum_{a=0,1}({\p{s_{\alpha_{a,t}} + s_{\beta_{a,t}}}\log d}/{n\gamma_n})^{1/2}}}.
\end{align*}

Similarly, for $S_{k4a}$ and $S_{k4b}$, we have
\begin{align*}
    S_{k4a} - S_{k4b} = O_p\p{\gamma_n^{-1}\p{\sum_{a=0,1}({\p{s_{\alpha_{a,t}} + s_{\beta_{a,t}}}\log d}/{n\gamma_n})^{1/2}}}.
\end{align*}
Then it follows
\begin{align*}
    \widehat \Sigma_t^2 - \widetilde \Sigma_t^2 = O_p\p{\gamma_n^{-1}\p{\sum_{a=0,1}\br{({\p{s_{\alpha_{a,t}} + s_{\beta_{a,t}}}\log d}/{n\gamma_n})^{1/2} +\gamma_n({s_{\alpha_{a,t}} s_{\beta_{a,t}} \log d}/{n\gamma_n})^{1/2}}}}.
\end{align*}

On the other hand, since
\begin{align*}
       \Sigma_{t}^2
       &= \E\sbr{\frac{1-\Gamma_i}{\p{1-\gamma_n}^2}
       \p{X_i^\top \beta^*_{1,t} - X_i^\top \beta^*_{0,t} - \tau_{t}}^2 }\\
    &\quad + \E\sbr{\frac{\Gamma_i A_i}{\p{1-\gamma_n}^2}
    \p{\frac{1-g\p{X_i^\top \alpha^*_{1,t}}}{g\p{X_i^\top \alpha^*_{1,t}}}}^2
    \p{Y_i - X_i^\top \beta^*_{1,t}}^2}\\
    &\quad + \E\sbr{\frac{\Gamma_i\p{1-A_i}}{\p{1-\gamma_n}^2}
    \p{\frac{1-g\p{X_i^\top \alpha^*_{0,t}}}{g\p{X_i^\top \alpha^*_{0,t}}}}^2
    \p{Y_i - X_i^\top \beta^*_{0,t}}^2},
\end{align*}
we have $\E\sbr{\widetilde \Sigma_t^2 - \Sigma_t^2} = 0$. Then
\begin{align*}
    \E\sbr{\p{\widetilde \Sigma_t^2 - \Sigma_t^2}^2} &\leq n^{-1}\E\sbr{{1-\Gamma_i}/{\p{1-\gamma_n}^4} \p{X_i^\top \beta^*_{1,t} - X_i^\top \beta^*_{0,t} - \tau_{t}}^4 }  \\
    &\qquad + n^{-1}\E\sbr{{\Gamma_i A_i}/{\p{1-\gamma_n}^4}\p{{1-g\p{X_i^\top \alpha^*_{1,t}}}/{g\p{X_i^\top \alpha^*_{1,t}}}}^4\p{Y_i - X_i^\top \beta^*_{1,t}}^4}\\
    &\qquad + n^{-1}\E\sbr{{\Gamma_i\p{1-A_i}}/{\p{1-\gamma_n}^4}\p{{1-g\p{X_i^\top \alpha^*_{0,t}}}/{g\p{X_i^\top \alpha^*_{0,t}}}}^4\p{Y_i - X_i^\top \beta^*_{0,t}}^4}\\
    &\leq c_0^{-4} n^{-1}\E\sbr{ \p{X_i^\top \beta^*_{1,t} - X_i^\top \beta^*_{0,t} - \tau_{t}}^4 }  \\
    &\qquad + c_0^{-4}k_0^{-4}\gamma_n^{-4}n^{-1}\E\sbr{\Gamma_i\p{Y_i\p{1} - X_i^\top \beta^*_{1,t}}^4}\\
    &\qquad + c_0^{-4}k_0^{-4}\gamma_n^{-4}n^{-1}\E\sbr{\Gamma_i\p{Y_i\p{0} - X_i^\top \beta^*_{0,t}}^4}.
\end{align*}
Since
\begin{align*}
    &\norm{X_i^\top \beta^*_{1,t} - X_i^\top \beta^*_{0,t} - \tau_{t}}_{\P,4} \leq \norm{X_i^\top \beta^*_{1,t}}_{\P,4} + \norm{X_i^\top \beta^*_{0,t}}_{\P,4} + \abs{\tau_{t}} \leq C_3\sigma + \abs{\tau_{t}},\\
    &\E\sbr{\Gamma_i\p{Y_i\p{1} - X_i^\top \beta^*_{1,t}}^4} = \gamma_n\E\sbr{\p{Y_i\p{1} - X_i^\top \beta^*_{1,t}}^4 \mid \Gamma_i=1} \leq \gamma_n\sigma_w^4,\\
    &\E\sbr{\Gamma_i\p{Y_i\p{0} - X_i^\top \beta^*_{0,t}}^4} = \gamma_n\E\sbr{\p{Y_i\p{0} - X_i^\top \beta^*_{0,t}}^4 \mid \Gamma_i=1} \leq \gamma_n\sigma_w^4,
\end{align*}
we have
\begin{align*}
     \E\sbr{\p{\widetilde \Sigma_t^2 - \Sigma_t^2}^2} \leq c_0^{-4}\p{\p{C_3\sigma + \abs{\tau_{t}}}^4 + k_0^{-4}\sigma_w^4} n^{-1}\gamma_n^{-3}.
\end{align*}
It follows that
$
    \widetilde \Sigma_t^2 - \Sigma_t^2 = O_p\p{\gamma_n^{-1}\p{n\gamma_n^{-1/2}}}.
$

In conclusion, by Lemma~\ref{lemma causal transportability tau tilde normal}, $\Sigma_t^2 \asymp \gamma_n^{-1}$ and 
\begin{align*}
    \widehat \Sigma_t^2 &= \widehat \Sigma_t^2 -\widetilde \Sigma_t^2 + \widetilde \Sigma_t^2 - \Sigma_t^2 + \Sigma_t^2 \\
    &= \Sigma_t^2\br{1 + O_p\p{\sum_{a=0,1}\br{(\frac{\p{s_{\alpha_{a,t}} + s_{\beta_{a,t}}}\log d}{n\gamma_n})^{1/2} +\gamma_n(\frac{s_{\alpha_{a,t}} s_{\beta_{a,t}} \log d}{n\gamma_n})^{1/2}}}}.
\end{align*}
\end{proof}

\subsection{Proof of Theorem~\ref{theorem causal transportability Asymptotics body}}

\begin{proof}
    Theorem~\ref{theorem causal transportability Asymptotics body} follows from Lemmas~\ref{lemma causal transportability tau tilde normal}, \ref{lemma causal transportability hat tau - tilde tau consistency}, and \ref{lemma causal transportability hat sigma = sigma(1+o(1))} by repeating the proof of Lemma~\ref{lemma key results for transportability}.
\end{proof}
\end{document}